\DeclareMathOperator{\Tr}{Tr}
\setlist[itemize]{label=\textbullet}
\newtheorem{lem}{Lemma} [section]
\newtheorem{prop}{Proposition} [section]
\newtheorem{defn}{Definition} [section]
\newtheorem{thm}{Theorem}[section]
\newtheorem{rem}{Remark}[section]
\theoremstyle{definition}
\newtheorem{ex}{Example}[section]
\newcommand{\sch}{Schr{\"o}dinger }
\begin{document}
 \begin{titlepage}
	\begin{center}

 		{{\Large{\textsc{UNIVERSITÄT LEIPZIG}}}}\\ 
 		\vspace{4mm}
 		{{\Large{\textsc{FAKULTÄT FÜR PHYSIK UND GEOWISSENSCHAFTEN}}}}
 		\rule[0.1cm]{15cm}{0.1mm}
 		\rule[0.5cm]{15cm}{0.1mm}\\
 		{\Large{\bf M.Sc. Mathematical Physics}}
 	\end{center}
 	\vspace{11mm}
 	\begin{center}
 	{\bf \Large{RELATIVE ENTROPY FOR}}\\
 		\vspace{3mm}
 		{\bf \Large{FERMIONIC QUANTUM FIELD THEORY}}\\

 		\vspace{4mm}
 	
 		\vspace{4mm}

 	\end{center}
 	\vspace{10mm}
 	\par
 \centerline{\bf \LARGE{Stefano Galanda} }
 	\vspace{40mm}
 	 	\par
 	\noindent
 	\begin{minipage}[t]{0.47\textwidth}
 		{\large{ \textit{Supervisor:}\\
 				\bf  Dr. Albert Much}}
 	\end{minipage}
 	\hfill
 	\begin{minipage}[t]{0.47\textwidth}\raggedleft
 		{\large{ \textit{Second Assessor:}\\
 				\bf Prof. Dr. Rainer Verch}}
 	\end{minipage}
 	\vspace{25mm}
 	\begin{center}
 		{\large{October 2022}}
 		\end{center}
\end{titlepage}

 \newpage
\thispagestyle{empty}
 \topskip0pt
\vspace*{\fill}
 \begin{center}
 		{\bf\large{Abstract}}
 \end{center}
 We study the relative entropy, in the sense of Araki, for the representation of a self-dual CAR algebra $\mathfrak{A}_{SDC}(\mathcal{H},\Gamma)$. We notice, for a specific choice of $f \in \mathcal{H}$, that the associated element in $\mathfrak{A}_{SDC}(\mathcal{H},\Gamma)$ is unitary. As a consequence, we explicitly compute the relative entropy between a quasifree state over $\mathfrak{A}_{SDC}(\mathcal{H},\Gamma)$ and an excitation of it with respect to the abovely mentioned unitary element. The generality of the approach, allows us to consider $\mathcal{H}$ as the Hilbert space of solutions of the classical Dirac equation over globally hyperbolic spacetimes, making our result, a computation of relative entropy for a Fermionic Quantum Field Theory.\\
 Our result, extends those of Longo \cite{Longo:2019mhx}, Casini et al. \cite{PhysRevD.99.125020} for the relative entropy between a quasifree state and a coherent excitation of it for a free Scalar Quantum Field Theory, to the case of fermions.\\
 As a first application, we computed such a relative entropy for a Majorana field on an ultrastatic spacetime.
 \vspace*{\fill}
 \newpage
 \thispagestyle{empty}
 \topskip0pt
\vspace*{\fill}
 \begin{center}
 		{\bf\large{Acknowledgements}}
 \end{center}
I would like to thank Dr. Albert Much for supervising this thesis, especially for his great quality of creating an amicable and professional working enviroment and for the large amount of time he dedicated to me. While completing my thesis, I was given the opportunity to deepen my understanding in a variety of fascinating topics which I greatly appreaciate and which will profoundly influence my future. I also thank Prof. Dr. Rainer Verch for his help and the discussions we had. Indeed, his guidance during the last months presents a considerable part behind the realization of this work.\\
In addition, I have benefited from discussions with Dr. Markus Fröb (on the definition of fermionic field algebras and Tomita-Takesaki modular theory), whom I want to thank for this reason.\\\\
With this work, I am closing an important chapter of my professional as well as personal life. It is for this reason, that I must thank the people that made these years special and supported me during any difficult time. Above all, and from the bottom of my heart, my family: my parents Stefania and Paolo, my elder brother Francesco and my grandparents Giuliana and Benito. Your love and the awareness of having you on my side were the biggest source of motivation.\\
Distinct words are required for Leonardo. Not only did we share this experience abroad together, but we especially share the same passion for mathematics and physics. The discussions with you are a central part of this thesis and are, in general, among the things I enjoy most. You are one of my best friends and these years together will remain unforgettable.\\
Furthermore, I want to thank Filippo for his support, especially in the most difficult period, my lifelong friends Madiara, Enes, Alessandro, Matthias and also the people that helped me just for part of these two years.\\\\
Finally, among the people I met here, I am especially grateful to Tim for his motivational support and contagious enthusiasm and to Paula for her virtue of being able to say the right words in the right moment.\\\\
I am indebted to you all.
 \vspace*{\fill}
\thispagestyle{empty}
\newpage
 \thispagestyle{empty}
 \tableofcontents
\vspace{10mm}
\thispagestyle{empty}

\newpage
\thispagestyle{empty}
\chapter*{Introduction and conventions}
\section{Introduction}
The first half of the past century was marked by the search for a fundamental theory describing nature, incorporating both the classical understanding of empty space as well as electromagnetism. During the same period, the particle-wave "duality" was one of the most debated concepts. Attempting to clarify the latter, de Broglie was the first to assume, in his work of 1926 \cite{deBroglie}, "the existence of a certain periodic phenomenon of a yet to be determined character, which is to be attributed to each and every isolated energy particle". Looking back, we know he was referring to the existence of what we call a \textit{Quantum field}. The first to introduce the notion of such a field was Dirac in \cite{1927Dirac}, who attempted to formulate a relativistically invariant theory describing a charged particle interacting with an electromagnetic field. Up to this date, enormous contributions to the formulation and understanding of the theory have led to what we nowadays call \textit{Quantum Field Theory}. However, referring to it as a \textit{theory}, however, may be slightly misleading. Instead, we should rather call it a framework, in which the "physical theories" are established. \textit{Quantum Electrodynamics (QED)}, for instance, presents one of the most precisely tested theories ever formulated and is nothing but a a specific type of Quantum field Theory obtained by incorporating electromagnetism in such a framework.\\
However, during its developement, certain mathematical problems arose (see for example \cite{Haag:1955ev}) and agreement between the predictions and experimental outcomes was prioritized at the expense of sacrificing mathematical rigor. For this reason, the subject has increasingly gained interest among both mathematicians as well as mathematical physicists. An attempt to a more mathematically rigorous formulation of QFT, in the spirit of Heisenberg matrix mechanics, was given in the pioneering work of Haag and Kastler (\cite{Haag:1963dh}). In their work, they define QFT in an axiomatic way, where the focus is on the study of the properties of observables, that's why is called \textit{Algebraic Quantum Field Theory}. One of the great innovations in their work, was the freedom regarding the choice of the underlying spacetime, allowing gravitational effects to be included in the theory. In this framework, a QFT is formulated on a spacetime where gravity is treated classically (in the sense that the evolution of the quantum field, does not have any backreaction on the spacetime geometry itself), and is hence called \textit{semiclassical gravity}.\\
The study of semiclassical gravity, using the tools of AQFT led, during the last $50$ years, to a deeper understanding of gravity itself, in particular in the context of Black Holes. In fact, in 1974, Hawking (\cite{Hawking:1974rv}) showed that despite classically, crossing the event horizon is a point of no return, Black Holes possess a grey-body spectrum of emission of radiation in the form of quantum fields. Together with the subsequent assignment of a temperature, this observation made concrete, the previously existing analogy between the laws of thermodynamics and Black Hole dynamics. \\
Once the area of a Black Hole horizon had been identified with its entropy, the interest in entropy measures in this context became greater. However, aiming for mathematical rigor, the entropy measures in QFT need to be reviewed as, for instance, the von Neumann formula is no longer well defined in this context. The reason for this mainly stems from the fact that when the number of degrees of freedom of the theory become uncountably many, certain mathematical properties of the observable algebra pertain. This led to a focus on the study of von Neumann algebras and their classification, at least from the point of view of a mathematical physicists. The first to solve the problem were Araki \cite{Araki1976} and Uhlmann \cite{Uhlmann}, who introduced a reformulated notion of relative entropy, generalizing the one by von Neumann.\\
However, such a result remains very abstract. In order to make it more concretely applicable to contexts of physical relevance, Longo \cite{Longo:2019mhx} and Casini et al. \cite{PhysRevD.99.125020} recently carried out the analysis for a particular type of unitary excitation of the vacuum of a free scalar QFT, for which the relative entropy is expressed in a form much more easily interpreted. Such a result was used, for instance, in the context of semiclassical gravity by Kurpicz, Pinamonti and Verch \cite{Kurpicz_2021} as well as by D'Angelo \cite{DAngelo:2021yat} to investigate thermodynamical properties of Black Holes that are not stationary (i.e. that incorporate a dynamically changing mass). By doing so, attempts were made of extending the thermodynamical analogy to the case of Black Holes away from equilibrium (stationary case).\\\\
This seeks to present a generalization of the mentioned result, regarding the computation of the relative entropy for the free scalar case, to the case of a fermionic quantum field theory. The text is organized in three main chapters. The first one aims at introducing the mathematical framework of AQFT and at defining Fermi quantum fields on a curved spacetime background. The second chapter is devoted to the discussion of entropy measures that, as we are going to see, need to be reviewed in the context of Quantum Field Theory due to ultraviolet divergences, arising from the type of observables algebras. In chapter two still, we we present the results for the bosonic case, allowing for the computation of the relative entropy for specific types of excitations. Lastly, in chapter three, we present our work regarding the computation of the relative entropy in the fermionic case.

\section{Conventions}
I here list the convention we adopt:
\begin{itemize}
    \item The four vectors are denoted in the abstract index notation. Namely $x^a$ denotes a vector field over the spacetime manifold and $x_a$ the corresponding covector with indices lowered and raised using the Lorentzian metric tensor $g_{ab}$ over $M$. In a specific coordinate chart over $M$ the four vectors are denoted with greek indices, e.g. $x^{\mu}$, while its spatial components either with latin indices or with bold font: $x^{\mu} = (x^0, x^i) = (x^0, \mathbf{x})$
    \item We adopt the Einstein summation convention for repeated indices
    \item The adjoint of an operator over an Hilbert space, is denoted as $^*$. While, if that operator is in particular a matrix over $\mathbb{C}^n$ with the standard inner product, we will use the more familiar notation $^{\dagger}$.
    \item We work with natural units, namely:
    \begin{equation*}
         \hbar = k_B = c = G = 1
    \end{equation*}
    \item The set of $k$-continuous functions over the spacetime $M$, are denoted by $\mathcal{C}^{k}(M)$ while the smooth functions as $\mathcal{C}^{\infty}(M)$. Analogously, a smooth $k$-continuous (resp. smooth) function over $M$ with compact support are denoted $\mathcal{C}^{k}_0(M)$ (resp. $\mathcal{C}^{\infty}_0(M)$). Finally, we denote by $\mathcal{S}(M)$ the Schwartz space of functions.
    \item We will denote by $[\cdot,\cdot]$ the commutator and by $[\cdot,\cdot]_+$ the anticommutator. 
    \item The fundamental group of a manifold $M$ is denoted by $\pi_1(M)$.
    \item A spacetime is a Lorentzian Manifold $M$ with metric $g$ of signature $(+---)$
    \item We denote the \textit{causal future/past} of a closed subset $C \subset M$ by $J^{\pm}(C)$. For a closed achronal set $S \subset M$ we denote the \textit{future/past causal developement} by $D^{\pm}(S)$. Finally the set of all \textit{causally complete regions} on $M$ is denoted as $\mathcal{K}$.
\end{itemize}
A brief review of the fundamentals of Lorentzian geometry is reported in Appendix \ref{app: Lorentz}, where we explain better the concepts mentioned in the last point.

\newpage
\chapter{Algebraic Quantum Field Theory}
Quantum Field Theory (QFT) is the theory that combines Quantum Mechanics (QM) and Special Relativity, namely it aims to incorporate the relativity principle into the theory describing the microscopic scale. This implies that the fundamental constituents of nature (the Quantum fields) as measured by different inertial observers, must transform according to the transformation connecting the two inertial observers. That is, there must exist a unitary representation of the Lorentz group, encoding the transformation properties of the quantum fields. One of the biggest achievements of the theory is its capability to give meaning to the concept of Spin: a consequence of the symmetry group of the spacetime that we encounter, once we ask, what the elementary properties are, that classify the fields and that further both observers agree on.\\
Algebraic Quantum Field Theory (AQFT) is an approach to QFT that differs from the standard one but that incorporates it as a particular case, such that we may see it as a generalization. The term \textit{Algebraic} refers to the way in which the Quantum theory is treated, in this sense we are formulating an Algebraic approach to Quantum Mechanics that allows us to incorporate special relativity with minimal effort.\\
The mathematical description of the standard approach to QM, starts with a Hilbert space $\mathcal{H}$ and the set of bounded linear operators defined over it $\mathfrak{B}(\mathcal{H})$. The subset $\mathfrak{B}_{sa}(\mathcal{H}) \subset \mathfrak{B}(\mathcal{H})$ of self adjoint operators is the set of all operators representing quantities that can be measured (as their point spectrum is real) on the system and for that reason is called the set of \textit{observables}. Finally a state is described by a so called \textit{density matrix}, i.e. an element of:
\begin{equation*}
    \mathfrak{D}(\mathcal{H}) := \{ \rho \in \mathfrak{B}_{sa}(\mathcal{H}) : \rho \geq 0 \,\, \mathrm{and} \,\, \Tr(\rho) = 1 \}
\end{equation*}
Any density matrix gives rise to a \textit{state functional} over the set of bounded operators defined as: $\omega_{\rho}: \mathfrak{B}(\mathcal{H}) \to \mathbb{C}$ via the relation:
\begin{equation*}
    \omega_{\rho} (A) := \Tr(A\rho) \hspace{15pt} \forall A \in \mathfrak{B}(\mathcal{H})
\end{equation*}
\begin{rem}
In only considering bounded operators as observables, in this first brief discussion about the motivation behind AQFT, we are not giving up generality. In fact, from the spectral theorem, it follows that we can decompose any self adjoint operator $A$ as:
\begin{equation*}
    A = \int_{\mathbb{R}} \lambda dE_{A}(\lambda)
\end{equation*}
where $E_A(\Delta)$ are the spectral measures corresponding to the measurable set $\Delta \in \mathbb{R}$ and in particular $E_A(\Delta) \in \mathfrak{B}(\mathcal{H})$ whenever $\Delta \in \mathbb{R}$ is bounded. Now, whenever we want to measure some quantity associated with a possibly unbounded operator $A$, we employ a measurement device which always has limited capacities, i.e. there are a highest and lowest values that is capable of registering. This means, that what we measure is not the unbounded operator $A$, but rather:
\begin{equation*}
    \Tilde{A} = \int_{a_{min}}^{a_{max}} \lambda dE_A(\lambda)
\end{equation*}
which is bounded.\\
Therefore, unbounded operators, that we will consider and study in what follows, are in fact just "idealized observables" and for the purpose of outlining the standard approach to QM, considering only bounded observables is perfectly fine.
\end{rem}
Given the ideas of the standard approach, two questions may arise:
\begin{itemize}
    \item[1)] How do we know that we are considering and/or able to measure all possible elements in $\mathfrak{B}_{sa}(\mathcal{H})$?
    \item[2)] How does our results depend on the choice of the initial Hilbert space and can we somehow relate the same physically measurable quantity in terms of bounded operators on two different such Hilbert spaces?
\end{itemize}
The answer to the first question is that in general we do not have access to the all set of observables. Therefore we need to investigate the consequences of knowing only a restrict subset of them. A concrete example in which we have access to just a subset of the observables, arises when introducing causality, for instance in QFT. As a consequence, we will be able to measure just what lies in our causal future, any other observable of the general theory defined over the entire spacetime is practically unaccessible.\\
For what concerns the second question, as long as we deal with finitely generated algebras, the Stone- von Neumann theorem ensures that there is a unitary transformation mapping the representation in one Hilbert space into that of another Hilbert space whenever the representations are irreducible. However, the same theorem tells us that when the algebras have infinitely many generators, there are unitarily inequivalent representations and thus the choice of the initial Hilbert space might lead to unwanted consequences, like preventing us from describing any physically relevant state of the system as a vector in that single Hilbert space. It is worth noting that, as we are interested in QFT, the algebras in this context, as we will see, have infinitely many generators: the solutions of the field equations.\\
AQFT solves both these two issues as it takes as primary object the algebra of "accessible" observables and as states the positive, normal functionals defined over it. In this approach, the notion of Hilbert space is derived via the so called \textit{GNS construction} (\textit{Gelfand-Naimark-Segal}) once a state is chosen. In this way, one is able to treat all states equally: just as functionals.\\\\
The chapter starts off with a section that aims at inroducing all the necessary mathematical theorems and definitions required for the formulation of AQFT. We will continue on in the second section with an axiomatic definition of AQFT. Finally, we will discuss an example of an AQFT presenting the Dirac field.

\section{Preliminary definitions and results}
As the algebraic approach starts off by specifying as primary data the Algebras of observables (i.e. those sets of opertors that are relevant for the study of the physical system), the first thing we will introduce is the notion of algebras of operators. Then, in order to specify a configuration of a system, we will introduce the notion of functionals over the Algebra and correspondingly the notion of a state. Finally, combining these two we will derive the Hilbert space and thus the standard formulation of QFT.
\subsection{Operator algebras}
The first core concept is that of an algebra, which is nothing but a vector space with an additional structure:
\begin{defn}
An \textit{algebra} $\mathfrak{A}$ is a vector space $V$ on a field $\mathbb{K}$ (if $\mathbb{K} = \mathbb{R}, \mathbb{C}$ we speak of a \textit{real} resp. \textit{complex} algebra) on which a product operation is defined $\circ: V \times V \to V$ which is associative:
\begin{equation*}
    A \circ (B \circ C) = (A \circ B) \circ C \hspace{15pt} \forall A,B,C \in \mathfrak{A}
\end{equation*}
and fulfills distributivity over the vector space structures. 
\end{defn}
\begin{ex}
The usual example is that of $\mathbb{R}^3$ with respect to the vector product $\times: \mathbb{R}^3 \times \mathbb{R}^3 \to \mathbb{R}^3$, defined for any $\mathbf{v}, \mathbf{w} \in \mathbb{R}^3$ as:
\begin{equation*}
    \mathbf{v} \times \mathbf{w} = \det \begin{pmatrix}
    \hat{x} & \hat{y} & \hat{z}\\
    v_x & v_y & v_z \\
    w_x & w_y & w_z
    \end{pmatrix}
\end{equation*}
Where we have denoted with $x,y,z$ the cartesian directions and $\hat{x}, \hat{y}, \hat{z}$ the corresponding versors.
\end{ex}
\begin{ex}
The set of bounded linear operators $\mathfrak{B}(\mathcal{H})$ over an Hilbert space $\mathcal{H}$ is an algebra with respect to composition. The simplest example holds when $\dim \mathcal{H} = n < \infty$, then the bounded operators $\mathfrak{B}(\mathcal{H}) = \mathrm{Mat}(n,\mathbb{C})$ and the composition simply becomes the matrix multiplication.
\end{ex}
Algebras that contain the neutral element of the multiplication operation, are called \textit{unital algebras}. Namely, by calling $\mathfrak{A}$ the unital algebra, there exist an element $\mathbb{1}\in \mathfrak{A}$ such that:
\begin{equation*}
    \mathbb{1} \cdot A = A \cdot \mathbb{1} = A
\end{equation*}
for any $A \in \mathfrak{A}$. As far as we are concerned, all algebras that we are going to consider in the following will be unital.\\
It is possible to introduce further structures on an algebra, one of which is the $*$-operation:
\begin{defn}
Let $\mathfrak{A}$ be a complex algebra. An antilinear map:
\begin{align*}
    *: \mathfrak{A} &\to \mathfrak{A}\\
    A &\mapsto A^*
\end{align*}
is called an \textit{involution} if $A = (A^*)^*$ and $(AB)^*=B^*A^*$ for all $A,B \in \mathfrak{A}$. An algebra with an involution is called a $*$\textit{-algebra}.
\end{defn}
We can also introduce a topology over an algebra and, taking advantage of the vector space structure, this can be done by introducing a norm:
\begin{defn}
Let $\mathfrak{A}$ be a complex algebra. We can define a norm:
\begin{align*}
    \|\cdot\| : \mathfrak{A} &\to \mathbb{R}^+\\
    A &\mapsto \|A\|
\end{align*}
With the usual properties:
\begin{enumerate}
    \item $\|A\| = 0$ if and only if $A = 0$
    \item $\|k A \| = |k| \, \|A\|$
    \item $\|A + B\| \leq \|A\| + \|B\|$
    \item $\|AB\| \leq \|A\| \, \|B\|$
\end{enumerate}
for any $A,B \in \mathfrak{A}$ and all $k \in \mathbb{C}$. An algebra with a norm is called a normed algebra.
\end{defn}
\begin{defn}
A normed algebra $(\mathfrak{A}, \|\cdot\|)$ that is complete in the topology induced by $\|\cdot\|$, is called a Banach algebra
\end{defn}
The next presents the type of algebra, we will be interested in the most:
\begin{defn}
Let $(\mathfrak{A}, \|\cdot\|)$ be a $*$-Banach algebra. $(\mathfrak{A}, \|\cdot\|)$ is called a $C^*$-algebra if the norm satisfies:
\begin{equation*}
    \|A A^*\| = \|A\|^2
\end{equation*}
\end{defn}
\begin{ex}
Consider the algebra of bounded operators over a Hilbert space $\mathfrak{B}(\mathcal{H})$. We can make it a $C^*$-algebra by considering the operator norm and as involution the adjoint with respect to the inner product on $\mathcal{H}$. The corresponding induced topology will be called \textit{operator topology} or \textit{uniform topology}. The operator norm is defined as follows:
\begin{equation*}
    \|A\|_{op} := \sup_{v \in \mathcal{H}} \frac{\|A v\|}{\|v\|}
\end{equation*}
for all $A \in \mathfrak{B}(\mathcal{H})$, and the norms on the right hand side are those induced by the scalar product on $\mathcal{H}$. We can always write a vector $v \in \mathcal{H}$ as:
\begin{equation*}
    v = \|v\| \frac{v}{\|v\|}
\end{equation*}
And clearly $\hat{v} := v/\|v\|$ is a normalized vector. For this reason we can rewrite:
\begin{equation*}
    \|A\|_{op} = \sup_{v \in \mathcal{H}\, , \, \|v\|=1} \|A v\|
\end{equation*}
The operator norm fulfills all the properties we listed before such that $(\mathfrak{B}(\mathcal{H}), \|\cdot \|_{op})$ is a normed algebra:
\begin{enumerate}
    \item $\|A\|_{op} = 0 \Leftrightarrow Av=0 \,\,\,\, \forall v \in \mathcal{H}$ but this implies that $A = 0$
    \item It easily follows from the properties of the norm over the Hilbert space:
    \begin{align*}
        \|k A\|_{op} &= \sup_{v \in \mathcal{H}, \|v\| = 1} \|k A v\|\\
        &= \sup_{v \in \mathcal{H}, \|v\| = 1} |k| \, \|A v\|\\
        &= |k| \, \|A\|_{op}
    \end{align*}
    \item Again, one has from the Hilbert space norm:
    \begin{align*}
        \|A + B\|_{op} &= \sup_{v \in \mathcal{H}, \|v\| = 1} \|(A+B) v\|\\
        &\leq \sup_{v \in \mathcal{H}, \|v\| = 1} (\|A v\| + \|B v\|)\\
        &= \|A\|_{op} + \|B\|_{op}
    \end{align*}
    \item To prove the last property, notice that from the definition one has:
    \begin{equation*}
        \|Av\| \leq \|A\|_{op} \|v\|
    \end{equation*}
    Then, if we look at:
    \begin{align*}
        \|AB\|_{op} &= \sup_{v \in \mathcal{H}, \|v\| = 1} \|ABv\|\\
        &\leq \sup_{v \in \mathcal{H}, \|v\| = 1} \|A\|_{op}\|Bv\|\\
        &= \|A\|_{op} \sup_{v \in \mathcal{H}, \|v\| = 1} \|Bv\| = \|A\|_{op} \|B\|_{op}
    \end{align*}
\end{enumerate}
$(\mathfrak{B}(\mathcal{H}), \|\cdot\|_{op})$ is also a Banach algebra as can be derived from the general result:
\begin{prop}
Let $\mathcal{X}, \mathcal{Y}$ be normed vector spaces. If $\mathcal{Y}$ is a Banach space, then $(\mathfrak{B}(\mathcal{X},\mathcal{Y}), \|\cdot\|_{op})$ is a Banach algebra
\end{prop}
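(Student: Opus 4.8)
The plan is to check only the completeness ("Banach") half of the statement: the normed‑algebra axioms, including the submultiplicativity $\|AB\|_{op}\le\|A\|_{op}\|B\|_{op}$, have just been verified above in the case relevant to us ($\mathcal{X}=\mathcal{Y}$, the product being composition), so what remains is to show that $(\mathfrak{B}(\mathcal{X},\mathcal{Y}),\|\cdot\|_{op})$ is complete whenever $\mathcal{Y}$ is. First I would take an arbitrary Cauchy sequence $(T_n)_{n\in\mathbb{N}}$ in $\mathfrak{B}(\mathcal{X},\mathcal{Y})$ and fix $x\in\mathcal{X}$. From the elementary bound $\|T_n x-T_m x\|\le\|T_n-T_m\|_{op}\,\|x\|$ one sees that $(T_n x)_n$ is Cauchy in $\mathcal{Y}$, hence convergent — this is the single point where the hypothesis that $\mathcal{Y}$ is Banach is used. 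Define $Tx:=\lim_{n\to\infty}T_n x$; linearity of $T$ follows immediately from linearity of each $T_n$ together with continuity of addition and scalar multiplication in $\mathcal{Y}$.

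Next I would show that $T$ is actually bounded, i.e. $T\in\mathfrak{B}(\mathcal{X},\mathcal{Y})$, and that $T_n\to T$ in the operator norm. For boundedness, note that a Cauchy sequence in a normed space is bounded, so $M:=\sup_n\|T_n\|_{op}<\infty$; then for every $x$, continuity of the norm gives $\|Tx\|=\lim_n\|T_n x\|\le M\|x\|$, so $\|T\|_{op}\le M$. For norm convergence, given $\varepsilon>0$ pick $N$ with $\|T_n-T_m\|_{op}<\varepsilon$ for all $n,m\ge N$; then for every $x$ with $\|x\|\le 1$ and every $n\ge N$ one has $\|T_n x-T_m x\|<\varepsilon$ for all $m\ge N$, and letting $m\to\infty$ (again using continuity of the norm) yields $\|T_n x-Tx\|\le\varepsilon$. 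Taking the supremum over such $x$ gives $\|T_n-T\|_{op}\le\varepsilon$ for all $n\ge N$, so the Cauchy sequence converges in $\mathfrak{B}(\mathcal{X},\mathcal{Y})$ and completeness follows.

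There is no genuine obstacle here; the only two places that deserve a moment's care are that the pointwise‑defined candidate $T$ is shown to be \emph{bounded} (via boundedness of Cauchy sequences, not via any uniform estimate on $Tx$ directly), and that the convergence obtained is in the \emph{operator} norm rather than merely pointwise — this is exactly why one must pass $m\to\infty$ inside the inequality $\|T_n x-T_m x\|<\varepsilon$ and only then take the supremum over the unit ball. I would also append a short remark that, strictly speaking, $\mathfrak{B}(\mathcal{X},\mathcal{Y})$ carries an algebra (composition) product only when $\mathcal{X}=\mathcal{Y}$, so in general the conclusion is that it is a Banach \emph{space}; specialising to $\mathcal{X}=\mathcal{Y}=\mathcal{H}$ then gives that $(\mathfrak{B}(\mathcal{H}),\|\cdot\|_{op})$ is a Banach (in fact $C^*$-) algebra, which is the case of interest for the rest of the text.
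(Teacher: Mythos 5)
Your proposal is correct and follows essentially the same route as the paper: pointwise convergence via completeness of $\mathcal{Y}$, linearity of the limit, boundedness, and finally convergence in the operator norm. The only cosmetic difference is that you establish boundedness of $T$ from the uniform bound $\sup_n\|T_n\|_{op}<\infty$ (Cauchy sequences are bounded), whereas the paper uses the triangle inequality $\|Av\|\le\|Av-A_nv\|+\|A_nv\|$ with a fixed large $n$; both are standard, and your closing remark that the algebra product only exists for $\mathcal{X}=\mathcal{Y}$ is a fair observation about the statement as written.
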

\begin{proof}
Consider a Cauchy sequence $\{ A_n \} \subset \mathfrak{B}(\mathcal{X},\mathcal{Y})$ for the operator norm. By definition of an operator norm we have:
\begin{equation*}
    \|A_n v - A_m v\| \leq \|A_n - A_m\|_{op} \, \|v\|
\end{equation*}
As $\{ A_n \}$ is a Cauchy sequence, $\{A_n v\}$ must be as well. But we know that $\mathcal{Y}$ is complete so there is a vector in $\mathcal{Y}$ such that:
\begin{equation*}
    A v := \lim_{n \to \infty} A_n v
\end{equation*}
As all the $A_n$ are linear operators, so must be the operator $A$.\\
Now, as $\{ A_n \}$ is Cauchy, for any choice of $\varepsilon$ if we take $m,n$ sufficiently large we have:
\begin{equation*}
    \|A_m - A_n\|_{op} < \varepsilon
\end{equation*}
Then we can estimate:
\begin{equation*}
    \|A_n v - A_m v\| \leq \|A_n - A_m\|_{op} \|v\| \leq \varepsilon \|v\|
\end{equation*}
As we are taking the norm over $\mathcal{Y}$:
\begin{equation*}
    \|A v - A_m v\| = \lim_{n \to \infty} \|A_n v - A_m v\| \leq \varepsilon \| v\|
\end{equation*}
where in the inequality, we implicitly assume that $m$ is sufficiently large. Now since $\|Av\| \leq \|Av - A_nv\| + \|A_n v\|$, we have:
\begin{equation*}
    \|A v\| \leq (\varepsilon + \|A_n\|_{op}) \|v\|
\end{equation*}
The above proves that $A$ is bounded. Finally we need to show that as $n \to \infty$, $\|A - A_n\|_{op} \to 0$. To see this, just notice that from the above:
\begin{equation*}
    \|A - A_n\|_{op} = \sup_{v \in \mathcal{H}, \|v\| = 1} \|A v - A_n v\| \leq \varepsilon
\end{equation*}
The result follows from the freedom of choice of $\varepsilon$ for sufficiently large $n$.
\end{proof} 
Since in our case, we are considering bounded operators over a Hilbert space, the fact that $\mathfrak{B}(\mathcal{H})$ is a Banach algebra, is an immediate consequence of this proposition.\\\\
Finally, let us show that the $C^*$-algebra condition holds. First of all let us show that $\|A^*\|_{op} = \|A\|_{op}$. We have for $v,x \in \mathcal{H}$:
\begin{equation*}
    \braket{A^* v}{x} = \braket{v}{A x} \leq \|v\| \, \|A\|_{op} \|x\|
\end{equation*}
But, if we choose $x = A^*v$ we get:
\begin{equation*}
    \|A^*v\|^2 \leq \|v\| \, \|A\|_{op} \|A^*v\|
\end{equation*}
This implies: $\|A^*\|_{op} \leq \|A\|_{op}$. Now, as the adjoint is involutive:
\begin{equation*}
    \braket{v}{A^* x} = \overline{\braket{A^* x}{v}} = \overline{\braket{x}{A v}} = \braket{Av}{x}
\end{equation*}
We also have: $\|A\|_{op} =\|(A^*)^*\|_{op} \leq \|A^*\|_{op}$ which implies:
\begin{equation*}
    \|A^*\|_{op} = \|A\|_{op}
\end{equation*}
Further, we have from the multiplicative property of the operator norm, proven above: $\|A^* A\|_{op} \leq \|A^*\|_{op} \|A\|_{op} = \|A\|_{op}^2$. For the converse inequality, observe that:
\begin{align*}
    \|A\|_{op}^2 &= \bigg( \sup_{v \in \mathcal{H}, \|v\| = 1} \|Av\| \bigg)^2\\
    &= \sup_{v \in \mathcal{H}, \|v\| = 1} \|A v\|^2\\
    &= \sup_{v \in \mathcal{H}, \|v\| = 1} \braket{Av}{Av}\\
    &= \sup_{v \in \mathcal{H}, \|v\| = 1} \braket{v}{A^*Av}\\
    &\leq \sup_{v \in \mathcal{H}, \|v\| = 1} \|A^*A v\| = \|A^*A\|_{op}
\end{align*}
Therefore, $\|A^*A\|_{op} = \|A\|_{op}^2$ and we have proven that $(\mathfrak{B}(\mathcal{H}), \|\cdot\|_{op})$ is a $C^*$-algebra.
\end{ex}
Another important concept is that of a $C^*$-subalgebra:
\begin{defn}
Let $\mathfrak{A}$ be a $C^*$-algebra, and let $\mathfrak{B} \subset \mathfrak{A}$ be a $*$-subalgebra in the sense that it is closed under the product and involution of its elements and let it be closed in the topology induced by the norm. Then, we say that $\mathfrak{B}$ is a $C^*$-subalgebra of $\mathfrak{A}$, and clearly, it is a $C^*$-algebra on its own.
\end{defn}
We now introduce specific type of $C^*$-algebras that are going to be widely used and mentioned in what follows.
\vspace{5mm}
\subsubsection{CCR $*$-algebra and Weyl $C^*$-algebra}\label{sec: Weyl}
We start by abstractly defining what a $CCR$ algebra is, the motivation for it will be explained in an incoming example:

\begin{defn}
Let $\mathbf{P}$ be a symplectic space with symplectic form $\sigma(\cdot, \cdot)$. A CCR (Canonical Commutation Relation) algebra $\mathfrak{A}_{CCR}(\mathbf{P}, \sigma)$, is the quotient of the $*$-algebra, generated by $A^*(f)$, $A(f)$ and the identity $\mathbb{1}$ (where $f \in \mathbf{P}$), by the following relations:
\begin{align*}
    (A^*(f))^* &= A(f)\\
    [A(f), A(g)] &= i \sigma(f,g) \mathbb{1}\\
    A(c_1 f_1 + c_2 f_2) &= c_1 A(f_1) + c_2 A(f_2)\\
    A^*(h) &= A(\overline{h})
\end{align*}
for $c_1, c_2 \in \mathbb{C}$ and $\overline{h}$ denotes the complex conjugate of $h \in \mathbf{P}$.
\end{defn}
An example of such an abstract algebra, is the algebra of free scalar field operators, i.e. the operator valued distributions, as the fundamental solutions of the Klein-Gordon equation, smeared with test functions.\\
Unfortunately, on the $CCR$ $*$-algebra one cannot define a $C^*$-norm. This is due to the following Proposition:
\begin{prop}\label{prop: CCRno}
There are no self-adjoint operators $q$ and $p$ such that, on a common invariant subspace, $[q,p] = i c \mathbb{1}$ (for $c \in \mathbb{C}$) and at the same time $q$ and $p$ are bounded.
\end{prop}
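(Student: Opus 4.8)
The plan is to use the classical Wielandt--Wintner argument: deduce a contradiction by iterating the commutation relation against powers of $p$. I would argue by contradiction, assuming there exist bounded self-adjoint operators $q$ and $p$ with $[q,p] = ic\mathbb{1}$ for some $c \neq 0$; the case $c = 0$ is excluded by hypothesis, and $p = 0$ would force $0 = [q,p] = ic\mathbb{1}$, again a contradiction, so we may also assume $p \neq 0$.

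The core step is the identity
\begin{equation*}
    [q, p^n] = i c \, n \, p^{\,n-1}, \qquad n \geq 1,
\end{equation*}
which I would prove by induction on $n$. The case $n=1$ is the hypothesis; for the inductive step I would use the Leibniz rule for the commutator, $[q, p^{n+1}] = [q, p^n]\,p + p^n\,[q,p]$, and then insert the inductive hypothesis together with $[q,p] = ic\mathbb{1}$ to obtain $[q, p^{n+1}] = icn\,p^n + ic\,p^n = ic(n+1)p^n$.

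Then I would pass to operator norms. Using the triangle inequality and submultiplicativity of $\|\cdot\|_{op}$,
\begin{equation*}
    |c|\, n\, \|p^{\,n-1}\|_{op} \;=\; \|[q, p^n]\|_{op} \;\leq\; 2\,\|q\|_{op}\,\|p^n\|_{op} \;\leq\; 2\,\|q\|_{op}\,\|p\|_{op}\,\|p^{\,n-1}\|_{op}.
\end{equation*}
Since $p$ is self-adjoint, the $C^*$-identity established in the previous example gives $\|p^k\|_{op} = \|p\|_{op}^{\,k}$ for all $k$, so $\|p^{\,n-1}\|_{op} = \|p\|_{op}^{\,n-1} \neq 0$; dividing through by it yields $|c|\,n \leq 2\,\|q\|_{op}\,\|p\|_{op}$ for every $n \in \mathbb{N}$, which is absurd since the right-hand side is a fixed finite number. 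This contradiction proves the claim.

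The proof is essentially mechanical, and the single point requiring a little care is the division by $\|p^{\,n-1}\|_{op}$: one must know this quantity is nonzero, which is precisely where self-adjointness enters, through the $C^*$-property $\|p^k\| = \|p\|^k$. Should one prefer not to rely on that, the same conclusion follows by observing that if $p^{n_0}=0$ for a minimal $n_0$ then the identity forces $icn_0\,p^{n_0-1}=0$, hence $p^{n_0-1}=0$, contradicting minimality unless $n_0 = 1$, i.e. $p = 0$, which is already ruled out — but I would present the first version, since the $C^*$-identity is readily at hand from the preceding example.
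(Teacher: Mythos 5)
Your proof is correct and takes essentially the same route as the paper: the classical Wielandt--Wintner argument, iterating the commutation relation to get a linearly growing lower bound $|c|\,n$ against the fixed upper bound $2\|q\|_{op}\|p\|_{op}$. The only differences are cosmetic --- you swap the roles of $p$ and $q$, keep track of the factor $|c|$ explicitly, and secure the nonvanishing of $\|p^{\,n-1}\|_{op}$ via the $C^*$-identity (or the minimal-$n_0$ remark) where the paper instead shifts $q \mapsto q + c_1\mathbb{1}$ and restricts to odd $n$.
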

\begin{proof}
Suppose that $[q,p] = i c \mathbb{1}$ on a common invariant space $D$ where both $q$ and $p$ are bounded. Restrict to the closure $\overline{D}$, extending $q$, $p$ as self adjoint operators over it, and consider it as the Hilbert space. From the commutation relations we get:
\begin{equation*}
    [p, q^n] = -i c (n q^{n-1})
\end{equation*}
Using that, as a self-adjoint, bounded operator, $q^k = (q^k)^*$ for any $k \in \mathbb{N}$ and assuming $n$ odd:
\begin{align*}
    n \| q \|_{op}^{n-1} &= n \| q^{n-1} \|_{op}\\
    &\leq 2 \| p\|_{op} \| q^n \|_{op}\\
    &\leq 2 \| p \|_{op} \|q \|_{op} \| q^{n-1} \|_{op} \\
    &= 2 \| p \|_{op} \|q \|_{op} \| q \|_{op}^{n-1}
\end{align*}
Noticing now that, substituting $q' = q + c_1 \mathbb{1}$ for $c_1 \in \mathbb{C}$, will not alter the assumptions of the proposition, we always have $\| q \|_{op} \neq 0$. Therefore, we have obtained for any $n = 1,3, \dots$:
\begin{equation*}
    n \leq 2 \| p \|_{op} \|q \|_{op} < +\infty
\end{equation*}
that contraddicts the boundedness.
\end{proof}

So, the elements of a finitely generated $CCR$ algebra do not admit representations as buonded operators, forbidding the definition of a norm. As a consequence, we cannot even define a norm over the abstract $*$-algebra. Finally, being a general $CCR$ algebra the union of the finitely generated ones, this prevents the definition of a norm in general.\\
The $CCR$ not being a $C^*$-algebra leads inevitably, as we will state when introducing the \textit{GNS theorem}, to a representation of its elements as unbounded operators over a Hilbert space. As a consequence, domain issues in defining products of the represented elements in the algebra may arise. Therefore, in order to avoid them, the idea is to "turn" the $CCR$ into a $C^*$-algebra, for which the same \textit{GNS theorem} ensures that the elements are represented as bounded operators. For this reason, let us define the so called \textit{Weyl $C^*$-Algebra} that are a specific type of $C^*$-algebras that we can construct once over the same symplectic space:

\begin{defn}
Let $(\mathbf{P},\sigma)$ be a non-trivial real symplectic space, with $\sigma: \mathbf{P} \times \mathbf{P} \to \mathbb{R}$ non degenerate. The Weyl $C^*$-algebra $C\mathcal{W}(\mathbf{P},\sigma)$ associated to $(\mathbf{P},\sigma)$ is a $C^*$-algebra generated by elements $W(f)$ with $f \in \mathbf{P}$ that satisfy the following relations:
\begin{equation*}
    W(f)W(f') = e^{-\frac{i}{2}\sigma(f,f')}W(f + f') \hspace{20pt} W(f)^* = W(-f)
\end{equation*}
\end{defn}
From the given definition however, is just clear that $C\mathcal{W}(\mathbf{P},\sigma)$ is a $*$-algebra. We still need to argue that any Weyl $*$-algebra admits a norm with the $C^*$-property. Let me here only quote the procedure and refer to Remark $11.49$ number $\mathbf{(5)}$ in \cite{Moretti:2013cma} for a rigorous explanation. For any two Weyl $*$-algebras on the same underlying symplectic space, one can show that there always exist a unique $*$-isomorphism between them. Then, one shows that given $(\mathbf{P},\sigma)$  one can always construct a Weyl $*$-algebra of bounded operators $\mathfrak{B}(\mathcal{H})$ over a Hilbert space $\mathcal{H} = L^2(\mathbf{P},\mu)$, where $\mu$ is the counting measure defined on $(\mathbf{P},\Sigma)$ with $\Sigma$ the power set of $\mathbf{P}$ that we take as our sigma algebra. In particular, for any $A \in \Sigma$ the counting measure is:
\begin{equation*}
    \mu(A) := \left\{ \begin{aligned}
    &\sharp A \hspace{20pt} \mathrm{if} \,\, A \,\, \mathrm{is} \,\, \mathrm{finite}\\
    &+\infty \hspace{20pt} \mathrm{if} \,\, A \,\, \mathrm{is} \,\, \mathrm{infinite}
    \end{aligned}\right.
\end{equation*}
Where $\sharp \cdot$ denotes the cardinality of the set. However, since we know that any $(\mathfrak{B}(\mathcal{H}), \| \cdot \|_{op})$ is a $C^*$-algebra, we can define a $C^*$-norm on any other Weyl algebra on $(\mathbf{P},\sigma)$ starting from $\| \cdot \|_{op}$ using the existing $*$-isomorphism between any Weyl algebra on the same underlying symplectic space.\\\\
Finally, to give some motivation and make these abstract definitions more concrete we present an example, that we are familiar with, from standard Quantum Mechanics.
\begin{ex}[\textbf{CCR in QM}]
In \sch mechanics, the main operators are position and momentum. The \textit{Canonical Commutation Relation algebra (CCR)} for finitely many degrees of freedom is defined as a $*$-algebra generated by self-adjoint $q_1,\dots,q_n,p_1,\dots, p_n$, with the relations:
\begin{align*}
    [q_j,p_k] = i\delta_{jk}\mathbb{1} &\hspace{15pt} [q_j,q_k] = [p_j,p_k] = 0\\
    q_k = q_k^* \,\,\,\, &\mathrm{and} \,\,\,\, p_k = p_k^*
\end{align*}
Such an abstract $*$-algebra does not admit a representation in terms of bounded operators over a Hilbert space, as we have seen in Prop. \ref{prop: CCRno}, preventing the definition of a norm. The canonical way to bypass this issue is to transform such self-adjoint operators into unitaries, namely defining a Weyl algebra for it.\\
In this case, we have an underlying symplectic space stemmin from the classical theory that describes the positions and momenta of our collection of classical pointlike particles: $(\mathbb{R}^{2n},\sigma)$. Therefore, $\mathbf{P} = \mathbb{R}^{2n} = \{ (Q,P)| Q,P \in \mathbb{R}^n \}$ is the classical phase space and (denoting with $\cdot$ the standard inner product):
\begin{equation*}
    \sigma((Q_1,P_1),(Q_2,P_2)) := Q_1 \cdot P_2 - Q_2 \cdot P_1
\end{equation*}
This expression is compatible with the Poisson bracket as follows: given $f= (Q,P) \in \mathbf{P}$, the map $\sigma_f: h \mapsto \sigma(f,h)$ is a linear observable and, taking advantage of the non-degeneracy of the symplectic form, we can write any linear map $\mathbf{P}\to \mathbb{R}$ in this way. For instance, we can express the position and momentum operators $q_i(Q,P) := Q_i$ and $p_i(Q,P) := P_i$ as:
\begin{equation}\label{eq: sympos}
    q_i = \sigma_{(0,-e_i)} \hspace{20pt} p_i = \sigma_{(e_i,0)}
\end{equation}
where $e_i$ is the $i$-th canonical vector in $\mathbb{R}^n$. The compatibility with the Poisson bracket then follows:
\begin{align*}
  \{ \sigma_{(Q_1,P_1)}, \sigma_{(Q_2,P_2)} \}_{P.B.} &= \frac{\partial \sigma_{(Q_1,P_1)}}{\partial Q} \cdot \frac{\partial \sigma_{(Q_2,P_2)}}{\partial P} - \frac{\partial \sigma_{(Q_1,P_1)}}{\partial P} \cdot \frac{\partial \sigma_{(Q_2,P_2)}}{\partial Q}\\
  &= (-P_1) \cdot Q_2 - (-P_2) \cdot Q_1\\
  &= \sigma((Q_1,P_1), (Q_2,P_2))
\end{align*}
We can rewrite the CCR $*$-algebra relations using Eq. \eqref{eq: sympos} noticing that:
\begin{equation*}
    \sigma((0,-e_j),(0,-e_k)) = \sigma((e_j,0),(e_k,0)) = 0 \hspace{15pt} \sigma((0,-e_j), (e_k,0)) = \delta_{jk}
\end{equation*}
Then, by quantizing the classical theory:
\begin{align*}
    [\sigma_{(0,-e_j)},\sigma_{(e_k,0)}] = i\delta_{jk}\mathbb{1} \hspace{15pt} &[\sigma_{(0,-e_j)},\sigma_{(0,-e_k)}] = [\sigma_{(e_j,0)},\sigma_{(e_k,0)}] = 0\\
    \sigma_{(0,-e_k)} = \sigma_{(0,-e_k)}^* \,\,\,\,\,\, &\mathrm{and} \,\,\,\,\,\, \sigma_{(e_k,0)} = \sigma_{(e_k,0)}^*
\end{align*}
where we have used the fact that $e_i \in \mathbb{R}^n$. In this formulation, for any $f \in \mathbf{P}$ (e.g. $f = (0,-e_i)$), we define the associated Weyl algebra as generated by elements $e^{i\sigma_f}$, checked to fulfill:
\begin{equation*}
    e^{i \sigma_f} e^{i \sigma_h} = e^{-\frac{i}{2}\sigma(f,h)}e^{i\sigma_{f+h}} \hspace{20pt} (e^{i\sigma_f})^* = e^{i\sigma_{-f}}
\end{equation*}\\
\end{ex}

\subsubsection{Self-dual CAR algebra}\label{sec: SDCAR}
Similarly with the way we defined an abstract $CCR$ algebra, we can define an abstract $*$-algebra that has \textit{Canonical Anticommutation Relations (CAR)} implemented in it as done in \cite{Araki1968}. In this case, the underlying structure is that of a Hilbert space:

\begin{defn}\label{def: self}
Let $\mathcal{H}$ be a complex Hilbert space with inner product $(\cdot, \cdot)_{\mathcal{H}}$, and let $\Gamma$ be an antiunitary operator satisfying:
\begin{align*}
    \Gamma^2 = \mathbb{1} &\hspace{20pt} \Gamma i = -i \Gamma\\
    (\Gamma f_1, \Gamma f_2)_{\mathcal{H}} &= (f_2, f_1)_{\mathcal{H}}
\end{align*}
A selfdual CAR algebra $\mathfrak{A}_{SDC}(\mathcal{H}, \Gamma)$ is the quotient of the $*$-algebra, generated by $B^*(f)$, $B(f)$ and the identity $\mathbb{1}$ (where $f \in \mathcal{H}$), by the following relations:
\begin{align*}
    (B^*(f))^* &= B(f)\\
    [B^*(f), B(g)]_+ &= (g,f)_{\mathcal{H}} \mathbb{1}\\
    B(c_1 f_1 + c_2 f_2) &= c_1 B(f_1) + c_2 B(f_2)\\
    B^*(h) &= B(\Gamma h)
\end{align*}
for $c_1, c_2 \in \mathbb{C}$.
\end{defn}
Contrary to the $CCR$ case, one can define a $C^*$-norm over a self-dual $CAR$ algebra, making it a $C^*$-algebra after taking the completion of the previous $*$-algebra with respect to this norm. Let me just quote the main argument, and refer to Prop. $12.50$ in \cite{Derezinski:2013dra} for a rigorous proof. Note first of all that $\mathfrak{A}_{SDC}(\mathcal{H}, \Gamma)$ is an infinite dimensional Clifford algebra, due to the anticommutation relations. Moreover, each finite dimensional Clifford algebra admits an injective reppresentation, as an algebra of bounded operators over some Hilbert space, see \cite{Lawson} Section $4$. Then, as bounded operators, they admit a $C^*$-norm. Follows, that each finite dimensional Clifford algebra admits a $C^*$-norm as well. Finally, as $\mathfrak{A}_{SDC}(\mathcal{H}, \Gamma)$ is the completion of the union of these finite dimensional $C^*$-subalgebras, we must have that $\mathfrak{A}_{SDC}(\mathcal{H}, \Gamma)$ is also a $C^*$-algebra.\\
As a consequence of this, one obtains that the corresponding operators in $\mathfrak{A}_{SDC}(\mathcal{H}, \Gamma)$, are reppresented, by the \textit{GNS theorem}, as bounded operators on a Hilbert space.\\

\vspace{3mm}
In what follows, $\mathfrak{A}_{(\mathbf{K},s)}$ will denote an algebra built on $(\mathbf{K},s)$ with $s$ a bilinear form on $\mathbf{K}$, i.e. the elements in $\mathfrak{A}_{(\mathbf{K},s)}$ are maps $(\mathbf{K},s) \ni f \to T(f) \in \mathfrak{A}_{(\mathbf{K},s)}$. In particular, if $(\mathbf{K},s)$ is a symplectic space, $\mathfrak{A}_{(\mathbf{K},s)}$ denotes either a \textit{Weyl algebra} or the $CCR$ algebra. While, if $(\mathbf{K},s)$ is a Hilbert space, $\mathfrak{A}_{(\mathbf{K},s)}$ denotes a self-dual CAR algebra.
\vspace{5mm}
\subsubsection{Von Neumann algebras}
The last important type of $C^*$-algebra that we will study are the so called \textit{von Neumann algebras} (named after John von Neumann) that are the types of algebras we will be interested in, once we represent the abstract algebras on a concrete Hilbert space. These are unital algebras that we will encounter when studying the Tomita-Takesaki modular theory in the next chapter.\\
The first thing we need to introduce, is the notion of commutant algebra.
\begin{defn}
Let $\mathfrak{M} \subset \mathfrak{B}(\mathcal{H})$ be a subset of the bounded operators over the Hilbert space $\mathcal{H}$. The commutant algebra of $\mathfrak{M}$ is defined:
\begin{equation*}
    \mathfrak{M}' := \{ T \in \mathfrak{B}(\mathcal{H}): TA - AT = 0 \,\,\,\, \mathrm{for} \,\,\,\, \mathrm{any}  \,\,\,\,A \in \mathfrak{M}  \}
\end{equation*}
\end{defn}
\begin{rem}
If $\mathfrak{M}$ is closed under hermitian conjugation, then $\mathfrak{M}'$ is a $*$-algebra with unit. One can also convince himself that $\mathfrak{M}_1 \subset \mathfrak{M}_2$ then $\mathfrak{M}'_2 \subset \mathfrak{M}'_1$ and that $\mathfrak{M} \subset (\mathfrak{M}')'$. This implies that one cannot reach more than the first order of commutant as $\mathfrak{M}' = ((\mathfrak{M}')')' := (\mathfrak{M}')''$. This is true as, from the second statement applied on $\mathfrak{M}'$, we get:
\begin{equation*}
    \mathfrak{M}' \subset (\mathfrak{M}')''
\end{equation*}
but at the same time since $\mathfrak{M} \subset (\mathfrak{M}')'$ we must have, from the first statement above, that for the commutant algebras:
\begin{equation*}
    ((\mathfrak{M}')')' \subset \mathfrak{M}'
\end{equation*}
\end{rem}
The next theorem proves that the second commutant of $\mathfrak{M} \subset \mathfrak{B}(\mathcal{H})$ corresponds to its closure in the weak topology and, as we will see, this is required for defining von Neumann algebras:
\begin{thm}[\textbf{Double commutant theorem}]\label{thm: DouComm}
If $\mathcal{H}$ is a complex Hilbert space and $\mathfrak{A}$ is a unital $*$-subalgebra of $\mathfrak{B}(\mathcal{H})$ the following statements are equivalent:
\begin{itemize}
    \item[i)] $\mathfrak{A}'' = \mathfrak{A}$
    \item[ii)] $\mathfrak{A}$ is closed in the weak topology\footnote{A sequence of operators $T_i \in \mathfrak{B}(\mathcal{H})$ converges in the weak operator topology to $T$ if for any $x \in \mathcal{H}$ and $y \in \mathcal{H}^*$ (the dual space of $\mathcal{H}$ with respect to the inner product): $\braket{y}{(T_i - T) x} \to 0$}
    \item[iii)] $\mathfrak{A}$ is closed in the strong topology\footnote{A sequence of operators $T_i \in \mathfrak{B}(\mathcal{H})$ converges in the strong operator topology to $T$ if for any $x \in \mathcal{H}$: $\| T_i x - Tx\| \to 0$}
\end{itemize}
Then:
\begin{equation*}
    \mathfrak{A}'' = \overline{\mathfrak{A}}^s = \overline{\mathfrak{A}}^w
\end{equation*}
Where $\overline{(\cdot)}^s$ denotes the closure w.r.t. the strong topology and $\overline{(\cdot)}^w$ the closure in the weak topology.
\end{thm}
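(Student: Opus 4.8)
The plan is to prove the von Neumann double commutant theorem in the standard way, establishing the chain of implications (i) $\Rightarrow$ (ii) $\Rightarrow$ (iii) $\Rightarrow$ (i), since the equality $\mathfrak{A}'' = \overline{\mathfrak{A}}^s = \overline{\mathfrak{A}}^w$ then follows automatically (each topological closure is a $*$-subalgebra satisfying the respective closedness condition, and all three coincide with $\mathfrak{A}''$). First I would record the easy facts: for any subset $\mathfrak{M}$, the commutant $\mathfrak{M}'$ is \emph{always} weakly closed, because the maps $T \mapsto \braket{y}{(TA - AT)x}$ are weakly continuous and $\mathfrak{M}'$ is the intersection of their kernels over all $A \in \mathfrak{M}$, $x,y \in \mathcal{H}$. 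Hence $\mathfrak{A}'' $ is weakly closed, which gives (i) $\Rightarrow$ (ii) immediately. The implication (ii) $\Rightarrow$ (iii) is trivial since the strong topology is finer than the weak one, so a strongly convergent net converges weakly to the same limit, and thus any weakly closed set is strongly closed.

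The substantive step is (iii) $\Rightarrow$ (i), i.e. if $\mathfrak{A}$ is strongly closed then $\mathfrak{A}'' \subseteq \mathfrak{A}$ (the reverse inclusion $\mathfrak{A} \subseteq \mathfrak{A}''$ being one of the elementary observations in the preceding remark). Fix $T \in \mathfrak{A}''$; I want to approximate $T$ strongly by elements of $\mathfrak{A}$. The key construction is the following: for a single vector $x \in \mathcal{H}$, let $P$ be the orthogonal projection onto the closed subspace $\overline{\mathfrak{A}x}$ (the closure of $\{Ax : A \in \mathfrak{A}\}$). Since $\mathfrak{A}$ is a unital $*$-algebra, $\overline{\mathfrak{A}x}$ is $\mathfrak{A}$-invariant, and because $\mathfrak{A}$ is $*$-closed its orthogonal complement is $\mathfrak{A}$-invariant too; standard reasoning then shows $P \in \mathfrak{A}'$. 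Consequently $T$ commutes with $P$, so $T$ leaves $\overline{\mathfrak{A}x}$ invariant; since $x = \mathbb{1}x \in \overline{\mathfrak{A}x}$ (here unitality is used), we get $Tx \in \overline{\mathfrak{A}x}$, i.e. for every $\varepsilon > 0$ there is $A \in \mathfrak{A}$ with $\|Tx - Ax\| < \varepsilon$. This handles strong approximation ``one vector at a time.''

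To upgrade to finitely many vectors $x_1,\dots,x_n$ simultaneously — which is what the definition of the strong topology requires — I would apply the one-vector argument to the amplification: consider $\mathcal{H}^{\oplus n} = \mathcal{H} \otimes \mathbb{C}^n$, the algebra $\mathfrak{A} \otimes \mathbb{1}_n$ acting diagonally, and the vector $\xi = (x_1,\dots,x_n)$. One checks that the commutant of $\mathfrak{A}\otimes\mathbb{1}_n$ is $\mathfrak{A}' \otimes \mathrm{Mat}(n,\mathbb{C})$, so its second commutant is $\mathfrak{A}'' \otimes \mathbb{1}_n$; hence $T \otimes \mathbb{1}_n$ lies in the bicommutant of $\mathfrak{A}\otimes\mathbb{1}_n$, and the one-vector result gives $A \in \mathfrak{A}$ with $\|(T\otimes\mathbb{1}_n)\xi - (A\otimes\mathbb{1}_n)\xi\| < \varepsilon$, i.e. $\sum_j \|Tx_j - Ax_j\|^2 < \varepsilon^2$. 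Thus $T$ is in the strong closure of $\mathfrak{A}$, which by hypothesis is $\mathfrak{A}$, proving $\mathfrak{A}'' \subseteq \mathfrak{A}$.

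I expect the main obstacle to be the proof that the projection $P$ onto $\overline{\mathfrak{A}x}$ actually belongs to $\mathfrak{A}'$: this is where both the $*$-closedness and the precise handling of closures matter, and it is the conceptual heart of the argument. The amplification bookkeeping (identifying the commutant of $\mathfrak{A}\otimes\mathbb{1}_n$ with $\mathfrak{A}'\otimes\mathrm{Mat}(n,\mathbb{C})$) is routine but must be stated carefully. Everything else — weak continuity of the commutation functionals, the inclusion $\mathfrak{A}\subseteq\mathfrak{A}''$, and the fact that the three closures coincide once the equivalences are in hand — is straightforward.
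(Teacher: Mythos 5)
Your proposal is correct and complete in outline: it is the standard von Neumann bicommutant argument, with the cyclic-subspace projection $P\in\mathfrak{A}'$ as the heart of (iii) $\Rightarrow$ (i) and the $n$-fold amplification to pass from one vector to finitely many. The paper itself does not prove this theorem at all --- it simply cites Theorem $3.88$ of Moretti's book --- so there is no in-paper argument to compare against; you are supplying the proof the paper outsources. All the individual steps you sketch are sound: commutants are weakly closed as intersections of kernels of the weakly continuous maps $T\mapsto\braket{y}{(TA-AT)x}$; the projection onto $\overline{\mathfrak{A}x}$ lies in $\mathfrak{A}'$ because $*$-closedness makes both the subspace and its orthogonal complement invariant (equivalently, $AP=PAP$ and $PA=PAP$); and the identification of $(\mathfrak{A}\otimes\mathbb{1}_n)'$ with $\mathrm{Mat}(n,\mathfrak{A}')$ gives $(\mathfrak{A}\otimes\mathbb{1}_n)''=\mathfrak{A}''\otimes\mathbb{1}_n$ as needed. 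One small caution: the theorem's footnotes phrase weak and strong convergence in terms of sequences, but closedness in these operator topologies must be formulated with nets (as you implicitly do), since neither topology is metrizable on all of $\mathfrak{B}(\mathcal{H})$ in infinite dimension; your argument in fact shows that every $T\in\mathfrak{A}''$ lies in the strong closure, which is the correct net-based statement.
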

\begin{proof}
See Theorem $3.88$ in \cite{Moretti:2013cma}
\end{proof}
\begin{rem}
Notice that if  $\mathfrak{A}$, as introduced above, is closed in the strong topology it must be closed also in the operator (uniform) topology. Indeed, if $\| T_i - T\|_{op} \to 0$ in the operator topology, it must also converge in the strong one as:
\begin{equation*}
    \| T_i x - T x\| \leq \|T_i - T\|_{op} \|x\| 
\end{equation*}
The converse is not true since, by taking as operator $\mathbb{1}_N \in \mathfrak{B}(l^2(\mathbb{N}))$ defined via:
\begin{equation*}
    \mathbb{1}_N = \sum_{n=0}^{N} \ketbra{n}{n}
\end{equation*}
its action on any $x \in l^2(\mathbb{N})$ is:
\begin{equation*}
    \mathbb{1}_N = \sum_{n=0}^{N} \ket{n}\braket{n}{x} = \sum_{n=0}^{N} \ket{n}x_n
\end{equation*}
As $\ket{n} \in l^2(\mathbb{N})$ is nothing but a sequence of zeros except a single $1$ at the $n$-th place. But then in the strong operator topology, assuming w.l.o.g that $M > N$, we have that for all $x \in l^2(\mathbb{N})$:
\begin{equation*}
    \|\mathbb{1}_N x - \mathbb{1}_M x \| = \| \sum_{n=N+1}^{M} \ket{n}x_n \| \to 0
\end{equation*}
As, by assumption $x \in l^2(\mathbb{N})$, which means that only a finite number of $x_n$ must be non-zero, so by taking $N$ sufficiently large we can ascertain that no other $x_n$ for $n \geq N$ are nonvanishing. That means that $\mathbb{1}_N \to \mathbb{1}$ in the strong topology. On the other hand:
\begin{equation*}
    \| \mathbb{1}_N - \mathbb{1}_{N+1}\|_{op} = \|\ket{N+1}\braket{N+1}{N+1}\| = 1
\end{equation*}
proves that this sequence is not Cauchy, i.e. in particular $\mathbb{1}_N$ cannot converge in the operator topology.\\
So, we have proven:
\begin{equation*}
    \mathrm{uniform} \Rightarrow \mathrm{strong}
\end{equation*}
But then, if we take the closure of a set $C$ in the uniform topology, this closure must itself be contained in the closure of the strong topology of that set, as the convergence in the uniform topology of each sequence implies that all the accumulation points must be contained also in the closure w.r.t. the strong topology:
\begin{equation*}
    \overline{C}^{op} \subseteq \overline{C}^s
\end{equation*}
showing that if a set is closed in the strong topology sense, it must be closed in the uniform topology as well.\\\\
The main implication of this result is that, if $\mathfrak{A}''$ is $*$-closed, then it is a $C^*$-subalgebra of $\mathfrak{B}(\mathcal{H})$, as it is also closed in the operator topology.
\end{rem}
\begin{rem}
$\mathfrak{M}'$ is always closed in the uniform topology. This can be seen by noticing that $\mathfrak{M}' = (\mathfrak{M}')''$ and as such it is closed in the strong topology from the above theorem.
\end{rem}
Now we have introduced all the necessary formal notions to define a von Neumann algebra:
\begin{defn}
Let $\mathcal{H}$ be a complex Hilbert space. A von Neumann algebra over $\mathcal{H}$ is a unital $*$-subalgebra of $\mathfrak{B}(\mathcal{H})$ that fulfills one of the three equivalent statements $i);ii);iii)$ of the double commutant theorem.\\
Given a von Neumann algebra $\mathfrak{M}$, its center is the subset $\mathfrak{M} \cap \mathfrak{M}'$.
\end{defn}
We see then, from the above remarks, that any von Neumann algebra is in particular a $C^*$-subalgebra of $(\mathfrak{B}(\mathcal{H}), \| \cdot \|_{op})$.\\
Consequently, in general, we can obtain a von Neumann algebra whenever we are given $\mathfrak{M} \subset \mathfrak{B}(\mathcal{H})$ that is closed under Hermitian conjugation, by simply taking its double commutant $\mathfrak{M}''$. It turns out that this is the smallest von Neumann algebra that one can obtain starting from $\mathfrak{M}$ as, whenever we take $\mathfrak{M} \subset \mathfrak{A}$ with $\mathfrak{A}$ a von Neumann algebra containing $\mathfrak{M}$, we have $\mathfrak{A}' \subset \mathfrak{M}'$. But then: $\mathfrak{M}'' \subset \mathfrak{A}'' = \mathfrak{A}$ showing that $\mathfrak{M}''$ is the smallest von Neumann algebra containing $\mathfrak{M}$.

\subsection{State functionals}
As mentioned at the beginning, now that we have introduced the concept of what an algebra of operators is, we need to define the notion of a state if we want to specify the configuration of the system. These are linear functionals over the algebra $\mathfrak{A}$, namely $\omega$ is a linear functional over $\mathfrak{A}$ if $\omega: \mathfrak{A} \to \mathbb{C}$ and:
\begin{equation*}
    \omega(\alpha A + \beta B) = \alpha \omega(A) + \beta \omega(B)
\end{equation*}
For $\alpha, \beta \in \mathbb{C}$ and any $A,B \in \mathfrak{A}$. The space of all linear functionals corresponds to the dual space $\mathfrak{A}^*$ of $\mathfrak{A}$.\\
Let us start by introducing some additional structure over this space of functionals. As usual, we want a notion of convergence and of normalization for the functionals, in order to be able to tell how the configuration of the system can change continuously, and to recast the probabilistic interpretation of Quantum Mechanics into our new formalism. We thus start by defining a norm over the space of functionals $\omega \in \mathfrak{A}^*$ as:
\begin{equation*}
    \| \omega \| := \sup\{ |\omega(A)|\, : \, \| A\|=1 \}
\end{equation*}
In fact, with the above norm, one can show that $(\mathfrak{A}^*, \| \cdot \|)$ is a Banach space.\\
The functionals we will be interested in, for physical reasons, are listed in the following definition:
\begin{defn}
Let $\mathfrak{A}$ be a $C^*$-algebra. A linear functional $\omega \in \mathfrak{A}^*$ is defined to be:
\begin{itemize}
    \item positive if for all $A \in \mathfrak{A}$:
    \begin{equation*}
        \omega(A^*A) \geq 0
    \end{equation*}
    \item normalized if $\|\omega\| = 1$
    \item a \textbf{state} if it is positive and normalized
\end{itemize}
We denote the space of states by $\mathfrak{G}(\mathfrak{A})$.\\
A state is called faithful if it is strictly positive, i.e. if $\omega(A^*A) = 0$ implies that $A=0$.
\end{defn}
\begin{rem}
For a state, one can in fact drop the assumption of it being continuous, as being a positive and linear functional over a $C^*$-algebra also implies continuity. To see it, assume by absurd that $\omega$ is a positive linear but discontinuous functional. Then, let us pick a sequence of operators $A_n$ in the $C^*$-algebra, such that:
\begin{equation*}
    \| A_n \| \leq \frac{1}{2^n}
\end{equation*}
But then by the discontinuity, we can always find a state $\omega$ such that:
\begin{equation*}
    \omega(A^*_n A_n) \geq 1
\end{equation*}
As we are on a $C^*$-algebra, we know that $\|A_n^* A_n\| = \|A_n\|^2$, so:
\begin{align*}
    \| \sum_{n=0}^{\infty} A_n^* A_n\| &\leq \sum_{n=0}^{\infty} \| A_n \|^2 \\
    &\leq \sum_{n=0}^{\infty} \frac{1}{4^{n}}\\
    &= \frac{4}{3}
\end{align*}
where we have used the geometric series. This implies that:
\begin{equation*}
    \omega\bigg( \sum_{n=0}^{\infty} A_n^* A_n \bigg) < \infty
\end{equation*}
since:
\begin{align*}
    \omega\bigg( \sum_{n=0}^{\infty} A_n^* A_n \bigg) &\leq \bigg| \omega\bigg( \sum_{n=0}^{\infty} A_n^* A_n \bigg) \bigg| \\
    &= \bigg| \omega\bigg( \frac{\sum_{n=0}^{\infty} A_n^* A_n}{\| \sum_{n=0}^{\infty} A_n^* A_n\|} \| \sum_{n=0}^{\infty} A_n^* A_n \|\bigg) \bigg|\\
    &= \bigg| \| \sum_{n=0}^{\infty} A_n^* A_n \| \omega\bigg( \frac{\sum_{n=0}^{\infty} A_n^* A_n}{\| \sum_{n=0}^{\infty} A_n^* A_n\|} \bigg) \bigg|\\
    &\leq \| \omega \| \| \sum_{n=0}^{\infty} A_n^* A_n \|\\
    &= \| \sum_{n=0}^{\infty} A_n^* A_n \| < \infty
\end{align*}
But then, by linearity and the discontinuity of the state $\omega$, we will also have:
\begin{equation*}
    \omega\bigg( \sum_{n=0}^N A_n^* A_n \bigg) \geq N
\end{equation*}
for all finite $N \in \mathbb{N}$. So, by positivity of the state, it follows:
\begin{equation*}
    \infty > \omega\bigg( \sum_{n=0}^{\infty} A_n^* A_n \bigg) \geq \omega\bigg( \sum_{n=0}^N A_n^* A_n \bigg) \geq N
\end{equation*}
which is a contraddiction, i.e. $\omega$ must be continuous
\end{rem}
\begin{rem}
Any positive functional is automatically Hermitian, i.e. $\omega(A) \in \mathbb{R}$ if $A$ is self-adjoint. To see it, consider $B = A + \mathbb{1}$, then from the positivity of the functional and linearity we have:
\begin{align*}
    0 &\leq \omega((A^* + \mathbb{1})(A + \mathbb{1}))\\
    &= \omega(A^*A) + \omega(A^*) + \omega(\mathbb{1}) + \omega(A)
\end{align*}
But then, by taking the imaginary part of this, we must get a vanishing expression as the whole expression must be positive:
\begin{equation*}
    0 = \Im(\omega(A^*A) + \omega(A^*) + \omega(\mathbb{1}) + \omega(A)) = \Im(\omega(A^*) + \omega(A))
\end{equation*}
This implies $\Im(\omega(A^*)) = - \Im(\omega(A))$. On the other hand, if we consider $B = A + i\mathbb{1}$ and we take into account positivity:
\begin{equation*}
    0 \leq \omega(A^*A) + i\omega(A^*) + \omega(\mathbb{1}) -i\omega(A)
\end{equation*}
Then, by again taking the imaginary part of this we obtain: $\Re(\omega(A)) = \Re(\omega(A^*))$. From which it follows that:
\begin{equation*}
    \omega(A^*) = \overline{\omega(A)}
\end{equation*}
\end{rem}

\begin{rem}
From the assumption of positivity, one can prove that $\| \omega \| = \omega(\mathbb{1})$. Then the normalization property is equivalent to $\omega(\mathbb{1}) = 1$.
\end{rem}
From the point of view of a physicist, $\omega(A)$ is the expectation value of the operator $A$, if $A$ is symmetric, with respect to the system described by $\omega$. Furthermore, to give the intuition behind the definition of state, we notice:
\begin{itemize}
    \item The positivity assumption implies that we can consider $\omega(B^*A)$ as defining a degenerate inner product on the elements in $\mathfrak{A}$ and this will imply, as we will see, that we can get a Hilbert space once we specify a state on the algebra.
    \item The normalization assumption is needed in order to give a probabilistic interpretation to the expectation values on the states. To see it, consider any $A \in \mathfrak{A}$ with $\|A\| = 1$, while we also have $\| A^*A \| = 1$ from the $C^*$-algebra condition. But since for any normalized element $A$ of the algebra we also have: $|\omega(A)| \leq \|\omega\|$ and in this case $\| \omega \| = 1$, we have:
    \begin{equation*}
        0 \leq \omega(A^*A) \leq 1
    \end{equation*}
\end{itemize}
\vspace{10mm}
\subsubsection{Pure and mixed states}
Following the goal of building a bridge to the standard approach to QM, we want to be able to describe with state functionals also systems corresponding to ensembles. In the standard approach to quantum theory, we can talk of pure and mixed states by investigating the corresponding form of the density matrix.
Before defining \textit{pure} and \textit{mixed} state functionals, let us state and prove the following:
\begin{lem}
Let $\mathfrak{A}$ be a $C^*$-algebra. The set of states is convex, i.e. whenever we take $\omega_1,\omega_2 \in \mathfrak{G(A)}$ as states and $0 \leq \lambda \leq 1$ we have that:
\begin{equation*}
    \omega := \lambda \omega_1 + (1-\lambda) \omega_2 
\end{equation*}
is still a state.
\end{lem}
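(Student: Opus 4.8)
The plan is to verify the three defining properties of a state — linearity, positivity, and normalization — directly for the convex combination $\omega = \lambda\omega_1 + (1-\lambda)\omega_2$. Linearity is immediate since $\mathfrak{A}^*$ is itself a vector space, so any linear combination of linear functionals is linear; this requires no argument beyond observing that $\omega_1,\omega_2 \in \mathfrak{A}^*$.

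First I would dispatch positivity: for any $A \in \mathfrak{A}$ we have $\omega(A^*A) = \lambda\,\omega_1(A^*A) + (1-\lambda)\,\omega_2(A^*A)$, and since $\omega_1,\omega_2$ are states both summands are nonnegative reals, while $\lambda \geq 0$ and $1-\lambda \geq 0$ by hypothesis. Hence $\omega(A^*A) \geq 0$, so $\omega$ is a positive linear functional.

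Next, to obtain normalization, I would invoke the remark established just above, namely that for a positive linear functional on a $C^*$-algebra one has $\|\omega\| = \omega(\mathbb{1})$. Applying this to $\omega$ — which we have just shown is positive — gives $\|\omega\| = \omega(\mathbb{1}) = \lambda\,\omega_1(\mathbb{1}) + (1-\lambda)\,\omega_2(\mathbb{1}) = \lambda + (1-\lambda) = 1$, where we used $\omega_1(\mathbb{1}) = \omega_2(\mathbb{1}) = 1$ (equivalently $\|\omega_i\| = 1$). Therefore $\omega$ is both positive and normalized, i.e. a state, and $\mathfrak{G}(\mathfrak{A})$ is convex.

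The computation is entirely routine and I do not expect any genuine obstacle. The only step that is not purely formal is the normalization argument, which leans on the nontrivial identity $\|\omega\| = \omega(\mathbb{1})$ for positive functionals already recorded in the preceding remark; an equally short alternative would be to combine the triangle inequality $\|\omega\| \leq \lambda\|\omega_1\| + (1-\lambda)\|\omega_2\| = 1$ with the lower bound $\|\omega\| \geq |\omega(\mathbb{1})| = 1$ to conclude $\|\omega\| = 1$.
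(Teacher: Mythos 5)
Your proposal is correct and follows essentially the same route as the paper: linearity is immediate, positivity follows from the nonnegativity of each convex summand, and normalization is checked via $\omega(\mathbb{1})=1$ together with the identity $\|\omega\|=\omega(\mathbb{1})$ for positive functionals, which the paper likewise relies on from the preceding remark. No gaps.
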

\begin{proof}
We need to prove positivity and normalization as linearity is manifestly true.\\
For what concerns normalization let us compute:
\begin{align*}
    \omega(\mathbb{1}) &= \lambda \omega_1(\mathbb{1}) + (1-\lambda) \omega_2(\mathbb{1})\\
    &= \lambda + (1-\lambda) = 1
\end{align*}
For what concerns positivity, consider any $A \in \mathfrak{A}$, then:
\begin{align*}
    \omega(A^*A) &= \lambda \omega_1(A^*A) + (1-\lambda) \omega_2(A^*A)\\
    &\geq 0
\end{align*}
as $0 \leq\lambda \leq 1$ and both $\omega_2(A^*A), \omega_1(A^*A) \geq 0$, so the above is just a sum of positive terms
\end{proof}
Given this geometric aspect of $\mathfrak{G(A)}$, we define pure states:
\begin{defn}
A pure state, is an extremal element of $\mathfrak{G(A)}$. That means that it cannot be written as a non trivial convex linear combination of any two other states. Moreover, any state that is not pure is called a mixed state.
\end{defn}
\vspace{5mm}
\subsubsection{Quasi-free states}
The last important class of states that I want to discuss is that of \textit{quasi-free states}. The definition is motivated by physics, as these are those states for which any correlation between operators is either zero, if the number of operators is odd or, if the number is even, can be decomposed in a product of two point correlation functions. This is precisely what happens for Gaussians distributions, where each moment of the distribution can be expressed via the variance (the moment of order one) recursively; for this reason such states are also called \textit{Gaussian states}. The physical motivation comes, for instance, from the quantum harmonic oscillator as the wavefunction resulting by solving the \sch equation for the ground state is a Gaussian: the corresponding state functional is quasi-free. Moreover, as one can prove, the \textit{ground} and the states describing thermal equilibrium (called \textit{KMS}) of a free QFT are quasi-free. This, in a very heuristic way, can be justified for the free sclar field, remembering that is nothing but a infinite set of decoupled harmonic oscillators.\\
We will say more about \textit{ground} and \textit{KMS states} in a later section. For the moment, we start defining the \textit{quasi-free} property.
\begin{defn}
A state $\omega$ over an algebra $\mathfrak{A}_{(\mathbf{K}, s)}$, where $(\mathbf{K}, s)$ is either a Hilbert space with inner product $s(\cdot, \cdot)$ or a symplectic space with $s(\cdot, \cdot)$ denoting a symplectic form, is quasifree if for any $A(f_n) \in \mathfrak{A}_{(\mathbf{K}, s)}$ ($f_n \in (\mathbf{K}, s)$) we have:
\begin{equation*}
    \omega(A(f_1) \cdots A(f_n)) = \left\{\begin{aligned}
    &0 \hspace{20pt} \mathrm{for} \,\, n \,\, \mathrm{odd}\\
    &\sum_{\pi \in \mathcal{P}_n} \omega(A(f_{\pi_1}) A(f_{\pi_2})) \cdots \omega(A(f_{\pi_{n-1}}) A(f_{\pi_{n}})) \hspace{20pt} n \,\, \mathrm{even} \,\, \mathrm{and} \,\, (\mathbf{K},s) \,\, \mathrm{symplectic}\\
    &\sum_{\pi \in \mathcal{P}_n} (-1)^{|\pi|}\omega(A(f_{\pi_1}) A(f_{\pi_2})) \cdots \omega(A(f_{\pi_{n-1}}) A(f_{\pi_{n}})) \hspace{20pt} n \,\, \mathrm{even} \,\, \mathrm{and} \,\, (\mathbf{K},s) \,\, \mathrm{Hilbert}
    \end{aligned} \right.
\end{equation*}
Where $\mathcal{P}_n$ denotes the group of ordered permutations of $n$ elements, for which it holds:
\begin{align*}
    \pi_{m-1} < \pi_{m} \,\, &\mathrm{with} \,\, 1 \leq m \leq n\\
    \pi_{m-1} < \pi_{m+1} \,\, &\mathrm{with} \,\, 1 \leq m < n
\end{align*}
and $|\pi|$ denotes the order of the permutation with respect to the initial order $\{ 1,2, \dots, n\}$.
\end{defn}
\begin{ex}
In the case in which we are considering a CCR algebra over a sympllectic space $(\mathbf{P}, \sigma)$, the condition of being quasi-free is formulated equivalently at the Weyl algebra level by the condition:
\begin{equation*}
    \omega\big( e^{i\sigma_f} \big) = e^{-\mu(f,f)/2} \hspace{15pt} \forall f \in \mathbf{P}
\end{equation*}
Where:
\begin{equation*}
    \mu(f,h) := \frac{1}{2}(\omega(\sigma_h \sigma_f) + \omega(\sigma_f \sigma_h))
\end{equation*}
and $\sigma_f$ are the generators of the CCR algebra for $f \in \mathbf{P}$. To show the equivalence, consider a CCR algebra defined via the relations:
\begin{align*}
    [\sigma_f,\sigma_h] &= i \sigma(f,h) \mathbb{1}\\
    \sigma_{\lambda f + h} &= \lambda \sigma_f + \sigma_h\\
    \sigma_f^* &= \sigma_{\overline{f}}
\end{align*}
for any $f,h \in \mathbf{P}^{\mathbb{C}} := \{ f + ih| f,h \in \mathbf{P} \}$ where we are assuming $\mathbf{P}$ real. From the definition of $\mu(\cdot,\cdot)$ we have:
\begin{equation*}
    \omega(\sigma_f \sigma_h) := w(f,h) = \mu(f,h) + \frac{i}{2}\sigma(f,h)
\end{equation*}
Then for $f_1,f_2,f_3,f_4 \in \mathbf{P}^{\mathbb{C}}$, if the state is assumed to be quasi-free (introducing the shorthand notation $\sigma_1 := \sigma_{f_1}$):
\begin{align*}
    \omega(\sigma_{1},\dots, \sigma_{4}) &= \frac{1}{2}\omega\big(\{ \sigma_1, \sigma_2 \} \sigma_3 \sigma_4\big) +\frac{1}{2}\omega\big([\sigma_1,\sigma_2]\sigma_3\sigma_4\big)\\
    &= \frac{1}{4}\omega\big(\{ \sigma_1, \sigma_2 \} \{ \sigma_3, \sigma_4\}\big) + \frac{1}{4}\omega\big(\{ \sigma_1, \sigma_2 \} [\sigma_3, \sigma_4]\big) + \frac{1}{4}\omega\big([\sigma_1, \sigma_2] \{ \sigma_3, \sigma_4\}\big) + \frac{1}{4}\omega\big([ \sigma_1, \sigma_2 ] [ \sigma_3, \sigma_4]\big)\\
    &= \frac{1}{4}\frac{d}{d\lambda_4} \dots \frac{d}{d\lambda_1} \omega\big( e^{i\sigma_{\lambda_1 f_1 + \lambda_2 f_2}} e^{i\sigma_{\lambda_3 f_3 + \lambda_4 f_4}} \big)\bigg|_{\lambda_1 = \dots = \lambda_4 = 0} - \frac{i}{2}\sigma(f_3,f_4)\frac{d}{d\lambda_2}\frac{d}{d\lambda_1}\omega\big( e^{i\sigma_{\lambda_1f_1 + \lambda_2f_2}} \big)\bigg|_{\lambda_1 = \lambda_2 = 0}\\
    &\,\,\,\,\,\,- \frac{i}{2}\sigma(f_1,f_2)\frac{d}{d\lambda_4}\frac{d}{d\lambda_3}\omega\big( e^{i\sigma_{\lambda_3f_3 + \lambda_4f_4}} \big)\bigg|_{\lambda_3 = \lambda_4 = 0} - \frac{1}{4} \sigma(f_1,f_2) \sigma(f_3,f_4)
\end{align*}
Using now the quasi-free condition:
\begin{equation*}
    \omega\big( e^{i\sigma_{\lambda_1 f_1 + \lambda_2 f_2}} \big) = e^{-\mu(\lambda_1 f_1 + \lambda_2 f_2, \lambda_1 f_1 + \lambda_2 f_2)/2}
\end{equation*}
Computing all the derivatives one gets:
\begin{equation*}
    \omega(\sigma_{1},\dots, \sigma_{4}) = w(f_1,f_2)w(f_3,f_4) + w(f_1,f_3)w(f_2,f_4) + w(f_1,f_4)w(f_2,f_3)
\end{equation*}
That is an example showing the equivalence.
\end{ex}
\vspace{10mm}

As mentioned at the beginning, together with a notion of normalization, we also want a notion of convergence over the space of linear functionals. Therefore, to end the section, let us discuss how to define a topology on $\mathfrak{A}^*$ called the \textit{weak $*$-topology}. We will need it in the next chapter to argue that the physics is encoded in the algebra and not in the Hilbert space on which the algebra is represented, giving in this way the main arguments in favour of the Algebraic approach.
\begin{defn}
For any set of operators $A_1,\dots,A_n \in \mathfrak{A}$, define a seminorm\footnote{A seminorm has all the properties of a norm, except that $\sigma_{A_1, \dots, A_n}(\phi) = 0$ does not imply $\phi=0$} $\sigma_{A_1, \dots, A_n}$ on $\mathfrak{A}^*$ by:
\begin{equation*}
    \sigma_{A_1,\dots,A_n}(\phi) = \sup\{ |\phi(A_k)| : k = 1,\dots, n \}
\end{equation*}
\end{defn}
\begin{rem}
This is a seminorm, as the states are not assumed to be faithful
\end{rem}
Then, the weak $*$-topology on $\mathfrak{A}^*$ is defined as the topology generated by the open neighborhoods:
\begin{equation*}
    \mathcal{U}(\phi;A_1,\dots,A_n;\epsilon) = \{\phi' \in \mathfrak{A}^*: \sigma_{A_1,\dots,A_n}(\phi-\phi')<\epsilon \}
\end{equation*}
for all $\phi  \in \mathfrak{A}^*$, $A_1,\dots,A_n \in \mathfrak{A}$ and $\epsilon >0$. In particular a sequence $\phi_k \in \mathfrak{A}^*$ ($k\in \mathbb{N}$) converges to $\phi \in \mathfrak{A}^*$ in the weak $*$-topology iff:
\begin{equation*}
    \phi_k(A) \to \phi(A) \hspace{15pt} \forall A \in \mathfrak{A}
\end{equation*}
Such a topology has a strikingly concrete interpretation. Suppose that our system is initially in the configuration $\omega \in \mathfrak{G}(\mathfrak{A})$. As the accuracy of experiments is limited and as the number of experiments we can perform is finite, we can practically determine the system to be in the state $\omega$ just up to a small neighborhood. This neighborhood is exactly $\mathcal{U}(\omega;A_1,\dots,A_n;\epsilon)$ if we interpet $A_i$ for $i=1,\dots,n$ as the finite measures that we perform and $\epsilon$ the limited accuracy in the knowledge of the measurement to agree with the value of $\omega(A_i)$ for all the $i=1,\dots,n$. This means that $\omega$ is physically equivalent to all other states belonging to the $*$-weak neighborhood as, by performing measurements on the system, we are not able to distinguish states belonging to $\mathcal{U}(\omega;A_1,\dots,A_n;\epsilon)$.

\subsection{GNS theorem and Fock space representation}
The algebraic approach gives a way to abstractly investigate the theory, but for practical computations we need to represent these algebras on some Hilbert space, recasting the standard approach. This section aims at presenting the fundamental theorems leading to such a representation.\\\\
The following is the theorem ensuring the existence of a Hilbert space representation that, by the name of the auhors, we will refer to as \textit{GNS theorem}.
\begin{thm}[\textbf{Gelfand-Naimark-Segal}]\label{thm: GNS}
Let $\mathfrak{A}$ be a $C^*$-algebra with unit and let $\omega \in \mathfrak{G}(\mathfrak{A})$. Then:
\begin{itemize}
    \item[\textbf{a)}] There exist a triple $(\mathcal{H}_{\omega}, \pi_{\omega}, \Omega_{\omega})$, where: $\mathcal{H}_{\omega}$ is a Hilbert space, $\pi_{\omega}: \mathfrak{A} \to \mathfrak{B}(\mathcal{H}_{\omega})$ a $\mathfrak{A}$-representation on $\mathcal{H}_{\omega}$ and $\Omega_{\omega} \in \mathcal{H}_{\omega}$, such that:
    \begin{enumerate}
        \item $\Omega_{\omega}$ is cyclic for $\pi_{\omega}$, that means that $\mathcal{D}_{\omega} := \pi_{\omega}(\mathfrak{A}) \Omega_{\omega}$ is a dense subset of $\mathcal{H}_{\omega}$
        \item For every $A \in \mathfrak{A}$ we have: $\omega(A) = \braket{\Omega_{\omega}}{\pi_{\omega}(A) \Omega_{\omega}}$
    \end{enumerate}
    \item[\textbf{b)}] If $((\mathcal{H}'_{\omega}, \pi'_{\omega}, \Omega'_{\omega}))$ satisfies $1.$ and $2.$, then there must exist a unitary operator $U: \mathcal{H}_{\omega} \to \mathcal{H}'_{\omega}$, such that $\Omega'_{\omega} = U \Omega_{\omega}$ and for any $A \in \mathfrak{A}$:
    \begin{equation*}
        \pi'_{\omega}(A) = U \pi_{\omega}(A) U^{-1}
    \end{equation*}
\end{itemize}
\end{thm}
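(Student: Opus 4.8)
The plan is to run the classical Gelfand--Naimark--Segal construction, building $\mathcal{H}_\omega$ directly out of the algebra equipped with the sesquilinear form induced by $\omega$. First I would equip $\mathfrak{A}$ with the form $\langle A,B\rangle_\omega := \omega(A^*B)$; positivity of $\omega$ makes it positive semi-definite, whence the Cauchy--Schwarz inequality $|\omega(A^*B)|^2 \le \omega(A^*A)\,\omega(B^*B)$. Set $\mathcal{N}_\omega := \{A\in\mathfrak{A} : \omega(A^*A)=0\}$; Cauchy--Schwarz shows it is a linear subspace, and even a left ideal, since for $N\in\mathcal{N}_\omega$ one has $|\omega(N^*A^*AN)|^2 \le \omega(N^*N)\,\omega((A^*AN)^*A^*AN)=0$. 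On $\mathfrak{A}/\mathcal{N}_\omega$ the form descends to a bona fide inner product $\langle [A],[B]\rangle := \omega(A^*B)$ --- well defined because $\omega(N^*B)=0$ for every $N\in\mathcal{N}_\omega$, again by Cauchy--Schwarz --- which is positive definite by the very definition of $\mathcal{N}_\omega$. I would let $\mathcal{H}_\omega$ be its completion and put $\Omega_\omega := [\mathbb{1}]$.

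Next comes the representation, defined by left multiplication, $\pi_\omega(A)[B] := [AB]$; this is well defined precisely because $\mathcal{N}_\omega$ is a left ideal, and it is visibly linear, multiplicative and unital. The step I expect to be the main obstacle is the boundedness of $\pi_\omega(A)$, which is where the $C^*$-identity --- rather than mere $*$-Banach structure --- enters: from $\|A\|^2\mathbb{1} - A^*A \ge 0$ in $\mathfrak{A}$ one gets $\omega\big(B^*(\|A\|^2\mathbb{1}-A^*A)B\big)\ge 0$, giving $\|\pi_\omega(A)[B]\|^2 = \omega(B^*A^*AB) \le \|A\|^2\,\|[B]\|^2$, so $\pi_\omega(A)$ extends uniquely to a bounded operator on $\mathcal{H}_\omega$ with $\|\pi_\omega(A)\|\le\|A\|$. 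The $*$-property $\pi_\omega(A^*)=\pi_\omega(A)^*$ follows from $\langle [C],\pi_\omega(A)[B]\rangle = \omega(C^*AB) = \langle\pi_\omega(A^*)[C],[B]\rangle$ on the dense domain. Cyclicity of $\Omega_\omega$ holds because $\pi_\omega(\mathfrak{A})\Omega_\omega = \{[A] : A\in\mathfrak{A}\}$ is dense by construction, and the reproducing identity is simply $\langle\Omega_\omega,\pi_\omega(A)\Omega_\omega\rangle = \langle[\mathbb{1}],[A]\rangle = \omega(A)$.

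\textbf{Part b) (uniqueness).} Given a second triple $(\mathcal{H}'_\omega,\pi'_\omega,\Omega'_\omega)$ satisfying properties 1 and 2, I would define $U\,\pi_\omega(A)\Omega_\omega := \pi'_\omega(A)\Omega'_\omega$ on the dense subspace $\pi_\omega(\mathfrak{A})\Omega_\omega$. Both well-definedness and isometry follow at once from the single computation $\|\pi'_\omega(A)\Omega'_\omega\|^2 = \langle\Omega'_\omega,\pi'_\omega(A^*A)\Omega'_\omega\rangle = \omega(A^*A) = \|\pi_\omega(A)\Omega_\omega\|^2$, using property 2 for both triples; as both ranges are dense, $U$ extends to a unitary $\mathcal{H}_\omega\to\mathcal{H}'_\omega$. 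Finally $U\pi_\omega(A)\pi_\omega(B)\Omega_\omega = U\pi_\omega(AB)\Omega_\omega = \pi'_\omega(AB)\Omega'_\omega = \pi'_\omega(A)\,U\pi_\omega(B)\Omega_\omega$ gives, by density and continuity, $\pi'_\omega(A) = U\pi_\omega(A)U^{-1}$, while $U\Omega_\omega = U\pi_\omega(\mathbb{1})\Omega_\omega = \pi'_\omega(\mathbb{1})\Omega'_\omega = \Omega'_\omega$. Everything in this part is bookkeeping; the only genuinely substantive ingredient in the whole proof is the $C^*$-estimate underlying the boundedness of $\pi_\omega$.
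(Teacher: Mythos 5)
Your construction is the standard GNS argument and is correct in all its steps (the left-ideal property of $\mathcal{N}_\omega$ via Cauchy--Schwarz, the $C^*$-estimate $\omega(B^*A^*AB)\le\|A\|^2\omega(B^*B)$ for boundedness, and the density/isometry argument for uniqueness). The paper does not give its own proof but simply cites Theorem $14.4$ of \cite{Moretti:2013cma}, which proceeds by exactly this construction, so your proposal matches the intended argument.
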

\begin{proof}
See Theorem $14.4$ in \cite{Moretti:2013cma}
\end{proof}
\begin{rem}
There's a version of the GNS theorem also for unital $*$-algebras $\mathfrak{C}$, in which case the elements of the algebra are represented as unbounded operators over the Hilbert space with same domain of definition $D_1 := \pi_{\omega}(\mathfrak{C}) \Omega_{\omega}$. The explicit statement and proof of it can be found in \cite{Moretti:2013cma} Theorem $14.24$. In particular, this version of the theorem, provides a representation of elements of a $CCR$ algebra (e.g. the one of Bosonic fields) as unbounded operators over a Hilbert space $\mathcal{H}_{\omega}$.
\end{rem}
\begin{rem}
Starting from an abstract algebra and performing a \textit{GNS construction}, one can always get a von Neumann algebra on a Hilbert space. Let $\mathfrak{A}$ be an abstract $C^*$-algebra and consider $\omega \in \mathfrak{G}(\mathfrak{A})$. Performing the GNS, we get a subset $\pi_{\omega}(\mathfrak{A}) \subset \mathfrak{B}(\mathcal{H}_{\omega})$ that is a $*$-subalgebra of bounded operators. However, we can always take its double commutant $\pi_{\omega}(\mathfrak{A})''$ making it closed in the uniform topology and thus a von Neumann algebra. In the following, without mentionig explicitly these steps every time, we will say that any abstract $C^*$-algebra is represented as a von Neumann algebra via its GNS construction.
\end{rem}
\begin{rem}
For our purposes, all representations in what follows are considered to be \textit{faithful} i.e. the map $\pi_{\omega}$ is injective. This is not a loss of generality, as also in the case of non faithful representation $\pi_{\omega}$ we can always define $\Tilde{\mathfrak{A}} := \mathfrak{A} \backslash \ker(\pi_{\omega})$ and start with this other algebra instead.
\end{rem}
\begin{rem}
One can show that a state $\omega$ is pure iff the corresponding GNS repesentation is irreducible, i.e., $\mathcal{H}_{\omega}$ does not have subspaces that are invariant under $\pi_{\omega}(\mathfrak{A})$ except for $\mathcal{H}_{\omega}$ itself and $\{ 0\}$.
\end{rem}
To abbreviate the notation, when referring to the GNS construction of a state $\omega$ on an algebra $\mathfrak{A}$ we refer to the uniquely determined, up to unitary equivalence, triple $(\mathcal{H}_{\omega}, \pi_{\omega}, \Omega_{\omega})$.

\subsubsection{Equivalence of representations}
Once a representation is specified, we ask how different ones are related. For that purpose, we start identifying a specific subset of $\mathfrak{G(A)}$
\begin{defn}
Let $\mathfrak{A}$ be a $C^*$-algebra and $\pi: \mathfrak{A} \to \mathfrak{B}(\mathcal{H}_{\pi})$ a representation of it over a Hilbert space $\mathcal{H}_{\pi}$. Then, a state $\omega \in \mathfrak{G(A)}$ is said to be normal with respect to the representation $\pi$ or $\pi$-normal if there's a density matrix $\rho_{\omega} \in \mathfrak{B}(\mathcal{H}_{\pi})$ such that $\omega(\cdot) = \Tr(\rho_{\omega} \pi(\cdot))$. The set of all $\pi$-normal states, denoted $\mathfrak{G}^{(\pi)} \mathfrak{(A)}$ is a convex subset of $\mathfrak{G(A)}$ and is called the folium of $\pi$.\\
Finally, any normal state of the form:
\begin{equation*}
    \omega(\cdot) = \Tr(\ketbra{\Psi}{\Psi} \pi(\cdot))
\end{equation*}
with unit vector $\ket{\Psi} \in \mathcal{H}_{\pi}$, is called a vector state of the representation $\pi$.
\end{defn}
As proven in Section III$.2.2$ of \cite{Haag:1992hx}, any state in $\mathfrak{G(A)}$ can be approximated with arbitrary accuracy in the weak $*$- topology by a sequence of normal states.\\\\
With this notion we can now define different levels of equivalence between representations:
\begin{defn}
Let $\mathfrak{A}$ be a $C^*$-algebra, and let $\pi_1: \mathfrak{A} \to \mathfrak{B}(\mathcal{H}_1)$ and $\pi_2: \mathfrak{A} \to \mathfrak{B}(\mathcal{H}_2)$ be two different representations. We say that:
\begin{itemize}
    \item $\pi_1$ is uinitarily equivalent to $\pi_2$ ($\pi_1 \simeq \pi_2$) if it exist a unitary $U: \mathcal{H}_1 \to \mathcal{H}_2$ such that:
    \begin{equation*}
        U \pi_1(A) U^* = \pi_2(A) \hspace{15pt} \forall A \in \mathfrak{A}
    \end{equation*}
    \item $\pi_1$ is quasi-equivalent to $\pi_2$ if $\mathfrak{G}^{(\pi_1)} \mathfrak{(A)} = \mathfrak{G}^{(\pi_2)} \mathfrak{(A)}$
    \item $\pi_1$ is physically-equivalent to $\pi_2$ if for every state $\omega_1 \in \mathfrak{G}^{(\pi_1)} \mathfrak{(A)}$ and every weak $*$-neighborhood $\mathcal{U}(\omega_1; A_1, \dots, A_n; \epsilon)$ of $\omega_1$ there exist a state $\omega_2 \in \mathfrak{G}^{(\pi_2)}(\mathfrak{A})$ such that $\omega_2 \in \mathcal{U}(\omega_1; A_1, \dots, A_n; \epsilon)$. Equivalently, the folium of $\pi_1$ is physically indistinguishable from that of $\pi_2$.
\end{itemize}
\end{defn}
The natural question that arises now is: which equivalence do we have for GNS representations of different states over the same algebra? The most important result, in finite dimensional QM, is the \textit{Stone-von Neumann theorem}:
\begin{thm}[\textbf{Stone-von Neumann}]
Let $\mathcal{H}$ be a complex Hilbert space and let $(\mathbf{P},\sigma)$ be a symplectic vector space of real dimension $2n$. Suppose that we have on $\mathcal{H}$ a representation of a Weyl $C^*$-algebra as a subset of bounded operators $\pi(C\mathcal{W}(\mathbf{P},\sigma)) \subset \mathfrak{B}(\mathcal{H})$ such that:
\begin{itemize}
    \item $\mathcal{H}$ is irreducible under $\pi(C\mathcal{W}(\mathbf{P},\sigma))$
    \item For every $\mathbf{x} \in \mathbf{P}$ we have that the representation is strongly continuous:
    \begin{equation*}
        s-\lim_{s \to 0} \pi(W(s\mathbf{x})) = \pi(W(0))
    \end{equation*}
\end{itemize}
Then, in a given standard symplectic basis of $\mathbf{P}$ for which $\mathbf{x} \in \mathbf{P}$ is determined by $(\mathbf{t}^{\mathbf{x}}, \mathbf{u}^{\mathbf{x}}) \in \mathbb{R}^n \times \mathbb{R}^n$, there exist a Hilbert space isomorphism $S: \mathcal{H} \to L^2(\mathbb{R}^n,dx)$ (where $dx$ denotes the Lebesgue measure) such that for any $\mathbf{x}$:
\begin{equation*}
    S \pi(W(\mathbf{x})) S^{-1} := \exp{i \sum_{k=1}^n t_k^{(\mathbf{x})}q_k + u_k^{(\mathbf{x})}p_k}
\end{equation*}
Where $q_k, p_k: \mathcal{S}(\mathbb{R}^n) \to L^2(\mathbb{R}^n,dx)$ are:
\begin{align*}
    (q_k \psi)(\mathbf{x}) &= x_k \psi(\mathbf{x})\\
    (p_k \psi)(\mathbf{x}) &= -i \frac{\partial \psi}{\partial x_k}(\mathbf{x})
\end{align*}
As a consequence, $\mathcal{H}$ must be separable as $L^2(\mathbb{R}^n,dx)$ is.
\end{thm}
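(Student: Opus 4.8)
The plan is to run the classical argument organized around the \emph{Fock-vacuum projection} obtained by averaging the Weyl unitaries against a Gaussian. Fix the standard symplectic basis appearing in the statement, so that $\mathbf{x}\in\mathbf{P}$ is coordinatized by $(\mathbf{t}^{(\mathbf{x})},\mathbf{u}^{(\mathbf{x})})\in\mathbb{R}^n\times\mathbb{R}^n$, and abbreviate $\|\mathbf{x}\|^2:=|\mathbf{t}^{(\mathbf{x})}|^2+|\mathbf{u}^{(\mathbf{x})}|^2$. Since $\pi$ is strongly continuous and every $\pi(W(\mathbf{x}))$ is unitary, the map $\mathbf{x}\mapsto\pi(W(\mathbf{x}))$ is a norm-bounded, strongly continuous operator-valued function on $\mathbb{R}^{2n}$; hence, for a suitable normalization constant $c_n>0$, the strong (Bochner) integral
\begin{equation*}
    P:=c_n\int_{\mathbb{R}^{2n}}e^{-\frac14\|\mathbf{x}\|^2}\,\pi(W(\mathbf{x}))\,d^nt\,d^nu
\end{equation*}
converges and defines a bounded operator; using $\pi(W(\mathbf{x}))^*=\pi(W(-\mathbf{x}))$ and the symmetry of the Gaussian one checks at once that $P=P^*$.

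The decisive computational step is the identity
\begin{equation*}
    P\,\pi(W(\mathbf{y}))\,P=e^{-\frac14\|\mathbf{y}\|^2}\,P\qquad\text{for every }\mathbf{y}\in\mathbf{P},
\end{equation*}
which I would prove by inserting the integral expressions for the two outer copies of $P$, commuting $\pi(W(\mathbf{y}))$ through the remaining Weyl unitaries via $W(f)W(f')=e^{-\frac i2\sigma(f,f')}W(f+f')$, and evaluating the ensuing Gaussian integrals by completing the square, with $c_n$ pinned down by requiring that the scalar produced at $\mathbf{y}=0$ be $1$. Taking $\mathbf{y}=0$ gives $P^2=P$, so $P$ is an orthogonal projection; and $P\neq0$ follows from a coherent-state resolution of the identity of the form $\kappa_n\int_{\mathbb{R}^{2n}}\pi(W(\mathbf{x}))\,P\,\pi(W(\mathbf{x}))^*\,d^nt\,d^nu=\mathbb{1}$, obtained by the same kind of Gaussian manipulation, since $P=0$ would force $\mathbb{1}=0$.

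Now fix a unit vector $\Omega\in P\mathcal{H}$. The identity above yields $\langle\Omega,\pi(W(\mathbf{y}))\Omega\rangle=e^{-\frac14\|\mathbf{y}\|^2}$ for all $\mathbf{y}$, i.e. $\Omega$ reproduces exactly the Fock-vacuum expectation values realized in the Schrödinger representation $\pi_S$, $\pi_S(W(\mathbf{x})):=e^{i\sum_{k=1}^n(t^{(\mathbf{x})}_k q_k+u^{(\mathbf{x})}_k p_k)}$, by the normalized Gaussian $\Omega_0\in L^2(\mathbb{R}^n,dx)$ (for which $\langle\Omega_0,\pi_S(W(\mathbf{y}))\Omega_0\rangle=e^{-\frac14\|\mathbf{y}\|^2}$ by a direct Gaussian integral). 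Consequently, by the Weyl relations, all inner products $\langle\pi(W(\mathbf{x}))\Omega,\pi(W(\mathbf{y}))\Omega\rangle$ coincide with the corresponding ones computed from $\Omega_0$ in $\pi_S$. The closed linear span of $\{\pi(W(\mathbf{x}))\Omega:\mathbf{x}\in\mathbf{P}\}$ is a nonzero $\pi(C\mathcal{W}(\mathbf{P},\sigma))$-invariant subspace, hence equals $\mathcal{H}$ by irreducibility, while the analogous span in $L^2(\mathbb{R}^n,dx)$ is dense; therefore the assignment $\pi(W(\mathbf{x}))\Omega\mapsto\pi_S(W(\mathbf{x}))\Omega_0$ extends to a Hilbert-space isomorphism $S:\mathcal{H}\to L^2(\mathbb{R}^n,dx)$ with $S\pi(W(\mathbf{x}))S^{-1}=\pi_S(W(\mathbf{x}))$. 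Applying Stone's theorem to the strongly continuous one-parameter groups $s\mapsto\pi_S(W(s\mathbf{x}))$ identifies the infinitesimal generators with $\sum_k t^{(\mathbf{x})}_k q_k+u^{(\mathbf{x})}_k p_k$, which is the asserted formula; and $\mathcal{H}$ is separable because $L^2(\mathbb{R}^n,dx)$ is.

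The main obstacle I expect is purely technical: making the operator-valued Gaussian integrals rigorous (strong convergence, Fubini-type interchanges, pulling the unitaries $\pi(W(\mathbf{y}))$ through the integral sign) and carefully tracking the symplectic cocycle phases in the ``complete the square'' computations that produce both the projection identity and the resolution of the identity. Everything downstream of these two identities is soft functional analysis: irreducibility together with the essential uniqueness of the representation data attached to the Fock-vacuum state.
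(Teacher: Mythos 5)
Your argument is correct in outline and is the classical von Neumann proof of the theorem; the paper itself offers no proof at all, only a pointer to Theorem $11.43$ of \cite{Moretti:2013cma}, and the argument given there is of essentially the same type (construction of the vacuum projector by Gaussian averaging of the Weyl unitaries, identification of the resulting state with the Fock/Schr\"odinger vacuum, and transport of the representation via the coherent vectors). So you are not taking a different route from the literature the paper relies on; you are supplying the argument the paper omits. Two points deserve a little more care than your sketch gives them. First, the Bochner integral defining $P$ requires \emph{joint} strong continuity of $\mathbf{x}\mapsto\pi(W(\mathbf{x}))$ on $\mathbb{R}^{2n}$, whereas the hypothesis only gives strong continuity along each ray at $s=0$; one upgrades this by writing $W(\mathbf{x}+\mathbf{h})$ as a phase times $W(\mathbf{h})W(\mathbf{x})$ and decomposing $\mathbf{h}$ along the symplectic basis, using uniform boundedness of the unitaries. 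Second, your justification that $P\neq 0$ via the ``resolution of the identity'' $\kappa_n\int\pi(W(\mathbf{x}))P\pi(W(\mathbf{x}))^*\,d\mathbf{x}=\mathbb{1}$ is sound but only if that integral is interpreted as a regularized (improper) one: the integrand has constant operator norm, so the integral does not converge absolutely. Inserting a regulator $e^{-\epsilon\|\mathbf{x}\|^2}$, substituting the Gaussian expression for $P$, and performing the $\mathbf{x}$-integral first produces an approximate delta concentrating at $\mathbf{0}$, whence the regularized integrals converge weakly to a nonzero multiple of $\mathbb{1}$ \emph{independently of whether $P$ vanishes}; since they would all be $0$ if $P=0$, the contradiction is genuine and non-circular. (Equivalently, and perhaps more cleanly, $P=0$ would say that the Fourier transform of the continuous $L^1$ function $\mathbf{x}\mapsto e^{-\frac14\|\mathbf{x}\|^2}\langle\phi,\pi(W(\mathbf{x}))\psi\rangle$ vanishes identically, forcing $\langle\phi,\psi\rangle=0$ for all $\phi,\psi$.) With those two points made precise, the rest of your argument --- the matching of the two-point functions, totality of the coherent vectors on both sides, and Stone's theorem to identify the generators --- is standard and complete.
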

\begin{proof}
See Theorem $11.43$ in \cite{Moretti:2013cma}
\end{proof}
\begin{rem}
This theorem shows that any irreducible representation on different Hilbert spaces $\mathcal{H}$, i.e. with respect to different pure states in the algebraic formalism, of the Weyl algebra on a finite dimensional symplectic space is isomorphic to a Weyl algebra on $L^2(\mathbb{R}^n,dx)$. This means, that the different representations themselves are isomorphic. Therefore, the representation is unique up to unitary equivalence that, in the category of equivalences introduced above, means that all the irreducible representations are unitarily equivalent.
\end{rem}
However, the Stone-von Neumann theorem holds just for finite dimensional symplectic spaces and, when we deal with QFT, the symplectic space is that of  classical solutions of field equations: that is infinite dimensional. So, we expect the existence of unitarily inequivalent representations. This might give as a consequence that some state, representing a physically accessible configuration for the system, might not be represented by a vector in the chosen Hilbert space associated to another state via a GNS.\\
Of course this is an issue of the standard approach, where one starts with just a fixed Hilbert space i.e. one works in a specified GNS representation. Although, one of the strengths of AQFT is that we do not choose a representation in the first place and thus we can treat all states on equal footing and thus study this issue in $\mathfrak{A}^*$. In fact, one can show that, despite the absence of unitary equivalence of different representations, we always have \textit{physical equivalence}, i.e. the physics is independent from the choice of the representation. The following theorem holds:
\begin{thm}
Any two faithful representations of the algebra of observables are physically equivalent
\end{thm}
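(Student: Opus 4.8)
The plan is to deduce the statement from a weak $*$-density property of folia in the full state space. Recall that, by definition, $\pi_1$ is physically equivalent to $\pi_2$ exactly when every $\pi_1$-normal state can be approximated, inside every weak $*$-neighbourhood, by a $\pi_2$-normal state (and conversely). So it is enough to prove the following: if $\pi \colon \mathfrak{A} \to \mathfrak{B}(\mathcal{H}_\pi)$ is a \emph{faithful} representation of the $C^*$-algebra $\mathfrak{A}$, then its folium $\mathfrak{G}^{(\pi)}(\mathfrak{A})$ is weak $*$-dense in $\mathfrak{G}(\mathfrak{A})$. Once this is in hand, the theorem is immediate: given two faithful representations $\pi_1, \pi_2$, one takes an arbitrary $\omega_1 \in \mathfrak{G}^{(\pi_1)}(\mathfrak{A}) \subset \mathfrak{G}(\mathfrak{A})$ and an arbitrary weak $*$-neighbourhood $\mathcal{U}(\omega_1; A_1,\dots,A_n;\epsilon)$; by density of $\mathfrak{G}^{(\pi_2)}(\mathfrak{A})$ in $\mathfrak{G}(\mathfrak{A})$ this neighbourhood contains some $\omega_2 \in \mathfrak{G}^{(\pi_2)}(\mathfrak{A})$, which is precisely the condition for physical equivalence. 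Since the argument is symmetric in $1 \leftrightarrow 2$, the two representations are physically equivalent.

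For the density statement I would invoke the approximation result already quoted above — Section III.$2.2$ of \cite{Haag:1992hx}, i.e. Fell's theorem — which says that every state on $\mathfrak{A}$ is a weak $*$-limit of $\pi$-normal states. The point I would make precise is that \emph{faithfulness} is exactly the hypothesis that makes this work: a nonzero $\ker \pi$ would obstruct approximating states that do not vanish on that ideal. To apply the cited version (typically phrased for the universal representation $\pi_u$) to an arbitrary faithful $\pi$, I would factor $\pi$ through $\pi_u$: faithfulness yields a normal $*$-isomorphism between the von Neumann algebras $\pi(\mathfrak{A})''$ and $\pi_u(\mathfrak{A})''$ intertwining the two representations, and this isomorphism identifies the folia $\mathfrak{G}^{(\pi)}(\mathfrak{A})$ and $\mathfrak{G}^{(\pi_u)}(\mathfrak{A})$; density of the latter in $\mathfrak{G}(\mathfrak{A})$ is then exactly the quoted statement.

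Having set that up, I would carry out the steps in the order: (1) state and justify the sharp form ``faithful $\Rightarrow$ folium weak $*$-dense''; (2) apply it simultaneously to $\pi_1$ and $\pi_2$; (3) unwind the definitions of the weak $*$-topology and of the seminorms $\sigma_{A_1,\dots,A_n}$ to check that the $\omega_2$ produced in step (2) indeed lies in the prescribed neighbourhood of $\omega_1$; (4) note symmetry to obtain full physical equivalence. Step (3) is purely mechanical given the definition of $\mathcal{U}(\,\cdot\,;A_1,\dots,A_n;\epsilon)$ recalled earlier.

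The main obstacle is step (1): one must be careful that it is \emph{faithfulness}, and not merely being a representation, that is the right hypothesis, and that the cited theorem of \cite{Haag:1992hx} is transported correctly from the distinguished (universal) model to the two arbitrary faithful representations in play. I expect that once the folium of a faithful representation is known to be weak $*$-dense, the remainder is a short formal argument; no estimates beyond the definition of the weak $*$-topology are needed.
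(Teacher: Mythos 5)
Your overall architecture is the same as the paper's: the paper does not prove the theorem but refers to Fell's theorem (via \cite{Haag:1963dh} and \cite{Fell}), whose content is exactly your step (1), namely that the folium of a faithful representation is weak $*$-dense in $\mathfrak{G}(\mathfrak{A})$. Your steps (2)--(4), deducing physical equivalence from that density and from the definition of the neighbourhoods $\mathcal{U}(\omega;A_1,\dots,A_n;\epsilon)$, are correct and are the (easy) part the paper leaves implicit.

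There is, however, a genuine flaw in how you propose to justify step (1). You claim that faithfulness of $\pi$ yields a \emph{normal} $*$-isomorphism between $\pi(\mathfrak{A})''$ and $\pi_u(\mathfrak{A})''$ which ``identifies the folia $\mathfrak{G}^{(\pi)}(\mathfrak{A})$ and $\mathfrak{G}^{(\pi_u)}(\mathfrak{A})$''. That statement asserts that every faithful representation is quasi-equivalent to the universal one, which is false: faithfulness gives a $*$-isomorphism $\pi_u(\mathfrak{A}) \to \pi(\mathfrak{A})$ at the $C^*$-level, but its normal extension $\pi_u(\mathfrak{A})'' \to \pi(\mathfrak{A})''$ is in general only a surjection with a nontrivial central kernel, and the folium of $\pi$ consists only of those states supported on the complementary central projection. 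Since $\mathfrak{G}^{(\pi_u)}(\mathfrak{A})$ is all of $\mathfrak{G}(\mathfrak{A})$ (every state is a vector state of $\pi_u$), equality of folia would force every state to be $\pi$-normal, which already fails for a faithful irreducible representation of an infinite-dimensional $C^*$-algebra. What is true, and what you actually need, is only that the weak $*$-\emph{closures} of the two folia coincide; this is precisely the content of Fell's theorem in its standard form, which is stated for an arbitrary representation $\pi$ and says that the weak $*$-closure of $\mathfrak{G}^{(\pi)}(\mathfrak{A})$ is the set of states annihilating $\ker\pi$. Invoking that version directly (with $\ker\pi = \{0\}$ by faithfulness) repairs the argument without any detour through the universal representation.
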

\begin{proof}
I refer to Theorem $2.2.13$ in \cite{Haag:1963dh} for further discussion and the, there cited, original paper \cite{Fell} for a proof.
\end{proof}
The conclusion is that, as one should expect, the representation is just a convenient choice for practical computations, but it does not have any real physical implication. We will make use of this in a concrete example in Chapter $4$. For this reason, if we want to study some process related to a state $\omega \in \mathfrak{G}(\mathfrak{A})$ is more convenient to choose that representation where such a state is normal (as they are dense in $\mathfrak{G}(\mathfrak{A})$), i.e. is represented by a density matrix. What we have presented, shows that the physically relevant informations are encoded in the algebra $\mathfrak{A}$ rather than in the Hilbert space representation.

\subsubsection{Fock representation}
Finally, we want to study under which conditions the GNS construction leads to a Fock space representation:
\begin{thm}\label{thm: quasi}
We distinguish:
\begin{enumerate}
    \item For $\omega$ quasifree, defined on the CCR algebra $\mathfrak{A}_{(\mathbf{P},\sigma)}$,
    there's a unique (up to unitary equivalence) Hilbert space $\mathfrak{h}$ and a real linear map $K: \mathbf{P} \to \mathfrak{h}$ such that:
    \begin{align*}
        \overline{K \mathbf{P} + i K \mathbf{P}} &= \mathfrak{h}\\
        \mu(f,h) &= \Re(\braket{Kf}{Kh}_{\mathfrak{h}})\\
        \sigma(f,h) &= 2\Im(\braket{Kf}{Kh}_{\mathfrak{h}})
    \end{align*}
    The GNS construction applied to $\omega$ yields the symmetric Fock space $\mathcal{H}$ built upon $\mathfrak{h}$, with the GNS vector given by the Fock space vacuum $\Omega$ and:
    \begin{equation*}
        \pi_{\omega}(A(f)) = a_B^{*}(Kf) + a_B(K\overline{f})
    \end{equation*}
    With $a_B^*(\psi), a_B(\psi)$ the creation annihilation operators for $\psi \in \mathfrak{h}$, fulfilling:
    \begin{align*}
        a_B(\psi) \Omega &= 0\\
        [a_B(\psi), a_B^*(\phi)] &= \braket{\psi}{\phi}_{\mathfrak{h}}
    \end{align*}
    \item Let $\mathfrak{A}_{(\mathbf{K},s)}$ be a self-dual CAR algebra. If $\mathbf{K}$ has even finite dimension or is infinite dimensional, there is a \textit{basis projection}\footnote{Such a basis projection $P$ is motivated by the physical model we consider. Usually is the projection onto the positive energy eigenmodes of the one particle Hamiltonian of the system.} $P$ on $\mathbf{K}$ such that $\Gamma P \Gamma = 1 - P$. We have $P: \mathbf{K} \to \mathfrak{h} := P\mathbf{K}$ and we equip $\mathfrak{h}$ with the restriction of the inner product $s(\cdot, \cdot)$. Then, we can define a quasifree state $\omega_P$ from $P$ which GNS construction yields the antisymmetric Fock space $\mathcal{H}$ built upon $\mathfrak{h}$, with the GNS vector given by the Fock space vacuum $\Omega$. Moreover:
    \begin{equation*}
        \pi_{P}(B(f)) = a_F^*(Pf) + a_F(P\Gamma f)
    \end{equation*}
    Where $f \in \mathbf{K}$ and $B(f) \in \mathfrak{A}_{(\mathbf{K},s)}$ and the creation and annihilation operators are such that:
    \begin{align*}
        a_F(\psi) \ket{\Omega} &= 0\\
        [ a_F(\psi), a_F^*(\phi) ]_+ &= \braket{\psi}{\phi}_{\mathfrak{h}}
    \end{align*}
\end{enumerate}
\end{thm}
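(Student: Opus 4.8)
The two statements are the classical one-particle-structure theorems for quasifree states (Kay--Wald on the bosonic side, Araki on the fermionic side); I would prove each by first manufacturing, out of the two-point function alone, the one-particle Hilbert space together with the embedding map, then building the Fock space and reading off the GNS data, invoking Theorem~\ref{thm: GNS}(b) for uniqueness at the end.

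\textbf{Item 1 (bosonic case).} Write the two-point function of the quasifree state $\omega$ on $\mathfrak{A}_{(\mathbf{P},\sigma)}$ as $w(f,h)=\omega(A(f)A(h))=\mu(f,h)+\tfrac{i}{2}\sigma(f,h)$ with $\mu$ the real symmetric part introduced in the example above. Evaluating positivity of $\omega$ on suitable linear combinations of the generators (together with the CCR) gives $\mu(f,f)\ge 0$ and the Cauchy--Schwarz-type bound $\tfrac14\,\sigma(f,h)^{2}\le\mu(f,f)\,\mu(h,h)$. Hence $\mu$ is a real pre-inner product; pass to $\mathbf{P}/\ker\mu$ and complete to a real Hilbert space $(\mathbf{P}_{\mu},\mu)$. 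The bound lets $\sigma$ extend to a bounded skew form, represented by an operator $C$ with $C^{*}=-C$, $\|C\|\le 1$ and $\sigma(f,h)=2\,\mu(f,Ch)$; its polar decomposition $C=J\,|C|$, restricted to $\overline{\mathrm{ran}\,|C|}$, yields an orthogonal $J$ with $J^{2}=-1$. Equipping that subspace with $\braket{f}{h}_{\mathfrak{h}}:=\mu(f,h)+i\,\mu(f,Jh)$ defines $\mathfrak{h}$, with $K:\mathbf{P}\to\mathfrak{h}$ the canonical map; by construction $\mu=\Re\braket{K\cdot}{K\cdot}$, $\sigma=2\Im\braket{K\cdot}{K\cdot}$, and $K\mathbf{P}$ is dense over $\mathbb{C}$. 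Finally form the symmetric Fock space over $\mathfrak{h}$ with vacuum $\Omega$, put $\pi_{\omega}(A(f))=a_B^{*}(Kf)+a_B(K\overline{f})$, deduce the stated CCR from the $a_B$-relations, and verify by a Wick expansion that $\braket{\Omega}{\pi_{\omega}(A(f_1))\cdots\pi_{\omega}(A(f_n))\Omega}$ reproduces the quasifree sum over pairings; cyclicity of $\Omega$ is immediate and uniqueness is Theorem~\ref{thm: GNS}(b).

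\textbf{Item 2 (fermionic case).} First, a basis projection exists: since $\Gamma$ is an antiunitary involution with $(\Gamma f,\Gamma g)_{\mathbf{K}}=(g,f)_{\mathbf{K}}$ it is a conjugation, so when $\dim\mathbf{K}$ is even or infinite one may choose an orthonormal system $\{e_j,\Gamma e_j\}$ exhausting $\mathbf{K}$; taking $P$ to be the orthogonal projection onto $\overline{\mathrm{span}\{e_j\}}$ gives $\Gamma P\Gamma=1-P$. Set $\mathfrak{h}:=P\mathbf{K}$ with the restricted inner product and define $\omega_P$ through the two-point function $\omega_P(B(f)B(g))=(Pg,\Gamma f)_{\mathbf{K}}$, extended to all monomials by the quasifree (Pfaffian) rule; normalization is immediate, and positivity $\omega_P(C^{*}C)\ge 0$ follows from $0\le P\le 1$ via Araki's criterion on the antisymmetrized two-point matrix. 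Now build the antisymmetric Fock space over $\mathfrak{h}$ with vacuum $\Omega$ and set $\pi_P(B(f))=a_F^{*}(Pf)+a_F(P\Gamma f)$; using $\Gamma P\Gamma=1-P$ and the $a_F$-anticommutators one checks $[\pi_P(B(f)),\pi_P(B(g))]_{+}=(g,\Gamma f)_{\mathbf{K}}\mathbb{1}$, while linearity and the $*$-relation are manifest, so $\pi_P$ is a representation of $\mathfrak{A}_{(\mathbf{K},s)}$; that $\braket{\Omega}{\pi_P(\,\cdot\,)\Omega}=\omega_P$ follows from a Pfaffian computation, $\Omega$ is cyclic since the $a_F^{*}(Pf)$ ($f\in\mathbf{K}$) generate all of Fock space, and uniqueness is again Theorem~\ref{thm: GNS}(b).

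\textbf{Main obstacle.} On the bosonic side the one genuinely delicate step is the operator-theoretic construction of the complex structure $J$: one must control the degeneracy of $\mu$, carry out the polar decomposition of the skew operator $C$ on the completion, and secure $J^{2}=-1$ \emph{exactly} on the correct invariant subspace rather than merely $\|C\|\le 1$ — this is where non-degeneracy of $\sigma$ is used. On the fermionic side the crux is two-fold: producing the basis projection at all (the parity of $\dim\mathbf{K}$ is a real obstruction, since in odd finite dimension an unpaired $\Gamma$-self-conjugate mode prevents $\Gamma P\Gamma=1-P$), and establishing positivity of $\omega_P$ on the whole CAR algebra — not just at the two-point level — which is where Araki's original argument does the actual work.
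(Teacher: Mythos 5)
The paper itself does not prove this theorem: it only cites Kay--Wald for item~1 and Araki for item~2 (plus its own Lemma on basis polarizations for the construction of $\omega_P$), so there is no in-text argument to compare against line by line. Your fermionic half is essentially Araki's proof and is sound: producing $P$ from an orthonormal pairing $\{e_j,\Gamma e_j\}$ so that $\Gamma P\Gamma=1-P$ is the same device the paper uses later when it proves existence of a basis projection in Chapter~3, and checking the anticommutation relations of $a_F^*(Pf)+a_F(P\Gamma f)$ from $\Gamma P\Gamma=1-P$, together with Araki's positivity criterion for the quasifree extension, is the standard route (your two-point function differs from the paper's $\varphi(B^*(f)B(g))=(f,Sg)$ only by which slot is antilinear).

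The bosonic half, however, has a genuine gap. You write $\sigma(f,h)=2\mu(f,Ch)$ with $C$ $\mu$-antisymmetric and $\|C\|\le 1$, take the polar decomposition $C=J|C|$, and set $\braket{f}{h}_{\mathfrak h}:=\mu(f,h)+i\,\mu(f,Jh)$. This gives $2\Im\braket{Kf}{Kh}=2\mu(f,Jh)$, whereas the theorem demands $2\Im\braket{Kf}{Kh}=\sigma(f,h)=2\mu(f,J|C|h)$; the two agree only if $|C|$ acts as the identity on $\overline{\mathrm{ran}\,|C|}$, i.e.\ only if $C$ is itself a complex structure. That is exactly the \emph{pure} quasifree case and fails for the states the paper actually uses (quasifree KMS states, for which $\|Cf\|<\|f\|$ on a dense set even though $\sigma$ is non-degenerate --- so the non-degeneracy of $\sigma$ that you invoke as the saving hypothesis does not close the gap). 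Restricting to $\overline{\mathrm{ran}\,|C|}$ also silently discards $\ker C$, on which $\mu$ need not vanish, so $K$ would not even be defined on all of $\mathbf P$. The general construction is in fact simpler than yours: extend $w=\mu+\tfrac i2\sigma$ sesquilinearly to the complexification $\mathbf P^{\mathbb C}$, observe that it is positive semidefinite --- this is precisely where your bound $\tfrac14\sigma(f,h)^2\le\mu(f,f)\mu(h,h)$ enters, via $|\sigma(f,h)|\le\mu(f,f)+\mu(h,h)$ --- quotient by its null space and complete to obtain $\mathfrak h$, and let $K$ be the induced map on $\mathbf P\subset\mathbf P^{\mathbb C}$. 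Then $\braket{Kf}{Kh}_{\mathfrak h}=\mu(f,h)+\tfrac i2\sigma(f,h)$ yields both displayed identities at once, and $\overline{K\mathbf P+iK\mathbf P}=\mathfrak h$ is automatic. Your polar-decomposition route is appropriate only for the sub-case where one additionally wants $K\mathbf P$ alone to be dense, i.e.\ an irreducible Fock representation.
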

\begin{proof}
For details, see Lemma $3.3$ in \cite{Araki1968} together with Lemma $4.3.$ in \cite{Araki:1971id} for the CAR case and see \cite{KAY199149} Prop. $3.2$ and the discussion thereafter for the bosonic case. Moreover, for details regarding the construction of the state $\omega_P$ see Lemma \ref{Lemma 3.2}.
\end{proof}
\begin{rem}
The Fock space $\mathcal{H}$ mentioned in the above theorem is built as usual:
\begin{equation*}
    \mathcal{H} = \mathbb{C} \oplus \bigoplus_{n > 0} E_n \mathfrak{h}^{\otimes n}
\end{equation*}
Where $E_n$ is either the projector onto the totally symmetric subspace (CCR case) or onto the totally antisymmetric subspace (CAR case). The vacuum vector corresponds to $\Omega = (1,0,0, \dots)$. Finally, the creation and annihilation operators mentioned above are defined for $\chi \in \mathfrak{h}$ and for any $\ket{\Psi_n} = E_n( \ket{\psi_1} \otimes \dots \otimes \ket{\psi_n}) \in E_n \mathcal{H}^{\otimes n}$ as follows:
\begin{align*}
    a_B^*(\chi) \ket{\Psi_n} &= \sqrt{(n+1)} E_{n+1} \ket{\chi \otimes \Psi_n}\\
    a_B(\chi) \ket{\Psi_n} &= \sqrt{\frac{1}{n}} \sum_{j=1}^n \braket{\chi}{\psi_j}_{\mathfrak{h}} E_{n-1} \ket{\psi_1} \otimes \dots \hat{\ket{\psi_j}} \otimes \dots \otimes \ket{\psi_n}
\end{align*}
for the Bosonic case and as:
\begin{align*}
    a_F^*(\chi) \ket{\Psi_n} &= \sqrt{(n+1)} E_{n+1} \ket{\chi \otimes \Psi_n}\\
    a_F(\chi) \ket{\Psi_n} &= \sqrt{\frac{1}{n}} \sum_{j=1}^n \braket{\chi}{\psi_j}_{\mathfrak{h}} (-1)^{j-1}E_{n-1} \ket{\psi_1} \otimes \dots \hat{\ket{\psi_j}} \otimes \dots \otimes \ket{\psi_n}
\end{align*}
for the Fermionic case. Where the hat means that the vector is removed. In particular, we see consistently that, e.g. in the $CAR$ case:
\begin{align*}
    [ a_F^*(\chi), a_F(\chi') ]_+ \ket{\Psi_n} &= a_F^*(\chi) \bigg( \frac{1}{\sqrt{n}} \sum_{j=1}^n \braket{\chi'}{\psi_j} E_{n-1} \ket{\psi_1} \otimes \dots  \hat{\ket{\psi_j}} \dots \otimes \ket{\psi_n} \bigg) + a_F(\chi') \bigg(\sqrt{n+1} E_{n+1} \ket{\chi \otimes \Psi_n}\bigg)\\
    &= \bigg( \frac{1}{\sqrt{n}} \sum_{j=1}^n \braket{\chi'}{\psi_j} \sqrt{n} E_{n} \ket{\chi} \ket{\psi_1} \otimes \dots  \hat{\ket{\psi_j}} \dots \otimes \ket{\psi_n} \bigg)\\
    &\,\,\,\,\,\,+ \sqrt{n+1} \frac{1}{\sqrt{n+1}}\bigg( \braket{\chi'}{\chi} \ket{\Psi_n} - \sum_{j=1}^n \braket{\chi'}{\psi_j} E_{n} \ket{\chi} \otimes \ket{\psi_1} \otimes \dots \hat{\ket{\psi_j}} \dots \otimes \ket{\psi_n} \bigg)\\
    &= \braket{\chi'}{\chi} \ket{\Psi_n}
\end{align*}
Where the minus in the braket at the second step, follows from the fact that $E_n$ is the antisymmetrization projection and thus by putting the annihilation operator through the $\ket{\chi}$ we need to account for a minus sign. From this follows that:
\begin{equation*}
    [ a_F^*(\chi), a_F(\chi') ]_+ = \braket{\chi'}{\chi} \mathbb{1}
\end{equation*}
In the same way one can show that in the $CCR$ case:
\begin{equation*}
    [a_B(\chi'), a_B^*(\chi)] = \braket{\chi'}{\chi} \mathbb{1}
\end{equation*}
In the following we will drop the $B$ and $F$ subscripts, when the context is clear.
\end{rem}

\subsection{KMS and ground states}\label{sec: KMS}
In this section, we describe a particular class of states of high physical relevance: the \textit{ground} and \textit{thermal equilibrium} or \textit{KMS (Kubo-Martin-Schwinger)} states.\\
Whenever we have a $C^*$-algebra $\mathfrak{A}_{(\mathbf{K},s)}$ and a transformation $U$ over $(\mathbf{K},s)$, preserving $s(\cdot,\cdot)$, we can lift it to a $*$-automorphism over the abstract algebra $\mathfrak{A}_{(\mathbf{K},s)}$\footnote{In order $U$ to give a $*$-automorphism in the case of fermionic fields, one further needs to require $[\Gamma, U] = 0$, with $\Gamma$ the involution over the Hilbert space $(\mathbf{K},s)$.}. Namely, for any $A(f) \in \mathfrak{A}_{(\mathbf{K}, s)}$ we have:
\begin{equation*}
    \alpha_U A(f) := A(U^{-1} f) \hspace{20pt} \forall f \in \mathbf{K}
\end{equation*}
One particular example is that of a time-evolution. Namely, when the structure preserving transformation is a one parameter group $V_t$ (either of symplectomorphisms or unitaries). Then, if $t \to V_t$ is strongly continuos, by Stone's theorem:
\begin{thm}[\textbf{Stones's Theorem}]
Let $(V_t)_{t \in \mathbb{R}}$ be a strongly continuous one-parameter unitary group. Then, there exist a unique, possibly unbounded, operator $A: \mathcal{D}_A \to \mathcal{H}$, that is self-adjoint and such that:
\begin{equation*}
    V_t = e^{itA} \hspace{20pt} \forall t \in \mathbb{R}
\end{equation*} 
Where:
\begin{equation*}
    \mathcal{D}_A = \bigg\{ \psi \in \mathcal{H} | \lim_{\varepsilon \to 0} \frac{-i}{\varepsilon}(U_{\epsilon} \psi - \psi) \,\, \mathrm{exists} \bigg\}
\end{equation*}
\end{thm}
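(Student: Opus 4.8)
The plan is the classical one: construct the infinitesimal generator of $(V_t)_{t\in\mathbb{R}}$, prove it is self-adjoint, and then verify that exponentiating it through the spectral theorem gives back $(V_t)$. On the set $\mathcal{D}_A$ as written in the statement, define $A\psi := -i\lim_{\varepsilon\to 0}\varepsilon^{-1}(V_\varepsilon\psi-\psi)$; linearity is immediate, but the first genuine point is that $\mathcal{D}_A$ is dense. I would prove this with \emph{smeared vectors}: for $g\in\mathcal{C}^{\infty}_0(\mathbb{R})$ and $\psi\in\mathcal{H}$ set $\psi_g:=\int_{\mathbb{R}}g(t)\,V_t\psi\,dt$, a Bochner integral which converges since $t\mapsto V_t\psi$ is continuous and $\|V_t\|=1$. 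A change of variables gives $V_h\psi_g=\psi_{g(\cdot-h)}$, so $h^{-1}(V_h\psi_g-\psi_g)=\psi_{(g(\cdot-h)-g)/h}\to\psi_{-g'}$ as $h\to 0$; hence $\psi_g\in\mathcal{D}_A$, and choosing $g$ to run through an approximate identity yields $\psi_g\to\psi$, so $\mathcal{D}_A$ is dense. The same computations show that $\mathcal{D}_A$ is $V_t$-invariant, that $t\mapsto V_t\psi$ is $\mathcal{H}$-differentiable with $\frac{d}{dt}V_t\psi=iAV_t\psi=iV_tA\psi$ for $\psi\in\mathcal{D}_A$, and that $V_t\psi-\psi=i\int_0^t V_s A\psi\,ds$.

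Next I would show $A$ is self-adjoint. Symmetry follows by writing $V_\varepsilon^*=V_{-\varepsilon}$ and passing the limit through the inner product, which gives $\braket{A\psi}{\phi}=\braket{\psi}{A\phi}$ for $\psi,\phi\in\mathcal{D}_A$; closedness follows from the integral identity above, since if $\psi_n\to\psi$ and $A\psi_n\to\eta$ then $V_t\psi-\psi=i\int_0^t V_s\eta\,ds$, and dividing by $t$ and letting $t\to0$ (using continuity of $s\mapsto V_s\eta$ at $0$) gives $\psi\in\mathcal{D}_A$ with $A\psi=\eta$. It then suffices to check that the deficiency subspaces are trivial: if $\phi\in\mathcal{D}(A^*)$ with $A^*\phi=\pm i\phi$, then for $\psi\in\mathcal{D}_A$ the scalar function $g(t):=\braket{V_t\psi}{\phi}$ satisfies a linear ODE $g'(t)=\pm g(t)$ (the sign depending on the eigenvalue), so $g(t)=g(0)e^{\pm t}$; but $|g(t)|\le\|\psi\|\,\|\phi\|$ is bounded on all of $\mathbb{R}$, which forces $g\equiv 0$, and then $\braket{\psi}{\phi}=0$ for $\psi$ in the dense set $\mathcal{D}_A$ gives $\phi=0$. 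Hence $\ker(A^*\mp i)=\{0\}$, and a closed symmetric operator with vanishing deficiency indices is self-adjoint.

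It remains to exponentiate and conclude. By the spectral theorem, $W_t:=e^{itA}$ is a strongly continuous one-parameter unitary group, and its generator is again $A$ with domain exactly $\mathcal{D}_A$. For $\psi\in\mathcal{D}_A$, the curve $t\mapsto V_t\psi-W_t\psi$ is $\mathcal{H}$-differentiable with derivative $iA(V_t\psi-W_t\psi)$, so
\[
\frac{d}{dt}\,\big\|V_t\psi-W_t\psi\big\|^2=2\,\Re\big(i\braket{A(V_t\psi-W_t\psi)}{V_t\psi-W_t\psi}\big)=0,
\]
because $\braket{A\zeta}{\zeta}\in\mathbb{R}$ for symmetric $A$. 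Since the left-hand quantity vanishes at $t=0$, it vanishes identically, so $V_t=W_t$ on the dense set $\mathcal{D}_A$, hence on all of $\mathcal{H}$. Uniqueness is then automatic: any self-adjoint $A'$ with $e^{itA'}=V_t$ must satisfy $iA'\psi=\frac{d}{dt}V_t\psi\big|_{t=0}$ on exactly $\mathcal{D}_A$, so $A'=A$.

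The main obstacle sits in two places: the density of $\mathcal{D}_A$, which really does require the Bochner-integral smearing construction (a naive difference-quotient argument produces no vectors in $\mathcal{D}_A$ at all), and the triviality of the deficiency indices, i.e. the ODE-plus-boundedness argument that promotes the symmetric generator to a self-adjoint one. Everything downstream of self-adjointness is routine functional-calculus bookkeeping. A fully detailed treatment can be found in \cite{Moretti:2013cma}.
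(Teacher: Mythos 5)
The paper does not prove this statement at all; it defers entirely to Theorem $9.33$ of \cite{Moretti:2013cma}. Your argument is the standard complete proof of Stone's theorem --- G\r{a}rding-vector smearing to get density of $\mathcal{D}_A$, symmetry plus closedness plus the ODE/boundedness argument killing the deficiency subspaces to get self-adjointness, and the vanishing norm-derivative trick to identify $V_t$ with $e^{itA}$ and obtain uniqueness --- and each step checks out, so it is correct and consistent with the cited reference.
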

\begin{proof}
See Theorem $9.33$ in \cite{Moretti:2013cma}.
\end{proof}
We have existence of a \textit{"Hamiltonian"} $\mathbf{h}$, generating the dynamics $V_t$. In this specific case, we shall denote the induced $*$-automorphism over the algebra as:
\begin{equation*}
    \alpha_t(A(f)) = A(V_{-t} f)
\end{equation*}
Finally, one may further raise this to an automorphism over the states as:
\begin{equation*}
    \alpha_t^*\omega (A(f)) := \omega(\alpha_t(A(f)))
\end{equation*}
For any $\omega$ state over $\mathfrak{A}_{(\mathbf{K},s)}$ and $A(f) \in \mathfrak{A}_{(\mathbf{K},s)}$.\\\\
Now that we know how states evolve, we define those that are stationary with respect to the time evolution:
\begin{equation*}
    \alpha_t^* \omega = \omega
\end{equation*}
Two such examples are the \textit{ground} and the \textit{KMS (Kubo-Martin-Schwinger) states}. We start introducing the first, proving that is stationary.
\begin{defn}
A state $\omega$ on $\mathfrak{A}_{(\mathbf{K},s)}$ is called a \textit{ground state}, with respect to the time evolution $\alpha_t$, if:
\begin{equation*}
    -i\partial_t \omega(A^* \alpha_t (A))|_{t=0} \geq 0 \hspace{20pt} \forall A \in \mathfrak{A}_{(\mathbf{K},s)}
\end{equation*}
\end{defn}
\begin{prop}
A ground state $\omega$ over $\mathfrak{A}_{(\mathbf{K},s)}$ is stationary
\end{prop}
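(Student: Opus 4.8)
The plan is to read off stationarity $\alpha_t^*\omega=\omega$ directly from the defining inequality by a variational (perturbative) argument; no appeal to the GNS Hamiltonian or to Stone's theorem is needed.

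First I would feed the shifted element $\mathbb{1}+\lambda A$, with $\lambda\in\mathbb{R}$ and $A\in\mathfrak{A}_{(\mathbf{K},s)}$ arbitrary, into the ground-state condition. Using $\alpha_t(\mathbb{1})=\mathbb{1}$ and linearity of $\alpha_t$, and noting that the $t$-independent contributions $\omega(\mathbb{1})$ and $\lambda\,\omega(A^*)$ drop under $\partial_t$, one obtains
\[
0\ \le\ -i\,\partial_t\,\omega\big((\mathbb{1}+\lambda A^*)\,\alpha_t(\mathbb{1}+\lambda A)\big)\big|_{t=0}\ =\ \lambda\, d(A)+\lambda^2\, c(A),
\]
where $c(A):=-i\,\partial_t\,\omega(A^*\alpha_t(A))|_{t=0}\ge 0$ is again the ground-state quantity for $A$ (hence a nonnegative real by hypothesis), and $d(A):=-i\,\partial_t\,\omega(\alpha_t(A))|_{t=0}$. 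The derivative defining $d(A)$ exists for $\lambda\neq 0$ because it is the difference of the left-hand derivative (which exists by the ground-state hypothesis for $\mathbb{1}+\lambda A$) and $\lambda^2 c(A)$.

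Next I would let $\lambda$ run through small reals of either sign. Since $\lambda^2 c(A)$ is real, the displayed inequality forces $\lambda\, d(A)$ to be real for all small $\lambda\neq 0$, hence $d(A)\in\mathbb{R}$; then taking $\lambda=\pm\varepsilon$ with $\varepsilon\downarrow 0$ gives $\pm d(A)\ge -\varepsilon\, c(A)\to 0$, so $d(A)\le 0$ and $d(A)\ge 0$, i.e.\ $d(A)=0$. Thus $\partial_t\,\omega(\alpha_t(A))|_{t=0}=0$ for every $A\in\mathfrak{A}_{(\mathbf{K},s)}$. To promote this from $t=0$ to all times, I would observe that for each fixed $s\in\mathbb{R}$ the state $\alpha_s^*\omega$ is \emph{itself} a ground state: using the group law $\alpha_{s+t}=\alpha_t\circ\alpha_s$ together with $\alpha_s(A^*)=\alpha_s(A)^*$,
\[
-i\,\partial_t\,(\alpha_s^*\omega)(A^*\alpha_t(A))\big|_{t=0}\ =\ -i\,\partial_t\,\omega\big(\alpha_s(A)^*\,\alpha_t(\alpha_s(A))\big)\big|_{t=0}\ =\ c(\alpha_s(A))\ \ge\ 0 .
\]
Applying the result just proved to $\alpha_s^*\omega$ yields $0=\partial_t\,(\alpha_s^*\omega)(\alpha_t(A))|_{t=0}=\partial_t\,\omega(\alpha_{s+t}(A))|_{t=0}=\frac{d}{ds}\omega(\alpha_s(A))$, so $s\mapsto\omega(\alpha_s(A))$ is constant and equals its value $\omega(A)$ at $s=0$; that is, $\alpha_s^*\omega=\omega$.

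The only delicate points are bookkeeping: making sure the relevant one-parameter derivatives exist (essentially built into the definition of a ground state, once restricted to the few algebraic elements that occur above) and the realness/sign chase in the $\lambda$-variation. The structural heart of the argument—what upgrades a statement at $t=0$ to invariance for all $t$—is the observation that the ground-state property is preserved under $\alpha_s$; I expect that step, rather than any computation, to be the one worth stating carefully.
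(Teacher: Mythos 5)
Your proof is correct, but it follows a genuinely different route from the one in the text. The paper restricts to self-adjoint $A$, uses hermiticity of the positive functional to write $\partial_t\omega(\alpha_t(A^2))|_{t=0}=2\Re\,\partial_t\omega(A\,\alpha_t(A))|_{t=0}$, observes that the ground-state condition forces this derivative to be purely imaginary, and then reaches general elements through the algebraic identity $A_{1/2}=\tfrac{1}{2}(\mathbb{1}+A_{1/2})^2-\tfrac{1}{2}\mathbb{1}-\tfrac{1}{2}A_{1/2}^2$ together with $A=A_1+iA_2$. You instead run a first-order variational argument on $\mathbb{1}+\lambda A$, which extracts $-i\,\partial_t\omega(\alpha_t(A))|_{t=0}=0$ for \emph{arbitrary} $A$ in one stroke, with no reduction to self-adjoint elements or squares; the sign-and-reality chase in $\lambda$ is carried out correctly, and your derivation of the existence of the relevant derivative from the hypothesis applied to $\mathbb{1}+\lambda A$ is sound. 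Your version also avoids the Leibniz rule for $t\mapsto\alpha_t(A)\alpha_t(A)$, which the paper uses implicitly and which presupposes slightly more regularity than the definition strictly provides. Most importantly, you make explicit the step the paper dismisses in one line ("it is enough to show $\partial_t\alpha_t^*\omega|_{t=0}=0$"): you prove that $\alpha_s^*\omega$ is again a ground state via the group law, so the vanishing of the derivative propagates to all $s$ and $s\mapsto\omega(\alpha_s(A))$ is genuinely constant. That is the right way to close the gap, and it is the part of your write-up most worth keeping verbatim.
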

\begin{proof}
Note that to prove the stationarity, is enough to show that $\partial_t \alpha_t^* \omega|_{t=0} = 0$. Choose an $A$ self-adjoint\footnote{An element $A \in \mathfrak{A}_{(\mathbf{K},s)}$ is self-adjoint if $A^* = A$}:
\begin{align*}
    \partial_t \omega (\alpha_t(A^2))|_{t=0} &= \partial_t \omega(\alpha_t(A) A)|_{t=0} + \partial_t \omega(A \alpha_t(A))|_{b=0}\\
    &= 2 \Re \partial_t \omega(A \alpha_t(A))|_{t=0}\\
    &= 0
\end{align*}
Proving the invariance when we evaluate squares of self-adjoint operators. But now, any element $A \in \mathfrak{A}_{(\mathbf{K},s)}$ can be written as a combination of two sel-adjoint elements $A_{1/2} \in \mathfrak{A}_{(\mathbf{K},s)}$ as $A = A_1 + i A_2$. Moreover, we can write any of the self-adjoints:
\begin{equation*}
    A_{1/2} = \frac{1}{2}(\mathbb{1} + A_{1/2})(\mathbb{1} + A_{1/2}) - \frac{1}{2}\mathbb{1} - \frac{1}{2}A_{1/2}A_{1/2}
\end{equation*}
where all the terms involved are squares of self adjoint elements. So, by linearity of the state functional and the abovely proven result for squares of self-adjoint operators, follows the claim.
\end{proof}
\begin{rem}
The physical motivation behind the definition of ground state is the following. By the just proven stationarity of the ground state, we have that the automorphism $\alpha_t$ is implemented over the GNS by a unitary $U_t = e^{-itH}$, which leaves $\ket{\Omega}$ invariant:
\begin{align*}
    -i\partial_t \omega(A^* \alpha_t (A))|_{t=0} &= -i\partial_t \braket{\Omega}{ \pi_{\omega}(A)^*  e^{itH}\pi_{\omega}(A)e^{-itH}\Omega}\\
    &= \braket{\pi_{\omega}(A)\Omega}{H \pi_{\omega}(A)\Omega}
\end{align*}
Hence, $\omega$ ground state is equivalent to demand $H \geq 0$ with $H \ket{\Omega} = 0$. Namely, choosing a ground state, corresponds to pick that representation in which the "Hamiltonian" generating the time evolution, is bounded below by $0$.
\end{rem}
Let us move on to the definition of KMS state:
\begin{defn}
A state $\omega$ over $\mathfrak{A}_{(\mathbf{K},s)}$ is called a Kubo-Martin-Schwinger (KMS) state, with inverse temperature $\beta$ (with respect to the time evolution $\alpha_t$) if for all $A,B \in \mathfrak{A}_{(\mathbf{K},s)}$:
\begin{itemize}
    \item The funciton \begin{equation*}
    F_{A,B}(z) := \omega(A \alpha_z(B))
\end{equation*}
of complex variable $z$, is bounded for $0 \leq \Im(z) \leq \beta$ and analytic in the complex strip $0< \Im(z) < \beta$
\item The function $F_{A,B}(z)$ and the funciton $G_{A,B}(z) := \omega(\alpha_z(A) B)$ are related:
\begin{equation*}
    F_{A,B}(t) = G_{A,B}(t + i \beta) \hspace{20pt} \forall t \in \mathbb{R}
\end{equation*}
\end{itemize}
\end{defn}
\begin{prop}
A KMS state $\omega$ over $\mathfrak{A}_{(\mathbf{K},s)}$ is stationary
\end{prop}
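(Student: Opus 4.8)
The plan is to read off stationarity by specializing the KMS boundary relation to the unit of the algebra, so that in the end no analyticity argument is really needed. Recall that $\omega$ being stationary means $\alpha_t^*\omega = \omega$, that is $\omega(\alpha_t(B)) = \omega(B)$ for every $B \in \mathfrak{A}_{(\mathbf{K},s)}$ and every $t \in \mathbb{R}$; this is the identity I will prove.

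First I would apply the KMS condition to the pair $A = \mathbb{1}$ with $B$ arbitrary. On the one hand, $F_{\mathbb{1},B}(t) = \omega(\mathbb{1}\,\alpha_t(B)) = \omega(\alpha_t(B))$. On the other hand, the induced time-evolution automorphisms fix the identity, $\alpha_t(\mathbb{1}) = \mathbb{1}$ for all real $t$, hence $G_{\mathbb{1},B}(t) = \omega(\alpha_t(\mathbb{1})\,B) = \omega(B)$ is constant along the real axis; since $G_{\mathbb{1},B}$ is by definition the analytic continuation of this boundary function, uniqueness of analytic continuation forces $G_{\mathbb{1},B}(z) \equiv \omega(B)$ on the whole strip, in particular $G_{\mathbb{1},B}(t+i\beta) = \omega(B)$. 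Now the KMS matching relation $F_{A,B}(t) = G_{A,B}(t+i\beta)$ for this choice of $A,B$ reads $\omega(\alpha_t(B)) = \omega(B)$, and since $B$ and $t \in \mathbb{R}$ were arbitrary this is exactly $\alpha_t^*\omega = \omega$.

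The only point needing a line of justification is the claim $G_{\mathbb{1},B}(z) \equiv \omega(B)$: it rests on $\mathbb{1}$ being a (trivially) analytic element for $\alpha$, so that the boundary function whose continuation defines $G_{\mathbb{1},B}$ is literally the constant $\omega(B)$. I expect this to be the only subtlety; the argument uses neither the boundedness nor the analyticity in the open strip asserted in the first clause of the KMS definition, only the boundary identity on the real line, so there is no real obstacle --- the proof is essentially a one-line specialization. If one wished to avoid invoking $G$ altogether, the same conclusion follows from a Liouville-type argument: $z \mapsto \omega(\alpha_z(B))$ is bounded on the closed strip, analytic in its interior, and $i\beta$-periodic by the KMS relation, hence extends to a bounded entire function and must be constant; but the specialization above is shorter.
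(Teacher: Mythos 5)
Your proof is correct, and your primary argument takes a slightly different --- and shorter --- route than the paper's. Both proofs start by setting $A = \mathbb{1}$. The paper then reads the KMS relation as the periodicity statement $F_B(t+i\beta) = F_B(t)$ for $F_B(z) = \omega(\alpha_z(B))$ and uses the boundedness and analyticity on the strip to conclude, via a Liouville-type argument, that $F_B$ is constant. You instead evaluate the other side of the relation directly: since $\mathbb{1}$ is a (trivially) entire analytic element with $\alpha_z(\mathbb{1}) = \mathbb{1}$, the function $G_{\mathbb{1},B}$ is the constant $\omega(B)$, and the boundary identity $F_{\mathbb{1},B}(t) = G_{\mathbb{1},B}(t+i\beta)$ collapses to $\omega(\alpha_t(B)) = \omega(B)$ with no complex analysis at all. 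One caveat worth noting: this shortcut is tied to the particular orientation of the KMS relation as stated in the definition, where the continuation to $t+i\beta$ sits on the first argument $A$. Under the equally common convention $\omega(A\,\alpha_{t+i\beta}(B)) = \omega(\alpha_t(B)\,A)$ --- which is in fact the identity verified in the paper's Gibbs-state example --- setting $A = \mathbb{1}$ yields only the periodicity $F_B(t+i\beta) = F_B(t)$, and the analyticity and boundedness hypotheses become genuinely necessary; that is precisely the situation the paper's proof addresses. Since you supply the Liouville-type argument as a fallback, your proof is robust under either convention.
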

\begin{proof}
Take in the KMS condition $A = \mathbb{1}$, then for $z \in \mathbb{C}$ it becomes:
\begin{equation*}
    F_B(z) = \omega(\alpha_z (B))
\end{equation*}
This function is analytic in $0 < \Im z < \beta$ and bounded in $0 \leq \Im z \leq \beta$, with boundary values which are periodic, $F_B(t + i\beta) = F_B(t)$. But then, as an analytic function on a strip with periodic boundary conditions, it must be constant. We have proven that for all $B \in \mathfrak{A}_{(\mathbf{K},s)}$ and $t \in \mathbb{R}$, $\omega(\alpha_t(B)) = \omega(B)$ that is equivalent to: $\alpha_t^* \omega(B) = \omega(B)$
\end{proof}
\begin{rem}
The appaently very abstract definition of KMS state, is actually just the generalization of the condition that one gets in QM on finite dimensional Hilbert spaces in defining \textit{thermal equilibrium or Gibbs states}. To see it explicitly, consider on a system of finitely many degrees of freeedom the Gibbs state of inverse temperature $\beta$:
\begin{equation*}
    \omega_{\beta}(A) := \frac{\Tr(A e^{-\beta H})}{Z_{\beta}} \hspace{20pt} \forall A \in \mathfrak{A}_{(\mathbf{K},s)}
\end{equation*}
where $Z_{\beta}$ is the partition function associated to the thermal state. The time evolution is given by the Heisenberg equation:
\begin{equation*}
    \alpha_t A = e^{itH} A e^{-itH}
\end{equation*}
with Hamiltonian $H$ bounded from below. Therefore:
\begin{equation*}
    \omega_{\beta}(A \alpha_t (B)) = \frac{\Tr(A e^{itH} B e^{-(\beta + it)H})}{Z_{\beta}}
\end{equation*}
As $H$ is semibounded, the map $z \to e^{izH}$ into bounded operators is bounded for $\Im(z) \geq 0$ and analytic $\Im(z) > 0$. As a consequence, the map $z \to e^{-(\beta + iz)H}$ is bounded for $\Im(z) \leq \beta$ and analytic for $\Im(z) < \beta$. It follows, that map $z \to \omega_{\beta}(A \alpha_z(B))$ is bounded for $0 \leq \Im(z) \leq \beta$ and has domain of analyticity $0< \Im(z) < \beta$. To prove also the second condition of a $KMS$ state, notice:
\begin{align*}
    \omega_{\beta}(A \alpha_{t + i\beta}(B)) &= \frac{\Tr(A e^{-(\beta -it)H}B e^{-itH})}{Z_{\beta}}\\
    &= \frac{\Tr(e^{itH}B e^{-itH} A e^{-\beta H})}{Z_{\beta}}\\
    &= \omega_{\beta}(\alpha_t(B) A)
\end{align*}
Therefore, the Gibbs state at inverse temperature $\beta$ is KMS at inverse temperature $\beta$.\\
In this sense, the KMS condition generalizes the notion of thermal equilibrium in the context of field theories where, the lacking of a notion of trace on local operator algebras (we discuss this more in Chapter $2$ when we introduce von Neumann factors), prevents us from defining a density matrix.\\ Therefore, the motivation behind the KMS condition, is to keep the same analyticity conditions of the finite dimensional case that are independent from the fact that $e^{-\beta H}$ must be of trace class. These same conditions, become analyticity conditions on the state functional with respect to a given time evolution.
\end{rem}
\begin{rem}
The existence of a time evlution is crucial, from the physical point of view, in order to define the notions of ground and KMS states. In fact, is the general definition of time evolution that allows to assign temperatures in an observer independent way. In the same way, is the existence of a global time evolution that allows to define a global notion of energy and as such fix the scale of the lowest energy level to zero.
\end{rem}
We close this section, noticing that a ground state is also $KMS$. Defining: \begin{equation*}
    F_{A,B}(t) = \braket{\pi_{\omega}(A^*) \Omega}{e^{itH} \pi_{\omega}(B) \Omega}
\end{equation*}
if $\omega$ is a ground state, we have $H \geq 0$ and therefore the function $z \to F_{A,B}(z)$ is bounded for $\Im(z) \geq 0$ and analytic for $\Im(z) > 0$. One can even show that the converse holds, i.e. a KMS state with associated function $F_{A,B}(z)$ bounded for $\Im(z) \geq 0$ and analytic for $\Im(z) > 0$, defines a ground state (see Prop. $5.3.19$ in \cite{Bratteli:1996xq}). In this way, a ground state can be understood as the $\beta \to \infty$ limit of a KMS state.

\section{Axioms of AQFT}\label{sec: Axiom}
Now that we have introduced all the necessary mathematical background and explanied most of the reasons in favour of the algebraic approach to QM, we define what is a QFT in the algebraic sense. This was done in an axiomatic way, first by Wightman and Garding in \cite{Wightman} and later to nets of local algebras by Haag and Kastler in \cite{Haag:1963dh}. Here we adopt the Haag-Kastler axioms, and present them in a more modern fashion following Section $2.3$ of \cite{Hollands:2017dov}.\\
In algebraic quantum field theory, the algebraic relations between the quantum fields are encoded in nets of $C^*$-algebras associated with spacetime regions, with partial ordering given by the usual inclusion of sets on the spacetime. The formalism is then perfect to define Quantum Fields even on curved backgrounds (in a semiclassical way), generalizing already the standard approach. The structure and type of such nets depends of course on the type of field theory and on the spacetime background.\\
To start and establish the connection with the standard approach, consider Minkowski spacetime $(\mathbb{R}^4,\eta)$. Consider on it the set of causally complete regions $\mathcal{K}$ and the subset of \textit{causal diamonds} $\mathcal{O}$. The Poincaré group acts freely on Minkowski, i.e. for any $g \in \mathcal{P}^{\uparrow}_+ = \mathcal{L}_+^{\uparrow} \ltimes \mathbb{R}^4$, that we write as $g = (\Lambda,a)$, we have:
\begin{equation*}
    g \cdot x = \Lambda x + a \hspace{15pt} \forall x \in \mathbb{R}^4
\end{equation*}
Now, since Poincaré transformations are isometries of Minkowski, they map causal diamonds in other causal diamonds as they are causally complete regions.

\begin{defn}[\textbf{Algebraic Quantum Field Theory}]
An algebraic quantum field theory (AQFT) is defined by a $C^*$-algebra $\mathfrak{A}$, called the algebra of observables\footnote{The name is a bit misleading, as elements of $A \in \mathfrak{A}$ are not all self adjoint, i.e. not all of them are proper observables in the sense of being actually measurable.}, and an assignement to every causal diamond $\mathcal{O}$ of a $C^*$-subalgebra of $\mathfrak{A}$:
\begin{equation*}
    \mathcal{K} \ni \mathcal{O} \mapsto \mathfrak{A}(\mathcal{O}) \subset \mathfrak{A}
\end{equation*}
which are called the local algebras of observables. This net of algebras $\mathfrak{A}(\mathcal{O})$ must satisfy the following set of axioms:
\begin{itemize}
    \item[\textbf{A1}](Isotony) If $\mathcal{O}_1 \subset \mathcal{O}_2$ we have $\mathfrak{A}(\mathcal{O}_1) \subset \mathfrak{A}(\mathcal{O}_2)$. Moreover:
    \begin{equation*}
        \mathfrak{A} = \overline{\bigcup_{\mathcal{O}} \mathfrak{A}(\mathcal{O})}
    \end{equation*}
    With the completion taken with respect to the $C^*$-norm
    \item[\textbf{A2}](Causality) If $\mathcal{O}_1$ is spacelike separated from $\mathcal{O}_2$ we must have: $[\mathfrak{A}(\mathcal{O}_1), \mathfrak{A}(\mathcal{O}_2)] = 0$. That means that algebras at spacelike separation must commute. This can also be rewritten as:
    \begin{equation*}
        \mathfrak{A}(O') \subset \mathfrak{A}(O)'
    \end{equation*}
    \item[\textbf{A3}](Relativistic Covariance) For each transformation $g \in \Tilde{\mathcal{P}}^{\uparrow}_+$ (the universal covering group of $\mathcal{P}^{\uparrow}_+$), there's an automorphism $\alpha_g$ on $\mathfrak{A}$ such that:
    \begin{equation*}
        \alpha_g \mathfrak{A}(\mathcal{O}) = \mathfrak{A}(\Lambda \mathcal{O} + a) 
    \end{equation*}
    for all causal diamonds $\mathcal{O}$ and such that: $\alpha_g \alpha_{g'} = \alpha_{gg'}$, $\alpha_{(\mathbb{1},0)} = \mathbb{1}$
    \item[\textbf{A4}](Vacuum) There's a unique state $\omega_0$ over $\mathfrak{A}$ that is invariant under $\alpha_g$. On its GNS representation $(\pi_0, \mathcal{H}_0, \ket{0})$, $\alpha_g$ is implemented by a projective representation $U$, with positive energy, of the universal covering $\Tilde{\mathcal{P}}^{\uparrow}_+$. This means that $\pi_0(\alpha_g(A)) = U(g)\pi_0(A)U(g)^*$ for all $A \in \mathfrak{A}$ and $g \in \Tilde{\mathcal{P}}^{\uparrow}_+$. Positive energy means that the representation is strongly continuous and if we pick $x \in \mathbb{R}^4$ and the corresponding translation operator $(\mathbb{1},x) \in \Tilde{\mathcal{P}}^{\uparrow}_+$ we have:
    \begin{equation*}
        U(x) = \exp(-iP^{\mu}x_{\mu})
    \end{equation*}
    And the vector generator $P = (P^{\mu})$ has spectrum $p = p^{\mu}$ in the forward lightcone $p \in \overline{V}^+ = \{ p| p^2 \geq 0, p^0 > 0 \}$
\end{itemize}
\end{defn}
\begin{rem}
The axioms are motivated by our expectation. The first reflects the intuition that, if $\mathfrak{A}(\mathcal{O}_2)$ represents everything that can be observed in the region $\mathcal{O}_2$, whenever we consider a smaller region $\mathcal{O}_1 \subset \mathcal{O}_2$ all that can be observed here is for sure contained in what was observable in the wider region. The second axiom, as the name suggests, is just the reflection of the classical idea of causality translated in the language of Quantum Mechanics: as the operators commute, we can in an abstract sense measure something in $\mathfrak{\mathcal{O}}$ without affecting the measure in $\mathfrak{A}(O')$ as these are \textit{compatible} observables. The third axiom reflects the attachement of the net of algebras to the spacetime regions and as such they should transform according to the way in which the spacetime regions transform. The fourth is motivated by the fact that each inertial observer should measure the same physics in his system of reference.
\end{rem}

If one considers a more general spacetime background, one needs to modify these axioms. Clearly axioms $\mathbf{A1}, \mathbf{A2}$ can be generalized straightforwardly to the case of curved backgrounds as they do not involve any specific feature of $(\mathbb{R}^4,\eta)$. At the same time, if the spacetime has a specific symmetry group $G$, one can generalize also $\mathbf{A3}$ requiring the same to hold after replacing $\Tilde{\mathcal{P}}^{\uparrow}_+$ with $G$. What il less easy to generalize is $\mathbf{A4}$ as, in general spacetimes, there's no unique vacuum. This is due to the general absence of a timelike Killing vector field. If we would have rather had such a symmetry (Minkowski spacetime or more generally on static spacetimes), we can impose a globally defined notion of positive and negative energy modes giving a globally defined notion of vacuum once we consider any two different observers moving along the flow lines of the timelike Killing field.\\\\
However, in the axiomatic definition, there's still a caveat: from standard QFT and statistical physics we know that fermions have anticommutation relations. This translates in fermionic field opeators that do not commute at spacelike separation but rather anticommute. This violates the \textit{Causality axiom} of an AQFT. There are two way out for this problem:
\begin{enumerate}
    \item Modify the axioms by requiring a weaker version of locality called \textit{graded locality}
    \item Consider a subset of the fermionic algebra of fields corresponding to field binomial such that causality holds on that subset
\end{enumerate}
A good reason in favour of the second choice, is that we are just able to measure quantities composed by fermionic binomials, as one can see also from the form of the Dirac Lagrangian. However, as it is easier to work with the full algebra, we will here adopt the first approach and consider the fermionic algebras to be an example of an AQFT with respect to \textit{graded locality}.\\
So, let me introduce this concept:
\begin{defn}
Let $(\mathfrak{I}(\mathcal{O}))_{\mathcal{O}}$ be the net of local $C^*$-algebras acting on a Hilbert space $\mathcal{H}$, i.e. $\mathfrak{I}(\mathcal{O}) \subset \mathfrak{B}(\mathcal{H})$ and let $\ket{\Omega}$ be the cyclic vector for $\mathcal{H}$. A $\mathbb{Z}_2$-grading on $\mathfrak{I}(\mathcal{O})$ is defined by $\Gamma \in \mathfrak{B}(\mathcal{H})$ such that $\Gamma = \Gamma^{-1} = \Gamma^*$ and:
\begin{align*}
    \Gamma \ket{\Omega} &= \ket{\Omega}\\
    \Gamma \mathfrak{I}(\mathcal{O}) \Gamma &= \mathfrak{I}(\mathcal{O}) \hspace{15pt} \forall \mathcal{O} \in \mathcal{K}\\
    [\Gamma, U(\Tilde{g})] &= 0 \hspace{15pt} \forall \Tilde{g} \in \Tilde{P}^{\uparrow}_+
\end{align*}
An operator $A \in \mathfrak{I}$ such that $\Gamma A \Gamma = \pm A$ is called homogeneous \textbf{Bose} (\textbf{even}) or \textbf{Fermi} (\textbf{odd}) depending on the alternative $\pm$. For this reason, we introduce a degree $I_A$ for the Homogeneous operator $A$ that is $0$ if it is Bose and $1$ if it is Fermi
\end{defn}
We will call a net of local algebra with a $\mathbb{Z}_2$-grading a \textit{$\mathbb{Z}_2$-graded net of local algebras}. Moreover, any $A \in \mathfrak{I}$ can be decomposed as a sum $A = A_+ + A_-$ of a Bose and a Fermi operator:
\begin{equation*}
    A_{\pm} := \frac{A \pm \Gamma A \Gamma}{2}
\end{equation*}
Where of course $A_{\pm} \in \mathfrak{I}(\mathcal{O})$.\\
Now that we have introduced the notion of grading, we define the \textit{graded commutator}
\begin{defn}
Let $(\mathfrak{I}(\mathcal{O}))_{\mathcal{O}}$ be a $\mathbb{Z}_2$-graded net of local algebras. We define the graded commutator for $A,B \in \mathfrak{I}$ homogenous as:
\begin{equation*}
    [A,B]_{\Gamma} := AB - (-1)^{I_A \cdot I_B} BA
\end{equation*}
and we extend this to general $A,B \in \mathfrak{I}$ by linearity.
\end{defn}
From the definition follows immediately that, for both homogeneous operators of the Fermi type, the graded commutator reduces to an anticommutator, while is a standard commutator for Bose homogeneous operators.\\
We can now replace the axiom $\mathbf{A2}$ with a generalized version:
\begin{itemize}
    \item[\textbf{A2'}] Let $(\mathfrak{I}(\mathcal{O}))_{\mathcal{O}}$ be a $\mathbb{Z}_2$-graded net of local algebras. We say that the local algebra satisfies graded locality if, for any pair of spacelike separated regions $\mathcal{O}_1, \mathcal{O}_2$, we have that $[\mathfrak{I}(\mathcal{O}_1), \mathfrak{I}(\mathcal{O}_2)]_{\Gamma} = 0$
\end{itemize}
In this manner Fermionic Quantum Field Theories fit in the formalism of AQFT.\\
The notion of $\mathbb{Z}_2$-grading is often replaced in litterature by twisted locality \cite{Bisognano:1975ih},\cite{Bisognano:1976za} and \cite{DAntoni:2001ido}:
\begin{defn} \label{def: Z2grad}
Let $(\mathfrak{I}(\mathcal{O}))_{\mathcal{O}}$ be a $\mathbb{Z}_2$-graded net of local algebras and let:
\begin{equation*}
    Z := \frac{1-i\Gamma}{1-i}
\end{equation*}
We define the twisted commutant of a local algebra as $\mathfrak{I}(\mathcal{O})^{t'} := Z \mathfrak{I}(\mathcal{O})'Z^*$. We say that the net satisfies twisted locality if, for any two spacelike separated regions $\mathcal{O}_1,\mathcal{O}_2$, we have $\mathfrak{I}(\mathcal{O}_1) \subset \mathfrak{I}(\mathcal{O}_2)^{t'}$
\end{defn}
One can prove that the two are equivalent:
\begin{lem}
Let $\mathfrak{I}(\mathcal{O})$ be a $\mathbb{Z}_2$-graded net of local algebras. Then, it satisfies graded locality if and only if it satisfies twisted locality
\end{lem}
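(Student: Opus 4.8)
The plan is to reduce the statement to two elementary facts about the operator $Z = \frac{1-i\Gamma}{1-i}$ and then read off the equivalence by a direct computation. First I would record that, since $\Gamma = \Gamma^* = \Gamma^{-1}$, one has $Z^* = \frac{1+i\Gamma}{1+i}$ and therefore $ZZ^* = Z^*Z = \tfrac12(1+\Gamma^2) = \mathbb{1}$, so $Z$ is unitary. Consequently, for spacelike separated $\mathcal{O}_1,\mathcal{O}_2$, the inclusion $\mathfrak{I}(\mathcal{O}_1) \subset \mathfrak{I}(\mathcal{O}_2)^{t'} = Z\,\mathfrak{I}(\mathcal{O}_2)'\,Z^*$ holds if and only if $Z^*AZ \in \mathfrak{I}(\mathcal{O}_2)'$ for all $A \in \mathfrak{I}(\mathcal{O}_1)$, i.e. if and only if every such $A$ commutes with $\widehat B := Z B Z^*$ for all $B \in \mathfrak{I}(\mathcal{O}_2)$.

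Next I would compute $\widehat B$ on homogeneous elements. If $B$ is Bose it commutes with $\Gamma$, hence with $Z$ and $Z^*$, so $\widehat B = B$. If $B$ is Fermi, then $\Gamma B = -B\Gamma$ gives $(1-i\Gamma)B = B(1+i\Gamma)$, and using $(1-i)(1+i)=2$ together with $(1+i\Gamma)^2 = 2i\Gamma$ one gets $\widehat B = Z B Z^* = iB\Gamma$; by linearity $\widehat B = B_+ + iB_-\Gamma$ in general. Because the net is $\mathbb{Z}_2$-graded, $\Gamma\,\mathfrak{I}(\mathcal{O}_j)\,\Gamma = \mathfrak{I}(\mathcal{O}_j)$, so the even and odd parts $A_\pm$ of $A \in \mathfrak{I}(\mathcal{O}_1)$ again lie in $\mathfrak{I}(\mathcal{O}_1)$, and likewise for $B$; since ``$A$ commutes with $\widehat B$'' is bilinear in $(A,B)$, it suffices to test it on homogeneous $A,B$. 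Using $\Gamma A = (-1)^{I_A}A\Gamma$ one finds in the four cases:
\begin{itemize}
\item $A,B$ Bose: $[A,\widehat B] = AB - BA$;
\item $A$ Bose, $B$ Fermi: $[A,\widehat B] = i(AB\Gamma - B\Gamma A) = i(AB - BA)\Gamma$;
\item $A$ Fermi, $B$ Bose: $[A,\widehat B] = AB - BA$;
\item $A,B$ Fermi: $[A,\widehat B] = i(AB\Gamma - B\Gamma A) = i(AB + BA)\Gamma$.
\end{itemize}
Since $\Gamma$ is invertible, the vanishing of $[A,\widehat B]$ in each case is precisely the vanishing of the graded commutator $[A,B]_\Gamma = AB - (-1)^{I_A I_B}BA$. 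Summing over the homogeneous components then gives $\mathfrak{I}(\mathcal{O}_1) \subset \mathfrak{I}(\mathcal{O}_2)^{t'}$ if and only if $[\mathfrak{I}(\mathcal{O}_1),\mathfrak{I}(\mathcal{O}_2)]_\Gamma = 0$, which is both implications at once.

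I do not expect a genuine obstacle here: the content is the sign bookkeeping in the conjugation formula $Z B Z^* = B_+ + iB_-\Gamma$ and the point that the $\mathbb{Z}_2$-grading of the net is what licenses the reduction to homogeneous $A$ and $B$ (without $\Gamma$-invariance of $\mathfrak{I}(\mathcal{O}_1)$ the twisted-commutant condition would only produce a single ``mixed'' identity relating $A_+B$ and $A_-B$ rather than the four separate graded relations). As a preliminary I would also check that $\mathfrak{I}(\mathcal{O})'$ is itself $\Gamma$-invariant — immediate from $\Gamma\,\mathfrak{I}(\mathcal{O})\,\Gamma = \mathfrak{I}(\mathcal{O})$ — so that $Z\,\mathfrak{I}(\mathcal{O})'\,Z^*$ is again a von Neumann algebra and Definition \ref{def: Z2grad} is meaningful.
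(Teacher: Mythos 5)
Your proof is correct and follows essentially the same route as the paper's: establish $ZZ^*=\mathbb{1}$, compute $ZBZ^*$ on homogeneous elements ($=B$ for Bose, $=iB\Gamma$ for Fermi), and translate vanishing of the ordinary commutator with the twisted element into vanishing of the graded commutator, using $\Gamma$-invariance of the local algebras to reduce to homogeneous components. If anything, your version is slightly tidier, since you make explicit the equivalence $\mathfrak{I}(\mathcal{O}_1)\subset Z\,\mathfrak{I}(\mathcal{O}_2)'\,Z^*\Leftrightarrow [A,ZBZ^*]=0$ and treat all four homogeneous cases rather than only the Bose--Bose and Fermi--Fermi ones.
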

\begin{proof}
$(\Rightarrow)$ By assumption $[\mathfrak{I}(\mathcal{O}_1), \mathfrak{I}(\mathcal{O}_2)]_{\Gamma} = 0$. We want to show that $\mathfrak{I}(\mathcal{O}_1) \subset  Z \mathfrak{I}(\mathcal{O}_2) Z^* =: \mathfrak{I}(\mathcal{O}_2)^{t'}$. Start noticing:
\begin{align*}
    Z Z^* &= \frac{1}{2}(1 - i \Gamma)(1 + i \Gamma)\\
    &= \frac{1}{2}(1 + i \Gamma - i \Gamma + \Gamma^2)\\
    &= 1
\end{align*}
Let us first look at the case in which the $\mathfrak{I}(\mathcal{O}_i)$ is homogeneous for all $i$. For Bose operators, we know:
\begin{align*}
    [\mathfrak{I}(\mathcal{O}_1), \mathfrak{I}(\mathcal{O}_2)]_{\Gamma} &= \mathfrak{I}(\mathcal{O}_1) \mathfrak{I}(\mathcal{O}_2) - \mathfrak{I}(\mathcal{O}_2) \mathfrak{I}(\mathcal{O}_1) = 0\\
    Z \mathfrak{I}(\mathcal{O}_2)' Z^* &= \mathfrak{I}(\mathcal{O}_2)'
\end{align*}
But the first of these equations implies that:
\begin{equation*}
    \mathfrak{I}(\mathcal{O}_1) \subset \mathfrak{I}(\mathcal{O}_2)' = \mathfrak{I}(\mathcal{O}_2)^{t'}
\end{equation*}
For what concerns the case of Fermi homogeneous operators we know:
\begin{equation*}
    [\mathfrak{I}(\mathcal{O}_1), \mathfrak{I}(\mathcal{O}_2)]_{\Gamma} = \mathfrak{I}(\mathcal{O}_1) \mathfrak{I}(\mathcal{O}_2) + \mathfrak{I}(\mathcal{O}_2) \mathfrak{I}(\mathcal{O}_1) = 0
\end{equation*}
But then, if we look at:
\begin{align*}
    \mathfrak{I}(\mathcal{O}_1) Z \mathfrak{I}(\mathcal{O}_2) Z^* - Z \mathfrak{I}(\mathcal{O}_2) Z^* \mathfrak{I}(\mathcal{O}_1) &= \frac{1}{2} \mathfrak{I}(\mathcal{O}_1) (\mathfrak{I}(\mathcal{O}_2) + i \mathfrak{I}(\mathcal{O}_2)\Gamma - i \Gamma \mathfrak{I}(\mathcal{O}_2) + \Gamma \mathfrak{I}(\mathcal{O}_2) \Gamma)\\
    &\,\,\,\,\,\,- \frac{1}{2}(\mathfrak{I}(\mathcal{O}_2) + i\mathfrak{I}(\mathcal{O}_2)\Gamma -i \Gamma \mathfrak{I}(\mathcal{O}_2) + \Gamma \mathfrak{I}(\mathcal{O}_2) \Gamma)\mathfrak{I}(\mathcal{O}_1)\\
    &= \frac{1}{2} \mathfrak{I}(\mathcal{O}_1) (+ i \mathfrak{I}(\mathcal{O}_2)\Gamma - i \Gamma \mathfrak{I}(\mathcal{O}_2)) - \frac{1}{2}(+ i\mathfrak{I}(\mathcal{O}_2)\Gamma -i \Gamma \mathfrak{I}(\mathcal{O}_2))\mathfrak{I}(\mathcal{O}_1)\\
    &= \frac{1}{2} (i\mathfrak{I}(\mathcal{O}_1) \mathfrak{I}(\mathcal{O}_2)\Gamma + i \Gamma \mathfrak{I}(\mathcal{O}_1) \mathfrak{I}(\mathcal{O}_2)) - \frac{1}{2}(- i\mathfrak{I}(\mathcal{O}_2) \mathfrak{I}(\mathcal{O}_1)\Gamma -i \Gamma \mathfrak{I}(\mathcal{O}_2) \mathfrak{I}(\mathcal{O}_1))\\
    &= \frac{i}{2} \Gamma [\mathfrak{I}(\mathcal{O}_1), \mathfrak{I}(\mathcal{O}_2)]_{\Gamma}+ \frac{i}{2} [\mathfrak{I}(\mathcal{O}_1), \mathfrak{I}(\mathcal{O}_2)]_{\Gamma} \Gamma = 0
\end{align*}
Where we have first used the fact that for homogenous Fermi operators $\Gamma \mathfrak{I}(\mathcal{O}_i) \Gamma = - \mathfrak{I}(\mathcal{O}_i)$ and at the third step that $\Gamma^2 = 1$.\\
The general case follows from the fact that any $A \in \mathfrak{I}(\mathcal{O}_i)$ can be written as $A = A_+ + A_-$.\\\\
$(\Leftarrow)$ From the above computations, we have seen for homogeneous Bose: $\mathfrak{I}(\mathcal{O}_i)' = \mathfrak{I}(\mathcal{O}_i)^{t'}$, so if $\mathfrak{I}(\mathcal{O}_1) \subset \mathfrak{I}(\mathcal{O}_2)^{t'}$ we have:
\begin{equation*}
    [\mathfrak{I}(\mathcal{O}_1), \mathfrak{I}(\mathcal{O}_2)]_{\Gamma} = 0
\end{equation*}
While, for the Fermi case we have seen above that:
\begin{equation*}
    \mathfrak{I}(\mathcal{O}_1) Z \mathfrak{I}(\mathcal{O}_2) Z^* - Z \mathfrak{I}(\mathcal{O}_2) Z^* \mathfrak{I}(\mathcal{O}_1) = \frac{i}{2} \Gamma [\mathfrak{I}(\mathcal{O}_1), \mathfrak{I}(\mathcal{O}_2)]_{\Gamma}+ \frac{i}{2} [\mathfrak{I}(\mathcal{O}_1), \mathfrak{I}(\mathcal{O}_2)]_{\Gamma} \Gamma
\end{equation*}
But if $\mathfrak{I}(\mathcal{O}_1) \subset \mathfrak{I}(\mathcal{O}_2)^{t'}$, the left hand side must vanish, from which it follows the claim.
\end{proof}
We mentioned this because we are going to use the notion of graded locality pointing although at litterature where the notion of twisted locality is used.\\\\

From what we said in the previous sections, formulating a Quantum field theory just requires the construction of an algebra, corresponding to the operators representing the quantum fields, fulfilling the above axioms. Once that is done we can formulate the standard approach to a free QFT by picking a specific Fock representation using a quasi-free state. In the following sections, I will present an example of QFT that fulfills the above listed axioms: the Dirac-Majorana field. From what we have just mentioned, we need to show how to construct the algebras of observables starting from the classical field equations. This process is often called \textit{quantisation}.\\
Moreover, as mentioned above, AQFT is the perfect framework to generalize to aribtrary spacetimes backgrounds, therefore we will work on curved spacetimes. This will require some technicalities in the classical definition of Dirac fields, that we are going to discuss at the beginning of the next chapter.\\
The choice of presenting just the Dirac field, despite it is more complicated than the free scalar field, is due to the results of this thesis that are presented in Chapter $3$ and that regard fermionic fields. Morevoer, the free scalar field is treated in most of the literature so I refer the interested reader to \cite{Wald:1995yp}, \cite{Haag:1992hx}, \cite{Moretti:2013cma} and the literature cited there, for the original papers.

\newpage
\section{Dirac and/or Majorana fields}\label{sec: DiMaj}
The free scalar fields on arbitrary curved spacetime $(M,g)$, are just sections of a trivial bundle $M\times \mathbb{C}$. As such, they are always well defined on each manifold, provided that we have existence and uniqueness of solutions of fields equations, i.e. if the spacetime is globally hyperbolic. In contrast the formulation of the classical Dirac equation on a spacetime manifold $(M,g)$ requires the spacetime manifold to have additional properties and the Dirac fields are going to be sections of a non-trivial bundle.\\
We will start the chapter by discussing which additional properties the Manifold must have in order to be able to formulate the classical Dirac equation.\\
I refer to Chapter $13$ of \cite{Wald:1984rg} for the physical motivation and applications behind the introduction of Spin manifolds, to the Lecture notes \cite{BaerSpin} for a modern mathematically rigorous approach to Spin geometry and to \cite{Dimock1982DiracQF}, \cite{VerchFewster} and \cite{Sanders} for literature related to the quantization of Dirac fields.

\subsection{Spin bundle}
The concpet of Spin on Minkowski spacetime, relies on the existing projective representation of its symmetry group (the Poincaré group or to be more precise the proper orthocronus part of its universal covering) over the Hilbert space of the QFT. Following Wigner's approach \cite{Wigner:1939cj}, the "elementary" fields are classified according to the irreducible representations of the symmetry group. These are labelled by the eigenvalues of the Casimir operators $m^2$ and $S^2$, respectively the mass $m$ a positive real number and a discrete\footnote{In the massless case is known the existence of irreducible representations with non-trivial translations that admit continuous spin representations. However, such fields have, so far, never been of physical relevance. For further details on it and additional literature I refer to \cite{Weinberg}} integer or semiinteger that labels the so called spin or, in the massless case, helicity.\\
The strict connection with the underlying spacetime geometry is manifest and one should expect that notions like spin and spinors (vectors that transform according to the two fold covering of the proper orthocronus Lorentz group) have to be reviewed on arbitrary spacetime geometries where the isometry group can also be trivial.\\
The idea is to rely on Einstein's Equivalence Principle (EEP): each spacetime is, at least locally, Minkowski. Namely, if we consider an observer $O$ specified at each spacetime point $x \in M$ by a tetrad of orthonormal vectors $\{e_{a}\}_{a = 0,1,2,3} \in T_x M$, where $e_0$ is the normalized vector tangent to the worldline of $O$, another observer $\Tilde{O}$ at the same spacetime point $x \in M$ has its tetrad of vectors that must be related to that of $O$ by a Lorentz transformation $\phi_g$. Moreover, each measurement done by $O$ on a system $\mathcal{S}$ at $x$ must agree with a measurement done by $\Tilde{O}$ at the same spacetime point $x$ on $\Tilde{\mathcal{S}}$ that is: $\Tilde{\mathcal{S}} = \Tilde{\phi}_g \mathcal{S}$ where the transformation $\Tilde{\phi}_g$ is the induced action on physical observables existing by EEP.\\\\
Formally, the above means that at each $x \in M$, for each time oriented tetrads, we can find Lorentz transformations mapping one tetrad into another. This transformation must induce, by EEP, a transormation over the elements of a theory satisfying the EEP. This, holding for tetrads at the same spacetime point $x$, is a fiberwise statement. Our aim, is to extend the relation between the tetrads in the fiber globally, in a continuous way, as a relation between sections of the fiber bundle.\\
So, start defining the principal fiber bundle as the bundle of oriented, time oriented and orthonormal tetrads on which we have a free action of the proper orthocronus Lorentz group $\mathcal{L}_+^{\uparrow}$. A section of this bundle is an assignement of a frame at each spacetime point. In this way the fibers are diffeomorphic to the proper orthocronus Lorentz group:
\begin{itemize}
    \item The time orientation of the orthonormal basis gives that each observer has the same time orientation, thus they must be related by an orthochronous Lorentz transformation
    \item The orientation assumption tells us that the observers must be related to each other by a transformation in the same connected component of the Lorentz group and since the identity transformation maps "two" such observers, we are considering the proper part of the Lorentz group
\end{itemize}
Therefore, starting from the standard tetrad coming from the orientation and time orientation assumptions, we have a one to one correspondence between elements in $\mathcal{L}_+^{\uparrow}$ and tetrads $\{ e^a\}_{a=0,1,2,3}$.\\
We call such a fiber bundle a \textit{Frame bundle} and denote it as $FM$ and we will denote the principal frame bundle, to emphasize the existence of a free right action of the proper orthocronus Lorentz group, by $(FM, \mathcal{L}^{\uparrow}_+, M, \phi)$ with $\phi: FM \times \mathcal{L}^{\uparrow}_+ \to FM$ the free right action defined as:
\begin{equation*}
    \phi: (v,A) \mapsto v \cdot A
\end{equation*}
Where $\cdot$ denotes the matrix multiplication between the element in the Lorentz group associated to the section $v(x)$ at each spacetime point $x \in M$ and the element $A \in \mathcal{L}^{\uparrow}_+.$\\\\
However, as we know from the flat spacetime case, in order to define spinors with non trivial behavior under a $2\pi$ rotation, we need the universal covering of $\mathcal{L}_+^{\uparrow}$. Therefore we "duplicate" each fiber to produce a principal $Spin_{1,3}^0$ bundle over $M$, where $Spin_{1,3}^0$ denotes the universal covering\footnote{The explicit definition of the Spin group is reported in Appendix \ref{app}} of $\mathcal{L}_+^{\uparrow}$.
\begin{defn}
A spin structure on $(M,g)$ is a pair $(SM,p)$, where $SM$ is a principal $Spin_{1,3}^0$-bundle over $M$, the spin frame bundle, which carries a right action $R_S$ with respect to $S \in Spin_{1,3}^0$. While, $p: SM \to FM$ is a base-point preserving bundle homomorphism such that:
\begin{equation*}
    p \circ R_S = R_{\Lambda(S)} \circ p
\end{equation*}
Where $S \mapsto \Lambda(S)$ is the canonical universal covering map of the Lorentz group:
\begin{align*}
    \Lambda : Pin_{1,3} &\to \mathcal{L}\\
    S &\mapsto \Lambda^a_{\,\,b}(S)
\end{align*}
see Appendix \ref{app}.
\end{defn}
The just outlined construction of spin structures, may be prevented by topological obstructions of the manifold $M$. To understand why, consider first the simple case of $M$ simply connected and thus orientable and time orientable\footnote{Suppose M is not orientable by absurd. Consider then its oriented double covering $p:X\to M$ and pick $x_0 \in X$. Let $\lambda$ be a path connecting $x_0$ and some other point in the different fiber of the double covering, call this point $x_1$ (here we are using that the double cover of a non-orientable $M$ is connected - since it is a manifold, it is also path-connected). Now, $p \circ \lambda$ is a loop in $M$, which lifts to $\lambda$. Since $M$ is simply connected, $p \circ \lambda$ is homotopically trivial. But this is an absurd, since $\lambda$ is not a loop and this will imply $x_0 = x_1$}. Construct on it the frame bundle $(FM,\mathcal{L}^{\uparrow}_+, M, \phi)$ of oriented and time oriented frames. Consider a curve $\gamma \in FM$, from the simply connectedness of $M$, we know that $\mathfrak{p} \circ \gamma$ (the projection of the path $\gamma \in FM$ to the base manifold $M$) is contractible to the trivial closed curve at a point $x$ lying on $\mathfrak{p} \circ \gamma$. This means that we can continuously deform $\gamma$ in $FM$ to a loop restricted to $\mathfrak{p}^{-1}(x) = (\mathcal{L}^{\uparrow}_+)_x$ (the subscript denotes the fiber in $FM$ at the point $x$). But we know that $\pi_1 (\mathcal{L}^{\uparrow}_+) = \mathbb{Z}_2$, therefore this loop is not contractible within $\mathcal{L}^{\uparrow}_+$. The fact that such a loop is not contractible is, from standard QFT, crucial for the definition of spinors on Minkowski spacetime, as it assigns a sign to a spinor depending on their behaviour under a $2\pi$ rotation. 
In this case, $(\mathcal{L}^{\uparrow}_+)_x$ is the fiber of $FM$ and thus if we want a global notion of spin we need this loop to be incontractible also throughout $FM$, otherwise we may start with Spin half field on a fiber move along a closed curve in $FM$ and end up having a spin zero field. Therefore, we need to make sure that such loops, not just fiberwise, but also by deforming them throughout $FM$, are not all contractible: \textit{if $FM$ is simply connected, the notion of spinors on M cannot be defined}.\footnote{ Actually, it exists a more general result: $FM$ is not simply connected if and only if its second Stiefel-Whitney class of $M$ vanishes. I refer to \cite{Wald:1984rg} for further literature on it, where the Stiefel-Whitney class is defined.}.\\
In the case in which $M$ is not simply connected, the manifold may first of all fail to be orientable or time orientable. In that case, even $FM$ as the frame bundle of oriented and time oriented basis cannot be defined. Therefore, we need to asume $M$ to be at least orientable and time orientable if we want to define a notion of Spin.\\
Then, $M$ is assumed orientable and time orientable but not simply connected. In constructing $SM$, we cannot simply take the universal covering of $FM$ as now also $M$ is not simply connected. If we were to do that, the universal covering will also have a modified base manifold $M$ while we want to "duplicate" just the fibers. We can still demand, just the fibers to be "unwrapped", if the fundamental group is canonically splitted:
\begin{equation}\label{eq: spinor}
    \pi_1 (FM) = \pi_1(\mathcal{L}^{\uparrow}_+) \times \pi_1(M)
\end{equation}
In this case, we can split the task of finding the universal covering of the fibers and of the base manifold, as each loop in the bundle can be splitted in a loop in $M$ and another in the fiber. Moreover, if $M$ is simply connected we get back the previously discussed case.\\
Therefore, Eq. \eqref{eq: spinor} is a necessary condition to have spinors\footnote{Actually the above splitting might not be unique, leading to different notions of spinors introducing an ambiguity that results in the so called \textit{exotic spin structures} and the number of them equals that of generators of the first cohomology group $H^1(M; \mathbb{Z}_2)$. A possible physical interpretation of them was discussed in \cite{Isham:1978ec}}.\\\\

Now that we have discussed the conditions for existence of Spinors, it is natural to ask whether the manifolds we are interested in admit such structures. In particular, as we want to formulate field theories, to have existence and uniqueness of solutions of the Cauchy problem for differential equations we need to assume the spacetime to be globally hyperbolic \footnote{If we are dealing with spacetimes with boundaries, like the one of the Casimir effect, $I\times \mathbb{R}$ for $I \subset \mathbb{R}$ some compact real interval, then we can still have existence and uniqueness of solutions for boundary value problems by assigning Dirichlet boundary condition, despite such spacetimes are not globally hyperbolic. However, in what follows, we will study Cauchy problems for the field equation, therefore the spacetimes will always be assumed to be globally hyperbolic.}. For this reason, it's natural to ask whether a globally hyperbolic, orientable spacetime admits spinors. Let me start quoting a general theorem proved by Geroch \cite{doi:10.1063/1.1664507}:
\begin{thm}
Let $M$ be an open manifold with a Lorentzian metric $g$, then $M$ admits a spinor structure if and only if there exists on $M$ a global system of orthonormal tetrads
\end{thm}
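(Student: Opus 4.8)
The first move is to observe that a global system of orthonormal, oriented, time-oriented tetrads on $M$ is precisely a global section of the frame bundle $(FM,\mathcal{L}^{\uparrow}_+,M,\phi)$, and that a principal bundle admits a global section if and only if it is trivial. Under this dictionary the statement to be proved becomes: $M$ admits a spin structure $(SM,p)$ if and only if $FM$ is a trivial principal $\mathcal{L}^{\uparrow}_+$-bundle. The plan is to prove the two implications separately; the forward one is purely formal, and the converse is where the hypothesis that $M$ is open (non-compact) does the real work.

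\textbf{Trivial frame bundle $\Rightarrow$ spin structure (the easy half).} Given a global section $s\colon M\to FM$, I would use it to trivialise $FM\cong M\times\mathcal{L}^{\uparrow}_+$ via $s(x)\cdot A\leftrightarrow(x,A)$, then set $SM:=M\times Spin_{1,3}^0$ with right action $R_S(x,h):=(x,hS)$ and define $p(x,h):=(x,\Lambda(h))$, where $\Lambda\colon Spin_{1,3}^0\to\mathcal{L}^{\uparrow}_+$ is the covering homomorphism. It is then immediate that $p$ is base-point preserving and that $p\circ R_S=R_{\Lambda(S)}\circ p$ by multiplicativity of $\Lambda$, so $(SM,p)$ is a spin structure. (Note that a global tetrad already makes $M$ oriented and time-oriented, which is built into the very definition of $FM$.)

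\textbf{Spin structure $\Rightarrow$ trivial frame bundle.} Here I would regard $SM\to M$ as a principal $Spin_{1,3}^0$-bundle and argue topologically. Since $Spin_{1,3}^0\cong SL(2,\mathbb{C})$, and the polar decomposition writes $SL(2,\mathbb{C})$ as $SU(2)$ times a contractible factor, the structure group is homotopy equivalent to $SU(2)\cong S^3$, hence $2$-connected, and its classifying space $B\,Spin_{1,3}^0\simeq BSU(2)$ is $3$-connected. Next I would invoke the standard fact that a connected non-compact smooth $n$-manifold has the homotopy type of a CW complex of dimension at most $n-1$, which here means at most $3$. A map from a space of that homotopy type into a $3$-connected space is null-homotopic, so the classifying map $M\to B\,Spin_{1,3}^0$ is trivial, $SM$ is a trivial bundle, and it therefore has a global section $\widetilde{s}\colon M\to SM$; then $s:=p\circ\widetilde{s}\colon M\to FM$ is the desired global section of $FM$, i.e. a global orthonormal tetrad. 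Equivalently one may run obstruction theory for a section of $SM\to M$ directly: the obstructions live in $H^{k+1}(M;\pi_k(S^3))$, vanishing for $k\le 2$ because $\pi_k(S^3)=0$ and for $k\ge 3$ because $H^{\ge 4}(M;-)=0$ once $M$ has the homotopy type of a $\le 3$-complex. In Stiefel--Whitney language the same content reads: a spin structure exists iff $w_2(TM)=0$, while over an open oriented $4$-manifold $TM$ is trivial iff $w_1(TM)=w_2(TM)=0$, the only higher obstruction sitting in $H^3(M;\pi_2(SO(4)))=H^3(M;0)=0$.

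\textbf{Where the difficulty lies.} The single genuinely nontrivial ingredient is the homotopy-dimension input, that an open $n$-manifold is homotopy equivalent to an $(n-1)$-complex: this is exactly what annihilates the top-degree obstruction $H^4(M;\pi_3(S^3))=H^4(M;\mathbb{Z})$ to triviality of $SM$ (equivalently, to $M$ being parallelizable). It is the only place where non-compactness is used, and the argument --- hence the equivalence itself --- breaks down on closed $4$-manifolds, where $H^4(M;\mathbb{Z})\ne 0$ and the two conditions genuinely differ. Everything else, namely the section/triviality correspondence, the homotopy type of $SL(2,\mathbb{C})$, and the explicit model in the easy direction, is routine.
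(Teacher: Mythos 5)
Your proof is correct. Be aware, however, that the thesis itself offers no proof of this statement: it is quoted verbatim as Geroch's theorem with a citation to the original 1968 paper, so there is no in-text argument to compare yours against. What you have written is essentially the standard (and essentially Geroch's own) obstruction-theoretic proof: the easy direction by explicitly building $SM=M\times Spin^0_{1,3}$ over a trivialised $FM$, and the hard direction by noting that $Spin^0_{1,3}\cong SL(2,\mathbb{C})\simeq SU(2)\cong S^3$ has $3$-connected classifying space, while a connected open $4$-manifold has the homotopy type of a complex of dimension at most $3$, so any principal $Spin^0_{1,3}$-bundle — in particular $SM$ — is trivial, and $p$ pushes a global section of $SM$ down to a global tetrad. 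Two minor points, neither affecting correctness: the statement implicitly assumes $M$ connected (or at least with no compact components), which is where the homotopy-dimension input lives; and in your Stiefel--Whitney aside the vanishing of the top obstruction in $H^4(M;\pi_3(SO(4)))$ for open $M$ is also needed alongside the $H^3$ one you name, though you have already supplied the reason earlier in the same paragraph. Your closing remark correctly identifies non-compactness as the essential hypothesis and why the equivalence fails on closed $4$-manifolds.
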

The condition of admitting a global orthonormal tetrad, is equivalent for the spacetime to be parallelizable\footnote{A Manifold $M$ is parallelizable, if there exist smooth vector fields $V_1,\dots,V_n$ on the manifold, such that at every point $p \in M$ the tangent vectors $V_1(p),\dots,V_n(p)$ provide a basis of the tangent space at $p$. Equivalently a manifold is parallelizable if the tangent bundle is trivial, i.e. if the frame bundle has a global section on $M$}. A globally hyperbolic spacetime is parallelizable, by the following two theorems (\cite{Wald:1995yp}, \cite{Stiefel1935/36}):

\begin{thm}\label{thm: globhy}
A globally hyperbolic spacetime is diffeomorphic to $\mathbb{R} \times \Sigma$ and isometric to $\mathbb{R} \times \Sigma$, equipped with metric:
\begin{equation*}
    g = -\beta(\tau, \mathbf{x}) d\tau^2 + \gamma_{ij}(\tau, \mathbf{x})dx^i dx^j
\end{equation*}
with $\tau \in \mathbb{R}$ and $\mathbf{x} = \{ x^i \}$ coordinates on $\Sigma$, such that $\beta > 0$, $\Sigma_{\tau} = \{ \tau \} \times \Sigma$ is a Cauchy surface and $(\Sigma_{\tau}, \gamma_{ij}(\tau, \dot))$ is a Riemannian Manifold
\end{thm}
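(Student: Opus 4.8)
The plan is to prove this in the two classical stages: first the \emph{topological} splitting of Geroch, then its smooth, metrically orthogonal refinement (Bernal and S\'anchez); the references \cite{Wald:1995yp} and \cite{Stiefel1935/36} cited with the statement carry out essentially this program.

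\emph{Step 1: a global time function from the volume measure.} Since $M$ is a paracompact manifold it carries a smooth positive measure $\mu$ of total mass one (a volume form weighted by a rapidly decaying positive smooth density). For $p\in M$ put $t^-(p):=\mu(J^-(p))$ and $t^+(p):=\mu(J^+(p))$. Global hyperbolicity makes $J^\pm(p)$ closed and one checks $\mu(\partial J^\pm(p))=0$, which gives continuity of $t^\pm$; causality (no closed causal curves) forces $I^+(p)\cap J^-(p)=\emptyset$, so for $q\in I^+(p)$ the nonempty open set $I^+(p)\cap I^-(q)$ lies inside $J^-(q)$ but is disjoint from $J^-(p)$. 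Hence $t^-$ is strictly increasing and, symmetrically, $t^+$ strictly decreasing along every future-directed causal curve, so $\tau_0:=\log\!\big(t^-/t^+\big)$ is a continuous time function which along any inextendible causal curve is a strictly monotone surjection onto $\mathbb{R}$ (global hyperbolicity again controls the behaviour at the ends). Its level sets are then achronal and met exactly once by every inextendible timelike curve, i.e. Cauchy hypersurfaces, and flowing along any complete timelike vector field transverse to them yields a homeomorphism $M\cong\mathbb{R}\times\Sigma$.

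\emph{Step 2: smoothing and orthogonalisation.} Next I would upgrade $\tau_0$ to a smooth \emph{Cauchy temporal function} $\tau:M\to\mathbb{R}$: smooth, surjective, with everywhere timelike past-directed gradient $\nabla\tau$, and such that each level set $\tau^{-1}(c)$, $c\in\mathbb{R}$, is a smooth spacelike Cauchy hypersurface. Granted such a $\tau$, consider $V:=\nabla\tau/g(\nabla\tau,\nabla\tau)$; after a harmless reparametrisation (or conformal rescaling) it is complete, and its flow $\Phi_s$ satisfies $\tau\circ\Phi_s=\tau+s$, so it carries $\Sigma:=\tau^{-1}(0)$ diffeomorphically onto $\tau^{-1}(s)$ for all $s$. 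Thus $(\tau,\mathbf{x})\mapsto\Phi_\tau(\mathbf{x})$ is a diffeomorphism $\mathbb{R}\times\Sigma\to M$ under which $\tau^{-1}(c)$ becomes $\Sigma_\tau=\{\tau\}\times\Sigma$ in the notation of the statement. In these coordinates $\partial_\tau=V$ is parallel to $\nabla\tau$, hence $g$-orthogonal to every slice, so there are no $d\tau\,dx^i$ cross terms and
\[
 g \;=\; g(V,V)\,d\tau^2 \;+\; \gamma_{ij}(\tau,\mathbf{x})\,dx^i dx^j ,
\]
with $\gamma$ the induced metric on the slices. The coefficient of $d\tau^2$ takes the displayed form $-\beta$ with $\beta>0$ because $\nabla\tau$ is timelike, and $\gamma_{ij}(\tau,\cdot)$ is a Riemannian metric on $\Sigma$ because each $\Sigma_\tau$ is spacelike; and each $\Sigma_\tau$ is a Cauchy surface by construction, which is the last assertion.

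\emph{Main obstacle.} Step 1 only produces a \emph{continuous} time function with merely topological level sets, so the substance is Step 2: smoothing $\tau_0$ while keeping the gradient timelike \emph{everywhere} — a generic smooth approximation of a function with timelike gradient need not have one — and verifying that the smoothed level sets stay Cauchy and not just achronal. This is precisely the refinement of Geroch's theorem due to Bernal and S\'anchez, which I would invoke for that step; ensuring completeness of the flow of $V$ is a comparatively minor additional point.
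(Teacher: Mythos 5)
The paper does not prove this theorem; it is quoted from the literature (\cite{Wald:1995yp}), and your outline --- Geroch's volume-function construction of a continuous Cauchy time function followed by the Bernal--S\'anchez smoothing/orthogonalisation and the flow of $\nabla\tau/g(\nabla\tau,\nabla\tau)$ --- is exactly the standard argument those references supply, with the genuinely hard step (keeping the gradient timelike and the level sets Cauchy after smoothing) correctly identified and delegated. The only quibbles are cosmetic: strict monotonicity should be checked along causal, not just timelike, segments (a push-up argument), and completeness of the flow needs no reparametrisation since maximal integral curves of a vector field with $d\tau(V)=1$ are inextendible timelike curves and hence meet every level set.
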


\begin{thm}[\textbf{Stiefel}]
An orientable $3$ dimensional manifold $\Sigma$ is parallelizable
\end{thm}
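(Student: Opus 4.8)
The statement to prove is that a $3$-dimensional orientable manifold $\Sigma$ admits a global frame, equivalently that its tangent bundle $T\Sigma$ is trivial; since the converse (parallelizable implies orientable) is immediate, this is the substantive direction. The plan is to run the standard obstruction-theoretic argument. First I would equip $\Sigma$ with a Riemannian metric, reducing the structure group of $T\Sigma$ from $GL(3,\mathbb{R})$ to $O(3)$, and then use orientability to reduce further to $\SOii$; it then suffices to produce a global section of the principal $\SOii$-bundle $P(\Sigma)$ of oriented orthonormal frames. Fixing a CW structure on $\Sigma$ (three-dimensional in the compact case, and of the homotopy type of a $2$-complex when $\Sigma$ is open), I would build such a section skeleton by skeleton.

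The key homotopy-theoretic input is $\pi_0(\SOii)=\pi_2(\SOii)=0$ and $\pi_1(\SOii)=\mathbb{Z}_2$. The obstruction to extending a section of a principal $G$-bundle from the $(k-1)$-skeleton to the $k$-skeleton lies in $H^{k}(\Sigma;\pi_{k-1}(G))$ (with the coefficient system induced by the $\pi_1(\Sigma)$-action, which is trivial here since $\mathrm{Aut}(\mathbb{Z}_2)=1$). For $G=\SOii$ this group vanishes for every $k$ except $k=2$, where the obstruction is precisely the second Stiefel--Whitney class $w_2(T\Sigma)\in H^{2}(\Sigma;\mathbb{Z}_2)$: the $k=1$ obstruction sits in $H^{1}(\Sigma;\pi_0(\SOii))=0$, and the $k=3$ obstruction in $H^{3}(\Sigma;\pi_2(\SOii))=0$ (the latter automatic anyway in the open case for dimension reasons). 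So the whole problem reduces to showing $w_2(T\Sigma)=0$.

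For this --- and this is where the hypothesis $\dim\Sigma=3$ together with orientability really enters --- I would invoke Wu's formula: for a closed manifold $M$ one has $w(TM)=\mathrm{Sq}(v)$, where the Wu classes $v_k$ vanish as soon as $2k>\dim M$. For $\dim M=3$ this leaves only $v_0=1$ and $v_1=w_1(TM)$, so $w(TM)=\mathrm{Sq}(1+v_1)=1+v_1+\mathrm{Sq}^1 v_1=1+v_1+v_1^{2}$, hence $w_2(TM)=w_1(TM)^2$, which vanishes once $\Sigma$ is orientable. Thus in the closed case all obstructions vanish, $P(\Sigma)$ is trivial, and $\Sigma$ is parallelizable.

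The main obstacle is the non-compact case, which is in fact the one relevant to the thesis since Cauchy surfaces are typically open. Wu's formula as stated needs a fundamental class, so it cannot be quoted directly. I would handle this either by exhausting $\Sigma$ by compact orientable $3$-manifolds-with-boundary $K_i$, doubling each to a closed orientable $3$-manifold $DK_i$ (for which $w_2=0$ by the above), and transferring the vanishing back to $\Sigma$ by naturality, using that $\Sigma$ is homotopy equivalent to a $2$-complex so that only the $H^2$-obstruction is present; or, more economically, by simply citing Stiefel's original theorem \cite{Stiefel1935/36} or Geroch \cite{doi:10.1063/1.1664507} for the statement in full generality. A secondary point worth recording is that orientability of $\Sigma$ is genuinely used: it is what both permits the reduction to $\SOii$ and forces $w_1=0$ in the Wu computation.
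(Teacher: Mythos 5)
The paper offers no proof of this theorem to compare against: it is stated as a black box, attributed to Stiefel through the citation \cite{Stiefel1935/36}, and used only as an input (together with the splitting $M\simeq\mathbb{R}\times\Sigma$) to conclude that globally hyperbolic spacetimes are parallelizable and hence admit spin structures. Your proposal therefore supplies an argument where the paper supplies none, and the argument is the standard obstruction-theoretic one; it is essentially correct. The reduction to the oriented orthonormal frame bundle, the vanishing of the $k=1$ and $k=3$ obstructions from $\pi_0(\SOii)=\pi_2(\SOii)=0$, the identification of the surviving obstruction with $w_2(T\Sigma)$, the observation that the coefficient systems are untwisted because $\mathrm{Aut}(\mathbb{Z}_2)$ is trivial, and the Wu-formula computation $w_2=w_1^2=0$ in the closed orientable case are all as they should be. The one step worth tightening is the open case, which, as you rightly stress, is the one the thesis actually needs (Cauchy surfaces need not be compact). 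Your exhaustion-and-doubling argument does give $w_2(T\Sigma)\big|_{K_i}=0$ for each compact piece, but you still have to pass from this to $w_2(T\Sigma)=0$ globally; the clean way is to note that singular homology has compact supports, so $H_2(\Sigma;\mathbb{Z}_2)=\varinjlim H_2(K_i;\mathbb{Z}_2)$, while universal coefficients over the field $\mathbb{Z}_2$ identify $H^2(\Sigma;\mathbb{Z}_2)$ with $\mathrm{Hom}(H_2(\Sigma;\mathbb{Z}_2),\mathbb{Z}_2)$, so a class killing every $H_2(K_i)$ vanishes. With that line added your proof closes without appeal to the literature; falling back on \cite{Stiefel1935/36} or \cite{doi:10.1063/1.1664507} instead is of course exactly what the paper itself does.
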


Therefore, an orientable globally hyperbolic spacetime\footnote{Because time orientability is implied by global hyperbolicity from the previous theorem}, by the first theorem, is diffeomorphic to $M \simeq \mathbb{R} \times \Sigma$ and since, by assumption, it was orientable also $\Sigma$ must be orientable. Therefore, by the \textit{Stiefel theorem}, $\Sigma$ is parallelizable. So, since the cartesian product of parallelizabe manifolds is still parallelizable, we conclude from Geroch's result that each globally hyperbolic manifold admits a spinor structure.

\subsection{Dirac spinors and cospinors}
During the next section we will use, especially for proofs, concepts and results related to the Dirac Clifford algebra that are reported in the Appendix \ref{app}. However, leaving aside proofs, the section can be followed without all the details reported in Appendix and we will always cite definitions and statements necessary to follow the discussion. Moreover, from now on, $(M,g)$ is assumed to be globally hyperbolic admitting thus a spin structure.\\
We turn to the construction of spinors. Let's start by choosing a complex irreducible representation $\pi$ of the Dirac algebra $D$ and matrices $A,C \in GL(4,\mathbb{C})$ reppresenting the charge and hermitian conjugation (see Appenix \ref{app} and Definition \ref{def: hacc}).\\
Consider a globally hyperbolic spin spacetime $SM$, we define the assocaited vector bundle:
\begin{equation*}
    DM := SM \times_{Spin_{1,3}^0} \mathbb{C}^4
\end{equation*}
Where $Spin_{1,3}^0$ acts on $SM$ from the right and on $\mathbb{C}^4$ from the left via the representation $\pi$. In other words, $DM$ is obtained from the product bundle $SM \times \mathbb{C}^4$ by identifying:
\begin{equation*}
    [E,z] = [R_S E, \pi(S^{-1})z]
\end{equation*}
Where we think of $z \in \mathbb{C}^4$ as a column vector. A spinor is thus an equivalence class of tetrads or of a $2\pi$ rotation of it (due to the double covering definition of $SM$) and an ordered quadrupole of comlpex numbers transforming according to a representation of the $Spin^0_{1,3}$. The equivalence relation identifies what observers, differing by a Lorentz transformation, see. A spinor, is thus Lorentz invariant by definition. In this way, if we let $M_{2\pi}$ denote the $2\pi$ rotation in $Spin_{1,3}^0$ with respect to a frame element $e_{3}$ of the section $E \in SM$, we see that $[E,z]$ and $[R_{2\pi} E,z] = [E, \pi(M_{2\pi})z]$ describe different spinors, allowing to distinguish them depending on their sign under a $2\pi$ rotation. If we have worked with $FM$ instead of $SM$ this distinction would not have been possible.\\
We can analogously introduce the dual bundle $D^*M$, or use the usual isomorphism from $\mathbb{C}^4 \to (\mathbb{C}^4)^*$ induced by the standard inner product over $\mathbb{C}^4$, to define elements of $D^*M$:
\begin{equation*}
    [E,w^*] = [R_S E, w^* \pi(S)]
\end{equation*}
In this way, we can pointwise on the base manifold (fiberwise) interpret $D^*M$ as linear functionals over $DM$ via the following notation:
\begin{equation*}
    \langle [E,w^*],[E,z] \rangle := w^*(z) = \langle w,z \rangle
\end{equation*}
Where we need both elements having the same base "point" $E$, in order this inner product to be well defined. The right most expression above, denotes the standard inner product of $\mathbb{C}^4$.

\begin{defn}
The vector bundle $DM$ is called the Dirac spinor bundle, its elements are called Dirac spinors and a section of $DM$ is called a Dirac spinor field. The space of all smooth spinor fields is denoted as $C^{\infty}(DM)$, and the compactly supported ones as $C^{\infty}_0(DM)$.\\
The vector bundle $D^*M$ is called the Dirac cospinor bundle with relatively analogous definitions and notations.
\end{defn}
For notational convenience we indicate the canonical pairing of a spinor field $u$ with a cospinor $v$ by:
\begin{equation*}
    vu(x) := \langle v(x), u(x) \rangle
\end{equation*}
that defines a sesquilinear map (by the sesquilinearity of the standard inner product on $\mathbb{C}^4$): $C^{\infty}(D^*M) \times C^{\infty}(DM) \to C^{\infty}(M)$ as the image is a function of the spacetime point $x$.\\
Let us now introduce the charge conjugation and Dirac adjoint maps, for spinor and cospinors (later we will generalize this for spinor and cospinor fields) as we need these notions to formulate the Dirac equation. As we'll see, we need the chosen matrices $A,C \in GL(4, \mathbb{C})$ mentioned at the beginning and whose properties are discussed in Appendix \ref{app}:
\begin{defn}
We define the maps:
\begin{align*}
        (\cdot)^{\sharp} &: DM \to D^*M\\
        (\cdot)^{\sharp} &: D^*M \to DM
    \end{align*}
\begin{align*}
        (\cdot)^c &: DM \to DM\\
        (\cdot)^c &: D^*M \to D^*M
\end{align*}
as follows:
\begin{align*}
    [E,z]^{\sharp} := [E,z^*A] \hspace{15pt} [E,z^*]^{\sharp} := [E,A^{-1}z] \\
    [E,z]^c := [E,C^{-1}\overline{z}] \hspace{15pt} [E,z^*]^c := [E,\overline{z}^* C]
\end{align*}
\end{defn}
\begin{rem}
These maps are base-point preserving and act just on the spinorial component.
\end{rem}
These are well defined maps, compatibly with the definitions of $DM$ and $D^*M$, i.e. compatibly with the identifications of frames and vectors under the free action of $Spin_{1,3}^0$. To see it one needs Lemma \ref{lem: tec1} and:
\begin{align*}
    [R_S E, \pi(S^{-1})z]^{\sharp} &= [R_S E, z^* \pi(S^{-1})^* A]\\
    &= [R_S E, z^* A \pi(S)]\\
    &= [E, z^* A] = [E,z]^{\sharp}
\end{align*}
Moreover, the abovely defined maps, are vector bundle anti-isomorphisms:
\begin{lem}
For $q = [E,z] \in DM$ and $p = [E,w^*] \in D^*M$ we have:
\begin{align*}
    q^{\sharp \sharp} = q = q^{cc} &\hspace{20pt} p^{\sharp \sharp} = p = p^{c c}\\
    q^{\sharp c} = - q^{c \sharp} &\hspace{20pt} p^{\sharp c} = - p^{c \sharp}\\
    \langle q^{\sharp}, p^{\sharp}\rangle = &\overline{\langle p,q \rangle} = \langle p^c, q^c \rangle
\end{align*}
\end{lem}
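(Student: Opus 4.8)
All six identities concern only the $\mathbb{C}^4$-components — each of the four maps is base-point preserving — and their well-definedness with respect to the identifications $[E,z]=[R_SE,\pi(S^{-1})z]$ and $[E,w^*]=[R_SE,w^*\pi(S)]$ has already been verified with the help of Lemma~\ref{lem: tec1}. So the plan is simply to pick representatives $q=[E,z]$ and $p=[E,w^*]$, unwind the definitions of $\sharp$ and $(\cdot)^c$ on each of $DM$ and $D^*M$, and reduce every identity to an algebraic relation between the fixed matrices $A$ and $C$ recorded in Appendix~\ref{app} (Definition~\ref{def: hacc}). Throughout, the one thing I must keep straight is the interplay of the three operations on $\mathbb{C}^4$ — entrywise conjugation $\overline{(\cdot)}$, transposition, and the adjoint $(\cdot)^*$ — and of applying the spinor- versus the cospinor-version of each map to the correct slot.

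\textbf{Involutions.} For $q=[E,z]$ one has $q^\sharp=[E,z^*A]$; reading the row vector $z^*A$ as the adjoint of the column vector $A^*z$ and applying the cospinor version of $\sharp$ gives $q^{\sharp\sharp}=[E,A^{-1}A^*z]$, which equals $q$ exactly because $A$ is Hermitian, $A^*=A$. Likewise $q^{cc}=[E,C^{-1}\overline{C^{-1}\overline z}]=[E,(\overline C\,C)^{-1}z]$, which is $q$ by the relation $\overline C\,C=\mathbb{1}$. The cospinor statements $p^{\sharp\sharp}=p$ and $p^{cc}=p$ are the transposed versions of the same computations and invoke the same two relations.

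\textbf{Anticommutation.} I would compute both $q^{\sharp c}$ and $q^{c\sharp}$ as cospinors: starting again from $q=[E,z]$ one obtains $q^{\sharp c}=[E,z^T\overline A\,C]$ and $q^{c\sharp}=[E,z^T(C^*)^{-1}A]$, so $q^{\sharp c}=-q^{c\sharp}$ is precisely the mixed relation $C^*\overline A\,C=-A$ (equivalently $\overline A\,C=-(C^*)^{-1}A$) between the two matrices from Appendix~\ref{app} — the minus sign is carried by that relation. The cospinor version $p^{\sharp c}=-p^{c\sharp}$ follows the same way. The hard part, such as it is, will be exactly this bookkeeping — which matrix, which transpose or conjugate, which slot; conceptually nothing deep happens.

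\textbf{Pairings.} These are the cleanest. With $\langle[E,w^*],[E,z]\rangle=w^*(z)$, substituting the definitions makes the matrix factors cancel: $\langle q^\sharp,p^\sharp\rangle=(z^*A)(A^{-1}w)=z^*w$, and since transposing a scalar is trivial while adjoining it is complex conjugation, $z^*w=\overline{w^*z}=\overline{\langle p,q\rangle}$; similarly $\langle p^c,q^c\rangle=(\overline w^*C)(C^{-1}\overline z)=\overline w^*\,\overline z=\overline{w^*z}$. No property of $A$ or $C$ beyond invertibility enters here. Collecting the four computations gives all the assertions.
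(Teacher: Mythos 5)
Your proposal is correct and follows essentially the same route as the paper's proof: unwind the definitions on representatives, use $A=A^*$ and $\overline{C}C=\mathbb{1}$ for the involutivity, the relation $A=-C^*\overline{A}C$ from Theorem~\ref{thm: ex} for the sign in $q^{\sharp c}=-q^{c\sharp}$, and direct cancellation of the matrix factors for the pairings. Nothing is missing.
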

\begin{proof}
Recalling that, from Definition \ref{def: hacc}, we have $A = A^*$ and $\overline{C}C = \mathbb{1}$, we can compute for $q$:
\begin{align*}
    q^{\sharp \sharp} &= [E,z]^{\sharp \sharp} = [E, z^* A]^{\sharp} = [E, A^{-1} A^* z] = [E,z] = q\\
    q^{cc} &= [E,z]^{cc} = [E, C^{-1} \overline{z}]^{c} = [E, C^{-1} \overline{C^{-1} \overline{z}}] = [E,z] = q
\end{align*}
While for $p$ we have:
\begin{align*}
    p^{\sharp \sharp} &= [E,w^*]^{\sharp \sharp} = [E, A^{-1} w]^{\sharp} = [E, w^* (A^{-1})^* A] = [E,w^*] = p\\
    p^{cc} &= [E,w^*]^{cc} = [E, \overline{w^*} C]^{c} = [E, \overline{\big(\overline{w^*} C \big)} C] = [E,w^*] = p
\end{align*}
For what concerns the second set of equalities, compute using Theorem \ref{thm: ex}:
\begin{align*}
    q^{c \sharp} &= [E,z]^{c \sharp} = [E, C^{-1} \overline{z}]^{\sharp} = [E, \overline{z}^* (C^{-1})^* A]\\
    &= - [E, \overline{z^* A} C] = - [E, z^* A]^c = - [E,z]^{\sharp c} = - q^{\sharp c}\\
    p^{c \sharp} &= [E,w^*]^{c \sharp} = [E, \overline{w}^* C]^{\sharp} = [E, A^{-1} C^* \overline{w}]\\
    &= - [E, C^{-1} \overline{A^{-1} w}] = - [E, A^{-1} w]^c = - [E,w^*]^{\sharp c} = - p^{\sharp c}
\end{align*}
Finally for what concerns the last result:
\begin{align*}
    \langle q^{\sharp}, p^{\sharp}\rangle &= \langle [E,z]^{\sharp}, [E,w^*]^{\sharp} \rangle = \langle [E,z^*A] , [E,A^{-1}w] \rangle\\
    &= z^*w  =\overline{w^*z} = \overline{\langle p,q\rangle}\\
    \langle p^{c}, q^{c}\rangle &= \langle [E, \overline{w}^* C], [E, C^{-1} \overline{z}] \rangle = \overline{w}^* \overline{z}\\
    &= \overline{w^* z} = \overline{\langle p,q \rangle}
\end{align*}
\end{proof}
\begin{defn}
The maps $(\cdot)^{\sharp}$ are called Dirac adjoint, while $(\cdot)^c$ Dirac charge conjugation. 
\end{defn}
For spinor and cospinor fields these maps are defined pointwise and extended gobally after. This means that for $u \in C^{\infty}(DM)$ we have $u^{\sharp}(x) = u(x)^{\sharp}$, manifesting that the charge conjugation and adjoint preserve the support.\\
Now that we have introduced the spinor and cospinor bundles, defining pointwise on the manifold vector spaces, we can construct tensors and tensor fields starting from them. Therefore, on $M$, we have now the following bundle structures: $DM$, $D^*M$, $TM$, $T^*M$. Therefore, we can construct tensors combining both the spin and the standard tangent bundles, forming in this way a mixed spinor-tensor algebra. In presenting it, I will adopt the "physicist" notation working in local coordinates but, in order to do that, we need to introduce a proper index notation to distinguish spinor and vector indices.\\
Let us start from a local section $E$ of $SM$ and a real basis $b_A$ of $\mathbb{C}^4$. Then, we get local coordinates $E_A$ on $DM$ and, as a consequence, also on $TM$:
\begin{equation*}
    E_A = [E,b_A] \,\,\,\, \Rightarrow \,\,\,\, e = p \circ E
\end{equation*}
Where we have used the canonical projection map $p: SM \to FM$. Now $(E_A)$ represents a local frame for $DM$ while $\{ e_a \}_{a=0,\dots,3}$ is a frame for $TM$. One can further assume these to be normalized:
\begin{equation*}
    e^b(e_a) = \delta^b_{\,\,a} \hspace{15pt} E^B E_A = \delta^B_{\,\,A}
\end{equation*}
Where we have introduced the dual basis $e^b$, $E^B$ canonically obtained from the inner product structure that we have. The difference, with standard tensors over a manifold, stems from the way they transform under change of coordinates. In order to explain it, let $E'$ be a different section of $SM$ on the same spacetime region $\mathcal{O} \subset M$. Then, calling $S: \mathcal{O} \to Spin_{1,3}^0$ the map that associates to spacetime points transformations in the universal covering of the fibers of $SM$, we know that: $E' = R_{S^{-1}}E$. As a consequence:
\begin{equation*}
    E'_A = [E',b_A] = [R_{S^{-1}} E, b_A] = [E, b_B \pi(S^{-1})^B_{\,\, A}] = E_B \pi(S^{-1})^B_{\,\,A}
\end{equation*}
So we have derived, the following transformation rules for frames of $DM$:
\begin{equation*}
    E'_A = E_B \pi(S^{-1})^B_{\,\,A} \hspace{15pt} (E')^A = \pi(S)^A_{\,\,B} E^B
\end{equation*}
and the standard ones for frames of $TM$ ($\Lambda$ is the double covering map of the proper orthocronus Lorentz group):
\begin{equation*}
    e'_a = e_b \Lambda(S^{-1})^b_{\,\,a} \hspace{20pt} (e')^a = \Lambda(S)^a_{\,\,b} e^b
\end{equation*}
From which, for a general spinor-tensor $T = T^{Aa}_{\,\,\,\,\,\,Bb} E_A \otimes e_a \otimes E^B \otimes e^b$, the transformation rule of its components becomes:
\begin{equation*}
    (T')^{Aa}_{\,\,\,\,\,\,Bb} = \pi(S)^A_{\,\,F} \pi(S^{-1})^D_{\,\,B} \Lambda(S^{-1})^{d}_{\,\,b} \Lambda(S)^a_f T^{Ff}_{\,\,\,\,\,\,Dd}
\end{equation*}
To better understand the local coordinate notation, we express charge conjugation and hermitian adjointeness using local frames. In particular, from the transformation rules of frames:
\begin{equation*}
    E_A^{\sharp} = \delta_{AB}A^B_{\,\,C}E^C \hspace{15pt} E_A^c = E_B(C^{-1})^B_{\,\,A}
\end{equation*}
we find, the following general transformation rules:
\begin{align*}
    u^{\sharp} &= (u^A E_A)^{\sharp} =\overline{u^A} \delta_{AB} A^B_{\,\,C} E^C = (u^{\sharp})_C E^C\\
    u^c &= (u^A E_A)^c = \overline{u^A} E_B(C^{-1})^B_{\,\,A} =(u^c)^B E_B\\
    v^{\sharp} &= (v_A E^A)^{\sharp} = \overline{v_A} E_C\delta^{AB} (A^{-1})^C_{\,\,B} = (v^{\sharp})^C E_C\\
    v^c &= (v_A E^A)^c = \overline{v_A} E_C C^C_{\,\, B} \delta^{BA} = (v^c)^C E_C
\end{align*}
From which we extract:
\begin{align*}
    (u^{\sharp})_C &= \overline{u}^A \delta_{AB} A^B_{\,\,C} \hspace{20pt} (u^c)^A = (C^{-1})^A_{\,\,B} \overline{u}^B\\
    (v^{\sharp})^A &= (A^{-1})^A_{\,\,B} \delta^{BC} \overline{v_C} \hspace{20pt} v^c_A = \overline{v_B} C^B_{\,\,A}
\end{align*}\\\\
Finally, let us introduce for vector fields $v$ and covector fields $k$ the Feynman slash notation:
\begin{equation*}
    \cancel{v} := v^a \gamma^A_{\,\,\,\,aB} \hspace{20pt} \cancel{k} := k_a \gamma^{Aa}_{\,\,\,\,\,\,B}
\end{equation*}\\\\

Before moving on to the next section, where we present and study the Dirac equation, we still owe a proof of independence, of the described spin bundles, from the choice of representation of the Dirac algebra $D$ and matrices $A,C \in GL(4,\mathbb{C})$ that we did at the beginning:
\begin{prop}
Consider the Dirac spinor bundle $DM_0$ defined analogously to $DM$, but with respect to different $\pi_0$ and different matrices $A_0 = \gamma_0$, $C_0 = \gamma_2$ that induce the different notions of charge conjugation and hermitian adjoint denoted respectively as: $(\cdot)^-$ and  $(\cdot)^+$. Let also $\cancel{\nabla}_0$ denote the Dirac operator defined through $\pi_0$. Then, there exist a base-point preserving, vector bundle isomorphism $\lambda: DM \to DM_0$ and induced isomorphism $\lambda^*: D^*M \to D^*M_0$ such that:
\begin{equation*}
    \lambda \circ (\cdot)^\sharp = (\cdot)^+ \circ \lambda^* \hspace{20pt} \lambda \circ \cancel{\nabla} = \cancel{\nabla}_0 \circ \lambda \hspace{20pt} \lambda \circ (\cdot)^c = (\cdot)^- \circ \lambda
\end{equation*}
This isomorphism is unique, up to an overall sign.
\end{prop}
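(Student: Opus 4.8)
The plan is to use the classical fact that all complex irreducible representations of the Dirac algebra $D$ are equivalent. By this, there exists $T\in GL(4,\mathbb{C})$ with $\pi_0(a)=T\pi(a)T^{-1}$ for every $a\in D$, and since $\pi$ is irreducible Schur's lemma forces $T$ to be unique up to a nonzero complex scalar. In particular $\pi_0(S)=T\pi(S)T^{-1}$ for $S\in Spin_{1,3}^0$, so the fibrewise map $\lambda([E,z]):=[E,Tz]$ is well defined from $DM$ to $DM_0$: applied to $[R_SE,\pi(S^{-1})z]$ it gives $[R_SE,T\pi(S^{-1})z]=[R_SE,\pi_0(S^{-1})Tz]=[E,Tz]$ in $DM_0$. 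Dually I would set $\lambda^*([E,w^*]):=[E,w^*T^{-1}]$, which is well defined by the same computation and satisfies $\langle\lambda^*p,\lambda q\rangle=\langle p,q\rangle$.

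The core of the argument is to spend the scalar freedom of $T$ so that it also intertwines the two conjugation structures. First, $T^*A_0T$ obeys the same defining relation with respect to $\pi$ as $A$ does: $(T^*A_0T)\gamma^a(T^*A_0T)^{-1}=T^*(\gamma^a_0)^*(T^{-1})^*=(T^{-1}\gamma^a_0T)^*=(\gamma^a)^*$, and it is Hermitian since $A_0=A_0^*$. Hence by Schur and the normalization of $A$ in Definition \ref{def: hacc} it equals a real multiple of $A$; rescaling $T\mapsto cT$, which sends $T^*A_0T\mapsto|c|^2T^*A_0T$, I may arrange $A=T^*A_0T$, after which the only residual freedom is $T\mapsto cT$ with $|c|=1$. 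Next, $\bar T^{-1}C_0T$ intertwines $\gamma^a$ with $\overline{\gamma^a}$ exactly as $C$ does and satisfies $\overline{\bar T^{-1}C_0T}\,\bar T^{-1}C_0T=T^{-1}\overline{C_0}C_0T=\mathbb{1}$, so by Schur it is a multiple $sC$ of $C$ with $|s|=1$. Under $T\mapsto cT$ with $|c|=1$ this multiple transforms by $c^2$, so choosing $c$ with $c^2=\bar s$ gives $C=\bar T^{-1}C_0T$, equivalently $TC^{-1}=C_0^{-1}\bar T$.

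With these two matrix identities the three intertwining relations follow by direct computation on representatives. For the Dirac adjoint, $\lambda(p^\sharp)=[E,TA^{-1}w]$ and $(\lambda^*p)^{+}=[E,A_0^{-1}(T^*)^{-1}w]$ agree precisely because $A=T^*A_0T$ (and the companion identity $\lambda^*(q^\sharp)=(\lambda q)^{+}$ for spinors follows from the same equation). For charge conjugation, $\lambda(q^c)=[E,TC^{-1}\bar z]$ and $(\lambda q)^{-}=[E,C_0^{-1}\overline{Tz}]=[E,C_0^{-1}\bar T\bar z]$ agree because $TC^{-1}=C_0^{-1}\bar T$. For the Dirac operator I would work in a local section $E$ of $SM$: there $\lambda$ is multiplication by the constant matrix $T$, the spin connections on $DM$ and $DM_0$ are $d+\pi_*\omega$ and $d+(\pi_0)_*\omega=d+T\pi_*\omega\,T^{-1}$ for the common $\mathfrak{spin}_{1,3}$-valued connection form $\omega$ on $SM$, and the Clifford action is $\gamma^a$ resp. $\gamma^a_0=T\gamma^aT^{-1}$; hence on a local representative $s$, $\cancel{\nabla}_0(Ts)=\gamma^a_0(d+(\pi_0)_*\omega)_a(Ts)=T\gamma^a(d+\pi_*\omega)_a s=T(\cancel{\nabla}s)$, i.e. $\lambda\circ\cancel{\nabla}=\cancel{\nabla}_0\circ\lambda$. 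Finally, once all normalizations are imposed the only remaining freedom in $T$ is $T\mapsto cT$ with $|c|=1$ and $c^2=1$, that is $c=\pm1$, which is exactly the asserted uniqueness up to an overall sign.

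The hard part will be the scalar bookkeeping in the second paragraph: there is a single complex parameter of freedom in $T$, and one must spend its modulus on normalizing the Hermitian-adjoint matrix and its phase on normalizing the charge-conjugation matrix, then check that exactly $\{\pm1\}$ survives. This relies on the precise content of Definition \ref{def: hacc}, namely that $A$ is pinned down up to a positive real scalar and $C$ up to a phase. By contrast, the Dirac-operator identity is essentially automatic once one passes to an $SM$-trivialization, since $\lambda$ becomes a constant matrix there.
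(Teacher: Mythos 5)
Your proof is correct and follows essentially the same route as the paper: a fibrewise map $[E,z]\mapsto[E,Lz]_0$ built from the equivalence of irreducible representations, with the intertwining conditions $L^*A_0L=A$, $\overline{L}^{-1}C_0L=C$, $L\pi=\pi_0L$ pinning $L$ down up to a sign. The scalar bookkeeping you carry out explicitly (spending $|c|$ on the adjoint matrix and the phase on the charge-conjugation matrix) is precisely the content of the appendix result (Theorem \ref{thm: ex}) that the paper's proof cites instead of reproving, and your local-trivialization check of $\lambda\circ\cancel{\nabla}=\cancel{\nabla}_0\circ\lambda$ is a welcome elaboration of a step the paper leaves implicit.
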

\begin{proof}
On each fiber, the bundle isomorphism must be given by Theorem \ref{thm: fund} by:
\begin{equation*}
    \lambda: [E,z] \mapsto [E, Lz]_0
\end{equation*}
For some $L \in GL(4,\mathbb{C})$. As a consequence, still fiberwise:
\begin{equation*}
    \lambda^*: [E,z^*] \mapsto [E,z^*L^{-1}]_0
\end{equation*}
In order to get the right intertwining properties, we need $L$ to satisfy:
\begin{equation*}
    L A^{-1} = A_0 (L^*)^{-1} \hspace{20pt} LC^{-1} = C_0^{-1} \overline{L} \hspace{20pt} L \pi = \pi_0 L
\end{equation*}
Then, we apply Theorem \ref{thm: ex} to ensure that such an $L$ exist unique up to a sign. By continuity of the attachement of the fibers in the definition of a bundle, $L$ is locally constant on $M$ and defined at each point of $M$ by its connectedness. This allows us to extend the map $\lambda$ globally, making it well defined and unique up to a global sign.
\end{proof}

To summarize, before moving on to define further structures over $DM$, we have seen how spinors and cospinors are defined, introduced charge conjugation and Dirac adjoint and showed how these construction are all independent from the choice of the representation of the Dirac algebra.\\
In the next section, on $DM$ and $D^*M$ we will define a dynamics via the \textit{Dirac equation}.

\subsection{The classical Dirac equation}
We now have all the necessary formalism to formulate the Dirac equation on globally hyperbolic spacetimes.\\\\
The first thing we need, is to define a notion of derivation on spinors. We start expressing the Levi-Civita connection in the frame bundle $FM$. This, is done using the Levi-Civita connection $\nabla$ that we have on $TM$. In fact, for $v \in TM$:
\begin{equation*}
    \nabla v = (\nabla_{\mu} v^{\rho}) dx^{\mu} \otimes \frac{\partial}{\partial x^{\rho}} = (\partial_{\mu} v^{\rho} + \Gamma^{\rho}_{\,\, \mu \nu}v^{\nu}) dx^{\mu} \otimes \frac{\partial}{\partial x^{\rho}}
\end{equation*}
The advantage of having a frame, is the expansion of vector fields in terms of the elements of the frame. Namely, if $v\in TM$ we decompose $v = v^a e_a$, where the decomposition is pointwise defined. But, we also know, as elements in the tangent space at each point, that $x^a := e^a_{\mu}x^{\mu}$, where the greek index denotes the components with respect to a coordinate chart over the base manifold. In particular, we recognize the elements $e_{\mu}^a$ as those vectors:
\begin{equation*}
    g_{\mu \nu} e^{\mu}_a e^{\nu}_b = \eta_{ab}
\end{equation*}
from which we can get:
\begin{equation*}
    e_{\mu}^a = g_{\mu \nu} \eta^{ab} e^{\nu}_b
\end{equation*}
as both sides have the same action on basis vectors. From this, it follows:
\begin{equation*}
    \eta_{ab}e^a_{\mu} e^b_{\nu} = g_{\mu \nu} \Rightarrow g^{\mu \nu} e^a_{\mu} e^b_{\nu} = \eta^{ab} \Rightarrow \eta^{ab} e^{\mu}_a e^{\nu}_b = g^{\mu \nu}
\end{equation*}
So, we see that we can raise and lower indices using $\eta^{ab}$ and $\eta_{ab}$. Thus, we express the covariant derivative of the vector $v$ using the frame:
\begin{equation}\label{eq: covind}
    \nabla v = (\nabla_b v^a) e^b \otimes e_a = (\partial_b v^a + \Gamma^a_{\,\, bc}v^c) e^b \otimes e_a
\end{equation}
Where, in local coordinates over the manifold, the "frame derivation" is a directional derivative $\partial_b = e^{\mu}_b \partial_{\mu}$. Therefore, the only thing still undetermined in this expression, that we need to somehow relate to $\Gamma^{\rho}_{\,\,\mu \nu}$, are the Christoffel symbols $\Gamma^a_{\,\,bc}$. For this purpose, compare the Levi-Civita derivatives expressions in coordinate and in the frame:
\begin{align*}
    (\nabla_{\mu} v^{\rho}) dx^{\mu} \otimes \frac{\partial}{\partial x^{\rho}} &= (\nabla_b v^a) e^b \otimes e_a\\
    &= (\nabla_b v^a) e^b_{\mu} e_a^{\rho} dx^{\mu} \otimes \frac{\partial}{\partial x^{\rho}}\\
\end{align*}
From which it follows:
\begin{equation*}
    \partial_b v^a + \Gamma^a_{\,\, bc}v^c = (\partial_{\mu} v^{\rho} + \Gamma^{\rho}_{\,\, \mu \nu}v^{\nu}) e_b^{\mu} e^a_{\rho} = \partial_b (v^{\rho}e^a_{\rho}) - v^{\rho}\partial_b e^a_{\rho} + \Gamma^{\rho}_{\,\, \mu \nu} v^{\nu} e^{\mu}_b e^a_{\rho}
\end{equation*}
Where, at the second step we have used the Leibniz rule. Hence, using $\partial_b(e^a_{\rho} e^{\rho}_c) = \partial_b \delta^a_c = 0$, 
\begin{equation}\label{eq: spte}
    \Gamma^a_{\,\, bc} = - e^{\rho}_c \partial_b e^a_{\rho} + e^a_{\rho} e^{\mu}_b e^{\nu}_c \Gamma^{\rho}_{\,\, \mu \nu} = e^a_{\rho} \partial_b e^{\rho}_c + e^a_{\rho} e^{\mu}_b e^{\nu}_c \Gamma^{\rho}_{\,\, \mu \nu}
\end{equation}
That is the type of relation we were searching for, in order to fully understand the covariant derivative in the frame bundle.\\\\
Before defining the connection over $DM$, starting from that on $FM$, notice that we can also express it in terms of the so called connection forms. These are introduced noticing that, to fully determine the covariant differentiation, we just need to know how $\nabla$ acts on basis vectors:
\begin{equation*}
    \nabla e_a = (\partial^b e_a + \Gamma^b_{\,\,a c} e^c) \otimes e_b = \Gamma^b_{\,\,a c} e^c \otimes e_b
\end{equation*}
So, we can define the connection form $\omega^b_{\,\, a} = \Gamma^b_{\,\,a c} e^c$ (a one-form, see Eq. \eqref{eq: covind}) and express the terms involving the Christoffel symbols in the covariant derivatives, using instead the connection forms.\\
With this in mind, let us now introduce the connection over $DM$. 
To start, define the one-form $\mathbf{\Omega} \in \Omega^1(FM, \mathfrak{lie}(\mathcal{L}_+^{\uparrow}))$ as that one form over the frame bundle $FM$, taking value in the Lie algebra $\mathfrak{lie}(\mathcal{L}^{\uparrow}_+)$, 
such that its pullback with respect to any local section $e$ of $FM$ is: $e^*\mathbf{\Omega}^a_{\,\,c} = \omega^a_{\,\,c}$. We define then (here $d \Lambda: \mathfrak{lie}(Spin_{1,3}^0) \to \mathfrak{lie}(\mathcal{L}_+^{\uparrow})$ and is explicitly defined Prop. \ref{prop: lieal}):
\begin{equation*}
    \mathbf{\Sigma} = (d\Lambda)^{-1} p^*(\mathbf{\Omega}) = \frac{1}{4}p^*(\mathbf{\Omega}^a_{\,\,c})\gamma_a \gamma^c
\end{equation*}
Where $\mathbf{\Sigma}$ is then a $\mathfrak{lie}(Spin_{1,3}^0)$ valued one-form on $SM$ that defines a connection over $DM$, that we call the spin connection. In a local section $E$ of $SM$ the spin connection one forms are denoted as $\sigma^A_{\,\,\,\,b \, C}$ and are then given by the pullback of $\mathbf{\Sigma}$ with respect to $E$. Since we have $E^*p^* =(p \circ E)^*=e^*$, we get:
\begin{equation*}
    \sigma_b = \frac{1}{4}\Gamma^a_{\,\, bc} \gamma_a \gamma^c
\end{equation*}
\begin{figure}[H]
 		\centering
 		\includegraphics[width=1.30	\columnwidth]{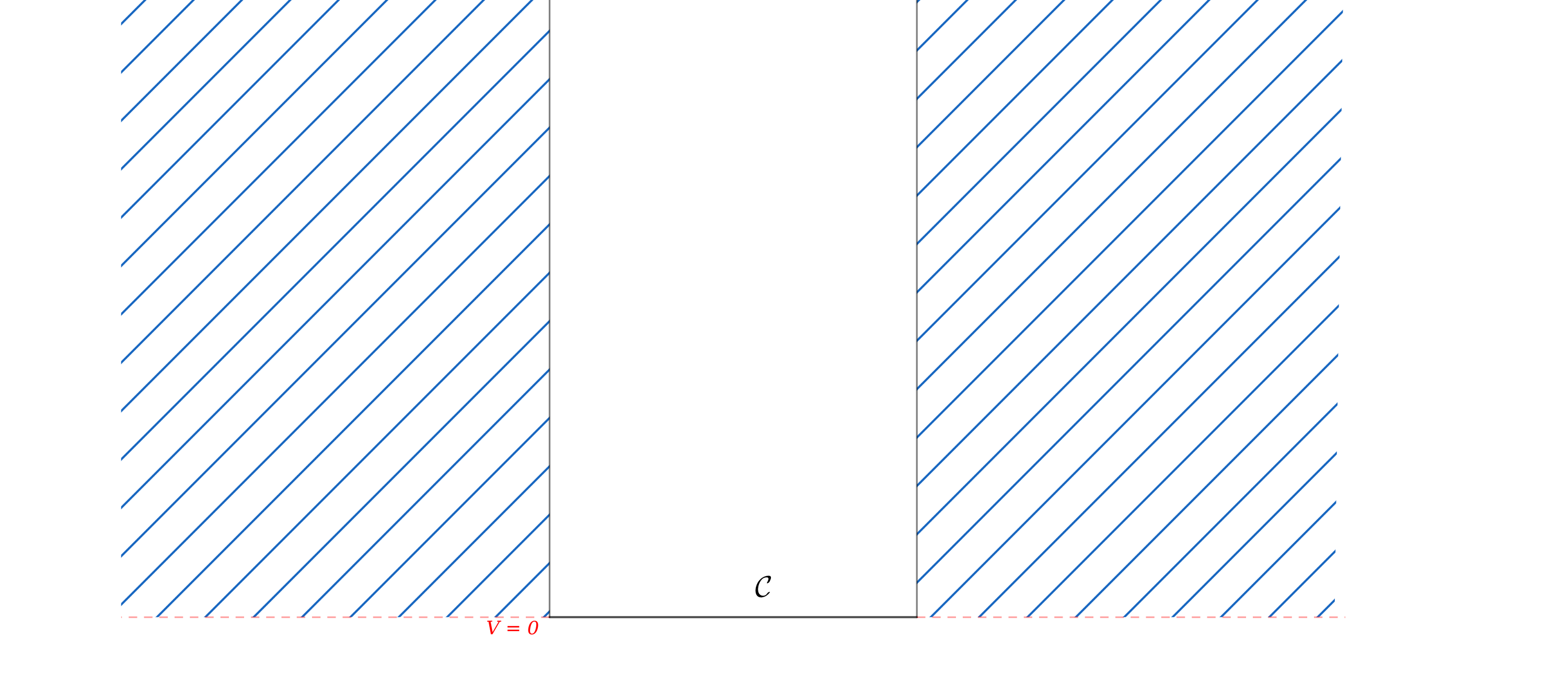}
\end{figure}
Therefore, the covariant derivative of a spinor field $u \in DM$ is:
\begin{equation*}
    \nabla u = (\nabla_b u^A)e^b \otimes E_A = (\partial_b u^A + \sigma_{\,\,\,\,b \, C}^A u^C)e^b \otimes E_C
\end{equation*}
while, for cospinor fields $v$, using the relation $\partial_a(vu) = (\nabla_a v)u + v (\nabla_a u)$:
\begin{equation*}
    \nabla v = (\nabla_b v_C)e^b \otimes E^C = (\partial_b v_C - v_A \sigma_{\,\,\,\,b \, C}^A)e^b \otimes E^C
\end{equation*}
Using the short-hand notation, dropping spinor indices, we rewrite the above as:
\begin{equation*}
    \nabla_b u = \partial_b u + \sigma_b u \hspace{20pt} \nabla_b v = \partial_b v - v \sigma_b 
\end{equation*}\\\\
Now that we have a notion of covariant derivation for spinors, we can study variation of tensors. The first important result, regards the variation of the Dirac gamma matrices that, as we now show, are constant with respect to the covariant differentiation:
\begin{lem}
The section $\gamma$ is covariantly constant
\end{lem}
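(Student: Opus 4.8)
The statement is that $\gamma$ (the section of the bundle $\mathrm{Hom}(DM \otimes TM, DM)$, or more precisely the appropriate spinor-tensor whose components in a frame are the matrices $\gamma^A_{\ \ aB}$) is covariantly constant, i.e. $\nabla_b \gamma^A_{\ \ a C} = 0$. The plan is to write out $\nabla_b$ acting on this mixed spinor-tensor object using the Leibniz rule, producing one term from the vector index (governed by the Christoffel symbols $\Gamma^c_{\ \ ba}$ built from the frame), two terms from the two spinor indices (governed by the spin connection one-forms $\sigma_b$ with the appropriate signs for an upper and a lower spinor index), and the ordinary frame-derivative term $\partial_b \gamma^A_{\ \ aC}$, which vanishes because the $\gamma$ matrices are a fixed choice of representation of the Dirac algebra — constant $\mathbb{C}$-matrices — so their frame components do not depend on the base point.

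**Key steps.** First I would recall from the previous section the explicit form of the spin connection one-form, $\sigma_b = \tfrac14 \Gamma^a_{\ \ bc}\,\gamma_a \gamma^c$, and the covariant derivative rules $\nabla_b u = \partial_b u + \sigma_b u$ for spinors and $\nabla_b v = \partial_b v - v\sigma_b$ for cospinors. Then, treating $\gamma_a$ (one vector index, one upper and one lower spinor index) accordingly, I would write
\begin{equation*}
\nabla_b \gamma_a = \partial_b \gamma_a + \sigma_b \gamma_a - \gamma_a \sigma_b - \Gamma^c_{\ \ ba}\gamma_c .
\end{equation*}
Since $\partial_b \gamma_a = 0$, this reduces to $[\sigma_b, \gamma_a] - \Gamma^c_{\ \ ba}\gamma_c$. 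The heart of the computation is then the commutator identity
\begin{equation*}
[\sigma_b, \gamma_a] = \tfrac14 \Gamma^c_{\ \ bd}\,[\gamma_c \gamma^d, \gamma_a] = \Gamma^c_{\ \ ba}\gamma_c ,
\end{equation*}
which follows purely from the Clifford relations $\gamma_c \gamma^d + \gamma^d \gamma_c = 2\delta_c^{\ d}\mathbb{1}$ (equivalently $\gamma_a \gamma_b + \gamma_b\gamma_a = 2\eta_{ab}\mathbb{1}$) together with the antisymmetry $\Gamma_{abc} = -\Gamma_{bac}$ of the frame Christoffel symbols (which holds because the frame is orthonormal and the Levi-Civita connection is metric-compatible). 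Substituting this back gives $\nabla_b\gamma_a = 0$, and I would note the computation passes unchanged to the contravariant-index version $\cancel{\cdot}$ by raising $a$ with $\eta^{ab}$, which is covariantly constant.

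**Main obstacle.** The only genuinely substantive point — and the step I would present most carefully rather than wave through — is the Clifford-algebra commutator identity $[\gamma_c\gamma^d, \gamma_a] = 2(\delta^d_{\ a}\gamma_c - \eta_{ca}\gamma^d)$ and its contraction against the antisymmetric part of $\Gamma^c_{\ \ bd}$ to reproduce exactly $\Gamma^c_{\ \ ba}\gamma_c$; one must be attentive to whether the spin-connection coefficient $\Gamma^a_{\ \ bc}$ appearing in $\sigma_b$ is already the fully antisymmetrized (in its first and third slots) object or whether only its antisymmetric part survives the commutator. Everything else — Leibniz rule, vanishing of $\partial_b\gamma_a$, sign conventions for spinor versus cospinor indices — is bookkeeping that I would cite from the definitions established earlier in the chapter (in particular the Appendix material on the Dirac Clifford algebra and Proposition on the Lie algebra map $d\Lambda$).
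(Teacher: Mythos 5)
Your proposal is correct and follows essentially the same route as the paper: expand $\nabla_b\gamma_a$ via the Leibniz rule with the two spin-connection terms and the frame Christoffel term, use $\sigma_b=\tfrac14\Gamma^d_{\,\,bc}\gamma_d\gamma^c$ and the Clifford relations to evaluate $[\sigma_b,\gamma_a]$, and cancel against $\Gamma^c_{\,\,ba}\gamma_c$ using the antisymmetry $\eta_{ad}\Gamma^d_{\,\,bc}+\eta_{cd}\Gamma^d_{\,\,ba}=0$ coming from metric compatibility. The only cosmetic difference is that you isolate the commutator identity as a separate step, while the paper carries all terms together and exhibits the final expression as a multiple of the vanishing symmetrized Christoffel combination.
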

\begin{proof}
Pick a specific local frame, to get $\gamma_{\,\,\,\,a \, B}^A$, and compute:
\begin{equation*}
    \nabla_b \gamma_{\,\,\,\,a \, B}^A = \sigma_{\,\,\,\,b\,C}^A \gamma_{\,\,\,\,a \,B}^C - \sigma_{\,\,\,\,b\,B}^C \gamma_{\,\,\,\,a \, C}^A - \Gamma^c_{ba} \gamma^A_{\,\,\,\,c \, B}
\end{equation*}
Dropping spinor indices:
\begin{align*}
    \nabla_b \gamma_a &= \sigma_b \gamma_a - \gamma_a \sigma_b - \Gamma^c_{ba} \gamma_c = \frac{1}{4} \Gamma^d_{bc}(\gamma_d \gamma^c \gamma_a - \gamma_a \gamma_d \gamma^c) - \Gamma^c_{ba} \gamma_c\\
    &= \frac{1}{4} \Gamma^d_{bc}([\gamma_d \gamma^c, \gamma_a] - 4 \delta^c_a \gamma_d)\\
    &= \frac{1}{4} \Gamma^d_{bc}(\gamma_d \{\gamma^c, \gamma_a\} - \{\gamma_d,\gamma_a\}\gamma^c - 4 \delta^c_a \gamma_d)\\
    &= \frac{1}{4} \Gamma^d_{bc}(2\gamma_d \delta^c_a - 2\eta_{ad}\gamma^c - 4 \delta^c_a \gamma_d)\\
    &= -\frac{1}{2} \Gamma^d_{bc}(\eta_{ad}\gamma^c + \delta^c_a \gamma_d)\\
    &= -\frac{1}{2}\gamma^c(\eta_{ad} \Gamma^d_{bc} + \eta_{cd} \Gamma^d_{ba})
\end{align*}
Where at the first step we have used the above definition of $\sigma$. But, after a standard but tedious computation using Eq, \eqref{eq: spte} with the fact that $\partial_{\mu} g_{\alpha \beta} = g_{\alpha \tau}\Gamma^{\tau}_{\mu \beta} + g_{\beta \tau}\Gamma^{\tau}_{\mu \alpha}$, one gets:
\begin{equation*}
    \eta_{ad} \Gamma^d_{bc} + \eta_{cd} \Gamma^d_{ba} = 0
\end{equation*}
From which it follows that:
\begin{equation*}
    \nabla_b \gamma_a = 0
\end{equation*}
\end{proof}
We will need this lemma later on, as covariant derivatives of $\gamma$-matrices will arise.\\
Now, we define the differential operator used in formulating the Dirac equation, called \textit{Dirac derivative}:
\begin{defn}
Let $\cancel{\nabla} : C^{\infty}(DM) \to C^{\infty}(DM)$ be a first order partial differential operator defined by:
\begin{equation*}
    \cancel{\nabla} := \gamma^a \nabla_a
\end{equation*}
we call it the Dirac operator and $\gamma^a$ is seen as a map from $DM$ to itself acting from the left.\\
Analogously we define $\cancel{\nabla} : C^{\infty}(D^*M) \to C^{\infty}(D^*M)$, still called the Dirac operator, via:
\begin{equation*}
    \cancel{\nabla} := \gamma^a \nabla_a
\end{equation*}
now $\gamma^a$ is seen as a map from $D^*M$ to itself acting from the right.
\end{defn}
\begin{rem}
The Dirac operator in a local frame becomes:
\begin{align*}
    \cancel{\nabla} u &= E_A(\cancel{\nabla} u)^A = E_A \gamma^{aA}_{\,\,\,\,\,\,B} \nabla_a u^B = E_A \gamma^{aA}_{\,\,\,\,\,\,B}(\partial_au^B + \sigma_{\,\,\,\,a \, C}^B u^C)\\
    \cancel{\nabla} v &=(\cancel{\nabla} v)_A E^A = (\nabla_a v_B)\gamma^{aB}_{\,\,\,\,\,\,A} E^A = (\partial_a v_B - v_C \sigma_{\,\,\,\,a \,B}^C) \gamma^{aB}_{\,\,\,\,A} E^A
\end{align*}
and by dropping spinor indices:
\begin{align*}
     \cancel{\nabla} u &= \gamma^a \nabla_a u = \gamma^a(\partial_a u + \sigma_a u)\\
    \cancel{\nabla} v &= (\nabla_a v) \gamma^a = (\partial_a v - v\sigma_a)\gamma^a
\end{align*}
\end{rem}
Now we have all that is needed to define unambiguously the Dirac equation:
\begin{defn}
The Dirac equation for $u \in C^{\infty}(DM)$, resp. $v \in C^{\infty}(D^*M)$, is:
\begin{align*}
    (-i\cancel{\nabla} + m)u &=0\\
    (i\cancel{\nabla} + m)v &= 0
\end{align*}
\end{defn}
In particular, if we know a solution, also its adjoint is:
\begin{lem} \label{lem: uscnorm}
For all spinor field $u$ and cospinor field $v$:
\begin{align*}
    (\cancel{\nabla} u)^{\sharp} = \cancel{\nabla} u^{\sharp} &\hspace{20pt} (\cancel{\nabla} v)^{\sharp} = \cancel{\nabla}v^{\sharp}\\
    (\cancel{\nabla} u)^c = -\cancel{\nabla}u^c &\hspace{20pt} (\cancel{\nabla} v)^c = -\cancel{\nabla}v^c
\end{align*}
Moreover, for any timelike future pointing vector $n$ on $M$ we have that $u^{\sharp}\cancel{n}u \geq 0$
\end{lem}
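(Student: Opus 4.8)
\section*{Proof sketch}

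The plan is to trivialise everything over a local spin frame $E$, so that a spinor field becomes $u=[E,z(x)]$ with $z$ a $\mathbb{C}^4$-valued function, while $u^\sharp=[E,z^*A]$, $u^c=[E,C^{-1}\overline{z}]$ and $\cancel{\nabla}u=[E,\gamma^a(\partial_a z+\sigma_a z)]$ with $\sigma_a=\frac14\Gamma^d_{ac}\gamma_d\gamma^c$. The only inputs beyond these definitions are the algebraic relations for $A$ and $C$ collected in Definition \ref{def: hacc}, which I use in the form $A^*=A$, $(\gamma^a)^*A=A\gamma^a$ and $C^{-1}\overline{\gamma^a}C=-\gamma^a$; the Clifford relations $\{\gamma^a,\gamma^b\}=2\eta^{ab}\mathbb{1}$; and the antisymmetry $\eta_{ad}\Gamma^d_{bc}+\eta_{cd}\Gamma^d_{ba}=0$ derived in the proof of the preceding lemma, which yields in particular $\Gamma^d_{ad}=0$.

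The first step is to show that $(\cdot)^\sharp$ and $(\cdot)^c$ commute with the spin covariant derivative, i.e. $(\nabla_a u)^\sharp=\nabla_a u^\sharp$ and $(\nabla_a u)^c=\nabla_a u^c$ (and likewise on cospinors). Since $A$ and $C$ are constant matrices, both reduce to the identities $(\sigma_a z)^\sharp=-z^\sharp\sigma_a$ and $C^{-1}\overline{\sigma_a}=\sigma_a C^{-1}$. The first follows from $\sigma_a^*A=A\,\frac14\Gamma^d_{ac}\gamma^c\gamma_d$ (obtained from $(\gamma^a)^*A=A\gamma^a$) together with $\frac14\Gamma^d_{ac}(\gamma^c\gamma_d+\gamma_d\gamma^c)=\frac12\Gamma^d_{ad}\mathbb{1}=0$; the second follows by applying $C^{-1}\overline{\gamma^a}C=-\gamma^a$ twice inside $\sigma_a$. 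One also has, directly from the same matrix identities, $(\gamma^a w)^\sharp=w^\sharp\gamma^a$ and $(\gamma^a w)^c=-\gamma^a w^c$ for any spinor $w$. Combining these facts gives
\begin{align*}
(\cancel{\nabla}u)^\sharp&=(\gamma^a\nabla_a u)^\sharp=(\nabla_a u)^\sharp\gamma^a=(\nabla_a u^\sharp)\gamma^a=\cancel{\nabla}u^\sharp,\\
(\cancel{\nabla}u)^c&=(\gamma^a\nabla_a u)^c=-\gamma^a(\nabla_a u)^c=-\gamma^a\nabla_a u^c=-\cancel{\nabla}u^c,
\end{align*}
and the two cospinor identities follow identically from $\cancel{\nabla}v=(\nabla_a v)\gamma^a$.

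For the positivity statement, over a local frame one computes $u^\sharp\cancel{n}u=n^a\,z^*A\gamma_a z$, so it suffices to prove that the $4\times 4$ matrix $n^aA\gamma_a$ is positive semidefinite whenever $n$ is future pointing; it is Hermitian by $A^*=A$ and $(\gamma_a)^*A=A\gamma_a$. Choosing the frame at the point so that $e_0$ is future timelike, write $n=(n^0,\vec{n})$ with $n^0\geq|\vec{n}|$; the properties of $A$ in Definition \ref{def: hacc} (which fix the sign of $A$ precisely so that the Dirac density is positive) give $A\gamma_0>0$, and a short computation using $\{\gamma_i,\gamma_0\}=0$ and $(\gamma_i)^2=-\mathbb{1}$ shows that $B_i:=(A\gamma_0)^{-1/2}(A\gamma_i)(A\gamma_0)^{-1/2}$ obey $\{B_i,B_j\}=2\delta_{ij}\mathbb{1}$. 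Hence $(n^iB_i)^2=|\vec{n}|^2\mathbb{1}$, so $n^0\mathbb{1}+n^iB_i\geq0$, and conjugating back by $(A\gamma_0)^{1/2}$ yields $n^aA\gamma_a\geq0$, i.e. $u^\sharp\cancel{n}u=z^*(n^aA\gamma_a)z\geq0$.

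I expect the only real obstacle to be bookkeeping: keeping index positions and signs consistent in the $\sharp$ and $c$ computations involving $\sigma_a$, and --- in the last part --- checking that the normalisation of $A$ fixed in the Appendix is indeed the one making $A\gamma_0$ positive (rather than negative) definite, so that the square roots used are legitimate. Everything else is routine matrix algebra.
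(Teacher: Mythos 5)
Your proof of the four differential identities is correct and follows essentially the same route as the paper: trivialise over a local spin frame and push $A$ and $C$ through $\partial_a$, $\sigma_a$ and $\gamma^a$ using the relations of Definition \ref{def: hacc} together with $\Gamma^d_{ad}=0$. The only organisational difference is that the paper computes $(\cancel{\nabla}u)^{\sharp}$ and $(\cancel{\nabla}v)^c$ explicitly and then obtains the remaining two identities from $\sharp\sharp=\mathrm{id}$, $cc=\mathrm{id}$ and $\sharp c=-c\sharp$, whereas you establish the commutation of $\sharp$ and $c$ with $\nabla_a$ and with Clifford multiplication separately and assemble all four directly; both are sound and your checks ($\sigma_a^*A=-A\sigma_a$, $C^{-1}\overline{\sigma_a}=\sigma_aC^{-1}$) are exactly the computations the paper relies on implicitly.

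The one place where you genuinely diverge is the positivity statement, and there you are doing more work than necessary: Definition \ref{def: hacc} already \emph{postulates} $A\,n^a\gamma_a>0$ for every future-pointing timelike $n$ (not only for $n=e_0$), and the paper's proof simply cites this. Your Clifford-algebra argument deducing $n^aA\gamma_a\geq 0$ from the single instance $A\gamma_0>0$ is correct (the anticommutators $\{B_i,B_j\}=2\delta_{ij}\mathbb{1}$ do come out right, and $(n^iB_i)^2=|\vec n|^2\mathbb{1}$ bounds the spectrum as you claim), and it has the small virtue of showing that the definitional condition at a single timelike direction already implies it on the whole forward cone; but as a proof of the lemma it is redundant. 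Either route is acceptable.
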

\begin{proof}
Using the definitions of the charge conjugation ($\overline{\pi(S)}C = C \pi(S)$), adjoint matrices ($\pi(S)^*A = A \pi(S^{-1})$) and the constancy of the entries of the matrices:
\begin{align*}
    (\cancel{\nabla} v)^c &= ((\partial_a v - v \sigma_a)\gamma^a)^c = (\partial_a \overline{v} - \overline{v \sigma_a})\overline{\gamma}^a C\\
    &= -(\partial_a(\overline{v}C) - \overline{v}C\sigma_a)\gamma^a = -\cancel{\nabla}(\overline{v}C) = -\cancel{\nabla}v^c\\
    (\cancel{\nabla} u)^{\sharp} &= (\gamma^a(\partial_a u + \sigma_a u))^{\sharp} = ((\partial_a u^* + u^*\sigma_a^* )(\gamma^a)^*)A\\
    &= (\partial_a(u^*A) - u^*A\sigma_a)\gamma^a = \cancel{\nabla}(u^*A) = \cancel{\nabla}u^{\sharp}
\end{align*}
Where the minus sign in the second line of the second equation, comes from the property of adjoint matrix $A$ cited above, therefore the two $\gamma$ matrices in the definition of $\sigma_a$ are switched and so we need to switch them back. For the other two identities:
\begin{align*}
    (\cancel{\nabla} v)^{\sharp} &= (\cancel{\nabla} v^{\sharp \sharp})^{\sharp} = (\cancel{\nabla} v^{\sharp})^{\sharp \sharp} = \cancel{\nabla} v^{\sharp}\\
    (\cancel{\nabla} u)^c &= (\cancel{\nabla} u^{\sharp})^{\sharp c} = - (\cancel{\nabla} u^{\sharp})^{c \sharp} = (\cancel{\nabla} u^{\sharp c})^{\sharp} = - (\cancel{\nabla} u^{c \sharp})^{\sharp} = - (\cancel{\nabla} u^c)
\end{align*}
where we have used the above results for spinors with $(\nabla v^{\sharp \sharp}) = (\nabla v^{\sharp})^{\sharp}$ as $v^{\sharp}$ is a spinor now.\\
Finally for $u(x) = [E,z]$ we can compute:
\begin{equation*}
    u^{\sharp}(x) \cancel{n} u(x) = \langle z, A \cancel{n} z\rangle \geq 0
\end{equation*}
That follows from the definition of $A$ in \ref{def: hacc}
\end{proof}

\subsection{Quantisation}\label{sec: quantisation}
Finally, by studying the space of solutions of the Dirac equation we will be able to quantize the theory. As we will see, the standard procedure for the scalar field needs to be modified as the space of solutions is endowed with an hermitian product instead of a symplectic form, i.e. we have an Hilbert instead of a symplectic space.\\
The most convenient approach is to start by introducing the \textit{double spinor bundle} as the following direct sum of vector bundles: $DM \oplus D^*M$. Of course a double spinor field is going to be a section of this vector bundle, and we shall denote the space of sections of it as:
\begin{equation*}
    \mathcal{D}(M) := C^{\infty}(DM \oplus D^*M)
\end{equation*}
We extend in particular the definitions of adjointeness and charge conjugation as pointwise maps: $(u \oplus v)^c := u^c\oplus v^c$ and $(u \oplus v)^{\sharp} = v^{\sharp} \oplus u^{\sharp}$. Of course a spinor field is just $u \oplus 0$.\\
Let us also define the first order differential operators:
\begin{align*}
    D &:= (-i\cancel{\nabla} + m) \oplus (i\cancel{\nabla} + m)\\
    \Tilde{D} &:= (i\cancel{\nabla} + m) \oplus (-i\cancel{\nabla} + m)
\end{align*}
So with this notation $f \in \mathcal{D}(M)$ is said to fulfill the Dirac equation if $Df=0$. Moreover, using the sesquilinear non-degenerate form:
\begin{equation*}
    \langle u_1 \oplus v_1, u_2 \oplus v_2 \rangle := \int_M (u_1^{\sharp} u_2 - v_2 v_1^{\sharp}) d\mathrm{vol_g}
\end{equation*}
We can then turn any $u_1 \oplus v_1 \in \mathcal{D}(M)$ into a distribution acting over the space of compactly supported sections.\\
Let us start proving some properties for the first order differential operator $D$:
\begin{lem}\label{lem: Dirac}
Consider $f \in \mathcal{D}_0(M)$ and $g \in \mathcal{D}(M)$, then:
\begin{enumerate}
    \item $Dg^{\sharp} = (Dg)^{\sharp}$, $Dg^c = (Dg)^c$ and finally $g^{c \sharp} = -g^{\sharp c}$
    \item $\langle f^{\sharp}, g^{\sharp}\rangle = \langle f^c, g^c \rangle = - \overline{\langle f, g \rangle} = - \langle g,f \rangle$
    \item $f^{\sharp}, f^c \in \mathcal{D}_0(M)$ and we have that $\langle f, Dg \rangle = \langle Df, g \rangle$
\end{enumerate}
\end{lem}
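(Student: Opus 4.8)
The plan is to fix $f = u_1 \oplus v_1 \in \mathcal{D}_0(M)$ and $g = u_2 \oplus v_2 \in \mathcal{D}(M)$ once and for all, and to reduce each of the three items to pointwise identities for spinors and cospinors that we already have: the involutivity and anticommutation relations $q^{\sharp\sharp}=q$, $q^{cc}=q$, $q^{\sharp c} = -q^{c\sharp}$ together with $\langle q^\sharp, p^\sharp\rangle = \overline{\langle p, q\rangle} = \langle p^c, q^c\rangle$ from the lemma on Dirac adjoints and charge conjugation, the intertwining relations of $\cancel{\nabla}$ with $\sharp$ and $c$ from Lemma \ref{lem: uscnorm}, and the covariant constancy of $\gamma$.

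For item 1 I would simply expand $D$ into its two summands. Since $\sharp$ and $c$ are antilinear and the mass $m$ is real, for any spinor or cospinor field $w$ one has $[(\mp i\cancel{\nabla} + m)w]^\sharp = \pm i(\cancel{\nabla}w)^\sharp + m w^\sharp$, and Lemma \ref{lem: uscnorm} then replaces $(\cancel{\nabla}w)^\sharp$ by $\cancel{\nabla}w^\sharp$; comparing the two summands of $Dg^\sharp$ with those of $(Dg)^\sharp$ — recalling that $\sharp$ exchanges the $DM$- and $D^*M$-slots, while $D$ acts by $-i\cancel{\nabla}+m$ on the first slot and $+i\cancel{\nabla}+m$ on the second — gives $Dg^\sharp = (Dg)^\sharp$. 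The argument for $Dg^c = (Dg)^c$ is identical, now using $(\cancel{\nabla}w)^c = -\cancel{\nabla}w^c$ and the extra sign it carries. Finally $g^{c\sharp} = -g^{\sharp c}$ is just $q^{\sharp c} = -q^{c\sharp}$ and its cospinor counterpart applied slot by slot.

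For item 2 the idea is to compute $\langle f^\sharp, g^\sharp\rangle$ and $\langle f^c, g^c\rangle$ directly from the definition of the double-spinor pairing and to turn each fiberwise integrand into the complex conjugate of the integrand of $\langle f, g\rangle$: for the $\sharp$-version one uses $q^{\sharp\sharp}=q$ together with $\langle q^\sharp, p^\sharp\rangle = \overline{\langle p, q\rangle}$, and for the $c$-version one first rewrites $(u^c)^\sharp = -(u^\sharp)^c$ (from item 1) and then applies $\langle p^c, q^c\rangle = \overline{\langle p, q\rangle}$. In both cases the overall minus sign appearing in $\langle u_1\oplus v_1, u_2\oplus v_2\rangle = \int_M(u_1^\sharp u_2 - v_2 v_1^\sharp)\,d\mathrm{vol}_g$ is exactly what is needed to produce $-\overline{\langle f,g\rangle}$ after pulling the conjugation out of the integral; the same conjugation identities also give $\overline{\langle f, g\rangle} = \langle g, f\rangle$, which supplies the last equality. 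I expect this sign bookkeeping — juggling the antilinearity of $\sharp$ and $c$, the factors $\pm i$ in $D$, and the asymmetric minus in the pairing — to be the only genuinely delicate point in the whole lemma.

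For item 3, that $f^\sharp, f^c \in \mathcal{D}_0(M)$ is immediate since $\sharp$ and $c$ are base-point preserving smooth bundle anti-isomorphisms and hence preserve smoothness and compact support. For the formal self-adjointness $\langle f, Dg\rangle = \langle Df, g\rangle$ I would use item 1 to pull the $\sharp$'s inside $D$ on the right-hand side, after which the difference $\langle f, Dg\rangle - \langle Df, g\rangle$ collapses — all the $m$-terms cancel — to $-i\int_M\big[\nabla_a(u_1^\sharp\gamma^a u_2) + \nabla_a(v_2\gamma^a v_1^\sharp)\big]\,d\mathrm{vol}_g$; here covariant constancy of $\gamma$ is what lets me recognize $(\cancel{\nabla}w)s + w(\cancel{\nabla}s) = \nabla_a(w\gamma^a s)$. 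Since $u_1^\sharp$ and $v_1^\sharp$ have compact support, each integrand is the divergence of a compactly supported vector field, so both integrals vanish by the divergence theorem, giving $\langle f, Dg\rangle = \langle Df, g\rangle$. The point to watch is, once again, that the $\pm i$ signs combine so that these two derivative terms genuinely appear as total divergences rather than as differences.
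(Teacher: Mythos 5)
Your proposal is correct and follows essentially the same route as the paper's proof: item 1 by expanding $D$ on the two slots and invoking Lemma \ref{lem: uscnorm} together with the slot exchange under $\sharp$, item 2 by rewriting the integrand of the double-spinor pairing via $q^{\sharp\sharp}=q$ and the fiberwise conjugation identities, and item 3 by reducing the difference $\langle f, Dg\rangle - \langle Df, g\rangle$ to the integral of a compactly supported divergence $\nabla_a(v\gamma^a u)$ using the covariant constancy of $\gamma$. The sign bookkeeping you flag as the delicate point is indeed the only place where care is needed, and your accounting of it checks out.
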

\begin{proof}
Let us call for simplicity $g = g_1 \oplus g_2$, then:
\begin{equation*}
    D g^{\sharp} = \big( ((-i\cancel{\nabla} + m) \oplus (i\cancel{\nabla} + m)) \big) (g_2^{\sharp} \oplus g_1^{\sharp}) =\big( (-i\cancel{\nabla} + m) g_2^{\sharp} \big) \oplus \big( (i\cancel{\nabla} + m) g_1^{\sharp}\big) = (D g)^{\sharp}
\end{equation*}
In the same way one proves the other identity for the charge conjugation, while:
\begin{equation*}
    g^{c \sharp} = g_2^{c \sharp} \oplus g_1^{c \sharp} = -\big( g_2^{\sharp c} \oplus g_1^{\sharp c} \big) = - \big( g_2^{\sharp} \oplus g_1^{\sharp} \big)^c = - \big( g_1 \oplus g_2 \big)^{\sharp c}
\end{equation*}
For what concerns the second statement, compute:
\begin{align*}
    \langle f^{\sharp}, g^{\sharp} \rangle &= \int_M f_2 g_2^{\sharp} - g_1^{\sharp} f_1\\
    &= \int_M f_2^{\sharp \sharp} g_2^{\sharp} - g_1^{\sharp} f_1^{\sharp \sharp}\\
    &= \int_M \overline{g_2 f_2^{\sharp}} - \overline{f_1 g_1^{\sharp}}\\
    &= - \overline{\langle f, g \rangle}
\end{align*}
The other identities are proven in the same way.\\
Finally for all $u \oplus v \in \mathcal{D}(M)$ we have (from the invariance of $\gamma$ matrices):
\begin{equation*}
    \nabla_a(v \gamma^a u) = (\cancel{\nabla} v) u + v (\cancel{\nabla} u)
\end{equation*}
Now, if either $u$ or $v$ are compactly supported (or at least vanish at infinity) we can integrate by parts to get:
\begin{equation*}
    \int_M (\cancel{\nabla} v)u = - \int_M v (\cancel{\nabla} u)
\end{equation*}
But then:
\begin{align*}
    \langle f, Dh \rangle &= \langle f_1 \oplus f_2, D(h_1 \oplus h_2) \rangle \\
    &= \int_M f_1^{\sharp} \big( (-i \cancel{\nabla} + m) h_1 \big) -  \big( (i\cancel{\nabla} + m ) h_2 \big) f_2^{\sharp}\\
    &= \int_M \big( (i\cancel{\nabla} + m)f_1^{\sharp} \big) h_1 - h_2 \big( (-i\cancel{\nabla} + m)f_2^{\sharp} \big)\\
    &= \langle Df, h \rangle
\end{align*}
\end{proof}
After these first properties, in order to study the existence and uniqueness of solutions of the Dirac equation, we define the second order differential operator $\Tilde{D} D = D \Tilde{D}$. For each factor of the direct sum we have:
\begin{equation*}
    (-i\cancel{\nabla} + m) (i \cancel{\nabla} + m) = (\cancel{\nabla} \cancel{\nabla} + m^2)
\end{equation*}
and since:
\begin{align*}
    \cancel{\nabla} \cancel{\nabla} u^c&= \gamma^a \nabla_a \gamma^b \nabla_b u^c = \gamma^a \gamma^b \nabla_a \nabla_b u^c\\
    &= (-\gamma^b \gamma^a + 2\eta^{ab}) \nabla_a \nabla_b u^c\\
    &= (-\gamma^b \gamma^a \nabla_a \nabla_b + 2 \Box)u^c\\
    &= -\gamma^b \gamma^a (\nabla_b \nabla_a u^c + R_{abd}^c u^d) + 2 \Box u^c\\
\end{align*}
So by relabeling of indices and putting the first factor on the left:
\begin{align*}
    \cancel{\nabla} \cancel{\nabla} u^c &= -\frac{1}{2} \gamma^b \gamma^a R_{abd}^c u^d + \Box u^c\\
    &= \frac{1}{2}\gamma^a \gamma^b R^c_{abd} u^d + \Box u^c
\end{align*}
But then, as the principal part of $D \Tilde{D}$ is given by $\Box$ we have that $D \Tilde{D}$ is an hyperbolic differential operator. Therefore, we can apply the following theorem to argue about existence of retarded and advanced propagators:
\begin{thm}
Let $M$ be a globally hyperbolic Lorentzian manifold. Let $P$ be a second order differential operator acting on sections in a vector bundle $E$ over $M$ defined in local coordinates as:
\begin{equation*}
    P = - \sum_{\mu, \nu = 1}^n g^{\mu \nu}(x) \frac{\partial^2}{\partial x^{\mu} \partial x^{\nu}} + \sum_{\mu = 1}^n A^{\mu}(x) \frac{\partial}{\partial x^{\mu}} + B_1(x)
\end{equation*}
Where $A^{\mu}(x), B_1(x)$ are matrix-valued functions over $M$.\\
Then, for each $x \in M$ there are unique fundamental solutions of $P$ denoted $E^{\pm}(x)$ with past respectively future compact support. They must also satisfy:
\begin{itemize}
    \item $supp(E^{\pm}(x)) \subset J^{\pm}(x)$
    \item for each test function $f \in \mathcal{D}_0(M,E^*)$ the map $x \mapsto E^{\pm}(x)[f]$ are smooth sections in $E^*$, satisfying:
    \begin{equation*}
        P^* (E^{\pm}(\cdot)[f]) = f
    \end{equation*}
    Where $P^*$ is the same differential operator just acting on elements in $E^*$.
\end{itemize}
\end{thm}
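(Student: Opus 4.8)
The plan is to establish the theorem in the three classical stages used for normally hyperbolic operators (Leray, Hörmander, Bär--Ginoux--Pfäffle): \emph{uniqueness}, \emph{local existence}, and \emph{globalisation}. Note first that the hypotheses make $P$ normally hyperbolic, so it suffices to treat a single normally hyperbolic operator and then apply the result to each factor of the direct sum; the splitting $\widetilde{D}D = D\widetilde{D} = (\cancel{\nabla}\cancel{\nabla} + m^2)\oplus(\cancel{\nabla}\cancel{\nabla} + m^2)$ computed above shows this is the situation at hand, with principal part $\Box$.

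First I would prove uniqueness: if $u$ is a distributional solution of $Pu = 0$ with past-compact support, then $u \equiv 0$. Using the foliation $M \cong \mathbb{R}\times\Sigma$ by Cauchy surfaces $\Sigma_\tau$ from Theorem \ref{thm: globhy}, one introduces an energy functional $E(\tau)$ built from $u$ and its first covariant derivatives along $\Sigma_\tau$, modelled on the symbol $-g^{\mu\nu}\partial_\mu\partial_\nu$. Differentiating, commuting with $P$, and integrating by parts yields a Grönwall-type inequality $E(\tau) \le C\int_{\tau_0}^\tau E(s)\,ds$; past-compactness of $\operatorname{supp}u$ lets one take $\tau_0$ below the support so that $E(\tau_0)=0$, forcing $E\equiv 0$ and hence $u\equiv 0$. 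Time reversal handles future-compact support. This gives uniqueness of $E^\pm(x)$ once existence is known, and is also the mechanism that makes the patching in the last stage well defined.

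Next, local existence. In a geodesically convex normal neighbourhood $U$ of any point one builds a fundamental solution supported in $J^\pm_U(x)$ from the Riesz distributions $R^\pm_U(\alpha,x)$, the holomorphic family whose value at $\alpha = 2$ inverts the flat wave operator. The Hadamard ansatz $E^\pm(x) = \sum_{k\ge 0} V_k(x,\cdot)\,R^\pm_U(2+2k,x)$ solves $PE^\pm(x) = \delta_x$ formally once the Hadamard coefficients $V_k$ satisfy the transport equations along geodesics emanating from $x$; truncating this (generally only asymptotic) series after finitely many terms and removing the smooth remainder by a Volterra iteration — convergent because the associated integral operator strictly improves the singularity order on the causal domain — produces an exact local fundamental solution with the stated support property. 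Smoothness of $x\mapsto E^\pm(x)[f]$ and the identity $P^*(E^\pm(\cdot)[f]) = f$ follow from the explicit form of $R^\pm_U$, the regularity of the $V_k$, and the fact that $E^\pm$ for $P$ and $E^\mp$ for the transpose operator $P^*$ on $E^*$ are transposes of one another.

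Finally, globalisation. Cover $M$ by convex neighbourhoods with a subordinate partition of unity, glue the local fundamental solutions into a global parametrix whose error is a smoothing operator, and note that uniqueness from the first stage forces the local pieces to agree on overlaps; a global Volterra iteration — legitimate because in a globally hyperbolic spacetime $J^+(K)\cap J^-(K')$ is compact for compact $K,K'$, so the iterated convolutions have locally finite support — removes the smoothing error and yields the exact $E^\pm(x)$, with $\operatorname{supp}(E^\pm(x))\subset J^\pm(x)$ inherited from the local construction. I expect this globalisation step to be the main obstacle: it is precisely where global hyperbolicity is genuinely used (closedness of $J^\pm$, compactness of causal diamonds, and the Cauchy foliation), and where one must verify that the Volterra series converges globally rather than only on a small causal domain.
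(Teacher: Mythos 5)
The paper does not actually prove this statement: it is imported wholesale from the literature (``See Theorem $3.3.1$ in Bär--Ginoux--Pfäffle''), so there is no in-text argument to compare against. Your sketch is a correct outline of the proof in that reference for its first two stages --- uniqueness of past/future-compactly supported solutions of $Pu=0$ (the reference does this by a duality argument with local fundamental solutions of the transposed operator rather than by your energy/Grönwall estimate, but both are standard and valid), and local existence via Riesz distributions, Hadamard coefficients, truncation, and Volterra correction. Where you genuinely diverge is the globalisation: Bär--Ginoux--Pfäffle do \emph{not} glue local fundamental solutions into a global parametrix; they first establish global well-posedness of the Cauchy problem on the foliation $M\cong\mathbb{R}\times\Sigma$ and then \emph{define} $E^{\pm}(x)$ by solving the Cauchy problem with a $\delta_x$ source, precisely to avoid the issue your route runs into --- namely that two local fundamental solutions based at the same $x$ on overlapping convex sets need not agree on the overlap, since their difference is a homogeneous solution supported in $J^{\pm}(x)\cap U_1\cap U_2$, a set on which your global uniqueness theorem does not directly apply. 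You correctly flag globalisation as the main obstacle; be aware that the fix in the literature is to change strategy there (route through the Cauchy problem) rather than to push the parametrix-gluing through. The remaining items --- $\mathrm{supp}(E^{\pm}(x))\subset J^{\pm}(x)$, smoothness of $x\mapsto E^{\pm}(x)[f]$, and $P^*(E^{\pm}(\cdot)[f])=f$ via transposition --- are handled as you indicate.
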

\begin{proof}
See Theorem $3.3.1$ in \cite{Baer}.
\end{proof}

From the physical point of view, the motivations behind what we have just introduced, are the same as that on Minkowski spacetime when formulating the Dirac equation as the differential equation describing the motion of a relativistic particle. Namely, we have shown that a solution of the Dirac equation is in particular a solution of a "Klein-Gordon like" equation giving in this way the right dispersion relations for the solutions. In the same manner, from the mathematical side, using the existence of advanced and retarded propagators for the operator $D \Tilde{D}$ gives also advanced and retarded propagators for the Dirac operator:
\begin{thm}\label{thm: prop}
The maps $S^{\pm} : \mathcal{D}_0(M) \to \mathcal{D}(M)$ defined by $S^{\pm} := \Tilde{D}E^{\pm}$ are the unique advanced $S^-$ and retarded $S^+$ propagators for $D$, such that:
\begin{itemize}
    \item $supp(S^{\pm} f) \subset J^{\pm}(supp f)$ for all $f \in \mathcal{D}_0(M)$
    \item Moreover, $S^{\pm} f^{\sharp} = (S^{\pm} f)^{\sharp}$ and $S^{\pm} f^c = (S^{\pm}f)^c$
    \item For all $f,h \in \mathcal{D}_0(M)$ we have:
    \begin{equation*}
        \langle f, S^{\pm} h \rangle = \langle S^{\mp} f, h \rangle
    \end{equation*}
\end{itemize}
\end{thm}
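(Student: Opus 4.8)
The plan is to obtain $S^{\pm}$ from the advanced/retarded Green's operators $E^{\pm}$ of the normally hyperbolic operator $P := \Tilde{D}D = D\Tilde{D}$, whose existence is the content of the preceding theorem (cf. \cite{Baer}). These satisfy $PE^{\pm} = E^{\pm}P = \mathrm{id}$ on $\mathcal{D}_0(M)$, $\mathrm{supp}(E^{\pm}f) \subseteq J^{\pm}(\mathrm{supp}(f))$, and --- a standard consequence of global hyperbolicity --- the uniqueness property that a smooth section $u$ with $Pu = 0$ whose support is past- or future-compact must vanish. First I would record, on $\mathcal{D}_0(M)$, the three commutation identities $\Tilde{D}E^{\pm} = E^{\pm}\Tilde{D}$, $E^{\pm}f^{\sharp} = (E^{\pm}f)^{\sharp}$ and $E^{\pm}f^{c} = (E^{\pm}f)^{c}$. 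Each follows by the same device: the difference $u$ of the two sides is a smooth section with support in $J^{\pm}$ of a compact set, hence one-sided compact; applying $P$ and using $P\Tilde{D} = \Tilde{D}P$ (respectively that $P$ intertwines $\sharp$ and $c$) gives $Pu = 0$, so $u = 0$. That $P$ intertwines $\sharp$ and $c$ reduces, via $P = \Tilde{D}D$, to $Dg^{\sharp} = (Dg)^{\sharp}$, $Dg^{c} = (Dg)^{c}$ from Lemma \ref{lem: Dirac} together with the analogous identities for $\Tilde{D}$, which are an immediate check from Lemma \ref{lem: uscnorm}, the antilinearity of $\sharp$ and $c$, and the reality of $m$ (the minus sign in $(\cancel{\nabla}u)^{c} = -\cancel{\nabla}u^{c}$ being cancelled by the sign from conjugating $\pm i$).

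Granting these, the Green-operator properties of $S^{\pm} := \Tilde{D}E^{\pm}$ are formal. From $D\Tilde{D} = P$ we get $DS^{\pm} = PE^{\pm} = \mathrm{id}$ on $\mathcal{D}_0(M)$; and since $Df \in \mathcal{D}_0(M)$ for $f \in \mathcal{D}_0(M)$, the first commutation identity gives $S^{\pm}Df = \Tilde{D}E^{\pm}(Df) = E^{\pm}\Tilde{D}(Df) = E^{\pm}Pf = f$. Locality of the differential operator $\Tilde{D}$ yields $\mathrm{supp}(S^{\pm}f) \subseteq \mathrm{supp}(E^{\pm}f) \subseteq J^{\pm}(\mathrm{supp}(f))$. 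For uniqueness, if $\hat{S}^{\pm}$ is another operator with $D\hat{S}^{\pm} = \mathrm{id}$ on $\mathcal{D}_0(M)$ and the same support bound, then $w := (S^{\pm} - \hat{S}^{\pm})f$ has $Dw = 0$ and one-sided compact support, hence $Pw = \Tilde{D}Dw = 0$ and $w = 0$. The $\sharp$- and $c$-compatibility then drops out: $S^{\pm}f^{\sharp} = \Tilde{D}E^{\pm}f^{\sharp} = \Tilde{D}(E^{\pm}f)^{\sharp} = (\Tilde{D}E^{\pm}f)^{\sharp} = (S^{\pm}f)^{\sharp}$, and likewise with $c$.

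For the remaining identity $\langle f, S^{\pm}h\rangle = \langle S^{\mp}f, h\rangle$ with $f,h \in \mathcal{D}_0(M)$, I would first extend the formal self-adjointness $\langle f, Dg\rangle = \langle Df, g\rangle$ of Lemma \ref{lem: Dirac} to $\langle f, \Tilde{D}g\rangle = \langle \Tilde{D}f, g\rangle$, hence $\langle f, Pg\rangle = \langle Pf, g\rangle$, by the same integration by parts. The crucial point is that these identities stay valid when $g$ is only past- or future-compact rather than compactly supported, since in a globally hyperbolic spacetime $J^{-}(K_1) \cap J^{+}(K_2)$ is compact for compact $K_1, K_2$, so all integrands below have compact support. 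Using $f = PE^{\mp}f$ and $PE^{\pm}h = h$ one obtains $\langle g, E^{+}k\rangle = \langle PE^{-}g, E^{+}k\rangle = \langle E^{-}g, PE^{+}k\rangle = \langle E^{-}g, k\rangle$ for $g,k \in \mathcal{D}_0(M)$, and symmetrically $\langle g, E^{-}k\rangle = \langle E^{+}g, k\rangle$. Taking $g = \Tilde{D}f \in \mathcal{D}_0(M)$ and using the first commutation identity, $\langle f, S^{+}h\rangle = \langle f, \Tilde{D}E^{+}h\rangle = \langle \Tilde{D}f, E^{+}h\rangle = \langle E^{-}\Tilde{D}f, h\rangle = \langle \Tilde{D}E^{-}f, h\rangle = \langle S^{-}f, h\rangle$, and the other sign is identical.

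The hard part will be the honest treatment of these non-compactly-supported integrations by parts --- making precise that the compactness of $J^{-}(K_1) \cap J^{+}(K_2)$ is exactly what licenses moving $\Tilde{D}$ and $P$ across the pairing when one factor is $E^{\pm}$ of a test section --- together with the sign and conjugation bookkeeping required to push the antilinear maps $\sharp$ and $c$ through $D$, $\Tilde{D}$ and $E^{\pm}$; everything else is a mechanical consequence of the Green's-operator calculus for $P$.
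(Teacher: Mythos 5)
Your proposal is correct, and it shares the paper's skeleton (define $S^{\pm}:=\Tilde{D}E^{\pm}$, read off $DS^{\pm}=\mathbb{1}$ and the support bound from the properties of $E^{\pm}$, then justify the integrations by parts via the compactness of $J^{-}(K_1)\cap J^{+}(K_2)$ in a globally hyperbolic spacetime), but the two arguments diverge in how the remaining claims are organized. The paper proves the adjoint identity first and directly, writing $\langle f, S^{\pm}h\rangle=\langle DS^{\mp}f, S^{\pm}h\rangle=\langle S^{\mp}f, DS^{\pm}h\rangle=\langle S^{\mp}f,h\rangle$, and then extracts both the left-inverse property $S^{\mp}D=\mathbb{1}$ and the $\sharp$-compatibility as corollaries of that identity (the latter by running $S^{\pm}f^{\sharp}$ and $(S^{\pm}f)^{\sharp}$ against arbitrary $h$ and invoking the non-degeneracy of $\langle\cdot,\cdot\rangle$). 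You instead front-load intertwining identities at the level of $E^{\pm}$ itself --- $\Tilde{D}E^{\pm}=E^{\pm}\Tilde{D}$ and the compatibility of $E^{\pm}$ with $\sharp$ and $c$ --- established by the uniqueness of solutions of $Pu=0$ with one-sided compact support, a tool the paper never invokes; after that, the left-inverse property, the $\sharp/c$ statements, and (via the symmetry $\langle g,E^{+}k\rangle=\langle E^{-}g,k\rangle$ of the second-order Green operators) the adjoint relation all drop out formally. The paper's route is shorter and stays entirely at the level of the sesquilinear pairing; yours buys two things the paper's proof does not actually deliver: an honest proof of the \emph{uniqueness} of $S^{\pm}$ asserted in the statement, and the $\sharp/c$-compatibility without any appeal to non-degeneracy of $\langle\cdot,\cdot\rangle$. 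Your sign bookkeeping for pushing the antilinear maps through $D$ and $\Tilde{D}$ (the minus in $(\cancel{\nabla}u)^{c}=-\cancel{\nabla}u^{c}$ cancelling against the conjugation of $\pm i$, and the swap of the two direct summands under $\sharp$) checks out against Lemma \ref{lem: uscnorm} and Lemma \ref{lem: Dirac}.
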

\begin{proof}
For any $f \in \mathcal{D}_0(M)$ we see, from the above theorem, that $D S^{\pm} f = f$ and also the claimed support properties follow from the support properties of $E^{\pm}$ mentioned in the previous theorem. This proves that $S^{\pm}$ are right fundamental solutions.\\
Now, consider $f, h \in \mathcal{D}_0(M)$ and, as a consequence of the global hyperbolicity, that $supp(S^{\mp}f) \cap supp(S^{\pm} h)$ is compact. Then, we can perform the following integration by parts throwing away boundary terms, to show the last statement:
\begin{align*}
    \langle f, S^{\pm} h \rangle &= \langle D S^{\mp} f, S^{\pm} h \rangle \\
    &= \langle S^{\mp} f, D S^{\pm} h \rangle = \langle S^{\mp} f,h \rangle
\end{align*}
Indeed, by taking the adjoint of $D S^{\pm} = \mathbb{1}$, we even get $S^{\mp} D = \mathbb{1}$ showing that $S^{\pm}$ is also a left fundamental solution.\\
Finally, compute:
\begin{align*}
    \langle S^{\pm} f^{\sharp}, h \rangle &= \langle f^{\sharp}, S^{\mp} h \rangle = \langle (D S^{\pm} f)^{\sharp}, S^{\mp} h \rangle\\
    &= \langle D ( S^{\pm} f)^{\sharp}, S^{\mp} h \rangle = \langle (S^{\pm} f)^{\sharp}, D S^{\mp} h \rangle =  \langle (S^{\pm} f)^{\sharp}, h \rangle
\end{align*}
showing that $S^{\pm} f^{\sharp} = (S^{\pm} f)^{\sharp}$.
\end{proof}
With the advanced and retarded propagators, we can define, as usual, that operator that takes a test function into a solution of $D$: the \textit{causal propagator}, defined as $S := S^- - S^+$.
\begin{rem}
Of course, one can define and prove, as we did on $\mathcal{D}(M)$, the separate existence and properties of the propagators over spinors $DM$ and cospinors $D^*M$. Let us denote them as $S^{\pm}_{sp}, S_{sp}, S^{\pm}_{cosp}, S_{cosp}$. Moreover, by the uniqueness, we have for $u \in DM$:
\begin{equation*}
    S = S_{sp} \oplus S_{cosp} \Rightarrow (S_{sp} u)^{\sharp} = S_{cosp} u^{\sharp}
\end{equation*}
In addition, it holds:
\begin{equation*}
    \int_M v(S_{sp} u) = - \int_M(S_{cosp} v) u
\end{equation*}
This can be seen by taking $V,U \in \mathcal{D}_0(M)$, defined as $V = v^{\sharp} \oplus 0$ and $U = u \oplus 0$ where $u \in DM$ and $v \in D^*M$, and computing:
\begin{align*}
    \langle V, S U \rangle &= \langle (v^{\sharp} \oplus 0), ((S_{sp} u) \oplus 0) \rangle\\
    &= \int_M v (S_{sp} u)
\end{align*}
But at the same time $ \langle V, S U \rangle = -  \langle S V, U \rangle$ and:
\begin{align*}
    -\langle S V, U \rangle &= -\langle (S_{sp} v^{\sharp} \oplus 0), (u \oplus 0) \rangle\\
    &= - \int_M (S_{sp} v^{\sharp})^{\sharp} u\\
    &= - \int_M (S_{cosp} v) u
\end{align*}
\end{rem}
Before coming to quantization, we need one more step regarding the space of solutions of Dirac's equation. Namely, we show that, using the propagator, we can define an inner product over this space. In this way, by completing this same space with respect to the just introduced inner product, we make the space of solutions a Hilbert space. As we'll see, introducing such a structure is needed in order to define anticommutation relations between the field operators. The necessity can already be infered, recalling the definition of self-dual CAR algebras:
\begin{lem} \label{lem: HilbD}
For the propagator $S: \mathcal{D}_0(M) \to \mathcal{D}(M)$ we have $\ker S = D(\mathcal{D}_0(M))$. Moreover, the bilinear map:
\begin{equation*}
    (f,h) := i \langle f, S h \rangle 
\end{equation*}
defines an inner product over $\mathcal{D}_0(M) \backslash \ker S$. The inner product is hermitian and fulfills the following sequence of equalities:
\begin{equation*}
    (f^{\sharp}, h^{\sharp}) = (f^c,h^c) = \overline{(f,h)} = (h,f)
\end{equation*}
\end{lem}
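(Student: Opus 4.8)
The plan is to split the statement into three tasks: the exactness $\ker S = D(\mathcal{D}_0(M))$, the algebraic identities (including hermiticity and well-definedness on the quotient), and finally positivity, which is where the analytic content sits. Throughout I would rely only on Theorem \ref{thm: prop} and Lemma \ref{lem: Dirac}.

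\textbf{Exactness and the identity chain.} For $\ker S = D(\mathcal{D}_0(M))$, the inclusion $\supseteq$ is immediate from $S^{\pm}D=\mathbb{1}$ on $\mathcal{D}_0(M)$ (Theorem \ref{thm: prop}): for $g\in\mathcal{D}_0(M)$, $S(Dg)=S^-Dg-S^+Dg=g-g=0$. For $\subseteq$, if $Sf=0$ then $S^-f=S^+f=:g$, and $\operatorname{supp}g\subset J^-(\operatorname{supp}f)\cap J^+(\operatorname{supp}f)$ is compact by global hyperbolicity, so $g\in\mathcal{D}_0(M)$ and $Dg=DS^-f=f$. For the identities, first note that for $f,h\in\mathcal{D}_0(M)$, combining $S=S^--S^+$ with $\langle f,S^{\pm}h\rangle=\langle S^{\mp}f,h\rangle$ (Theorem \ref{thm: prop}) gives the antisymmetry $\langle f,Sh\rangle=-\langle Sf,h\rangle$. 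Then, using $Sh^{\sharp}=(Sh)^{\sharp}$, $Sh^{c}=(Sh)^{c}$ (Theorem \ref{thm: prop}) and $\langle f^{\sharp},g^{\sharp}\rangle=\langle f^{c},g^{c}\rangle=-\overline{\langle f,g\rangle}=-\langle g,f\rangle$ (Lemma \ref{lem: Dirac}), one computes
$$(f^{\sharp},h^{\sharp})=i\langle f^{\sharp},(Sh)^{\sharp}\rangle=-i\,\overline{\langle f,Sh\rangle}=\overline{i\langle f,Sh\rangle}=\overline{(f,h)},$$
and identically $(f^{c},h^{c})=\overline{(f,h)}$, while $\overline{(f,h)}=-i\langle Sh,f\rangle=i\langle h,Sf\rangle=(h,f)$ by antisymmetry. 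This proves the whole chain at the level of $\mathcal{D}_0(M)$; it also makes $(\cdot,\cdot)$ descend to $\mathcal{D}_0(M)\backslash\ker S$, since $(f,h)=i\langle f,0\rangle=0$ when $h\in\ker S$, and $(f,h)=\overline{(h,f)}=\overline{i\langle h,Sf\rangle}=0$ when $f\in\ker S$.

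\textbf{Positivity.} It remains to show $(f,f)=i\langle f,Sf\rangle\ge 0$, with equality only for $f\in\ker S$. I would set $\psi:=Sf$ (so $D\psi=0$) and reduce $\langle f,Sf\rangle$ to a flux through a Cauchy surface. Fixing a Cauchy surface $\Sigma$ to the future of $\operatorname{supp}f$, the advanced part $S^-f$ vanishes on $\Sigma$ (its support meets $I^-(\Sigma)$ only), so $\psi|_{\Sigma}=-(S^+f)|_{\Sigma}$ has compact support, and $S^+f$ restricted to $J^-(\Sigma)$ is supported in the compact diamond $J^+(\operatorname{supp}f)\cap J^-(\Sigma)$. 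Writing $f=D(S^+f)$ and integrating the Green identity $\nabla_a(v\gamma^a u)=(\cancel{\nabla}v)u+v(\cancel{\nabla}u)$ over $J^-(\Sigma)$, the bulk term $\int S^+f\cdot D(Sf)$ vanishes because $D(Sf)=0$, leaving only the flux through $\Sigma$; using $Sf|_{\Sigma}=-(S^+f)|_{\Sigma}$ this identifies $i\langle f,Sf\rangle$ with $\int_{\Sigma}\psi^{\sharp}\cancel{n}\,\psi\,d\Sigma$ (suitably interpreted on the double-spinor bundle), $n$ the future-pointing unit normal. By Lemma \ref{lem: uscnorm} the integrand is $\ge 0$, hence $(f,f)\ge 0$; and since $A\cancel{n}$ is strictly positive for future timelike $n$, vanishing of the integral forces $\psi|_{\Sigma}=0$, whence uniqueness of the Cauchy problem for the normally hyperbolic operator $\tilde{D}D$ underlying $D$ gives $\psi=Sf\equiv 0$, i.e.\ $f\in\ker S$. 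I expect the genuine difficulty to be confined to this last step: controlling supports so the integration by parts against the non-compactly-supported solution $Sf$ is legitimate, matching the double-spinor structure of $\langle\cdot,\cdot\rangle$ to the spinorial positivity of Lemma \ref{lem: uscnorm}, and pinning down the overall sign so that the boundary term is the manifestly non-negative expression.
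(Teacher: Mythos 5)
Your proposal is correct and follows essentially the same route as the paper: the same exactness argument via $S^{\pm}D=\mathbb{1}$ and compactness of $J^+(\mathrm{supp}f)\cap J^-(\mathrm{supp}f)$, the same manipulation of the identity chain from Lemma \ref{lem: Dirac} and Theorem \ref{thm: prop}, and the same reduction of positivity to the flux $\int_{\Sigma}(Sf)^{\sharp}\cancel{n}(Sf)$ via Green's identity and Lemma \ref{lem: uscnorm}. The only (harmless) variations are that the paper splits $M=J^{+}(\Sigma)\cup J^{-}(\Sigma)$ for an arbitrary Cauchy surface rather than placing $\Sigma$ to the future of $\mathrm{supp}f$, and your explicit appeal to strict positivity of $A\cancel{n}$ plus uniqueness of the Cauchy problem makes the definiteness step slightly more careful than the paper's one-line remark.
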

\begin{proof}
Of course, if $f = D h$ for some $h \in \mathcal{D}_0(M)$, then $S f = 0$ from what we have said above regarding the advanced and retarded propagators. It follows that $D(\mathcal{D}_0(M)) \subset \ker S$. Conversely, if $S f = 0$ with $f \in \mathcal{D}_0(M)$, define $h := S^- f = S^+ f$ that has support in $J^+(supp\,\, f) \cap J^-(supp\,\, f)$. But then, from the definition of the retarded and advanced propagators, we also have $f = Dh$ giving $\ker S \subset  D(\mathcal{D}_0(M))$.\\
For what concerns the last sequence of equalities, from Lemma $\ref{lem: Dirac}$ and from Theorem \ref{thm: prop} we must have:
\begin{align*}
    (f,h) &= -i\langle S f, h \rangle\\
    &= -i \overline{\langle h, Sf \rangle}\\
    &= \overline{i \langle h, S f \rangle}\\
    &= \overline{(h,f)}
\end{align*}
and this must be a well defined sesquilinear map over $\mathcal{D}_0(M) \backslash \ker S$. Still from the previous Theorem $\ref{thm: prop}$, we have that adjoint and charge conjugation translate to operators on the factors of the inner product. In particular, denoting for simplicity both charge conjugation and adjointeness by $(\cdot)^*$, using Lemma \ref{lem: Dirac}:
\begin{equation*}
    (f^*, h^*) = i \langle f^* , Sh^* \rangle = i \langle f^*, (S h)^* \rangle = -i \overline{\langle f, S h \rangle} = \overline{(f,h)} = (h,f)
\end{equation*}
What remains to be proven, is the positivity of the sesquilinear form  $(f,f) \geq 0$ with equality if and only if $Sf = 0$. Compute:
\begin{align*}
    (u \oplus v, u \oplus v) &= i \int_M u^{\sharp} (S_{sp} u) - (S_{cosp} v) v^{\sharp}\\
    &= i \int_M u^{\sharp} (S_{sp} u) + v (S_{sp} v^{\sharp})
\end{align*}
Let us choose a Cauchy surface $\Sigma \subset M$ and denote by $D_{sp} := -i\cancel{\nabla} + m$:
\begin{align*}
    i \int_M u^{\sharp} (S_{sp} u) &= i \int_{J^+(\Sigma)} (D_{sp} S^-_{sp} u)^{\sharp} (S_{sp} u) + i \int_{J^-(\Sigma)} (D_{sp} S^+_{sp} u)^{\sharp} (S_{sp} u)\\
    &= i \int_{J^+(\Sigma)} ( S^-_{sp} u)^{\sharp} (D_{sp} S_{sp} u) + i\nabla_a \big( (S^-_{sp}u)^{\sharp} \gamma^a S_{sp} u \big)\\
    &\,\, i \int_{J^-(\Sigma)} ( S^+_{sp} u)^{\sharp} (D_{sp} S_{sp} u) + i\nabla_a \big( (S^+_{sp}u)^{\sharp} \gamma^a S_{sp} u \big)\\
    &= - \int_{J^+(\Sigma)} \nabla_a \big( (S^-_{sp}u)^{\sharp} \gamma^a S_{sp} u \big) - \int_{J^-(\Sigma)} \nabla_a \big( (S^+_{sp}u)^{\sharp} \gamma^a S_{sp} u \big)\\
    &= \int_{\Sigma} n_a(S^-_{sp} u - S^{+}_{sp} u)^{\sharp} \gamma^a S_{sp} u = \int_{\Sigma} (S_{sp} u)^{\sharp} \cancel{n} S_{sp} u
\end{align*}
Where we've used $D_{sp} S_{sp} = 0$ at the third step and the Stokes theorem at the fourth, introducing the future pointing normal to $\Sigma$. However, we have seen in Lemma \ref{lem: uscnorm} that the last integrand is a positive function vanishing only if the argunet is zero.
\end{proof}
We define the completion, with respect to this inner product, to be $\mathcal{H}(M) := \overline{\mathcal{D}_0(M) \backslash \ker S}$: the Hilbert space of solutions of the Dirac equation.\\\\
After this very long but necessary introduction of all these structures, we are finally able to quantise the classical theory leading the a free Dirac QFT on the curved globally hyperbolic background. 

\subsubsection{Dirac Quantum fields}
In particular, we define the \textit{Dirac algebra} $\mathfrak{A}_{DIRAC}$, as the unique (up to $*$-isomorphism) $C^*$-algebra generated by elements $B(f)$, with $f \in \mathcal{H}(M)$, such that\footnote{Notice that as $\mathcal{H}(M) = \overline{\mathcal{D}_0(M) \backslash \ker S}$ the dynamics determined by the Dirac equation is implicitly included into the algebra}:
\begin{enumerate}
    \item $f \mapsto B(f)$ is $\mathbb{C}$-linear 
    \item $B(f^{\sharp}) = B(f)^*$
    \item $[ B(f)^*, B(g)]_+ = (f,g) \mathbb{1}$
\end{enumerate}
What we immediately notice, is that the Dirac algebra is a self-dual CAR algebra, as defined in Def. \ref{def: self}, constructed over the Hilbert space of solutions of the Dirac equation $\mathcal{H}(M)$. In particular, the involution over such a Hilbert space is given by the Dirac adjoint, i.e. $\Gamma f := f^{\sharp}$, that we have proven in Lemma $\ref{lem: HilbD}$ to fulfill all the properties, reported in Def. \ref{def: self}, to be an antiunitary involution. Therefore:
\begin{equation*}
    \mathfrak{A}_{DIRAC} = \mathfrak{A}_{SDC}(\mathcal{H}(M), \,^{\sharp}\,)
\end{equation*}
But, as already discussed, a self-dual CAR algebra is a $C^*$-algebra, proving in this way that also the \textit{Dirac algebra} is. For the uniqueness, suppose that we have another such Dirac algebra built on $\mathcal{H}(M)$, with the same algebraic relations mentioned above $\mathfrak{A}'_{DIRAC}$. Call $B_1(f)$ the elements of $\mathfrak{A}_{DIRAC}$ and by $B_2(f)$ the elements of $\mathfrak{A}'_{DIRAC}$. Define a map $\alpha: \mathfrak{A}_{DIRAC} \to \mathfrak{A}'_{DIRAC}$:
\begin{equation*}
    \alpha(B_1(f)) = B_2(f)
\end{equation*}
This map preserves the algebraic relations between the spaces and is a $*$-isomorphism. Therefore, the algebra is unique up to $*$-isomorphisms.
\begin{rem}
The proof of the Dirac algebra being a $C^*$-algebra allows us, as explained in the Section \ref{sec: Axiom}, to define a QFT. In particular, the localization of the field operators is given by the support of the corresponding test functions $f \in \mathcal{H}(M)$. At the end of the section I'll argue about the fulfillement of the axioms.
\end{rem}

Let us introduce over our Dirac algebra a charge conjugation map that, as we will see once we define it, distinguishes the Dirac from the Majorana field and defines a grading over $\mathfrak{A}_{DIRAC}$:
\begin{prop}
The map $f \mapsto f^{c \sharp} := T f$ raises to a $*$-automorphism to the algebra $\mathfrak{A}_{DIRAC}$ determined by:
\begin{equation*}
    \alpha_C(B(f)) = B(f^{c \sharp})
\end{equation*}
that squares to $\alpha_C^2(B(f)) = - B(f)$
\end{prop}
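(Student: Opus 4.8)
The plan is to exploit that $\mathfrak{A}_{DIRAC}=\mathfrak{A}_{SDC}(\mathcal{H}(M),{}^{\sharp})$ is presented by generators and relations (Definition~\ref{def: self}), so that a $\mathbb{C}$-linear map $T\colon\mathcal{H}(M)\to\mathcal{H}(M)$ which is compatible with the involution $\Gamma f=f^{\sharp}$ and isometric for the inner product $(f,h)=i\langle f,Sh\rangle$ of Lemma~\ref{lem: HilbD} induces a $*$-endomorphism by $B(f)\mapsto B(Tf)$, and an automorphism once $T$ is invertible. Here $T={}^{c\sharp}$. First I would record that $T$ is $\mathbb{C}$-linear: the charge conjugation $(\cdot)^{c}$ and the Dirac adjoint $(\cdot)^{\sharp}$ are each \emph{anti}linear, so their composite is complex linear; hence $f\mapsto B(f^{c\sharp})$ is $\mathbb{C}$-linear, which is already the first defining relation of the self-dual CAR algebra.

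Next I would verify that the remaining two relations are carried into themselves. For the $*$-relation $B(f^{\sharp})=B(f)^{*}$ and the CAR relation $[B(f)^{*},B(g)]_{+}=(f,g)\mathbb{1}$ one invokes the identities of Lemma~\ref{lem: Dirac} --- namely $g^{cc}=g$, $g^{\sharp\sharp}=g$, $g^{c\sharp}=-g^{\sharp c}$, together with $\langle f^{\sharp},g^{\sharp}\rangle=-\overline{\langle f,g\rangle}$ and its $(\cdot)^{c}$ analogue. The crucial ingredient for the anticommutation relation is that $T$ is isometric: applying the chain of equalities in Lemma~\ref{lem: HilbD} twice,
\begin{equation*}
(f^{c\sharp},g^{c\sharp})=(g^{c},f^{c})=(f,g),
\end{equation*}
where we used $(a^{\sharp},b^{\sharp})=(b,a)$ and then $(a^{c},b^{c})=(b,a)$. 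Since all defining relations are preserved, $B(f)\mapsto B(f^{c\sharp})$ extends to a $*$-endomorphism $\alpha_{C}$ of $\mathfrak{A}_{DIRAC}$.

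Finally I would compute $T^{2}$: moving the inner $\sharp$ past the inner $c$ via $g^{\sharp c}=-g^{c\sharp}$ and using $g^{cc}=g$, $g^{\sharp\sharp}=g$ gives $(f^{c\sharp})^{c\sharp}=-f$, so $T$ is bijective with $T^{-1}=-T$; hence $\alpha_{C}$ is in fact a $*$-automorphism, and $\alpha_{C}^{2}(B(f))=B(T^{2}f)=B(-f)=-B(f)$ by $\mathbb{C}$-linearity of $f\mapsto B(f)$. (So $\alpha_{C}^{2}$ is the $\mathbb{Z}_{2}$-grading automorphism of $\mathfrak{A}_{DIRAC}$ anticipated above, and there is no clash with $\alpha_{C}$ being an algebra map.)

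The part that actually requires care is the sign bookkeeping. The composite $(\cdot)^{c\sharp}$ neither commutes with $(\cdot)^{\sharp}$ nor with itself on the nose: every check in the two middle paragraphs rests on applying $g^{c\sharp}=-g^{\sharp c}$ and the swap built into $(\cdot)^{\sharp}$ on the double spinor bundle $DM\oplus D^{*}M$ with exactly the right signs, and it is precisely these signs that simultaneously produce $\alpha_{C}^{2}=-\mathbb{1}$ on the generators and leave $\alpha_{C}$ a genuine $*$-automorphism. I expect this to be the only real obstacle; the rest is a direct application of Lemmas~\ref{lem: Dirac} and \ref{lem: HilbD}.
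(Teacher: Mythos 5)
Your overall architecture is the paper's: establish that $T={}^{c\sharp}$ is an isometry, lift it to the algebra, and compute the square. The two computations you actually write out, $(f^{c\sharp},g^{c\sharp})=(g^{c},f^{c})=(f,g)$ and $(f^{c\sharp})^{c\sharp}=-f$, are verbatim the paper's. Where you diverge is the middle step: the paper does not verify the presentation relations at all but instead appeals to the Bogoliubov-transformation machinery of Bratteli--Robinson, whereas you propose to check directly that $B(f)\mapsto B(Tf)$ preserves the defining relations of $\mathfrak{A}_{SDC}(\mathcal{H}(M),{}^{\sharp})$. That substitution is where the gap lives.

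The relation you must carry into itself is $B(f)^{*}=B(\Gamma f)$ with $\Gamma={}^{\sharp}$, and for a substitution automorphism this forces $T\Gamma=\Gamma T$. But the very identity you invoke, $g^{\sharp c}=-g^{c\sharp}$, gives
\begin{equation*}
\Gamma T f=f^{c\sharp\sharp}=f^{c},\qquad T\Gamma f=f^{\sharp c\sharp}=-f^{c\sharp\sharp}=-f^{c},
\end{equation*}
so $T\Gamma=-\Gamma T$ and hence $\alpha_{C}(B(f)^{*})=-\alpha_{C}(B(f))^{*}$ on generators (concretely, $\alpha_{C}(\psi(h^{c}))=\psi(h)^{*}$ while $\alpha_{C}(\psi(h^{c})^{*})=-\psi(h)$). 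The sign bookkeeping you flag as ``the only real obstacle'' therefore does not close: the same minus sign that produces $T^{2}=-\mathbb{1}$ obstructs the naive $*$-compatibility, and these two facts cannot both be absorbed by a plain substitution $f\mapsto Tf$. To repair this one must either modify the implementer (e.g.\ $iT$ commutes with the antilinear $\Gamma$ since $\Gamma i=-i\Gamma$, but then $(iT)^{2}=+\mathbb{1}$ and the claimed square changes) or construct $\alpha_{C}$ the way the cited Bogoliubov theory does, by mixing $B$ and $B^{*}$ rather than substituting in the argument. The paper's own proof hides this behind the citation, so the check you defer is not routine; as written, your verification fails at the $*$-relation, and this is a genuine gap rather than a notational quibble.
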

\begin{proof}
First, from what we have seen previously, we have that the map is an isometry:
\begin{equation*}
   (T f, Tg) = (f^{c \sharp}, g^{c \sharp}) = (g^c, f^c) = (f, g)
\end{equation*}
From here, one can define a spceific type of Bogolubov transformations that imply the existence of the $*$-automorphism $\alpha_C$, for details see section $5.2.2.1$ of \cite{Bratteli:1996xq}.\\
Finally, for what concerns the square:
\begin{align*}
    \alpha_C^2(B(f)) &= B(f^{\sharp c \sharp c})\\
    &= B(-f^{\sharp \sharp c c})\\
    &= -B(f)
\end{align*}
\end{proof}
The way we have introduced the local algebra of Dirac fields, might be unfamiliar with the standard notation that is used in most textbooks of QFT. For this reason, to establish a connection with the standard approaches, we rewrite our operators introducing the following notation:
\begin{defn}
Define the maps $\psi: C^{\infty}_0(D^*M) \to \mathfrak{A}_{DIRAC}$ and $\psi^{\sharp}: C^{\infty}_0(DM) \to \mathfrak{A}_{DIRAC}$ by:
\begin{equation*}
    \psi(h) := B(0 \oplus h), \hspace{20pt} \psi^{\sharp}(f) := B(f \oplus 0)
\end{equation*}
We also define:
\begin{align*}
    \psi^c(h^c) &:= \alpha_C(B(0 \oplus h^c)) = \psi(h)^*\\
    \psi^{\sharp c}(f^c) &:= \alpha_C(B(f^c \oplus 0)) = \psi^{\sharp}(f)^*
\end{align*}
\end{defn}
Let us now prove that these are operators valued distributions and satisfy the standard anticommutation relations:
\begin{prop}
The maps $B$, $\psi$ and $\psi^{\sharp}$ are operator valued distributions in the $C^*$-algebra $\mathfrak{A}$ and:
\begin{enumerate}
    \item $\psi^{\sharp}(f) = \psi(f^{\sharp})^*$
    \item $[ \psi^{\sharp}(f), \psi(h) ]_+ = (f^{\sharp} \oplus 0, h \oplus 0)\mathbb{1} = i \int_M f(S_{sp}h)\mathbb{1}$ and all other anticommutator vanish
    \item We have:
    \begin{equation*}
        (-i\cancel{\nabla} + m)\psi = 0 \hspace{20pt} (i\cancel{\nabla} + m) \psi^{\sharp} = 0
    \end{equation*}
    Where the action of the differential operator on the operator valued distribution is defined as usual for any distribution $U$ on smooth sections $f \in D^*M$ as:
    \begin{equation*}
        (\nabla_a U)(f) = -U(\nabla_a f) \hspace{20pt} (\gamma_aU)(f) = U(f \gamma_a)
    \end{equation*}
\end{enumerate}
\end{prop}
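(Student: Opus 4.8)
The plan is to treat the three assertions in order, leaning on two facts already available. First, Lemma~\ref{lem: HilbD} gives $\ker S = D(\mathcal{D}_0(M))$, so the generator map $g \mapsto B(g)$ factors through $\mathcal{D}_0(M)\backslash\ker S$; in particular $B(Dg)=0$ for every $g\in\mathcal{D}_0(M)$. Second, setting $f=g$ in $[B(f)^*,B(g)]_+=(f,g)\mathbb{1}$ yields $B(g)^*B(g)+B(g)B(g)^*=(g,g)\mathbb{1}$ with both summands positive, hence $\|B(g)\|^2=\|B(g)^*B(g)\|\le (g,g)=\|g\|_{\mathcal{H}(M)}^2$: the generator map is norm-contractive.

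For the distribution property it then suffices to check that the test-section maps $h\mapsto 0\oplus h$ and $f\mapsto f\oplus 0$ into $\mathcal{D}_0(M)$, composed with $\mathcal{D}_0(M)\to\mathcal{H}(M)$, are continuous in the usual topology of compactly supported sections; this holds because $\|\cdot\|_{\mathcal{H}(M)}$ pulls back through $(\cdot,\cdot)=i\langle\cdot,S\cdot\rangle$ to a continuous expression (by Theorem~\ref{thm: prop}, $S$ sends test sections continuously to smooth solutions, and the pairing $\langle\cdot,\cdot\rangle$ is continuous on compactly supported sections). Composing with the contractive $B$ shows that $B$, $\psi$, $\psi^\sharp$ are operator-valued distributions. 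The identity $\psi^\sharp(f)=\psi(f^\sharp)^*$ is then purely algebraic: from $B(g)^*=B(g^\sharp)$, the rule $(u\oplus v)^\sharp=v^\sharp\oplus u^\sharp$ and $f^{\sharp\sharp}=f$ one gets $\psi(f^\sharp)^*=B(0\oplus f^\sharp)^*=B\big((0\oplus f^\sharp)^\sharp\big)=B(f\oplus 0)=\psi^\sharp(f)$.

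For the anticommutators the idea is to rewrite one generator as $B(\cdot)^*$ so that the defining CAR relation applies directly. Since $\psi^\sharp(f)=B(f\oplus 0)=B(0\oplus f^\sharp)^*$, one finds $[\psi^\sharp(f),\psi(h)]_+=[B(0\oplus f^\sharp)^*,B(0\oplus h)]_+=(0\oplus f^\sharp,\,0\oplus h)\mathbb{1}$, which is the contraction in the statement; expanding it with $S=S_{sp}\oplus S_{cosp}$, the definition of $\langle\cdot,\cdot\rangle$, and the identity $\int_M v(S_{sp}u)=-\int_M(S_{cosp}v)u$ from the remark after Theorem~\ref{thm: prop} puts it in the form $i\int_M f(S_{sp}h)\mathbb{1}$. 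The remaining anticommutators vanish because the same manoeuvre turns $[\psi(h_1),\psi(h_2)]_+$ and $[\psi^\sharp(f_1),\psi^\sharp(f_2)]_+$ into $\langle\cdot,\cdot\rangle$-pairings of double-spinor fields supported in complementary summands of $DM\oplus D^*M$, so that every term of the integrand carries a factor $0$ or $0^\sharp$.

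Finally, for the field equations: $D=(-i\cancel{\nabla}+m)\oplus(i\cancel{\nabla}+m)$ acts componentwise, so $D(0\oplus h)=0\oplus(i\cancel{\nabla}+m)h$ and $D(f\oplus 0)=\big((-i\cancel{\nabla}+m)f\big)\oplus 0$. Testing $(-i\cancel{\nabla}+m)\psi$ against a cospinor $h$ and moving the operator onto the test section via the distributional rules $(\nabla_a U)(h)=-U(\nabla_a h)$, $(\gamma_a U)(h)=U(h\gamma_a)$ and the covariant constancy of $\gamma$ gives $\big((-i\cancel{\nabla}+m)\psi\big)(h)=\psi\big((i\cancel{\nabla}+m)h\big)=B\big(D(0\oplus h)\big)=0$ by the first fact above; the computation $\big((i\cancel{\nabla}+m)\psi^\sharp\big)(f)=\psi^\sharp\big((-i\cancel{\nabla}+m)f\big)=B\big(D(f\oplus 0)\big)=0$ is symmetric. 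I expect the only real friction to be the index bookkeeping — keeping straight spinor versus cospinor slots, the $\sharp$- and $c$-operations, and the fact that the transpose of $-i\cancel{\nabla}+m$ on $DM$ is $i\cancel{\nabla}+m$ on $D^*M$ (an integration by parts relying on $\nabla_a\gamma^a=0$); once that is organised, each claim drops out of the CAR relations and of $\ker S=D(\mathcal{D}_0(M))$.
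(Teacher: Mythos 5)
Your proposal is correct and follows essentially the same route as the paper: the algebraic identities and the anticommutators are obtained by rewriting one generator as a $B(\cdot)^*$ so the CAR relation applies, and the field equations follow from $D(\mathcal{D}_0(M))=\ker S$ exactly as in the text. The only local variation is in the distribution property: where you invoke the general CAR positivity bound $\|B(g)\|^2\leq (g,g)$ and then the continuity of $g\mapsto i\langle g, Sg\rangle$ on test sections, the paper instead identifies the $C^*$-subalgebra generated by $\mathbb{1},\psi(h),\psi(h)^*$ with $M(2,\mathbb{C})$ and computes $\|\psi(h)\|^2=i\int_M h(S_{sp}h^{\sharp})\,d\mathrm{vol}_g$ exactly -- your inequality is all that is needed for norm continuity of the ($\mathbb{C}$-linear) map, so nothing is lost.
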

\begin{proof}
For the first statement:
\begin{align*}
    \psi^{\sharp}(f) &= B(f \oplus 0)\\
    &= B(f^{\sharp \sharp} \oplus 0)\\
    &= B(0 \oplus f^{\sharp})^*
\end{align*}
For what concerns the second statement let us compute:
\begin{align*}
    [ \psi^{\sharp}(f), \psi(h)]_+ &= [ B(f \oplus 0), B(h^{\sharp} \oplus 0)^*]_+\\
    &= (h^{\sharp} \oplus 0, f\oplus 0)\\
    &= i \int_M h(S_{sp} f) \mathbb{1}
\end{align*}
Where we have used the result of the first point.\\
For the third statement compute:
\begin{align*}
    ((-i\cancel{\nabla} +m) \psi)(h) &= \psi((i\cancel{\nabla} + m)h)\\
    &= B(D(0 \oplus h)) = 0
\end{align*}
Where we have used the fact that $D(0 \oplus h) \in \ker S$ and the fact that the arguments of $B$ are in $\mathcal{H}(M)$. With the same procedure, one proves the same to hold also for $\psi^{\sharp}$.\\
It remains to show that $B$, $\psi$ and $\psi^{\sharp}$ are operator valued distributions. Consider the $C^*$-subalgebra generated by $\mathbb{1}$, $\psi(h)$ and $\psi(h)^*$. This is a Clifford algebra and we can also show it to be isomorphic to $M(2,\mathbb{C})$. Define for it the following map:
\begin{equation*}
    \psi(h) \mapsto \begin{pmatrix}
    0 & \sqrt{c}\\
    0 & 0
    \end{pmatrix} \hspace{20pt} \psi(h)^* \mapsto \begin{pmatrix}
    0 & 0\\
    \sqrt{c} & 0
    \end{pmatrix}
\end{equation*}
Where $c = (0\oplus h, 0 \oplus h) = i \int_M h(S_{sp} h^{\sharp}) d\mathrm{vol}_g > 0$. This is an isomorphism as the injectivity follows from the fact that $c = 0 $ iff $h=0$ and the surjectivity follows from the nullity rank theorem. Then, if we look at the norm:
\begin{align*}
    \| \psi(h) \| &= \| \begin{pmatrix}
    0 & \sqrt{c}\\
    0 & 0
    \end{pmatrix} \|_{op}\\
    &= \sqrt{c}
\end{align*}
So:
\begin{equation*}
    \| \psi(h) \|^2 = i \int_M h(S_{sp}h^{\sharp}) d \mathrm{vol}_g = c
\end{equation*}
This shows that this is a $C^*$-norm as:
\begin{equation*}
    \psi(h) \psi(h)^* \mapsto \begin{pmatrix}
    0 & \sqrt{c}\\
    0 & 0
    \end{pmatrix} \begin{pmatrix}
    0 & 0\\
    \sqrt{c} & 0
    \end{pmatrix} = \begin{pmatrix}
    c & 0\\
    0 & 0
    \end{pmatrix}
\end{equation*}
so:
\begin{equation*}
    \| \psi(h) \psi(h)^* \| = \| \begin{pmatrix}
    c & 0\\
    0 & 0
    \end{pmatrix} \|_{op} = c
\end{equation*}
Showing that $\| \psi(h) \psi(h)^* \| = \| \psi(h) \|^2$ and that the algebra genrated by  $\mathbb{1}$, $\psi(h)$ and $\psi(h)^*$ with this norm is a $C^*$-algebra.\\
But then, in the topology of test spinors, we have that $h \mapsto h \oplus h^{\sharp} \mapsto i \int_M h(S_{sp} h^{\sharp}) d\mathrm{vol}_g$ are all continuous: it follows, from the above expression of $\| \psi(h) \|^2$, that the map $h \mapsto \psi(h)$ is norm continuous. This proves that $\psi$ must be an operator valued distribution and the corresponding operators are in the abovely mentioned $C^*$-algebra, as we have proven it to be a continuous linear functional over spinors. The proof for $\psi^{\sharp}$ is anaogous and as a consequence follows the result for $B$.
\end{proof}

We still need to argue why the $C^*$-algebra $\mathfrak{A}_{DIRAC}$ defined above defines an AQFT, i.e. it fulfills the listed axioms in Section \ref{sec: Axiom}. Notice that, as a $C^*$-algebra, whenever we choose a state $\omega$ over $\mathfrak{A}_{DIRAC}$ we will always get a representation $\pi_{\omega}$ of it as bounded operators over a Hilbert space $\mathcal{H}_{\omega}$ by the \textit{GNS Theorem} \ref{thm: GNS}.\\
As mentioned earlier, the localization of algebras is given by the supports of the test-spinors, i.e. for $\mathcal{O}_1 \subset M$ we define $\mathfrak{A}_{DIRAC}(\mathcal{O}_1)$ as the $C^*$-subalgebras generated by the elements $B(f)$ for all those $f \in \mathcal{H}(M)$ such that $supp \, f \subset \mathcal{O}_1$. But then, if we take $\mathcal{O}_2 \subset \mathcal{O}_1$ as $supp \, f \subset \mathcal{O}_2 \Rightarrow supp \, f \subset \mathcal{O}_1$ we must also have $\mathfrak{A}_{DIRAC}(\mathcal{O}_2) \subset \mathfrak{A}_{DIRAC}(O_2)$, proving $\mathbf{A1}$. For what concerns causality, let us first notice that, once $\mathfrak{A}_{DIRAC}$ is represented, the canonical anticommutation relations become:
\begin{equation*}
    [ \pi_{\omega}(B(f)^*), \pi_{\omega}(B(g)) ]_+ = [\pi_{\omega}(B(f)^*), \pi_{\omega}(B(g))]_{\Gamma} = (f,g)\mathbb{1}
\end{equation*}
Where we have introduced the graded commutator as $\mathfrak{A}_{DIRAC}$ is a $\mathbb{Z}_2$-graded algebra with grading given by the $*$-automorphism represented by $\gamma(B(f)) = B(\pi(-\mathbb{1})f) = -B(f)$ (see Section II.$2$ in \cite{DAntoni:2001ido}). But then, from the definition of the inner product, if the support of $f$ is spacelike separated from that of $g$ the RHS of the above expression vanishes proving $\mathbf{A2'}$. The third axiom is fulfilled by raising the eventual existing symmetries of the spacetime as automorphisms over the algebra using similar argument as those used for $\alpha_C$. The validity of last axiom, in the general context of curved spacetimes, depends on the existence of a timelike symmetry. However, in the case of Minkowski, we can find a unique quasifree ground state over $\mathfrak{A}_{DIRAC}$, whose \textit{GNS construction} gives the standard vacuum vector that is left invariant by Poincaré transformations.\\\\

\subsubsection{Majorana Quantum fields}
Finally let us discuss how to get a Majorana quantum field theory. In particular, we need to figure out how to impose the "reality" Majorana condition.\\
First of all, consider just the space of test $C^{\infty}_0(DM) \backslash \ker S_{sp}$ and equip it with an inner product $(u_1,u_2)' := (u_1 \oplus 0, u_2 \oplus 0)$. then, we complete it to a Hilbert also in this case: $\mathcal{H}'(M) = \overline{C^{\infty}_0(DM) \backslash \ker S_{sp}}$. The map $(\cdot)^c$ defines a conjugation map over this space. Now we quantise the classical space of solutions, to obtain a \textit{Majorana algebra} $\mathfrak{A}_{MAJ}$, as a $C^*$-algebra generated by elements $B'(f)$ for $f \in \mathcal{H}'(M)$, satisfying:
\begin{enumerate}
    \item $f \mapsto B'(f)$ is $\mathbb{C}$-linear
    \item $B'(f^c) = B'(f)^*$\\
    Notice that in this case we use the $(\cdot)^c$ and not the hermitian conjugation as for Dirac spinors, since our operator valued distributions are defined just on spinors
    \item $[ B'(f_1)^*, B'(f_2) ]_+ = (f_1, f_2)'\mathbb{1}$
\end{enumerate}
For cospinors $h$ we define $B'(h) := B'(h^{\sharp})^*$. This, however, adds nothing new to the algebra as we were able to express it just in terms of spinors. Although, for them, we have a difference from the Dirac cospinor case as now $[ B'(h)^*, B'(f) ]_+ = (h^{\sharp c}, f)' \mathbb{1}$ while in the case of Dirac:
\begin{equation*}
    [ \psi(h)^*, \psi^{\sharp}(f) ]_+ = [ B(0 \oplus h)^*, B(f \oplus 0) ]_+ = 0
\end{equation*}
That, represents the anticommutation, in contrast, of Dirac field operators $\psi^{\sharp}(h^{\sharp})$ with $\psi^{\sharp}(f)$ (same holding for $\psi(h)$ and $\psi(f^{\sharp})$).\\
As for the Dirac algebra, we see:
\begin{equation*}
    \mathfrak{A}_{MAJ} = \mathfrak{A}_{SDC}(\mathcal{H}'(M), \, ^c \,)
\end{equation*}
Therefore, also the Majorana algebra is a $C^*$-algebra.\\
For what concerns the map $\alpha_C$, in this case it acts as an identity on Majorana spinors:
\begin{align*}
    \alpha_C (B'(f)) &= B'(f^{\sharp c})\\
    &= B'(f^{\sharp})^* = B'(f)
\end{align*}
Finally, in order to recast the standard field notation, we introduce the following notation for Majorana fields:
\begin{equation*}
    \psi(f) := B'(f)
\end{equation*}
So that $\psi(f)^* = \psi(f^c)$.\\\\

\subsubsection{Alternative, isomorphic, construction of the algebra of Dirac fields}\label{sec: alternative}
The Hilbert space of solutions can be constructed also in an alternative way, leading to an isometric Hilbert space and a  Dirac algebra isomorphic to the one previously constructed. Here, we briefly want to outline this alternative construction as will be used in Chapter 3, in constructing the algebra for a specific example.\\
We start considering a Cauchy surface $\Sigma$, exisitng by the global hyperbolicity assumption, and consider the space of test spinors and cospinors over $\Sigma$: $\mathcal{D}_0(\Sigma) := C^{\infty}_0(D\Sigma \oplus D^*\Sigma)$. On this space, we introduce the following sesquilinear, non-degenerate form:
\begin{equation*}
    \langle u_1 \oplus v_1, u_2 \oplus v_2 \rangle_{\Sigma} := \int_{\Sigma} (u_1^{\sharp} u_2 - v_2 v_1^{\sharp}) d\mathrm{vol_h}
\end{equation*}
Where we are denoting with $h$ the restriction of the metric $g$ on $\Sigma$, as a Riemannian metric $h$ (See Theorem \ref{thm: globhy}). As done before, we can reformulate and prove Lemma \ref{lem: Dirac} for this sesquilinear form and for analogous definitions of $D$, $\Tilde{D}$ and Dirac adjoint and charge conjugation.\\
Then, we define an inner product as follows:
\begin{lem}
Te bilinear map:
\begin{equation*}
(f,h)_{\Sigma} := \langle f, \cancel{n} h \rangle_{\Sigma}
\end{equation*}
with $n$ the unit timelike, forward normal to the Cauchy surface $\Sigma$, defines an hermitian inner product. Moeover, it fulfills the following sequence of equalities for $f,h \in \mathcal{D}_0(\Sigma)$:
\begin{equation*}
    (f^{\sharp}, h^{\sharp})_{\Sigma} = (f^{c}, h^{c})_{\Sigma} = \overline{(f,h)_{\Sigma}} = (h, f)_{\Sigma}
\end{equation*}
\end{lem}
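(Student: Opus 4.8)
The plan is to transcribe the proof of Lemma \ref{lem: HilbD} to the Cauchy surface, with $\cancel{n}$ playing the role that the causal propagator $S$ (together with the factor $i$) played in the bulk. Sesquilinearity of $(\cdot,\cdot)_{\Sigma}$ is immediate from that of $\langle\cdot,\cdot\rangle_{\Sigma}$ once we fix how $\cancel{n}$ acts fibrewise on $D\Sigma\oplus D^{*}\Sigma$: from the left on spinors and from the right on cospinors, with the asymmetric sign $\cancel{n}(u\oplus v):=(\cancel{n}u)\oplus(-v\cancel{n})$ chosen to match the $\pm i$ structure of $D$. This is essentially the single convention on which everything hinges.

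For the chain of equalities I would first record how $\cancel{n}$ intertwines with the Dirac adjoint and charge conjugation. From the defining relations of $A$ and $C$ in Definition \ref{def: hacc} — concretely $(\gamma^{a})^{*}A=A\gamma^{a}$ and $\overline{\gamma}^{a}C=-C\gamma^{a}$, the same relations already used in Lemma \ref{lem: uscnorm} — one gets $(\cancel{n}u)^{\sharp}=u^{\sharp}\cancel{n}$, $(v\cancel{n})^{\sharp}=\cancel{n}v^{\sharp}$ and $(\cancel{n}u)^{c}=-\cancel{n}u^{c}$, hence on the double bundle $(\cancel{n}f)^{\sharp}=-\cancel{n}f^{\sharp}$ and $(\cancel{n}f)^{c}=-\cancel{n}f^{c}$. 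Feeding these into the Cauchy-surface version of Lemma \ref{lem: Dirac}, which holds mutatis mutandis for $\langle\cdot,\cdot\rangle_{\Sigma}$ as the excerpt notes — in particular $\langle f^{\sharp},g^{\sharp}\rangle_{\Sigma}=\langle f^{c},g^{c}\rangle_{\Sigma}=-\overline{\langle f,g\rangle_{\Sigma}}=-\langle g,f\rangle_{\Sigma}$ — the two minus signs cancel and yield $(f^{\sharp},h^{\sharp})_{\Sigma}=(f^{c},h^{c})_{\Sigma}=\overline{(f,h)_{\Sigma}}=(h,f)_{\Sigma}$, the last equality being Hermiticity.

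It remains to check positive-definiteness. With the chosen action of $\cancel{n}$ one has $(f,f)_{\Sigma}=\int_{\Sigma}\big(u^{\sharp}\cancel{n}u+v\cancel{n}v^{\sharp}\big)\,d\mathrm{vol}_{h}$ for $f=u\oplus v$; both integrands are pointwise nonnegative, the first by the positivity statement of Lemma \ref{lem: uscnorm}, the second by the analogous computation for cospinors (equivalently, because $A\cancel{n}$ is a positive-definite matrix, e.g.\ equal to $\mathbb{1}$ in the orthonormal frame adapted to $n$). Since $A\cancel{n}$ is in fact strictly positive-definite, $(f,f)_{\Sigma}=0$ forces $f$ to vanish identically on $\Sigma$, so $(\cdot,\cdot)_{\Sigma}$ is already an inner product on $\mathcal{D}_{0}(\Sigma)$ with no quotient needed, in contrast to the bulk form where $\ker S$ had to be divided out. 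The only genuine obstacle is sign bookkeeping: ensuring the convention for $\cancel{n}(u\oplus v)$ is the one that simultaneously produces the stated positivity and the stated Hermiticity; once it is pinned down, the statement reduces to a restriction of the already-established bulk identities to $\Sigma$.
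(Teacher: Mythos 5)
Your proof is correct, but it follows a genuinely different route from the paper's. You work intrinsically on $\Sigma$: fix how $\cancel{n}$ acts on $D\Sigma\oplus D^{*}\Sigma$, derive the fibrewise intertwining relations $(\cancel{n}f)^{\sharp}=-\cancel{n}f^{\sharp}$ and $(\cancel{n}f)^{c}=-\cancel{n}f^{c}$ from Definition \ref{def: hacc}, feed them into the $\Sigma$-version of Lemma \ref{lem: Dirac}, and obtain positive-definiteness pointwise from $A\cancel{n}>0$. The paper instead pulls everything back to the spacetime: it introduces the bijection $K=\beta\circ S:\mathcal{D}_0(M)\backslash\ker S\to\mathcal{D}_0(\Sigma)$ sending a test section to the Cauchy datum of the solution $Sf$, proves via the Stokes-theorem identities already used in Lemma \ref{lem: HilbD} that $(Kf,Kg)_{\Sigma}=(f,g)$, and then transports all the stated properties from the bulk inner product. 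The paper's route buys the isometry of $K$, which is the real point of the subsection since it is what makes $\mathfrak{A}_{SDC}(\mathcal{H}(\Sigma),\Gamma)$ isomorphic to $\mathfrak{A}_{SDC}(\mathcal{H}(M),\Gamma)$; your route is more elementary (no well-posedness of the Cauchy problem, no propagator identities) and makes transparent the strict pointwise positivity on $\Sigma$ — hence the absence of any quotient by a kernel — which the bulk form does not have. Your remark that everything hinges on the sign convention for the action of $\cancel{n}$ on the cospinor summand is well taken: the paper never states that convention explicitly, and the integrand it writes, $u_1^{\sharp}\cancel{n}u_2 - v_2\cancel{n}v_1^{\sharp}$, is compatible with termwise positivity only after the sign in the cospinor Stokes identity is tracked carefully; your choice $\cancel{n}(u\oplus v)=(\cancel{n}u)\oplus(-v\cancel{n})$ is the one that makes Hermiticity and positivity come out correctly term by term, so your verification is self-consistent.
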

\begin{proof}
Define a map $\beta(f) = \Tilde{f}$, with $f \in \mathcal{D}(M)$ and $\Tilde{f} \in \mathcal{D}(\Sigma)$, as the map that projects a spinor in its component over $\Sigma$. Furthermore, this defines, taking the composition with $S$, the map $\beta \circ S$ on $\mathcal{D}_0(M) \backslash \ker(S)$. Such a map, associates to a test spinor $f$ the corresponding initial datum, over $\Sigma$, of the solution $Sf$ of the Dirac equation. This association is unique (See Theorem $2.3$ in \cite{Dimock1982DiracQF}). Moreover, by existence and uniqueness of solutions, the association is unique also in the opposite direction. Namely, given $\Tilde{f} \in \mathcal{D}_0(\Sigma)$, there is a unique solution of the Dirac equation associated to it and in particular a unique $f \in \mathcal{D}_0(M)\backslash \ker(S)$ such that the solution is $Sf$. Let us call this bijective  map:
\begin{align*}
    K: \mathcal{D}_0(M)\backslash \ker(S) &\to \mathcal{D}_0(\Sigma)\\
    f &\mapsto \beta(S(f)) =: f_0
\end{align*}
For notational convenience, as $S = S_{sp} \oplus S_{cosp}$, we also decompose $K = K_{sp} \oplus K_{cosp}$.\\
Now, using this map, we show that $(\cdot, \cdot)_{\Sigma}$ defines an inner product over $\mathcal{D}_0(\Sigma)$, from the knowledge of $(\cdot, \cdot)$ being an inner product over $\mathcal{D}_0(M) \backslash \ker(S)$. Take $\mathcal{D}_0(M)\backslash \ker(S) \ni f = f^{(1)} \oplus f^{(2)}$, then:
\begin{align*}
    (f_0, g_0)_{\Sigma} &= (K f, K g)_{\Sigma}\\
    &= \int_{\Sigma} [(K_{sp}f^{(1)})^{\sharp} \cancel{n}K_{sp}g^{(1)} - K_{cosp}g^{(2)} \cancel{n}(K_{cosp}f^{(2)})^{\sharp}]\\
    &= i \int_M [(f^{(1)})^{\sharp} S_{sp}g^{(1)} + g^{(2)} (S_{cosp}f^{(2)})^{\sharp}] = (f,g)
\end{align*}
Where at the third step we have used the identities:
\begin{align*}
    i \int_M (f^{(1)})^{\sharp} (S_{sp} g^{(1)}) &= \int_{\Sigma} (S_{sp} f^{(1)})^{\sharp} \cancel{n} S_{sp} g^{(1)} = \int_{\Sigma} (K_{sp} f^{(1)})^{\sharp} \cancel{n} K_{sp} g^{(1)}\\
    i \int_M g^{(2)} (S_{cosp} f^{(2)})^{\sharp} &= -\int_{\Sigma} (S_{cosp} g^{(2)}) \cancel{n}(S_{cosp} f^{(2)})^{\sharp} = -\int_{\Sigma} (K_{cosp} g^{(2)})\cancel{n} (K_{cosp} f^{(2)})^{\sharp}
\end{align*}
that are derived with the procedure outlined in the proof of Lemma \ref{lem: HilbD}. Therefore, holding this for any $f_0, g_0 \in \mathcal{D}_0(\Sigma)$ and since the analogous statements of the Lemma hold for the inner product $(\cdot, \cdot)$ over $\mathcal{D}_0(M) \backslash \ker(S)$, follows the claim.

\end{proof}
As a consequence of this lemma, we define the Hilbert space $\mathcal{H}(\Sigma)$ by completing $\mathcal{D}_0(\Sigma)$ with respect to this inner product. Therefore, $K$ becomes an isometry between the Hilbert spaces: $K: \mathcal{H}(\Sigma) \to \mathcal{H}(M)$. As a further consequence, if we define the self-dual CAR algebra associated to $\mathcal{H}(\Sigma)$ with involution $\Gamma$ (respectively Dirac adjoint or charge conjugation for Dirac or Majorana field operators), we obtain an algebra $\mathfrak{A}_{SDC}(\mathcal{H}(\Sigma), \Gamma)$ that is isomorphic to $\mathfrak{A}_{SDC}(\mathcal{H}(M), \Gamma)$ (See Section $2.4.2$ in \cite{Hollands:2017dov}).\\
Finally, one can show that these construction do not depend on the initial choice of Cauchy surface $\Sigma$, see Theorem $4.2$ in \cite{Dimock1982DiracQF}.

\newpage
\chapter{The role of entropy and information in modern physics}
Information theory formalizes the way to store and process information. It is a well established theory on its own, namely without introducing or relating it to any discrete or continuous systems. In fact, its classical formulation lead to most of the big progresses in engineering, telecomunications and computer science. Even more amazingly, in the second half of the last century, it turned out that to overcome our ignorance about quantum processes, the generalization of classical information theory to the quantum scale provided a theoretical understanding of aspects of the quantum theory giving birth to the era of Quantum Technologies.\\
One of the key concepts in it is entropy: information about the system hidden in the collection of degrees of freedom too small and numerous to keep track of all of them. Which role does it play in modern physics? A way to explain it is, for example, in the context of Black Holes is via the gedankenexperiment of the hot cup of coffee. Namely, take a cup of hot coffee and throw it into a Black Hole. What happens to the entropy of the cup once it has fallen into the Black Hole, if nothing can escape from it? If we say that it is simply lost, we will be violating the second law of thermodynamics. Therefore, there must be some compensation in order for an external observer to measure an increase or at most a conservation of entropy from this process. This was in fact proven to be the case via the so called Generalized Second Law of Black Hole dynamics stating that in each process the sum of the external entropy and the Black Hole area must always increase, suggesting the assignment of entropy to a Black Hole via its area.\\
Hence, our statistical and information-theoretical interpretation of entropy together with (thanks to Hawking radiation and Black Hole thermodynamics) the above stated relation to a geometrical quantity like the area of the Black Hole, most likely constitutes the first true equation of an hypothetical theory of Quantum Gravity. As such, it lead to speculations on how to derive it within the zoo of the existing theories of quantum gravity. On a more humble and cautious perspective, understanding how processes leading to the variation of the Black Hole entropy, like evaporation due to Hawking radiation, have arisen as fundamental questions.\\\\
Therefore, the aim of this chapter, is to rigorously introduce such notions in the framework of QFT and, more precisely, at the level of local algebras. We will see, that relative entropy is the only information theoretical notion that is well defined in QFT and is interpreted as a measure of distinguishability among functionals on local algebras of observables. The ultimate goal is thus to introduce the Tomita-Takesaki modular theory, the Araki formula for the relative entropy and prove some properties for it. For related literature, I mainly refer to \cite{Vedral:2002zz} for a review on Information theory and on the original works of Araki \cite{Araki1976}, \cite{Araki1977}, the book of Takesaki \cite{Takesaki:1970aki}, the work of Uhlmann \cite{Uhlmann}, and the recent intuitive introduction \cite{Witten:2018zxz}, meant for physicists, on Tomita-Takesaki theory and Araki's relative entropy.

\section{Quantum information theory}
This section aims at reviewing the basics of quantum information theory, in the extent that is needed in order to discuss in the next sections the Araki's relative entropy. I will assume the reader to be familiar with classical information theory, and I refer to \cite{Vedral:2002zz} for a review of it. Let me just mention here that the Shannon entropy is interpreted in classical information theory as an averaged measure of uncertainty (represented by the logarithm of the inverse of the classical probabilities) and that the Shannon relative entropy is a measure of distinguishability between different probability distributions.\\
Since, as mentiond before, the final goal is to generalize these notions to QFT, we will also introduce the concept of von Neumann factors. In particular, we will see that finite dimensional quantum systems, are those for which the observables belong to a so called Type $I$ von Neumann algebra.

\subsection{The Von Neumann relative entropy}
Among the main differences between a classical and a quantum system are the notions of entanglement and superposition. Formally, for a finite dimensional system, this means that while a classical mixed system was described by a vector with "classical" probabilities for each possible subsystem, in the quantum case we need to use matrices in order to account for the entanglement between the various subsystems. We call such a matrix $\rho$ the \textit{density matrix} of the system. From the probabilistic role that such a matrix has in the description of the system, it must be of unit trace, positive semi-definite and (since the moduli squared of its entries must be probabilities) we want it to be hermitian.\\\\
A very important class of states is the one described by pure states, namely those for which the density matrix is idempotent: $\rho^2 = \rho$. The notion of purity is complemented by that of a quantum composite system that is composed of quantum subsystems. When these subsystems are entangled, it is possible to assign a definite quantum state to each of them. The usual toy model to explain it is that of two bits $\ket{0}, \ket{1}$ in a so called qubit or of two spins along a certain quantization axis $\ket{\uparrow}, \ket{\downarrow}$ being in the Bell or "EPR" state:
\begin{equation*}
    \ket{\Psi} = \frac{1}{\sqrt{2}}(\ket{\uparrow}\otimes \ket{\downarrow} + \ket{\downarrow}\otimes \ket{\uparrow}). 
\end{equation*}
This is different from the classical configuration: there is no way in which one can say that the two bits or spins have a definite state. At most, we can say that when one is found in a given configuration, the other must be in the opposite one.\\
To make this notion of entanglement more general, one introduces the so called Schmidt decomposition (see Section II.D in \cite{Vedral:2002zz}), that works for Hilbert spaces $\mathcal{H}$ that have are tensor product $\mathcal{H} = \mathcal{H}_1 \otimes \mathcal{H}_2$, where $\mathcal{H}_1, \mathcal{H}_2 \subset \mathcal{H}$ are Hilbert spaces associated to two different subsystems. Assuming that $\dim \mathcal{H}_1 = N$ and $\dim \mathcal{H}_2 = M$ we have that a general state $\ket{\psi} \in \mathcal{H}$ can be written as:
\begin{equation*}
    \ket{\psi} = \sum_{i = 1}^{L} c_i \ket{u'_i}\otimes \ket{v'_i} = \sum_{i=1}^L c_i \ket{u'_i , v'_i}, 
\end{equation*}
where:
\begin{equation*}
    L := \min\{N,M\}, 
\end{equation*}
and $\{ u'_i \}_{i=1, \dots N}$, $\{ v'_i \}_{i= \, \dots, M}$ are two orthonormal bases of $\mathcal{H}_1$ respectively $\mathcal{H}_2$. This clearly generalizes the Bell state. As usual, starting from a pure state\footnote{In the case of mixed states, we cannot obtain a Schmidt decomposition of the vector}, we can construct the associated density matrix as $\rho = \ket{\psi}\bra{\psi}$ and the corresponding reduced density matrices:
\begin{align*}
    \rho_{1} &= \Tr_{\mathcal{H}_2}{\rho}\\
    \rho_2 &= \Tr_{\mathcal{H}_1}{\rho}. 
\end{align*}
One of the first things we can infer about entanglement, is that a system with $N$ many degrees of freedom can be entangled with at most $N$ other orthogonal states of another one.\\
Therefore, if we want to quantify the quantity corresponding to the Shannon entropy in the case of a finite dimensional quantum state, we need an expression involving the density matrix associated to the system. Otherwise, any expression involving just the vector state cannot account for entanglement features. For this reason, we define the \textit{von Neumann entropy}:
\begin{defn}
Given a finite dimensional quantum system described by a density matrix $\rho$, we define the associated von Neumann entropy as:
\begin{equation*}
    S_N(\rho) := -\Tr(\rho \ln{\rho})
\end{equation*}
\end{defn}
However, the Shannon entropy can also be defined in the quantum context. In particular, consider the reduced density matrices for a pure state obtained starting from the Schmidt decomposition of $\ket{\psi}$:
\begin{align*}
    \rho_1 &= \sum_{i=1}^L |c_i|^2 \ket{v'_i} \bra{v'_i}\\
    \rho_2 &= \sum_{i=1}^L |c_i|^2 \ket{u'_i} \bra{u'_i}. 
\end{align*}
These show, how the $i$-th state has, in both cases, a probability of $|c_i|^2$ associated with it. Therefore, we can associate with it a quantum analogue of the Shannon entropy in the same fashion. If $A$ is an observable, pertaining to the system described by $\rho$, by the spectral decomposition theorem we can write:
\begin{equation*}
    A = \sum_i a_i P_i, 
\end{equation*}
where $P_i$ are projectors onto the states with the eigenvalue $a_i$. The probability of obtaining the eigenvalue $a_j$ is given by $p_j = Tr(\rho P_j)$. Then, let us associate $A$ to a subsystem $1$ and $B$ to $2$, demanding that the spectra are discrete and non degenerate with corresponding probabilities $p(a_i)$ and $p(b_j)$ for an outcome of the observable $A$ (resp. $B$) being $a_i$ (resp. $b_j$). Calling the joint entropy $p(a_i,b_j)$, we can define the Shannon entropies as:
\begin{align*}
    S(A) &:= - \sum_i p(a_i) \ln{p(a_i)}\\
    &= - \sum_{i,j} p(a_i, b_j) \ln{\sum_j p(a_i, b_j)}\\
    S(A,B) &= - \sum_{i,j} p(a_i, b_j) \ln{p(a_i, b_j)}, 
\end{align*}
where we have used the fact that $\sum_j p(a_i,b_j) = p(a_i)$. From this definition, we see that the Shannon entropy is equal to the von Neumann entropy when it describes the uncertainty for those observables that commute with the density matrix. This follows from the fact that, in that case, we can simultaneously diagonalize $\rho$ and $A$:
\begin{align*}
    S_{N} &= - \Tr(\rho \log \rho) = - \Tr\bigg(\sum_j p(a_j) P_j \log (\sum_i p(a_i) P_i)\bigg)\\
    &= - \sum_{k,j,i} \delta_{j,k} p(a_j) \delta_{j,i} \log p(a_i) \delta_{i,k}\\
    &= -\sum_i p(a_i) \log p(a_i). 
\end{align*}
However, for a general observable $A$, we have a huge difference between classical and quantum information:
\begin{equation*}
    S(A) \geq S_N(\rho).
\end{equation*}
Counterintuitively, there is more uncertainty in a single observable than in the state as a whole.\\
Following the same line of reasoning as in the classical case, just starting with von Neumann entropy to quantify quantum uncertainty, we can define:
\begin{defn}
The von Neumann relative entropy\footnote{Commonly named also after Umegaki who first introduced it} between two states $\sigma$ and $\rho$ is defined as:
\begin{equation*}
    S_N(\rho \| \sigma) = \Tr{\rho(\ln{\rho} - \ln{\sigma})}. 
\end{equation*}
\end{defn}
In favour of an interpretation of it as a measure of distinguishability of states, as in the classical case, there are strong arguments but no rigorous proofs. I refer to \cite{Vedral:2002zz} Section II.E and the articles cited there for further discussion in this regard.

\subsection{Introduction to von Neumann factors I}
Our ultimate goal in this chapter is to argue why relative entropy is the fundamental notion in QFT when it comes to quantify information-like measures for the quantum fields and distinguishability of the corresponding configurations. The notion of entropy in QFT is prevented, by the so called ultraviolet or vacuum divergences. Such divergences can not be avoided as they are part of the building blocks of the theory. The best way to see this, is in the algebraic approach, by investigating properties of the algebras. In order to deal with it in the next section, we first need to formulate the finite dimensional quantum case just presented, in terms of operator algebras.\\\\
First of all, we need to define what \textit{factors} are. Let $\mathfrak{A}$ be a von Neumann algebra and define for it:
\begin{equation*}
    \mathfrak{A} \vee \mathfrak{A}' := \{ AB \, | \, A \in \mathfrak{A} \,\, , \,\, B \in \mathfrak{A}' \}'', 
\end{equation*}
where $\mathfrak{A}'$ denotes the commutant of the algebra.
\begin{defn}
A von Neumann algebra $\mathfrak{A}$ on a Hilbert space $\mathcal{H}$ is called a \textit{factor} if $\mathfrak{A} \vee \mathfrak{A}' = \mathcal{B(H)}$.
\end{defn}
Intuitively speaking, factors are the pieces in terms of which we can factorize general bounded operators on the Hilbert space. Still roughly speaking, it is the von Neumann algebra that allows to write each bounded operator as a product of an element in it and an element in its commutant.\\
The definition of a factor has an equivalent formulation in terms of the center of the algebra. Namely, a factor is a von Neumann algebra such that $\mathfrak{A} \cap \mathfrak{A}' = \mathbb{C}\mathbb{1}$. To see it let me start with the following:
\begin{lem}
Consider $\mathcal{B(H)}$, then its commutant is:
\begin{equation*}
    \mathcal{B(H)}' := \{ \lambda \mathbb{1} | \lambda \in \mathbb{C} \}. 
\end{equation*}
\end{lem}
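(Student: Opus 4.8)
The plan is to show that the commutant of $\mathcal{B}(\mathcal{H})$ consists exactly of the scalar multiples of the identity. One inclusion is immediate: every $\lambda \mathbb{1}$ commutes with everything, so $\{\lambda \mathbb{1} : \lambda \in \mathbb{C}\} \subseteq \mathcal{B}(\mathcal{H})'$. The work is the converse inclusion. So suppose $T \in \mathcal{B}(\mathcal{H})'$, i.e. $T$ commutes with every bounded operator on $\mathcal{H}$, and I want to conclude $T = \lambda \mathbb{1}$ for some $\lambda \in \mathbb{C}$.

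First I would exploit the rank-one operators. For any two nonzero vectors $x, y \in \mathcal{H}$ consider the operator $\ketbra{x}{y} \in \mathcal{B}(\mathcal{H})$, which acts as $z \mapsto \braket{y}{z} x$. Since $T$ commutes with $\ketbra{x}{x}$, applying the identity $T \ketbra{x}{x} = \ketbra{x}{x} T$ to an arbitrary vector shows that $Tx$ must be a scalar multiple of $x$: indeed $\ketbra{x}{x}(Tx) = T(\ketbra{x}{x} x) = \|x\|^2 Tx$ forces $Tx \in \mathrm{span}\{x\}$ once one checks the scalar is consistent, or more cleanly, $T x = \braket{x}{Tx}\|x\|^{-2} x$. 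Hence for every nonzero $x$ there is $\lambda_x \in \mathbb{C}$ with $Tx = \lambda_x x$; every nonzero vector is an eigenvector of $T$.

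Next I would show the eigenvalue $\lambda_x$ does not depend on $x$. If $x, y$ are linearly independent, then $Tx = \lambda_x x$, $Ty = \lambda_y y$, and $T(x+y) = \lambda_{x+y}(x+y)$; by linearity $\lambda_{x+y} x + \lambda_{x+y} y = \lambda_x x + \lambda_y y$, and independence gives $\lambda_x = \lambda_{x+y} = \lambda_y$. If $x,y$ are linearly dependent the equality $\lambda_x = \lambda_y$ is trivial (or can be obtained by comparing both with a third vector independent of each). Therefore there is a single $\lambda \in \mathbb{C}$ with $Tx = \lambda x$ for all $x$, i.e. $T = \lambda \mathbb{1}$, which completes the proof.

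There is no real obstacle here; the only point requiring a moment's care is the very first step --- extracting "$Tx$ is proportional to $x$" from commutation with the rank-one projection $\ketbra{x}{x}$ --- and making sure the argument is stated so as to also cover the finite-dimensional case uniformly with the infinite-dimensional one. Alternatively one could invoke the fact that $\mathcal{B}(\mathcal{H})$ acts irreducibly on $\mathcal{H}$ together with Schur's lemma, but the direct rank-one argument is self-contained and is what I would write.
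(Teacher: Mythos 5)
Your proof is correct and follows essentially the same route as the paper: both arguments first use commutation with (rank-one) orthogonal projections to conclude that every nonzero vector is an eigenvector of $T$, and then show the eigenvalue is independent of the vector by considering the sum of two linearly independent eigenvectors. Your version is in fact slightly more careful at the first step, since you explicitly extract $Tx = \|x\|^{-2}\braket{x}{Tx}\,x$ from the commutation relation rather than just asserting invariance of all subspaces.
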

\begin{proof}
Let $A \in \mathcal{B(H)}'$. This means that $A$ must commute with all orthogonal projections that means that all subspaces of the Hilbert spaces are left invariant by the action of $A$. In particular, all nonzero vectors in the Hilbert space are eigenvectors for $A$. However, if there were different eigenvalues for $A$, then the sum of the two corresponding eigenvectors will be a nontrivial element of the Hilbert space but not an eigenvector anymore:
\begin{align*}
    A(v + w) &= Av + Aw\\
    &= \lambda v + \mu w\\
    &\neq \xi (v + w), 
\end{align*}
contradicting the fact that all nonzero vectors must be eigenvectors.
\end{proof}
This implies that, in the case of factors, since $\mathfrak{A} \vee \mathfrak{A}' = \mathcal{B(H)}$, the center of $\mathfrak{A}$ is precisely $\mathcal{B(H)} \cap \mathcal{B(H)}'$, which is trivial from the above lemma.\\
Starting from the definition we gave, it was possible (thanks to the pioneristic works of Murray and von Neumann \cite{vNMurray1936}, \cite{vNMurray1937}, \cite{vNMurray1943} and to the classification of Connes \cite{Connes1973}) to classify all types of factors. Moreover, von Neumann proved in \cite{vN1949} that a general von Neumann algebra, on a separable Hilbert space, is isomorphic to a direct integral of factors. Therefore, since we will always deal with separable Hilbert spaces, understanding and classifying factors has both a mathematical and a physical (that we will see in what follows) deep motivation.\\\\
In this section we dealt with finite dimensional quantum systems, these are the systems for which we can always find two or more separate subsystems, that allow for a factorization of the Hilbert space itself in two independent sub-Hilbert spaces. Namely, the Hilbert space $\mathcal{H}$ can be written as:
\begin{equation*}
    \mathcal{H} = \mathcal{H}_1 \otimes \mathcal{H}_2. 
\end{equation*}
Therefore, it is also clear that an observable on the first system is a selfadjoint operator in $\mathfrak{A} = \mathcal{B(H)} \otimes \mathbb{1}$ and one on the second system will be a selfadjoint operator in $\mathbb{1} \otimes \mathcal{B(H)}$ that is in particular $\mathfrak{A}'$. This means that we can factorize also the observables of the total system as:
\begin{equation*}
    \mathcal{B(H)} = \mathcal{B}(\mathcal{H}_1) \otimes \mathcal{B}(\mathcal{H}_2). 
\end{equation*}
An algebra admitting such a factorization is what we call a \textit{von Neumann algebra of \textbf{Type $I$}}.\\
A characteristic of Type $I$ algebras is that they admit minimal projections in $\mathfrak{A}$, this means that for each vector $\psi \in \mathcal{H}_1$ we can define a projection $E_{\psi} = \ketbra{\psi}{\psi}$ that is an operator $E_{\psi}: \mathcal{H} \to \mathcal{H}_1$ and that:
\begin{equation*}
    E := E_{\psi} \otimes \mathbb{1} 
\end{equation*}
is a minimal projection. This means, that there are no projection operators to smaller subsets except the trivial null projection.\\
These types of algebras have the feature that admit a function called trace: $\Tr: a \to \Tr(a)$ where $a \in \mathfrak{A}$. This is a function fulfilling the \textit{cyclicity property}:
\begin{equation*}
    \Tr(abc) = \Tr(cab) = \Tr(bca), 
\end{equation*}
and the \textit{positivity} for $a \neq 0$:
\begin{equation*}
    \Tr(a a^{\dagger}) > 0. 
\end{equation*}
Of course, the trace of an operator might be divergent if we are dealing with bounded operators on an infinite dimensional Hilbert space, e.g. the identity. For this reason, we distinguish between \textit{Type $I_d$}, those Type $I$ algebras that are represented on Hilbert spaces of dimension $d$, and \textit{Type $I_{\infty}$} algebras, those that are represented on infinite dimensional separable Hilbert spaces. Therefore, the trace is a well defined map over all Type $I_d$ algebras and only partially on those of Type $I_{\infty}$. Moreover, Type $I_d$ algebras are algebraically isomorphic to the matrix ring of square $d \times d$ matrices, while $I_{\infty}$ algebras are algebraically isomorphic to $\mathcal{B(H)}$.\\\\

In our discussion regarding quantum information theory, we argued that it is natural to consider states represented by density matrices and we assumed them to be normalized. We have also discussed that, given two separate subsystems, one can focus just on one of them by reducing density matrices tracing out the degrees of freedom of the other system. Moreover, any expectation value of an observable $A$ on a mixed state $\rho$ is given by: $\Tr(A \rho) = \langle A \rangle$.\\
Therefore, in light of what we have just discussed, we were always implicitly assuming that the algebra of observables was admitting a trace: namely it must be of Type $I$. We see, in particular, that the definition of entropy itself relies on this property of the algebra and we can anticipate that, for those factors that do not admit a trace map, the von Neumann entropy can not be used. The question, of course, becomes whether we should care about such types of von Neumann algebras. For this reason, in the next section, we argue that such algebras exist and we have to deal with them as they are the algebras of observables in Quantum field theory. In particular, we will present a \textit{gedankenexperiment}, presented by Fermi, that shows how a naive treatement of the observable algebra, as if it is always of Type $I$, may lead to serious causal issues.\\

\section{Relative entropy in QFT}
\subsection{Introduction to von Neumann factors II}
Let me start completing the classification of factors, that was started at the end of last section. First we need the following definition:
\begin{defn}
Let $\mathfrak{A}$ be a factor on a separable Hilbert space $\mathcal{H}$. Two projections $P_1, P_2 \in \mathfrak{A}$ are called equivalent, denoted $P_1 \sim P_2$, if there exist a $W \in \mathfrak{A}$ such that:
\begin{equation*}
   W^* W = P_1 \hspace{15pt} W W^* = P_2. \end{equation*}
The map $W$ isometrically maps $P_1 \mathcal{H}$ to $P_2 \mathcal{H}$ and annihilates all other vectors in the orthogonal complement of $\mathcal{H}_1 := P_1 \mathcal{H}$. $W$ is called a partial isometry.
\end{defn}
Now, we properly define what Type $III$ factors are: the von Neumann factors for which all projections $E$ are equivalent to the identity $\mathbb{1}$. For them, the Hilbert spaces must be infinite dimensional (as otherwise all the observables are representable as matrices and as such belong to the Type $I_d$ algebras) with the property that each subspace is isometric to the full Hilbert space. This shows, that even if we try to identify a subspace by tracing out some degree of freedom, we are in fact performing a trace over the entire Hilbert space, i.e. for any $A$ in such an algebra and for any projection $E$:
\begin{align*}
    \Tr_{E\mathcal{H}}A &= \Tr_{\mathcal{H}} W W^* A W W^*\\
    &= \Tr_{\mathcal{H}}A. 
\end{align*}
This shows, that there is no way in which we can properly talk about subsystems and in this way obtain a tensor product splitting of the Hilbert space. This is in contrast with the case of Type $I$ factors, where the existence of minimal projections ensures that we can always identify a subspace of the Hilbert space of the theory and possibily associate it to a subsystem. Moreover, this also shows that we cannot define a trace over a Type $III$ algebra. In fact, the only possibility for which the trace on each subspace corresponds to that over the entire space, is either if the trace is zero, but this implies that $A=0$, or that is always divergent, preventing the definition of trace as the identity holds for any $A \in \mathfrak{A}$.\\
As a side remark, let me mention that there are also so called \textit{Type $II$} algebras that somehow lay in between the $I$ and the $III$ types. These are those factors that have no minimal projections, but every non-zero projection $E$ has a subprojection $F < E$ (that means $F \mathcal{H} \subseteq E \mathcal{H}$) that is finite in the sense that each $F' < F$ must be such that $F \sim F'$. However, this case is not very interesting, as far as it is known, from the point of view of a mathematical physicist and we will thus not investigate it further here.

\subsection{The necessity of Type $III$ factors in QFT}
As already mentioned, Type $III$ algebras are the algebras of observables in any localized Quantum Field Theory. We want to discuss it, with an example following the work of \cite{Yngvason_2005} based on the original work \cite{Fermi}.\\
Let us start by considering two atoms $a$ and $b$ separated by a distance $R$. At time $t=0$, $b$ is in an excited configuration that can spontaneously decay to the ground state by emission of radiation. At the same time assume that atom $a$ is in its ground state and that the energy of the emitted radiation by $b$ can allow $a$ to pass from its ground to the excited state. Of course, by causal reasons, we expect that, from the moment in which atom $b$ decays, at least a time greater that $R/c$ must pass before atom $a$ gets excited. Therefore, by setting the decay instant at $t=0$, we expect that for $t < R/c$ atom $a$ stays in its ground state.\\
Let us first analyze this problem from a "Type $I$ point of view". In this spirit, we assume that the Hilbert space of the problem is $\mathcal{H} = \mathcal{H}_a \otimes \mathcal{H}_b \otimes \mathcal{H}_c$ where we have denoted with $\mathcal{H}_c$ the Hilbert space associated with the radiation. To each of the subspaces, we assign $\mathfrak{A}_a = \mathcal{B}(\mathcal{H}_a)$ as the associated von Neumann factor of Type $I$, and with obvious notation we have:
\begin{equation*}
    \mathcal{B}(\mathcal{H}) = \mathfrak{A}_a \otimes \mathfrak{A}_b \otimes \mathfrak{A}_c. 
\end{equation*}
The initial state of the system is:
\begin{equation*}
    \omega_0 = \omega_a \otimes \omega_b \otimes \omega_c, 
\end{equation*}
where initially $\omega_a$ is the ground state for the atom $a$, $\omega_b$ the excited configuration for atom $b$ and $\omega_c$ the vacuum state for the radiation field. The state at time $t > 0$ is:
\begin{equation*}
    \omega_t(\cdot) = \omega_0(e^{itH} \cdot e^{-itH}), 
\end{equation*}
where $H$ is, as usual, the Hamiltonian of the total system. In the assumption about the type of factors that we are dealing with, we can define a projection operator onto the excited configuration of atom $a$ in $\mathcal{H}_a$ as $E_a = \mathbb{1}_a - \ketbra{\psi_a}{\psi_a}$ where $\psi_a$ is the ground state of atom $a$. Therefore, we have the minimal projection onto the excited state given by:
\begin{equation*}
    E = E_a \otimes \mathbb{1}_b \otimes \mathbb{1}_c. 
\end{equation*}
The probability of finding atom $a$ at instant $t$ in the excited configuration, according to the probabilistic interpretation, then is:
\begin{equation*}
    P(t) = \omega_t(E). 
\end{equation*}
Therefore, according to what we said above, we should expect that $P(t) = 0$ for $t < R/c$.\\
In general, physical Hamiltonians have lower bounds for the energy spectrum. This implies, that the vector valued function: $t \to Ee^{-itH}\ket{\phi}$ for all $\ket{\phi} \in \mathcal{H}$, is analytic as a function of $t$ in the upper half plane. This can be seen by first noticing that $t \to Ee^{-itH}\ket{\phi}$ must vanish on a closed interval, since $E^2 = E$ as a projection and thus:
\begin{equation*}
    \omega_t(E) = \| Ee^{-itH} \ket{\phi} \|^2. 
\end{equation*}
But since $\omega_t(E) = 0$ for $t < R/c$, the same must hold for the vector valued function $t \to Ee^{-itH}\ket{\phi}$ that maps onto the zero vector. Write now the vector valued map via its spectral decomposition, displacing it in the complex plane:
\begin{align*}
   E e^{izH} \ket{\phi} &= \int_0^{\infty} E e^{iz\lambda} dE_{\lambda} \ket{\phi}\\
    &= \int_0^{\infty} E e^{-\Im(z)\lambda} e^{i\Re(z)\lambda} dE_{\lambda} \ket{\phi}. 
\end{align*}
Then, since we want to prove analyticity in the upper half plane, assume $\Im(z) > 0$. By integrating over any closed contour $\Gamma$:
\begin{equation*}
    \int_{\Gamma} dz E e^{izH} \ket{\phi} = 0
\end{equation*}
by Cauchy theorem. But then, by Morera theorem, this means that the function must be analytic in the upper half plane. Therefore, by Schwarz reflection principle \ref{app: Schw} and the vanishing of $t \to Ee^{-itH}\phi$ for $t < R/c$, we conclude that the vector valued function must be holomorphic on the closed interval $t \in [0,R/c - \epsilon]$ for some $\epsilon > 0$. This allows to conclude that, as a holomorphic function vanishing on a closed interval, it must be identically zero for all $t \in \mathbb{R}$. From the above this also gives that $\omega_t(E) = 0$ for all $t \in \mathbb{R}$.\\
Hence, if atom $a$ becomes excited at all, that must happen immediately at $t=0$. However, this cannot happen as it would be a violation of causality. The solution to this apparent paradox, stems from our inappropriate description of the algebra of observables to be of type $I$.\\
Before we reformulate it in terms of type $III$ factors, let us point out some differences that arise when dealing with type $III$ algebras. First, each projector $E \in \mathfrak{A}\mathcal{(O)}$ (now an element of a local algebra with $\mathcal{O} \subset M$, for $M$ some manifold) has an associated isometry $W \in \mathfrak{A}\mathcal{(O)}$ such that $E = W W^*$. This means, that we can make any state $\omega$ an eigenstate of $E$ by local operations without altering its properties in the causal complement. To motivate why this is true, first define:
\begin{equation*}
    \omega_{W}(A) := \omega(W^*AW). 
\end{equation*}
Then, we have:
\begin{equation*}
    \omega_W(E) = \omega(W^*WW^*W) = \omega(\mathbb{1}) = 1, 
\end{equation*}
but at the same time for $B \in \mathfrak{A}\mathcal{(O)}'$:
\begin{equation*}
    \omega_{W}(B) = \omega(W^*BW) = \omega(W^* W B) = \omega(B). 
\end{equation*}
Moreover, without going into detail here, according to Connes classification \cite{Connes1973} of von Neumann algebras, for Type $III_1$ algebras (that are those of QFT) each state $\varphi$ is locally approximated with arbitrary precision in the weak topology by some $\omega_{W}$. Finally, for local von Neumann algebras $\mathfrak{A}\mathcal{(O)}$, every state is a vector state: for $\omega$, a state over $\mathfrak{A}\mathcal{(O)}$, there exist a vector $\psi_{\omega} \in \mathcal{H}$ such that:
\begin{equation*}
    \omega(A) = \braket{\psi_{\omega}}{A \psi_{\omega}} \hspace{20pt} \forall A \in \mathfrak{A}\mathcal{(O)}. 
\end{equation*}
Keeping this in mind, let us go back to our decay process. The only thing that might change in the case of Type $III$ algebras is that the probability, as we have computed it relying on the standard Copenhagen interpretation, might need to be modified. We need to reformulate everything in terms of Local Quantum Field Theory. For that purpose (we are on Minkowski spacetime $M_D$) consider the atom $a$ represented by a small ball in a Cauchy surface $\Sigma_0$ at $t=0$: pick $\mathcal{O}_a = \mathcal{R}_a \times \{ 0 \}$ where $\mathcal{R}_a \subset \Sigma_0$ and $D(\Sigma_0) = M_D$. The observables, associated to the electromagnetic field in the region $\mathcal{O}_a$, are thus included in $\mathfrak{A}(\mathcal{O}_a)$ as well. Analogously, we use $\mathfrak{A}(\mathcal{O}_b)$ as a replacement for $\mathfrak{A}_b$. In this framework, if we want to tell whether the decay of atom $b$ has influenced the configuration of $a$, we need to compare two states on $\mathfrak{A}(\mathcal{O}_a)$ and see whether their "difference" (we are going to properly define this in a moment) changes in time. For that purpose, define, at $t=0$, $\omega_0$ and $\omega_0^{(0)}$ to be the configurations in which atom $b$ is in the excited and in the ground state, respectively, while atom $a$ in its ground state in both cases. Initially, the two states are only distinguished by local measurements within $\mathcal{O}_b$. With time, the states evolve in $\omega_t$ and $\omega_t^{(0)}$. In particular, we will say that the configuration of atom $a$ is unaltered at $t>0$, if for all $A \in \mathfrak{A}(\mathcal{O}_a)$:
\begin{equation*}
    \omega_t(A) = \omega_t^{(0)}(A). 
\end{equation*}
In this way, if we want to quantify the deviation of $\omega_t$ from $\omega_t^{(0)}$ (testing whether the decay of atom $b$ influenced the configuration of atom $a$), we define the quantity:
\begin{equation*}
    D(t) = \sup_{A \in \mathfrak{A}(\mathcal{O}_a); \|A\| < 1} \bigg| \omega_t(A) - \omega_t^{(0)}(A) \bigg|. 
\end{equation*}
At $t=0$, we have $D(0) = 0$, but also for each observable $A$ localized at a distance $< R$ from $\mathcal{R}_a$ we have that $\omega_0(A) - \omega_0^{(0)}(A) = 0$. This implies (as the local algebra of $\mathcal{O}$ corresponds to that of its causal development, looking at the time evolution in the Heisenberg picture) that $D(t)$ vanishes for $t < R/c$, giving no violation of causality.\\
Let us see how this case is different from the previous by looking at the projection operator. Start by restricting $\omega_0$ on $\mathfrak{A}(\mathcal{O}_a)$, to get a vector $\psi \in \mathcal{H}$ and define $E = \mathbb{1} - \ketbra{\psi}{\psi}$. In the type $I$ approach, this was the operator we used as a "test for the excitation of atom $a$", as it was an observable belonging to the algebra of the restricted subsystem. Moreover, as $E \in \mathfrak{A}_a$ in that case, we found $\omega_t(E) = P_t(E) = 0$ for all $t$ by analyticity and evidently $\omega_t^{(0)}(E) = 0$ for all $t$. This also shows that in this context, considering something like $D(t)$ to test the excitation of the atom $a$, is useless.\\
On the other hand, in the case of type $III$ algebras, looking simply at the non-vanishing of $\omega_t(E)$ is erroneous as in general $\omega_t(E) > 0$. This is a consequence of causality and of the Reeh-Schlieder theorem (see Appendix Theorem \ref{app: Reeh}). To see it, notice that $\omega_0^{(0)}$ represents the vacuum of the QFT, invariant under time evolution: $\omega_0^{(0)}(E) = \omega_t^{(0)}(E)$ for all $t \in \mathbb{R}$. By the Reeh-Schlieder theorem, the vacuum must be separating and cyclic, therefore since $E$ is non-trivial we must have $\omega_0^{(0)}(E) > 0$. At the same time, being $E \in \mathfrak{A}(\mathcal{O})$ for $\mathcal{O}_a \subset \mathcal{O}$ (a slighlty larger set, see \cite{Yngvason_2005} Eq. $(23)$), we must have by causality $\omega_t(E) - \omega_t^{(0)}(E) = 0$ for $t < R'/c$ ($R'$ is slightly smaller than $R$). This gives $\omega_t(E) > 0$, showing that in this context, a non-vanishing of it cannot be taken as a measure of the excitation of atom $a$. Moreover, as in $\omega_t^{(0)}$ the atom $b$ is not even excited, the specific value of $\omega_t(E) > 0$ for $t < R'/c$ cannot be caused by the presence of the atom $b$. Therefore, as $\omega_t(E)$ does not vanish on any closed interval, the analyticity argument used in the Type $I$ case cannot be applied here, since the norm $\| E e^{-itH} \ket{\phi} \|^2$ being constant does not imply that the vector valued function $t \to E e^{-itH} \ket{\phi}$ is constant (e.g. it might be a vector varying on a sphere). Therefore, as now $\omega_t(E)$ cannot be proven to be an analytic function (it involves both $e^{itH}$ and $e^{-itH}$), the constancy on an interval does not imply the constancy for all $t$. As a consequence, for $t > R'/c$ we can have $D(t) \neq 0$ measuring the excitation of the atom $a$ due to a decay of $b$.\\
To conclude, by considering T ype $III$ factors, the measure of excitation of the atom $a$ can pass from being zero to being finite after a finite amount of time, thus preserving causality.\\\\

Now that we have argued that local algebras of QFT must be of Type $III$, we want to ask whether in this context one also has a measure of distinguishability between different states of the system, as was the relative entropy for Type $I$ algebras of Quantum Mechanics. We seek a genereral definition of entropy that yields feasable results also on Type $III$ algebras, where we do not have a notion of trace. Such a definition will have to reduce to the known one, when we restrict ourselves to Type $I$ algebras. However, in order to do so, we first need to introduce the \textit{modular-} or \textit{Tomita-Takesaki modular theory}.

\subsection{Tomita-Takesaki modular theory}
Let $\mathcal{H}$ be the Hilbert space on which we have the von Neumann algebra $\mathfrak{A}$. We start with a couple of definitions:
\begin{defn}
A vector $\ket{\Psi} \in \mathcal{H}$ is called cyclic for the von Neumann algebra $\mathfrak{A}$ if the subset $\mathfrak{A} \ket{\Psi} \subset \mathcal{H}$ is a dense subset of $\mathcal{H}$
\end{defn}
\begin{defn}
A vector $\ket{\Psi}$ is called separating for the von Neumann algebra $\mathfrak{A}$ if the condition:
\begin{equation*}
    Q \ket{\Psi} = 0, 
\end{equation*}
for $Q \in \mathfrak{A}$ implies that $Q = 0$.
\end{defn}
In particular, recalling that one of the defining properties of von Neumann algebras is given by the fact that any such algebra and its commutant are in fact found to be each others commutants, we have:
\begin{prop}\label{prop: 112}
$\ket{\Psi} \in \mathcal{H}$ is cyclic for $\mathfrak{A}$ if and only if is separating for the commutant algebra $\mathfrak{A}'$.
\end{prop}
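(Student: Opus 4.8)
The plan is to prove the two implications separately, exploiting the symmetry of the statement under $\mathfrak{A} \leftrightarrow \mathfrak{A}'$ (which is legitimate precisely because, for a von Neumann algebra, $\mathfrak{A}'' = \mathfrak{A}$, so "commutant of the commutant" brings us back to the original algebra). In fact, once one implication is established for an arbitrary von Neumann algebra, the converse follows by applying it to $\mathfrak{A}'$ in place of $\mathfrak{A}$ and using $(\mathfrak{A}')' = \mathfrak{A}$. So really there is only one thing to show: $\ket{\Psi}$ cyclic for $\mathfrak{A}$ $\Longrightarrow$ $\ket{\Psi}$ separating for $\mathfrak{A}'$.

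First I would assume $\ket{\Psi}$ is cyclic for $\mathfrak{A}$, i.e. $\overline{\mathfrak{A}\ket{\Psi}} = \mathcal{H}$, and take $Q' \in \mathfrak{A}'$ with $Q'\ket{\Psi} = 0$; the goal is $Q' = 0$. The key step is to use that $Q'$ commutes with every $A \in \mathfrak{A}$, so that for all $A \in \mathfrak{A}$,
\begin{equation*}
    Q' A \ket{\Psi} = A Q' \ket{\Psi} = A \cdot 0 = 0.
\end{equation*}
Thus $Q'$ annihilates the dense subspace $\mathfrak{A}\ket{\Psi}$. Since $Q'$ is a bounded operator (elements of a von Neumann algebra lie in $\mathfrak{B}(\mathcal{H})$), it is continuous, hence it annihilates the closure of that subspace, which is all of $\mathcal{H}$. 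Therefore $Q' = 0$, proving that $\ket{\Psi}$ is separating for $\mathfrak{A}'$.

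For the converse, I would assume $\ket{\Psi}$ is separating for $\mathfrak{A}'$ and show it is cyclic for $\mathfrak{A}$. By the remark above, apply the implication just proved with the roles of $\mathfrak{A}$ and $\mathfrak{A}'$ interchanged: if $\ket{\Psi}$ were not cyclic for $\mathfrak{A}$, consider the orthogonal projection $E$ onto the closed subspace $\overline{\mathfrak{A}\ket{\Psi}}$. One checks that $E \in \mathfrak{A}'$: indeed $\overline{\mathfrak{A}\ket{\Psi}}$ is invariant under $\mathfrak{A}$ (and so is its orthogonal complement, since $\mathfrak{A}$ is a $*$-algebra), which forces $E$ to commute with $\mathfrak{A}$. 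If the subspace is proper then $\mathbb{1} - E \neq 0$, yet $(\mathbb{1}-E)\ket{\Psi} = 0$ because $\ket{\Psi} \in \overline{\mathfrak{A}\ket{\Psi}}$ (using that $\mathfrak{A}$ is unital, so $\ket{\Psi} = \mathbb{1}\ket{\Psi} \in \mathfrak{A}\ket{\Psi}$). This contradicts the separating property of $\ket{\Psi}$ for $\mathfrak{A}'$. Hence $\overline{\mathfrak{A}\ket{\Psi}} = \mathcal{H}$.

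The argument is essentially elementary; the only point requiring a little care — and the place I would be most careful — is the claim that the projection $E$ onto $\overline{\mathfrak{A}\ket{\Psi}}$ lies in $\mathfrak{A}'$. This needs the invariance of both the subspace and its complement under $\mathfrak{A}$, which in turn uses that $\mathfrak{A}$ is closed under the adjoint (a $*$-subalgebra), plus the standard fact that a projection commutes with an operator iff its range is invariant under both that operator and its adjoint. I would state this lemma-style fact explicitly rather than leave it implicit. Everything else (continuity of bounded operators, unitality giving $\ket{\Psi} \in \mathfrak{A}\ket{\Psi}$) is immediate.
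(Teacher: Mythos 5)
Your proof is correct and follows essentially the same route as the paper: the forward direction uses commutation of $Q'$ past $\mathfrak{A}$ to annihilate the dense subspace $\mathfrak{A}\ket{\Psi}$, and the converse argues by contradiction with the orthogonal projection onto (the complement of) $\overline{\mathfrak{A}\ket{\Psi}}$, which lies in $\mathfrak{A}'$ and kills $\ket{\Psi}$. Your explicit justification that this projection belongs to $\mathfrak{A}'$ (via invariance of the subspace and its complement under the $*$-algebra $\mathfrak{A}$) is a point the paper's proof merely asserts, so your version is if anything slightly more careful.
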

\begin{proof}
$(\Rightarrow)$ Assume that for $Q' \in \mathfrak{A}'$ we have:
\begin{equation*}
    Q' \ket{\Psi} = 0
\end{equation*}
and by assumption, $\ket{\Psi}$ is cyclic for $\mathfrak{A}$. Then, for any combination of elements in $\mathfrak{A}$ (that I will simply denote, with a slight abuse of notation, by $\mathfrak{A}$):
\begin{equation*}
    \mathfrak{A} Q' \ket{\Psi} = 0 \Leftrightarrow Q' \mathfrak{A} \ket{\Psi} = 0. 
\end{equation*}
Since, however, $\mathfrak{A} \ket{\Psi}$ is dense in $\mathcal{H}$ that implies $Q'$ is an operator that vanishes on a dense subset of $\mathcal{H}$: $Q' = 0$ and thus $\ket{\Psi}$ is separating for $\mathfrak{A}'$.\\\\
$(\Leftarrow)$ Suppose that a vector $\ket{\Psi}$ is not cyclic for $\mathfrak{A}$. Therefore, the vectors $Q \ket{\Psi}$ for $Q \in \mathfrak{A}$ generate a proper subspace $\mathcal{H}' \subset \mathcal{H}$. Let us call $P: \mathcal{H} \to \mathcal{H}$ the projection onto the orthogonal $\mathcal{H}'_{\bot}$. Then, we have that $P \in \mathfrak{A}'$ and is a bounded operator. But $P \ket{\Psi} = 0$, as $\mathbb{1} \in \mathfrak{A}$, which implies $\mathbb{1} \ket{\Psi} = \ket{\Psi} \in \mathcal{H}'$. However, this shows that $P$ is a non-vanishing operator acting on $\ket{\Psi}$ (that by assumption is separating for $\mathfrak{A}'$) that gives a vanishing result, yielding an absurdum. Therefore $\ket{\Psi}$ must be cyclic for $\mathfrak{A}$ and such a $P$ cannot exist
\end{proof}
Before continuing with the definition of the Tomita operator, let us prove a result of operator theory:
\begin{prop}\label{prop: Closable}
Let $X$ be a Banach space and let $T: D(T) \subset X \to Y$ be a linear operator. Then, $T$ is closable (i.e. it admits a closed extension) iff for any sequence $(x_n)_{n \in \mathbb{N}} \subset D(T)$ such that $\lim_{n \to \infty} x_n = 0$ and $\lim_{n \to \infty} Tx_n = y$, we have $y=0$. 
\end{prop}
\begin{proof}
$(\Rightarrow)$ Clearly if $T$ is closable, denoting with $\overline{T}$ its closure, we have by definition that for each sequence $(x_n)_{n \in \mathbb{N}} \subset D(\overline{T})$ that converges to $x$ and such that $\overline{T}x_n$ converges, we have:
\begin{equation*}
    \lim_{n \to \infty} \overline{T}x_n = y \Rightarrow y = \overline{T}x. 
\end{equation*}
Therefore if we take those sequences $(x_n)_{n \in \mathbb{N}} \in D(T) \subset D(\overline{T})$ that converge to $0$ and $y = \lim_{n \to \infty} T x_n$, by $D(T) \subset D(\overline{T})$ we also have:
\begin{equation*}
    y = \lim_{n \to \infty} \overline{T} x_n, 
\end{equation*}
but since $\overline{T}$ is closed we have that: $y = \overline{T} 0 = 0$.\\\\
$(\Leftarrow)$ Let us start by picking a sequence $(x_n)_{n \in \mathbb{N}} \in D(T)$ as in the statement of the proposition. Define a new sequence $x'_n = x_n + x$ such that:
\begin{equation*}
    \lim_{n \to \infty} x'_n = x
\end{equation*}
This sequence might, however, not be in $D(T)$. For this reason let us define the linear operator $\overline{T}$ that is an extension of $T$ (i.e. $D(T) \subset D(\overline{T})$ and $T \xi = \overline{T} \xi$ for all $\xi \in D(T)$). Then we have:
\begin{align*}
    \lim_{n \to \infty} \overline{T} x'_n &= \lim_{n \to \infty} (\overline{T} x_n + \overline{T} x)\\
    &= \lim_{n \to \infty} (T x_n + \overline{T} x)\\
    &= \overline{T} x,
\end{align*}
which proves that $\overline{T}$ is a closed extension of $T$.
\end{proof}
As any Hilbert space is in fact a Banach space, equally, this proposition holds in our case where we are dealing with Hilbert spaces.\\
Next up is a theorem about the existence of a unique polar decomposition for unbounded operators:
\begin{thm}\label{thm: polar}
Let $\mathcal{M}$ be a unital $C^*$-algebra. Then if $Q \in \mathcal{M}$ is an invertible operator, there exists a unique $U$, unitary or antiunitary, such that:
\begin{equation*}
    Q = U |Q|. 
\end{equation*}
\end{thm}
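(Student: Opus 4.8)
The plan is to construct $U$ explicitly via the continuous functional calculus in $\mathcal{M}$ and then to verify the two required identities, with uniqueness following in one line from the invertibility of $|Q|$.

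First I would set $|Q| := (Q^*Q)^{1/2}$, which is well defined in the unital $C^*$-algebra $\mathcal{M}$ since $Q^*Q \geq 0$ and $t \mapsto \sqrt{t}$ is continuous on its spectrum. Because $Q$ is invertible in $\mathcal{M}$, so is $Q^*Q$, with inverse $Q^{-1}(Q^{-1})^* \in \mathcal{M}$; hence the spectrum of $Q^*Q$ is a compact subset of $(0,\infty)$ bounded away from $0$. Consequently $t \mapsto t^{-1/2}$ is continuous there, so $|Q|^{-1} := (Q^*Q)^{-1/2} \in \mathcal{M}$ is a genuine two-sided inverse of $|Q|$; in particular $|Q|$ is a positive, invertible element of $\mathcal{M}$. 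Every element produced this way lies in the abelian $C^*$-subalgebra generated by $Q^*Q$ and $\mathbb{1}$, hence in $\mathcal{M}$.

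Next I would define $U := Q\,|Q|^{-1} \in \mathcal{M}$, so that $Q = U|Q|$ by construction. To see $U$ is unitary, compute $U^*U = |Q|^{-1}Q^*Q\,|Q|^{-1} = |Q|^{-1}|Q|^2|Q|^{-1} = \mathbb{1}$, using that $|Q|$ is self-adjoint with $|Q|^2 = Q^*Q$, and $UU^* = Q\,|Q|^{-2}Q^* = Q(Q^*Q)^{-1}Q^* = QQ^{-1}(Q^*)^{-1}Q^* = \mathbb{1}$. Thus $U^*U = UU^* = \mathbb{1}$. For uniqueness, if $Q = U|Q| = V|Q|$ with $U,V$ unitary, then $(U - V)|Q| = 0$, and multiplying on the right by $|Q|^{-1}$ gives $U = V$.

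Finally, the statement is meant to cover as well the antilinear situation relevant to the Tomita operator: if $Q$ is antilinear and (bounded) invertible, then with $Q^*$ the antilinear adjoint, normalised so that $\langle Q^*x, y\rangle = \overline{\langle x, Qy\rangle}$, the product $Q^*Q$ is linear, positive, and invertible; hence $|Q| := (Q^*Q)^{1/2}$ is linear, positive, invertible, and $U := Q\,|Q|^{-1}$ is antilinear, with exactly the computation above yielding $U^*U = UU^* = \mathbb{1}$, i.e. $U$ is antiunitary, and uniqueness goes through verbatim. I do not expect a genuine obstacle; the only points requiring care are the bookkeeping of adjoints in the antilinear case (to ensure $Q^*Q \geq 0$) and checking that each functional-calculus element is recognised as an element of $\mathcal{M}$, which is immediate from spectral permanence for $C^*$-algebras.
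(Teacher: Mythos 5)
Your proof is correct and follows essentially the same route as the paper: define $|Q|^2 := Q^*Q$, set $U := Q\,|Q|^{-1}$, and verify $U^*U = UU^* = \mathbb{1}$ directly. You are in fact slightly more complete than the paper, since you justify the existence of $|Q|^{\pm 1}$ via functional calculus and supply the one-line uniqueness argument, which the paper's proof omits despite the theorem asserting uniqueness.
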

\begin{proof}
Let us start by noticing that the set of invertible operators is closed under the multiplication over the algebra and under the star operation. Therefore, let us define: $|Q|^2 := Q^*Q$, which must still be invertible. If we now define:
\begin{equation*}
    U = Q |Q|^{-1}, 
\end{equation*}
we have that:
\begin{equation*}
    Q = U |Q|. 
\end{equation*}
Moreover, the operator $U$ is unitary or antiunitary as stated:
\begin{align*}
    U^* U &= |Q|^{-1} Q^* Q |Q|^{-1}\\
    &= |Q|^{-1} |Q|^2 |Q|^{-1} = \mathbb{1}\\
    U U^* &= Q |Q|^{-1} |Q|^{-1} Q^*\\
    &= Q (Q^* Q)^{-1} Q^* \\
    &= Q Q^{-1} (Q^*)^{-1} Q^*= \mathbb{1}. 
\end{align*}
\end{proof}
This constitutes everything we need to introduce the Tomita-Takesaki modular theory.\\
Let us pick a vector $\ket{\Psi}$, which is cyclic and separating for the von Neumann algebra $\mathfrak{A}$ over the Hilbert space $\mathcal{H}$. Define the operator $S_{\Psi}: \mathcal{D} \to \mathcal{H}$, where $\mathcal{D} \subset \mathcal{H}$ is a dense subset of the Hilbert space, called the \textit{Tomita operator}, which for any $Q \in \mathfrak{A}$ acts as:
\begin{equation*}
    S_{\Psi} Q \ket{\Psi} = Q^* \ket{\Psi}. 
\end{equation*}
Firstly, we notice, that $S_{\Psi}$ is closable. In fact, if we take a sequence $Q_n \ket{\Psi} \to 0$, we can clearly take the adjoint and directly obtain $Q_n^*\ket{\Psi} \to 0$ as well. Therefore, from Proposition \ref{prop: Closable}, we have that $S_{\Psi}$ is closable and from now on we will consider its closure and, to simplify the notation, we will use the same symbol $S_{\Psi}$ for it.\\
Another thing to notice is that $S_{\Psi} \ket{\Psi} = \ket{\Psi}$ and that it squares to the identity: $S_{\Psi}^2 = \mathbb{1}$. In particular, the second identity shows that the Tomita operator is invertible. For this reason, Theorem \ref{thm: polar} yields its polar decomposition:
\begin{equation}
    S_{\Psi} = J_{\Psi} \Delta_{\Psi}^{1/2},
\end{equation}
where $J_{\Psi} = S_{\Psi} \Delta_{\Psi}^{-1/2}$ is a unitary operator that must necessarily be antilinear, as it is defined in terms of $S_{\Psi}$, which is antilinear. 
$J_{\Psi}$ is called the \textit{modular conjugation}. Moreover, $\Delta_{\Psi} = S_{\Psi}^* S_{\Psi}$, called the \textit{modular operator}, is a nonnegative self-adjoint operator (as it evidently is symmetric and $D(\Delta_{\Psi}) = D(S_{\Psi})$, as well as  $D(\Delta_{\Psi}^*) = D(S_{\Psi})$ holds). Moreover, we notice that:
\begin{equation*}
    \Delta_{\Psi} \ket{\Psi}= \ket{\Psi} \hspace{15pt} J_{\Psi} \ket{\Psi} = \ket{\Psi},
\end{equation*}
from which also follows that, given any function $f$, we have $f(\Delta_{\Psi}) \ket{\Psi} = f(1) \ket{\Psi}$.
Finally, let us notice that from $S_{\Psi}^2 = \mathbb{1}$, we have:
\begin{equation*}
    J_{\Psi} \Delta_{\Psi}^{1/2} J_{\Psi} \Delta_{\Psi}^{1/2} = \mathbb{1}
\end{equation*}
which implies:
\begin{equation}\label{eq: ant}
    J_{\Psi} \Delta_{\Psi}^{1/2} J_{\Psi} = \Delta_{\Psi}^{-1/2}. 
\end{equation}
Simultaneously, we also have:
\begin{equation*}
    J^2_{\psi} (J_{\psi}^{-1} \Delta_{\psi}^{1/2} J_{\psi} ) = \Delta_{\psi}^{-1/2}.
\end{equation*}
However, the uniqueness of the polar decomposition of $\Delta_{\psi}^{-1/2}$ enforces $J^{2}_{\psi} = \mathbb{1}$.\\
Moreover, composing Eq. \eqref{eq: ant} with itself, one finds:
\begin{align*}
    J_{\Psi} \Delta_{\Psi}^{1/2} J_{\Psi} J_{\Psi} \Delta_{\Psi}^{1/2} J_{\Psi} &= \Delta_{\Psi}^{-1}\\
    J_{\Psi} \Delta_{\Psi} J_{\Psi} &= \Delta_{\Psi}^{-1}. 
\end{align*}
Then, from the antilinearity of $J_{\Psi}$, we have for any function $f$: $J_{\Psi} f(\Delta_{\Psi}) J_{\Psi} = \overline{f}(\Delta_{\Psi}^{-1})$. Therefore, if we take as a function $f(x) = x^{is}$, we will have that:
\begin{equation*}
    J_{\Psi} \Delta_{\Psi}^{is} J_{\Psi} = \Delta_{\Psi}^{is}. 
\end{equation*}\\\\
Following this brief introduction to the Tomita operator, we are in a position to quote the main result of the preliminary discussion, given by
the \textit{theorem of Tomita-Takesaki}:
\begin{thm}[\textbf{Tomita-Takesaki}]\label{thm: TomTak}
The modular conjugation $J_{\Psi}$ and the modular group $\Delta_{\Psi}^{it}$ associated with the von Neumann algebra $\mathfrak{A}$ and the cyclic and separating vector $\ket{\Psi}$, are such that:
\begin{align*}
    J_{\Psi} \mathfrak{A} J_{\Psi} &= \mathfrak{A}'\\
    \Delta^{it}_{\Psi} \mathfrak{A} \Delta_{\Psi}^{-it} =  \mathfrak{A} &\hspace{15 pt} \Delta^{it}_{\Psi} \mathfrak{A}' \Delta_{\Psi}^{-it} =  \mathfrak{A}'
\end{align*}
for all $t \in \mathbb{R}$. 
\end{thm}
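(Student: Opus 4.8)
The Tomita--Takesaki theorem is one of the deep structural results of the subject and a self-contained proof is long; the plan is to follow the classical route (as in Takesaki and in Bratteli--Robinson), reducing the statement to two inclusions plus a symmetry argument, and carrying out the elementary parts in full while quoting the literature for the one genuinely analytic lemma. First I would introduce the companion $F_{\Psi}$ of the Tomita operator: since $\ket{\Psi}$ is cyclic and separating for $\mathfrak{A}$, Proposition \ref{prop: 112} makes it cyclic and separating for $\mathfrak{A}'$, so the antilinear map $Q'\ket{\Psi}\mapsto (Q')^{*}\ket{\Psi}$ ($Q'\in\mathfrak{A}'$) is densely defined and closable exactly as $S_{\Psi}$ was; let $F_{\Psi}$ denote its closure. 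A short computation on the dense domains $\mathfrak{A}\ket{\Psi}$ and $\mathfrak{A}'\ket{\Psi}$ gives $\braket{Q'\Psi}{S_{\Psi}Q\Psi}=\overline{\braket{Q\Psi}{F_{\Psi}Q'\Psi}}$, hence $F_{\Psi}\subseteq S_{\Psi}^{*}$; the symmetric computation gives $S_{\Psi}\subseteq F_{\Psi}^{*}$, and closedness forces $F_{\Psi}=S_{\Psi}^{*}$. Then $\Delta_{\Psi}=S_{\Psi}^{*}S_{\Psi}=F_{\Psi}S_{\Psi}$ and $\Delta_{\Psi}^{-1}=S_{\Psi}F_{\Psi}$, and feeding in the polar decomposition $S_{\Psi}=J_{\Psi}\Delta_{\Psi}^{1/2}$ (Theorem \ref{thm: polar}) I would extract the purely algebraic identities $J_{\Psi}=J_{\Psi}^{*}=J_{\Psi}^{-1}$, $F_{\Psi}=J_{\Psi}\Delta_{\Psi}^{-1/2}=\Delta_{\Psi}^{1/2}J_{\Psi}$, $J_{\Psi}\Delta_{\Psi}^{it}=\Delta_{\Psi}^{it}J_{\Psi}$ and $\Delta_{\Psi}^{1/2}=J_{\Psi}\Delta_{\Psi}^{-1/2}J_{\Psi}$ --- essentially the relations already displayed in the excerpt, now with $F_{\Psi}$ identified.

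The crux, and the step I expect to be the main obstacle, is the pair of inclusions
\[
 J_{\Psi}\,\mathfrak{A}\,J_{\Psi}\subseteq\mathfrak{A}',\qquad \Delta_{\Psi}^{it}\,\mathfrak{A}\,\Delta_{\Psi}^{-it}\subseteq\mathfrak{A}\ \ (t\in\mathbb{R}).
\]
For $A\in\mathfrak{A}$ the formal element $J_{\Psi}AJ_{\Psi}$ acts on the cyclic vector by $J_{\Psi}AJ_{\Psi}\ket{\Psi}=\Delta_{\Psi}^{1/2}A^{*}\ket{\Psi}$ (using $J_{\Psi}\ket{\Psi}=\ket{\Psi}$ and $S_{\Psi}A\ket{\Psi}=A^{*}\ket{\Psi}$), but the naive recipe of defining an element of $\mathfrak{A}'$ by $B\ket{\Psi}\mapsto B\,\Delta_{\Psi}^{1/2}A^{*}\ket{\Psi}$ is spoiled by the unboundedness of $\Delta_{\Psi}^{1/2}$. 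One therefore regularises --- replacing $\Delta_{\Psi}^{1/2}$ by the bounded $\Delta_{\Psi}^{1/2}(\lambda+\Delta_{\Psi})^{-1}$, $\lambda>0$, or equivalently smearing $\Delta_{\Psi}^{it}A\Delta_{\Psi}^{-it}$ against a suitable weight and shifting the integration contour, which is legitimate because $t\mapsto\Delta_{\Psi}^{it}$ is a bounded (unitary) group. This produces genuine bounded operators $A_{\lambda}'$ for which commutation with every $B\in\mathfrak{A}$ can be checked directly, using only $S_{\Psi}Q\ket{\Psi}=Q^{*}\ket{\Psi}$, $F_{\Psi}Q'\ket{\Psi}=(Q')^{*}\ket{\Psi}$, the identities just obtained and the spectral calculus of $\Delta_{\Psi}$; hence $A_{\lambda}'\in\mathfrak{A}'$. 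Showing $A_{\lambda}'\to J_{\Psi}AJ_{\Psi}$ strongly as the regularisation is removed, and invoking that $\mathfrak{A}'$ is strongly closed (Theorem \ref{thm: DouComm}), gives $J_{\Psi}AJ_{\Psi}\in\mathfrak{A}'$; inserting an extra factor $\Delta_{\Psi}^{it}$ into the same device produces operators in $\mathfrak{A}''=\mathfrak{A}$ commuting with all of $\mathfrak{A}'$, i.e. $\Delta_{\Psi}^{it}A\Delta_{\Psi}^{-it}\in\mathfrak{A}$. The delicate points here --- manipulating unbounded operators on non-invariant domains and controlling the limit --- are exactly what I would, in a thesis, verify by citing Takesaki / Bratteli--Robinson; an alternative repackaging, proving instead that $\sigma_{t}=\mathrm{Ad}\,\Delta_{\Psi}^{it}$ is the unique weakly continuous one-parameter automorphism group leaving $\mathfrak{A}$ invariant and satisfying the KMS condition at inverse temperature $-1$ for the state $\braket{\Psi}{\,\cdot\,\Psi}$, relocates rather than removes the estimates.

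Finally I would upgrade the inclusions to the asserted equalities by a symmetry argument. By the first paragraph the pair $(\mathfrak{A}',\ket{\Psi})$ has Tomita operator $F_{\Psi}=J_{\Psi}\Delta_{\Psi}^{-1/2}$, so its modular conjugation is again $J_{\Psi}$ and its modular operator is $\Delta_{\Psi}^{-1}$; applying the previous step to $(\mathfrak{A}',\ket{\Psi})$ yields $J_{\Psi}\mathfrak{A}'J_{\Psi}\subseteq\mathfrak{A}''=\mathfrak{A}$ and $\Delta_{\Psi}^{-it}\mathfrak{A}'\Delta_{\Psi}^{it}\subseteq\mathfrak{A}'$ for all $t$. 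Conjugating $J_{\Psi}\mathfrak{A}J_{\Psi}\subseteq\mathfrak{A}'$ by $J_{\Psi}$ and using $J_{\Psi}^{2}=\mathbb{1}$ gives $\mathfrak{A}\subseteq J_{\Psi}\mathfrak{A}'J_{\Psi}$, so $J_{\Psi}\mathfrak{A}'J_{\Psi}=\mathfrak{A}$ and therefore $J_{\Psi}\mathfrak{A}J_{\Psi}=\mathfrak{A}'$. For the modular group, $\Delta_{\Psi}^{it}\mathfrak{A}\Delta_{\Psi}^{-it}\subseteq\mathfrak{A}$ holding for all $t\in\mathbb{R}$ already forces equality (apply it at $-t$ and conjugate by $\Delta_{\Psi}^{it}$), and the same reasoning applied to $\mathfrak{A}'$ with modular operator $\Delta_{\Psi}^{-1}$ gives $\Delta_{\Psi}^{it}\mathfrak{A}'\Delta_{\Psi}^{-it}=\mathfrak{A}'$, completing the proof.
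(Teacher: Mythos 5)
The paper does not actually prove this theorem: its ``proof'' is the single line ``See \cite{Takesaki:1970aki} Theorem $10.1$ and Corollary $9.1$,'' so there is no in-text argument to compare yours against. Your proposal reconstructs the standard architecture of the proof (Takesaki/Bratteli--Robinson): introduce $F_{\Psi}=S_{\Psi}^{*}$ via the commutant, extract the algebraic identities from the polar decomposition, isolate the two inclusions $J_{\Psi}\mathfrak{A}J_{\Psi}\subseteq\mathfrak{A}'$ and $\Delta_{\Psi}^{it}\mathfrak{A}\Delta_{\Psi}^{-it}\subseteq\mathfrak{A}$ as the analytic core to be handled by resolvent regularisation, and then bootstrap to equalities by the $\mathfrak{A}\leftrightarrow\mathfrak{A}'$ symmetry. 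That outline is sound, and the final symmetry step (conjugating by $J_{\Psi}$, and applying the inclusion at $\pm t$) is carried out correctly; you are also right that the regularisation lemma is the genuinely hard part and that deferring it to the literature is the honest move --- which is exactly what the thesis does, only wholesale. So your version buys a transparent reduction of the theorem to one quotable lemma, at the cost of length; the paper buys brevity at the cost of opacity.

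One small correction: the duality relation between $S_{\Psi}$ and $F_{\Psi}$ should read $\braket{Q'\Psi}{S_{\Psi}Q\Psi}=\braket{Q\Psi}{F_{\Psi}Q'\Psi}$ \emph{without} the complex conjugate. Indeed
\begin{equation*}
    \braket{Q'\Psi}{Q^{*}\Psi}=\braket{\Psi}{(Q')^{*}Q^{*}\Psi}=\braket{\Psi}{Q^{*}(Q')^{*}\Psi}=\braket{Q\Psi}{(Q')^{*}\Psi},
\end{equation*}
and with the convention $\braket{T^{*}x}{y}=\braket{Ty}{x}$ appropriate for \emph{antilinear} operators this is precisely the statement $F_{\Psi}\subseteq S_{\Psi}^{*}$; with the extra bar you wrote, the identity would force the left-hand side to be real, which it is not in general. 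This is a convention slip rather than a gap, but it matters when you actually verify $F_{\Psi}=S_{\Psi}^{*}$, on which the identification $\Delta_{\Psi}=F_{\Psi}S_{\Psi}$ and the whole symmetry argument rest.
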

\begin{proof}
See \cite{Takesaki:1970aki} Theorem $10.1$ and Corollary $9.1$. 
\end{proof}
The above essentially states that the modular conjugation maps the algebra onto its commutant and that additionally the modular group is an automorphism of the von Neumann algebra.\\\\
Let us now turn to the relative Tomita operator. For this purpose, start by choosing two states $\ket{\Psi}$ and $\ket{\Phi}$, both assumed to be normalized $\braket{\Psi}{\Psi} = \braket{\Phi}{\Phi} = 1$. For the moment, we only assume the former, that is $\ket{\Psi}$, to be cyclic and separating for the algebra $\mathfrak{A}$. Define the \textit{relative Tomita operator} $S_{\Phi|\Psi}: D(S_{\Phi|\Psi}) \to \mathcal{H}$, where as before $D(S_{\Phi|\Psi})$ is a dense subset of $\mathcal{H}$, as:
\begin{equation*}
 S_{\Phi|\Psi} Q \ket{\Psi} = Q^* \ket{\Phi}
\end{equation*}
for $Q \in \mathfrak{A}$. As was the case for the Tomita operator, it is clear that this is a closable operator and from now on we will assume that the closure has been taken. In general, the operator $S_{\Phi|\Psi}$ is not invertible. A naive guess of $S_{\Psi|\Phi}$ being the inverse, is accompanied by the problem that $\ket{\Phi}$ is an arbitrary, and not necessarily a cyclic vector. So, $S_{\Psi|\Phi}$ is not defined on a dense subset of the Hilbert space. Therefore, under the additional assumption that also $\ket{\Phi}$ is cyclic and separating, we have that the relative Tomita operator is invertible and as such admits a unique polar decomposition:
\begin{equation*}
    S_{\Phi|\Psi} = J_{\Phi|\Psi} \Delta_{\Phi|\Psi}^{1/2}, 
\end{equation*}
where as before we have that $J_{\Phi|\Psi}$ is called the \textit{relative conjugate operator} and $\Delta_{\Phi|\Psi}$ the \textit{relative modular operator}. Like for the Tomita operator, we have that $S_{\Phi|\Psi} \ket{\Psi} = \ket{\Psi}$. As a side remark, we notice that if we set $\ket{\Phi} = \ket{\Psi}$, the relative Tomita operator reduces to the Tomita operator.\\

\subsection{Araki's relative entropy}
We have now gathered everything needed in order to define the most general notion of relative entropy, first introduced by Uhlmann in \cite{Uhlmann} and by Araki in \cite{Araki1976}. In what follows, after giving the definition of entropy, we will prove most of its properties based on the original work of Araki \cite{Araki1976}, \cite{Araki1977} and we will argue why it generalizes the one of von Neumann of Quantum Mechanics to type $III$ factors and thus to Quantum Field Theory. However, before introducing it, let me show why it suffices to only study relative entropy in this context, and not entropy, as it is divergent. The first thing to notice is that the vacuum state of QFT, as a cyclic and separating state, is a highly entangled state. To understand what we mean by that, let us revisit the example of the Bell pair:
\begin{ex}
Consider the Bell pair:
\begin{equation*}
    \ket{\Psi} := \frac{1}{\sqrt{2}}(\ket{0}\otimes \ket{0} + \ket{1} \otimes \ket{1}), 
\end{equation*}
which is an entangled state for the bipartite system $\mathfrak{A} = M(2\times 2, \mathbb{C}) \otimes M(2\times 2, \mathbb{C})$. In particular, whenever we act on $\ket{\Psi}$ with operators acting on one of the two subsystems only, we obtain any vector in the full Hilbert space $\mathbb{C}^2 \otimes \mathbb{C}^2$. To see it, let us consider:
\begin{equation*}
    A = \begin{pmatrix}
    a_{00} & a_{01} \\
    a_{10} & a_{11} \end{pmatrix} \in M(2\times 2, \mathbb{C}), 
\end{equation*}
acting on basis vectors as:
\begin{equation*}
    A \ket{0} = a_{00} \ket{0} + a_{10} \ket{1} \hspace{15pt} A\ket{1} = a_{11} \ket{1} + a_{01} \ket{0}. 
\end{equation*}
This, however, means that:
\begin{equation*}
    (A\otimes \mathbb{1}) \ket{\Psi} = \frac{1}{\sqrt{2}} (a_{00} \ket{0} \otimes \ket{0} + a_{10} \ket{1} \otimes \ket{0} + a_{01} \ket{0} \otimes \ket{1} + a_{11} \ket{1} \otimes \ket{1}),
\end{equation*}
which shows how, by conveniently choosing the operator $A$, we get any vector in $\mathbb{C}^2 \otimes \mathbb{C}^2$. Note that this result only holds for the above state. Had we chosen to start with a separable state like $\ket{0} \otimes \ket{0}$, the above consideration would not have been true.
\end{ex}
The similarity with QFT is evident from the statement of the Reeh-Schlieder theorem. Given a region $\mathcal{O}$ of the considered spacetime $M$ (assumed to be such that the Reeh-Schlieder property holds, see comments at the end of Section \ref{app: Reeh}), we are assuming that the full algebra of bounded operators over the Hilbert space $\mathcal{H}$ can be decomposed as: 
\begin{equation*}
    \mathcal{B}(\mathcal{H}) = \mathfrak{A}(\mathcal{O}) \vee \mathfrak{A}(\mathcal{O}'). 
\end{equation*}
Namely, the local von Neumann algebras are factors. The Reeh-Schlieder theorem asserts that acting with operators $A \in \mathfrak{A}(\mathcal{O})$ on the vacuum, we can approximate any vector in $\mathcal{H}$ with arbitrary precision. Moreover, since this holds true for any open region $\mathcal{O}$, we say that the vacuum is \textit{entangled at any distance}: Against intuition, the vacuum is a highly entangled state (which may be the reason why the Reeh-Schlieder theorem, when first formulated, was considered to be counter-intuitive).\\\\
Finally, we seek to define a notion of entropy for local algebras. In order to do so, we shall discuss the following properties:
\begin{defn}\label{def: Rentr}
We define an entropy in QFT as a map $\mathcal{O} \in \mathcal{K} \to S(\mathcal{O}) \in \mathbb{R}$ that satisfies the following properties:
\begin{itemize}
    \item (positivity) $S(\mathcal{O}) \geq 0$ for all $\mathcal{O} \in \mathcal{K}$
    \item (strong subadditivity) $S(\mathcal{O}_1 \vee \mathcal{O}_2) + S(\mathcal{O}_1 \cap \mathcal{O}_2) \leq S(\mathcal{O}_1) + S(\mathcal{O}_2)$\\
    for all commuting regions $\mathcal{O}_1$ and $\mathcal{O}_2$
    \item (Poincaré invariance) $S(g\mathcal{O}) = S(\mathcal{O})$ for all $g \in \mathcal{P}^{\uparrow}_{+}$
\end{itemize}
\end{defn}
\begin{rem}
The definition is presented for Minkowski spacetime, but can be trivially generalized to the case of curved backgrounds by replacing Poincaré invariance with the invariance under the symmetry group of the considered spacetime.
\end{rem}
In the above, we have denoted by $\mathcal{K}$ the set of all causally complete sets, i.e. those sets that can be obtained by taking the causal development of a portion of a Cauchy surface, and with $\mathcal{O}_1 \vee \mathcal{O}_2 = (\mathcal{O}_1 \cup \mathcal{O}_2)''$.\\
The last demanded property is motivated by the fact that the vacuum state is Poincaré invariant and the local algebras $\mathfrak{A}(\mathcal{O})$ and $\mathfrak{A}(g\mathcal{O})$ are unitarily equivalent by the Poincaré covariance of QFT. The first two requirements are motivated by the definition of entropy that we gave in the finite dimensional case. However, as I now want to argue, such a quantity cannot exist.\\
Define the \textit{entanglement surface} $\gamma_{\mathcal{O}}$ as the $2$-dimensional surface that is the boundary of the portion of Cauchy surface whose causal developement gives $\mathcal{O}$. Then, we have the following result: 
\begin{lem}
Let $\mathcal{K}_P \subset\mathcal{K}$ be the set of causally complete sets that have a polyhedral entanglement surface. Any positive, strongly subadditive, and Poincaré invariant function $\mathcal{O} \in \mathcal{K}_P \to S(\mathcal{O}) \in \mathbb{R}$ is of the form:
\begin{equation*}
    S(\mathcal{O}) = c_1 + c_2 \mathrm{vol}(\gamma_{\mathcal{O}}),
\end{equation*}
where $c_1,c_2 \geq 0$ are positive constants independent of the set $\mathcal{O}$, while $\mathrm{vol}(\gamma_{\mathcal{O}})$ is the geometrical volume of the spacelike surface $\gamma_{\mathcal{O}}$.
\end{lem}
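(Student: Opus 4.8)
The plan is to reproduce the rigidity argument of Casini: positivity, strong subadditivity (SSA) and Poincaré invariance are together strong enough to fix $S$ completely on $\mathcal{K}_P$. The first point to establish is that $S(\mathcal{O})$ depends only on the entangling surface $\gamma_\mathcal{O}$. Since a causally complete set is the causal development of a piece of Cauchy surface with boundary $\gamma_\mathcal{O}$, it is determined by $\gamma_\mathcal{O}$ up to replacing $\mathcal{O}$ by its causal complement $\mathcal{O}'$; applying SSA to $\mathcal{O}$ and a copy translated by an infinitesimal amount, together with Poincaré invariance, one shows first that $S$ cannot feel anything but $\gamma_\mathcal{O}$, and then that $S(\mathcal{O}) = S(\mathcal{O}')$. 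From here on $S$ is regarded as a function of $\gamma_\mathcal{O}$.

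First I would treat the model case of a planar entangling surface, i.e.\ $\mathcal{O}$ a Rindler wedge (everything Poincaré-equivalent to it). Writing $W_\alpha$ for the causal development of a spatial wedge of opening angle $\alpha$ with a fixed codimension-two edge, SSA applied to two copies $W_\alpha$, $W_\alpha^\theta$ rotated about that edge gives $2\,S(W_\alpha) \ge S(W_{\alpha+\theta}) + S(W_{\alpha-\theta})$ for $0<\theta<\alpha$ (using rotation invariance of $S(W_\alpha)$ together with $W_\alpha \vee W_\alpha^\theta \simeq W_{\alpha+\theta}$ and $W_\alpha \cap W_\alpha^\theta \simeq W_{\alpha-\theta}$), so $\alpha \mapsto S(W_\alpha)$ is concave. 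Combining this with the complementarity relation $S(W_\alpha) = S(W_{2\pi-\alpha})$ and with the adjacency/scaling moves that relate wedges of different opening angle to half-spaces forces $S(W_\alpha)$ to be an affine function of the Lorentz-invariant area of its edge, with a universal slope $c_2$ and intercept $c_1$.

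Next I would bootstrap to a general polyhedral $\gamma_\mathcal{O}$. Such an $\mathcal{O}$ is obtained from finitely many half-spaces by intersections and unions, and the key move is to compare $\mathcal{O}$ with a copy $\mathcal{O}_\epsilon$ translated parallel to one of its flat faces, chosen so that $\mathcal{O} \vee \mathcal{O}_\epsilon$ and $\mathcal{O} \cap \mathcal{O}_\epsilon$ stay in $\mathcal{K}_P$ and differ from $\mathcal{O}$ only by a thin slab glued onto that face. SSA then bounds the variation of $S$ linearly in the area of the slab; running the complementary arrangement (translating the other way, or passing to causal complements, so that the roles of $\cup$ and $\cap$ are exchanged) gives the opposite bound, hence $S$ varies \emph{exactly} linearly along each face, with the slope $c_2$ determined by the planar case and, by rotation invariance, independent of which face it is. Summing the contributions of all faces yields $S(\mathcal{O}) = c_1 + c_2\,\mathrm{vol}(\gamma_\mathcal{O})$; one also checks that hypothetical lower-dimensional contributions (edge lengths, number of vertices) are ruled out because a boost that deforms the faces changes those quantities in a way incompatible with Poincaré invariance of $S$. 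Finally $c_2 \ge 0$, since otherwise $S$ would become negative on regions with arbitrarily large entangling surface, contradicting positivity, and $c_1 \ge 0$ follows by letting $\mathrm{vol}(\gamma_\mathcal{O}) \to 0$ along a shrinking family of causal diamonds.

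The main obstacle is turning the one-sided SSA inequalities into equalities: each application of SSA must be matched with a second configuration — a Poincaré transform of the first whose union and intersection are Poincaré images of the original two regions — so that the two inequalities collapse into an identity, and throughout this process every region that appears must remain causally complete with polyhedral entangling surface. Keeping this bookkeeping consistent for non-convex polyhedra and, above all, at the edges where faces meet (so that the ``slide a half-space along a face'' picture patches together globally) is the delicate part of the argument; the planar reduction and the sign of the two constants are comparatively routine.
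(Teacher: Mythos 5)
The paper does not actually prove this lemma: it defers entirely to Theorem 3 of the cited work of Casini, so there is no in-paper argument to compare against. Your reconstruction follows the right family of ideas (pairs of regions whose union and intersection are Poincar\'e-congruent to known regions, reduction to a wedge-angle function, additivity over faces), but several pivotal steps do not hold as stated. First, the complementarity relation $S(W_\alpha)=S(W_{2\pi-\alpha})$, and more generally $S(\mathcal{O})=S(\mathcal{O}')$, is not among the hypotheses and does not follow from Poincar\'e invariance: a spatial wedge of opening angle $\alpha$ is not congruent to one of angle $2\pi-\alpha$, and complement symmetry is a purity statement about the state, not a consequence of positivity, strong subadditivity and covariance. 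Since your planar step uses exactly this relation to upgrade concavity of $\alpha\mapsto S(W_\alpha)$ to an affine (in fact constant) dependence, the wedge analysis is left hanging. Second, the conclusion you draw from it --- that $S(W_\alpha)$ is ``an affine function of the Lorentz-invariant area of its edge'' --- is empty for this family: all the $W_\alpha$ share the same edge, so its area is a single (here infinite) number and cannot parametrize anything. What the angle argument must actually deliver is that $S$ carries no dependence on dihedral angles at all, and that requires producing a second SSA configuration giving the reverse inequality (convexity), which your sketch only gestures at.

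Third, the argument you offer to exclude lower-dimensional contributions is wrong in an instructive way: Poincar\'e transformations are isometries of Minkowski space, hence preserve the induced geometry of the spacelike surface $\gamma_{\mathcal{O}}$ --- its area, the lengths of its edges, and the number of its vertices all transform trivially. A boost therefore cannot distinguish an area term from an edge-length or vertex-counting term; eliminating the latter is precisely the delicate part of Casini's theorem and is done by further SSA identities obtained from gluing polyhedra along faces so that edges and angles change while faces add, not by covariance alone. Finally, your ``thin slab'' step correctly identifies that each SSA inequality must be matched by a complementary arrangement yielding the opposite bound, but one of the two arrangements you propose (passing to causal complements) is unavailable for the reason above, and you do not verify that the other keeps all intermediate regions causally complete with polyhedral entangling surface. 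In short, the skeleton is the right one, but the load-bearing joints --- complement symmetry, the reverse inequalities, and the exclusion of sub-leading geometric terms --- are either unjustified or incorrect as written.
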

\begin{proof}
See Theorem $3$ in \cite{Casini_2004}.
\end{proof}
The above Lemma shows that a nontrivial entropy measure in QFT, consistent with the above definition, cannot exist. In fact, if this were the case, we could define and study the mutual information between two strictly spacelike separated sets $\mathcal{O}_1$ and $\mathcal{O}_2$:
\begin{equation*}
    I(\mathcal{O}_1, \mathcal{O}_2) := S(\mathcal{O}_1) + S(\mathcal{O}_2) - S(\mathcal{O}_1 \cup \mathcal{O}_2) = c_1. 
\end{equation*}
From $\mathrm{vol}$ being a measure, we have: $\mathrm{vol}(\mathcal{O}_1 \cup \mathcal{O}_2) = \mathrm{vol}(\mathcal{O}_1) + \mathrm{vol}(\mathcal{O}_2)$. However, as is reasonable,
the mutual information between $\mathcal{O}_1$ and $\mathcal{O}_2$ should vanish whenever we take one of these regions to be at spacelike infinity as we then expect correlations to vanish. This would imply that the constant $c_1 = 0$ and therefore, for all strictly separated regions $\mathcal{O}_1,\mathcal{O}_2 \in \mathcal{K}_P$, we have that $I(\mathcal{O}_1, \mathcal{O}_2)=0$. Moreover, for any two open regions $\mathcal{O}_1, \mathcal{O}_2 \in \mathcal{K}$ we have the existence of sets $\Tilde{\mathcal{O}}_1, \Tilde{\mathcal{O}}_2 \in \mathcal{K}_P$ such that $\mathcal{O}_j \subset \Tilde{\mathcal{O}}_j$. Then, however, since for those with polyhedral boundary the mutual information vanishes, by monotonicity of mutual information we must have $I(\mathcal{O}_1, \mathcal{O}_2)=0$ for all strictly spacelike separated, open, causally complete sets $\mathcal{O}_1, \mathcal{O}_2$.\\
However, the vacuum state must always be entangled at any distace and this results proves exactly the opposite: there is no entanglement between any spacelike separated regions. This means that there is no way in QFT, in which we can define an entropy measure with the above conditions.\\\\
Moreover, the local algebras of QFT are of type $III$ where there is no notion of trace. However, despite the absence of a notion of entanglement entropy, there is no obstruction in defining a notion of relative entropy. The only thing that we need to be aware of is, as we argued in the previous section, that the algebras are type $III$ factors and thus we need to find a new definition that, by consistency, reduces to the von Neumann relative entropy if the algebras are taken to be of type $I$.

\begin{defn}
Let $\omega_{\phi}$ and $\omega_{\psi}$ be normal, faithful and positive linear functionals over a von Neumann algebra $\mathfrak{A}$. Then, by calling $\Phi$ and $\Psi$ the vector representatives on the Hilbert space $\mathcal{H}$ of the two functionals, define the Araki's relative entropy as:
\begin{equation*}
    S(\omega_{\phi}|\omega_{\psi}) := -\braket{\Psi}{\log(\Delta_{\Phi,\Psi})\Psi}.
\end{equation*}
To simplify the notation we will also denote $S(\omega_{\phi}|\omega_{\psi}) = S(\phi|\psi)$.
\end{defn}
\begin{rem}
Using our awareness of the interpretation of relative entropy in classical information theory and since, as we will show later, this expression reduces to the von Neumann relative entropy which in turn reduces to the classical relative entropy in absence of entanglement, we can interpret this as a measure of distinguishability between the two state functionals $\omega_{\phi}$ and $\omega_{\psi}$ over the algebra $\mathfrak{A}$. In fact, such a distinguishability measure, manifestly gives distinguishable physical effects. For example, let me mention that in the entropy area relation of Black Holes, the more the excited state is distinguished from the vacuum of the considered QFT, the more we can practically spot the difference by looking at how the area of a Black Hole has changed (See for example Section $5.5$ in \cite{Kurpicz_2021}). 
\end{rem}
\begin{rem} The above defined entropy, is independent of the choice of vector representatives $\Phi$ and $\Psi$ of the states $\omega_{\psi}, \omega_{\phi}$ defined over the algebra. To see it, notice that each different representative can be obtained from the previous, by a unitary transformation (by uniqueness of the GNS construction up to unitary equivalence), namely:
\begin{equation*}
    \ket{\Psi'} = U \ket{\Psi},
\end{equation*}
where $U$ is a unitary operator in $\mathfrak{A}$. We can now compute:
\begin{align*}
    \braket{U\Psi}{\Delta_{U\Phi,U\Psi} U\Psi} &= \braket{\Psi}{S^*_{U\Phi,U\Psi} S_{U\Phi,U\Psi} \Psi}\\
    &= \braket{U\Psi}{ S^*_{U\Phi,U\Psi} S_{U\Phi,U\Psi} U\Psi}\\
    &= \braket{U\Phi}{U \Phi}\\
    &= \braket{\Phi}{\Phi}\\
    &= \braket{\Psi}{\Delta_{\Phi,\Psi} \Psi}, 
\end{align*}
which proves the independence.
\end{rem}
In the next subsections we will start investigating the properties of Araki's definition of relative entropy.

\subsubsection{Positivity of relative entropy}
In this section we focus on strict positivity and reality, namely that for $\omega_{\phi}(\mathbb{1}) = \omega_{\psi}(\mathbb{1})$ we have:
\begin{equation*}
    S(\phi| \psi) \geq 0.
\end{equation*}
Start by picking a cyclic and separating vector $\ket{\Psi}$ and consider the set of vectors, called \textit{natural cone}, defined as follows:
\begin{equation*}
    \mathcal{P}_{\Psi} = \overline{\{ Aj_{\Psi}(A) \ket{\Psi} | A \in  \mathfrak{A}\}},
\end{equation*}
where the closure is taken with respect to the weak topology and $j_{\Psi}(A) := J_{\Psi}AJ_{\Psi}$\footnote{Note that $A j_{\Psi}(A)\ket{\Psi} = j_{\Psi}(A) A \ket{\Psi}$ from the Tomita-Takesaki theorem.}. Any other state $\omega_{\phi}$ over the algebra has a unique vector representative in the natural cone denoted by $\ket{\Phi}$ (see Theorem $4$ in \cite{Araki:1972zza}). Then, using the fact that $J_{\Phi,\Psi} = J_{\Phi} = J_{\Psi}$ for $\ket{\Phi} \in P_{\Psi}$ (Theorem $4$ in \cite{Araki:1972zza}):
\begin{equation*}
    \ket{\Phi} = S_{\Phi,\Psi} \ket{\Psi} = J_{\Phi,\Psi} \Delta_{\Phi, \Psi}^{1/2} \ket{\Psi},
\end{equation*}
which gives:
\begin{align*}
    J_{\Phi} \ket{\Phi} &= \Delta_{\Phi, \Psi}^{1/2} \ket{\Psi}\\
    \ket{\Phi} &= \Delta_{\Phi, \Psi}^{1/2} \ket{\Psi}.
\end{align*}
Denoting with $E_{\lambda}$ the spectral projection of $\Delta_{\Phi,\Psi}$, we have:
\begin{equation*}
    S(\phi| \psi) = - \int_0^{\infty} \log \lambda d(\Psi, E_{\lambda} \Psi).
\end{equation*}
From the previous equation and the normalization of the states we have:
\begin{equation*}
    \int_0^{\infty} \lambda d(\Psi, E_{\lambda} \Psi) = \phi(\mathbf{1}) < \infty
\end{equation*}
and at the same time $\phi(\mathbb{1}) > 0$. It follows that:
\begin{equation*}
    -\infty < -\int_0^{\infty} \lambda d(\Psi, E_{\lambda} \Psi) < - \int_0^{\infty} \log \lambda d(\Psi, E_{\lambda} \Psi),
\end{equation*}
which shows that the relative entropy is well defined, real and either finite or $+\infty$. Moreover, for any positive measurable function $\alpha(\lambda)$ of $\lambda \in (0,\infty)$ and any probability measure $\mu$ on $(0,\infty)$, we have, by the concavity of the logarithm:
\begin{equation*}
    \int_0^{\infty} \log \alpha(\lambda) d\mu(\lambda) \leq \log \int_0^{\infty} \alpha(\lambda) d\mu(\lambda).
\end{equation*}
Take as a function $\alpha(\lambda) = \lambda^{1/2}$ and $d\mu(\lambda) = d(\Psi,E_{\lambda} \Psi)/\|\Psi\|^2$. Then, using the fact that $\|\Psi\|^2 = \psi(\mathbb{1})$, we get in the above inequality:
\begin{equation*}
    S(\phi| \psi) \geq -2 \psi(\mathbb{1})\log\big( (\Phi,\Psi)/\psi(\mathbb{1}) \big),
\end{equation*}
where we have also used the fact that:
\begin{align*}
    \ket{\Phi} &= \Delta_{\Psi, \Phi}^{1/2} \ket{\Psi}\\
    &= \int_0^{\infty} \lambda^{1/2} dE_{\lambda} \ket{\Psi}.
\end{align*}
Now, using Schwarz inequality we get:
\begin{equation*}
    (\Phi, \Psi) \leq \|\Phi\| \|\Psi\| = (\phi(\mathbb{1})\psi(\mathbb{1}))^{1/2},
\end{equation*}
from which follows:
\begin{equation*}
    S(\phi| \psi) \geq -2 \psi(\mathbb{1})\log\bigg( \bigg(\frac{\phi(\mathbb{1})}{\psi(\mathbb{1})} \bigg)^{1/2} \bigg) = - \psi(\mathbb{1})\log\bigg( \frac{\phi(\mathbb{1})}{\psi(\mathbb{1})} \bigg).
\end{equation*}
But now, assuming that all functionals are normalized, i.e. $\phi(\mathbb{1}) = \psi(\mathbb{1}) = 1$, we get:
\begin{equation*}
     S(\phi| \psi) \geq 0.
\end{equation*}

\subsubsection{Lower semicontinuity of relative entropy}
The second property of Araki's definition we present, is the lower semicontinuity. For that purpose, consider $\phi_n,\psi_n,\phi,\psi$ faithful states on $\mathfrak{A}$, with unique vector representatives in the natural cone $\mathcal{P}_{\Psi}$ given by $\Phi_n,\Psi_n,\Phi,\Psi$ such that:
\begin{equation*}
    \lim_{n \to \infty} \|\Phi_n - \Phi\| = 0 \hspace{15pt} \lim_{n \to \infty} \|\Psi_n - \Psi\| = 0. 
\end{equation*}
Then, we have strong convergence for (see \cite{Araki:1972zza}):
\begin{equation*}
    \lim_{n \to \infty} (\mathbb{1}+ \Delta_{\Phi_n, \Psi_n}^{1/2})^{-1} = (\mathbb{1}+ \Delta_{\Phi, \Psi}^{1/2})^{-1},
\end{equation*}
and for any bounded and continuous function $f$:
\begin{equation}\label{eq: limf}
    \lim_{n \to \infty} f(\Delta_{\Phi_n, \Psi_n}) = f(\Delta_{\Phi, \Psi}).
\end{equation}
Let us now take for $N = 3,4 \dots$ the function:
\begin{equation*}
    f_N(\lambda) = \left\{ \begin{aligned}
    &\log N \hspace{15pt} \mathrm{if} \,\, \lambda \geq \log N\\
    &-\log N \hspace{15pt} \mathrm{if} \,\, \lambda \leq -\log N\\
    &\lambda \hspace{15pt} \mathrm{otherwise}
    \end{aligned} \right.
\end{equation*}
If we further denote by $E_{\lambda}^n$ the spectral projection of $\Delta_{\Phi_n, \Psi_n}$, we have that:
\begin{equation*}
    \int_0^{\infty} \lambda d(\Psi_n, E^n_{\lambda} \Psi_n) = \|\Phi_n\|^2 = \phi_n(\mathbb{1}),
\end{equation*}
from which we can compute:
\begin{align*}
    0 &\leq \int_N^{\infty} (\log \lambda - \log N) d(\Psi_n, E^n_{\lambda} \Psi_n)\\
    &= \int_N^{\infty} \{ \lambda^{-1}\log(\lambda/N)  \}\lambda d(\Psi_n, E^n_{\lambda} \Psi_n).
\end{align*}
If, however, we study the function $f(\lambda) = \lambda^{-1}\log(\lambda/N)$ defined for $\lambda \in (N,\infty)$:
\begin{equation*}
    f(\lambda)' = -\frac{1}{\lambda^2}\log \bigg( \frac{\lambda}{N} \bigg) + \frac{1}{\lambda^2} = \frac{1}{\lambda^2}\bigg( 1 - \log \bigg( \frac{\lambda}{N} \bigg) \bigg),
\end{equation*}
we see $f(\lambda)' \geq 0$ for $\lambda \leq e N$ and $f(\lambda)' < 0$ for $\lambda > e N$. Therefore:
\begin{equation*}
    f(\lambda) \leq f(eN) = (eN)^{-1} \hspace{15pt} \forall \lambda \in (N,\infty).
\end{equation*}
So, going back to the previous estimate: 
\begin{align*}
    0 &\leq \int_N^{\infty} \{ \lambda^{-1}\log(\lambda/N)  \}\lambda d(\Psi_n, E^n_{\lambda} \Psi_n)\\
    &\leq (eN)^{-1} \int_N^{\infty} \lambda d(\Psi_n, E^n_{\lambda} \Psi_n)\\
    &\leq (eN)^{-1} \int_0^{\infty} \lambda d(\Psi_n, E^n_{\lambda} \Psi_n) = (eN)^{-1} \phi_n(\mathbb{1}).
\end{align*}
At the same time we compute:
\begin{equation*}
    \int_0^{1/N}(\log \lambda + \log N) d(\Psi_n, E^n_{\lambda} \Psi_n) \leq 0.
\end{equation*}
Therefore, given our above defined $f_N(\lambda)$, the next lemma can be proven:
\begin{lem}
We have:
\begin{equation*}
    S(\phi_n\|\psi_n) \geq - (\Psi_n, f_N(\log \Delta_{\Phi_n,\Psi_n})\Psi_n) - (eN)^{-1} \phi_n(\mathbb{1}).
\end{equation*}
\end{lem}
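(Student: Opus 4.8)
The plan is to rewrite the two quantities appearing in the inequality as spectral integrals against the (nonnegative) measure $d(\Psi_n, E^n_{\lambda}\Psi_n)$ of the relative modular operator $\Delta_{\Phi_n,\Psi_n}$, so that the claim becomes an elementary comparison of integrands. Concretely, since
\begin{equation*}
    S(\phi_n\|\psi_n) = -\int_0^{\infty}\log\lambda\, d(\Psi_n, E^n_{\lambda}\Psi_n), \qquad (\Psi_n, f_N(\log\Delta_{\Phi_n,\Psi_n})\Psi_n) = \int_0^{\infty} f_N(\log\lambda)\, d(\Psi_n, E^n_{\lambda}\Psi_n),
\end{equation*}
the asserted estimate is equivalent to
\begin{equation*}
    \int_0^{\infty}\big( f_N(\log\lambda) - \log\lambda \big)\, d(\Psi_n, E^n_{\lambda}\Psi_n) \geq -(eN)^{-1}\phi_n(\mathbb{1}).
\end{equation*}

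Next I would analyse $f_N(\log\lambda) - \log\lambda$ region by region from the definition of $f_N$: it vanishes for $1/N < \lambda < N$, equals $\log N - \log\lambda = -\log(\lambda/N) \leq 0$ for $\lambda \geq N$, and equals $-\log N - \log\lambda = -\log(N\lambda) \geq 0$ for $\lambda \leq 1/N$. Hence the integral decomposes as
\begin{equation*}
    \int_0^{1/N}\big( -\log(N\lambda) \big)\, d(\Psi_n, E^n_{\lambda}\Psi_n) \; - \; \int_N^{\infty}\log(\lambda/N)\, d(\Psi_n, E^n_{\lambda}\Psi_n),
\end{equation*}
and it remains to bound the two pieces separately using the estimates already established just above the lemma. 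The first term is nonnegative: this is precisely the inequality $\int_0^{1/N}(\log\lambda + \log N)\, d(\Psi_n, E^n_{\lambda}\Psi_n) \leq 0$. For the second term I would invoke the chain
\begin{equation*}
    0 \leq \int_N^{\infty}\log(\lambda/N)\, d(\Psi_n, E^n_{\lambda}\Psi_n) \leq (eN)^{-1}\int_0^{\infty}\lambda\, d(\Psi_n, E^n_{\lambda}\Psi_n) = (eN)^{-1}\phi_n(\mathbb{1}),
\end{equation*}
which follows from the monotonicity analysis of $\lambda\mapsto \lambda^{-1}\log(\lambda/N)$ on $(N,\infty)$ (bounded above by $(eN)^{-1}$) together with $\int_0^{\infty}\lambda\, d(\Psi_n, E^n_{\lambda}\Psi_n) = \|\Phi_n\|^2 = \phi_n(\mathbb{1})$. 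Adding the two contributions gives exactly the desired lower bound, so the lemma follows.

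I do not expect a genuine obstacle here: the statement is an assembly of the two one-sided bounds derived immediately before it, glued together by the trivial sign analysis of $f_N(\log\lambda) - \log\lambda$. The only points needing mild care are bookkeeping ones — checking that the spectral measure $d(\Psi_n, E^n_{\lambda}\Psi_n)$ is supported in $(0,\infty)$ so that $\log\lambda$ is finite almost everywhere (which holds because $\Delta_{\Phi_n,\Psi_n}$ is the relative modular operator associated with the faithful states $\phi_n,\psi_n$), and that every integral in sight converges, which is ensured by $\phi_n(\mathbb{1}) < \infty$ and by the truncation built into $f_N$.
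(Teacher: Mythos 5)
Your argument is correct, and it uses exactly the same two ingredients the paper establishes just before the lemma: the nonnegativity of $-\int_0^{1/N}(\log\lambda+\log N)\,d(\Psi_n,E^n_\lambda\Psi_n)$ and the bound $0\leq\int_N^\infty\log(\lambda/N)\,d(\Psi_n,E^n_\lambda\Psi_n)\leq (eN)^{-1}\phi_n(\mathbb{1})$. The difference is purely organizational, but it is worth noting: the paper proves the inequality by splitting into three ``regimes'' of the operator $\log\Delta_{\Phi_n,\Psi_n}$ (spectrum above $\log N$, below $-\log N$, or in between) and arguing each case as if the state's spectral measure were concentrated in a single region, which is somewhat informal since a generic $\Psi_n$ charges all three regions simultaneously. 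Your version integrates the pointwise comparison $f_N(\log\lambda)-\log\lambda$ over the whole spectrum in one pass and then bounds the two nontrivial pieces separately; this is the rigorous form of the same computation and handles the general spectral measure without any case distinction. Your closing remarks about the support of the spectral measure in $(0,\infty)$ (faithfulness) and convergence of the truncated integrals are the right bookkeeping points; nothing is missing.
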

\begin{proof}
Consider in this case $f_N(\log \Delta_{\Phi_n, \Psi_n})$. We will prove the result in the three regimes: $a)$ when $\log \Delta_{\Phi_n, \Psi_n} \geq \log N$, $b)$ for $\log \Delta_{\Phi_n, \Psi_n} \leq \log (1/N)$ and $c)$ when $\log(1/N) < \log \Delta_{\Phi_n \Psi_n} < \log(N)$.\\
The easiest case is $c)$. In this regime, we have $f(\log \Delta_{\Phi_n,\Psi_n}) = \log \Delta_{\Phi_n,\Psi_n}$, so:
\begin{align*}
    -(\Psi_n, f(\log \Delta_{\Phi_n, \Psi_n}) \Psi_n) -\phi_n(\mathbb{1}) (eN)^{-1} &= -(\Psi_n, \log \Delta_{\Phi_n, \Psi_n} \Psi_n) -\phi_n(\mathbb{1}) (eN)^{-1}\\
    &\leq S(\phi_n\|\psi_n),
\end{align*}
where we have used the fact $\phi_n(\mathbb{1})(eN)^{-1} \geq 0$ and the definition of relative entropy.\\
In case $a)$, we have:
\begin{align*}
    -(\Psi_n, f(\log \Delta_{\Phi_n, \Psi_n}) \Psi_n) -\phi_n(\mathbb{1}) (eN)^{-1} &= -(\Psi_n, \log N \Psi_n) -\phi_n(\mathbb{1}) (eN)^{-1}\\
    &\leq -(\Psi_n, \log N \Psi_n) - \int_N^{\infty}(\log \lambda - \log N) d(\Psi_n, E^n_{\lambda} \Psi_n)\\
    &\leq -\int_N^{\infty} \log \lambda d(\Psi_n, E_{\lambda}^n \Psi_n)\\
    &= S(\phi_n\|\psi_n),
\end{align*}
where, in the second step we have used the inequality:
\begin{equation*}
    -\phi_n(\mathbb{1})(eN)^{-1} \leq -\int_N^{\infty}(\log \lambda - \log N) d(\Psi_n, E_{\lambda}^n \Psi_n).
\end{equation*}
Let us finally focus on $b)$. We have:
\begin{align*}
    -(\Psi_n, f(\log \Delta_{\Phi_n, \Psi_n}) \Psi_n) -\phi_n(\mathbb{1}) (eN)^{-1} &= -(\Psi_n, f(\log (1/N)) \Psi_n) -\phi_n(\mathbb{1}) (eN)^{-1}\\
    &= (\Psi_n, f(\log N) \Psi_n) -\phi_n(\mathbb{1}) (eN)^{-1}.
\end{align*}
Now, since $-\phi_n(\mathbb{1}) (eN)^{-1} \leq 0$ and we have proven before that $-\int_0^{1/N}(\log \lambda + \log N) d(\Psi_n, E^n_{\lambda} \Psi_n) \geq 0$, we have:
\begin{equation*}
    -\phi_n(\mathbb{1}) (eN)^{-1} \leq -\int_0^{1/N}(\log \lambda + \log N) d(\Psi_n, E^n_{\lambda} \Psi_n),
\end{equation*}
from which we estimate the above:
\begin{align*}
    -(\Psi_n, f(\log \Delta_{\Phi_n, \Psi_n}) \Psi_n) -\phi_n(\mathbb{1}) (eN)^{-1} &\leq (\Psi_n, f(\log N) \Psi_n) -\int_0^{1/N}(\log \lambda + \log N) d(\Psi_n, E^n_{\lambda} \Psi_n) \\
    &\leq S(\phi_n\|\psi_n). 
\end{align*}
\end{proof}
By using Eq. $\eqref{eq: limf}$ for $f(x) = f_N(\log x)$, and by the last lemma:
\begin{equation*}
    \liminf_{n \to \infty} S(\phi_n\|\psi_n) \geq - (\Psi, f_N(\log \Delta_{\Phi,\Psi}) \Psi) - \phi(\mathbb{1}) (eN)^{-1}. 
\end{equation*}
Finally, since the right hand side of this expression tends to $S(\phi\|\psi)$ as $N \to \infty$ we have the lower semicontinuity of the relative entropy:
\begin{equation*}
    \liminf_{n \to \infty} S(\phi_n\|\psi_n) \geq S(\phi\|\psi).
\end{equation*}

\subsubsection{Equivalence with von Neumann relative entropy}
I have started the chapter mentioning that the relative entropy of Araki is a generalization of the von Neumann entropy to type $III$ algebras. Let us show, how the Araki entropy reduces to the von Neumann relative entropy if we deal with type $I$ algebras. As we discussed in the earlier section, in this case we can select a subsystem and assume that we are dealing with a bipartite quantum system: $\mathcal{H} = \mathcal{H}_1 \otimes \mathcal{H}_2$. We let $\mathfrak{A}$ be the von Neumann algebra acting on $\mathcal{H}_1$ and $\mathfrak{A}'$ the von Neumann algebra (the commutant algebra of $\mathfrak{A}$) on $\mathcal{H}_2$. As we have discussed, we can decompose any vector in the Hilbert space via its Schmidt decomposition:
\begin{equation*}
    \ket{\psi} = \sum_{k=1}^n c_k \ket{k,k'},
\end{equation*}
where we have denoted with $n = \min\{ \dim \mathcal{H}_1, \dim \mathcal{H}_2 \}$ Assume w.l.o.g. that $\dim \mathcal{H}_1 \leq \dim \mathcal{H}_2$ while $\{\ket{k}\}$ denotes an orthonormal basis of $\mathcal{H}_1$. Assume also that for $\ket{\psi}$, all the $c_k \neq 0$. To see whether $\ket{\psi}$ is a cyclic and separating vector for $\mathcal{H}_1$ let us act on it with a generic element in $\mathfrak{A}$:
\begin{equation*}
    (a \otimes 1) \ket{\psi} = \sum_{k=1}^n c_k \ket{a k,k'}.
\end{equation*} 
Therefore $\ket{\psi}$ is separating iff $a \ket{k} = 0$ for all $\ket{k}$ basis vectors. But since we took the $\ket{k}$ to be an orthonormal basis of $\mathcal{H}_1$, this implies that $a = 0$. So, $\ket{\psi}$ is separating for the algebra $\mathfrak{A}$ if and only if the basis $\ket{k}$ is an eigenbasis for $\mathcal{H}_1$. If we also have that $\ket{k'}$ is an eigenbasis for $\mathcal{H}_2$ (true only if $\dim \mathcal{H}_1 = \dim \mathcal{H}_2$), then for the same argument $\ket{\psi}$ is separating also for $\mathfrak{A}'$. In the case in which $\dim \mathcal{H}_1 \neq \dim \mathcal{H}_2$, by Schmidt decomposition, we can just restrict to the sub Hilbert space $\mathcal{H}_2' \subset \mathcal{H}_2$ that has as eigenbasis $\{\ket{k'}\}$. Therefore assume w.l.o.g. that  $\dim \mathcal{H}_1 = \dim \mathcal{H}_2$.\\
Thus, by Prop. \ref{prop: 112}$, \ket{\psi}$ is cyclic and separating for both algebras $\mathfrak{A}$ and $\mathfrak{A}'$ if and only if the sets $\ket{k}$ and $\ket{k'}$ are eigenbasis of the respective Hilbert spaces. 
The Tomita operator $S_{\Psi}: \mathcal{H} \to \mathcal{H}$ becomes in this context:
\begin{equation*}
    S_{\Psi}((a \otimes 1)\ket{\Psi}) = (a^{\dagger} \otimes 1) \ket{\Psi}.
\end{equation*}
In particular, we restrict the analysis to the case when $a$ is the matrix on $\mathcal{H}_1$ that acts as:
\begin{equation*}
    a \ket{i} = \ket{j} \hspace{15pt} a \ket{k} = 0 \,\,\,\,\,\, \mathrm{for} \,\, \mathrm{all} \,\, k \neq i
\end{equation*}
with adjoint:
\begin{equation*}
     a^{\dagger} \ket{j} = \ket{i} \hspace{15pt} a^{\dagger} \ket{k} = 0 \,\,\,\,\,\, \mathrm{for} \,\, \mathrm{all} \,\, k \neq i.
\end{equation*}
Therefore, the action of $a$ and of its adjoint on the cyclic and separating vector $\ket{\Psi}$ is:
\begin{equation*}
    (a \otimes 1) \ket{\Psi} = c_i \ket{j,i'} \hspace{15pt} (a^{\dagger} \otimes 1) \ket{\Psi} = c_j \ket{i,j'}.
\end{equation*}
From the definition of the Tomita operator, we must have:
\begin{equation*}
    S_{\Psi} (c_i \ket{j,i'}) = c_j \ket{i,j'}.
\end{equation*}
That, from the antilinearity of $S_{\Psi}$:
\begin{equation*}
    S_{\Psi} \ket{j,i'} = \frac{c_j}{\overline{c}_i} \ket{i,j'},
\end{equation*}
and taking the adjoint:
\begin{equation*}
    S_{\Psi}^* \ket{i,j'} = \frac{\overline{c}_j}{c_i} \ket{j,i'}.
\end{equation*}
Finally, since $\Delta_{\Psi} = S^{*}_{\Psi} S_{\Psi}$, we have:
\begin{equation*}
    \Delta_{\Psi} \ket{j,i'} = \frac{|c_j|^2}{|c_i|^2} \ket{j,i'},
\end{equation*}
which, by the polar decomposition $S_{\Psi} = J_{\Psi} \Delta^{1/2}_{\Psi}$, gives:
\begin{align*}
    \Delta_{\Psi}^{1/2} \ket{j,i'} &= \sqrt{\frac{|c_j|^2}{|c_i|^2}}\ket{j,i'}\\
    J_{\Psi} \ket{j,i'} &= \sqrt{\frac{c_j c_i}{\overline{c}_i \overline{c}_j}} \ket{i,j'}.
\end{align*}
On the other hand, to see the decomposition of the corresponding relative modular operator, let us first introduce another state:
\begin{equation*}
    \ket{\Phi} = \sum_{\alpha=1}^{n} d_{\alpha} \ket{\alpha, \alpha'}.
\end{equation*}
However, as it is only needed for the invertibility of the Tomita operator, we do not assume the $d_{\alpha}$ to be nonzero, i.e. $\ket{\Phi}$ might not be cyclic and separating. The definition of the relative Tomita operator, for all $a \in \mathfrak{A}$, becomes :
\begin{equation*}
    S_{\Phi|\Psi} ((a \otimes 1)\ket{\Psi}) = (a^{\dagger} \otimes 1) \ket{\Phi}.
\end{equation*}
We make the same assumption as before for $a$:
\begin{align*}
    a\ket{i} = \ket{\alpha}, &\hspace{15pt} a\ket{j} = 0 \,\,\,\,\,\, \mathrm{for} \,\, \mathrm{all} \,\, j \neq i\\
    a^{\dagger}\ket{\alpha} = \ket{i}, &\hspace{15pt} a^{\dagger}\ket{\beta} = 0 \,\,\,\,\,\, \mathrm{for} \,\, \mathrm{all} \,\, \beta \neq \alpha
\end{align*}
From which follows, in analogy to the previous case, acting on the states $\ket{\Psi}$ with the above operators:
\begin{equation*}
    S_{\Phi|\Psi} \ket{\alpha,i'} = \frac{d_{\alpha}}{\overline{c}_i} \ket{i,\alpha'},
\end{equation*}
which leads to:
\begin{equation*}
    \Delta_{\Phi|\Psi} \ket{\alpha, i'} = \frac{|d_{\alpha}|^2}{|c_i|^2} \ket{\alpha, i'}.
\end{equation*}
We now have all that is needed in order to express the relative modular operator in terms of density matrices. For that, assume the vectors $\ket{\Psi}, \ket{\Phi}$ to be normalized:
\begin{equation*}
    \sum_i |c_i|^2 = \sum_{\alpha} |d_{\alpha}|^2 = 1.
\end{equation*}
Then, we define the density matrices $\rho_{12} = \ketbra{\Psi}{\Psi}$ and $\sigma_{12} = \ketbra{\Phi}{\Phi}$ with the condition:
\begin{equation*}
    \Tr_{12} \rho_{12} = \Tr_{12} \sigma_{12} = 1,
\end{equation*}
where $\Tr_{12}$ is the trace operator over the total Hilbert space $\mathcal{H}$. The notation is chosen in order to write the reduced density matrices the following way:
\begin{align*}
    \rho_1 = \sum_i |c_i|^2 \ketbra{i}{i} &\hspace{15pt} \rho_2 = \sum_i |c_i|^2 \ketbra{i'}{i'}\\
    \sigma_1 = \sum_{\alpha} |d_{\alpha}|^2 \ketbra{\alpha}{\alpha} &\hspace{15pt} \sigma_2 = \sum_{\alpha} |d_{\alpha}|^2 \ketbra{\alpha'}{\alpha'},
\end{align*}
and since, by assumption, $\ket{\Psi}$ is cyclic and separating we must have that all $c_i \neq 0$. Comparing now these expressions with those previously obtained for the Tomita and the relative Tomita operator, we see that:
\begin{equation*}
    \Delta_{\Psi} = \rho_1 \otimes \rho_2^{-1} \hspace{15pt} \Delta_{\Psi|\Phi} = \sigma_1 \otimes \rho_2^{-1}. 
\end{equation*}
Let us now look back at Araki's definition of relative entropy:
\begin{equation*}
    S_{\Phi|\Psi} = - \braket{\Psi}{\log \Delta_{\Phi|\Psi} \Psi},
\end{equation*}
which, in terms of the density matrix $\rho_{12} = \ketbra{\Psi}{\Psi}$, is:
\begin{equation*}
    = -\Tr_{12} \big( \rho_{12} \log \Delta_{\Phi|\Psi} \big).
\end{equation*}
From the previous decomposition of the modular operators, we have:
\begin{align*}
    \log \Delta_{\Phi|\Psi} &= \log (\sigma_1 \otimes \rho_2^{-1})\\
    &= \log (\sigma_1 \otimes \mathbb{1}) - \log(\mathbb{1} \otimes \rho_2),
\end{align*}
where the second equality is a consequence of:
\begin{equation*}
    e^{A \otimes \mathbb{1} + \mathbb{1} \otimes B} = e^A \otimes e^B.
\end{equation*}
By taking the logarithm, one obtains:
\begin{align*}
    \log (e^{\log A} \otimes e^{\log B}) &= \log (e^{\log(A) \otimes \mathbb{1} + \mathbb{1} \otimes \log(B)})\\
    \log(A \otimes B) &= \log(A) \otimes \mathbb{1} + \mathbb{1} \otimes \log(B). 
\end{align*}
For this reason, we have that the relative entropy becomes:
\begin{equation*}
    S_{\Phi|\Psi} = - \Tr_{12} \big(\rho_{12}(\log (\sigma_1 \otimes \mathbb{1}) - \log(\mathbb{1} \otimes \rho_2)) \big).
\end{equation*}
However, if we look at $\Tr_{12} \big( \rho_{12}\log(\sigma_1 \otimes \mathbb{1}) \big)$ and take the trace over $\mathcal{H}_2$ first, we note that:
\begin{equation*}
    \Tr_{12} \big(\rho_{12}\log(\sigma_1 \otimes \mathbb{1}) \big) = \Tr_1 \big(\rho_1 \log \sigma_1 \big).
\end{equation*}
Finally, note that $\rho_1$ is just the conjugate of $\rho_2$ under the exchange $\ket{i} \leftrightarrow \ket{i'}$, where by conjugate we mean that we consider $\ket{i'}$ to be a row vector with the corresponding column vector $\ket{i}$. To see it, choose the orthonormal frames such that all the $c_i$ in the decomposition of the cyclic and separating vector $\ket{\Psi}$, are positive numbers (i.e. there is no relative phase between $\ket{i'}$ and $\ket{i}$). In this way, we have that $J_{\Psi} \ket{i,j'} = \ket{j,i'}$, i.e. there exist an antiunitary operator that acts just as a flipping operator between basis vectors in the two factors of the tensor product Hilbert space. So, it is natural to interpret $\ket{i'}$ as the conjugate of $\ket{i}$. Therefore, we have in particular $\Tr_1 \rho_1 \log \rho_1 = \Tr_2 \rho_2 \log \rho_2$, that gives:
\begin{equation*}
    S_{\Phi|\Psi} = -\Tr_1 \rho_1(\log (\sigma_1) + \log (\rho_1)).
\end{equation*}
The above is precisely the notion of entropy as introduced by von Neumann, proving how Araki's definition is a generalization of it.

\begin{rem}
Before concluding with the section, let us remark that often the definition of entropy measures in QFT is approached differently. The idea is to introduce a $UV$ cutoff $\varepsilon$, in order to regularize the theory. The natural choice is to introduce a lattice on the spacetime, reducing  the number of degrees of freedom of the regularized QFT to a finite amount. In this way, the algebras of the theory become of type $I$ and we shall denote them as $\mathfrak{A}_{\varepsilon}(\mathcal{O})$ in order to emphasize that they depend on the regularization. In this context, an entanglement entropy, with some of the desired properties mentioned in Definition \ref{def: Rentr}, can be defined:
\begin{equation*}
    S_{\varepsilon}(\mathcal{O}) := - \Tr_{\mathfrak{A}_{\varepsilon}(\mathcal{O})}(\rho_{\mathcal{O}}\log \rho_{\mathcal{O}}).
\end{equation*}
However, it will clearly not be Poincaré invariant, as the entire theory ceases to be invariant as a consequence of the introduction of the lattice regularization. In the above definition, $\rho_{\mathcal{O}}$ is the density matrix associated with the restricted state $\omega|_{\mathfrak{A}_{\varepsilon}(\mathcal{O})}$. In this approach, also the von Neumann relative entropy will be well defined. However, even if it seems promising at first sight, one encounters the same problems. This is due to the final dependence on the regularization procedure, that we need to get rid of. In fact, if one takes the continuum limit $\varepsilon \to 0$ in this approach, unsurprisingly, this limit gives divergent results. Therefore, what is commonly done, is the removal of the divergent part as shared among all states (as it is a feature of the spacetime), by subtracting the entropy of the vacuum. For instance, this was the approach taken in \cite{Casini:2008cr} to give a rigorous proof of the Bekenstein bound \cite{Bekenstein}.
\end{rem}

In this section we have proven that Araki's relative entropy is the natural generalization of the notion of relative entropy, that we are familiar with from standard quantum mechanics in the context of QFT. From this analogy, we are allowed to interpret it as a measure of distinguishability between the different states over the abstract local algebra of observables. Moreover, we have argued why relative entropy is the only information-like notion that can be rigorously defined.\\
Since we are interested in applying entropy measures in physically relevant contexts, the next task is to find a way to easily compute it. This will be the purpose of next section where we will see how, for a specific type of excitation of a free scalar QFT, the relative entropy between two states can be computed in terms of only the relative entropy on the one-particle Hilbert space.

\subsection{Coherent states and relative entropy}\label{sec: CohBos}
Remarkably, recent works by Longo \cite{Longo:2019mhx} and Casini et al. \cite{PhysRevD.99.125020}, showed that, for a free scalar QFT, the relative entropy between the vacuum and a coherent excitation of it corresponds to the relative entropy in the one-particle Fock space. In this manner, we will be able to compute, with a rather simple expression, the relative entropy for such excitations in various contexts. This result was already applied in the context of Black-Hole physics for spacetimes admitting wedge-like regions in order to study their thermodynamical properties in more detail (see for example \cite{Kurpicz_2021} and \cite{DAngelo:2021yat}).\\\\
The goal of this section is to present these results for the free scalar field. The main objective will be to proof that the relative entropy of the second quantized theory, can in fact be computed at the one-particle level. The statements presented in this section mainly follow the work in \cite{Ciolli_2019}.\\
We start introducing the notion of entropy for a vector. Take the complex Hilbert space $\mathcal{H}$ (for instance the one yielding the bosonic Fock space for a quasifree state over the $CCR$-algebra, see Theorem \ref{thm: quasi}), and $H$ a closed, real linear subspace of $\mathcal{H}$. Define also the complex Hilbert subspaces $\mathcal{H}_0 = H \cap i H$ and $\mathcal{H}_{\infty} = H' \cap i H'$, where $H' = (iH)^{\bot_{\mathbb{R}}}$ is the orthogonal space of $iH$ with respect to the real part of the inner product defined over $\mathcal{H}$. In particular, we have the following direct sum decompositions:
\begin{equation*}
    \mathcal{H} = \mathcal{H}_0 \oplus \mathcal{H}_s \oplus \mathcal{H}_{\infty} \hspace{15pt} H = H_0 \oplus H_s \oplus H_{\infty},
\end{equation*}
where $H_0 = \{ 0 \}$ as $H$ was real, $H_{\infty} = \mathcal{H}_{\infty}$ and $H_s$ is the remaining part, where the meaning of the subscript $s$ will be clarified in a moment.\\
\begin{defn}
A real linear subspace $H \subset \mathcal{H}$ is:
\begin{itemize}
    \item A standard subspace if $H$ is closed in the topology of $\mathcal{H}$ and:
    \begin{equation*}
        \overline{H + iH} = \mathcal{H} \hspace{15pt} H \cap iH = \{ 0 \}
    \end{equation*}
    \item A standard subspace $H$ is factorial if:
    \begin{equation*}
        H \cap H' = \{ 0 \}, 
    \end{equation*}
    equivalent to saying that the direct sum $H \oplus H'$ is dense in $\mathcal{H}$. 
\end{itemize}
\end{defn}
 From the above decomposition, it follows that each closed, linear, real subspace $H$ is split into the direct sum of two trivial subspaces and a stardard one that we may denote by $H_s$ for this reason. Thus, in what follows, we shall simply deal with standard, closed, real linear subspaces. Moreover, we further restrict ourselves to spaces that are factorial.\\
On standard subspaces, we define the Tomita operator as the anti-linear operator $S: H + iH \to H + iH$, that acts as:
\begin{equation*}
    S(h_1 + ih_2) = h_1 - i h_2, 
\end{equation*}
which is manifestly involutive. For this reason, as it is invertible, we can obtain its unique polar decomposition:
\begin{equation*}
    S = J_H \Delta_H^{1/2}, 
\end{equation*}
where again $J_H$ is an anti-linear, involutive, unitary operator on $\mathcal{H}$, while $\Delta_H$ is positive, non singular, as well as self-adjoint.
\begin{lem}\label{lem: comm}
It holds that:
\begin{equation*}
    S^*_H = S_{H'},
\end{equation*}
which implies $J_H = J_{H'}$ and $\Delta_{H'} = \Delta_H^{-1}$.
\end{lem}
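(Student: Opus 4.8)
I would first fix the convention that, for the antilinear operator $S_H$, the adjoint $S_H^*$ is the antilinear operator determined by $\Re\langle S_H\xi,\eta\rangle=\Re\langle S_H^*\eta,\xi\rangle$ for $\xi\in H+iH$ and $\eta\in\mathrm{dom}(S_H^*)$. Since $H$ is closed one checks, exactly as for the bosonic Tomita operator, that $S_H$ is already closed (if $h_1^{(n)}+ih_2^{(n)}\to\xi$ and $h_1^{(n)}-ih_2^{(n)}\to\eta$, then $h_1^{(n)},h_2^{(n)}$ converge in $H$), that it is injective with dense range $H+iH$, and that $S_H^2=\mathbb{1}$ on its domain; hence $S_H^{-1}=S_H$ and, taking adjoints, $S_H^*$ is again a closed, densely defined antilinear involution. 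The plan is: (i) show $S_{H'}\subseteq S_H^*$ by a direct computation; (ii) upgrade this to an equality of operators using the involution structure; (iii) read off the statements about $J$ and $\Delta$ from uniqueness of the polar decomposition.

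For (i), recall that $H'=(iH)^{\bot_{\mathbb R}}$ is equivalent to $\Im\langle h,k\rangle=0$ for all $h\in H$, $k\in H'$. Take $h=h_1+ih_2$ with $h_i\in H$ and $k=k_1+ik_2$ with $k_i\in H'$, and expand $\Re\langle S_Hh,k\rangle=\Re\langle h_1-ih_2,\,k_1+ik_2\rangle$ and $\Re\langle S_{H'}k,h\rangle=\Re\langle k_1-ik_2,\,h_1+ih_2\rangle$ into their real bilinear summands. The terms of the form $\Re\langle h_i,k_j\rangle$ and $\Re\langle ih_i,ik_j\rangle$ match on both sides, and the difference of the two expressions is a combination of the cross terms $\Im\langle h_i,k_j\rangle$, all of which vanish because $h_i\in H$ and $k_j\in H'$. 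Hence $\Re\langle S_Hh,k\rangle=\Re\langle S_{H'}k,h\rangle$, which is precisely the statement that $k\in\mathrm{dom}(S_H^*)$ with $S_H^*k=S_{H'}k$; so $S_{H'}\subseteq S_H^*$.

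For (ii), I would identify the $\pm1$ eigenspaces of the antilinear involution $S_H^*$. If $S_H^*\eta=\eta$, applying the adjoint relation to $\xi=ih$ with $h\in H$ (so that $S_H(ih)=-ih$) yields $\Im\langle h,\eta\rangle=0$ for all $h\in H$, i.e. $\eta\in(iH)^{\bot_{\mathbb R}}=H'$; if $S_H^*\eta=-\eta$, applying it to $\xi=h\in H$ yields $\Re\langle h,\eta\rangle=0$, i.e. $\eta\in H^{\bot_{\mathbb R}}=iH'$. Since $S_{H'}\subseteq S_H^*$ fixes $H'$ and sends $iH'$ to $-iH'$, these inclusions are equalities: $\ker(\mathbb 1-S_H^*)=H'$ and $\ker(\mathbb 1+S_H^*)=iH'$. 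Because $S_H^*$ is an antilinear involution mapping $\mathrm{dom}(S_H^*)$ bijectively onto itself, every $\eta$ in the domain splits as $\eta=\tfrac12(\eta+S_H^*\eta)+\tfrac12(\eta-S_H^*\eta)$ with the two summands lying in these eigenspaces; hence $\mathrm{dom}(S_H^*)=H'+iH'=\mathrm{dom}(S_{H'})$, and on this domain $S_H^*(k_1+ik_2)=k_1-ik_2=S_{H'}(k_1+ik_2)$. This proves $S_H^*=S_{H'}$.

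For the consequences, from $S_H^2=\mathbb 1$ one obtains, exactly as was done for the modular operator of a von Neumann algebra, that $J_H^2=\mathbb 1$ and $J_H\Delta_H^{1/2}J_H=\Delta_H^{-1/2}$, whence $S_H^*=\Delta_H^{1/2}J_H=J_H\Delta_H^{-1/2}$. This exhibits $S_{H'}=S_H^*$ as an antiunitary times a positive self-adjoint operator, so uniqueness of the polar decomposition forces $J_{H'}=J_H$ and $\Delta_{H'}^{1/2}=\Delta_H^{-1/2}$, i.e. $\Delta_{H'}=\Delta_H^{-1}$. The only genuinely delicate point is step (ii): the inclusion $S_{H'}\subseteq S_H^*$ is elementary, but ruling out that $S_H^*$ is a proper extension of $S_{H'}$ really uses the involutivity of $S_H^*$ together with the eigenspace description above, and one has to be careful throughout with the convention chosen for the adjoint of an antilinear operator.
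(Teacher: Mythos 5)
Your proof is correct and follows essentially the same route as the paper's: the inclusion $S_{H'}\subseteq S_H^*$ by direct expansion of the inner product using $\Im(\xi,\eta)=0$ for $\xi\in H$, $\eta\in H'$, the reverse inclusion by identifying the fixed-point set of the involution $S_H^*$ with $H'$, and the statements about $J$ and $\Delta$ from uniqueness of the polar decomposition. Your step (ii) is in fact slightly more explicit than the paper's, which simply asserts that the fixed-point set $K$ of $S_H^*$ is a standard subspace with $S_H^*=S_K$, whereas you justify that $\mathrm{dom}(S_H^*)=H'+iH'$ via the $\pm 1$ eigenspace decomposition.
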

\begin{proof}
Pick $\xi_1, \xi_2 \in H$ and $\xi'_1, \xi'_2 \in H'$ such that we have:
\begin{align*}
    (S_H(\xi_1 + i\xi_2), \xi'_1 + i\xi'_2) &= (\xi_1 - i\xi_2, \xi'_1 + i\xi'_2)\\
    &= (\xi_1,\xi'_1) - (\xi_2, \xi'_2) + i(\xi_1, \xi'_2) + i(\xi_2, \xi'_1)\\
    &= (\xi'_1 - i \xi'_2, \xi_1 + i\xi_2)\\
    &= (S_{H'}(\xi'_1 +i\xi'_2), \xi_1 + i\xi_2),
\end{align*}
proving that $S_{H'} \subset S^*_{H}$.\\
For the reverse inequality, notice that $S^*_H$ is a closed anti-linear involution. Setting $K=\{ \xi \in D(S^*_H): S^*_H \xi = \xi \}$, we see that $K$ is a standard subspace, $H' \subset K$ and $S^*_H = S_K$. With $\xi \in H$ and $\eta \in K$ we have:
\begin{align*}
    (\xi, \eta) &= (\xi, S_K \eta)\\
    &= (\xi, S^*_H \eta)\\
    &= (\eta, S_H \xi)\\
    &= (\eta, \xi).
\end{align*}
This implies $\Im\big( (\xi, \eta)\big) = 0$, so we also have $K \subset H'$ implying $S^*_H = S_{H'}$.\\
For the second part, notice that the result just proven implies that:
\begin{align*}
    S^*_H &= J_{H'} \Delta_{H'}^{1/2}\\
    &= \Delta_H^{1/2} J_H\\
    &= J_H \Delta_H^{-1/2}, 
\end{align*}
which implies $J_H = J_{H'}$ and also $\Delta_{H'}^{1/2} = \Delta_H^{-1/2}$ by uniqueness of polar decomposition.
\end{proof}
With this, we can prove the following result regarding the modular conjugation and flow:
\begin{prop}
For all $t \in \mathbb{R}$:
\begin{equation*}
    J_H H = H' \hspace{15pt} \Delta_H^{it} H = H. 
\end{equation*}
\end{prop}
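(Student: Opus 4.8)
The plan is to deduce the statement from Lemma~\ref{lem: comm} together with $S_H^2=\mathbb{1}$, mimicking on the real-subspace level the argument behind the Tomita--Takesaki Theorem~\ref{thm: TomTak}. Recall from the polar decomposition $S_H=J_H\Delta_H^{1/2}$ and from $S_H^2=\mathbb{1}$ that $J_H$ is an antilinear involution with $J_H^*=J_H$, that $J_H\Delta_H^{1/2}J_H=\Delta_H^{-1/2}$, and (from $S_H^*=S_{H'}$) that $S_{H'}=\Delta_H^{1/2}J_H=J_H\Delta_H^{-1/2}$ and $\Delta_{H'}=\Delta_H^{-1}$. Using antilinearity of $J_H$ and the Borel functional calculus, $J_H f(\Delta_H)J_H=\overline{f}(\Delta_H^{-1})$; applying this to $f(\lambda)=\lambda^{it}$ and to $f(\lambda)=\lambda^{-1/2}$ gives respectively $J_H\Delta_H^{it}=\Delta_H^{it}J_H$ and $J_H\Delta_H^{-1/2}J_H=\Delta_H^{1/2}$, the only two consequences I shall need.

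First I would prove $\Delta_H^{it}H=H$. Since $\Delta_H^{it}$ is unitary and commutes with $\Delta_H$ and with $J_H$, one computes on $D(S_H)=D(\Delta_H^{1/2})$:
\begin{equation*}
\Delta_H^{it}\,S_H\,\Delta_H^{-it}=\Delta_H^{it}J_H\Delta_H^{1/2}\Delta_H^{-it}=J_H\Delta_H^{it}\Delta_H^{1/2}\Delta_H^{-it}=J_H\Delta_H^{1/2}=S_H,
\end{equation*}
so $\Delta_H^{it}$ leaves $D(S_H)$ invariant and commutes with $S_H$. Hence, if $S_H\xi=\xi$ then $S_H(\Delta_H^{it}\xi)=\Delta_H^{it}S_H\xi=\Delta_H^{it}\xi$, i.e.\ $\Delta_H^{it}\xi\in H$; thus $\Delta_H^{it}H\subseteq H$, and replacing $t$ by $-t$ and applying $\Delta_H^{it}$ gives the reverse inclusion, so $\Delta_H^{it}H=H$ for all $t\in\mathbb{R}$.

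Next I would prove $J_HH=H'$. If $\xi\in H$ then $S_H\xi=J_H\Delta_H^{1/2}\xi=\xi$, hence $\Delta_H^{1/2}\xi=J_H\xi$; in particular $J_H\xi\in\mathrm{Ran}\,\Delta_H^{1/2}\subseteq D(\Delta_H^{-1/2})=D(S_{H'})$ by non-singularity of $\Delta_H$. Then
\begin{equation*}
S_{H'}(J_H\xi)=J_H\Delta_H^{-1/2}J_H\xi=\Delta_H^{1/2}\xi=J_H\xi,
\end{equation*}
so $J_H\xi\in H'$ and $J_HH\subseteq H'$. To upgrade this to equality, note that $S_H$ is closed, so $S_H^{**}=S_H$, i.e.\ $S_{H''}=S_H$ and therefore $H''=H$; since also $J_{H'}=J_H$ by Lemma~\ref{lem: comm}, applying the inclusion just proven to $H'$ in place of $H$ gives $J_HH'=J_{H'}H'\subseteq H''=H$. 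Combining $J_HH\subseteq H'$ and $J_HH'\subseteq H$ with $J_H^2=\mathbb{1}$ yields $H'=J_H(J_HH')\subseteq J_HH\subseteq H'$, hence $J_HH=H'$.

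The routine but delicate points I would check carefully are the domain bookkeeping for the unbounded operators $\Delta_H^{\pm1/2}$ (that $\Delta_H^{it}$ preserves $D(\Delta_H^{1/2})$, that $\mathrm{Ran}\,\Delta_H^{1/2}\subseteq D(\Delta_H^{-1/2})$, and that the composition identities above hold as genuine operator identities on the correct domains), together with the identity $H''=H$, which rests only on $S_H$ being closed. The remaining computations are the functional-calculus identities for the antilinear $J_H$, exactly of the type already used in the proof of Lemma~\ref{lem: comm}. I expect no genuinely hard step; the one thing to resist is treating $\Delta_H^{1/2}$ as if it were bounded.
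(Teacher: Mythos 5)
Your proof is correct and follows essentially the same route as the paper: the identity $\Delta_H^{it}H=H$ is obtained exactly as in the text from the commutation of $\Delta_H^{it}$ with $S_H$, and the upgrade of $J_HH\subseteq H'$ to an equality via $J_H=J_{H'}$ and $H''=H$ is the paper's argument verbatim. The only local difference is in establishing the inclusion $J_HH\subseteq H'$: you verify directly that $J_H\xi$ is a fixed point of $S_{H'}=J_H\Delta_H^{-1/2}$ (using $\Delta_H^{1/2}\xi=J_H\xi$ for $\xi\in H$), whereas the paper shows that $(J_H\xi,\eta)$ is real for $\xi,\eta\in H$ by polarizing $(J_H\xi,\xi)=(\Delta_H^{1/2}\xi,\xi)\in\mathbb{R}$ and then invokes $H'=(iH)^{\bot_{\mathbb{R}}}$; both are standard, and yours has the minor advantage of staying entirely within the fixed-point characterization of standard subspaces at the cost of the domain bookkeeping you correctly flag.
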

\begin{proof}
Following the same line of reasoning that we used when discussing the Tomita operator for von Neumann algebras, we have that $\Delta_H^{is}$ commutes with $J_H$ (and obviously also with $\Delta_H$ itself). Then, by picking any $\xi \in H$, we have:
\begin{equation*}
    S_H \Delta^{it}_H \xi = \Delta^{it}_H S_H \xi = \Delta^{it}_H \xi, 
\end{equation*}
which means $\Delta^{it}_H H \subset H$ for any $t \in \mathbb{R}$ and thus $\Delta^{it}_H H = H$.\\
Concerning the first result, pick a $\xi \in H$ and compute:
\begin{align*}
    (J_H \xi, \xi) &= (J_H S_H \xi, \xi)\\
    &= (\Delta_H^{1/2} \xi, \xi) \in \mathbb{R}.
\end{align*}
Thus, for all $\xi, \eta \in H$, we have that:
\begin{equation*}
    (J_H(\xi + \eta), \xi + \eta) = (J_H \xi, \xi) +(J_H \eta, \eta) + (J_H \xi, \eta) + (J_H \eta, \xi).
\end{equation*}
But now, since this expression must be real, we have that $\Im\big( (J_H \xi, \eta) \big) = 0$, namely $J_H H \subset H'$. Moreover, as $J_H = J_{H'}$ we also have that $J_{H'} H' \subset H'' = H$, namely $H' \subset J_H H$. 
\end{proof}
We started off assuming $H$ to be a standard factorial subspace of $\mathcal{H}$. This allows us to take $k \in H \oplus H'$ and define:
\begin{equation*}
    P_H k = h, 
\end{equation*}
where we have a canonical decomposition $k = h + h'$ with $h \in H$ and $h' \in H'$. This defines a real, linear, densely defined operator from $\mathcal{H}$ to $\mathcal{H}$ that we call the \textit{cutting projection} relative to $H$. The cutting projection will be used to define the relative entropy for $H$, but first we need to prove some of its properties and give a useful formula that expresses $P_H$.
\begin{prop}[\textbf{Properties of $P_H$}]
We have:
\begin{itemize}
    \item $P_H$ is a real linear, closed, densely defined operator
    \item $P_H^2 = P_H$
    \item $P_H + P_{H'} = \mathbb{1}|_{H+H'}$
    \item $P^*_H = P_{iH} = -iP_Hi$, where the adjoint is taken with respect to the real part of the inner product over $\mathcal{H}$
    \item $\Delta^{is}_H P_H = P_H \Delta^{is}_H$
\end{itemize}
\end{prop}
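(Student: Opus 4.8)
The plan is to read all five properties off the defining splitting. Since $H$ is factorial, $H\cap H'=\{0\}$, so each $k\in H+H'$ decomposes uniquely as $k=h+h'$ with $h\in H$, $h'\in H'$, and by definition $P_Hk=h$; factoriality also gives density of $H\oplus H'$, so $P_H$ is densely defined, and uniqueness of the splitting makes $P_H$ real linear. Beyond this I would only invoke the modular facts already established — $S_H^{*}=S_{H'}$, $J_H=J_{H'}$, $\Delta_{H'}=\Delta_H^{-1}$ (from the previous Lemma) and $\Delta_H^{it}H=H$, $J_HH=H'$ (from the preceding Proposition) — together with the standard identity $H''=H$. The latter I would dispatch in one line: $S_H$ is a closed anti-linear involution (if $h_1^{(n)}+ih_2^{(n)}\to\xi$ and $h_1^{(n)}-ih_2^{(n)}\to\eta$, then $h_1^{(n)},h_2^{(n)}$ converge in $H$), hence $S_H^{**}=S_H$; combining $S_{H'}=S_H^{*}$ with $S_{H'}^{*}=S_{H''}$ (the previous Lemma applied to $H'$) gives $S_{H''}=S_H$, i.e.\ $H''=H$.

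\textbf{The elementary properties.} For closedness: if $k_n\to k$ and $P_Hk_n\to\xi$, then from $k_n=h_n+h'_n$ we get $h_n=P_Hk_n\to\xi$ and $h'_n=k_n-h_n\to k-\xi$; since $H$ and $H'$ are closed, $\xi\in H$ and $k-\xi\in H'$, so $k\in H+H'$ and $P_Hk=\xi$, i.e.\ the graph is closed. Idempotency is immediate because $h=h+0$ is already split. For $P_H+P_{H'}=\mathbb{1}|_{H+H'}$ I would use $H''=H$: the decomposition of $k=h+h'$ relevant to $P_{H'}$ splits $k$ into its $H'$-part and its $(H')'=H$-part, which is precisely $k=h'+h$, so $P_{H'}k=h'$ and $P_Hk+P_{H'}k=k$.

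\textbf{The adjoint.} I would first note that multiplication by $i$ is $\mathbb{R}$-orthogonal for $\Re(\cdot,\cdot)$, so $iH$ is again standard and factorial with $(iH)'=iH'$; substituting $k\mapsto ik$ in the definition then yields $P_{iH}=-iP_Hi$ at once. Next, for $k=h+h'\in H+H'$ and $\ell=i\eta+i\eta'\in iH+iH'$ (with $\eta\in H$, $\eta'\in H'$), the orthogonality relations $H\perp_{\mathbb{R}}iH'$ and $iH\perp_{\mathbb{R}}H'$ — which are exactly $H'=(iH)^{\perp_{\mathbb{R}}}$ and its $i$-image — give $\Re(P_Hk,\ell)=\Re(h,i\eta)=\Re(k,P_{iH}\ell)$, hence $P_{iH}\subseteq P_H^{*}$. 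Upgrading this to the equality $P_H^{*}=P_{iH}$, i.e.\ showing $\mathrm{dom}(P_H^{*})$ is no larger than $iH+iH'$, is the point I expect to be the main obstacle. I would obtain it by combining $P_H^{*}+P_{H'}^{*}\subseteq(P_H+P_{H'})^{*}=\mathbb{1}$ with the companion inclusions $P_{iH}\subseteq P_H^{*}$, $P_{iH'}\subseteq P_{H'}^{*}$ and the identity $P_{iH}+P_{iH'}=\mathbb{1}|_{iH+iH'}$, which forces both inclusions to be equalities; alternatively one can appeal to the explicit closed form for $P_H$ proved in the next lemma, or to \cite{Ciolli_2019}.

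\textbf{Commutation with the modular flow, and the crux.} Finally, from the previous Lemma $\Delta_{H'}^{is}=\Delta_H^{-is}$, and from the preceding Proposition $\Delta_{H'}^{is}H'=H'$, so $\Delta_H^{is}H'=H'$; together with $\Delta_H^{is}H=H$ and the unitarity of $\Delta_H^{is}$ this shows $\Delta_H^{is}$ maps $H+H'$ onto itself and carries the splitting $k=h+h'$ to $\Delta_H^{is}h+\Delta_H^{is}h'$, whence $P_H\Delta_H^{is}k=\Delta_H^{is}h=\Delta_H^{is}P_Hk$. Everything except the adjoint-domain equality is pure bookkeeping with the decomposition; the genuinely delicate step, as flagged above, is ruling out that $\mathrm{dom}(P_H^{*})$ is strictly larger than $iH+iH'$, which is where a little real analysis (sums of unbounded operators, or the closed-form expression for $P_H$) must enter.
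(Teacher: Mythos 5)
Your overall route coincides with the paper's for four of the five items: real linearity, idempotency and the complementarity $P_H+P_{H'}=\mathbb{1}|_{H+H'}$ are read off the unique splitting $k=h+h'$ exactly as in the text (the paper does not even bother to spell out $H''=H$, which you rightly supply); your closedness argument via $h_n=P_Hk_n\to\xi$, $h_n'=k_n-h_n\to k-\xi$ is verbatim the paper's; and the commutation with $\Delta_H^{is}$ is obtained in both cases from the invariance $\Delta_H^{is}H=H$, $\Delta_H^{is}H'=H'$ (your extra remark that the latter comes from $\Delta_{H'}^{is}=\Delta_H^{-is}$ is a harmless elaboration).

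The one place where you diverge is the adjoint identity, and there your primary argument has the gap you yourself flag. Knowing $P_{iH}\subseteq P_H^{*}$, $P_{iH'}\subseteq P_{H'}^{*}$, $P_{iH}+P_{iH'}=\mathbb{1}|_{iH+iH'}$ and $P_H^{*}+P_{H'}^{*}\subseteq\mathbb{1}$ only tells you the three operators are mutually consistent on $iH+iH'$; it does not by itself exclude a vector $v\in\mathrm{dom}(P_H^{*})\setminus(iH+iH')$, so nothing is ``forced.'' The paper closes exactly this hole by a different mechanism: it notes $P_H^{*}$ is again idempotent and computes
\begin{equation*}
\overline{\mathrm{Ran}(P_H^{*})}=\ker(P_H)^{\perp_{\mathbb{R}}}=(H')^{\perp_{\mathbb{R}}}=iH,
\qquad
\ker(P_H^{*})=\mathrm{Ran}(P_H)^{\perp_{\mathbb{R}}}=H^{\perp_{\mathbb{R}}}=iH',
\end{equation*}
which identifies $P_H^{*}$ with the cutting projection of $iH$. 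If you want to salvage your sum argument rather than adopt the paper's, the range inclusion $\mathrm{Ran}(P_H^{*})\subseteq iH$ (and its primed analogue) is precisely the missing ingredient: writing $v=P_H^{*}v+P_{H'}^{*}v$ for $v$ in the common domain then lands $v$ in $iH+iH'$, after which uniqueness of the splitting gives $P_H^{*}v=P_{iH}v$. Your fallbacks (the explicit formula for $P_H$ in the following theorem, or the cited literature) would also work, but they import machinery the paper deliberately postpones until after this proposition.
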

\begin{proof}
The second and third properties are trivially fulfilled due to the very definition of $P_H$.\\
We prove the first statement. Let $k_n \in H + H'$ be a sequence such that $k_n \to k$ and $P_H k_n \to h$. Decompose the elements of the sequence: $k_n = h_n + h'_n$ and $h_n + h_n' \to h$, where $h_n \in H$ and $h_n' \in H'$. Then, since $H$ is closed, we have $h_n \to h$ with $h \in H$ and $h'_n \to h'=k-h \in H'$.  So, $k \in \mathrm{Dom}(P_H)$ and we have $P_H k = h$, proving that $P_H$ is closed.\\
We have $P_H^2 = P_H$ and $P^{*2}_H = P_H^*$. Then $P^*_H = P_{iH}$ because:
\begin{equation*}
    \overline{\mathrm{Ran}(P^*_H)} = \ker(P_H)^{\bot_{\mathbb{R}}} = H'^{\bot_{\mathbb{R}}} = iH, 
\end{equation*}
where we used the fact that the range of $P_H^*$ must be the set of vectors $v$ such that if we take any $u \in \ker(P_H)$
\begin{equation*}
    0 = (v, P_H u). 
\end{equation*}
However, taking the adjoint, this implies:
\begin{equation*}
    0 = \Re\big( (P_H^* v, u) \big)
\end{equation*}
and this equality is true for all $u \in \ker(P_H)$. Then, $P_H^* v \in \ker(P_H)^{\bot_{\mathbb{R}}}$ implying the first equality: $\overline{\mathrm{Ran}(P_H^*)} = \ker(P_H)^{\bot_{\mathbb{R}}}$. Moreover, the Kernel is $H'$ and in the last step we made use of the fact that $H' = (iH)^{\bot_{\mathbb{R}}}$. Along the same line of reasoning, we find:
\begin{equation*}
    \ker(P_H^*) = \mathrm{Ran}(P_H)^{\bot_{\mathbb{R}}} = (H)^{\bot_{\mathbb{R}}} = (iH').
\end{equation*}
The last statement follows from the fact that $\Delta^{is}_H H = H$ and $\Delta^{is}_{H'} H' = H'$
\end{proof}
Let us pick an $\epsilon > 0$ and denote by $E_{\epsilon}$ the spectral projection of $\Delta_H$ relative to the subset $(\epsilon, 1 - \epsilon) \cup ((1-\epsilon)^{-1}, \epsilon^{-1})$ of $\mathbb{R}$ and further denote by $\mathcal{H}_{\epsilon} = E_{\epsilon} \mathcal{H}$ the corresponding spectral subspace. Evidently, we have that $E_{\epsilon} H \subset H$ as $E_{\epsilon} H' \subset H'$. Consider also the dense, complex linear subspace of $\mathcal{H}$ given by:
\begin{equation*}
    \mathcal{D}_0 := \bigcup_{\epsilon > 0} \mathcal{H}_{\epsilon}. 
\end{equation*}
Let us consider the functions for $\lambda \in (0, +\infty)$:
\begin{equation*}
    a(\lambda) = \lambda^{-1/2}(\lambda^{-1/2} - \lambda^{1/2})^{-1} \hspace{15pt} b(\lambda) = (\lambda^{-1/2} - \lambda^{1/2})^{-1}
\end{equation*}
and set:
\begin{equation*}
    \mathcal{D} := \mathrm{Dom}(a(\Delta_H)) \cap \mathrm{Dom}(b(\Delta_H)). 
\end{equation*}
Clearly, $\mathcal{D}_0 \subset \mathcal{D}$ and $\mathcal{D}_0$ is a core (a subset of the domain on which the closure of the operator is the operator itself) for both $a(\Delta)$ and $b(\Delta)$. Since the domain of the sum of two operators is the intersection of their domains, $\mathcal{D}$ is the domain of $P_H$. This is expressed via the following:
\begin{thm}
We have $\mathcal{D} \subset H + H'$ and:
\begin{equation}\label{eq: projection}
    P_H|_{\mathcal{D}} = \Delta_H^{-1/2}(\Delta_H^{-1/2} - \Delta_H^{1/2})^{-1} + J_H(\Delta_H^{-1/2} - \Delta_H^{1/2})^{-1}. 
\end{equation}
Moreover, $\mathcal{D}$ is a core for $P_H$, namely:
\begin{equation*}
    P_H = \overline{(a(\Delta_H) + J_H b(\Delta_H))}.
\end{equation*}
Indeed, already $\mathcal{D}_0$ is a core for $P_H$.
\end{thm}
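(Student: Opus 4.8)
The plan is to verify the displayed formula \eqref{eq: projection} pointwise on $\mathcal{D}$ and then upgrade it to an identity of closed operators via a core argument. The two facts that drive everything are, first, the characterizations of $H$ and $H'$ as fixed-point sets: $H = \{\eta : S_H\eta = \eta\}$ and, by Lemma \ref{lem: comm}, $H' = \{\eta : S_H^*\eta = \eta\} = \{\eta : J_H\Delta_H^{-1/2}\eta = \eta\}$, since $S_H^* = S_{H'} = J_{H'}\Delta_{H'}^{1/2} = J_H\Delta_H^{-1/2}$; and second, the modular relations $J_H^2 = \mathbb{1}$ and $J_H\Delta_H J_H = \Delta_H^{-1}$, whence $J_H g(\Delta_H) J_H = \overline{g}(\Delta_H^{-1})$ for every Borel function $g$ (these are proved just as in the von Neumann algebra case). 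It is convenient to record that $a(\lambda) = (1-\lambda)^{-1}$ and $b(\lambda) = \lambda^{1/2}(1-\lambda)^{-1}$, which makes the spectral identities $\Delta_H^{1/2} a(\Delta_H) = b(\Delta_H)$, $\Delta_H^{-1/2} b(\Delta_H) = a(\Delta_H)$ and $\Delta_H^{1/2} b(\Delta_H) = \Delta_H a(\Delta_H) = a(\Delta_H) - \mathbb{1}$ immediate on the relevant domains.

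I would first fix $\xi$ in the bounded core $\mathcal{D}_0 = \bigcup_{\epsilon>0}\mathcal{H}_\epsilon$, put $h := a(\Delta_H)\xi + J_H b(\Delta_H)\xi$ and $h' := \xi - h$, and compute directly. Using $J_H\Delta_H^{1/2}J_H = \Delta_H^{-1/2}$ one obtains
\begin{equation*}
J_H\Delta_H^{1/2} h = J_H b(\Delta_H)\xi + \Delta_H^{-1/2} b(\Delta_H)\xi = a(\Delta_H)\xi + J_H b(\Delta_H)\xi = h,
\end{equation*}
so $h$ is a fixed point of $S_H$, i.e. $h\in H$. Rewriting $h' = (\mathbb{1} - a(\Delta_H))\xi - J_H b(\Delta_H)\xi = -\Delta_H a(\Delta_H)\xi - J_H b(\Delta_H)\xi$ and using $J_H\Delta_H^{-1/2}J_H = \Delta_H^{1/2}$ together with the identities above, one checks in the same way that $J_H\Delta_H^{-1/2} h' = h'$, hence $h'\in H'$. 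Therefore $\xi = h + h' \in H + H'$ with $P_H\xi = h$, which is \eqref{eq: projection} on $\mathcal{D}_0$. Since $\mathcal{D} = \mathrm{Dom}(a(\Delta_H))\cap\mathrm{Dom}(b(\Delta_H))$ is precisely the subspace for which the vector $h$ lies in $\mathrm{Dom}(\Delta_H^{1/2})$ (so that $S_H h$ and $S_H^* h'$ make sense), the spectral identities extend from the core $\mathcal{D}_0$ by closedness and the computation goes through verbatim, yielding $\mathcal{D}\subset H + H'$ and \eqref{eq: projection} on all of $\mathcal{D}$.

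For the core statement, set $T := a(\Delta_H) + J_H b(\Delta_H)$ with domain $\mathcal{D}$. Since the spectral set defining $\mathcal{H}_\epsilon$ is symmetric under $\lambda\mapsto\lambda^{-1}$, the projection $E_\epsilon$ commutes with $\Delta_H$ and with $J_H$, hence with $a(\Delta_H)$, $b(\Delta_H)$ and $T$; for $\xi\in\mathcal{D}$ one has $E_\epsilon\xi\to\xi$ and $T E_\epsilon\xi = E_\epsilon T\xi\to T\xi$, so $\mathcal{D}_0$ is a core for $T$ and $\overline{T|_{\mathcal{D}_0}} = \overline{T|_{\mathcal{D}}} = \overline{T}$. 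The inclusion $\overline{T}\subseteq P_H$ then follows from the previous paragraph together with the fact, established earlier, that $P_H$ is closed. For the reverse inclusion I would take any $k = h + h'\in H + H' = \mathrm{Dom}(P_H)$ and note that $E_\epsilon h\in H$ and $E_\epsilon h'\in H'$ (one already has $E_\epsilon H\subseteq H$, $E_\epsilon H'\subseteq H'$), so $E_\epsilon k\in\mathcal{H}_\epsilon\subseteq\mathcal{D}_0$ with $E_\epsilon k\to k$ and $P_H(E_\epsilon k) = E_\epsilon h\to h = P_H k$; hence $k\in\mathrm{Dom}(\overline{T})$ and $\overline{T}k = P_H k$. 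This gives $P_H = \overline{T} = \overline{(a(\Delta_H) + J_H b(\Delta_H))}$, with $\mathcal{D}_0$ (a fortiori $\mathcal{D}$) a core.

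The main obstacle I expect is the domain bookkeeping in the middle step: the spectral identities such as $\Delta_H^{1/2} a(\Delta_H) = b(\Delta_H)$ and $J_H\Delta_H^{1/2}J_H = \Delta_H^{-1/2}$ must be applied only where legitimate, so the clean route is to carry out all the algebra first on $\mathcal{D}_0$, where every function of $\Delta_H$ in sight is bounded, then verify that the candidate $h$ lands in $\mathrm{Dom}(\Delta_H^{1/2})$ exactly when $\xi\in\mathcal{D}$, and finally pass to the closure. Once $a$ and $b$ are rewritten as $(1-\lambda)^{-1}$ and $\lambda^{1/2}(1-\lambda)^{-1}$, everything else is routine bounded-operator manipulation.
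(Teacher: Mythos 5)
Your proof is correct and follows essentially the same architecture as the paper's: establish the formula on the spectral subspaces $\mathcal{H}_\epsilon$ where all functions of $\Delta_H$ are bounded, then extend to $\mathcal{D}$ and to all of $H+H'$ via the commuting cutoffs $E_\epsilon$ and the closedness of $P_H$. The only (harmless, and arguably cleaner) difference is in the pointwise verification: the paper starts from a given decomposition $k=h+h'$ and checks that the right-hand side sends $h\mapsto h$ and $h'\mapsto 0$, whereas you define $h$ as the output of the formula and verify $S_H h=h$, $S_{H'}(\xi-h)=\xi-h$ directly, which has the advantage of producing the decomposition (hence $\mathcal{D}_0\subset H+H'$) rather than presupposing it.
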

\begin{proof}
First we assume that $0,1 \not \in \mathrm{sp}(\Delta_H)$, thus $\mathrm{sp}(\Delta_H)$ is bounded. Then $H + H'$ and $\mathcal{D}$ are equal to $\mathcal{H}$, as we are avoiding any divergences of $a(\lambda), b(\lambda)$ for any vector on which we act and thus both $a(\Delta_H)$, $b(\Delta_H)$ are bounded operators. Any $k \in \mathcal{H}$ can be written as $k = h + h'$, with $h\in H$ and $h' \in H'$. As both $H$ and $H'$ are real subspaces, we have: $J_H \Delta_H^{1/2} h = h$, $J_{H'} \Delta_{H}^{-1/2} h' = h'$. Then:
\begin{align*}
    \Delta_H^{-1/2}(\Delta_H^{-1/2} &- \Delta_H^{1/2})^{-1}h + J_H (\Delta_H^{-1/2} - \Delta_H^{1/2})^{-1}h\\
    &=  \Delta_H^{-1/2}(\Delta_H^{-1/2} - \Delta_H^{1/2})^{-1}h - (\Delta_H^{-1/2} - \Delta_H^{1/2})^{-1} J_H h\\
    &=  \Delta_H^{-1/2}(\Delta_H^{-1/2} - \Delta_H^{1/2})^{-1}h - (\Delta_H^{-1/2} - \Delta_H^{1/2})^{-1} \Delta_H^{1/2}h\\
    &= h
\end{align*}
and:
\begin{align*}
     \Delta_H^{-1/2}(\Delta_H^{-1/2} &- \Delta_H^{1/2})^{-1}h' + J_H (\Delta_H^{-1/2} - \Delta_H^{1/2})^{-1}h'\\
    &=  \Delta_H^{-1/2}(\Delta_H^{-1/2} - \Delta_H^{1/2})^{-1}h' - (\Delta_H^{-1/2} - \Delta_H^{1/2})^{-1} J_Hh'\\
    &=  \Delta_H^{-1/2}(\Delta_H^{-1/2} - \Delta_H^{1/2})^{-1}h' - (\Delta_H^{-1/2} - \Delta_H^{1/2})^{-1} \Delta_H^{-1/2}h\\
    &= 0. 
\end{align*}
This proves that the sum of these two operators is $P_H$.\\
In the general case, we consider the orthogonal decomposition $\mathcal{H} = \mathcal{H}_{\epsilon} \oplus \mathcal{K}_{\epsilon}$, where $\mathcal{K}_{\epsilon} = \mathcal{H}_{\epsilon}^{\bot}$ is the complementary spectral subspace of $\Delta_H$. We also have the corresponding decomposition $H = H_{\epsilon} \oplus K_{\epsilon}$. As $0,1 \not\in \mathrm{sp}(\Delta_H|_{\mathcal{H}_{\epsilon}})$ (because of our construction of $\mathcal{H}_{\epsilon}$ with $\epsilon > 0$ which does not include $0,1$ in the spectrum), we conclude that $\mathcal{D}_0 \subset H + H'$ and thus Eq. $\eqref{eq: projection}$ holds true with $\mathcal{D}_0$ instead of $\mathcal{D}$ by following the same proof that we used above.\\
Let $k \in \mathcal{D}$ and call $k_{\epsilon} = E_{\epsilon} k \in \mathcal{H}_{\epsilon}$. As $\epsilon \to 0$ we have $k_{\epsilon} \to k$ and:
\begin{align*}
    a(\Delta_H) k_{\epsilon} &\to a(\Delta_H)k\\
    b(\Delta_H) k_{\epsilon} &\to b(\Delta_H)k. 
\end{align*}
If we apply Eq. $\eqref{eq: projection}$ to $k_{\epsilon}$ we have that the limiting element $P_H k_{\epsilon} \to P_H k$ must be well defined as a consequence of $P_H$ being closed. So, $k \in \mathrm{Dom}(P_H) = H + H'$ and:
\begin{equation*}
    P_H k = a(\Delta_H)k + J_Hb(\Delta_H)k,
\end{equation*}
which shows Eq. \eqref{eq: projection} holds on $H + H'$.\\
Let now $k \in H +H'$. As $P_H$ commutes with $E_{\epsilon}$ as well, we have that $k_{\epsilon} \to k$ and $P_H k_{\epsilon} = P_H E_{\epsilon} k = E_{\epsilon} P_H k \to P_H k$. This shows that $\mathcal{D}_0 = \mathcal{K}_{\epsilon}$ is a core for $P_H$ as, by taking the closure, we include the limiting elements $k$ obtaining $P_H$: $\overline{P_H |_{\mathcal{D}_0}} = P_H$.
\end{proof}
The functions $a(\lambda), b(\lambda)$ are continuous functions on $(0, +\infty)$ with a singularity at $\lambda = 1$ as they are defined in terms of $(\lambda^{-1/2} - \lambda^{1/2})^{-1}$, namely $a(\lambda), b(\lambda) = o(\lambda^{-1})$ as $\lambda \to 1$.\\\\
We can now define the notion of entropy for a vector:
\begin{defn}\label{def: EE sub}
Let $H$ be a standard subspace of $\mathcal{H}$. If $h \in H$, the entropy $S_h$ of $h$ w.r.t. $H$ is defined as:
\begin{equation*}
    S_h = - (h, \log \Delta_H h). 
\end{equation*}
\end{defn}
We now want to generalize this to arbitrary vectors of $\mathcal{H}$. In order to do so, let us first represent the modular flow in terms of its spectral decomposition:
\begin{equation*}
    \Delta_H = \int_0^{\infty} \lambda dE(\lambda).
\end{equation*}
Then, we can generalize the above to: 
\begin{defn}
If $k \in \mathcal{H}$, we define the entropy $S_k$ of the vector $k$ with respect to $H$ by:
\begin{equation*}
    S_k = \Im \big( (k, P_H A k)\big)
\end{equation*}
For $A = i \log \Delta_H$. 
\end{defn}
Indeed, by introducing the spectral decomposition of the modular flow and the formula for $P_H$ we see that this is well defined for all $k \in \mathcal{H}$:
\begin{equation*}
    (k, P_H A k) = i \int_0^{\infty} a(\lambda) \log \lambda d(k,E(\lambda)k) -i \int_0^{\infty} b(\lambda) \log \lambda d(k,J_H E(\lambda)k).
\end{equation*}
Notice that $b(\lambda) \log \lambda$ is always a well defined bounded operator on $(0,+\infty)$ as the singularity in $\lambda \to 1$ is removed:
\begin{align*}
    \lim_{\lambda \to 1} \frac{\log \lambda}{(\lambda^{-1/2} - \lambda^{1/2})} &= \lim_{\lambda \to 1} \frac{\log (1+ (\lambda-1))}{((1+(\lambda-1))^{-1/2} - (1+(\lambda -1))^{1/2})}\\
    &\simeq \frac{\lambda-1}{1-\frac{\lambda -1}{2}-1-\frac{\lambda-1}{2}}\\
    &\simeq -1. 
\end{align*}
Secondly, we know that $a(\lambda) \log \lambda$ is bounded in $(1,+\infty)$ as well and positive in $(0,1)$. Therefore, the entropy is finite if $-\int_0^1 a(\lambda) \log \lambda d(k,E(\lambda)k) < \infty$, otherwise $S_k = + \infty$. Indeed:
\begin{prop}
Let $k \in \mathcal{H}$. We have:
\begin{equation}\label{eq: equiv S}
    S_k < +\infty \Leftrightarrow -\int_0^1 \log \lambda\,\, d(k,E(\lambda)k) < +\infty, 
\end{equation}
iff $k \in \mathrm{Dom}(\sqrt{|\log \Delta_H|}E_-)$, with $E_{-}$ the negative spectral projection for $\log \Delta_H$.\\
In particular, all vectors in $\mathrm{Dom}(\log \Delta_H)$ have finite entropy.
\end{prop}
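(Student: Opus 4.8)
The preceding discussion has already done the analytic work on the second-quantized side: combining the explicit formula for $(k,P_HAk)$ with $A=i\log\Delta_H$, the boundedness of $b(\lambda)\log\lambda$ on $(0,\infty)$ (removable singularity at $\lambda=1$) and of $a(\lambda)\log\lambda=\log\lambda/(1-\lambda)$ on $[1,\infty)$, together with the fact that $\lambda\mapsto(k,E(\lambda)k)$ is a finite positive measure while $\lambda\mapsto(k,J_HE(\lambda)k)$ has finite total variation, it shows that $S_k$ and $-\int_0^1 a(\lambda)\log\lambda\,d(k,E(\lambda)k)$ are simultaneously finite. So it remains to (i) replace $a(\lambda)\log\lambda$ by $\log\lambda$ in this criterion, and (ii) recognise the resulting condition as a domain condition; the ``in particular'' clause is then immediate.

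For (i), note that $-a(\lambda)\log\lambda=|\log\lambda|/(1-\lambda)$ on $(0,1)$, and split the interval at $\tfrac12$. On $(0,\tfrac12]$ one has $1\le(1-\lambda)^{-1}\le 2$, hence $|\log\lambda|\le -a(\lambda)\log\lambda\le 2|\log\lambda|$; on $(\tfrac12,1)$ both $|\log\lambda|$ and $-a(\lambda)\log\lambda$ (which extends continuously to $\lambda=1$ with value $1$) are bounded, so their integrals against the finite measure $d(k,E(\lambda)k)$ are automatically finite. Consequently $-\int_0^1 a(\lambda)\log\lambda\,d(k,E(\lambda)k)<+\infty$ if and only if $-\int_0^1\log\lambda\,d(k,E(\lambda)k)<+\infty$, which is the displayed equivalence.

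For (ii), observe that $E_-$ is precisely the spectral projection of $\Delta_H$ onto $(0,1)$, and that $|\log\lambda|=-\log\lambda$ there, so that
\[
\big\|\sqrt{|\log\Delta_H|}\,E_-k\big\|^2=\big(E_-k,\,|\log\Delta_H|\,E_-k\big)=-\int_0^1\log\lambda\,d(k,E(\lambda)k).
\]
Hence $S_k<+\infty\iff k\in\mathrm{Dom}(\sqrt{|\log\Delta_H|}\,E_-)$. Finally, if $k\in\mathrm{Dom}(\log\Delta_H)$ then $\int_0^\infty(\log\lambda)^2\,d(k,E(\lambda)k)<+\infty$, and the elementary bound $-\log\lambda\le 1+(\log\lambda)^2$ (valid for all $\lambda\in(0,1)$) together with finiteness of the measure gives $-\int_0^1\log\lambda\,d(k,E(\lambda)k)<+\infty$, i.e. $S_k<+\infty$.

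The only point requiring a little care is the bookkeeping that isolates the single genuinely divergent contribution, $\int_0^{1/2}|\log\lambda|\,d(k,E(\lambda)k)$, from the several harmless bounded ones (near $\lambda=1$, on $[1,\infty)$, and the $J_H$-term, which is built from a complex rather than a positive measure); each of those harmless contributions is controlled by a boundedness statement already recorded before the proposition, so apart from this routine splitting there is no real obstacle.
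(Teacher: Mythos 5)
Your proof is correct and follows essentially the same route as the paper: the whole point is that $a(\lambda)\log\lambda$ extends continuously across $\lambda=1$ while $a(\lambda)\to 1$ as $\lambda\to 0^+$, so the only possible divergence is the $\int_0^1\log\lambda\,d(k,E(\lambda)k)$ contribution. Your version is in fact more complete than the paper's two-line sketch, since you also spell out the identification of this integral with $\|\sqrt{|\log\Delta_H|}E_-k\|^2$ and the elementary bound $-\log\lambda\le 1+(\log\lambda)^2$ for the final clause, both of which the paper leaves implicit.
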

\begin{proof}
Notice that:
\begin{equation*}
    \lim_{\lambda \to 1^-} a(\lambda) \log \lambda \sim - 1 - \frac{\lambda - 1}{2}  
\end{equation*}
with the same limit being finite for $\log \lambda$ as well. Moreover, $a(\lambda) \to 1$ as $\lambda \to 0^+$.
\end{proof}
Finally, we prove the main properties of the entropy for vectors:
\begin{prop}\label{prop: S_h}
Let $H$ be a factorial standard subspace of $\mathcal{H}$ and $k \in \mathcal{H}$. The following holds:
\begin{itemize}
    \item[i)] If $k = h + h'$, with $h \in H$, $h' \in H'$, then $S_k = -(h, \log \Delta_H h)$
    \item[ii)] $S_k = (k, iP_H i \log \Delta_H k) = - (k,P^*_H \log \Delta_H k)$
    \item[iii)] $S_k \geq 0$ and $S_k = 0$ iff $k \in H'$
    \item[iv)] If $\mathcal{H} = \mathcal{H}_1 \oplus \mathcal{H}_2$ with $H = H_1 \oplus H_2$, and $k = k_1 \oplus k_2$, then $S^H_k = S^{H_1}_{k_1} + S^{H_2}_{k_2}$, in particular for $k = k_1$: $S^H_{k_1} = S^{H_1}_{k_1}$
    \item[v)] $S_k = \lim_{\epsilon \to 0^+} S_{k_{\epsilon}}$; where $S_{k_{\epsilon}} < +\infty$ and $S_{k_{\epsilon}}$ is non-decreasing as $\epsilon \to 0^+$
    \item[vi)] If $k_n \to k$ in the graph norm of $\sqrt{|\log \Delta_H|}E_-$, then $S_{k_n} \to S_k$
\end{itemize}
\end{prop}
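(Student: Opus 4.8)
The plan is to prove part \emph{ii)} first and then to derive the remaining statements from it, systematically reducing every computation to the spectral subspaces $\mathcal{H}_\epsilon$, on which $\Delta_H$ is bounded with bounded inverse and on which $\mathcal{H}_\epsilon = H_\epsilon + H'_\epsilon$ with $H_\epsilon = E_\epsilon H$, $H'_\epsilon = E_\epsilon H'$ (as in the theorem containing Eq.~\eqref{eq: projection}). The tools I would use throughout are: the modular relations $\Delta_H^{it}H = H$, $\Delta_H^{it}H' = H'$, $J_H H = H'$, $J_H = J_{H'}$, $\Delta_{H'} = \Delta_H^{-1}$ from Lemma~\ref{lem: comm} and the subsequent proposition on $J_H$ and $\Delta_H^{it}$; the properties $P_H^* = P_{iH} = -iP_Hi$ and $\Delta_H^{is}P_H = P_H\Delta_H^{is}$; the explicit formula \eqref{eq: projection}; and the defining identities $J_H\Delta_H^{1/2}h = h$ for $h\in H$, $J_H\Delta_H^{-1/2}h' = h'$ for $h'\in H'$. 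For \emph{ii)}, I would unwind the definition $S_k = \Im\big((k,P_H\,i\log\Delta_H\,k)\big)$: since $\log\Delta_H$ is self-adjoint and $P_H$ commutes with its spectral projections, the complex quantity $(k,P_H\,i\log\Delta_H\,k)$ is purely imaginary, so applying $\Im(z)=\Re(-iz)$ and the identity $-iP_Hi = P_H^*$ turns it into $(k,iP_Hi\log\Delta_H\,k) = -(k,P_H^*\log\Delta_H\,k)$, which is exactly \emph{ii)}. The one subtlety is the interplay between the complex inner product and the real inner product defining $P_H^*$; I would verify the identity first on the core $\mathcal{D}_0$, where all operators are bounded, and extend by closedness.

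Part \emph{i)} then follows quickly: for $k = h + h'$ one has $P_Hk = h$ by the definition of the cutting projection, and inserting \eqref{eq: projection} into $S_k = -(k,P_H^*\log\Delta_H\,k)$ on $\mathcal{H}_\epsilon$, a direct expansion into the $a(\Delta_H)$- and $J_Hb(\Delta_H)$-pieces — with the relations $J_H\Delta_H^{\pm1/2}(\cdot)=(\cdot)$ on $H$ respectively $H'$ used to eliminate $J_H$ — makes all terms mixing $h$ with $h'$ cancel and collapses the diagonal terms to $-(h,\log\Delta_H h)$, i.e.\ $S_h$ of Definition~\ref{def: EE sub}; passing $\epsilon\to 0$ via the commutation of $E_\epsilon$ with $P_H$ and the closedness of $\log\Delta_H$ extends this to general $k\in\mathcal{H}$. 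For \emph{iv)}, if $\mathcal{H}=\mathcal{H}_1\oplus\mathcal{H}_2$ and $H=H_1\oplus H_2$, one checks $S_H = S_{H_1}\oplus S_{H_2}$, so $J_H,\Delta_H,\log\Delta_H$ and (through \eqref{eq: projection}) $P_H$ all split as direct sums; hence for $k=k_1\oplus k_2$ the bilinear expression defining $S_k$ splits, giving $S_k = S^{H_1}_{k_1}+S^{H_2}_{k_2}$, and the special case $k_2=0$ with $S_0=0$ yields the last assertion.

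For \emph{iii)}, positivity reduces — via \emph{i)} and the monotone limit in \emph{v)} below — to the inequality $S_h = -(h,\log\Delta_H h)\ge 0$ for $h\in H$, which I would get from Jensen: $S_Hh=h$ gives $(h,\Delta_Hh) = \|\Delta_H^{1/2}h\|^2 = \|J_H\Delta_H^{1/2}h\|^2 = \|h\|^2$, so the normalised spectral measure $d\mu(\lambda) = d(h,E(\lambda)h)/\|h\|^2$ has $\int\lambda\,d\mu = 1$ and concavity of $\log$ gives $\int\log\lambda\,d\mu\le\log 1 = 0$. Thus $S_{k_\epsilon} = S_{h_\epsilon}\ge 0$ and $S_k=\lim_\epsilon S_{k_\epsilon}\ge 0$. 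For the equality case, if $S_k=0$ then (since $S_{k_\epsilon}$ is nonnegative and non-decreasing with limit $0$) each $S_{k_\epsilon}=0$; the equality case of Jensen forces $\Delta_Hh_\epsilon=h_\epsilon$, hence $h_\epsilon\in\mathrm{Dom}(S_{H'})$ with $S_{H'}h_\epsilon = J_H\Delta_H^{-1/2}h_\epsilon = J_Hh_\epsilon = h_\epsilon$, so $h_\epsilon\in H\cap H'=\{0\}$ by factoriality, i.e.\ $k_\epsilon=h'_\epsilon\in H'$ and $k=\lim_\epsilon k_\epsilon\in H'$; conversely $k\in H'$ gives $P_Hk=0$, whence $S_k=0$ by \emph{i)}.

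The real work sits in \emph{v)} and \emph{vi)}. For \emph{v)}: $k_\epsilon=E_\epsilon k$ has bounded spectrum under $\Delta_H$, so $k_\epsilon\in\mathrm{Dom}(\log\Delta_H)$ and $S_{k_\epsilon}<+\infty$; using the spectral representation derived above,
\[ (k,P_H\,i\log\Delta_H\,k) = i\int_0^\infty a(\lambda)\log\lambda\,d(k,E(\lambda)k) - i\int_0^\infty b(\lambda)\log\lambda\,d(k,J_HE(\lambda)k), \]
together with $a(\lambda)\log\lambda\le 0$ on $(0,\infty)$, positivity of $d(k,E(\lambda)k)$, and \emph{i)}, the functional $S_{k_\epsilon}$ equals $-\int a(\lambda)\log\lambda\,d(k_\epsilon,E(\lambda)k_\epsilon)$ up to a remainder that is convergent in $\epsilon$, so it is non-decreasing as the spectral set grows and $S_k=\lim_\epsilon S_{k_\epsilon}$ by monotone convergence. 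For \emph{vi)}: splitting the spectral integral at $\lambda=1$, the part over $[1,\infty)$ has bounded integrands ($a(\lambda)\log\lambda$, $b(\lambda)\log\lambda$ bounded there and near $\lambda=1$), so it converges under $k_n\to k$ in $\mathcal{H}$, while the part over $(0,1)$ is exactly controlled by the graph norm of $\sqrt{|\log\Delta_H|}E_-$ appearing in \eqref{eq: equiv S}, so it converges under convergence in that norm; hence $S_{k_n}\to S_k$. \textbf{The main obstacle} I anticipate is twofold: (a) in \emph{i)}–\emph{ii)}, keeping the signs and complex conjugations straight when moving $i$, $J_H$ and $\log\Delta_H$ past the $\mathbb{R}$-linear, non-self-adjoint $P_H$; and (b) the limiting arguments in \emph{v)}–\emph{vi)}, where the $J_Hb(\Delta_H)$-term is not sign-definite, so monotonicity must be extracted from the $a$-term alone (via \emph{i)}), and the precise domain $\mathrm{Dom}(\sqrt{|\log\Delta_H|}E_-)$ is indispensable for the continuity in \emph{vi)}.
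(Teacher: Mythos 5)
Your proposal is correct in substance and, at its core, follows the same strategy as the paper: reduce everything to the spectral subspaces $\mathcal{H}_{\epsilon}$, establish the key identity $S_k=-(h,\log\Delta_H h)$ there, get positivity from Jensen's inequality using $(h,\Delta_H h)=\|J_H\Delta_H^{1/2}h\|^2=\|h\|^2$, and pass to general $k$ via the monotone limit $S_k=\lim_{\epsilon\to 0^+}S_{k_{\epsilon}}$ built on the additivity $iv)$ and positivity. The genuine differences are these. First, you invert the logical order, proving $ii)$ before $i)$; but your stated justification that $(k,P_H\, i\log\Delta_H\, k)$ is purely imaginary ``because $\log\Delta_H$ is self-adjoint and $P_H$ commutes with its spectral projections'' is not sufficient on its own --- commutation with spectral projections says nothing about the real part of this sesquilinear form, since $P_H$ is only $\mathbb{R}$-linear and not self-adjoint. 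The imaginarity is in fact equivalent to the computation in $i)$: one needs $P_Hk=h\in H$, the invariance of $H$ under $\log\Delta_H$, and $H'=(iH)^{\bot_{\mathbb{R}}}$ to kill $\Re\big((h',i\log\Delta_H h)\big)$ and to see that $(h,i\log\Delta_H h)$ is purely imaginary. Since you plan to ``verify the identity on the core $\mathcal{D}_0$'', you would in effect prove $i)$ first anyway, which is the paper's order; I would present it that way. Second, for $i)$ you expand the explicit formula \eqref{eq: projection} and rely on cancellations; the paper's route via $P_Hk=h$ and the orthogonality relations is shorter and sidesteps the bookkeeping you flag as an obstacle. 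Third, your treatment of the equality case in $iii)$ (equality in Jensen forces the spectral measure of $h_{\epsilon}$ to concentrate at $\lambda=1$, hence $\Delta_H h_{\epsilon}=h_{\epsilon}$, hence $h_{\epsilon}\in H\cap H'=\{0\}$ by factoriality) is more precise than the paper's appeal to ``strict positivity'' in Definition \ref{def: EE sub}, and is worth keeping. Fourth, you actually supply an argument for $vi)$ --- splitting the spectral integral at $\lambda=1$ and controlling the $(0,1)$ piece by the graph norm of $\sqrt{|\log\Delta_H|}E_-$ --- which the paper's proof omits entirely; the sketch is sound, since $b(\lambda)\log\lambda$ and $a(\lambda)\log\lambda$ restricted to $[1,\infty)$ are bounded while the remaining piece is comparable to $|\log\lambda|$ on $(0,1)$, exactly the quantity controlled by Eq.~\eqref{eq: equiv S}.
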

\begin{proof}
Let us start proving $i)$. If $k \in \mathrm{Dom}(\log \Delta_H)$ or $k \in \mathrm{Dom}(P_H)$, we have $k = h +h'$, with $h \in H$ and $h' \in H'$ and both $h,h' \in \mathrm{Dom}(\log \Delta_H)$. Then:
\begin{align*}
    S_k &= \Im\big( (k,P_H A k) \big)\\
    &= \Im\big( (h+h',P_H A (h+h')) \big)\\
    &= \Im\big( (h+h', A h) \big)\\
    &= \Im \big( (h,A h) \big)\\
    &= - (h, \log \Delta_H h). 
\end{align*}
We then have:
\begin{equation*}
    S_k \geq 0. 
\end{equation*}
Indeed, by assuming $\|h\| = 1$, we find:
\begin{equation*}
    (h, \log \Delta_H \,\, h) = \int \log(\lambda) d(h,E(\lambda) h) \leq - \log \bigg( \int \lambda d(h, E(\lambda) h)  \bigg) = \log \|h\|^2 = 0,
\end{equation*}
where we have used Jensen's inequality.\\
Statement $iv)$, follows from the very definition of entropy.\\
Result $v)$ is proven noticing that $S_{k_{\epsilon}}$ is finite, as the restriction of the cutting projection and of $\log \Delta_H$ on $\mathcal{H}_{\epsilon}$ with them being bounded operators, and thus by the spectral theorem $S_{k_{\epsilon}} \to S_k$ as $\epsilon \to 0^+$. Moreover, we have proven above that $S_{k_{\epsilon}} \geq 0$. Thus, by statement $iv)$, we see that $S_{k_{\epsilon}}$ increases, in converging to $S_k$, as $\mathcal{H} = \mathcal{H}_{\epsilon} \oplus \mathcal{K}_{\epsilon}$ and $k_{\epsilon} \in \mathcal{H}_{\epsilon}$, which becomes larger by reducing $\epsilon$\\
Let us now look at $iii)$. By $v)$, and the positivity of each $S_{k_{\epsilon}}$, we also have $S_k \geq 0$. If, on the other hand, we suppose $S_k = 0$, by the monotone increase proven in $v)$, we must also have $S_{k_{\epsilon}} = 0$. But now, by $i)$ and the strict positivity in Def. \ref{def: EE sub}, we have that $k_{\epsilon} \in H'$ and therefore $k \in H'$.\\
The general case in $i)$ subsequently follows, as we have $S_k = \lim_{\epsilon \to 0} S_{k_{\epsilon}}$, and using that $\mathrm{Dom}(P_H)$ is a core for $P_H$.\\
For $ii)$, start noticing that the first equality simply represents the statement that $(k,iP_H i \log \Delta_H k)$ is real if finite. Now, remembering that $A = i \log \Delta_H$, we have that $(k_{\epsilon}, iP_Hi \log \Delta_H k_{\epsilon})$ is real from the identity we have proven at the beginning:
\begin{equation*}
    \Im\big( (k_{\epsilon},P_H A k_{\epsilon}) \big) = \Im\big( (k_{\epsilon},P_H i \log \Delta_H k_{\epsilon}) \big) =  - (h_{\epsilon}, \log \Delta_H h_{\epsilon}), 
\end{equation*}
and from the simple result:
\begin{equation*}
    \Re\big( (k_{\epsilon},iP_H i \log \Delta_H k_{\epsilon}) \big) = \Im\big( (k_{\epsilon},P_H i \log \Delta_H k_{\epsilon}) \big),
\end{equation*}
in conjunction with statement $v)$, we conclude that also $ii)$ holds. Moreover, we have seen that $P_H^* =- i P_H i$, so the result follows.\\
\end{proof}
We now have everything that is needed to introduce second quantization for bosonic scalar free fields and the notion of coherent excitations.\\
In particular, if $\mathcal{H}$ is a complex Hilbert space, we can construct (namely making it the one-particle Hilbert space) the Fock space on it as:
\begin{equation*}
    \mathfrak{S}(\mathcal{H}) = \bigoplus_{n=0}^{\infty} \mathcal{H}_s^{\otimes^n},
\end{equation*}
which for this reson is also called the \textit{exponential} of $\mathcal{H}$, often denoted as $e^{\mathcal{H}}$. Moreover, $\mathcal{H}_0 = \mathbb{C} \Omega$ is the one-dimensional Hilbert space for the vacuum vector $\Omega$ with $\mathcal{H}_s^{\otimes n}$ being the symmetrized $n$-fold tensor product of $\mathcal{H}$. Now, if we have $h \in \mathcal{H}$, we can define the coherent vector $e^h$ as:
\begin{equation*}
    e^h := \bigoplus_{n=0}^{\infty} \frac{1}{\sqrt{n!}}(h^{\otimes^n})_s, 
\end{equation*}
where the zeroth component of $e^h$ is given by $e^0 = \Omega$. We can prove that:
\begin{align*}
    (e^h, e^k) &= \bigg(\bigoplus_{n=0}^{\infty} \frac{1}{\sqrt{n!}}(h^{\otimes^n})_s, \bigoplus_{n=0}^{\infty} \frac{1}{\sqrt{n!}}(k^{\otimes^n})_s \bigg)\\
    &= \bigoplus_{n=0}^{\infty} \frac{1}{n!}\big((h^{\otimes^n})_s, (k^{\otimes^n})_s\big)\\
    &= \bigoplus_{n=0}^{\infty} \frac{1}{\sqrt{n}!}\big(h^{\otimes^n}, k^{\otimes^n}\big)_s\\
    &= e^{(h,k)}
\end{align*}
and $\{ e^h, h \in \mathcal{H} \}$ is a dense subset of $\Gamma(\mathcal{H})$.\\
For $h \in \mathcal{H}$, we can identify a corresponding \textit{Weyl algebra} (see \ref{sec: Weyl}) generated by $V(h)$:
\begin{equation*}
    V(h) \frac{e^k}{\|e^k\|} = \frac{e^{h+k}}{\|e^{h+k}\|},
\end{equation*}
satisfying the Weyl commutation relations:
\begin{equation*}
    V(h+k) = e^{i \Im\big( (h,k) \big)}V(h) V(k).
\end{equation*}
In particular:
\begin{align*}
    V(h) \ket{\Omega} &= V(h) e^0\\
    &= \frac{e^h}{\| e^h \|},
\end{align*}
and since:
\begin{equation*}
    \|e^h\| = \sqrt{(e^h,e^h)} = \sqrt{e^{(h,h)}} = e^{\frac{1}{2}(h,h)},
\end{equation*}
we have:
\begin{equation*}
    V(h) \ket{\Omega} = e^{-\frac{1}{2}(h,h)} e^h. 
\end{equation*}
Therefore:
\begin{equation*}
    (V(k)\Omega, V(h)\Omega) = e^{-\frac{1}{2}(\|h\|^2+\|k\|^2)}(e^k,e^h) = e^{-\frac{1}{2}(\|h\|^2 + \|k\|^2)}e^{(k,h)}
\end{equation*}
and if we denote by $\omega := (\Omega, \cdot \Omega)$ the vacuum state, we obtain:
\begin{equation*}
    \omega(V(h)) = e^{-\frac{1}{2}\|h\|^2}. 
\end{equation*}
Let $H \subset \mathcal{H}$ be a real linear subspace. Then we can define the von Neumann algebra on $\Gamma(\mathcal{H})$:
\begin{equation*}
    R(H) := \{ V(h): h \in H \}''. 
\end{equation*}
If we further assume that $H$ is a standard subspace, we have:
\begin{prop}
Let $H$ be a standard, real, linear subspace, then:
\begin{itemize}
    \item If $K \subset H$ is dense, then $R(K) = R(H)$
    \item $\Omega$ is cyclic and separating for $R(H)$
    \item $R(H') = R(H)'$
    \item The modular flow and conjugation associated with $(R(H),\Omega)$ are given by:
    \begin{equation*}
        \Delta^{it}_{R(H)} = \mathfrak{S}(\Delta^{it}_H) \hspace{20pt} J_{R(H)} = \mathfrak{S}(J_H)
    \end{equation*}
    Where $\mathfrak{S}(\cdot)$ denotes the second quantization of the operator acting on the one-particle Hilbert space.
\end{itemize}
\end{prop}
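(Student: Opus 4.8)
The plan is to prove the four assertions in sequence, using each in the next, and to treat the identification of the modular data as the heart of the matter. I begin with $R(K)=R(H)$ for $K$ dense in $H$: the inclusion $R(K)\subseteq R(H)$ is immediate from $K\subseteq H$, and for the converse it is enough to show $V(h)\in R(K)$ for every $h\in H$. Choosing $k_n\in K$ with $k_n\to h$ in $\mathcal{H}$, the explicit formula $V(k_n)\Omega=e^{-\frac12(k_n,k_n)}e^{k_n}$ together with $(e^a,e^b)=e^{(a,b)}$ gives $\|(V(k_n)-V(h))e^c\|\to 0$ for every coherent vector $e^c$, hence $V(k_n)\to V(h)$ strongly on the total set of coherent vectors and therefore strongly on $\mathfrak{S}(\mathcal{H})$; as $R(K)$ is strongly closed, $V(h)\in R(K)$.

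Next I would settle cyclicity and separatingness of $\Omega$. For cyclicity, $R(H)\Omega$ contains all $V(h)\Omega\propto e^h$ with $h\in H$, and $\{e^h:h\in H\}$ is total in $\mathfrak{S}(\mathcal{H})$ because the complex linear span of $H$ contains $H+iH$, which is dense by the standard-subspace hypothesis. For the separating property I would first observe $R(H')\subseteq R(H)'$: if $h\in H$ and $h'\in H'=(iH)^{\bot_{\mathbb{R}}}$ then $\Im(h,h')=0$, so the Weyl relations force $V(h)V(h')=V(h')V(h)$, and this commutation passes to the generated von Neumann algebras. Since $J_HH=H'$ by the Proposition already established, $H'$ is again a standard subspace, so the cyclicity argument applied to $R(H')$ shows $\Omega$ is cyclic for $R(H')$, a fortiori for the larger algebra $R(H)'$; by Proposition \ref{prop: 112}, $\Omega$ is then separating for $R(H)$.

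For the modular data I would run the ``Gibbs $\Rightarrow$ KMS'' argument at the one-particle level. Put $U_t:=\mathfrak{S}(\Delta_H^{it})$; because $\Delta_H^{it}H=H$ one has $U_tV(h)U_t^{*}=V(\Delta_H^{it}h)$, so $\mathrm{Ad}\,U_t$ is a $\sigma$-weakly continuous automorphism group of $R(H)$, and $U_t\Omega=\Omega$ since second quantization fixes the vacuum. On the total family of Weyl generators the correlation function reduces to $\omega\bigl(V(h)\,\mathrm{Ad}\,U_t(V(k))\bigr)=e^{-\frac12(\|h\|^2+\|k\|^2)}e^{-(h,\Delta_H^{it}k)}$, and writing $(h,\Delta_H^{iz}k)=\int\lambda^{iz}\,d(h,E_\lambda k)$ through the spectral resolution of the positive operator $\Delta_H$, one checks boundedness and analyticity in the relevant strip together with the boundary relation characterising a KMS state at the fixed inverse temperature — exactly as in the remark identifying Gibbs states with KMS states. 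Uniqueness of the modular automorphism group then gives $\Delta_{R(H)}^{it}=\mathfrak{S}(\Delta_H^{it})$. To obtain the conjugation I would compute the closed Tomita operator $S_{R(H)}$ on the core $\mathrm{span}\{V(h)\Omega:h\in H\}$, where $S_{R(H)}V(h)\Omega=V(h)^{*}\Omega=V(-h)\Omega$; recognising the right-hand side as $\mathfrak{S}$ of the one-particle Tomita operator of $H$ and using that $\mathfrak{S}$ intertwines polar decompositions, this yields $S_{R(H)}=\mathfrak{S}(S_H)$ and hence $J_{R(H)}=S_{R(H)}\Delta_{R(H)}^{-1/2}=\mathfrak{S}(J_H)$.

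Finally, $R(H')=R(H)'$: one inclusion is above, and for the reverse, Tomita--Takesaki applied to $(R(H),\Omega)$ gives $R(H)'=J_{R(H)}R(H)J_{R(H)}=\mathfrak{S}(J_H)\,R(H)\,\mathfrak{S}(J_H)$; since $\mathfrak{S}(J_H)V(h)\mathfrak{S}(J_H)=V(J_Hh)$ and $J_HH=H'$ by the Proposition proved earlier, the right-hand side is $\{V(h'):h'\in H'\}''=R(H')$. The main obstacle is the modular-data step: carrying out the KMS verification rigorously — the strip analyticity and the boundary identity, controlled via the spectral calculus for $\Delta_H$ and the Gaussian form of the vacuum functional — and invoking uniqueness of the modular group; once that is in place, Haag duality and the separating property follow with essentially no further work.
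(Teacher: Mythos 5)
Your proposal is correct in strategy and, for the last three bullets, takes a genuinely different --- and more complete --- route than the text. For $R(K)=R(H)$ you do essentially what the paper does (convergence of $V(k_n)$ to $V(h)$ on the total set of coherent vectors, then closedness of the von Neumann algebra). For cyclicity and the separating property the paper simply invokes ``Reeh--Schlieder'', which is a statement about local algebras on Minkowski spacetime and does not literally apply to an abstract standard subspace; your argument --- totality of $\{e^h:h\in H\}$ from $\overline{H+iH}=\mathcal{H}$, the inclusion $R(H')\subseteq R(H)'$ from $\Im\big((h,h')\big)=0$, and Proposition \ref{prop: 112} --- is the correct abstract substitute. (One small point: totality of coherent vectors over a real subspace whose complex span is dense is not purely algebraic; you still need the standard analyticity/differentiation argument on $t\mapsto e^{h+tk}$ to generate the $n$-particle sectors.) For duality the paper only establishes $\overline{R(H')\Omega}=\overline{R(H)'\Omega}$, i.e.\ equality of cyclic subspaces rather than of algebras; your route through $R(H)'=J_{R(H)}R(H)J_{R(H)}$ together with $\mathfrak{S}(J_H)V(h)\mathfrak{S}(J_H)^{-1}=V(J_Hh)$ and $J_HH=H'$ actually closes this. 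For the modular data the paper offers only the observation $\mathfrak{S}(U)e^h=e^{Uh}$, whereas your KMS-plus-uniqueness argument for $\Delta^{it}_{R(H)}=\mathfrak{S}(\Delta^{it}_H)$ is the genuine proof (it is the Leyland--Roberts--Testard/Longo argument), at the cost of the strip-analyticity bookkeeping you acknowledge; the analytic continuation of $(h,\Delta_H^{iz}k)$ for $h,k\in H$ is controlled precisely because $H\subset\mathrm{Dom}(\Delta_H^{1/2})$ and $\Delta_H^{1/2}k=J_Hk$ on $H$.

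There is one genuine wrinkle in your derivation of $J_{R(H)}$. With the paper's normalization $V(h)\Omega\propto e^{h}$, the computation $S_{R(H)}V(h)\Omega=V(-h)\Omega$ gives $S_{R(H)}e^{h}=e^{-h}$ for $h\in H$, whereas the one-particle Tomita operator satisfies $S_Hh=+h$. The antilinear operator you are actually second-quantizing is therefore $-S_H=(-J_H)\Delta_H^{1/2}$, and your argument literally yields $J_{R(H)}=\mathfrak{S}(-J_H)$, which differs from $\mathfrak{S}(J_H)$ by $(-1)^N$ on the $N$-particle sectors. The formula $J_{R(H)}=\mathfrak{S}(J_H)$ as stated belongs to the convention $W(h)=e^{i\phi(h)}$, for which $W(h)\Omega\propto e^{ih/\sqrt{2}}$ and $S_H(ih)=-ih$. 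This mismatch is already present in the text, is immaterial for $\Delta^{it}_{R(H)}$ (which is all the entropy computation uses) and for the duality argument (since $-J_HH=J_HH=H'$), but you should either adjust the normalization of the Weyl operators or state the $J$ formula with the corresponding sign.
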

\begin{proof}
Let us start from the first. As $K \subset H$ we must have $R(K) \subset R(H)$. Now, as $K$ is dense in $H$, for each $h \in H$ we can find a sequence $(h_n)_{n \in \mathbb{N}} \in K$ such that $h_n \to h$ in $H$. I claim, that this implies $V(h_n) \to V(h)$ in the weak topology. For that purpose, consider the action of $V(h_n) \in R(K)$ on a general element $\Psi \in \mathfrak{S}(\mathcal{H})$ that, w.l.o.g., is assumed to be normalized. We know that the set of normalized independent vectors $e^k/\|e^k\|$ for $k \in \mathcal{H}$ is dense, so we can always find $e^{k_l}/\|e^{k_l}\| \to \Psi$. Then:
\begin{align*}
    \lim_{n \to \infty} V(h_n) \Psi &= \lim_{n \to \infty} V(h_n) \lim_{l \to \infty}\frac{e^{k_l}}{\|e^{k_l}\|}\\
    &= \lim_{n \to \infty} \lim_{l \to \infty}\frac{e^{h_n + k_l}}{\|e^{h_n + k_l}\|}\\
    &= \lim_{l \to \infty} \frac{e^{h+k_l}}{\|e^{h+k_l}\|}\\
    &= V(h) \lim_{l \to \infty}\frac{e^{k_l}}{\|e^{k_l}\|}\\
    &= V(h) \Psi,
\end{align*}
proving the convergence in the weak topology of $V(h_n) \in R(K)$ to $V(h) \in R(H)$. But now, from Theorem \ref{thm: DouComm} it follows that $\{ V(h): h \in H \}'' = \overline{\{ V(h): h \in H \}}$, with closure taken with respect to the weak topology, that is $R(K) = R(H)$.\\
The second statement is simply the Reeh-Schlieder theorem\\
For the third statement, notice that as $\Omega$ is cyclic and separating for $R(H)$ and by Reeh-Schlieder theorem it must be for $R(H)'$ as well: 
\begin{equation*}
    \overline{R(H') \ket{\Omega}} = \mathfrak{S}(\mathcal{H}') = \mathfrak{S}(J_{H}) \mathfrak{S}(\mathcal{H}) = J_{R(H)} \overline{R(H) \ket{\Omega}} = \overline{J_{R(H)} R(H) J_{R(H)} \ket{\Omega}} = \overline{R(H)' \ket{\Omega}},
\end{equation*}
where we have used the fourth statement $J_{R(H)} = \mathfrak{S}(J_H)$ and the one for factorial standard subspaces, $J_H = J_{H'}$, see Lemma \ref{lem: comm}.\\
The last statement follows from the very definition of second quantization of a unitary operator on $\mathcal{H}$. Namely, if $U$ is a unitary operator on $\mathcal{H}$, its second quantization $\mathfrak{S}(U)$ is defined as the direct sum of: 
\begin{equation*}
    \mathfrak{S}(U)|_{\mathcal{H}_s^{\otimes^n}} = U \otimes U \otimes \dots \otimes U. 
\end{equation*}
From this definition, it follows that $\mathfrak{S}(U) e^h = e^{U h}$. Finally, since $J_H$ and $\Delta_H^{it}$ are unitary and since for $h \in \mathcal{H}$, we have:
\begin{equation*}
    e^h = \| e^h\| V(h) \ket{\Omega},
\end{equation*}
from which follows the claim.
\end{proof}

Let us now prove the analogy between relative entropy for coherent excitations of the vacuum, of a free real scalar QFT, and the entropy of a vector of the corresponding one particle Hilbert space. We will first prove the result for $h \in H$ and subsequently generalize it for arbitrary $h \in \mathcal{H}$.

\subsubsection{Case $h\in H$}
We saw that by taking $h \in H$ to be a real, linear, standard subspace of $\mathcal{H}$, the entropy of $h$, relative to $H$ is:
\begin{equation*}
    S_h = - (h, \log \Delta_H h).
\end{equation*}
For what concerns the relative entropy on the Fock space, we start by specifying a coherent state on $R(H)$: $\omega_h = (V(h) \Omega, \cdot V(h) \Omega)$, for some $h \in \mathcal{H}$. We notice that:
\begin{equation*}
    \omega_h = \omega \cdot \mathrm{Ad} V(h)^*|_{R(H)},
\end{equation*}
where, for $A \in R(H)$:
\begin{equation*}
    \mathrm{Ad}V(h)|_{R(H)}[A] := V(h) A V(h)^*. 
\end{equation*}
We first focus on the case in which $h,k \in H$ and we study $S(\omega_h\| \omega_k)$.\\
Let us start proving the following important general results, that will turn out to also be crucial for later purposes:

\begin{prop}\label{prop: ciclico}
Let $\mathcal{A}$ be a von Neumann algebra and denote by $\mathcal{A}'$ its commutant. Let $\Omega$ be a cyclic separating vector for $\mathcal{A}$ and $U \in \mathcal{A}$, $U' \in \mathcal{A}'$ unitary operators. Then, the vector $\Phi = U' U \Omega$ is cyclic and separating and we further have:
\begin{equation*}
    S_{\Omega, \Phi} = U S_{\Omega} U'^*.
\end{equation*}
Additionally, by polar decomposition, we find:
\begin{align*}
    J_{\Omega, \Phi} &= UJ_{\Omega}U'^*\\
    \Delta_{\Omega, \Phi} &= U' \Delta_{\Omega} U'^*.
\end{align*}
\end{prop}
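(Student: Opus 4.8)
The plan is to reduce the statement to a direct computation on a core, together with the uniqueness of the polar decomposition from Theorem~\ref{thm: polar}. \textbf{Step 1: $\Phi$ is cyclic and separating.} Since $U' \in \mathcal{A}'$ commutes with all of $\mathcal{A}$ and $U$ is a unitary of $\mathcal{A}$ (so that $\mathcal{A}U = \mathcal{A}$ as sets), I would observe that $\mathcal{A}\Phi = \mathcal{A}U'U\ket{\Omega} = U'\mathcal{A}U\ket{\Omega} = U'\mathcal{A}\ket{\Omega}$, which is dense because $U'$ is unitary and $\mathcal{A}\ket{\Omega}$ is dense; hence $\Phi$ is cyclic for $\mathcal{A}$. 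The mirror computation $\mathcal{A}'\Phi = U\mathcal{A}'U'\ket{\Omega} = U\mathcal{A}'\ket{\Omega}$ is dense as well, so $\Phi$ is cyclic for $\mathcal{A}'$, which by Proposition~\ref{prop: 112} means exactly that $\Phi$ is separating for $\mathcal{A}$; running both arguments in both directions shows $\Phi$ is in fact cyclic and separating for both $\mathcal{A}$ and $\mathcal{A}'$. In particular the relative Tomita operator $S_{\Omega,\Phi}$, the closure of $A\Phi \mapsto A^{*}\ket{\Omega}$ (for $A \in \mathcal{A}$), is densely defined and invertible, so Theorem~\ref{thm: polar} is applicable.

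\textbf{Step 2: the operator identity $S_{\Omega,\Phi} = U S_\Omega U'^{*}$.} I would first check that the two operators agree on the dense domain $\mathcal{A}\Phi$: for $A \in \mathcal{A}$, using $U'^{*}AU' = A$ and the defining action of the Tomita operator $S_\Omega$ on $\mathcal{A}\ket{\Omega}$,
\begin{equation*}
 U S_\Omega U'^{*}\, A\, U'U\ket{\Omega} \;=\; U S_\Omega\, (AU)\ket{\Omega} \;=\; U (AU)^{*}\ket{\Omega} \;=\; UU^{*}A^{*}\ket{\Omega} \;=\; A^{*}\ket{\Omega},
\end{equation*}
which is precisely $S_{\Omega,\Phi}\,A\Phi$. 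Then I would pass to closures: since $S_\Omega$ is closed and $U,U'$ are bounded with bounded inverses, $U S_\Omega U'^{*}$ is closed; and because $U'^{*}\mathcal{A}\Phi = \mathcal{A}\ket{\Omega}$ is exactly the defining (hence core) domain of $S_\Omega$, the set $\mathcal{A}\Phi$ is a core for $U S_\Omega U'^{*}$. Consequently the closure of $A\Phi \mapsto A^{*}\ket{\Omega}$, i.e. $S_{\Omega,\Phi}$, coincides with $U S_\Omega U'^{*}$.

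\textbf{Step 3: the $\Delta$ and $J$ formulas.} Taking adjoints of the identity just obtained gives $\Delta_{\Omega,\Phi} = S_{\Omega,\Phi}^{*}S_{\Omega,\Phi} = (U'S_\Omega^{*}U^{*})(U S_\Omega U'^{*}) = U'\Delta_\Omega U'^{*}$; since conjugation by the unitary $U'$ commutes with the functional calculus of the positive operator $\Delta_\Omega$, one also has $\Delta_{\Omega,\Phi}^{1/2} = U'\Delta_\Omega^{1/2}U'^{*}$. Substituting $S_\Omega = J_\Omega\Delta_\Omega^{1/2}$ into $S_{\Omega,\Phi} = U S_\Omega U'^{*}$ yields $S_{\Omega,\Phi} = (U J_\Omega U'^{*})(U'\Delta_\Omega^{1/2}U'^{*})$, a product of an antiunitary and a positive operator, so the uniqueness in Theorem~\ref{thm: polar} forces $J_{\Omega,\Phi} = U J_\Omega U'^{*}$. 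The main obstacle I expect is the unbounded-operator bookkeeping in Step~2 --- exhibiting a common core and invoking closedness so that the formal manipulations become an honest equality of closed operators; once that is settled, everything else is bounded-operator algebra.
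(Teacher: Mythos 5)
Your proposal is correct and follows essentially the same route as the paper: verify cyclicity of $\Phi$ for both $\mathcal{A}$ and $\mathcal{A}'$ via unitarity of $U,U'$ and Prop.~\ref{prop: 112}, check the identity $S_{\Omega,\Phi}=US_\Omega U'^{*}$ by the same direct computation on $\mathcal{A}\Phi$, and read off $J_{\Omega,\Phi}$ and $\Delta_{\Omega,\Phi}$ from uniqueness of the polar decomposition. The only difference is that you make explicit the core/closedness bookkeeping (that $U'^{*}\mathcal{A}\Phi=\mathcal{A}\Omega$ is a core, so the pointwise agreement extends to equality of closed operators), which the paper leaves implicit.
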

\begin{proof}
We have:
\begin{align*}
    \overline{\mathcal{A} \ket{\Phi}} &= \overline{\mathcal{A}U'U \ket{\Omega}}\\
    &= U' \overline{\mathcal{A}U\ket{\Omega}}\\
    &= U' \overline{\mathcal{A} \ket{\Omega}}\\
    &= \overline{\mathcal{A} \ket{\Omega}} = \mathcal{H},
\end{align*}
where in the third step we have used the fact that $U \in \mathcal{A}$. One can prove the same for $\mathcal{A}'$, obtaining that as $\ket{\Phi}$ is cyclic also for the commutant algebra, it must be separating for the algebra itself. This was proven in Prop. \ref{prop: 112}.\\
For any $A \in \mathcal{A}$ we have:
\begin{align*}
    (U S_{\Omega} U'^*) A \ket{\Phi} &= (U S_{\Omega} U'^*) A U' U \ket{\Omega}\\
    &= U S_{\Omega} U'^*  A U' U \ket{\Omega}\\
    &= U S_{\Omega} A U \ket{\Omega}\\
    &= U U^* A^* \ket{\Omega}\\
    &= A^* \ket{\Omega}\\
    &= S_{\Omega, \Phi} A \ket{\Phi},
\end{align*}
where we have used $U' \in \mathcal{A}'$. From this, it follows by polar decomposition that:
\begin{equation*}
    J_{\Omega, \Phi} \Delta^{1/2}_{\Omega, \Phi} = U J_{\Omega} U'^* U' \Delta^{1/2}_{\Omega} U'^*, 
\end{equation*}
from which we have:
\begin{align*}
     J_{\Omega, \Phi} &= UJ_{\Omega}U'^*\\
    \Delta_{\Omega, \Phi} &= U' \Delta_{\Omega} U'^*. 
\end{align*}
\end{proof}

\begin{rem}
Observe that we can take $U' = \mathbb{1}$ and in this way obtain that any $U \ket{\Omega}$ is cyclic and separating whenever $\ket{\Omega}$ is.
\end{rem}
From this proposition, in the particular case in which we consider $U = V(h) \in R(H)$ and $U' = \mathrm{Ad} J_{\Omega} [V(h)]$, we have that:
\begin{equation*}
    J_{\Omega, \Phi} = V(h) J_{\Omega} J_{\Omega} V^*(h) J_{\Omega} = V(h) V(-h) J_{\Omega} = J_{\Omega}.
\end{equation*}
For what concerns the relative entropy between coherent states, we may cover all possibilities, by computing the relative entropy between a single coherent excitation and the vacuum vector:
\begin{prop}
The relative entropy between two coherent states can be computed to be:
\begin{equation*}
    S(\omega_h\|\omega_k) = S(\omega_{k-h}\| \omega),
\end{equation*}
where $\omega$ denotes the vacuum state.
\end{prop}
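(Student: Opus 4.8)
The key observation is that, since $h,k\in H$, the Weyl unitaries $V(h)$ and $V(k)$ lie in $R(H)$, so $V(h)\Omega$ and $V(k)\Omega$ are again cyclic and separating for $R(H)$ (Proposition \ref{prop: ciclico} with $U'=\mathbb{1}$), the coherent states $\omega_h,\omega_k$ are faithful and normal, and conjugation by either $V(h)$ or $V(k)$ is an inner $*$-automorphism of $R(H)$. The plan is to use this to reduce the relative modular operator $\Delta_{V(h)\Omega|V(k)\Omega}$ to the vacuum one. I would first show that, on the core $R(H)\Omega$, the relative Tomita operator factorises as
\begin{equation*}
 S_{V(h)\Omega|V(k)\Omega}=V(k)\,S_\Omega\,V(h)^{*},
\end{equation*}
where $S_\Omega$ is the Tomita operator of the vacuum for $R(H)$. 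This is immediate from the definitions: for $Q\in R(H)$ put $Q'=QV(k)\in R(H)$, so that $QV(k)\Omega=Q'\Omega$ and $Q^{*}V(h)\Omega=V(k)Q'^{*}V(h)\Omega=V(k)\,S_\Omega\bigl(V(h)^{*}Q'\Omega\bigr)$, using $V(h)^{*}Q'\in R(H)$ and the defining property $S_\Omega T\Omega=T^{*}\Omega$; as $Q'\Omega$ sweeps out the core, the identity follows and passes to the closures.

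Forming $S^{*}S$ — remembering that $S_\Omega$ is antilinear, so the adjoint of the product reverses the order and replaces $S_\Omega$ by $S_\Omega^{*}$, while $V(k)^{*}V(k)=\mathbb{1}$ — yields $\Delta_{V(h)\Omega|V(k)\Omega}=V(h)\,\Delta_\Omega\,V(h)^{*}$, notably independent of $k$. Substituting into Araki's formula and using that unitary conjugation commutes with the functional calculus, $\log\bigl(V(h)\Delta_\Omega V(h)^{*}\bigr)=V(h)(\log\Delta_\Omega)V(h)^{*}$, one obtains
\begin{equation*}
 S(\omega_h\|\omega_k)=-\bigl(V(h)^{*}V(k)\Omega,\ (\log\Delta_\Omega)\,V(h)^{*}V(k)\Omega\bigr).
\end{equation*}
The Weyl relation gives $V(h)^{*}V(k)=e^{i\Im((h,k))}V(k-h)$, and the scalar phase of modulus one cancels in the sesquilinear expression, so $S(\omega_h\|\omega_k)=-\bigl(V(k-h)\Omega,\ (\log\Delta_\Omega)\,V(k-h)\Omega\bigr)$. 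Applying the identity $\Delta_{V(h)\Omega|V(k)\Omega}=V(h)\Delta_\Omega V(h)^{*}$ once more, with $h$ replaced by $k-h$ and $V(k)$ by $\mathbb{1}$, shows that this is precisely $-\bigl(\Omega,\log\Delta_{V(k-h)\Omega|\Omega}\,\Omega\bigr)=S(\omega_{k-h}\|\omega)$, up to a harmless $v\mapsto-v$ on the displacement $v=k-h$ that is absorbed because $S(\omega_v\|\omega)$ is an even function of $v$ — as will be made completely explicit in the next subsection, where it is computed to equal the even quadratic quantity $-(v,\log\Delta_H v)$. This establishes the claim.

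An alternative presentation, avoiding the explicit manipulation of unbounded operators, invokes the invariance of Araki's relative entropy under $*$-automorphisms: with $\alpha=\mathrm{Ad}\,V(k)$ one has $\omega_k\circ\alpha=\omega$ and, by the Weyl relations, $\omega_h\circ\alpha=\omega_{h-k}$, hence $S(\omega_h\|\omega_k)=S(\omega_h\circ\alpha\|\omega_k\circ\alpha)=S(\omega_{h-k}\|\omega)$, with the $\pm$ ambiguity again immaterial by evenness. This invariance, for inner automorphisms, can itself be reduced to Proposition \ref{prop: ciclico}.

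The step I expect to be delicate is the operator-theoretic bookkeeping behind the two displayed identities: $S_\Omega$ and the relative Tomita operators are densely defined, unbounded, antilinear operators, so the factorisation must be verified on the common core $R(H)\Omega$ and then shown to persist under closure, and the adjoint of a product containing an antilinear factor must be formed with the reversed order. In the automorphism version, the analogous delicate point is justifying automorphism-invariance of Araki's entropy within the tools already developed; everything else — the Weyl-relation algebra and the cancellation of the phases — is routine.
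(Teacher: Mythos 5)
Your proof is correct and follows essentially the same route as the paper: both arguments rest on conjugating the relative Tomita operator by Weyl unitaries from $R(H)$ and then cancelling the resulting unimodular phase from the Weyl relations inside Araki's formula. The only organisational difference is that you derive the one-shot factorisation $S_{V(h)\Omega|V(k)\Omega}=V(k)\,S_\Omega\,V(h)^{*}$ (hence $\Delta_{V(h)\Omega|V(k)\Omega}=V(h)\Delta_\Omega V(h)^{*}$, independent of $k$) directly on the core, whereas the paper first proves $US_{\Omega,\Phi}U^{*}=S_{U\Omega,U\Phi}$ and applies it to the intermediate vector $\Phi=V(k)^{*}V(h)\Omega$; your appeal to the evenness of $v\mapsto S(\omega_v\|\omega)$ to absorb the $h-k$ versus $k-h$ ambiguity is the same (legitimate, non-circular) step the paper makes implicitly.
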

\begin{proof}
Let us consider $\ket{\Omega}, V(h)\ket{\Omega}, V(k)\ket{\Omega}$ and denote by $S_{k,h}$ the associated Tomita operator (well defined as $V(h) \ket{\Omega}$ is cyclic and separating for all $h \in H$ as long as $\ket{\Omega}$ is). Now, for $U \in R(H)$ unitary and for any $A \in R(H)$:
\begin{align*}
    (U S_{\Omega, \Phi} U^*) A U \ket{\Phi} &= U S_{\Omega,\Phi} (U^* A U) \ket{\Phi}\\
    &= U (U^* A U)^* \ket{\Omega}\\
    &= U U^* A^* U \ket{\Omega}\\
    &= A^* U \ket{\Omega},
\end{align*}
from which follows:
\begin{equation}\label{eq: unit1}
   U S_{\Omega, \Phi} U^* = S_{U\Omega, U\Phi} \Rightarrow \Delta_{U \Omega, U \Phi} = U\Delta_{\Omega, \Phi} U^*.
\end{equation}
Then, however, if we take as cyclic and separating states $\ket{\Omega}$ and:
\begin{equation*}
   \ket{\Phi} :=  V(k)^* V(h) \ket{\Omega},
\end{equation*}
and we look at the relative Tomita operator $S_{V(k) \Omega, V(k) \Phi} = S_{V(k) \Omega, V(h) \Omega}$, we have:
\begin{equation*}
    S_{V(k) \Omega, V(k) \Phi} = V(k) S_{\Omega, \Phi} V(k)^* \Rightarrow \Delta_{V(k) \Omega, V(k) \Phi} = V(k)\Delta_{\Omega, \Phi} V(k)^*.
\end{equation*}
But since $V(k)^* V(h) = V(h-k)$, we have that the state functional associated to $\ket{\Phi}$ is $\omega_{h-k}$. Therefore, we find:
\begin{align*}
    S(\omega_k\|\omega_h) &= -\braket{V(h) \Omega}{\log \Delta_{V(k) \Omega,V(h) \Omega}V(h) \Omega}\\
    &= -\braket{V(h) \Omega}{V(k) \log \Delta_{\Omega,\Phi}V(k)^* V(h) \Omega}\\
    &= -\braket{V(k)^* V(h) \Omega}{ \log \Delta_{\Omega,\Phi} V(k)^* V(h) \Omega}\\
    &= -\braket{V(h-k) \Omega}{ \log \Delta_{\Omega,V(h-k) \Omega} V(h-k) \Omega} = S(\omega_{h-k}\|\omega).
\end{align*}

\end{proof}
Using the above results, we can rewrite the general relative entropy:
\begin{equation*}
    S(\omega_h\| \omega) = -\braket{\Omega}{\log \Delta_{V(h) \Omega, \Omega}\Omega}
\end{equation*}
in terms of the modular operator $\Delta_{\Omega}$ only, noticing that $V(h) \in R(H)$ is unitary for $h \in H$:
\begin{align*}
    \Delta_{V(h)\Omega, \Omega} &= V(h) \Delta_{V^*(h) V(h) \Omega, V^*(h)\Omega} V^*(h)\\
    &= V(h) \Delta_{\Omega, V^*(h)\Omega} V^*(h)\\
    &= V(h) \Delta_{\Omega} V^*(h), 
\end{align*}
where in the first step we have used Eq. \eqref{eq: unit1} and Prop. \ref{prop: ciclico} in the third. Therefore, it follows that:
\begin{equation}\label{eq: ecb}
    S(\omega_h\| \omega) = -\braket{V^*(h)\Omega}{\log \Delta_{\Omega}V^*(h)\Omega}.
\end{equation}
We are able to prove the main theorem:
\begin{thm}
Let $h \in H$. If $h \in \mathrm{Dom}(\log \Delta_H)$, the relative entropy on $R(H)$ between $\omega$ and $\omega_h$ is given by:
\begin{equation*}
    S(\omega_h\| \omega) = S_h = -(h, \log \Delta_H h).
\end{equation*}
\end{thm}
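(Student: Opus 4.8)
The plan is to reduce the Fock-space relative entropy to a second-quantized computation on the one-particle space, and then recognize that computation as the vector entropy $S_h$.

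\medskip

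First I would start from Eq.~\eqref{eq: ecb}, which already expresses
\[
S(\omega_h\|\omega) = -\braket{V^*(h)\Omega}{\log \Delta_{\Omega}\, V^*(h)\Omega},
\]
so that the relative entropy is written solely in terms of the vacuum modular operator $\Delta_\Omega = \Delta_{R(H)}$. Next I would invoke the last bullet of the preceding proposition, namely $\Delta^{it}_{R(H)} = \mathfrak{S}(\Delta^{it}_H)$, which by Stone's theorem (differentiating at $t=0$) identifies $\log \Delta_{R(H)}$ as the second quantization $d\mathfrak{S}(\log\Delta_H)$ of the one-particle generator. Concretely, on the $n$-particle sector $\mathcal{H}_s^{\otimes n}$ one has $\log\Delta_{R(H)} = \sum_{j=1}^n \mathbb{1}\otimes\cdots\otimes\log\Delta_H\otimes\cdots\otimes\mathbb{1}$, and it annihilates $\Omega$.

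\medskip

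Then I would expand $V^*(h)\Omega = V(-h)\Omega = e^{-\frac12\|h\|^2} e^{-h}$ in the coherent-vector basis, using $e^{-h} = \bigoplus_n \frac{1}{\sqrt{n!}}((-h)^{\otimes n})_s$. Applying $\log\Delta_{R(H)} = d\mathfrak{S}(\log\Delta_H)$ to the symmetrized tensor power $(h^{\otimes n})_s$ produces, by the Leibniz/derivation rule, $n$ times the term with one factor replaced by $(\log\Delta_H)h$ (after symmetrization). A short bookkeeping computation of the inner product $\braket{e^{-h}}{\,d\mathfrak{S}(\log\Delta_H)\, e^{-h}}$ then collapses the series: the $n$-particle contribution is $\frac{1}{n!}\cdot n\cdot (h,\log\Delta_H h)\,\|h\|^{2(n-1)}$ (up to signs that cancel since $(-h)$ appears twice), and summing over $n$ gives $e^{\|h\|^2}(h,\log\Delta_H h)$. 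Multiplying by the normalization $e^{-\|h\|^2}$ leaves exactly $(h,\log\Delta_H h)$, so that $S(\omega_h\|\omega) = -(h,\log\Delta_H h) = S_h$, where the last equality is Definition~\ref{def: EE sub}. The hypothesis $h\in\mathrm{Dom}(\log\Delta_H)$ is exactly what makes every step convergent and the final expression finite (cf.\ Prop.~\ref{prop: S_h}).

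\medskip

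The main obstacle I anticipate is the rigorous justification of applying $\log\Delta_{R(H)}$, an unbounded operator, termwise to the (infinite) coherent-vector series, and of interchanging the sum with the inner product: the naive manipulation is clear, but one must check that $V^*(h)\Omega$ actually lies in $\mathrm{Dom}(\log\Delta_{R(H)})$ (or at least in the form domain of the relevant spectral integral) and that $e^{-h}$ is an analytic/core vector for the second-quantized generator. The clean way around this is to work with the bounded unitaries $\Delta^{it}_{R(H)} = \mathfrak{S}(\Delta^{it}_H)$ throughout: compute $\braket{V^*(h)\Omega}{\Delta^{it}_{R(H)} V^*(h)\Omega} = e^{-\|h\|^2} e^{(h,\Delta^{it}_H h)}$ exactly, then differentiate this scalar function at $t=0$ (justified since $h\in\mathrm{Dom}(\log\Delta_H)$ makes $t\mapsto(h,\Delta^{it}_Hh)$ differentiable), obtaining $i(h,\log\Delta_H h)$; this sidesteps all domain issues and yields $S(\omega_h\|\omega) = -(h,\log\Delta_H h)$ directly. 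I would present this latter route as the actual proof and relegate the coherent-state expansion to motivation.
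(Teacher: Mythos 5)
Your proposal is correct and, in its final form, is essentially identical to the paper's proof: the paper also starts from Eq.~\eqref{eq: ecb}, rewrites $-\log\Delta_{\Omega}$ as $i\frac{d}{dt}\big|_{t=0}\Delta_{\Omega}^{it}$, uses $\Delta_{R(H)}^{it}=\mathfrak{S}(\Delta_H^{it})$ to get $\braket{V^*(h)\Omega}{\mathfrak{S}(\Delta_H^{it})V^*(h)\Omega}=e^{-\|h\|^2}e^{(h,\Delta_H^{it}h)}$, and differentiates the resulting scalar function at $t=0$. Your instinct to present the bounded-unitary route as the actual argument and treat the termwise coherent-series manipulation only as motivation matches exactly what the paper does.
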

\begin{proof}
The first thing to notice is that, from Eq. \eqref{eq: ecb}, we have for $t \in \mathbb{R}$:
\begin{align*}
    S(\omega_h\| \omega) &= \braket{V^*(h)\Omega}{K_{\Omega}V^*(h)\Omega}\\
    &= i\frac{d}{dt}\bigg|_{t=0} \braket{V^*(h)\Omega}{e^{iK_{\Omega}t}V^*(h)\Omega}\\
    &= i\frac{d}{dt}\bigg|_{t=0} \braket{V^*(h)\Omega}{\Delta^{it}_{\Omega}V^*(h)\Omega},
\end{align*}
where I have introduced the \textit{modular Hamiltonian}, defined as:
\begin{equation*}
    K_{\Omega} = - \log \Delta_{\Omega},
\end{equation*}
since $\Delta_{\Omega} = \Delta_{R(H)}$ is the modular operator of $R(H)$ with respect to the state $\omega_{\Omega}$ defined on it. By the second quantized equivalent of the modular flow on the one particle Hilbert space, we have: $\Delta_{R(H)}^{it} = \mathfrak{S}(\Delta_H^{it})$. Therefore:
\begin{align*}
    S(\omega_h\| \omega) &= i\frac{d}{dt}\bigg|_{t=0} \braket{V^*(h)\Omega}{\mathfrak{S}(\Delta^{it}_{H})V^*(h)\Omega}\\
    &= ie^{-\|h\|^2}\frac{d}{dt}\bigg|_{t=0}e^{(h,\Delta_H^{it}h)}\\
    &= i\frac{d}{dt}\bigg|_{t=0}(h,\Delta_H^{it}h)\\
    &= -(h, \log \Delta_H h).
\end{align*}
\end{proof}
We note as well that, since $S(\omega_h|| \omega)$ is real, we have:
\begin{equation*}
    S(\omega_h|| \omega) = -(h, \log \Delta_H h) = i(h,i \log \Delta_H h) = \Im(h,i\log \Delta_H h),
\end{equation*}
where $\Im(\cdot,\cdot)$ defines a symplectic form. We have proven that the relative entropy for coherent excitations for $h \in H$ is equivalent to the entropy of the vector in the real and standard subset of the one particle Hilbert space. We now seek to generalize this to the case in which the vector, with respect to which we take the coherent excitations, is a general vector in the one-particle Hilbert space.

\subsubsection{Case $h \in \mathcal{H}$}
Let us now consider the general case in which $k \in \mathcal{H}$ and consider $\omega_k$ and $\omega_k'$ to be states on $R(H)$ and $R(H)'$, respectively.  
\begin{lem}
Let $k \in \mathcal{H}_{\epsilon}$. Then $S(\omega_k \| \omega) = S_k$.
\end{lem}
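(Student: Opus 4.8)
The plan is to reduce the statement to the case $h\in H$ already settled in the previous theorem. Since $k\in\mathcal{H}_\epsilon=E_\epsilon\mathcal{H}$ and, by the very construction of $\mathcal{H}_\epsilon$, neither $0$ nor $1$ lies in $\mathrm{sp}(\Delta_H|_{\mathcal{H}_\epsilon})$, the argument used in the proof of the formula for $P_H$ shows that $\mathcal{H}_\epsilon=H_\epsilon\oplus H'_\epsilon$ with $H_\epsilon=E_\epsilon H\subset H$ and $H'_\epsilon=E_\epsilon H'\subset H'$. So the first step is to write $k=h+h'$ with $h\in H$ and $h'\in H'$; moreover $h\in\mathcal{H}_\epsilon\subset\mathrm{Dom}(\log\Delta_H)$, since $\log\Delta_H$ is bounded on $\mathcal{H}_\epsilon$.

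The key step is to show that $\omega_k$ and $\omega_h$ induce the same functional on $R(H)$. Using the Weyl relations one has $V(k)=V(h+h')=e^{i\Im(h,h')}V(h)V(h')$, and since $h'\in H'$ we have $V(h')\in R(H')=R(H)'$. Hence, for every $A\in R(H)$, using that $V(h')$ commutes with both $A$ and $V(h)$ and is unitary,
\[
\omega_k(A)=\big(V(h)V(h')\Omega,\,A\,V(h)V(h')\Omega\big)=\big(V(h)\Omega,\,V(h')^{*}V(h')A\,V(h)\Omega\big)=\omega_h(A).
\]
Because the Araki relative entropy depends only on the state functionals over $R(H)$ (and faithfulness of $\omega_h$, already established, transfers to $\omega_k|_{R(H)}$ since they coincide), this yields $S(\omega_k\|\omega)=S(\omega_h\|\omega)$.

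Then I would invoke the theorem for vectors in $H$: since $h\in H\cap\mathrm{Dom}(\log\Delta_H)$, one gets $S(\omega_h\|\omega)=-(h,\log\Delta_H\,h)$. Finally, Proposition~\ref{prop: S_h}(i), applied to the decomposition $k=h+h'$ with $h\in H$, $h'\in H'$, gives $S_k=-(h,\log\Delta_H\,h)$. Chaining these three identities proves $S(\omega_k\|\omega)=S_k$.

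I do not expect a genuine obstacle here; the only points deserving care are the verification that the $H'$-component of the coherent vector leaves the restricted state unchanged — where factoriality of $H$ and the identity $R(H')=R(H)'$ are used — and the bookkeeping that $h$ lies in $\mathrm{Dom}(\log\Delta_H)$, which is automatic on $\mathcal{H}_\epsilon$. The extension to an arbitrary $k\in\mathcal{H}$, presumably the following lemma, will instead need the limiting argument $\epsilon\to0$ combined with the monotone convergence $S_{k_\epsilon}\uparrow S_k$ and lower semicontinuity of relative entropy, which is beyond the present statement.
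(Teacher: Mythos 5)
Your proof is correct and follows essentially the same route as the paper: decompose $k=h+h'$ with $h\in E_{\epsilon}H$, $h'\in E_{\epsilon}H'$ (legitimate since $0,1\notin\mathrm{sp}(\Delta_H|_{\mathcal{H}_{\epsilon}})$), observe that the $H'$-component drops out of both $S_k$ and $S(\omega_k\|\omega)$, and reduce to the already-proven case $h\in H\cap\mathrm{Dom}(\log\Delta_H)$. The only difference is cosmetic: where you invoke the general fact that Araki's relative entropy depends on the excited state only through the functional $\omega_k|_{R(H)}=\omega_h|_{R(H)}$, the paper verifies the same cancellation explicitly via Prop.~\ref{prop: ciclico}, writing $\Delta_{V(h)V(h')\Omega,\Omega}=V(h')\Delta_{\Omega}V(h')^{*}$ and letting the $V(h')$ factors cancel — both justifications are valid.
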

\begin{proof}
We saw that a general element in $\mathcal{H}_{\epsilon}$ can be written as $k = h + h'$, where $h \in E_{\epsilon}H$, $h' \in E_{\epsilon} H'$ and where $E_{\epsilon}$ is the spectral projection of $\Delta_H$. So, starting from the general definition of relative entropy of $k \in \mathcal{H}_{\epsilon}$ with respect to $H$, one has:
\begin{align*}
    S_k &= \Im\big( (h+h', P_H i \log \Delta_H (h+h')) \big)\\
    &= \Im\big( (h+h', i \log \Delta_H h) \big)\\
    &= \Im\big((h, i \log \Delta_H h)\big) = S_h,
\end{align*}
where we have used in the second step the fact that $P_H$ commutes with $\Delta_H$ and in the third the fact that $\Delta_H$ is an automorphism of $H$.\\
However, on the other hand, we have that $V(k) = V(h) V(h')$ which gives:
\begin{align*}
    S(\omega_k \| \omega) &= -  \braket{V(h) V(h') \Omega}{\log \Delta_{V(k) \Omega, \Omega} V(h) V(h') \Omega} \\
    &= -  \braket{V(h) V(h') \Omega}{ V(h')\log \Delta_{\Omega} V^{*}(h') V(h) V(h') \Omega} \\
    &= -  \braket{V(h) \Omega}{\log \Delta_{\Omega} V(h) \Omega},
\end{align*}
where we used $R(H') = R(H)'$ and again Prop. \ref{prop: ciclico} to write:
\begin{equation*}
    \Delta_{V(h) V(h') \Omega, \Omega} = V(h') \Delta_{\Omega} V(h')^*.
\end{equation*}
Consequently, we have shown that $S_k = S_h$ and $S(\omega_k \| \omega) = S(\omega_h \| \omega)$. Therefore, from the results of the preceding section, we have that:
\begin{equation*}
    S(\omega_k \| \omega) = S_k. 
\end{equation*}
\end{proof}
Before presenting the main result, we need another technical lemma:
\begin{lem}
If $\mathcal{H} = \mathcal{H}_1 \oplus \mathcal{H}_2$ with $H = H_1 \oplus H_2$, and $k = k_1 \oplus k_2$, then:
\begin{equation*}
    S(\omega_k\| \omega) = S(\omega_{k_1}\| \omega_{k_2}),
\end{equation*}
where $\omega_{k_i}$ is the coherent state associated with $k_i$ on $\mathfrak{S}(\mathcal{H}_i)$.
\end{lem}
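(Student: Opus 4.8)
The plan is to reduce the asserted identity to two elementary manipulations with Weyl unitaries, exploiting that Araki's relative entropy is unchanged by any unitary that normalises the algebra. I work throughout on $\mathfrak{S}(\mathcal{H})\cong\mathfrak{S}(\mathcal{H}_1)\otimes\mathfrak{S}(\mathcal{H}_2)$, under which $R(H)\cong R(H_1)\otimes R(H_2)$ and $\Omega=\Omega_1\otimes\Omega_2$; then the coherent state $\omega_{k_1}$ on $\mathfrak{S}(\mathcal{H}_1)$ is the state on $R(H)$ with vector representative $V(k_1\oplus 0)\Omega$ (only the first tensor factor excited), $\omega_{k_2}$ corresponds to $V(0\oplus k_2)\Omega$, and $\omega_k$ to $V(k_1\oplus k_2)\Omega$. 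Every Weyl operator $V(v)$ normalises $R(H)$, because $V(v)V(h)V(v)^{*}$ is a scalar multiple of $V(h)$, and each $V(v)\Omega$ is cyclic and separating for $R(H)$ by Proposition~\ref{prop: ciclico} applied after decomposing $v$ into its $H$- and $H'$-components; hence all the relative entropies below are well defined.

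The one general fact I would isolate first is: if $u$ is a unitary with $uR(H)u^{*}=R(H)$, then $\Delta_{u\Phi,u\Psi}=u\,\Delta_{\Phi,\Psi}\,u^{*}$ — the same computation that produced \eqref{eq: unit1}, now using only that $u^{*}Au\in R(H)$ for $A\in R(H)$ and that $u\Psi$ is cyclic and separating — and therefore $S(\omega_{u\Phi}\|\omega_{u\Psi})=-\langle u\Psi,\log(u\Delta_{\Phi,\Psi}u^{*})u\Psi\rangle=S(\omega_{\Phi}\|\omega_{\Psi})$, since $\log(uXu^{*})=u(\log X)u^{*}$ for positive $X$.

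Next I would apply this with $u=V(0\oplus k_2)^{*}$ to $\Phi=V(k_1\oplus 0)\Omega$ and $\Psi=V(0\oplus k_2)\Omega$. Then $u\Psi=\Omega$, and by the Weyl relations — the relevant cocycle phase vanishing because the inner product of a vector of $\mathcal{H}_1$ with a vector of $\mathcal{H}_2$ is zero — one gets $u\Phi=V\big(k_1\oplus(-k_2)\big)\Omega$; hence $S(\omega_{k_1}\|\omega_{k_2})=S\big(\omega_{k_1\oplus(-k_2)}\big\|\,\omega\big)$, where $\omega$ denotes the vacuum state. Then I would apply the principle again with $u=\mathfrak{S}(W)$ for $W:=\mathbb{1}_{\mathcal{H}_1}\oplus(-\mathbb{1}_{\mathcal{H}_2})$: this $W$ is a unitary of $\mathcal{H}$ with $WH=H_1\oplus(-H_2)=H$, so $\mathfrak{S}(W)$ fixes $\Omega$, normalises $R(H)$ (as $\mathfrak{S}(W)R(H)\mathfrak{S}(W)^{*}=R(WH)=R(H)$), and satisfies $\mathfrak{S}(W)V(v)\mathfrak{S}(W)^{*}=V(Wv)$; consequently $\mathfrak{S}(W)V\big(k_1\oplus(-k_2)\big)\Omega=V(k_1\oplus k_2)\Omega=V(k)\Omega$, giving $S\big(\omega_{k_1\oplus(-k_2)}\|\omega\big)=S(\omega_k\|\omega)$. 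Chaining the two equalities proves the Lemma.

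The proof carries no conceptual obstacle; the care required is purely in the bookkeeping: upgrading the reasoning behind \eqref{eq: unit1} from unitaries lying in $R(H)$ to unitaries merely normalising it; tracking the (in fact trivial) phase cocycles of the Weyl relations and confirming that each $V(v)\Omega$ is cyclic and separating so that every relative entropy appearing is meaningful; and verifying the standard functorial identities for $\mathfrak{S}(\cdot)$ invoked in the last step ($\mathfrak{S}(W)\Omega=\Omega$, $\mathfrak{S}(W)V(v)\mathfrak{S}(W)^{*}=V(Wv)$, $\mathfrak{S}(W)R(H)\mathfrak{S}(W)^{*}=R(WH)$). In contrast with the surrounding arguments, no spectral truncation $\mathcal{H}_{\epsilon}$ is needed here, since the identity is purely algebraic.
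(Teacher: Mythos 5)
Your argument is internally sound: the computation $\Delta_{u\Phi,u\Psi}=u\,\Delta_{\Phi,\Psi}\,u^{*}$ for a unitary $u$ normalising $R(H)$ is correct, the Weyl cocycle phases do vanish for orthogonal summands, and $W=\mathbb{1}\oplus(-\mathbb{1})$ does preserve $H$. But you have proved a different statement from the one the paper means and uses. The displayed right-hand side ``$S(\omega_{k_1}\|\omega_{k_2})$'' is sloppy notation: the paper's own proof establishes $\mathfrak{S}(\mathcal{H})=\mathfrak{S}(\mathcal{H}_1)\otimes\mathfrak{S}(\mathcal{H}_2)$, splits $\omega=\omega\otimes\omega$ and $\omega_k=\omega_{k_1}\otimes\omega_{k_2}$, and invokes additivity of relative entropy under tensor products, so the actual content of the lemma is $S(\omega_k\|\omega)=S(\omega_{k_1}\|\omega)+S(\omega_{k_2}\|\omega)$ --- and it is exactly this additive form that the following theorem relies on, via $S(\omega_k\|\omega)=S(\omega_{k_\epsilon}\|\omega)+S(\omega_{(1-E_\epsilon)k}\|\omega)\geq S(\omega_{k_\epsilon}\|\omega)$. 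Your chain of conjugations, with $\omega_{k_1}$ and $\omega_{k_2}$ lifted to $R(H)$ as the vector states of $V(k_1\oplus 0)\Omega$ and $V(0\oplus k_2)\Omega$, yields the true identity $S(\omega_k\|\omega)=S(\omega_{k_1}\otimes\omega\,\|\,\omega\otimes\omega_{k_2})$, but that identity does not by itself produce the sum of the two one-particle entropies.

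To land the lemma as it is needed, you would still have to prove the tensor factorization $\mathfrak{S}(\mathcal{H}_1\oplus\mathcal{H}_2)\cong\mathfrak{S}(\mathcal{H}_1)\otimes\mathfrak{S}(\mathcal{H}_2)$ together with $R(H_1\oplus H_2)\cong R(H_1)\otimes R(H_2)$, and then apply additivity of Araki's relative entropy under tensor products (plus the symmetry $S(\omega\|\omega_{k_2})=S(\omega_{k_2}\|\omega)$ to convert your version into the additive one) --- at which point the detour through $\omega_{k_1\oplus(-k_2)}$ becomes superfluous, since additivity applied directly to $\omega_{k_1}\otimes\omega_{k_2}$ versus $\omega\otimes\omega$ finishes the proof in one line. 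A further small point: justifying that $V(v)\Omega$ is cyclic and separating by decomposing $v$ into its $H$- and $H'$-components only covers $v\in H+H'$, which is dense but not all of $\mathcal{H}$; the clean argument is the normalizer property itself, $R(H)V(v)\Omega\supseteq V(v)R(H)\Omega$, which you already have in hand.
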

\begin{proof}
First, let us prove that, by passing to second quantization, $\mathfrak{S}(\mathcal{H}) = \mathfrak{S}(\mathcal{H}_1) \otimes \mathfrak{S}(\mathcal{H}_2)$. Let us look at the finite particle level:
\begin{align*}
    \mathcal{H} \otimes \mathcal{H} &= (\mathcal{H}_1 \oplus \mathcal{H}_2) \otimes (\mathcal{H}_1 \oplus \mathcal{H}_2)\\
    &= (\mathcal{H}_1 \otimes \mathcal{H}_1) \oplus (\mathcal{H}_1 \otimes \mathcal{H}_2) \oplus (\mathcal{H}_2 \otimes \mathcal{H}_1) \oplus (\mathcal{H}_2 \otimes \mathcal{H}_2),
\end{align*}
where we notice that each of the factors of the direct sum is orthogonal to the others. Let us now take the symmetrized tensor product:
\begin{equation*}
    (\mathcal{H} \otimes \mathcal{H})_s = (\mathcal{H}_1 \otimes \mathcal{H}_1) \oplus (\mathcal{H}_1 \otimes \mathcal{H}_2) \oplus (\mathcal{H}_2 \otimes \mathcal{H}_2),
\end{equation*}
from which follows that at every order $n$, we have:
\begin{align*}
    \mathfrak{S}(\mathcal{H}) &= \bigoplus_{n=0}^{\infty} (\mathcal{H}_1 \oplus \mathcal{H}_2)^{\otimes n}_s\\
    &= \bigg( \bigoplus_{n=0}^{\infty} (\mathcal{H}_1)^{\otimes n}_s \bigg) \otimes \bigg( \bigoplus_{n=0}^{\infty} (\mathcal{H}_2)^{\otimes n}_s \bigg) = \mathfrak{S}(\mathcal{H}_1) \otimes \mathfrak{S}(\mathcal{H}_2).
\end{align*}
For what concerns the states, we split them as $\omega = \omega \otimes \omega$ and $\omega_k = \omega_{k_1} \otimes \omega_{k_2}$. Now, using the additivity of relative entropy under tensor products (see Eq. $(5.22)$ in \cite{Petz}), we have the claimed result.
\end{proof}
We can now state and prove the main theorem:
\begin{thm}
Let $k \in \mathcal{H}$, then:
\begin{equation*}
    S(\omega_k\| \omega) = S_k = \Im \big( (k, i P_H \log \Delta_H k) \big)
\end{equation*}
\end{thm}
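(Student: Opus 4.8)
The plan is to reduce the general case to the case already settled by a limiting/decomposition argument, mirroring the structure used throughout this section: first handle vectors in a spectral subspace $\mathcal{H}_\epsilon$, then pass to the limit $\epsilon \to 0^+$. The two technical lemmas just proven do most of the work, so the proof will be short.

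First I would fix $k \in \mathcal{H}$ and, for each $\epsilon > 0$, set $k_\epsilon = E_\epsilon k \in \mathcal{H}_\epsilon$, where $E_\epsilon$ is the spectral projection of $\Delta_H$ onto $(\epsilon, 1-\epsilon) \cup ((1-\epsilon)^{-1}, \epsilon^{-1})$. By the first lemma of the subsection (the case $k \in \mathcal{H}_\epsilon$), we already know $S(\omega_{k_\epsilon} \| \omega) = S_{k_\epsilon}$. By Proposition \ref{prop: S_h}, statement $v)$, we have $S_{k_\epsilon} \to S_k$ as $\epsilon \to 0^+$, with $S_{k_\epsilon}$ non-decreasing. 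So it remains only to show that $S(\omega_{k_\epsilon}\|\omega) \to S(\omega_k\|\omega)$, i.e. that the relative entropy on $R(H)$ behaves continuously under this approximation of the exciting vector.

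For that last step I would invoke the orthogonal splitting $\mathcal{H} = \mathcal{H}_\epsilon \oplus \mathcal{K}_\epsilon$, with the compatible real splitting $H = H_\epsilon \oplus K_\epsilon$ and $k = k_\epsilon \oplus (k - k_\epsilon)$. By the second lemma of the subsection, $S(\omega_k \| \omega) = S(\omega_{k_\epsilon} \| \omega_{k - k_\epsilon})$ computed on $\mathfrak{S}(\mathcal{H}_\epsilon) \otimes \mathfrak{S}(\mathcal{K}_\epsilon)$; alternatively, one uses directly the additivity of Araki relative entropy under tensor products together with the lower semicontinuity proven earlier. The clean route is: lower semicontinuity gives $S(\omega_k\|\omega) \le \liminf_\epsilon S(\omega_{k_\epsilon}\|\omega)$ once one checks $V(k_\epsilon)\Omega \to V(k)\Omega$ in norm (immediate from $\|e^{k_\epsilon}\| \to \|e^k\|$ and $(e^{k_\epsilon}, e^k) = e^{(k_\epsilon,k)} \to e^{\|k\|^2}$), while monotonicity of relative entropy under the restriction/embedding $R(H_\epsilon) \subset R(H)$ (equivalently, the additivity lemma plus positivity of $S(\omega_{k-k_\epsilon}\|\omega)$) gives the reverse inequality $S(\omega_{k_\epsilon}\|\omega) \le S(\omega_k\|\omega)$. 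Combining, $S(\omega_{k_\epsilon}\|\omega) \nearrow S(\omega_k\|\omega)$, and matching this with $S_{k_\epsilon} \nearrow S_k$ yields $S(\omega_k\|\omega) = S_k$; the final expression $S_k = \Im\big((k, iP_H \log \Delta_H k)\big)$ is just the definition of $S_k$, rewritten via Proposition \ref{prop: S_h} $ii)$.

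The main obstacle I anticipate is the reverse inequality — establishing $S(\omega_{k_\epsilon}\|\omega) \le S(\omega_k\|\omega)$ cleanly. The honest way is to observe that $\omega_{k_\epsilon}$ and $\omega$ restricted to the subalgebra $R(H_\epsilon)$ agree with $\omega_k$ and $\omega$ restricted there (since $V(k - k_\epsilon) \in R(K_\epsilon) = R(H_\epsilon)'$ acts trivially on $R(H_\epsilon)$-expectation values when paired appropriately), so monotonicity of relative entropy under inclusion of von Neumann algebras does the job; but one must be careful that the states in play are faithful and normal so that Araki's entropy and its monotonicity apply. Using instead the tensor-product additivity lemma sidesteps monotonicity entirely: $S(\omega_k\|\omega) = S(\omega_{k_\epsilon}\|\omega) + S(\omega_{k-k_\epsilon}\|\omega) \ge S(\omega_{k_\epsilon}\|\omega)$ by positivity, which is the route I would actually take in the write-up.
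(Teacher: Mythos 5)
Your proposal is correct and follows essentially the same route as the paper: the inequality $S(\omega_{k_\epsilon}\|\omega) \leq S(\omega_k\|\omega)$ via the tensor-product additivity lemma plus positivity, the reverse via lower semicontinuity as $k_\epsilon \to k$, and the identification with $S_k$ via the $\mathcal{H}_\epsilon$ lemma and Proposition \ref{prop: S_h}. Your added check that $V(k_\epsilon)\Omega \to V(k)\Omega$ in norm before invoking lower semicontinuity is a detail the paper leaves implicit, but it is the right thing to verify.
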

\begin{proof}
Take $k \in \mathcal{H}$ and decompose $ \mathcal{H} = \mathcal{H}_{\epsilon} \oplus \mathcal{K}_{\epsilon}$. Correspondingly, we have decompositions for the $J_H,\Delta_H, P_H$ operators. Then, the above lemma and the positivity of relative entropy give:
\begin{equation*}
    S(\omega_k\| \omega) = S(\omega_{k_{\epsilon}}\|\omega) + S(\omega_{(1-E_{\epsilon})k}\|\omega) \geq S(\omega_{k_{\epsilon}}\|\omega).
\end{equation*}
Now, in the limit $\epsilon \to 0^+$, we have that $k_{\epsilon} \to k$ and, by lower semicontinuity of relative entropy, we have:
\begin{equation*}
    \liminf_{\epsilon \to 0^+} S(\omega_{k_{\epsilon}}\| \omega) \geq S(\omega_k \| \omega).
\end{equation*}
Thus, taking both inequalities into account, we find:
\begin{equation*}
    \lim_{\epsilon \to 0^+}S(\omega_{k_{\epsilon}}, \omega) = S(\omega_k\| \omega).
\end{equation*}
But since $S(k_{\epsilon}\| k) = S_{k_{\epsilon}}$, we must have:
\begin{equation*}
   S(\omega_{k}\|\omega) = \lim_{\epsilon \to 0^+}S_{k_{\epsilon}} = S_k, 
\end{equation*}
where, the last equality, follows from Prop. \ref{prop: S_h}.
\end{proof}
The convenience of working with coherent excitations of the vacuum is manifest : for such, the relative entropy between two different configurations of the underlying QFT is computed, using first quantization techniques only, namely just in terms of the inner product of vectors in the one-particle Hilbert space.\\
To establish some analogy with the upcoming results in the next chapter, let us reformulate the final result. In the case of a free scalar Quantum Field Theory we have, as abstract algebra $\mathfrak{A}_{(\mathbf{P},\sigma)}$, where $(\mathbf{P},\sigma)$ is the symplectic space of solutions of the classical field equation. Then, by choosing a quasi-free state on $\mathfrak{A}_{(\mathbf{P},\sigma)}$, we quoted in Theorem \ref{thm: quasi} that one can obtain a Fock space representation. In particular, we mentioned that the inner product over the one-particle Hilbert space $\mathcal{H}$ is related to the symplectic structure by:
\begin{equation*}
    \sigma(f,g) = 2\Im(\braket{f}{g}_{\mathcal{H}})
\end{equation*},
for $f,g \in \mathcal{H}$. Therefore, the above result for the relative entropy for a coherent excitation of the vacuum, can be rewritten for $\mathcal{H} \ni f = f_H + f_H'$as:
\begin{align*}
    S(\omega_f\| \omega) &= \Im \big( (f, i P_H \log \Delta_H f) \big)\\
    &= -(f_H, \log \Delta_H f_H)\\
    &= i\frac{d}{dt}\bigg|_{t=0} \braket{\Omega}{V(f_H) V^*(\Delta_H^{it} f_H)}\\
    &= i\frac{d}{dt}\bigg|_{t=0} \braket{\Omega}{e^{i\Im\big( (f_H, \Delta_H^{it}f_H) \big) }V(f_H - \Delta^{it}_H f_H)}\\
    &= - \frac{d}{dt}\bigg|_{t=0}\Im \big( (f_H, \Delta_H^{it}f_H) \big)\\
    &= -\frac{1}{2}\frac{d}{dt}\bigg|_{t=0} \sigma(f_H, \Delta_H^{it}f_H)\\
    &= \frac{1}{2}\frac{d}{dt}\bigg|_{t=0} \sigma(\Delta_H^{it}f_H, f_H). 
\end{align*}
Finally, if we want to compute the relative entropy in a concrete situation, we need to explicitly determine the $t$-dependence of the last expression. In fact, in absence of a time symmetry on the underlying spacetime on which the test functions are defined, this is in general unknown. However, in the specific case in which the fields are assumed to be localized in wedge-like regions, the Bisognano-Wichmann theorem (see Appendix \ref{app: BW}) provides a geometric action of the modular automorphism with the corresponding explicit form of the $t$-dependence.

\newpage
\chapter{Relative entropy for fermionic fields}
In the last chapter, we have presented how the relative entropy for a free scalar QFT, between a quasifree state and a coherent excitation of it is computed, using Tomita-Takesaki modular theory and the Araki formula. As already mentioned, the final explicit formula, allows to calculate relative entropies in contexts like semiclassical gravity. The calculations are simpler, as a quantity in second quantization, Araki's entropy is defined in terms of an operator acting on a vector in the Fock space, is computed just at the one particle level. At the same time, however, the result holds just for the specific type of coherent excitations and its generalization remains an open problem.\\
Another open task, is the search for similar results for a Dirac/Majorana Quantum Field Theory. As far as we know, a completely satisfactory result, in the same spirit as that for the free scalar, seems to be missing in literature.\\
In this chapter, that corresponds to the core of this thesis, we derive an explicit formula for the relative entropy for a Fermionic QFT. The result, allows the computation of the relative entropy between a quasifree state and a specific type of unitary excitation of it, in terms of the inner product of two vectors in the Hilbert space on which the self-dual CAR algebra is defined. Later on, we attempt at giving a first generalization of this formula for different types of excitations. We conclude the chapter presenting concrete examples in finite and infinite dimension, in which the local algebras are Type $I$ factors, where we will show the equivalence between the relative entropy computed using the von Neumann formula and the one obtained starting from out result.\\

\section{Single Unitary Fermionic Excitation of the Vacuum}\label{sec: risultati}
Following the work of Araki \cite{Araki1968}, and what we have discussed in Section \ref{sec: SDCAR}, we start presenting the result in the most general case and just later specify to a fermionic QFT.\\
Therefore, the starting object is a self-dual CAR algebra $\mathfrak{A}_{SDC}(\mathcal{H}, \Gamma)$, that from what we discussed in Section \ref{sec: SDCAR} we know is a $C^*$-algebra, where $\mathcal{H}$ is an Hilbert space, finite or infinite dimensional, on which we have an involution $\Gamma$ with properties reported in Def. \ref{def: self}. Moreover, we assume that we have a (strongly continuous) one parameter group of unitaries $V_t$ describing a dynamics over $\mathcal{H}$ with $t \in \mathbb{R}$. By Stone's theorem, this one parameter family, is related to a self-adjoint, positve operator $\mathbf{h}$:
\begin{equation*}
    V_t = e^{-i t \mathbf{h}}.
\end{equation*}
To be precise, if we want this to define a dynamics (think of the Heisenberg picture), we need to make sure that it lifts to an automorphism over $\mathfrak{A}_{SDC}(\mathcal{H},\Gamma)$. In particular, as $\mathfrak{A}_{SDC}(\mathcal{H},\Gamma)$ is a $C^*$-algebra, we need to demand it to be a $*$-automorphism. Therefore, if we define it as:
\begin{equation*}
    \alpha_t (B(g)) := B(V_{-t} g) \hspace{20pt} \forall g \in \mathcal{H}
\end{equation*}
we see that:
\begin{align*}
    \alpha_t(B^*(g)) &= \alpha_t(B(\Gamma g)) = B(V_{-t} \Gamma g)\\
    \alpha_t(B(g))^* &= B^*(V_{-t}g) = B(\Gamma V_{-t}g).
\end{align*}
and, in order $V_t$ to define a dynamics, we need for each $t \in \mathbb{R}$:
\begin{equation*}
    [V_t, \Gamma] = 0.
\end{equation*}\\
Now that we have fixed the assumptions, we go back to the elements in the algebra and identify a specific class of them, that squares to the identity and is invariant under the $*$-operation. To obtain such a class, we start considering those elements of the Hilbert space $f \in \mathcal{H}$ for which:
\begin{equation*}
    \Gamma f = f
\end{equation*}
Such elements always exist, as we can always construct one starting from a random element in $\mathcal{H}$:
\begin{align*}
    f &:= (1 + \Gamma) h \hspace{20pt} \forall h \in \mathcal{H}\\
    &:= i(1 - \Gamma) h \hspace{20pt} \forall h \in \mathcal{H}
\end{align*}
If we now consider the corresponding elements, in the self-dual CAR algebra, associated to these "involution invariant" vectors of the Hilbert space, we get elements that are invariant under the abstract $*$-operation: $B^*(f) = B(f)$. Then, from the anticommutation relations:
\begin{equation*}
    B(f) B(f) = \frac{(f,f)}{2} \mathbb{1}
\end{equation*}
That, by a proper chooice of the function $h \in \mathcal{H}$ in the definition of $f$, in order to have $(f,f) = 2$, gives the idempotence:
\begin{equation*}
    B(f) B^*(f) = B^*(f) B(f) = B(f) B(f) = \mathbb{1}
\end{equation*}

\begin{rem} \label{rem: 3.1.1.}
If we take two different $f,g \in \mathcal{H}$ such that $\Gamma f = f$ (same for $g$), we have that for the corresponding $B(f),B(g) \in \mathfrak{A}_{SDC}(\mathcal{H}, \Gamma)$:
\begin{equation*}
    [B(f), B(g)]_+ = (f,g)_{\mathcal{H}} \mathbb{1}
\end{equation*}
but at the same time:
\begin{equation*}
    [B(f), B(g)]_+ = [B(g), B(f)]_+ = (g,f)_{\mathcal{H}} \mathbb{1}
\end{equation*}
But this implies:
\begin{equation*}
    (f,g)_{\mathcal{H}} = (g,f)_{\mathcal{H}} = \overline{(f,g)_{\mathcal{H}}} 
\end{equation*}
That gives $(f,g)_{\mathcal{H}} \in \mathbb{R}$.
\end{rem}

These $B(f)$, for $f = \Gamma f$, are the elements of the algebra that we will consider to excite a quasifree state and thus to compute the relative entropy.\\
Now, in order to be able to define the Tomita operator and a notion of relative entropy between states using Araki's formula, we need to represent the abstract algebra as a von Neumann algebra over a Hilbert space. For this purpose, we start by presenting the two following results (See \cite{Araki:1971id} Lemma $3.2$ and $3.3$):
\begin{lem}\label{Lemma 3.2}
For any state $\varphi$ over $\mathfrak{A}_{SDC}(\mathcal{H}, \Gamma)$, there exists a bounded operator $S$ (called basis polarization) on $\mathcal{H}$, satisfying:
\begin{align}
    \varphi(B^*(f) B(g)) &= (f, S g)_{\mathcal{H}} \label{eq: 2point}\\
    \mathbb{1} \geq S^* &= S \geq 0 \label{eq: 2pointt}\\ 
    S + \Gamma S \Gamma &= \mathbb{1}\label{eq: 2point2}
\end{align}
\end{lem}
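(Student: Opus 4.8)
\emph{Proof strategy.} The plan is to read off $S$ from the two–point sesquilinear form of $\varphi$ and then verify the three properties one by one. First I would set
\begin{equation*}
\beta(f,g) := \varphi\big(B^*(f)B(g)\big), \qquad f,g \in \mathcal{H},
\end{equation*}
and note that, since $f \mapsto B(f)$ is $\mathbb{C}$-linear while $B^*(f) = B(\Gamma f)$ with $\Gamma$ antilinear, $\beta$ is antilinear in its first argument and linear in its second, exactly matching the structure of $(f, S g)_{\mathcal{H}}$. To produce $S$ I would bound $\beta$: the anticommutation relation $[B^*(f),B(f)]_+ = (f,f)_{\mathcal{H}}\mathbb{1}$ gives $B(f)^*B(f) \le \|f\|_{\mathcal{H}}^2\,\mathbb{1}$, hence $\|B(f)\| \le \|f\|_{\mathcal{H}}$, and therefore $|\beta(f,g)| \le \|B(f)\|\,\|B(g)\| \le \|f\|_{\mathcal{H}}\,\|g\|_{\mathcal{H}}$. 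The Riesz representation theorem for bounded sesquilinear forms then yields a unique bounded operator $S$ on $\mathcal{H}$ with $\|S\| \le 1$ and $\beta(f,g) = (f,Sg)_{\mathcal{H}}$ for all $f,g$, which is \eqref{eq: 2point}.

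For \eqref{eq: 2pointt} I would use positivity of the state: $(f,Sf)_{\mathcal{H}} = \varphi\big(B(f)^*B(f)\big) \ge 0$ for all $f$, so $S \ge 0$ and in particular $S = S^*$. For the upper bound, rewriting $B(f)^*B(f) = \|f\|_{\mathcal{H}}^2\mathbb{1} - B(f)B(f)^*$ by the CAR relation and applying $\varphi$ gives $(f,Sf)_{\mathcal{H}} = \|f\|_{\mathcal{H}}^2 - \varphi\big(B(f)B(f)^*\big) \le \|f\|_{\mathcal{H}}^2$, since $\varphi\big(B(f)B(f)^*\big)\ge 0$; hence $S \le \mathbb{1}$.

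The relation \eqref{eq: 2point2} is the only delicate step. Using $\Gamma^2 = \mathbb{1}$ and $B^*(h) = B(\Gamma h)$ one has $B(f) = B^*(\Gamma f)$, hence
\begin{equation*}
\varphi\big(B(f)B(f)^*\big) = \varphi\big(B^*(\Gamma f)B(\Gamma f)\big) = (\Gamma f, S\,\Gamma f)_{\mathcal{H}},
\end{equation*}
and combining with the identity from the previous paragraph, $(f,Sf)_{\mathcal{H}} = (f,f)_{\mathcal{H}} - (\Gamma f, S\,\Gamma f)_{\mathcal{H}}$. Then I would invoke the defining antiunitarity relation $(\Gamma f_1,\Gamma f_2)_{\mathcal{H}} = (f_2,f_1)_{\mathcal{H}}$ to rewrite $(\Gamma f, S\Gamma f)_{\mathcal{H}} = (\Gamma f,\Gamma(\Gamma S\Gamma f))_{\mathcal{H}} = (\Gamma S\Gamma f, f)_{\mathcal{H}} = \overline{(f,\Gamma S\Gamma f)_{\mathcal{H}}}$, together with the fact that $\Gamma S\Gamma$ is self-adjoint (immediate from $S = S^*$ and $\Gamma$ antiunitary), which makes $(f,\Gamma S\Gamma f)_{\mathcal{H}}$ real. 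This gives $(f,(S+\Gamma S\Gamma)f)_{\mathcal{H}} = (f,\mathbb{1}f)_{\mathcal{H}}$ for all $f \in \mathcal{H}$, and since a bounded operator on a complex Hilbert space is determined by its quadratic form, $S + \Gamma S\Gamma = \mathbb{1}$. The main obstacle is essentially bookkeeping — keeping the antilinearity of $\Gamma$ and the nonstandard inner-product identity straight in this last computation; everything else is a direct consequence of the CAR relations, positivity of $\varphi$, and the $C^*$-bound $\|B(f)\|\le\|f\|_{\mathcal{H}}$.
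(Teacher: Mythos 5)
Your proposal is correct and follows essentially the same route as the paper: the $C^*$-bound $\|B(f)\|\le\|f\|_{\mathcal{H}}$ from the CAR relation yields $S$, positivity of $\varphi$ gives $0\le S=S^*$, and the anticommutator together with $B^*(h)=B(\Gamma h)$ and $(\Gamma f_1,\Gamma f_2)_{\mathcal{H}}=(f_2,f_1)_{\mathcal{H}}$ gives $S+\Gamma S\Gamma=\mathbb{1}$. The only cosmetic differences are that you argue on the quadratic form and invoke polarization where the paper keeps general $f,g$ throughout, and that you obtain $S\le\mathbb{1}$ directly from $\varphi(B(f)B(f)^*)\ge 0$ whereas the paper reads it off afterwards from $\Gamma S\Gamma=\mathbb{1}-S\ge 0$.
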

\begin{proof}
Being $\mathfrak{A}_{SDC}(\mathcal{H}, \Gamma)$ a $C^*$-algebra, we have:
\begin{equation*}
    B^*(f) B(f) \leq B^*(f) B(f) + B(f) B^*(f) = \| f\|^2
\end{equation*}
It follows:
\begin{equation*}
    \| B(f) \| = \| B^*(f) B(f) \|^{1/2} \leq \|f\|
\end{equation*}
Hence, the relation \eqref{eq: 2point} defines an operator $S$ that is bounded and linear.\\
The positivity of $\varphi$ gives:
\begin{equation*}
    (f,Sf)_{\mathcal{H}} = (S^* f, f)_{\mathcal{H}} \geq 0
\end{equation*}
but the inequality implies $(S^* f, f)_{\mathcal{H}} = (f, S^* f)_{\mathcal{H}}$. Follows that $S^* = S \geq 0$.\\
Finally, from the anticommutation relations over the self-dual CAR algebra, we have:
\begin{align*}
    \varphi(B^*(f) B(g)) &= (f,g)_{\mathcal{H}} - \varphi(B(g) B^*(f))\\
    &= (f,g)_{\mathcal{H}} - \varphi(B^*(\Gamma g) B(\Gamma f))\\
    &= (f,g)_{\mathcal{H}} - (\Gamma g, S \Gamma f)_{\mathcal{H}}
\end{align*}
From the definition of the involution $\Gamma$, we have $(h, \Gamma f)_{\mathcal{H}} = (f, \Gamma h)_{\mathcal{H}}$ that gives:
\begin{equation}\label{eq: con1}
    (\Gamma g, S \Gamma f)_{\mathcal{H}} = (S \Gamma g, \Gamma f)_{\mathcal{H}} = (f, \Gamma S \Gamma g)_{\mathcal{H}}
\end{equation}
Hence:
\begin{equation*}
    (f, S g)_{\mathcal{H}} = (f,g)_{\mathcal{H}} - (f, \Gamma S \Gamma g)_{\mathcal{H}}
\end{equation*}
From which it follows Eq. \eqref{eq: 2point2}. Morever, combining the positivity of $S$ with Eq. \eqref{eq: con1}, we get $\Gamma S \Gamma = \mathbb{1} - S \geq 0$.
\end{proof}

\begin{lem}
For any $S$ as above, there exist a unique quasifree state satisfying Eq. \eqref{eq: 2point}.
\end{lem}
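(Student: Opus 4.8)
The plan is to treat uniqueness and existence separately: uniqueness is essentially formal, following from the defining recursion of a quasifree state, while existence requires producing an honest (positive, normalised) functional, which I would obtain by dilating $S$ to a basis projection on an enlarged Hilbert space and pulling back the associated Fock state, whose positivity is already guaranteed by Theorem \ref{thm: quasi}(2).

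\textbf{Uniqueness.} From $B^*(h)=B(\Gamma h)$ and $\Gamma^2=\mathbb{1}$ one has $B(f)=B^*(\Gamma f)$, so \eqref{eq: 2point} fixes the full two-point function $\varphi(B(f)B(g))=(\Gamma f, S g)_{\mathcal H}$ for all $f,g\in\mathcal H$. The defining relations of a quasifree state then express $\varphi(B(f_1)\cdots B(f_n))$ (zero for $n$ odd, the signed sum of products of two-point functions for $n$ even) purely in terms of this datum, together with $\varphi(\mathbb{1})=1$. Since these monomials span a dense $*$-subalgebra of $\mathfrak{A}_{SDC}(\mathcal H,\Gamma)$ and every state on a $C^*$-algebra is continuous, any two quasifree states satisfying \eqref{eq: 2point} coincide on a dense set, hence everywhere.

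\textbf{Existence via dilation.} I would set $\mathcal K:=\mathcal H\oplus\mathcal H$ with the block–diagonal involution $\Gamma_{\mathcal K}:=\Gamma\oplus\Gamma$, which again satisfies the conditions of Definition \ref{def: self}, and define on $\mathcal K$ the operator
\[
    P:=\begin{pmatrix} S & iS^{1/2}(\mathbb{1}-S)^{1/2} \\ -i(\mathbb{1}-S)^{1/2}S^{1/2} & \mathbb{1}-S \end{pmatrix}.
\]
Using \eqref{eq: 2pointt}–\eqref{eq: 2point2}, in particular that $\Gamma S\Gamma=\mathbb{1}-S$ and hence $\Gamma S^{1/2}\Gamma=(\mathbb{1}-S)^{1/2}$, together with the fact that $S^{1/2}$ and $(\mathbb{1}-S)^{1/2}$ commute, a direct check gives $P=P^*=P^2$ and $\Gamma_{\mathcal K}P\Gamma_{\mathcal K}=\mathbb{1}-P$; thus $P$ is a basis projection, and its $(1,1)$-block is $S$, i.e. $(f\oplus 0,\,P(g\oplus 0))_{\mathcal K}=(f,Sg)_{\mathcal H}$. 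Since $\mathcal K$ has even finite, or infinite, dimension, Theorem \ref{thm: quasi}(2) applied to this $P$ yields the antisymmetric Fock representation $\pi_P$ with vacuum $\Omega$, hence the quasifree state $\varphi_P:=\langle\Omega,\pi_P(\cdot)\Omega\rangle$ on $\mathfrak{A}_{SDC}(\mathcal K,\Gamma_{\mathcal K})$, whose two-point function is $\varphi_P(B^*(F)B(G))=(F,PG)_{\mathcal K}$. The assignment $\iota: B(f)\mapsto B(f\oplus 0)$ extends to a unital $*$-homomorphism $\mathfrak{A}_{SDC}(\mathcal H,\Gamma)\to\mathfrak{A}_{SDC}(\mathcal K,\Gamma_{\mathcal K})$ because $f\mapsto f\oplus 0$ is $\mathbb{C}$-linear, preserves the inner product and intertwines $\Gamma$ with $\Gamma_{\mathcal K}$. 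Then $\varphi:=\varphi_P\circ\iota$ is a state, it is quasifree since the quasifree expansion of $\varphi_P$ restricts to the subalgebra, and $\varphi(B^*(f)B(g))=(f\oplus 0,\,P(g\oplus 0))_{\mathcal K}=(f,Sg)_{\mathcal H}$, so \eqref{eq: 2point} holds. Combined with the uniqueness argument, this proves the lemma.

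\textbf{Main obstacle and an alternative.} The genuine content is positivity: exhibiting a \emph{state}, not merely a linear functional with the prescribed two-point function. The dilation imports positivity for free from Theorem \ref{thm: quasi}(2); the only non-formal work is verifying that $P$ is a basis projection with $(1,1)$-block $S$, which is where care with the square roots and the involution is needed. An alternative, more hands-on route would write $\mathfrak{A}_{SDC}(\mathcal H,\Gamma)$ as the $C^*$-inductive limit of the finite-dimensional subalgebras $\mathfrak{A}_{SDC}(\mathcal H_0,\Gamma|_{\mathcal H_0})$ over finite-dimensional $\Gamma$-invariant $\mathcal H_0$ — noting that $P_0SP_0$ again satisfies \eqref{eq: 2pointt}–\eqref{eq: 2point2} on $\mathcal H_0$ and that the resulting finite-dimensional states are compatible under restriction — and, in each finite piece (a matrix algebra), exhibit the density matrix as a tensor product of $2\times2$ blocks built from a $\Gamma$-adapted eigenbasis of $S$; there the obstacle is the bookkeeping around the eigenvalue-$\tfrac12$ subspace and matching the induced Wick expansion to the quasifree formula with the correct signs.
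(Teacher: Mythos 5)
Your proof is correct and follows essentially the same route as the paper: the paper disposes of uniqueness with the same ``a quasifree state is determined by its two-point function'' observation and delegates existence to Araki's Lemma $4.6$, which is precisely your doubling construction (it reappears explicitly later in the thesis, in the ultrastatic-spacetime section, as the basis projection $P_S$ on $\hat{\mathcal{H}} = \mathcal{H}\oplus\mathcal{H}$ with $\hat{\Gamma} = \Gamma\oplus(-\Gamma)$ and real off-diagonal blocks $S^{1/2}(\mathbb{1}-S)^{1/2}$). Your variant with $\Gamma\oplus\Gamma$ and the factors of $\pm i$ in the off-diagonal blocks is an equivalent, equally valid choice, and your verification that $P$ is a basis projection with $(1,1)$-block $S$ is sound.
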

\begin{proof}
As a quasifree state is uniquely determined by its two point function and $S$ determines it, follows the uniqueness. The existence follows by Lemma $4.6$ in \cite{Araki:1971id}
\end{proof}
We denote such quasifree states by $\varphi_S$. But, any quasifree state over the self-dual algebra has, by Lemma \ref{Lemma 3.2}, an associated $S$ with the properties Eq. \eqref{eq: 2pointt} and \eqref{eq: 2point2} that determine it uniquely. As a consequence, any quasifree state over the self-dual CAR algebra is of the form $\varphi_S$.\\
In order to formulate Tomita-Takesaki modular theory, we need the state over the self-dual CAR algebra to be faithful. Moreover, for later purposes, we want it also to be quasifree. Therefore, we investigate the existence and the properites of quasifree, faithful states over a self-dual CAR algebra. The faithfulness condition lead us to closely look at Eq. \eqref{eq: 2point}, our state must be:
\begin{equation*}
    \varphi_S(B^*(f) B(f)) = 0 \Longleftrightarrow B(f) = 0
\end{equation*}
However, by the following lemma (Lemma $4.3.$ in \cite{Araki:1971id}) and by Theorem \ref{thm: quasi}, we see that if a quasifree state has an associated basis polarization $S$ that is also a projection, i.e. $S^2 = S$ and thus called basis projection, then it cannot be faithful:
\begin{lem}\label{lem: FockFer}
Let $P$ be a basis projection. If a state $\varphi$ of $\mathfrak{A}_{SDC}(\mathcal{H}, \Gamma)$ satisfies:
\begin{equation*}
    \varphi(B(f) B^*(f)) = 0 \hspace{20pt} \forall f \in P \mathcal{H}
\end{equation*}
Then $\varphi$ is a quasifree state with $S=P$ i.e. $\varphi = \varphi_P$. The representation $\pi_P$ is irreducible.
\end{lem}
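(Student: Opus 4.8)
The plan is to show that the stated vacuum-type condition rigidly determines \emph{all} correlation functions of $\varphi$, forcing $\varphi$ to coincide with the unique quasifree state $\varphi_P$ attached to $P$ (Lemma \ref{Lemma 3.2} and its companion uniqueness lemma apply since $S=P$ satisfies $\mathbb{1}\ge P^{*}=P\ge 0$ and $P+\Gamma P\Gamma=\mathbb{1}$), and then to deduce irreducibility of $\pi_{P}$ from purity of $\varphi_{P}$ via the GNS dictionary recorded earlier in the chapter.

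First I would translate the hypothesis into a statement about the GNS triple $(\mathcal{H}_{\varphi},\pi_{\varphi},\Omega_{\varphi})$. Since $B^{*}(f)=B(f)^{*}$, one has $\varphi(B(f)B^{*}(f))=\|\pi_{\varphi}(B^{*}(f))\Omega_{\varphi}\|^{2}$, so the assumption says $\pi_{\varphi}(B^{*}(f))\Omega_{\varphi}=0$ for all $f\in P\mathcal{H}$. Because $\Gamma$ is antiunitary with $\Gamma P\Gamma=\mathbb{1}-P$, it maps $P\mathcal{H}$ bijectively onto $(\mathbb{1}-P)\mathcal{H}$, and $B^{*}(f)=B(\Gamma f)$; hence the condition is equivalent to
\[
\pi_{\varphi}(B(h))\Omega_{\varphi}=0\qquad\text{for all }h\in(\mathbb{1}-P)\mathcal{H},
\]
which is precisely the defining property of the Fock vacuum associated with the basis projection $P$ (cf.\ Theorem \ref{thm: quasi}, where $\pi_{P}(B(f))=a_{F}^{*}(Pf)+a_{F}(P\Gamma f)$).

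Next I would extract two vanishing principles and combine them with the CAR to pin down every correlator. Directly in the GNS representation, $\varphi(A\,B(h))=\langle\Omega_{\varphi},\pi_{\varphi}(A)\pi_{\varphi}(B(h))\Omega_{\varphi}\rangle=0$ for every $A$ and every $h\in(\mathbb{1}-P)\mathcal{H}$; dually, $\varphi(B(g)\,A)=\langle\pi_{\varphi}(B^{*}(g))\Omega_{\varphi},\pi_{\varphi}(A)\Omega_{\varphi}\rangle=0$ for $g\in P\mathcal{H}$, since $B^{*}(g)=B(\Gamma g)$ with $\Gamma g\in(\mathbb{1}-P)\mathcal{H}$. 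Now, writing each argument as $f_{i}=Pf_{i}+(\mathbb{1}-P)f_{i}$ and using the anticommutation relation in the form $B(a)B(b)=-B(b)B(a)+(b,\Gamma a)_{\mathcal{H}}\mathbb{1}$, I would normal-order an arbitrary monomial $B(f_{1})\cdots B(f_{n})$ so that all factors of $P$-type stand to the left of all factors of $(\mathbb{1}-P)$-type, modulo shorter monomials with scalar coefficients. Applying $\varphi$ and the two vanishing principles, every fully ordered term with at least one factor dies, and an induction on $n$ shows that $\varphi(B(f_{1})\cdots B(f_{n}))$ is uniquely fixed; in fact only the fully contracted terms survive, so $\varphi$ is automatically quasifree with two-point function the one attached to the basis polarization $S=P$. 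Since $\varphi_{P}$ plainly satisfies the same hypothesis --- indeed $\varphi_{P}(B(f)B^{*}(f))=\|f\|^{2}-(f,Pf)_{\mathcal{H}}=0$ for $f\in P\mathcal{H}$ by the CAR --- we conclude $\varphi=\varphi_{P}$, which is quasifree by Lemma \ref{Lemma 3.2}.

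Finally, for irreducibility of $\pi_{P}$ I would invoke the remark already recorded that a state is pure iff its GNS representation is irreducible, and prove purity of $\varphi_{P}$ straight from the uniqueness just established: if $\varphi_{P}=\lambda\varphi_{1}+(1-\lambda)\varphi_{2}$ with $0<\lambda<1$ and $\varphi_{1},\varphi_{2}$ states, then for $f\in P\mathcal{H}$ the element $B(f)B^{*}(f)$ is positive in the $C^{*}$-algebra, so $0=\varphi_{P}(B(f)B^{*}(f))=\lambda\varphi_{1}(B(f)B^{*}(f))+(1-\lambda)\varphi_{2}(B(f)B^{*}(f))$ forces $\varphi_{i}(B(f)B^{*}(f))=0$ for $i=1,2$; thus each $\varphi_{i}$ meets the hypothesis of the lemma and hence equals $\varphi_{P}$, so $\varphi_{P}$ is extremal, i.e.\ pure, and $\pi_{P}$ is irreducible. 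The main obstacle I anticipate is the bookkeeping in the normal-ordering induction --- tracking signs and scalar contractions carefully enough that "uniquely determined" is genuinely proved for all $n$ --- together with making the inner-product convention of Lemma \ref{Lemma 3.2} consistent so that the surviving contractions really identify the two-point function with $S=P$; everything else is a short consequence of the CAR and the GNS formalism.
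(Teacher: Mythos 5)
Your proof is correct. Note that the thesis itself does not prove this lemma --- it only cites Lemma~4.3 of Araki's 1971 paper --- so there is no in-text argument to compare against; your reconstruction is essentially the standard (Araki) one. The key steps all check out: the hypothesis is equivalent to $\pi_{\varphi}(B(h))\Omega_{\varphi}=0$ for $h\in(\mathbb{1}-P)\mathcal{H}$ because $\Gamma$ intertwines $P$ with $\mathbb{1}-P$; the two one-sided vanishing principles plus normal ordering via $[B(a),B(b)]_{+}=(b,\Gamma a)_{\mathcal{H}}\mathbb{1}$ determine every $n$-point function by induction; $\varphi_{P}$ manifestly satisfies the hypothesis, so uniqueness gives $\varphi=\varphi_{P}$; and the extremality argument from positivity of $B(f)B^{*}(f)$ cleanly yields purity and hence irreducibility of $\pi_{P}$.
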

Therefore, we need to work with states that do not lead directly to irreducible representations, when we perform the associated \textit{GNS construction}, if we want the \textit{GNS vector} to be separable.\\
Anyway, for each quasifree state, the corresponding GNS construction leads to a Fock space. In fact, for each $\varphi_S$, we have:
\begin{equation*}
    \varphi_S(B^*(f) B(g)) = (f, Sg)_{\mathcal{H}} = (S^{1/2} f, S^{1/2}g)_{\mathcal{H}}
\end{equation*}
We define the one-particle Hilbert space as (see \cite{DAntoni:2001ido} Section II.$2$):
\begin{equation*}
    \mathfrak{h}_{S} = \{ S^{1/2} f | f \in \mathcal{H} \}
\end{equation*}
and correspondingly, by taking the direct sum of the antisymmetrized tensor product of it, the Fermionic Fock space:
\begin{equation*}
    \mathcal{K}_S := \mathbb{C} \oplus \bigg( \bigoplus_{n = 1}^{\infty} (\mathfrak{h}_S)^{\wedge n} \bigg)
\end{equation*}
With $\wedge$ denoting the antisymmetric tensor product. The corresponding vacuum vector $\Omega_S$ is the vector spanning the $\mathbb{C}$ component. The representation is:
\begin{equation*}
    \pi_S(B(f)) = a^*(S^{1/2}f) + a(S^{1/2}\Gamma f)
\end{equation*}
Where $a^*(p)$ and $a(p)$ are creation and annihilation operators over the Fock space $\mathcal{K}_S$. To check that this defines a representation, one can check that this map defines a $*$-isomorphism of $\mathfrak{A}_{SDC}(\mathcal{H}, \Gamma)$ into the algebra of creation and annihilation operators over $\mathfrak{h}_S$. Furthermore, if $S$ happens to be a projection, we can replace, in the latter construction of the Fock space, $S^{1/2}$ with $S$. In particular, given $\mathfrak{A}_{SDC}(\mathcal{H}, \Gamma)$, a representation with respect to a basis projection always exists (Lemma $3.3.$ in \cite{Araki1968}):

\begin{lem}
If $\dim \mathcal{H} = \mathrm{even}$ or $+\infty$, there exists a basis projection $P$, that generates a $*$-representation $\pi_P$ to the Fock space $\mathcal{K}_P$:
\begin{align*}
    \pi_P (B^*(f)) &= a^*(P f) + a(P \Gamma f)\\
    \pi_P (B(f)) &= a(P f) + a^*(P \Gamma f)\\
    \pi_P(\mathbb{1}) &= \mathbb{1}
\end{align*}
\end{lem}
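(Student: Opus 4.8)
The plan is to construct the basis projection $P$ first, then build the Fock space $\mathcal{K}_P$ over $\mathfrak{h}_P=P\mathcal H$, and finally verify that the proposed formula respects the defining relations of the self-dual CAR algebra, so that it extends to a $*$-representation.

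\emph{Step 1 (the basis projection).} Let $\mathcal{H}_\Gamma:=\{f\in\mathcal H:\Gamma f=f\}$. Since $\Gamma$ is an antiunitary involution with $(\Gamma f_1,\Gamma f_2)_{\mathcal H}=(f_2,f_1)_{\mathcal H}$, the set $\mathcal H_\Gamma$ is a real Hilbert subspace with $\mathcal H=\mathcal H_\Gamma\oplus i\mathcal H_\Gamma$ and $\dim_{\mathbb R}\mathcal H_\Gamma=\dim_{\mathbb C}\mathcal H$. Pick a real orthonormal basis $\{e_k\}$ of $\mathcal H_\Gamma$; by the hypothesis that $\dim\mathcal H$ is even or infinite we may index it by pairs and set $u_n:=\tfrac{1}{\sqrt2}(e_{2n-1}+i\,e_{2n})$. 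Using that $\Gamma$ acts as complex conjugation in the basis $\{e_k\}$, one checks directly that $\{u_n\}$ is a complex orthonormal system, that $\{\Gamma u_n\}$ is orthogonal to it, and that $\{u_n\}\cup\{\Gamma u_n\}$ spans $\mathcal H$ (indeed $e_{2n-1}=\tfrac{1}{\sqrt2}(u_n+\Gamma u_n)$, $e_{2n}=\tfrac{1}{i\sqrt2}(u_n-\Gamma u_n)$). Then the orthogonal projection $P$ onto $\overline{\mathrm{span}}\{u_n\}$ satisfies $P=P^*=P^2$ and $\Gamma P\Gamma=\mathbb{1}-P$, i.e. $P$ is a basis projection. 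This is the one place where the cardinality assumption enters: if $\dim\mathcal H$ were odd the real orthonormal basis could not be paired off.

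\emph{Step 2 (the Fock space and the map).} Put $\mathfrak h_P:=P\mathcal H$ with the inherited inner product and form $\mathcal K_P:=\mathbb C\,\Omega_P\oplus\bigoplus_{n\ge1}(\mathfrak h_P)^{\wedge n}$, equipped with the standard antisymmetric creation and annihilation operators $a^*(\psi),a(\psi)$, $\psi\in\mathfrak h_P$, which are bounded with $\|a^{\#}(\psi)\|=\|\psi\|$ and satisfy $[a(\psi),a^*(\phi)]_+=(\psi,\phi)_{\mathcal H}\mathbb{1}$, $[a(\psi),a(\phi)]_+=[a^*(\psi),a^*(\phi)]_+=0$. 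Define $\pi_P$ on the generators of the universal unital $*$-algebra by the stated formulas and extend multiplicatively.

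\emph{Step 3 ($\pi_P$ is a $*$-representation).} One must check that $\pi_P$ kills the defining relations of $\mathfrak A_{SDC}(\mathcal H,\Gamma)$. The identity $\pi_P(B^*(f))=\pi_P(B(\Gamma f))$ is immediate from $\Gamma^2=\mathbb{1}$, and $\pi_P(B(f))^*=\pi_P(B^*(f))$ from $a(\psi)^*=a^*(\psi)$; $\mathbb C$-linearity of $f\mapsto\pi_P(B(f))$ follows once the (anti)linearity conventions for $a^{\#}$, $\Gamma$ and the inner product are fixed consistently (as in Theorem~\ref{thm: quasi}). The only genuine computation is the anticommutation relation: expanding
\[
[\pi_P(B^*(f)),\pi_P(B(g))]_+=[a^*(Pf)+a(P\Gamma f),\ a(Pg)+a^*(P\Gamma g)]_+ ,
\]
the two mixed creation--creation and annihilation--annihilation brackets vanish, and the CAR for $a^{\#}$ collapses the rest to $\big((Pg,Pf)_{\mathcal H}+(P\Gamma f,P\Gamma g)_{\mathcal H}\big)\mathbb{1}$. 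Rewriting $(P\Gamma f,P\Gamma g)_{\mathcal H}=(g,\Gamma P\Gamma f)_{\mathcal H}=(g,(\mathbb{1}-P)f)_{\mathcal H}$ by antiunitarity of $\Gamma$ together with $\Gamma P\Gamma=\mathbb{1}-P$, and $(Pg,Pf)_{\mathcal H}=(g,Pf)_{\mathcal H}$, the two pieces assemble into $(g,f)_{\mathcal H}\mathbb{1}$, which is exactly the relation of Definition~\ref{def: self}. Since $\mathfrak A_{SDC}(\mathcal H,\Gamma)$ is the $C^*$-completion of this universal $*$-algebra and each $\pi_P(B(f))$ is bounded by $\|f\|$, $\pi_P$ extends uniquely to a $*$-representation on $\mathcal K_P$ with $\pi_P(\mathbb{1})=\mathbb{1}$. (That this representation is moreover irreducible then follows from Lemma~\ref{lem: FockFer}, since the associated quasifree state has basis polarization equal to the projection $P$.)

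The only real obstacle is bookkeeping: keeping straight which of $a$, $a^*$, $\Gamma$ and the two slots of the inner product are linear versus antilinear so that $\pi_P(B(\cdot))$ comes out $\mathbb C$-linear, and correctly shuffling the $\Gamma$'s past $P$ and across the inner product in the anticommutator so that the two terms combine into $(g,f)_{\mathcal H}$. Conceptually everything is forced once the basis projection exists, which is precisely why the sole nontrivial hypothesis is the one on $\dim\mathcal H$.
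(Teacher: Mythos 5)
Your proof is correct and constructs $P$ in essentially the same way as the paper: both pair off a $\Gamma$-invariant (equivalently, real) orthonormal basis and project onto the span of $\tfrac{1}{\sqrt 2}(e_{2n-1}+i e_{2n})$, with the evenness/infiniteness hypothesis entering only to make the pairing possible; your identification of $\Gamma$-invariant bases with real orthonormal bases of $\mathcal H_\Gamma$ is the same content as the paper's Gram--Schmidt argument. Your Step 3, verifying that the stated formulas respect the CAR (via $\Gamma P\Gamma=\mathbb{1}-P$ and the antiunitarity of $\Gamma$), is a correct explicit check of what the paper delegates to its earlier discussion of Fock representations associated to basis polarizations and to the cited results of Araki.
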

\begin{proof}
From what we said above, we just need to prove existence. Consider a $\Gamma$ invariant basis of $\mathcal{H}$ denoted $\{ f_i \}_{i = 1, \dots}$. Since $\mathcal{H}$ is even dimensional, we can pair $f_{2n}$ with $f_{2n-1}$ for $n = 1,2, \dots$ to define $P$ as the projector onto the subspace spanned by:
\begin{equation*}
    \frac{1}{2^{1/2}}(f_{2n} + if_{2n-1})
\end{equation*}
In this way $\Gamma P\Gamma = \mathbb{1} - P$. The existence of a $\Gamma$ invariant basis for $\mathcal{H}$ follows from the fact that, for a general $h \in \mathcal{H}$ non zero, either $(1 + \Gamma) h$ or $i(1 - \Gamma)h$ are non zero. Therefore, starting from an arbitrary non-zero element of $\mathcal{H}$ we form the corresponding $\Gamma$ invariant vector and then perform a Gram-Schmidt procedure, as the orthogonal of the space spanned by the $\Gamma$ invariant vectors in $\mathcal{H}$, is itself $\Gamma$-invariant.
\end{proof}

We will need the existence of such a basis projection, for any $\mathfrak{A}_{SDC}(\mathcal{H}, \Gamma)$ later on.\\\\
However, for the current purposes, let $\omega$ be just a faithful, quasifree state over the self-dual CAR algebra, with associated basis polarization $S$.
Performing the corresponding \textit{GNS construction}, we get the triple $(\mathcal{K}_{\omega}, \pi_{\omega}, \Omega_{\omega})$ of a Fock representation.\\
Let us call $\mathcal{A} = \pi_{\omega}(\mathfrak{A}_{SDC}(\mathcal{H}, \Gamma))''$ the associated von Neumann algebra. Now that we have a von Neumann algebra, we can define a relative Tomita operator between two cyclic and separating states. For that purpose, we start by easing the notation, denoting from now on $F := \pi_{\omega}(B(f))$, for $f \in \mathcal{H}$ such that $\Gamma f = f$. Then, by the properties of the corresponding $B(f)$ and the $*$-representation $\pi_{\omega}$, the operator $F$ must be unitary. As a consequence, the vector state $F \ket{\Omega_{\omega}}$, with associated vector functional $\omega_F$ over $\mathcal{A}$, must be cyclic and separating as well:

\begin{lem}
Let $\ket{\Omega_{\omega}}$ be a cyclic and separating vector for the algebra $\mathcal{A}$. Then, for $F \in \mathcal{A}$ unitary, the vector $F \ket{\Omega_{\omega}}$ is cyclic and separating.
\end{lem}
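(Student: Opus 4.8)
The plan is to verify the two defining properties directly, using nothing more than the fact that $F$ is unitary and lies in $\mathcal{A}$, together with the cyclicity and separating property of $\ket{\Omega_{\omega}}$.

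First I would establish cyclicity. Since $F\in\mathcal{A}$ is unitary, right multiplication by $F$ is a bijection of $\mathcal{A}$ onto itself, so $\mathcal{A}F=\mathcal{A}$ as subsets of $\mathfrak{B}(\mathcal{K}_{\omega})$. Hence
\begin{equation*}
\overline{\mathcal{A}\,F\ket{\Omega_{\omega}}}=\overline{(\mathcal{A}F)\ket{\Omega_{\omega}}}=\overline{\mathcal{A}\ket{\Omega_{\omega}}}=\mathcal{K}_{\omega},
\end{equation*}
where the last equality is the assumed cyclicity of $\ket{\Omega_{\omega}}$. Thus $F\ket{\Omega_{\omega}}$ is cyclic for $\mathcal{A}$.

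For the separating property I would argue as follows: suppose $Q\in\mathcal{A}$ satisfies $Q\,F\ket{\Omega_{\omega}}=0$. Then $QF\in\mathcal{A}$ and $(QF)\ket{\Omega_{\omega}}=0$, so by the separating property of $\ket{\Omega_{\omega}}$ we get $QF=0$; multiplying on the right by $F^{*}=F^{-1}\in\mathcal{A}$ yields $Q=0$. Equivalently — and this is the variant I would actually record, since it mirrors the later Proposition~\ref{prop: ciclico} — one notes that $F$ commutes with every element of $\mathcal{A}'$ (because $F\in\mathcal{A}$), hence $\overline{\mathcal{A}'F\ket{\Omega_{\omega}}}=\overline{F\mathcal{A}'\ket{\Omega_{\omega}}}=F\,\overline{\mathcal{A}'\ket{\Omega_{\omega}}}=F\mathcal{K}_{\omega}=\mathcal{K}_{\omega}$, using that $\ket{\Omega_{\omega}}$ is cyclic for $\mathcal{A}'$ (Proposition~\ref{prop: 112}) and that $F$ is unitary; so $F\ket{\Omega_{\omega}}$ is cyclic for $\mathcal{A}'$ and therefore separating for $\mathcal{A}$ again by Proposition~\ref{prop: 112}.

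There is essentially no obstacle here: the statement is a direct consequence of unitarity and the interplay of $\mathcal{A}$ with its commutant, and the proof is a two-line manipulation. The only point worth stating carefully is that $F=\pi_{\omega}(B(f))$ with $\Gamma f=f$ and $(f,f)_{\mathcal{H}}=2$ is genuinely unitary, which follows from $B(f)^{*}=B(f)$ and $B(f)B(f)=\tfrac{1}{2}(f,f)_{\mathcal{H}}\mathbb{1}=\mathbb{1}$ established above, together with $\pi_{\omega}$ being a $*$-representation; this is the hypothesis under which the lemma is applied in the sequel.
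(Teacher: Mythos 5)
Your proposal is correct and follows essentially the same route as the paper: cyclicity via $\mathcal{A}F=\mathcal{A}$, and the separating property by applying the separating property of $\ket{\Omega_{\omega}}$ to an element of $\mathcal{A}$ built from $Q$ and $F$ and then cancelling $F$ by unitarity (the paper conjugates to $F^{*}QF$ where you use $QF$ directly, a cosmetic difference). The closing remark on why $F=\pi_{\omega}(B(f))$ is unitary is a sensible addition but is already established in the surrounding text.
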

\begin{proof}
As $F \in \mathcal{A}$ is unitary we have $\mathcal{A} F = \mathcal{A}$. Then, using the fact that $\Omega_{\omega}$ is cyclic:
\begin{align*}
    \overline{\mathcal{A} F \ket{\Omega_{\omega}}} &= \overline{\mathcal{A}\ket{\Omega_{\omega}}}\\
    &= \mathcal{K}_{\omega}
\end{align*}
Since $F \in \mathcal{A}$.\\
For what concerns the separating property, consider:
\begin{equation*}
    A F \ket{\Omega_{\omega}} = 0 \Longrightarrow F^* A F \ket{\Omega_{\omega}} = 0
\end{equation*}
But, as $\ket{\Omega_{\omega}}$ is separating, we have $F^* A F = 0$. But $F^*A F \in \mathcal{A}$ as well, so by repeating the same argument: $FF^* A F F^* = 0$ that gives $A = 0$.
\end{proof}

Now that we have two cyclic and separating vectors $\ket{\Omega_{\omega}}$ and $F \ket{\Omega_{\omega}}$, for the von Neumann algebra $\mathcal{A}$, we can compute the Araki's relative entropy:
\begin{equation*}
    S(\omega_{F}||\omega) = - \braket{\Omega_{\omega}}{\log \Delta_{F \Omega_{\omega}, \Omega_{\omega}}\Omega_{\omega}}
\end{equation*}
Where $\Delta_{F \Omega_{\omega}, \Omega_{\omega}}$ is the modular operator for the relative Tomita operator $S_{F \Omega_{\omega}, \Omega_{\omega}}$.\\
To simplify this expression, keeping in mind that $F \in \mathcal{A}$ is unitary, we use the result of Proposition \ref{prop: ciclico} and Eq. \eqref{eq: unit1} to express the relative modular operator, in terms of the modular operator of the vector state $\omega$:
\begin{align*}
    \Delta_{F \Omega_{\omega}, \Omega_{\omega}} &= F \Delta_{F^* F \Omega_{\omega}, F^* \Omega_{\omega}} F^*\\
    &= F \Delta_{\Omega_{\omega}, F^* \Omega_{\omega}} F^*\\
    &= F \Delta_{\Omega_{\omega}} F^*
\end{align*}
Therefore, the relative entropy becomes:
\begin{equation*}
    S(\omega_F || \omega) = - \braket{\Omega_{\omega}}{\log \big( F \Delta_{\Omega_{\omega} } F^*\big)\Omega_{\omega}} = - \braket{\Omega_{\omega}}{F \log \big( \Delta_{\Omega_{\omega}} \big) F^* \Omega_{\omega}}
\end{equation*}
Where the second equality is a consequence of the unitarity of $F$.\\

Now, as we later want the modular automorphism to act on the vector $f \in \mathcal{H}$, we present the following lemma (Lemma $4.2.$ together with Theorem $3$ in \cite{Araki:1971id}):
\begin{lem}
Let $\omega$ be a faithful quasifree state with basis polarization $S$, over $\mathfrak{A}_{SDC}(\mathcal{H}, \Gamma)$. If $S$ commutes with the one parameter family of unitaries $V_t$ on $\mathcal{H}$, then we can choose the modular flow such that for all $B(f) \in \mathfrak{A}_{SDC}(\mathcal{H}, \Gamma)$:
\begin{equation*}
    \Delta_{\Omega_{\omega}}^{it} \pi_{\omega}(B(f)) \Delta_{\Omega_{\omega}}^{-it} = \pi_{\omega}(B(V_t f))
\end{equation*}
\end{lem}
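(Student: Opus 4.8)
The plan is to work entirely at the one-particle level, exploiting the fact that for a quasifree state every modular object is a ``second quantization''. First I would use that, since $\omega=\varphi_S$ is faithful (so $0<S<\mathbb 1$), its GNS triple $(\mathcal K_\omega,\pi_\omega,\Omega_\omega)$ is the Fock representation recalled above, with one-particle space $\mathfrak h_S=S^{1/2}\mathcal H$, $\pi_\omega(B(f))=a^*(S^{1/2}f)+a(S^{1/2}\Gamma f)$, and $\Omega_\omega$ cyclic and separating for $\mathcal A=\pi_\omega(\mathfrak A_{SDC}(\mathcal H,\Gamma))''$.

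The first substantive step is to produce a strongly continuous unitary group on $\mathcal K_\omega$ that implements the Bogoliubov transformation $B(f)\mapsto B(V_tf)$ and fixes $\Omega_\omega$. Since $[V_t,\Gamma]=0$, the assignment $B(f)\mapsto B(V_tf)$ is a well-defined $*$-automorphism $\beta_t$ of $\mathfrak A_{SDC}(\mathcal H,\Gamma)$, and since $[V_t,S]=0$ it leaves $\varphi_S$ invariant, because $\varphi_S(B^*(f)B(g))=(f,Sg)_{\mathcal H}=(V_tf,SV_tg)_{\mathcal H}$. As $V_t$ commutes with $S$ (hence with $S^{1/2}$) it restricts to a unitary $W_t$ of $\mathfrak h_S$, and I would set $U_t:=\Gamma_{\mathrm{Fock}}(W_t)$, the second quantization of $W_t$. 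A direct computation on creation/annihilation operators gives $U_t\,a^\#(S^{1/2}f)\,U_t^*=a^\#(W_tS^{1/2}f)=a^\#(S^{1/2}V_tf)$ (using also $V_tS^{1/2}\Gamma f=S^{1/2}\Gamma V_tf$), whence $U_t\pi_\omega(B(f))U_t^*=\pi_\omega(B(V_tf))$, while $U_t\Omega_\omega=\Omega_\omega$ by construction and $t\mapsto U_t$ is strongly continuous because $t\mapsto V_t$ is.

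It then remains to identify $U_t$ with $\Delta_{\Omega_\omega}^{it}$. Here I would invoke the explicit description of the modular operator of a quasifree state: $\Delta_{\Omega_\omega}$ is the second quantization of a positive one-particle operator which is a fixed Borel function of $S$ and $\mathbb 1-S=\Gamma S\Gamma$, so $\Delta_{\Omega_\omega}^{it}$ is itself the implementer of a Bogoliubov transformation $B(f)\mapsto B(\delta_S^{it}f)$ fixing $\Omega_\omega$. Two implementers of Bogoliubov transformations that both fix the separating vector $\Omega_\omega$ agree precisely when the underlying one-particle unitaries agree; thus the desired relation $\Delta_{\Omega_\omega}^{it}\pi_\omega(B(f))\Delta_{\Omega_\omega}^{-it}=\pi_\omega(B(V_tf))$ amounts to matching $V_t$ with the canonical one-particle modular flow of $\varphi_S$. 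This matching — which is exactly the freedom referred to by ``we can choose the modular flow'', and which is where the hypothesis $[S,V_t]=0$ enters in full strength — is the content of Lemma $4.2$ and Theorem $3$ of \cite{Araki:1971id}, so at this point I would cite it rather than re-derive it.

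The main obstacle is analytic rather than algebraic: $\log\Delta_{\Omega_\omega}$ and $\mathbf h$ are unbounded, so one must be careful with cores and domains when passing between $U_t$, its generator, and the one-particle operators, and one must check that the Bogoliubov implementer obtained by second quantization is genuinely strongly continuous with the correct generator. I would deal with this by first establishing the intertwining identity on the dense domain $\pi_\omega(\mathfrak A_{SDC}(\mathcal H,\Gamma))\Omega_\omega$, where everything reduces to finite algebraic expressions, then extending by continuity, and only afterwards differentiating in $t$ to recover statements phrased in terms of $\log\Delta_{\Omega_\omega}$.
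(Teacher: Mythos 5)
Your proposal is correct and follows essentially the same route as the paper: first construct a strongly continuous unitary group $U_t$ implementing the Bogoliubov automorphism $B(f)\mapsto B(V_tf)$ and fixing $\Omega_\omega$ (you do this concretely by second-quantizing the restriction of $V_t$ to the one-particle space, while the paper does it abstractly from the stationarity of $\omega$), and then identify $U_t$ with $\Delta_{\Omega_\omega}^{it}$ by deferring to Theorem $3$ of \cite{Araki:1971id}, exactly as the paper's proof does. The one point worth stressing --- which you have half-noticed --- is that the deferred matching step carries all the weight: $[S,V_t]=0$ alone only guarantees that the one-particle modular flow, a fixed Borel function of $S$ (essentially $\big(S(\mathbb{1}-S)^{-1}\big)^{\pm it}$), commutes with $V_t$, and equating the two forces $S$ to be the corresponding function of the generator of $V_t$ (the KMS relation); this is what the clause ``we can choose the modular flow'' silently absorbs, in your argument no less than in the paper's.
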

\begin{proof}
If $[V_t, S] = 0$, then the state $\omega$ is in particular stationary. Hence, we can define:
\begin{equation*}
    U_t \sum_i c_i \pi_{\omega}(A_i) \Omega_{\omega} = \sum_i c_i \pi_{\omega} (\alpha_t(A_i)) \Omega_{\omega}  
\end{equation*}
and:
\begin{equation*}
    U_t^* \sum_i c_i \pi_{\omega}(A_i) \Omega_{\omega} = \sum_i c_i \pi_{\omega} (\alpha_{-t}(A_i)) \Omega_{\omega}
\end{equation*}
From the cyclicity of the GNS vector, these are isometric maps from a dense subset of $\mathcal{K}_{\omega}$ into $\mathcal{K}_{\omega}$ satisfying:
\begin{align*}
    U_t U^*_t &= U_t^* U_t \subset \mathbb{1}\\
    U_t &\subset (U_t^*)^*
\end{align*}
Where as usual $A \subset B$ if $A = B$ on $\mathcal{D}(A)$, the domain of the operator $A$, and $\mathcal{D}(A) \subset \mathcal{D}(B)$. Follows, that the closure of $U_t$ is unitary and satisfies $U_t \Omega_{\omega} = \Omega_{\omega}$ together with:
\begin{equation*}
    U_t \pi_{\omega}(A) U^*_t = \pi_{\omega}(\alpha_t(A))
\end{equation*}
Now, following the steps in the proof of Theorem $3$ in \cite{Araki:1971id}, one shows the existence of an antiunitary operator $J$ over $\pi_{\omega}(\mathfrak{A}_{SDC}(\mathcal{H}, \Gamma))''$ such that for any $A \in \pi_{\omega}(\mathfrak{A}_{SDC}(\mathcal{H}, \Gamma))''$:
\begin{align*}
    J U_{-i\beta/2} A \Omega_{\omega} &= A^* \Omega_{\omega}\\
    J \Omega_{\omega} &= \Omega_{\omega}\\
    J \pi_{\omega}(\mathfrak{A}_{SDC}(\mathcal{H}, \Gamma))'' J &= (\pi_{\omega}(\mathfrak{A}_{SDC}(\mathcal{H}, \Gamma))'')'\\
    [J, U_t] &= 0
\end{align*}
But then, from the faithfulness assumption, this defines the Tomita operator with respect to $\Omega_{\omega}$ and, by the uniqeuness of the polar decomposition, follows the statement.
\end{proof}
Therefore, from now on, we further assume that the state $\omega$ is such that its associated basis polarization commutes with the one parameter family of unitaries $V_t$ on $\mathcal{H}$ or at least that the modular automorphism induces an action on $\mathcal{H}$ (for example if we are in a framework in which we can apply the Bisognano-Wichmann theorem \ref{thm: BW}).\\
Then, we further rewrite the above expression, introducing the \textit{Modular Hamiltonian}:
\begin{equation*}
    K_{\Omega_{\omega}} = - \log \Delta_{\Omega_{\omega}}
\end{equation*}
Then, implicitly using Stone's theorem, we compute:
\begin{align*}
    K_{\Omega_{\omega}} &= i \frac{d}{dt}\bigg|_{t = 0} e^{-i K_{\Omega_{\omega}} t}\\
    &= i \frac{d}{dt}\bigg|_{t=0} e^{i t \log \Delta_{\Omega_{\omega}}}\\
    &= i \frac{d}{dt}\bigg|_{t=0} \Delta_{\Omega_{\omega}}^{it}
\end{align*}
So we have obtained:
\begin{equation*}
    - \log \Delta_{\Omega_{\omega}} = i \frac{d}{dt}\bigg|_{t=0} \Delta_{\Omega_{\omega}}^{it}
\end{equation*}
that gives for the relative entropy:
\begin{align*}
    S(\omega_F || \omega) &= i \frac{d}{dt}\bigg|_{t = 0} \braket{\Omega_{\omega}}{F \Delta_{\Omega_{\omega}}^{it} F^* \Delta_{\Omega_{\omega}}^{-it} \Delta_{\Omega_{\omega}}^{it} \Omega_{\omega}}\\
    &= i \frac{d}{dt}\bigg|_{t = 0} \braket{\Omega_{\omega}}{F \Delta_{\Omega_{\omega}}^{it} F^* \Delta_{\Omega_{\omega}}^{-it} \Omega_{\omega}}
\end{align*}
Where we have used $\Delta_{\Omega_{\omega}}^{it} \Omega_{\omega} = \Omega_{\omega}$.\\

Finally, we let the modular automorphism act on the elements in $\mathcal{A}$:
\begin{equation*}
    \Delta_{\Omega_{\omega}}^{it} \pi_{\omega}(B(f)) \Delta_{\Omega_{\omega}}^{-it} = \pi_{\omega}(\alpha_{-t}(B(f))) = \pi_{\omega}(B(f_t)) =: F_t
\end{equation*}
and as a consequence, we get for the relative entropy:
\begin{equation*}
    S(\omega_F || \omega) = i \frac{d}{dt}\bigg|_{t = 0} \braket{\Omega_{\omega}}{F F_t \Omega_{\omega}} = i \frac{d}{dt}\bigg|_{t = 0} (f, S f_t)_{\mathcal{H}}
\end{equation*}
Therefore, we have proven the following main result:

\begin{prop}[\textbf{Araki's formula for a Single Unitary Fermionic Excitation of the Vacuum}]\label{prop: 1}
Let $\mathcal{A}$ be the corresponding von Neumann algebra of the abstract self-dual CAR algebra obtained via the quasifree vector $\omega$ in the way described above. Then, denoting by $F = \pi_{\omega}(B(f))$ for $f \in \mathcal{H}$ such that $\Gamma f = f$, the relative entropy between the state $\omega$ and the one obtained under an excitation by $F$, called $\omega_F$, becomes:
    \begin{equation}\label{entropy}
        S(\omega_{F}\,||\,\omega) = i \frac{d}{dt} \bigg|_{t=0} (f, S f_t)_{\mathcal{H}}
    \end{equation}
\end{prop}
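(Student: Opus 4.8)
The plan is to reduce Araki's formula, step by step, from the second-quantised level to the one-particle two-point function, following the chain of identities set up before the statement. First I would record that the hypotheses make $F$ a \emph{unitary and self-adjoint} element of $\mathcal{A}$: since $\Gamma f = f$ we have $B(f)^{*} = B(\Gamma f) = B(f)$, and with $f$ normalised so that $(f,f)_{\mathcal{H}} = 2$ the anticommutation relations give $B(f)B^{*}(f) = B^{*}(f)B(f) = \mathbb{1}$; applying the $*$-representation $\pi_{\omega}$ transports these identities to $F$. By the lemma on cyclic and separating vectors, $F\Omega_{\omega}$ is then again cyclic and separating for $\mathcal{A}$, so $S(\omega_{F}\,\|\,\omega) = -\langle \Omega_{\omega},\, \log\Delta_{F\Omega_{\omega},\Omega_{\omega}}\,\Omega_{\omega}\rangle$ is well defined.

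The core algebraic step is to strip the excitation off the relative modular operator. Writing $F\Omega_{\omega} = F\,(F^{*}F\Omega_{\omega})$ and $\Omega_{\omega} = F\,(F^{*}\Omega_{\omega})$, Eq.~\eqref{eq: unit1} gives $\Delta_{F\Omega_{\omega},\Omega_{\omega}} = F\,\Delta_{\Omega_{\omega},\,F^{*}\Omega_{\omega}}\,F^{*}$, and Proposition~\ref{prop: ciclico} (taking the commutant unitary to be $\mathbb{1}$) identifies $\Delta_{\Omega_{\omega},\,F^{*}\Omega_{\omega}} = \Delta_{\Omega_{\omega}}$, hence $\Delta_{F\Omega_{\omega},\Omega_{\omega}} = F\,\Delta_{\Omega_{\omega}}\,F^{*}$. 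Unitarity of $F$ then yields $\log\!\big(F\Delta_{\Omega_{\omega}}F^{*}\big) = F\log\!\big(\Delta_{\Omega_{\omega}}\big)F^{*}$, so that
\begin{equation*}
    S(\omega_{F}\,\|\,\omega) = -\big\langle \Omega_{\omega},\, F\,\log(\Delta_{\Omega_{\omega}})\,F^{*}\,\Omega_{\omega}\big\rangle .
\end{equation*}

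Next I would introduce the modular Hamiltonian $K_{\Omega_{\omega}} = -\log\Delta_{\Omega_{\omega}}$ and use Stone's theorem to write $-\log\Delta_{\Omega_{\omega}} = i\,\frac{d}{dt}\big|_{t=0}\Delta_{\Omega_{\omega}}^{it}$. Substituting this, inserting the identity $\Delta_{\Omega_{\omega}}^{-it}\Delta_{\Omega_{\omega}}^{it}$ and using $\Delta_{\Omega_{\omega}}^{it}\Omega_{\omega} = \Omega_{\omega}$, the entropy takes the form $S(\omega_{F}\,\|\,\omega) = i\,\frac{d}{dt}\big|_{t=0}\big\langle \Omega_{\omega},\, F\,\Delta_{\Omega_{\omega}}^{it}\,F^{*}\,\Delta_{\Omega_{\omega}}^{-it}\,\Omega_{\omega}\big\rangle$. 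This is where the standing hypothesis on $\omega$ enters: since the basis polarisation $S$ commutes with the one-parameter group $V_{t}$ (or, more generally, the modular flow acts geometrically as in the Bisognano--Wichmann situation), the lemma on the modular automorphism gives $\Delta_{\Omega_{\omega}}^{it}\,\pi_{\omega}(B(f))\,\Delta_{\Omega_{\omega}}^{-it} = \pi_{\omega}(\alpha_{-t}(B(f))) = \pi_{\omega}(B(f_{t})) =: F_{t}$ with $f_{t} = V_{t}f$, and since $\Gamma f = f$ the conjugated operator $\Delta_{\Omega_{\omega}}^{it}F^{*}\Delta_{\Omega_{\omega}}^{-it}$ equals $F_{t}$ as well. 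Finally, using the GNS relation $\omega(\cdot) = \langle\Omega_{\omega},\pi_{\omega}(\cdot)\Omega_{\omega}\rangle$, the identity $B(f) = B^{*}(f)$, and Eq.~\eqref{eq: 2point} for the quasifree state $\omega = \varphi_{S}$,
\begin{equation*}
    \big\langle \Omega_{\omega},\, F F_{t}\,\Omega_{\omega}\big\rangle = \omega\big(B(f)B(f_{t})\big) = \omega\big(B^{*}(f)B(f_{t})\big) = (f,\, S f_{t})_{\mathcal{H}} ,
\end{equation*}
which gives $S(\omega_{F}\,\|\,\omega) = i\,\frac{d}{dt}\big|_{t=0}(f,\, S f_{t})_{\mathcal{H}}$, as claimed.

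The manipulations themselves are routine; the genuine content is carried by the two lemmas quoted from Araki's work — that $F\Omega_{\omega}$ is cyclic and separating with $\Delta_{F\Omega_{\omega},\Omega_{\omega}} = F\Delta_{\Omega_{\omega}}F^{*}$, and that under the commutation hypothesis the modular flow restricts to $V_{t}$ on the one-particle space — so the argument is essentially an assembly of these ingredients. The main points requiring care are the domain issues hidden in the passage $-\log\Delta_{\Omega_{\omega}} = i\frac{d}{dt}|_{t=0}\Delta_{\Omega_{\omega}}^{it}$ and in differentiating $t\mapsto\langle\Omega_{\omega},F\Delta_{\Omega_{\omega}}^{it}F^{*}\Delta_{\Omega_{\omega}}^{-it}\Omega_{\omega}\rangle$ at $t=0$, and — when the Bisognano--Wichmann route is used in place of $[V_{t},S]=0$ — verifying that the geometric modular flow genuinely acts on the test-function space so that $f_{t}$ is well defined.
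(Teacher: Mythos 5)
Your proposal is correct and follows essentially the same route as the paper's own derivation: establishing unitarity and self-adjointness of $F$ from $\Gamma f = f$ and the normalisation $(f,f)_{\mathcal{H}}=2$, stripping the excitation off the relative modular operator via Proposition~\ref{prop: ciclico} and Eq.~\eqref{eq: unit1} to get $\Delta_{F\Omega_{\omega},\Omega_{\omega}} = F\Delta_{\Omega_{\omega}}F^{*}$, passing to the modular Hamiltonian, and evaluating the resulting two-point function with Eq.~\eqref{eq: 2point}. The closing remarks on domain issues and on the geometric action of the modular flow are appropriate caveats but do not change the argument.
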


\begin{rem}
Our result was derived for general self-dual CAR algebras, but we can of course specialize it to the case of Dirac or Majorana fields since, as we have outlined in Section \ref{sec: quantisation}, the field algebras in that case are self-dual CAR algebras with $\mathcal{H}$ representing the space of solutions of the Dirac equation on the globally hyperbolic spacetime $M$.
\end{rem}
\begin{rem}
The first thing we may notice, is that our result depends on the inner product of the Hilbert space on which our self-dual CAR algebra is constructed. This is in analogy with the coherent excitation in the bosonic case, where the final result was depending just on the symplectic form. In this sense, both relative entropies are computed just using the canonically defined structure $s$ on the underlying space $(\mathbf{K},s)$ (see comment at the end of Section \ref{sec: SDCAR}). This result (together with the comment at the end of Section \ref{sec: CohFer}), suggests that the analogous of the coherent excitation in the bosonic case is, in the fermionic case, the type of excitation that we are considering. A summary of this, is reported in the following table:\\\\
\begin{tabular}{|l|l|l|l|}
\hline
\rule[-4mm]{0mm}{1cm}
Free scalar field (CCR) & $(\mathbf{P, \sigma(\cdot, \cdot)})$ & Coherent excitation: $\omega_{k}$ & $S(\omega_k \| \omega) \propto \sigma(f_t, f)$\\
\hline
\rule[-4mm]{0mm}{1cm}
Free Dirac/Majorana field (CAR) & $(\mathcal{H}, (\cdot, \cdot)_{\mathcal{H}})$ & Unitary field excitation: $\omega_F$ & $S(\omega_F \| \omega) \propto (f, S f_t)_{\mathcal{H}}$\\
\hline
\end{tabular}
\end{rem}

\section{Multiple Unitary Fermion Excitation of the Vacuum}
In this and in the following section, we aim at generalizing the result obtained in Prop. \ref{prop: 1}, to more general types of excitations. The first generalization is to the case of field polynomials, namely we consider a product of such unitary field operators and we want to find the corresponding expression for the relative entropy for this type of excitation.\\
Let us consider a set of vectors $\{f^{(i)}\}_{i=1,\cdots,N} \in \mathcal{H}$ such that $\Gamma f^{(i)}= f^{(i)}$ and, for simplicity, assuming that the bounded operator associated to $\omega$ is $S > 0$:
\begin{equation*}
    (f^{(i)}, S f^{(j)}) = 0 \hspace{20pt} \mathrm{for} j \neq i
\end{equation*}
Taking the same faithful and quasifree state, with the properties that we discussed above, we get, in the GNS representation, the multiply excited state:

\begin{equation*}
    F_1 \cdot \cdots \cdot F_N \ket{\Omega_{\omega}} \doteq \ket{\Psi_{1,\cdots,N}}
\end{equation*}

where once again:

\begin{equation*}
    F_i \doteq \pi_{\omega}(B(f^{(i)}))
\end{equation*}

 Since $\ket{\Psi_{1,\cdots,N}}$ is again cyclic and separating for $\mathcal{A}$, we can use Araki's formula to determine the relative entropy:

 \begin{equation*}
     \begin{aligned}
         S(\omega_{F_1\cdots F_N}|| \omega) &= i \frac{d}{dt}\bigg|_{t=0}\braket{\Omega_{\omega}}{F_1  \dots   F_N \, F'_N  \cdots   F'_1\, \Omega_{\omega}} = i \frac{d}{dt}\bigg|_{t=0} \omega\big(B(f^{(1)}) \cdots B(f^{(N)}) B(f^{(N)}_t) \cdots B(f^{(1)}_t) \big)
     \end{aligned}
 \end{equation*}

 where $F'_i \doteq \pi_{\omega}[B(f^{(i)}_t)]$.\\
 We use now our assumption of the state $\omega$ to be quasi-free, to write the above $2N$-point function as a sum of products of $N$ 2-point functions. In particular, from our choice of the vectors $\{f^{(i)}\}_{i=1,\cdots,N} \in \mathcal{H}$, the only non vanishing contributions come from permutations that couple a non-primed and a primed index. Precisely, introducing the following convention in order to simplify notation:

 \begin{equation*}
     \begin{aligned}
   \omega_{ij} &= \omega\big(B(f^{(i)}) B(f^{(j)})\big)\\
    \omega_{ij'} &= \omega\big(B(f^{(i)}) B(f^{(j)}_t)\big) = \omega\big(B(f^{(i)}) B(f^{'(j)})\big)\\
     \omega_{i'j'} &= \omega\big(B(f^{(i)}_t) B(f^{(j)}_t)\big) = \omega\big(B(f^{'(i)}) B(f^{'(j)}_t)\big)\\
     \end{aligned}
 \end{equation*}

we have that:

\begin{equation*}
    \omega_{1\cdots N\, N'\cdots 1'} = \sum_{\pi \in P_N} (-1)^{\mathrm{sign}(\pi)} \prod_{i=1}^N \omega_{i \pi(i')}
\end{equation*}

which can be computed using the result of the previous section.

\begin{rem}
   Actually we would need to consider the sign of the permutation that realizes:

   \begin{equation*}
       \{1,\cdots,N,N',\cdots,1'\} \mapsto \{1,1', 2, 2', \cdots, N,N'\}
   \end{equation*}

   However, this is attained with a number of steps:

   \begin{equation*}
       \sum_{l=1}^N 2(N-l) = 2 N^2 - N(N+1) = N(N-1)
   \end{equation*}
   which is always even. 
\end{rem}

As a consequence, for the relative entropy, we have:
\begin{equation*}
    S(\omega_{F_1\cdots F_N}|| \omega) = i \frac{d}{dt}\bigg|_{t=0} \sum_{\pi \in P_N} (-1)^{\mathrm{sign}(\pi)} \prod_{i=1}^N \omega_{i \pi(i')}
\end{equation*}
 Where we have kept the above notation. Moreover, from the above orthogonality condition, we compute:
 \begin{align*}
     S(\omega_{F_1\cdots F_N}|| \omega) &= i \sum_{i=1}^N \frac{d}{dt}\bigg|_{t=0} \omega(B(f^{(i)}) B(f_t^{(i)}))\\
     &= \sum_{i = 1}^N S(\omega_{F_i} \| \omega)
 \end{align*}
 In particular, a general polynomial excitation of $\omega$, of this kind, gives a relative entropy that is still computable just in terms of the inner product that we have on $\mathcal{H}$ and in particular just in terms of the result for a single unitary field excitation.

\section{Extension to more general types of excitations}
The result in Prop. \ref{prop: 1} can be generalized to another type of excitation, with respect to a more general type of test functions. In this section we present two of these examples.

\subsection{Standard subspaces for a fermionic QFT}
Let $(\mathcal{H},\Gamma)$ be the Hilbert space of solutions of the Dirac equation giving rise, for simplicity, to a Majorana field algebra. Therefore, the condition on the spinor test function $\Gamma f = f$, makes the element $B(f) \in \mathfrak{A}_{SDC}(\mathcal{H},\Gamma)$ a unitary fermionic field. However, if we choose a particular representation of the Dirac algebra, we obtain:
\begin{equation*}
    \Gamma f = \overline{f}
\end{equation*}
Therefore, the condition for $B(f)$ to be unitary, is that $f \in \mathcal{H}$ should be real. Let us call $H$ the real closed (see Remark \ref{rem: 3.1.1.}) linear subspace of $\mathcal{H}$ containing the real spinor test functions. Then, from the discussion outlined at the beginning of Section \ref{sec: CohBos}, allows us to assume $H$ to be a closed, real standard subspace of $\mathcal{H}$.\\
Then, by considering elements $g \in H + iH$ such that:
\begin{equation*}
    g = f + if \hspace{20pt} \mathrm{for} \,\, f \in H
\end{equation*}
we have in particular:
\begin{equation*}
    B(g) = (1+i)B(f)
\end{equation*}
That we can check to be unitary, after a proper choice of the normalization:
\begin{equation*}
    B(g) B^*(g) = 2 B^2(f) = (f,f)_{\mathcal{H}} \mathbb{1}
\end{equation*}
namely, if we pick $f$ such that $(f,f)_{\mathcal{H}} = 1$. For this reason, we may as well compute the relative entropy between a state $\omega$ (with the properties discussed above) over $\mathfrak{A}_{SDC}(\mathcal{H},\Gamma)$ and another state $\omega_{\Tilde{F}}$ obtained by exciting the corresponding vacuum state using such a $B(g)$. In particular, in the GNS representation of $\omega$, let us call $\Tilde{F} := \pi_{\omega}(B(g))$. Then, repeating the steps performed in the first section:
\begin{align*}
    S(\omega_{\Tilde{F}}\| \omega) &= i \frac{d}{dt}\bigg|_{t=0} \braket{\Omega_{\omega}}{\Tilde{F} \Delta_{\Omega_{\omega}}^{it} \Tilde{F}^* \Omega_{\omega}}\\
    &= i \frac{d}{dt}\bigg|_{t=0} \braket{\Omega_{\omega}}{(1+i)F(1-i)F_t \Omega_{\omega}}\\
    &= 2 i \frac{d}{dt}\bigg|_{t=0} \braket{\Omega_{\omega}}{F F_t \Omega_{\omega}}\\
    &= 2 S(\omega_F \| \omega)
\end{align*}

\subsection{Unitary exponential excitation}\label{sec: CohFer}
Consider in the abstrac self-dual CAR algebra, not necessarily a Dirac/Majorana algebra, for $f \in \mathcal{H}$ such that $\Gamma f = f$, the following element:
\begin{equation*}
    e^{i B(f)} = \sum_{k = 0}^{+\infty} \frac{(i B(f))^k}{k !}
\end{equation*}
The first thing we may notice is that, from the $*$-invariance of the corresponding $B(f)$, we have:
\begin{equation*}
    \big( e^{i B(f)} \big)^* = e^{-i B(f)}
\end{equation*}
As a consequence, using the Baker-Campbell-Hausdorff formula, we can compute:
\begin{align*}
    \big( e^{i B(f)} \big)^* e^{i B(f)} &= \mathbb{1}
\end{align*}
Therefore, the excitation induced by $e^{i B(f)}$ (that from now on we call \textit{unitary exponential excitation}) will be represented by a unitary operator, once we represent the algebra as bounded operators over a Hilbert space. Moreover, using the properties of $B(f)$, we rewrite the unitary exponential excitation as:
\begin{align*}
    e^{i B(f)} &= \mathbb{1} + i B(f) +\frac{(i)^2}{2!}B(f)^2 + \frac{(i)^3}{3!}B(f)^3 + \dots\\
    &= \mathbb{1} + i B(f) +\frac{(i)^2}{2!} \mathbb{1} + \frac{(i)^3}{3!}B(f) + \dots\\
    &= \bigg( 1 - \frac{1}{2!} +  \frac{1}{4!} + \dots\bigg) \mathbb{1} + i \bigg(1 - \frac{1}{3!} + \frac{1}{5!} + \dots\bigg) B(f)\\
    &= \cos(1) \mathbb{1} + i \sin(1) B(f)
\end{align*}
If we call $\omega_C$ the state obtained from $\omega$ by the unitary exponential excitation, we can compute:
\begin{align*}
    S(\omega_C \| \omega) &= i \frac{d}{dt}\bigg|_{t=0} \braket{\Omega_{\omega}}{e^{i B(f)} e^{-iB(f_t)} \Omega_{\omega}}\\
    &= i \frac{d}{dt}\bigg|_{t=0} \braket{\Omega_{\omega}}{\big(\cos(1) \mathbb{1} + i \sin(1) B(f)\big) \big(\cos(1) \mathbb{1} - i \sin(1) B(f_t)\big)\Omega_{\omega}}\\
    &= i \frac{d}{dt}\bigg|_{t=0}\sin^2(1) \braket{\Omega_{\omega}}{B(f) B(f_t) \Omega_{\omega}}\\
    &= i \sin^2(1) \frac{d}{dt}\bigg|_{t=0} (f, S f_t)_{\mathcal{H}}\\
    &= \sin^2(1) S(\omega_F \| \omega)
\end{align*}
Therefore, in the case of a unitary exponential excitation, the relative entropy can again be computed just in terms of the single unitary field excitation.\\
In this sense, despite the analogy of this case with the bosonic coherent excitation, it seems more fundamental to consider the single fermionic unitary excitation as the fermionic analogue of the bosonic case.

\section{Comparison with Von Neumann relative entropy}
To check our result, we shall compute the relative entropy using Araki's formula, for a self dual CAR algebra constructed over a finite dimensional Hilbert space. In this context, we know it should coincide with the relative entropy computed using the usual relative entropy formula due to von Neumann.\\
Let $(\mathcal{H}, \Gamma)$ be such that $\dim \mathcal{H} < +\infty$ and even. Construct over it the self dual CAR algebra $\mathfrak{A}_{SDC}(\mathcal{H},\Gamma)$. Morever, assume that we have a (strongly continuous) one parameter group of unitaries $V_t$, describing a dynamics over $\mathcal{H}$. As explained at the beginning of Section \ref{sec: risultati}, this raises to a $*$-automorphism $\alpha_t$ over $\mathfrak{A}_{SDC}(\mathcal{H},\Gamma)$. The existence of such an automorphism, allows to define notions like ground and KMS states over the abstract self dual CAR algebra.\\

Therefore, let us pick a state $\omega$ over $\mathfrak{A}_{SDC}(\mathcal{H},\Gamma)$ assumed to be a ground state that is also quasifree leading to a Fock representation (see Lemma \ref{lem: FockFer}). Let us call $P$ the associated basis projection over $(\mathcal{H},\Gamma)$, and perform the corresponding \textit{GNS construction} over $\mathfrak{A}_{SDC}(\mathcal{H},\Gamma)$. This leads to a triple of Fock-Hilbert space, Fock-representation map and a corresponding vacuum vector, all denoted as $(\mathcal{K}_{\omega}, \pi_{\omega}, \Omega_{\omega})$. A ground state always exists, as the assumption of $\mathcal{H}$ having a dynamics $V_t$ generated by a self-adjoint operator $\mathbf{h}$ that anticommutes with $\Gamma$, allows us to take $P$ to be the projection onto the positive part of the spectrum of $\mathbf{h}$. Then, the corresponding quasifree state associated to $P$, is the ground state:
\begin{align*}
    -i \partial_t \omega(B^*(g) \alpha_t(B(g)))|_{t=0} &= -i \partial_t|_{t=0} (g, P V_{-t} g)_{\mathcal{H}}\\
    &= -i \partial_t|_{t=0} (g, P e^{i t \mathbf{h}} g)_{\mathcal{H}}\\
    &= -i \partial_t|_{t=0} (g, \sum_{n} e^{it E^+_n} b_n^+ \Psi_n^+)_{\mathcal{H}}\\
    &= -i \partial_t|_{t=0} (g, \sum_{n} e^{it E^+_n} b_n^+ \Psi_n^+)_{\mathcal{H}}\\
    &= \sum_{n} \sum_{m} \overline{b}_m^+ b_n^+ \delta_{nm} E_n^+ \geq 0
\end{align*}
Where we have expanded a general $g \in \mathcal{H}$ on an eigenbasis of $\mathbf{h}$:
\begin{equation*}
    g = \sum_{n} b_n^- \Psi_n^- + b_n^+ \Psi_n^+ + b_n^0 \Psi_n^0
\end{equation*}
where the sum is finite and $b_n^{\pm} \in \mathbb{C}$. Moreover, we have denoted by $E_n^{\pm}$ the positive/negative eigenvalues of $\mathbf{h}$.\\
As a ground state, $\omega$ is also stationary. This means that $\alpha^*_t\omega = \omega$, where we have raised the action of the automorphism $\alpha_t$ to state functionals:
\begin{equation*}
    \alpha^*_t \omega(A) := \omega( \alpha_t A) \hspace{20pt} \forall A \in \mathfrak{A}_{SDC}(\mathcal{H},\Gamma),
\end{equation*}
The stationarity implies, from the uniqueness of the GNS construction up to unitary equivalence, that $\alpha_t$ is implementable in $\mathcal{K}_{\omega}$ by a one-parameter family of unitaries $U_t = e^{-it H}$. In particular, $H$ is the second quantization of the one-particle Hamiltonian $\mathbf{h}$, as $\mathcal{K}_{\omega}$ is the Fock space consructed over the one-particle Hilbert space $P \mathcal{H}$.\\
Moreover, as $\mathcal{H}$ is finite dimensional and we are dealing with fermions: $\dim \mathcal{K}_{\omega} < \infty$. Then, we can consider a general density matrix, associated to a Gibbs state of inverse temperature $\beta > 0$, on $\mathcal{K}_{\omega}$:
\begin{equation*}
    \rho_{\beta} = \frac{e^{-\beta H}}{\Tr(e^{-\beta H} )}
\end{equation*} 
and this is well defined, i.e. is of trace class, as $\dim \mathcal{K}_{\omega} < \infty$.\\

Having a density matrix of a Gibbs state in this finite dimensional setting, allows us to go back from the representation to the abstract algebra:
\begin{equation*}
    \rho_{\beta} \longrightarrow \omega_{\beta}
\end{equation*}
defining $\omega_{\beta}$ as a KMS state over $\mathfrak{A}_{SDC}(\mathcal{H}, \Gamma)$. The argument behind this, is based on the fact that the KMS state $\omega_{\beta}$ is defined over a finitely generated algebra $\mathfrak{A}_{SDC}(\mathcal{H}, \Gamma)$. 

The finite dimension of the algebra, gives that $\omega_{\beta}$ is quasiequivalent to $\omega$. This follows from the fact that the GNS constructions with respect to them, leads to finite dimensional vector spaces that, as such, are all isomorphic to some $\mathbb{C}^n$. Therefore, we can define an isomorphism (denoting the GNS triple associated to $\omega_{\beta}$ as $(\mathcal{K}_{\beta}, \pi_{\beta}, \Omega_{\beta})$):
\begin{equation*}
    B: \mathcal{K}_{\beta} \to \mathcal{K}_{\omega}
\end{equation*}
But then, whenever we take a normal state $\phi$ in the representation $\pi_{\beta}$, denoted as $\phi \in \mathfrak{S}^{(\pi_{\beta})}(\mathfrak{A}_{SDC}(\mathcal{H}, \Gamma))$, we have:
\begin{align*}
    \phi(A) &= \Tr_{\beta}\big( \rho_{\phi} \pi_{\beta}(A) \big)\\
    &= \Tr_{\omega}\big( (B^{-1})^*\rho_{\phi} \pi_{\beta}(A) B^{-1}\big)\\
    &= \Tr_{\omega}\big( (B^{-1})^*\rho_{\phi} B^{-1}\pi_{\omega}(A) B B^{-1}\big)\\
    &= \Tr_{\omega}\big( (B^{-1})^*\rho_{\phi} B^{-1}\pi_{\omega}(A)\big)
\end{align*}
for any $A \in \mathfrak{A}_{SDC}(\mathcal{H}, \Gamma)$. Therefore, $\phi \in \mathfrak{S}^{(\pi_{\omega})}(\mathfrak{A}_{SDC}(\mathcal{H}, \Gamma)) $. In this way, showing with the analogous argument also the opposite inclusion, we have $\mathfrak{S}^{(\pi_{\beta})}(\mathfrak{A}_{SDC}(\mathcal{H}, \Gamma)) = \mathfrak{S}^{(\pi_{\omega})}(\mathfrak{A}_{SDC}(\mathcal{H}, \Gamma))$ proving the quasiequivalence. The immediate consequence, is that $\omega_{\beta}$ has an associated density matrix in $\mathcal{K}_{\omega}$, as it is normal with respect to its representation by considering in $\mathcal{K}_{\beta}$ simply $\ket{\Omega_{\beta}}\bra{\Omega_{\beta}}$.\\

Let us assume, that also the thus constructed state $\omega_{\beta}$ is quasifree and that its associated basis polarization is $S$. Therefore, by considering $\omega_{\beta}$ over $\mathfrak{A}_{SDC}(\mathcal{H}, \Gamma)$ we can perform another \textit{GNS construction}, leading to:  $(\mathcal{K}_{\beta}, \pi_{\beta}, \Omega_{\beta})$.\\\\
After this necessary introduction of the setup, let us go back to our elements of the self dual CAR algebra $B(f)$, for $f \in \mathcal{H}$ such that $\Gamma f = f$, and introduce the following convenient notation:
\begin{align*}
    F_g &:= \pi_{\omega}(B(f))\\
    F_{\beta} &:= \pi_{\beta}(B(f)).
\end{align*}
We aim at computing the relative entropy between the KMS state $\omega_{\beta}$ and the one obtained by acting on it with $B(f)$ that we will denote as:
\begin{equation*}
    \omega_{F_{\beta}}(A) := \omega_{\beta}(B(f) A B(f)) \hspace{20pt} \forall A \in \mathfrak{A}_{SDC}(\mathcal{H}, \Gamma)
\end{equation*}\\\\
We start by computing the von Neumann relative entropy. For that purpose, we need to derive the form of the density matrix in the Hilbert space $\mathcal{K}_{\omega}$ associated to our considered unitary excitation of the KMS state. To derive it, notice that the quasiequivalence discussed above, gives:
\begin{align*}
    \omega_{F_{\beta}}(\cdot) &= \omega_{\beta}(B(f) \cdot B(f))\\
    &= \Tr_{\omega}\big( \rho_{\beta} F_g \pi_{\omega}(\cdot) F_g \big)\\
    &= \Tr_{\omega}\big( F_g \rho_{\beta} F_g \pi_{\omega}(\cdot) \big)\\
    &= \Tr_{\omega}\big( \sigma_{\beta} \pi_{\omega}(\cdot) \big)
\end{align*}
where we have denoted by $\sigma_{\beta}$ the density matrix associated with the new state in $\mathcal{K}_{\omega}$.\\
Therefore, we compute the corresponding von Neumann relative entropy in $\mathcal{K}_{\omega}$:
\begin{align*}
   S_{vN}(\rho_{\beta}, \sigma_{\beta}) := S_{vN} &= \Tr\bigg( \rho_{\beta} (\log \rho_{\beta} - \log \sigma_{\beta}) \bigg)\\
    &= \Tr\bigg( \frac{e^{-\beta H}}{\Tr(e^{-\beta H})} \bigg(\log(e^{-\beta H}) - \log(\Tr(e^{-\beta H})) - \log \big( F_g \rho_{\beta} F_g \big) \bigg) \bigg).
\end{align*}
Now, as the $F_g$ is a unitary operator, we can drag it out of the logarithm:
\begin{align*}
    \log(F_g \rho_{\beta} F_g) &= \log(1+ (F_g \rho_{\beta} F_g -1))\\
    &= \sum_{n=0}^{\infty} \frac{(-1)^{n+1}}{n!}(F_g \rho_{\beta} F_g -\mathbb{1})^{n}\\
    &= \sum_{n=0}^{\infty} \frac{(-1)^{n+1}}{n!} (F_g(\rho_{\beta} -\mathbb{1})F_g)^{n}\\
    &= \sum_{n=0}^{\infty} \frac{(-1)^{n+1}}{n!} F_g(\rho_{\beta} -\mathbb{1})^{n}F_g\\
    &= F_g \bigg(\sum_{n=0}^{\infty} \frac{(-1)^{n+1}}{n!} (\rho_{\beta} -\mathbb{1})^{n}\bigg)F_g\\
    &= F_g \log(\rho_{\beta}) F_g.
\end{align*}
The series expansion is well defined as:
\begin{align*}
    \| F_g \rho_{\beta} F_g - \mathbb{1} \|_{op}^2 &= \| F_g (\rho_{\beta} - \mathbb{1}) F_g \|_{op}^2\\
    &= \sup_{\|v \|_{\mathcal{K_{\omega}}} = 1} \| F_g (\rho_{\beta}  - \mathbb{1})F_g v\|_{\mathcal{K}_{\omega}}^2\\
    &= \sup_{\|v \|_{\mathcal{K_{\omega}}} = 1} \|(\rho_{\beta}  - \mathbb{1})F_g v\|_{\mathcal{K}_{\omega}}^2\\
    &= \sup_{\|v \|_{\mathcal{K_{\omega}}} = 1} \| \rho_{\beta} v - v\|_{\mathcal{K}_{\omega}}^2 = \| \rho_{\beta} - \mathbb{1} \|_{op}^2\\
    &= \sup_{\|v \|_{\mathcal{K_{\omega}}} = 1} \langle (\rho_{\beta} v - v), (\rho_{\beta} v - v)\rangle\\
    &= \sup_{\|v \|_{\mathcal{K_{\omega}}} = 1} (\langle \rho_{\beta} v ,\rho_{\beta} v \rangle + \langle  v , v \rangle - 2\langle v ,\rho_{\beta} v \rangle)\\
    &= \sup_{\|v \|_{\mathcal{K_{\omega}}} = 1} (\langle v ,\rho_{\beta}^2 v \rangle + \langle  v , v \rangle - 2\langle v ,\rho_{\beta} v \rangle)\\
    &\leq \sup_{\|v \|_{\mathcal{K_{\omega}}} = 1} (1 - \langle v ,\rho_{\beta} v \rangle)\\
    &< 1.
\end{align*}
Where we have used that a density matrix is positive definite, self-adjoint and has unit trace so $\rho_{\beta}^2 \leq \rho_{\beta}$. Moreover, at the third step we have used that $F_g$ is unitary, at the fourth that still by unitarity we have $\| F_g v \|_{\mathcal{K}_{\omega}} = 1$ allowing us to include everything in the supremum and at the last step the continuity of the expectation value together with the compacteness of the unit ball in finite dimension.\\
Going back to the computation of the von Neumann relative entropy:
\begin{align*}
    S_{vN} &= \Tr\bigg( \frac{e^{-\beta H}}{\Tr(e^{-\beta H})} \bigg(\log(e^{-\beta H}) - \log(\Tr(e^{-\beta H})) - F_g \log (\rho_{\beta}) F_g  \bigg) \bigg)\\
    &= \Tr\bigg( \frac{e^{-\beta H}}{\Tr(e^{-\beta H})} \bigg(\log(e^{-\beta H}) - \log(\Tr(e^{-\beta H})) - F_g \big(\log (e^{-\beta H}) - \log(\Tr(e^{-\beta H}))\big) F_g  \bigg) \bigg)\\
    &= \Tr\bigg( \frac{e^{-\beta H}}{\Tr(e^{-\beta H})} \bigg(\log(e^{-\beta H}) - F_g \log (e^{-\beta H}) F_g \bigg) \bigg)\\
    &= \Tr\bigg( \frac{e^{-\beta H}}{\Tr(e^{-\beta H})} \big(-\beta H + F_g (\beta H) F_g \big) \bigg)\\
    &= \beta \Tr\bigg( \frac{e^{-\beta H}}{\Tr(e^{-\beta H})} \big(F_g [H, F_g] \big) \bigg).
\end{align*}\\\\
On the other hand, let us now compute the relative entropy using our result starting from Araki's formula:

\begin{align*}
    S_A &:= S(\omega_{F_{\beta}}|| \omega_{\beta})\\
    &= i\frac{d}{dt}\bigg|_{t=0} \braket{\Omega_{\beta}}{F_{\beta} \Delta_{\beta}^{it} F_{\beta}\Omega_{\beta}}.
\end{align*}
In the derivation of our result, we were assuming that the modular flow was the raising of the dynamics on $\mathcal{H}$. Therefore, the abovely discussed dynamics $V_t = e^{-it \mathbf{h}}$ is related to the modular flow in the GNS construction of $\omega_{\beta}$ as follows:
\begin{align*}
    \Delta_{\Omega_{\beta}}^{it} B(f) \Delta_{\Omega_{\beta}}^{-it}  = B(e^{-it \beta \mathbf{h}} f)
\end{align*}

Therefore, the quasiequivalence between $\omega_{\beta}$ and $\omega$, gives:
\begin{align*}
    S_{A} &= i\frac{d}{dt}\bigg|_{t=0} \omega_{\beta}(B(f) B(e^{-it\beta \mathbf{h}}f))\\
    &= i\frac{d}{dt}\bigg|_{t=0} \Tr\big( \rho_{\beta} F_g F_g^t \big)\\
    &= i\frac{d}{dt}\bigg|_{t=0} \Tr\big( \rho_{\beta} F_g e^{-it \beta H} F_g e^{it \beta H} \big)\\
    &= i \beta \Tr\big( \rho_{\beta} F_g (-iH F_g + i F_g H) \big)\\
    &= \beta \Tr\big( \rho_{\beta} F_g [H,F_g] \big)\\
    &= \beta \Tr\bigg( \frac{e^{-\beta H}}{\Tr(e^{-\beta H})} \big(F_g [H,F_g] \big) \bigg)
\end{align*}
where now $H$ is the second quantization of $\mathbf{h}$ with respect to the ground state.\\
The abstract steps performed in the proof for the Araki entropy, are outlined in the following diagram:
\begin{figure}[H]
 		\centering
 		\includegraphics[width=1.20	\columnwidth]{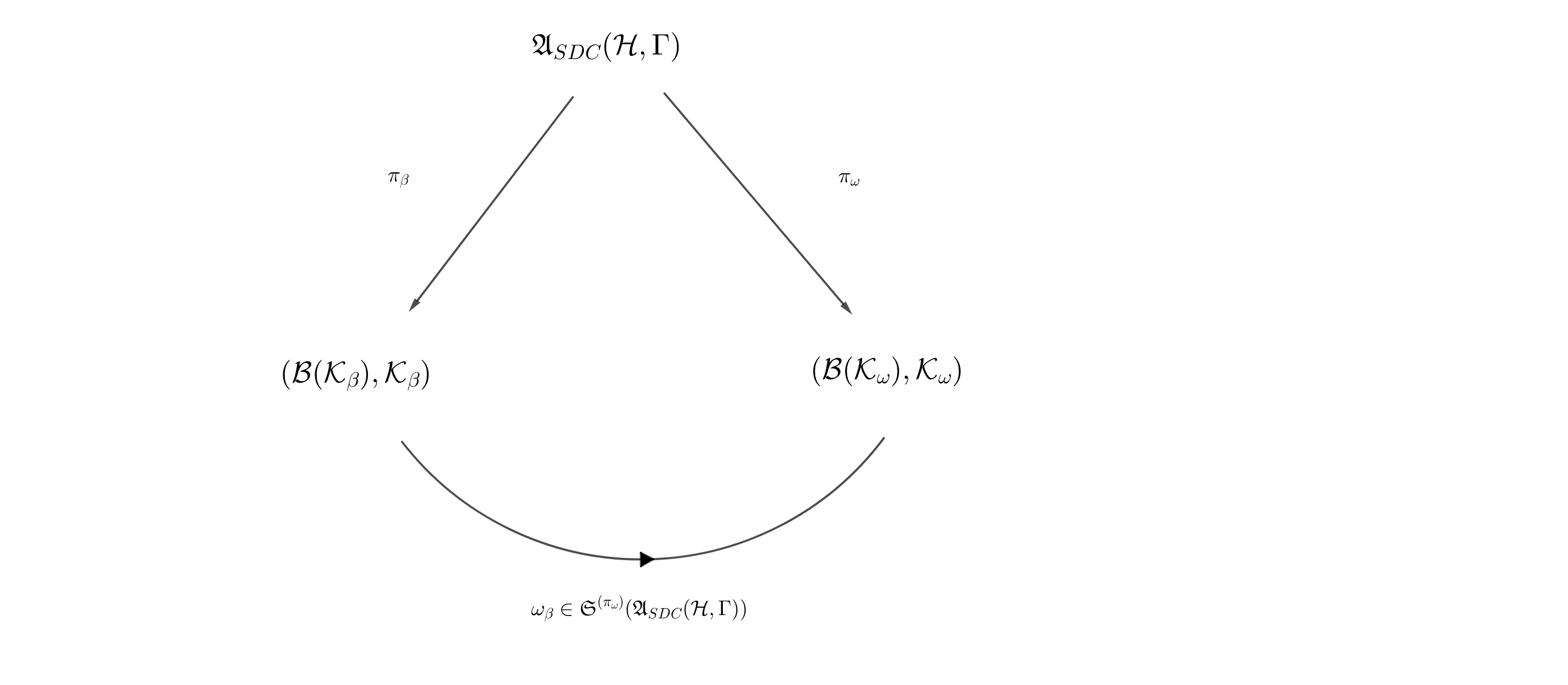}
\end{figure}
In this way, we have proven the equivalence between our result and the von Neumann relative entropy, providing a consistency check for it.\\
The result just presented, gives even a simpler way to compute relative entropies in the finite dimensional context, we just need a specific dynamics over $\mathcal{H}$. Once that is known, the relative entropy becomes just a derivative, with respect to the parameter of the dynamics, of the inner product over $\mathcal{H}$. In this sense, this result shows once more the power of working with abstract algebras, where we can perhaps move from one representation to a more convenient one to compute the same quantity.

\section{Relative entropy for Majorana fields on ultrastatic spacetimes}
In this section we aim at applying our result for the relative entropy between a KMS state and the state obtained from it by the unitary excitation $B(f)$. These, are taken, for simplicity, as states over the abstract algebra of Majorana fields. However, as the explicit form of the modular automorphism on the corresponding von Neumann algebra is generally unknown, we will consider a configuration in which the underlying Hilbert space of solutions of Dirac's equation has a canonically implemented unitary dynamical evolution. This is achieved by assuming the spacetime $(M,g)$, on which the elements of $\mathcal{H}$ are defined, to be ultrastatic.\\\\
An ultrastatic spacetime $(M,g)$ is:
\begin{equation*}
    M = \mathbb{R} \times \mathcal{C} \hspace{20pt} g = dt^2 - h(\mathbf{x})
\end{equation*}
where $h(\mathbf{x})$ is a Riemannian metric over $\mathcal{C}$, that is assumed to be a compact Cauchy surface, and $\mathbf{x}$ denotes the coordinates on it. In order to define the notion of spinors, we further assume $\mathcal{C}$ to be parallelizable. We put this as an assumption as, in general, we do not fix the dimension of $\mathcal{C}$ to $3$ where this will be the case by Stiefel's theorem \cite{Stiefel1935/36}.\\
Under such assumptions, we can write the Dirac's equation $(-i\cancel{\nabla} + m)\psi = 0$ as:
\begin{align*}
    0 &= (i\gamma^0 \nabla_0 + i \gamma^j \nabla_j + m)\psi\\
    i \gamma^0 \partial_t \psi &= -(-i\sum_{j=1}^D\gamma_j \nabla_j + m)\psi\\
    i \partial_t \psi &= -(-i\sum_{j=1}^D \gamma_0 \gamma_j \nabla_j + \gamma_0 m)\psi\\
    &:= \mathbf{h} \psi
\end{align*}
where we have denoted by $D = \dim M$.\\
To find the solutions to this equation, we do the ansatz:
\begin{equation*}
    \psi(t, \mathbf{x}) = e^{-i\epsilon t} \Psi(\mathbf{x}).
\end{equation*}
We remark that such an ansatz is possible, just because of the ultrastatic assumption of the spacetime. Therefore, we can find the stationary solutions by solving the eigenvalue problem:
\begin{equation*}
    \epsilon \Psi(\mathbf{x}) = \mathbf{h} \Psi(\mathbf{x}).
\end{equation*}
Let us postpone the task of finding explicit solutions to this equation for later when we will study a specific model for the spacetime manifold. For the moment, we just call the corresponding Hilbert space of solutions $\mathcal{H}$. In this case, as there're no bounds on the energy, we will have $\dim \mathcal{H} = +\infty$.\\
As mentioned at the beginning, we will study Majorana fields. Therefore, the involution over $\mathcal{H}$ becomes:
\begin{equation*}
    \Gamma v = C \overline{v}
\end{equation*}
where $C$ is the charge conjugation operator that, by definition, satisfies:
\begin{equation*}
    C \gamma_a C^{-1} = -\overline{\gamma_a}
\end{equation*}
for $\gamma_a = \pi(l_a)$ a chosen representation of the Dirac algebra (see Appendix \ref{app}). As a consequence, if we compute:
\begin{align*}
    \Gamma^{-1} \mathbf{h} \Gamma &= -i \Gamma^{-1} \gamma_0 \Gamma \Gamma^{-1} \gamma_i \Gamma \nabla_i - \Gamma^{-1}\gamma_0 \Gamma m\\
    &= -i \gamma_0 \gamma_i \nabla_i + \gamma_0 m = -\mathbf{h}.
\end{align*}
Therefore, the involution $\Gamma$ maps eigenspinors with positive energy, to eigenspinors with negative energy. From this, we can already conclude that the spectrum must be symmetric with respect to $0$ and, from the form of $\mathbf{h}$, has a finite gap $(-m,m)$. Furthermore, from $[\Gamma, \mathbf{h}]_+ = 0$, follows that the associated dynamical evolution $V_t$, commutes with $\Gamma$. Therefore, we will be allowed to raise this map as a $*$-automorphism over the algebra constructed on $\mathcal{H}$. Finally, the spatial compacteness, implies the discretization of the allowed values for the momentum $p$ and consequently also of the energy spectrum.\\
Now, from the definition of $\Gamma$ and the form of $\mathbf{h}$, we define a projection $P$, over the space of solutions, corresponding to the projection into the positive energy eigenspinors of $\mathbf{h}$. Such a projection will, by construction, satisfy:
\begin{equation*}
    \Gamma P \Gamma = \mathbb{1} - P.
\end{equation*}
Therefore, picking the quasifree state $\omega$ over the abstract self dual CAR algebra, that one can construct over $(\mathcal{H}, \Gamma)$, associated with the above projection $P$, by performing a GNS construction, we obtain a Fock representation of the fields on the one-particle Hilbert space $\mathcal{H}_1 = P\mathcal{H}$.\\
We now specify this problem for a specific model of the spacetime, where we will solve the spatial Dirac equation.

\subsection{Majorana fields in 1+1 dimensions}
The goal of this subsection is to explicitly compute the relative entropy for a simple spacetime model. Namely, we will assume the spacetime to be $1+1$ dimensional.\\
First, we need to find stationary solutions to the Dirac equation, that corresponds to solve a $1$ dimensional differential equation with spinorial solutions in $\mathbb{C}^2$, as the fields are taken to be Majorana. Moreover, assuming the spatial sections to be compact intervals $\mathcal{C} \subset \mathbb{R}$, the equation becomes:
\begin{equation*}
    \epsilon \Psi(x) = \bigg(i \gamma_0 \gamma_i \frac{d}{dx} - \gamma_0 m\bigg) \Psi(x).
\end{equation*}
By the compacteness assumption, we can approach the problem as that of a particle in an infinite well potential, imposing vanishing boundary conditions (by conservation of probability).\\
\begin{figure}[H]
 		\centering
 		\includegraphics[width=1.00	\columnwidth]{Figura4.png}
\end{figure}

To simplify the computations, we choose as representation of the Dirac algebra:
\begin{align*}
    C &= \begin{pmatrix}
    1 & 0\\
    0 & 1
    \end{pmatrix} \hspace{20pt} A = \begin{pmatrix}
    0 & i\\
    -i & 0
    \end{pmatrix}\\
    \gamma_0 &= A \hspace{20pt} \gamma_1 = \begin{pmatrix}
    0 & i\\
    i & 0
    \end{pmatrix}\\
\end{align*}
that one can check to be a consistent choice. The first consequence is that $\Gamma v = \overline{v}$ for any solution $v$ of the Dirac equation. Finally, due to these choices, we make the space of solutions an Hibert space, by completing it with respect to the inner product:
\begin{equation*}
    \braket{f_1}{f_2}_{\mathcal{H}} := \int_{\mathcal{C}} f_1^{\dagger}f_2(x) dx
\end{equation*}
with $dx$ the Lebesgue measure. Namely, our Hilbert space of solutions is $\mathcal{H} = L^2(I, dx; \mathbb{C}^2)$ and we define on it the self dual CAR algebra $\mathfrak{A}_{SDC}(\mathcal{H}, \Gamma)$. Such an Hilbert space is of the kind that was constructed in Section \ref{sec: alternative}.\\

To solve the stationary Dirac equation, we start by finding the eigenvectors of the one particle hamiltonian:
\begin{align*}
    \mathbf{h} &= i \begin{pmatrix}
    -1 & 0\\
    0 & 1
    \end{pmatrix} \frac{d}{dx} - \begin{pmatrix}
    0 & im\\
    -im & 0
    \end{pmatrix}\\
    &= \begin{pmatrix}
    p & im\\
    -im & -p
    \end{pmatrix}
\end{align*}
where we have introduced the momentum operator $p := - i d/dx$. In this simple form, we can diagonalize the one-particle Hamiltonian to get the energy eigenvalues:
\begin{equation*}
    \epsilon_p^{\pm} = \pm \sqrt{p^2 + m^2}
\end{equation*}
for corresponding eigenvectors:
\begin{equation*}
    v^{\pm} = \frac{1}{\sqrt{2\big((\epsilon_p^{\pm})^2 - p \epsilon_p^{\pm}\big)}} \begin{pmatrix}
    -im\\
    p - \epsilon_p^{\pm}
    \end{pmatrix}.
\end{equation*}
One can even get an explicit form for the projection $P$:
\begin{equation*}
    P = \frac{1}{2((\epsilon_p^+)^2 - p \epsilon_p^+)}\begin{pmatrix}
    m^2 & -im(p-\epsilon_p^+)\\
    im(p-\epsilon_p^+) & (p-\epsilon_p^+)^2
    \end{pmatrix}
\end{equation*}
that, from our choice of $\Gamma$, can be checked to be compatible with $\Gamma P \Gamma = \mathbb{1}- P$.\\
For what concerns the $x$ dependence of the solutions, we need to solve the differential equation, in position space, for the infinite well potential. Assume for simplicity, without loss of generality, $\mathcal{C} = [0,a]$ for some $a \in \mathbb{R}$. Then, a general solution is:
\begin{equation*}
    \Psi_p^{\pm}(x) = A v_p^{\pm} e^{ipx} +  B v_p^{\pm} e^{-ipx}.
\end{equation*}
Imposing the vanishing boundary conditions:
\begin{equation*}
    \Psi^{\pm}_n(x) = A v_n^{\pm} \sin\bigg( \frac{\pi n}{a} x \bigg)
\end{equation*}
where we have already substituted:
\begin{equation*}
    p = \frac{\pi n}{a} \hspace{20pt} n \in \mathbb{N}.
\end{equation*}
Therefore, by the compacteness of $\mathcal{C}$, we have proven that the momenta are discretized, by the index $n$, and thus also the energy eigenvalues are:
\begin{equation*}
    \epsilon_{n}^{\pm} = \pm \sqrt{\bigg(\frac{\pi n}{a}\bigg)^2 + m^2}.
\end{equation*}
Moreover, by computing the inner product between two such eigenspinors:
\begin{equation*}
    \int_{\mathcal{C}} (\Psi^{s}_n(x))^{\dagger} \Psi^{s'}_{n'}(x) dx = \delta_{n,n'} \delta_{s,s'} |A|^2\frac{a}{2}
\end{equation*}
for $s,s' = \{+,-\}$. From the above orthogonality, fix the normalization to get:
\begin{equation*}
    \Psi^{\pm}_n(x) = \sqrt{\frac{2}{a}} v_p^{\pm} \sin\bigg( \frac{\pi n}{a} x \bigg).
\end{equation*}
In this way $\{ \Psi_n^+(x), \Psi_n^-(x) \}_{n \in \mathbb{N}}$ provides a basis for $\mathcal{H}$. For this reason, our $f \in \mathcal{H}$ such that $\Gamma f = f$, can be decomposed:
\begin{equation*}
    f(x) = \sum_{n \in \mathbb{N}} (1 + \Gamma)(a_n^+ \Psi_n^+(x) + a_n^- \Psi_n^-(x)).
\end{equation*}
assuming, without loss of generality, that this is nonzero. Moreover, from $[\mathbf{h}, \Gamma]_+ = 0$ we also see that:
\begin{align*}
    \mathbf{h} \Gamma \Psi_n^+ &= - \Gamma \mathbf{h} \Psi_n^+\\
    &= - \epsilon_n^+ \Gamma \Psi_n^+\\
    &= \epsilon_n^- \Gamma \Psi_n^+.
\end{align*}
That gives $\Gamma \Psi_n^+ = \chi_n \Psi_n^-$ for some $\chi_n \in \mathbb{C}$ that, by the orthonormality of the basis $\{ \Psi_n^+, \Psi_n^-\}_{n \in \mathbb{N}}$, is such that: $|\chi_n|^2 = 1$. Finally, from our choice of $\Gamma$ and the abovely computed eigenvectors, we conclude: $\chi_n = -1$. Therefore, we can rewrite:
\begin{equation*}
    f(x) = \sum_{n \in \mathbb{N}} (a_n^+ - \overline{a}_n^-) \Psi_n^+(x) + (a_n^- - \overline{a}_n^+) \Psi_n^-(x)
\end{equation*} 
\\\\
On $(\mathcal{H}, \Gamma)$, we construct the the self-dual CAR algebra $\mathfrak{A}_{SDC}(\mathcal{H}, \Gamma)$ on which we compute now the relative entropy. In particular, that will be done between a general quasifree $KMS$ state $\omega_{\beta}$, for the Majorana QFT over the $1+1$ dimensional ultrastatic spacetime,  with associated basis polarization $S$ and the corresponding state $\omega_{F_{\beta}}$ (where we are adopting the same notation as last section).\\
From our result in Prop. \ref{prop: 1}, we know we need to compute:
\begin{equation*}
    S(\omega_{\beta} \| \omega_{F_{\beta}}) = i\frac{d}{dt}\bigg|_{t=0} (f,Sf_{\beta t})_{\mathcal{H}}
\end{equation*}
for $f = \Gamma f$ and the time evolution is the one at the one-particle level, coming from that on the Fock space of the KMS state (that is why we have also denoted the $\beta$). In order to explicitly compute this expression, we need to know the explicit form of $S$. For that purpose, we first need to show that given a basis polarization, by doubling the Hilbert space, we can construct a basis projection. Therefore, let us define $\hat{\mathcal{H}} = \mathcal{H} \oplus \mathcal{H}$ and on it the involution $\hat{\Gamma} = \Gamma \oplus (- \Gamma)$. With respect to such a doubled Hilbert space, construct a self-dual CAR algebra $\mathfrak{A}_{SDC}(\hat{\mathcal{H}}, \hat{\Gamma})$ and notice that:
\begin{equation*}
    \mathfrak{A}_{SDC}(\mathcal{H}, \Gamma) \subset \mathfrak{A}_{SDC}(\hat{\mathcal{H}}, \hat{\Gamma})
\end{equation*}
As we can see $\mathcal{H}$ as a subset of $\hat{\mathcal{H}}$, identifying it with $\mathcal{H} \oplus 0$.\\
Now, let us consider over $\hat{\mathcal{H}}$ the following operator:
\begin{equation*}
    P_S = \begin{pmatrix}
    S & S^{1/2}(1-S)^{1/2}\\
    S^{1/2}(1-S)^{1/2} & (1 - S)
    \end{pmatrix}
\end{equation*}
Where $S$ is the basis polarization associated to the state $\omega$. But then, by direct computation using that $S^* = S$ and that $\Gamma S \Gamma = \mathbb{1} - S$:
\begin{align*}
    P_S^2 &= \begin{pmatrix}
    S & S^{1/2}(1-S)^{1/2}\\
    S^{1/2}(1-S)^{1/2} & (1 - S)
    \end{pmatrix} = P_S\\
    P_S &= P_S^*\\
    \hat{\Gamma} P_S \hat{\Gamma} &= 1 - P_S
\end{align*}
Where for the last equality, from $\Gamma S \Gamma = \mathbb{1} - S$, we get using the fact that $\Gamma$ is involutive:
\begin{equation*}
    (\Gamma S^{1/2} \Gamma) (\Gamma S^{1/2} \Gamma) = (\mathbb{1} - S)^{1/2}(\mathbb{1} - S)^{1/2}
\end{equation*}
That gives $\Gamma S^{1/2} \Gamma = (\mathbb{1} - S)^{1/2}$.\\
Since $P_S$ is a basis projection over $\hat{K}$, we take the associated quasifree state $\omega_{P_S}$ over $\mathfrak{A}_{SDC}(\hat{\mathcal{H}}, \hat{\Gamma})$ and notice that this is related to the initially chosen state $\omega$ by the following result of Araki (see Lemma $4.6$ in \cite{Araki:1971id}):
\begin{lem}
Let $S$, $P_S$, $\hat{\mathcal{H}}$, $\hat{\Gamma}$ as introduced above. Then, the restriction of the Fock state $\omega_{P_S}$ of $\mathfrak{A}_{SDC}(\hat{\mathcal{H}}, \hat{\Gamma})$ to $\mathfrak{A}_{SDC}(\mathcal{H}, \Gamma)$ is the quasifree state $\omega$.
\end{lem}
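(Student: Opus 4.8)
The plan is to verify the claimed identity by checking it on two-point functions, since a quasifree state is completely determined by its two-point function (as recalled after Lemma~\ref{Lemma 3.2}). So it suffices to show that for all $f,g\in\mathcal{H}$ (identified with $\mathcal{H}\oplus 0\subset\hat{\mathcal{H}}$) one has $\omega_{P_S}(B^*(f\oplus 0)B(g\oplus 0))=\omega(B^*(f)B(g))=(f,Sg)_{\mathcal{H}}$, together with the parallel statement that the restriction is still quasifree. The latter is essentially automatic: the restriction of a quasifree state to a sub-self-dual-CAR-algebra generated by a $\hat\Gamma$-invariant subspace is quasifree, because the factorization of $2n$-point functions into products of two-point functions is inherited verbatim. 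The generators $B(f\oplus 0)$ for $f\in\mathcal{H}$ indeed generate a copy of $\mathfrak{A}_{SDC}(\mathcal{H},\Gamma)$ inside $\mathfrak{A}_{SDC}(\hat{\mathcal{H}},\hat\Gamma)$ since $\hat\Gamma(f\oplus 0)=(\Gamma f)\oplus 0$ and the inner product and anticommutators match.

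First I would compute the two-point function of the Fock state $\omega_{P_S}$. By Lemma~\ref{Lemma 3.2} / the construction preceding Lemma~\ref{lem: FockFer}, a quasifree state with basis \emph{projection} $P_S$ has two-point function $\omega_{P_S}(B^*(\xi)B(\eta))=(\xi,P_S\eta)_{\hat{\mathcal{H}}}$ for $\xi,\eta\in\hat{\mathcal{H}}$. Then I would restrict to $\xi=f\oplus 0$, $\eta=g\oplus 0$ and read off the $(1,1)$-block of $P_S$:
\begin{align*}
  \omega_{P_S}(B^*(f\oplus 0)B(g\oplus 0)) &= \big((f\oplus 0),\,P_S(g\oplus 0)\big)_{\hat{\mathcal{H}}}\\
  &= \big((f\oplus 0),\,(Sg)\oplus(S^{1/2}(1-S)^{1/2}g)\big)_{\hat{\mathcal{H}}}\\
  &= (f,Sg)_{\mathcal{H}},
\end{align*}
where the second summand drops out because the inner product on $\hat{\mathcal{H}}=\mathcal{H}\oplus\mathcal{H}$ is the direct-sum inner product and the first component of $f\oplus 0$ pairs only with the first component. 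This is exactly the defining two-point function of $\omega$ from Eq.~\eqref{eq: 2point}. By uniqueness of quasifree states with a given two-point function (the lemma immediately after Lemma~\ref{Lemma 3.2}), the restriction $\omega_{P_S}|_{\mathfrak{A}_{SDC}(\mathcal{H},\Gamma)}$ equals $\varphi_S=\omega$.

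The only genuinely non-routine point is justifying that $P_S$, as written with the operator square roots $S^{1/2}$ and $(1-S)^{1/2}$, is a well-defined bounded self-adjoint operator and that it really is a projection; this was already established in the paragraph introducing $P_S$ (using $S^*=S$, $0\le S\le\mathbb{1}$, $\Gamma S\Gamma=\mathbb{1}-S$, hence $\Gamma S^{1/2}\Gamma=(\mathbb{1}-S)^{1/2}$ by functional calculus and involutivity of $\Gamma$), so I would simply cite that computation. The main obstacle, such as it is, is bookkeeping: making sure the identification $\mathcal{H}\hookrightarrow\hat{\mathcal{H}}$, $f\mapsto f\oplus 0$ is compatible with $\hat\Gamma$ and with the $C^*$-algebra inclusion $\mathfrak{A}_{SDC}(\mathcal{H},\Gamma)\subset\mathfrak{A}_{SDC}(\hat{\mathcal{H}},\hat\Gamma)$, and that the restriction map on states is the honest dual of this inclusion. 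Once that is in place, the proof is the one-line block-matrix computation above plus the uniqueness argument. For completeness I would also remark that the higher correlators automatically match: both sides are quasifree and agree on two-point functions, so they agree on all of $\mathfrak{A}_{SDC}(\mathcal{H},\Gamma)$ by the Wick expansion, and one may alternatively cite Lemma~$4.6$ in \cite{Araki:1971id} directly.
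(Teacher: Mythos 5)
Your proof is correct and follows essentially the same route as the paper's: observe that the restriction of a quasifree state is quasifree, then compute the two-point function on $f\oplus 0$, $g\oplus 0$ from the $(1,1)$-block of $P_S$ to recover $(f,Sg)_{\mathcal{H}}$, and conclude by uniqueness of the quasifree state with that two-point function. The additional bookkeeping you supply (compatibility of the embedding with $\hat\Gamma$ and the block-matrix calculation written out) is just a more explicit version of what the paper leaves implicit.
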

\begin{proof}
Since $\omega_{P_S}$ is quasifree, also its restriction will of course remain such. Now, if we take elements in $\mathcal{H}$ as $f = \hat{f} \oplus 0$ and $g = \hat{g} \oplus 0$, we can compute:
\begin{equation*}
    \omega_{P_S}(B^*(f) B(g)) = (f, P_S g) = (\hat{f}, S \hat{g})
\end{equation*}
That follows from the explicit form of $P_S$ given above.
\end{proof}
Let us denote by $(\mathcal{K}_{P_S}, \pi_{P_S}, \Omega_{P_S})$ the GNS triple assocaited to $\omega_{P_S}$. Then, taking advantage of this doubling of the Hilbert space, we can define the von Neumann algebra associated to $\mathcal{A}_{SDC}(\mathcal{H}, \Gamma)$ with respect to the representation $\pi_{P_S}$ for:
\begin{equation*}
    R_S := \pi_{P_S}(\mathfrak{A}_{SDC}(\mathcal{H}, \Gamma))''
\end{equation*}
Then, provided that such an algebra is a factor, by Theorem $3$ in \cite{Araki:1971id} we have that the basis polarization associated to our starting KMS state with inverse temperature $\beta$ is:
\begin{equation*}
    S = (1 + e^{-\beta \mathbf{h}})^{-1}
\end{equation*}
Then, we compute:
\begin{align*}
    S_A &= i\frac{d}{dt}\bigg|_{t=0} (f,Sf_{\beta t})_{\mathcal{H}}\\
    &= i \frac{d}{dt}\bigg|_{t=0} \sum_{n,m \in \mathbb{N}} \bigg[ \overline{(a_n^+ - \overline{a}_n^-)}(a_n^+ - \overline{a}_n^-) \frac{e^{-i\beta t \epsilon_n^+}}{1 + e^{-\beta \epsilon_n^+}}(\psi_m^+, \psi_n^+)_{\mathcal{H}} + \overline{(a_n^+ - \overline{a}_n^-)}(a_n^+ - \overline{a}_n^-) \frac{e^{i\beta t \epsilon_n^+}}{1 + e^{\beta \epsilon_n^+}}(\psi_m^-, \psi_n^-)_{\mathcal{H}}\bigg]\\
    &= \sum_{n \in \mathbb{N}} \beta \epsilon_n^+\overline{(a_n^+ - \overline{a}_n^-)}(a_n^+ - \overline{a}_n^-) \bigg[  \frac{1}{1 + e^{-\beta \epsilon_n^+}} - \frac{1}{1 + e^{\beta \epsilon_n^+}}\bigg]\\
    &= \beta \sum_{n \in \mathbb{N}} \epsilon_n^+\overline{(a_n^+ - \overline{a}_n^-)}(a_n^+ - \overline{a}_n^-) \tanh\bigg( \frac{\beta \epsilon_n^+}{2} \bigg)
\end{align*}
That, defining the energy content in each mode of $f$ as:
\begin{equation*}
    \Tilde{E}_n^+ := \epsilon_n^+\overline{(a_n^+ - \overline{a}_n^-)}(a_n^+ - \overline{a}_n^-)
\end{equation*}
Becomes:
\begin{equation*}
    S_A = \beta \sum_{n \in \mathbb{N}} \Tilde{E}_n^+ \tanh\bigg( \frac{\beta \epsilon_n^+}{2} \bigg)
\end{equation*}
That gives us the relative entropy between the two configurations, of the Majorana field, given by the KMS state $\omega_{\beta}$ and its single unitary excitation $\omega_{F_{\beta}}$ \\\\

The analysis carried out here, is easily generalizable to other ultrastatic spacetime models. For example, we may consider $M = \mathbb{R} \times S$, for $S$ a sphere with radius $r \in [0,a]$. In this case, the problem reduces to solve the spherically symmetric spatial Dirac equation. However, the form of the final result will remain the same: the inverse temperature of the considered KMS state times the energy content of the considered excitation damped by the hyperbolic tangent term.

\section{Conclusion}
We started off introducing the Algebraic approach to Quantum Theory and provided arguments supporting the considerable advantages of working with abstract algebras, in particular in the last chapter the strengths of working with abstract algebras. Subsequently, we introduced fermionic fields, in the algebraic spirit, over a globally hyperbolic spacetime, putting special emphasis on the fact that the corresponding algebra falls under the general category of self-dual CAR algebras.\\
Moreover, thanks to the algebraic perspective, we were able to better understand the issues encountered in defining relative entropy for a theory with an uncountable number of degrees of freedom and that is defined by a type $III$ von Neumann algebra, such as QFT. On the other hand, the algebraic point of view simultaneously allowed us to define a generalization of relative entropy as a measure of distinguishability for states defined over the abstract von Neumann algebra.\\
Finally, in order to deal with such an abstract definition, we presented the case of a coherent excitation of the vacuum for a free scalar Quantum Field Theory, in which the relative entropy is computed at the one-particle level (\cite{Longo:2019mhx}, \cite{PhysRevD.99.125020}). Inspired by this result, we were able to derive an expression for the relative entropy for a unitary excitation of a quasifree faithful state given by:
\begin{equation}\label{entropy}
        S(\omega_{F}\,||\,\omega) = i \frac{d}{dt} \bigg|_{t=0} (f, S f_t)_{\mathcal{H}},
\end{equation}
presented in this work. As became apparent in the last chapter, the computation of relative entropy then reduces to the knowledge of the form of the basis polarization $S$.\\
We argued, presenting a series of examples, how this result can be generalized to different kinds of excitations and how in those cases, the relative entropy is expressible in terms of the one of the unitary excitation. Based on our current understanding, we believe that this result can be generalized further, for instance, in order to obtain a lower bound for the relative entropy. In fact, from the assumption of the $\Gamma$-invariant subspace $H$ to be of the standard type, any $g \in H + iH$ can be written as:
\begin{equation*}
    g = f_1 + i f_2 \Longrightarrow B(g) = B(f_1) - i B(f_2),
\end{equation*}
for $f_1, f_2 \in H$. In this case, the associated $B(g)$ does not square to the identity. This implies that the corresponding vector state representative cannot be cyclic and separating. Nevertheless, as discussed in \cite{Araki1977}, we may define a notion of relative entropy even if only one of the states is faithful. The expression that one should study involves $\Delta_{\pi_{\omega}(B(g))\Omega_{\omega}, \Omega_{\omega}}$, where the state:
\begin{equation*}
    \pi_{\omega}(B(g))\Omega_{\omega} = (\pi_{\omega}(B(f_1)) - i \pi_{\omega}(B(f_2)))\Omega_{\omega},
\end{equation*}
is the sum of two cyclic and separating vectors. We believe it to be worthwhile to investigate this case further in the future as it might lead to some additional and more general result.
\begin{appendices}
\chapter{Appendix}
\section{Elements of Lorentzian geometry}\label{app: Lorentz}
This section aims to recall the most important definitions and results of Lorentzian geometry, which are used in the thesis. The purpose is, by no means, to give a complete review of Lorentzian geometry, for a complete treatement, I instead refer to \cite{Wald:1984rg}, to which our notation will be affine and I further refer to \cite{oneill1983semiriemannian} for a more mathematically rigorous discussion.\\\\
Firstly, remember the convention for the spacetime, to be a Lorentzian Manifold $M$ with metric $g$ of signature $(+---)$. Let me quote a set of standard definitions:
\begin{defn}
For a closed subset $C \subset M$, one defines the causal future/past as:
\begin{align*}
    J^{\pm}(C) := \{ p \in M | &\exists \,\, \mathrm{future}/\mathrm{past} \,\, \mathrm{directed} \,\, \mathrm{causal} \,\, \mathrm{curve} \, \gamma(\tau)\\
    & \mathrm{and} \, \tau_1 \geq \tau_0 \,\, \mathrm{s.t.} \,\, \gamma(\tau_0) \in C, \gamma(\tau_1) = p \}
\end{align*}
\end{defn}
\begin{defn}
Let $S \subset M$ be a closed achronal set. We define the future/past domain of dependence or future/past causal developement of $S$ as:
\begin{equation*}
    D^{\pm}(S) := \{ p \in M| \mathrm{Every} \,\, \mathrm{past/future} \,\, \mathrm{inextendible} \,\,  \mathrm{causal} \,\, \mathrm{curve} \,\, \mathrm{through} \,\, p \,\, \mathrm{intersects} \,\, S\}.
\end{equation*}
The domain of dependence of $S$ is then defined as: $D(S) = D^+(S) \cup D^-(S)$
\end{defn}
\begin{defn}
Given $\mathcal{O} \subset M$, we define its spacelike complement $\mathcal{O}'$ as the set of all points that are spacelike separated with all points of $\mathcal{O}$:
\begin{equation*}
    \mathcal{O}' := \mathrm{int}\{p\in M | p \not \in J^{\pm}(q) \,\, \forall q \in \mathcal{O} \}
\end{equation*}
and its causal completion as $\mathcal{O}''$. It is always true that $(\mathcal{O}'')'= \mathcal{O}'$. We say that $\mathcal{O}$ is causally complete if $\mathcal{O} = \mathcal{O}''$.
Finally, let us denote the set of all causally complete regions on $M$ as $\mathcal{K}$.
\end{defn}
\begin{defn}
A curve $\gamma : I \to M$ has a future (past) endpoint $p \in M$ if for any open neighborhood $U$ containing $p$, there exist some value $\tau_0 \in I$ for the affine parameter such that $\gamma(\tau) \in U$ for $\tau > \tau_0$ (resp. $\tau< \tau_0$)
\end{defn}
\begin{defn}
A curve that does not have future or past endpoints is called \textit{future/past inextendible}.
\end{defn}
\begin{defn}
A Cauchy surface $\Sigma$ is a hypersurface of $(M,g)$, such that any inextendible causal curve intersects $\Sigma$ exactly once\footnote{One can show this to be equivalent to the definition of a Cauchy surface as the hypersurface whose causal developement is the entire spacetime manifold $M$}.
\end{defn}
\begin{defn}
A spacetime $(M,g)$ is said to be globally hyperbolic, if it possesses a Cauchy surface.
\end{defn}
For a globally hyperbolic spacetime, we have the following known result:
\begin{prop}
Let $(M,g)$ be a globally hyperbolic spacetime. For any compact $K,K' \subset M$, $J^{\pm}(K)$ is closed, $J^+(K) \cap J^-(K')$ is compact and, for any Cauchy surface $\Sigma$, the intersection $\Sigma \cap J^{\pm}(K)$ is compact.
\end{prop}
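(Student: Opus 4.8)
The statement is a collection of three standard facts about globally hyperbolic spacetimes; the plan is to reduce everything to the single most basic fact that $J^+(p)\cap J^-(q)$ is compact for points $p,q$, together with the closedness of $J^{\pm}$ of a point, and then bootstrap from points to compact sets and to Cauchy-surface slices. First I would recall that a globally hyperbolic spacetime $(M,g)$ is (by Theorem \ref{thm: globhy}) isometric to $\mathbb{R}\times\Sigma$ with a Cauchy surface $\Sigma_\tau=\{\tau\}\times\Sigma$ for each $\tau$, and that global hyperbolicity is equivalent to strong causality plus compactness of $J^+(p)\cap J^-(q)$ for all $p,q$; this is the hypothesis I would take as the working characterization (citing \cite{Wald:1984rg} or \cite{oneill1983semiriemannian}).

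The argument then proceeds in three steps. Step 1: closedness of $J^{\pm}(K)$ for compact $K$. I would show $J^+(K)=\bigcup_{p\in K}J^+(p)$ is closed by taking a sequence $x_n\in J^+(K)$ with $x_n\to x$, choosing $p_n\in K$ with $x_n\in J^+(p_n)$, passing to a convergent subsequence $p_n\to p\in K$ (compactness), and using a limit-curve argument: the causal curves from $p_n$ to $x_n$ accumulate, in the strongly causal spacetime, to a causal curve from $p$ to $x$, so $x\in J^+(p)\subset J^+(K)$. The case of $J^-(K)$ is symmetric. Step 2: compactness of $J^+(K)\cap J^-(K')$. Cover $K$ by finitely many points' causal futures is not quite right; instead I would note $J^+(K)\cap J^-(K')=\bigcup_{p\in K,\,q\in K'}\bigl(J^+(p)\cap J^-(q)\bigr)$, and bound this inside $\bigcup_{p,q}J^+(p)\cap J^-(q)$. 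To get compactness from the pointwise fact, I would use a compactness/covering argument on $K\times K'$: the set is closed by Step 1 (intersection of two closed sets), so it suffices to show it is contained in a compact set. One clean route is to pick Cauchy surfaces $\Sigma_{-}$ to the past of $K\cup K'$ and $\Sigma_{+}$ to its future (possible by the $\mathbb{R}\times\Sigma$ splitting, since $K\cup K'$ is compact hence contained in a slab $[a,b]\times\Sigma$), and then show $J^+(K)\cap J^-(K')\subset J^+(\Sigma_-\cap J^-(K'))\cap J^-(\Sigma_+\cap J^+(K))$ — but this risks circularity with Step 3. A more self-contained route: by strong causality each point has arbitrarily small causally convex neighborhoods; cover $J^+(K)\cap J^-(K')$ and extract a finite subcover, using that any causal curve entering and leaving a causally convex set does so exactly once, to bound the "length" of causal curves and hence confine the set. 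Step 3: compactness of $\Sigma\cap J^{\pm}(K)$. Given a Cauchy surface $\Sigma$ and compact $K$, the set $\Sigma\cap J^+(K)$ is closed (Step 1 intersected with the closed set $\Sigma$). For boundedness: every point $x\in\Sigma\cap J^+(K)$ lies on a causal curve from some $p\in K$; extend this curve to a past-inextendible causal curve, which must meet a fixed Cauchy surface $\Sigma'$ chosen strictly to the past of $K$; so $\Sigma\cap J^+(K)\subset J^+(K)\cap J^-(\Sigma\cap J^+(K))$. To make this rigorous I would pick $\Sigma'$ to the past of $K$ and a compact $K'\subset\Sigma$ large enough — here I instead argue that $\Sigma\cap J^+(K)\subset J^+(K)\cap\Sigma$ and that if it were noncompact there would be an "escaping" sequence $x_n$ with causal curves $\gamma_n$ from $K$; the limit curve through an accumulation point of $\{p_n\}\subset K$ would be a future-inextendible causal curve not meeting $\Sigma$ or meeting it more than once, contradicting the Cauchy property. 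Alternatively, and most cleanly, reduce Step 3 to Step 2: fix a Cauchy surface $\Sigma''$ to the future of $K$; then $\Sigma\cap J^+(K)=\Sigma\cap J^+(K)\cap J^-(\Sigma'')$ provided $\Sigma$ lies to the past of $\Sigma''$ — and in general split $K$ according to which side of $\Sigma$ its points lie and handle each piece — so $\Sigma\cap J^+(K)$ is a closed subset of the compact set $J^+(K)\cap J^-(\Sigma'')$ (using Step 2 with $K'$ a compact neighborhood on $\Sigma''$ containing $J^+(K)\cap\Sigma''$, which is compact by the same reasoning), hence compact.

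The main obstacle I anticipate is Step 2, the pointwise-to-compact upgrade for $J^+(K)\cap J^-(K')$: the naive union of pointwise-compact sets need not be compact, so one genuinely needs either a limit-curve argument exploiting strong causality, or the Cauchy-surface sandwich from Theorem \ref{thm: globhy}, being careful not to argue in a circle when Step 3 is then deduced from Step 2. I would organize the write-up so that Step 1 is proved first and used freely, Step 2 is proved via the slab $[a,b]\times\Sigma$ containing $K\cup K'$ together with a limit-curve/no-escape argument, and Step 3 is then an easy corollary (closed subset of a compact set) after slicing $K$ into its future and past parts relative to $\Sigma$. Throughout, the limit curve lemma for strongly causal spacetimes and the basic properties of the $\mathbb{R}\times\Sigma$ splitting are the only nontrivial inputs, and both are available from the cited references and Theorem \ref{thm: globhy}.
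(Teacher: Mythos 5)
First, a point of comparison: the paper does not actually prove this proposition --- it is quoted as a ``known result'' and implicitly deferred to the references \cite{Wald:1984rg} and \cite{oneill1983semiriemannian} cited at the head of the appendix, so there is no in-paper argument to measure yours against. Judged on its own, your plan has the right overall architecture (closedness first, then ``closed subset of a compact set'' for the two compactness claims, with the limit curve lemma and the splitting of Theorem \ref{thm: globhy} as the only nontrivial inputs), but two of your routes do not survive scrutiny. In Step 2 you dismiss the covering argument too quickly: the standard, entirely rigorous version is to choose, for each $p\in K$, a point $\tilde p$ in the chronological past $I^-(p)$, so that $\{I^+(\tilde p)\}_{p\in K}$ is an open cover of $K$; a finite subcover gives $K\subset\bigcup_{i=1}^n I^+(\tilde p_i)$, hence $J^+(K)\subset\bigcup_i J^+(\tilde p_i)$, and likewise $J^-(K')\subset\bigcup_j J^-(\tilde q_j)$, so $J^+(K)\cap J^-(K')$ is a closed (by Step 1) subset of the finite union of the compact diamonds $J^+(\tilde p_i)\cap J^-(\tilde q_j)$. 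This replaces both of your proposed routes (the Cauchy-surface sandwich and the causally-convex covering), neither of which is carried out far enough to be checkable.

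The more serious problem is Step 3. Your preferred route invokes ``Step 2 with $K'$ a compact neighborhood on $\Sigma''$ containing $J^+(K)\cap\Sigma''$, which is compact by the same reasoning'' --- but the compactness of $J^+(K)\cap\Sigma''$ is exactly the statement being proved, applied to the Cauchy surface $\Sigma''$, so this is circular. It also silently assumes that the given Cauchy surface $\Sigma$ lies entirely in $J^-(\Sigma'')$ for some leaf $\Sigma''$ of the foliation, which can fail: two Cauchy surfaces may interleave, and an arbitrary $\Sigma$ need not be contained in any slab of the splitting. The route that actually works is the limit-curve one you mention only in passing: if $x_n\in\Sigma\cap J^+(K)$ escaped every compact set, extend the causal curves $\gamma_n$ from $p_n\in K$ to $x_n$ to future-inextendible curves, pass to a future-inextendible limit curve $\gamma$ issuing from $p=\lim p_n$; $\gamma$ must meet $\Sigma$ at some point $y$, and for large $n$ the curve $\gamma_n$ then crosses $\Sigma$ near $y$ while also meeting it at the faraway point $x_n$, contradicting the fact that an inextendible causal curve meets a Cauchy surface exactly once. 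That argument (the content of the corresponding lemma in \cite{oneill1983semiriemannian}) is the part of the proof that genuinely needs to be written out, and as it stands your proposal leaves precisely this step unproved.
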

\newpage
\section{The Reeh-Schlieder theorem}
We will present the theorem in the case of a free scalar $QFT$ on Minkowski spacetime but it can be easily generalized also to the case of an arbitrary $QFT$. I refer to \cite{Reeh:1961ujh} for the original proof, while here we present a revisited version of it using the Schwarz reflection principle.
\begin{thm}[\textbf{Reeh-Schlieder}]\label{app: Reeh}
Consider a free scalar Quantum Field Theory on Minkowski spacetime $M_D := (\mathbb{R}^D, \eta)$. Then, given an open subset $\mathcal{V}$ of a Cauchy surface $\Sigma$ with neighbourhood $U_{\mathcal{V}} \subset M_D$ and an arbitrary number of test functions $f_1, \cdots, f_n \in \mathcal{C}^{\infty}_0(U_{\mathcal{V}}, \mathbb{C})$ the set of vectors:
\begin{equation*}
    \phi_{f_1} \cdots \phi_{f_n} \ket{\Omega}
\end{equation*}
Defines a dense subset of the Hilbert space $\mathcal{H}$ of the quantum field theory and $\ket{\Omega}$ is the cyclic and separating vector representing the vacuum of the theory.  The above notation means:
\begin{equation*}
    \phi_f = \int_{M_D} d^Dx \phi(x) f(x)
\end{equation*}
\end{thm}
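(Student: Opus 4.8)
The plan is to prove the Reeh–Schlieder theorem by a duality/analyticity argument, reducing the density statement to a vanishing statement and then exploiting the spectrum condition (positivity of energy). Concretely, suppose the closed span $\mathcal{V}_0$ of all vectors $\phi_{f_1}\cdots\phi_{f_n}\ket{\Omega}$ with $f_i$ supported in $U_{\mathcal{V}}$ is \emph{not} all of $\mathcal{H}$. Then there exists a nonzero vector $\ket{\chi}\in\mathcal{H}$ orthogonal to $\mathcal{V}_0$, i.e.
\begin{equation*}
    \braket{\chi}{\phi_{f_1}\cdots\phi_{f_n}\Omega} = 0
\end{equation*}
for all $n$ and all test functions $f_1,\dots,f_n$ supported in the small neighbourhood $U_{\mathcal{V}}$. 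The goal is to show $\ket{\chi}=0$, which contradicts the assumption; this establishes cyclicity. The separating property then follows from cyclicity of $\ket{\Omega}$ for the commutant (spacelike-complement) algebra together with Proposition \ref{prop: 112}: if $Q\in\mathfrak{A}(\mathcal{V})$ annihilates $\ket{\Omega}$, then for $Q'$ in the algebra of a spacelike separated region $\langle Q'\Omega, Q\Omega\rangle$-type manipulations and cyclicity force $Q=0$.

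The key steps, in order: First, I would pass from the smeared fields to the spacetime dependence explicitly, writing the correlation function
\begin{equation*}
    W(x_1,\dots,x_n) := \braket{\chi}{\phi(x_1)\cdots\phi(x_n)\Omega}
\end{equation*}
as a distribution, and use translation covariance $\phi(x) = e^{iP\cdot x}\phi(0)e^{-iP\cdot x}$ together with $P^\mu\ket{\Omega}=0$ to insert the translation operators and write $W$ in terms of the relative coordinates $\xi_j = x_j - x_{j+1}$. Second — this is the analytic heart — I would use the spectrum condition: since the joint spectrum of $P^\mu$ lies in the forward lightcone $\overline{V}^+$, the function $\xi_j \mapsto \braket{\chi}{\phi(x_1)\cdots e^{iP\cdot\xi_j}\cdots\Omega}$ extends to a holomorphic function of the complexified variables $\xi_j \to \xi_j + i\eta_j$ whenever each $\eta_j$ lies in the open forward cone $V^+$, because $e^{iP\cdot(\xi+i\eta)} = e^{iP\cdot\xi}e^{-P\cdot\eta}$ is then a bounded operator depending holomorphically on the complex argument (this is exactly the Morera/Cauchy-theorem argument sketched in the Fermi gedankenexperiment section of the excerpt, now applied in several variables). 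Third, by assumption $W$ vanishes when all the $x_j$ (hence the real $\xi_j$) range over the open set dictated by $U_{\mathcal{V}}$; a multidimensional edge-of-the-wedge / Schwarz reflection argument then shows that a function holomorphic in a tube $\mathbb{R}^{D}+iV^+$ (in each variable) which vanishes on a real open set must vanish identically on the tube, hence its boundary value $W$ vanishes as a distribution for \emph{all} real configurations. Fourth, having $\braket{\chi}{\phi(x_1)\cdots\phi(x_n)\Omega}=0$ for all $x_j\in M_D$ and all $n$ means $\ket{\chi}$ is orthogonal to the span of $\phi_{g_1}\cdots\phi_{g_n}\ket{\Omega}$ with \emph{arbitrary} test functions $g_i$; since this larger set is total in $\mathcal{H}$ (the vacuum is cyclic for the global field algebra by construction of the Fock representation), we conclude $\ket{\chi}=0$.

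The main obstacle I expect is step three: the rigorous passage from "holomorphic in a product of tubes, vanishing on a real open subset" to "vanishes identically." In one variable this is the Schwarz reflection principle invoked elsewhere in the paper, but for the correlation function of $n$ fields one genuinely needs the several-complex-variables input — either a direct induction on the number of variables using the one-variable reflection principle on each relative coordinate in turn (keeping the others fixed with imaginary parts in $V^+$), or a clean invocation of the edge-of-the-wedge theorem. Care is required because the domain of holomorphy is a tube over the forward cone, not a polydisc, and the vanishing set is only open in the \emph{real} boundary, not in the complex domain; one must check that the relative-coordinate change of variables keeps the relevant sets open and that the analytic continuation in successive variables is compatible. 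A secondary technical point is the distributional (rather than pointwise) nature of $\phi(x)$: all statements above should be read as identities of tempered distributions, and the analytic continuation is of the smeared objects, so I would phrase the argument throughout in terms of $\phi_f$ with $f$ having support shrinking toward a point, or equivalently in terms of the analytic vectors obtained by smearing against Gaussians, to keep everything well-defined. Modulo these analyticity technicalities, the remaining steps (translation covariance, the duality reduction, and deducing the separating property from cyclicity via Proposition \ref{prop: 112}) are routine.
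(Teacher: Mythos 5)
Your proposal is correct and follows the same overall strategy as the paper's proof: assume a vector $\ket{\chi}$ orthogonal to the local polynomial vectors, use translation covariance and the spectrum condition to obtain holomorphy, conclude that the correlation functions vanish for all spacetime arguments, and invoke global cyclicity of the vacuum. The one substantive difference is in how the analytic continuation is organized, and it bears exactly on the step you single out as the main obstacle. You propose working in the relative coordinates $\xi_j = x_j - x_{j+1}$ and continuing into the full tube $\mathbb{R}^D + iV^+$ in several variables at once, which then requires an edge-of-the-wedge argument to pass from vanishing on a real open set to vanishing on the tube. The paper instead fixes a single future-directed timelike vector $\mathbf{t}$ and translates only the last coordinate, $x_n \mapsto x_n + u\,\mathbf{t}$, so that covariance plus $H\ket{\Omega}=0$ produces a single factor $e^{iHu}$ and the function $g(u)=\bra{\chi}\phi(x_1)\cdots e^{iHu}\phi(x_n)\ket{\Omega}$ is a \emph{one-variable} function, holomorphic in the upper half-plane by positivity of $H$ and vanishing on a real interval; the one-variable Schwarz reflection principle then forces $g\equiv 0$. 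Arbitrary displacements of $x_n$ are reached by composing future- and past-directed timelike translations, and the argument is iterated by translating the last $k$ coordinates simultaneously, which again yields a single sandwiched $e^{iHu}$. This is essentially your proposed fallback (induction on one relative variable at a time), and it buys a proof that never leaves one complex variable, at the cost of a slightly fiddlier bookkeeping of which arguments must remain inside the open set $U_{\mathcal{V}}$ at each stage. Your version is closer to the textbook Streater--Wightman formulation and makes the tube geometry explicit, but does genuinely need the several-complex-variables input you flag. Note also that you address the separating property (via cyclicity for the spacelike-complement algebra and Proposition \ref{prop: 112}), which the paper's proof asserts in the statement but does not actually carry out.
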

\begin{figure}[H]
 		\centering
 		\includegraphics[trim = 170 0 180 60,clip,width=0.60	\columnwidth]{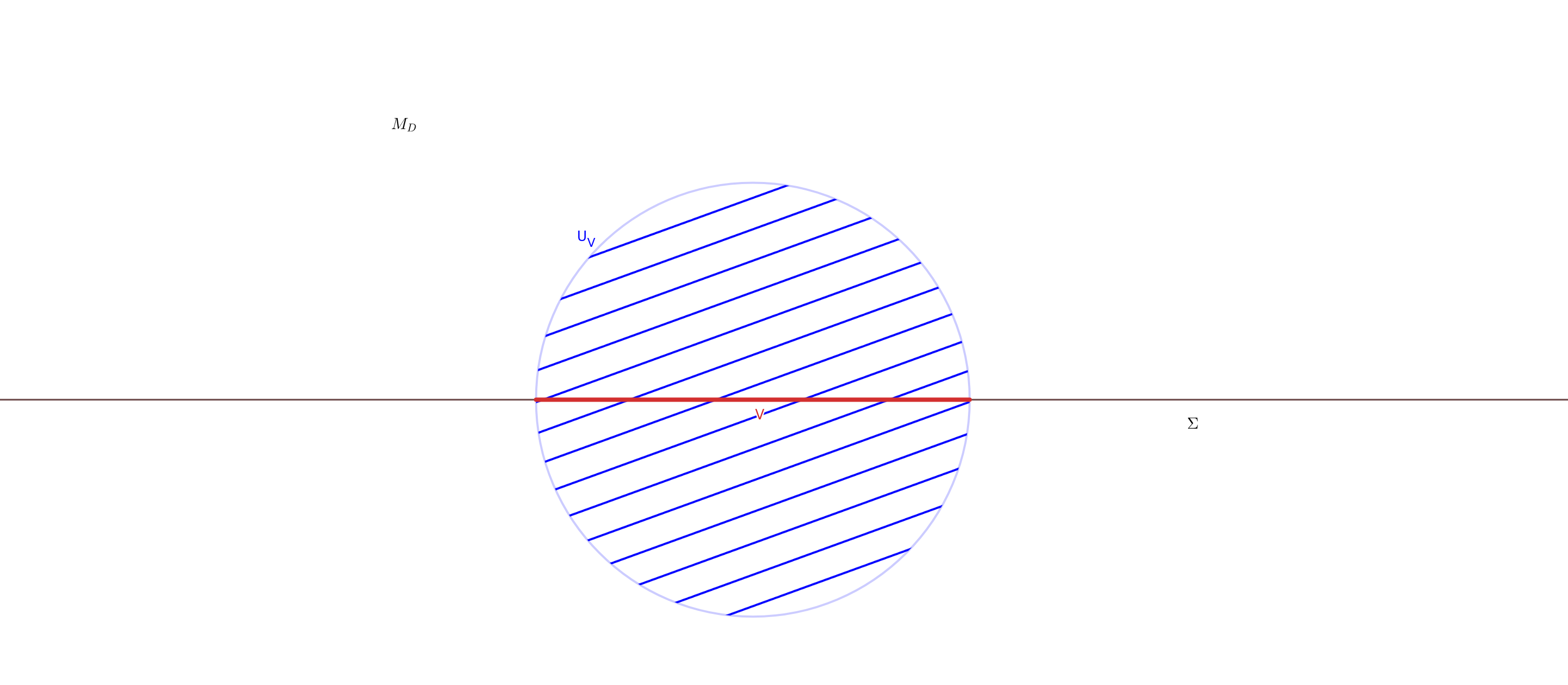}
 \end{figure}
\begin{proof}
The idea of the proof is the following: we take for all $f_1, \cdots, f_n \in \mathcal{C}^{\infty}_0(U_{\mathcal{V}}, \mathbb{C})$ and all $n \in \mathbb{N}$ any vector $\ket{\chi} \in \mathcal{H}$ such that:
\begin{equation*}
    0 = \bra{\chi} \phi_{f_1} \cdots \phi_{f_n} \ket{\Omega} \iff 0 = \bra{\chi} \phi(x_1) \cdots \phi(x_n) \ket{\Omega}
\end{equation*}
i.e. that is in the orthogonal complement in $\mathcal{H}$ of the space generated by the field polynomial for $x_1, \dots, x_n \in U_{\mathcal{V}}$, then we want to show that such $\ket{\chi}$ is trivial, proving that the orthogonal complement contains just the trivial vector.\\
Let us start by considering a future directed timelike vector $\mathbf{t}$ and let $u \in \mathbb{R}$. Consider the following timelike translation of the $n$-th coordinate: $x_n \mapsto x_n + u \mathbf{t}$. Consider then the function:
\begin{equation*}
    g(u) := \bra{\chi} \phi(x_1) \cdots \phi(x_n+ u\textbf{t}) \ket{\Omega}
\end{equation*}
But, from the axioms of an AQFT that we have listed before, we know that the fields must transform covariantly:
\begin{equation*}
    \phi(x_n+ u\textbf{t}) = \exp\big( i H u \big) \phi(x_n) \exp \big( -i H u\big)
\end{equation*}
Where $\exp\big( i H u \big)$ is a time translation operator and $H$ is the self adjoint operator that is positive semidefinite such that $H \ket{\Omega} = 0$. It follows that:
\begin{equation*}
    g(u) = \bra{\chi} \phi(x_1) \cdots \exp\big( i H u \big) \phi(x_n) \ket{\Omega}
\end{equation*}
Since $0 = \bra{\chi} \phi(x_1) \cdots \phi(x_n) \ket{\Omega}$ as long as $x_1, \dots, x_n \in U_{\mathcal{V}}$, we have that $g(u) = 0$ as long as $u$ is small. So $g(u) = 0$ for $u \in I =[-\varepsilon, \varepsilon]$.\\
If we now displace $u$ in the complex plane, as long as the displacement is in the upper half plane, the function $g(u)$ is holomorphic. This can be seen from the $u$-dependence of $g(u)$ that is just in the exponential factor that, by the spectral theorem, can be decomposed as:
\begin{equation*}
    \exp\big( i H u \big) = \int_{\mathbb{R}^+} \exp\big( i \lambda u \big) dE_{\lambda}
\end{equation*}
But the function $\exp\big( i \lambda u \big)$ is holomorphic for $\lambda > 0$ and $\Im(u)> 0$ since:
\begin{equation*}
    \exp\big( i \lambda u \big) = \exp\big( i \lambda \Re(u) \big) \exp\big(- \lambda \Im(u) \big)
\end{equation*}
and if we study the holomorphicity of this function in the upper half plane, we see that there're no pole. So, from Cauchy integral theorem, whenever we integrate on a closed curve in the upper half plane the function $\exp\big( i \lambda u \big)$, we get zero. But then, by Morera's theorem, this implies that $g(u)$ is holomorphic in the upper half plane.
\begin{figure}[H]
 		\centering
 		\includegraphics[trim = 200 0 200 50,clip,width=0.60	\columnwidth]{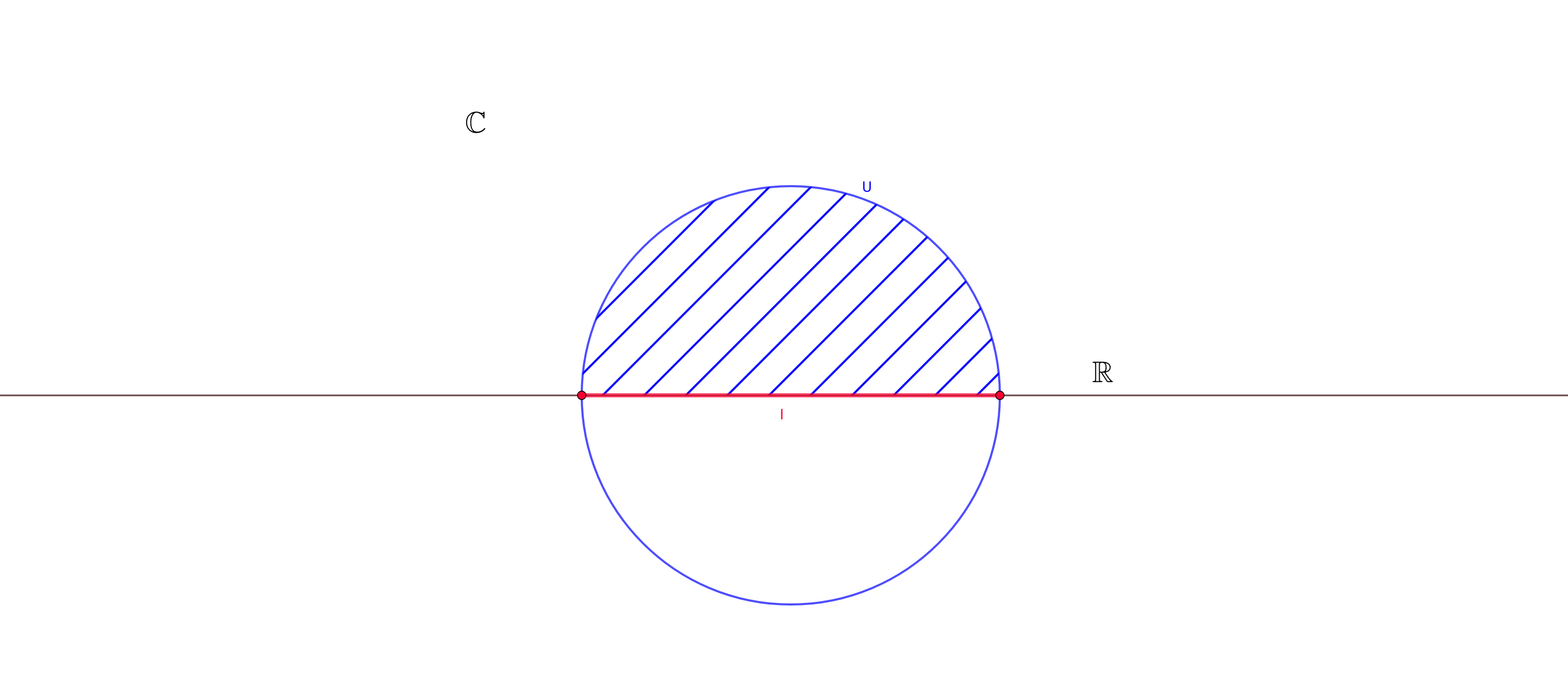}
 		\caption{The choice of a circle is random, we should have chosen any other region symmetric w.r.t. the real line, also a square or a ellipse}
\end{figure}
We can now use a Lemma of complex analysis:
\begin{lem}[\textbf{Schwarz reflection principle}]\label{app: Schw}
Let $U$ be a region symmetric about the real axis. If $f(z)$ is a holomorphic function in the part of $U$ in the upper half plane and is such that:
\begin{equation*}
    \lim_{Im(z) \to 0+} f(z) = 0
\end{equation*}
Then $f(z)$ extends to a holomorphic function on $U$
\end{lem}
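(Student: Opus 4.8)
The plan is to prove the statement by explicitly constructing the extension through reflection and then verifying analyticity across the real axis by means of Morera's theorem, which has already been used in the text. Write $U^+ = U \cap \{\Im z > 0\}$, $U^- = U \cap \{\Im z < 0\}$ and $I = U \cap \mathbb{R}$; by the assumed symmetry of $U$, the map $z \mapsto \bar{z}$ is an involution of $U$ interchanging $U^+$ and $U^-$ and fixing $I$. I would then define the candidate extension
\begin{equation*}
    \tilde{f}(z) := \left\{ \begin{aligned}
    &f(z) \hspace{20pt} \mathrm{for} \,\, z \in U^+\\
    &0 \hspace{20pt} \mathrm{for} \,\, z \in I\\
    &\overline{f(\bar{z})} \hspace{20pt} \mathrm{for} \,\, z \in U^-
    \end{aligned} \right.
\end{equation*}
where the lower-half branch is the reflection forced by the requirement that the boundary value along $I$ be $0$.

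First I would check that $\tilde{f}$ is holomorphic on $U^+$ and on $U^-$ separately. On $U^+$ this is the hypothesis. On $U^-$, set $g(z) = \overline{f(\bar{z})}$: since $z \mapsto \bar{z}$ and complex conjugation are each anti-holomorphic, their composite $g$ is holomorphic, as one verifies either from the Cauchy-Riemann equations or termwise from the local power series of $f$. Second, I would show that $\tilde{f}$ is continuous on all of $U$. Away from $I$ this is automatic; at a point of $I$, continuity from the upper half is precisely the assumption $\lim_{\Im z \to 0+} f(z) = 0$, and continuity from the lower half follows by conjugating, since $\overline{f(\bar{z})} \to \overline{0} = 0$ as $z$ approaches $I$ from within $U^-$.

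With holomorphicity on each open half and global continuity established, the final step is to upgrade to holomorphicity across $I$ using Morera's theorem. Since holomorphicity is a local property it suffices to show $\oint_{\partial T} \tilde{f}(z)\, dz = 0$ for every solid triangle $T \subset U$. Triangles lying entirely in $U^+$ or $U^-$ contribute zero by Cauchy's theorem applied on the relevant half. For a triangle straddling $I$, I would split it along the real axis into an upper and a lower piece, displace the shared horizontal edge to heights $\pm \delta$, apply Cauchy on each displaced piece (now contained in an open half), and let $\delta \to 0^+$; the two edge integrals tend to integrals of $\tilde{f}$ along $I$ in opposite orientations and therefore cancel, giving $\oint_{\partial T} \tilde{f} = 0$. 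By Morera, $\tilde{f}$ is holomorphic on $U$ and restricts to $f$ on $U^+$, which is the claim. The hard part will be making this limiting argument for the straddling triangles rigorous, so that the real-axis contributions genuinely cancel as $\delta \to 0^+$; this is exactly the place where the uniform continuity of $\tilde{f}$ on the compact triangle (a consequence of the continuity proven in the previous step) is needed, while every other ingredient is routine.
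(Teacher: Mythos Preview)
Your argument is the standard and correct proof of the Schwarz reflection principle: define the extension by $\overline{f(\bar z)}$ on the lower half, verify continuity across $I$ using the boundary hypothesis, and invoke Morera's theorem on triangles straddling the axis, with uniform continuity on compact triangles handling the $\delta\to 0^+$ limit. There is nothing to correct.

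The paper, however, does not give its own proof of this lemma at all: it simply writes ``See Section $11.10$ of \cite{Arfken}.'' So there is no comparison of approaches to make; you have supplied a self-contained argument where the paper defers to a textbook. Your proof is in fact the same one found in Arfken and in most complex analysis texts, so in that sense it matches what the paper is implicitly relying on.
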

\begin{proof}
See Section $11.10$ of \cite{Arfken}.
\end{proof}
This lemma allows us to say that the function $g(u)$ is not just holomorphic in the upper half plane, but has domain of analiticity that extends through the real axis also to a piece of the lower half plane, provided that the symmetric region is taken with respect to $I$ or a subset of it. As a consequence, $g(u)$ for $u \in I$ is a holomorphic function that vanishes identically on the closed interval $I$. A holomorphic function that vanishes identically on a closed interval of the real line must be identically zero for all $u \in \mathbb{R}$, so $g(u) = 0$. This means that, no matter how much we move in the timelike direction $\mathbf{t}$ the coordinate $x_n$, the inner product $g(u)$ is identically zero.\\
This was just a displacement in a chosen timelike direction of $x_n$, if we want to extend this for any $x_n \in M_D$ we can compose timelike translations (either future or past directed) to reach any point in $M_D$
\begin{figure}[H]
 		\centering
 		\includegraphics[trim = 250 0 250 0,clip,width=0.70	\columnwidth]{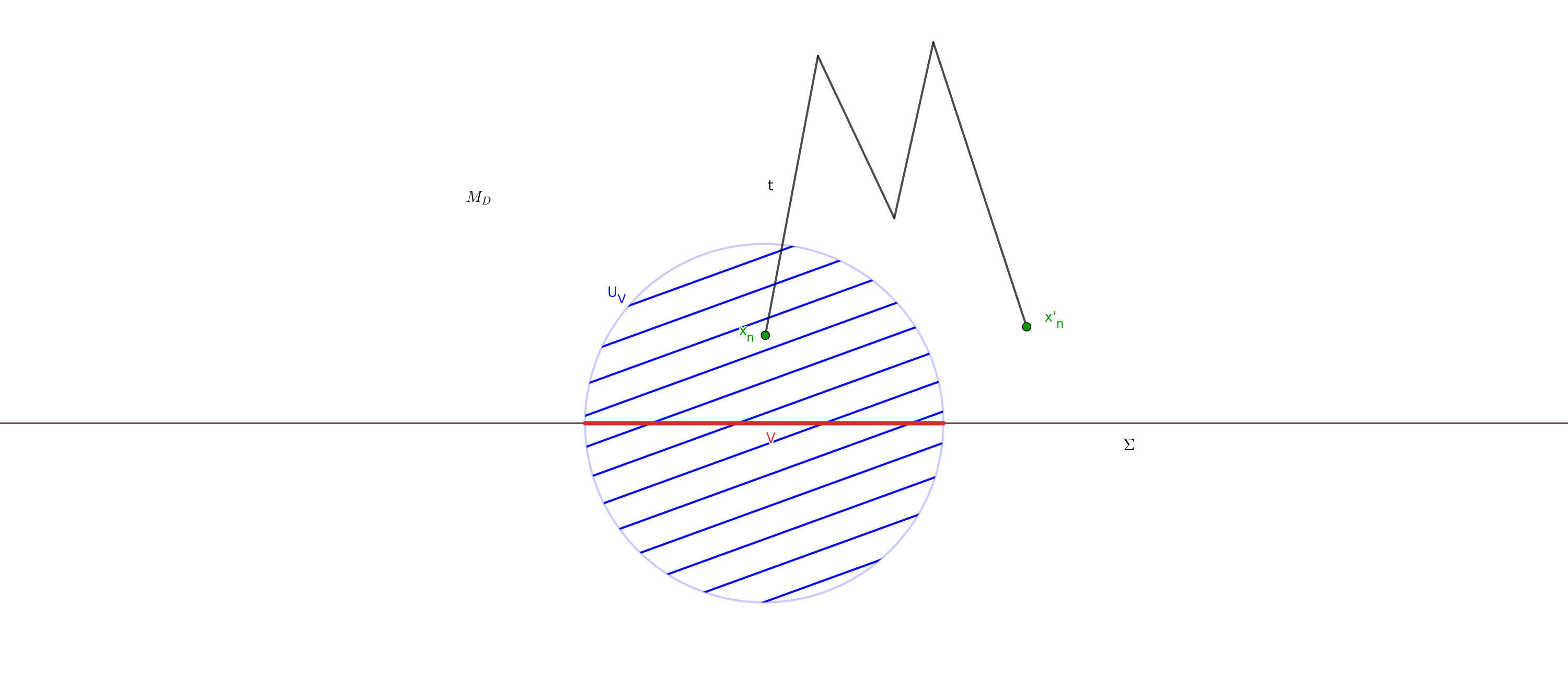}
\end{figure}
Where in the case in which the timelike translation is past directed we will perform the holomorphicity argument starting from the lower half plane. In this way we have shown that, independently from the choice of $x_n \in M_D$: $ 0 = \bra{\chi} \phi(x_1) \cdots \phi(x_n) \ket{\Omega}$.\\
We can now repeat the argument for all the other coordinates in an iterative way. Namely, if we consider now the coordinate $x_{n-1}$ and another timelike vector $\mathbf{t}'$ we perform the following timelike translation:
\begin{equation*}
    x_{n-1} \mapsto x_{n-1} + u\mathbf{t}' \hspace{20pt} x_{n} \mapsto x_{n} + u\mathbf{t}'
\end{equation*}
Being aware that for any $x_n$ we have $ 0 = \bra{\chi} \phi(x_1) \cdots \phi(x_n) \ket{\Omega}$. Then, repeating the same above arguments, we prove that also the choice of $x_{n-1} \in M_D$ does not affect the above quantity. Repeating the argument for all $n \in \mathbb{N}$ we have:
\begin{equation*}
     0 = \bra{\chi} \phi(x_1) \cdots \phi(x_n) \ket{\Omega} \,\,\,\,\,\, \forall x_1, \dots, x_n \in M_D
\end{equation*}
Showing in this way, from axiom \textbf{A4} of an AQFT, that $\phi_{f_1} \cdots \phi_{f_n} \ket{\Omega}$ for $f_1, \cdots, f_n \in \mathcal{C}^{\infty}_0(U_{\mathcal{V}}, \mathbb{C})$ generates a dense subset of $\mathcal{H}$
\end{proof}
Generalizations of the theorem on curved spacetimes were given by Verch in \cite{Verch:1992pn} for the case of a quasifree ground-state of a free scalar massive field theory on an ultrastatic spacetime, by Strohmaier \cite{Strohmaier_2000} for general local Quantum field theories on stationary spacetimes and by \cite{Sanders_2009} in the case of spacetimes diffeomorphic to spacetimes with the Reeh-Schlieder property.
\newpage

\section{The Bisognano-Wichmann theorem}\label{app: BW}
The Bisognano-Wichmann theorem is a statement regarding the algebras of a general QFT localized in wedge-like regions. In this extent is one of the few known cases in which the modular flow of the theory is said to be geometric as, due to this theorem, is expressed in terms of a geometric transformation on the wedge. This result was first proven to hold by the authors for the free scalar case in \cite{Bisognano:1975ih} and generalized later for arbitrary QFT by the same authors in \cite{Bisognano:1976za}. In both cases, the algebras are localized on Minkowski spacetime, a generlization of this result to more general curved backgrounds admitting wedge-like regions was obtained in \cite{Brunetti:2002nt}.\\
In this section I will just present the original version of the theorem for arbitrary Quantum field theories on Mikowski spacetimes and refer to the cited litterature for its generalization to curved backgrounds.\\\\
Let us start recalling the definition of Rindler wedges on Minkowski spacetime $M_D$:
\begin{equation*}
    W_R = \{ x \in M_D : x_1 > |x_0| \}
\end{equation*}
called the right wedge, and:
\begin{equation*}
    W_L = \{ x \in M_D : x_1 < -|x_0| \}
\end{equation*}
called the left wedge. We will denote as usual the local algebras of operators as $\mathfrak{A}(\mathcal{O})$ where $\mathcal{O} \subset M_D$. These are assumed to be $C^*$-algebras, thus, when represented, give rise to von Neumann algebras still denoted as $\mathfrak{A}(\mathcal{O})$ for simplicity (see Theorem \ref{thm: GNS}).\\
Moreover, if the local algebras have a $\mathbb{Z}_2$-grading we can define the twisted algebra:
\begin{equation*}
    \mathfrak{A}(\mathcal{O})^{t'} = \{ ZXZ^{-1} | X \in \mathfrak{A}(\mathcal{O}) \}
\end{equation*}
Where $Z$ is the twisting operator we defined in Definition \ref{def: Z2grad}. In particular, from the discussion we had after the Definition \ref{def: Z2grad}, we see that the only difference occurs if the net of local algebrs $\mathfrak{A}(\mathcal{O})$ is homogeneous Fermi because in the Bose case: $\mathfrak{A}(\mathcal{O})^{t'} = \mathfrak{A}(\mathcal{O})$.\\
Considering the von Neumann algebras $\mathfrak{A}(W_R)$ and $\mathfrak{A}(W_L)^{t'}$, recalling that from Tomita-Takesaki modular theory on von Neumann algebras we have the existence of modular conjugation and flow, the statement of the Bisognano-Wichmann theorem is:
\begin{thm}[\textbf{Bisognano-Wichmann}] \label{thm: BW}
If $J_{W_R}$ and $\Delta_{W_R}$ denote the modular operators for the pair $(\mathfrak{A}(W_R), \Omega)$, then:
\begin{equation*}
    J_{W_R} = Z U(R_{23}(\pi),0) \Theta \hspace{20pt} \Delta_{W_R}^{it} = U(\Lambda_{W_R}(t),0)
\end{equation*}
Where: $U(R_{23}(\pi),0)$ is the unitary representation over the Hilbert space of a rotation of $\pi$ in the plane of the free coordinates, $U(\Lambda_{W_R}(t),0)$ is the unitary representation of a boost of parameter $t$ in the direction $x_1$ and $\Theta$ is a CPT transformation such that:
\begin{align*}
    \Theta^2 &= U(-\mathbb{1}, 0)\\
    \Theta \Omega &= \Omega\\
    \Theta U(g,x) \Theta^{-1} &= U(g, -x)
\end{align*}
for $g \in Spin^0_{1,3}$ and $x \in M_D$.\\
Finally, it holds $J_{W_R} \mathfrak{A}(W_R) J_{W_R} = \mathfrak{A}(W_L)^{t'}$
\end{thm}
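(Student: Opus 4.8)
The plan is to establish $J_{W_R}\mathfrak{A}(W_R)J_{W_R} = \mathfrak{A}(W_L)^{t'}$ directly from the geometric form of the modular conjugation asserted in the first part of the theorem, $J_{W_R} = Z\,U(R_{23}(\pi),0)\,\Theta$, combined with the covariance of the net and the definition of the twisted algebra, rather than from duality assumptions. Since $J_{W_R}$ is the modular conjugation of the pair $(\mathfrak{A}(W_R),\Omega)$, it is an antiunitary involution, so $J_{W_R}^2 = \mathbb{1}$ (as shown in the discussion preceding Theorem \ref{thm: TomTak}). Consequently $J_{W_R}^{-1} = J_{W_R}$, and the map $X \mapsto J_{W_R}XJ_{W_R}$ is an antilinear $*$-isomorphism carrying the von Neumann algebra $\mathfrak{A}(W_R)$ onto a von Neumann algebra; it therefore suffices to track the image of $\mathfrak{A}(W_R)$.

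First I would factor the conjugation through the three constituents of $J_{W_R}$, writing
\begin{equation*}
    J_{W_R}\mathfrak{A}(W_R)J_{W_R} = Z\,U(R_{23}(\pi),0)\,\Theta\,\mathfrak{A}(W_R)\,\Theta^{-1}\,U(R_{23}(\pi),0)^{-1}\,Z^{-1},
\end{equation*}
and then resolve the three layers from the inside out. The innermost layer uses the $CPT$ covariance of $\Theta$, namely $\Theta U(g,x)\Theta^{-1} = U(g,-x)$ and $\Theta\Omega = \Omega$, which by additivity of the vacuum-generated net yields $\Theta\mathfrak{A}(\mathcal{O})\Theta^{-1} = \mathfrak{A}(-\mathcal{O})$; a coordinate check gives $-W_R = W_L$ (if $x_1 > |x_0|$ then $(-x)_1 < -|(-x)_0|$), so this layer produces $\mathfrak{A}(W_L)$. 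The middle layer uses the Poincaré covariance $U(R_{23}(\pi),0)\mathfrak{A}(\mathcal{O})U(R_{23}(\pi),0)^{-1} = \mathfrak{A}(R_{23}(\pi)\mathcal{O})$; since the $\pi$-rotation acts only on the transverse coordinates $x_2,x_3$ while $W_L$ is defined purely through $x_0,x_1$, one has $R_{23}(\pi)W_L = W_L$, so this layer leaves $\mathfrak{A}(W_L)$ unchanged. The outermost layer is conjugation by the twisting operator $Z$, which by the definition $\mathfrak{A}(\mathcal{O})^{t'} = \{ZXZ^{-1} : X \in \mathfrak{A}(\mathcal{O})\}$ turns $\mathfrak{A}(W_L)$ into exactly $\mathfrak{A}(W_L)^{t'}$, completing the chain.

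The step I expect to be the main obstacle is justifying the $CPT$ covariance of the net, $\Theta\mathfrak{A}(\mathcal{O})\Theta^{-1} = \mathfrak{A}(-\mathcal{O})$, since in the Bisognano-Wichmann framework the reflection-implementing property of $\Theta$ is part of the content of the theorem rather than a triviality, and one must also keep careful track of the fact that $\Theta^2 = U(-\mathbb{1},0)\neq\mathbb{1}$. I would handle this by observing that the factor $Z$ in $J_{W_R}$ is precisely what reconciles graded (fermionic) statistics with the geometric action: for homogeneous Bose operators $Z$ acts trivially and the argument reduces to ordinary wedge duality, whereas for Fermi operators the twist $Z = \tfrac{1-i\Gamma}{1-i}$ converts the naive commutant relation into the twisted one, consistent with the graded/twisted locality lemma proved earlier. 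As an independent cross-check I would compare the result with the Tomita-Takesaki identity $J_{W_R}\mathfrak{A}(W_R)J_{W_R} = \mathfrak{A}(W_R)'$ coming from Theorem \ref{thm: TomTak}: equating the two expressions reproduces twisted Haag duality for wedges, $\mathfrak{A}(W_R)' = \mathfrak{A}(W_L)^{t'}$, confirming that the grading conventions and the geometric reflection have been matched correctly.
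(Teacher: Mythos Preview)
Your proposal addresses only the final assertion of the theorem, taking the identifications $J_{W_R}=Z\,U(R_{23}(\pi),0)\,\Theta$ and $\Delta_{W_R}^{it}=U(\Lambda_{W_R}(t),0)$ as given. Once those are assumed, the computation you outline is indeed essentially immediate: the geometric factor $U(R_{23}(\pi),0)\,\Theta$ implements the reflection $(x_0,x_1,x_2,x_3)\mapsto(-x_0,-x_1,x_2,x_3)$ sending $W_R$ to $W_L$, covariance gives $\mathfrak{A}(W_L)$, and conjugation by $Z$ produces the twisted algebra (using the appendix convention $\mathfrak{A}(\mathcal{O})^{t'}=Z\mathfrak{A}(\mathcal{O})Z^{-1}$). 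So your chain is correct as far as it goes.

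The gap is that the substantive content of Bisognano--Wichmann is precisely the part you assume. The theorem asserts that the abstractly defined Tomita operators of $(\mathfrak{A}(W_R),\Omega)$ coincide with concrete geometric objects; deducing the image of $\mathfrak{A}(W_R)$ under $J_{W_R}$ once that coincidence is known is a one-line consequence, not the theorem. The paper itself does not give a self-contained proof either, but it points to the actual mechanism in \cite{Bisognano:1976za}: one first constructs the candidate $J_{W_R}$ in the stated form and then verifies, via an analytic-continuation argument for the boosts, that $J_{W_R}\,U(\Lambda_{W_R}(i\pi),0)\,X\Omega=X^*\Omega$ holds for the (possibly unbounded) field operators, so that the candidate really is the Tomita $S$-operator. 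Your proposal does not touch this step, and you even note that the CPT-covariance $\Theta\mathfrak{A}(\mathcal{O})\Theta^{-1}=\mathfrak{A}(-\mathcal{O})$ is ``part of the content of the theorem rather than a triviality''---which is exactly the point: both the existence of $\Theta$ with these properties and the modular identifications are what Bisognano and Wichmann prove, and your argument presupposes them.
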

\begin{proof}
For a proof of the theorem I refer to Lemma $7$ and Theorem $1$ in the original work \cite{Bisognano:1976za}. Notice that, in the original work, the authors deal with field algebras that are not von Neumann algebras. For this reason, their proof starts by showing the existence of the $J_{W_R}$ operator of the above form. With that, they prove that for any $X$ (possibly unbounded):
\begin{equation*}
    J_{W_R} U(\Lambda_{W_R}(i \pi),0) X \Omega = X^* \Omega
\end{equation*}
That, in the case in which the algebras are turned into von Neumann, in analogy with Tomita-Takesaki modular theory, gives the claimed result.
\end{proof}
The result for the modular conjugation $J_{W_R}$, modulo a twisting factor $Z$, shows that is a $CRT$ symmetry. This is the transformation in coordinate: $t \to -t$, $x_1 \to -x_1$ and $x_i \to x_i$ for all $i > 2$. In this way becomes more intuitive, from a geometrical point of view , that $J_{W_R} \mathfrak{A}(W_R) J_{W_R} = \mathfrak{A}(W_L)^{t'}$ as such a transformation of coordinates maps exactly $W_R \to W_L$. For this reason, and the form of the modular flow, that this theorem is often mentioned as giving a geometric action of the modular theory.
\newpage

\section{Clifford algebra}\label{app}
This section is devoted to a short survey of Clifford algebras that will allow us to study the universal covering group of the Lorentz group, used to define and relate Spin and Frame bundles. Let $\mathbb{R^{r,s}}$ be the real vector space of dimension $n = r+s$ and equip it with a non-degenerate bilinear form $\Omega_{ab}$ that has $r$ many positive and $s$ many negative eigenvalues. A special example of this is Minkowski spacetime $M_D = \mathbb{R}^{1,3}$, where the bilinear form is the usual Minkowski metric tensor that in the orthonormal basis $\{l_a\}_{a=0,\dots,3}$ is: $\eta = \mathrm{diag}(1,-1,-1,-1)$.
\begin{defn}
The Clifford algebra $Cl_{r,s}$ of $\mathbb{R}^{r,s}$ is defined as the associative unital algebra generated by an orthonormal basis $e_a$ of $\mathbb{R}^{r,s}$ subject to the Clifford relations\footnote{One can show that the definition of the Clifford relations is independent on the choice of the basis $e_a$ see \cite{Lawson} Prop. 1.1}:
\begin{equation*}
    e_a e_b + e_b e_a = 2 \Omega_{ab} \mathbb{1}
\end{equation*}
We can identify the subspaces of $Cl_{r,s}$ spanned by monomials of even or odd degree in the basis vectors, and call them $Cl_{r,s}^0$ resp. $Cl_{r,s}^1$.\\
We define the \textbf{Dirac algebra} as the Clifford algebra of Minkowski: $D:=Cl_{1,3}$. Therefore the Clifford condition becomes in this case:
\begin{equation*}
    l_a l_b + l_b l_a = 2 \eta_{ab} \mathbb{1}
\end{equation*}
Where now $\{l_a\}_{a=0,1,2,3}$ is an orthonormal basis of Minkowski spacetime.
\end{defn}
Recognize, that the real vector space itself is a subspace of the algebra, namely $\mathbb{R}^{r,s} \subset Cl_{r,s}$ generated by the monomials of order $1$ in the basis $e_a$.
\begin{rem}
The even subspace $Cl_{r,s}^0$ is a subalgebra as it is closed under product of its elements, while for the odd case the product of two elements will give an even monomial. Despite this, both subspaces are well defined, as the Clifford relations are purely even and thus, independently from the order of the monomials, we have a sum of even monomials.
\end{rem}
\begin{rem}
The dimension of the Clifford algebra is $2^{r+s}$. To see it consider as basis of $\mathbb{R}^{r+s}$ an orthogonal basis $e_a$, then the Clifford relations give:
\begin{align*}
    e_i e_j &= -e_j e_i\\
    e_i^2 &= \Omega_{ii}\mathbb{1}
\end{align*}
but this implies that, in searching for all the independent monomials, we can define a unique independent order and we cannot have repetitions of $e_j$. Then:
\begin{equation*}
    \dim Cl_{r,s} = \sum_{k=0}^{r+s} \begin{pmatrix}
    r+s\\
    k
    \end{pmatrix} = 2^{r+s}
\end{equation*}
\end{rem}
As it is customary in introducing the Dirac fields on Minkowski spacetime, and from the relations with Dirac $\gamma$ matrices that we will introduce later, we shall call the volume element $l_5 := l_0l_1l_2l_3$.\\
The real Dirac algebra $D$ can be represented as complex matrices via a complex representation: $\pi: D \to M(n,\mathbb{C})$ for some $n \in \mathbb{N}$. We look for such a representation because, in treating Dirac fields, we want the Dirac algebra to act on spinor fields that are vectors in a $\mathbb{C}^n$. For this purpose, we quote a theorem due to Pauli (see \cite{Pauli:1936gd}) about the representation theory of the Dirac algebra:

\begin{thm}[\textbf{Fundamental Theorem}]\label{thm: fund}
The Dirac abstract algebra $D$ is simple\footnote{A simple abstract algebra $\mathfrak{A}$ is an algebra for which each homomorphism that has as domain the entire algebra $\mathfrak{A}$ is injective} and has a unique irreducible complex representation up to equivalence. This representation is denoted as:
\begin{align*}
    \pi_0: D &\to M(4,\mathbb{C})\\
    l_a &\mapsto \pi_0(l_a) =: \gamma_a
\end{align*} 
that gives the famous Dirac gamma matrices $\gamma_a$:
\begin{equation*}
    \gamma_0 := \begin{pmatrix}
    0 & \mathbb{1}_{2\times 2}\\
    \mathbb{1}_{2\times 2} & 0
    \end{pmatrix} , \hspace{15pt} \gamma_i := \begin{pmatrix}
    0 & -\sigma_i\\
    \sigma_i & 0
    \end{pmatrix} 
\end{equation*}
Where $\sigma_i$ are the Pauli matrices.\\
The equivalence with another complex irreducible representation $\pi$ of $D$ is impllemented by $\pi(S) = L \pi_0(S) L^{-1}$ for all $S \in D$, where $L \in GL(4,\mathbb{C})$ unique up to a non-zero complex factor. 
Define, for notational convenience, also $\gamma_5 := \pi_0(l_5)$
\end{thm}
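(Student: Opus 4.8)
This is the "Fundamental Theorem" (Pauli's theorem) about the representation theory of the Dirac algebra $D = Cl_{1,3}$. Let me sketch a proof plan for it.

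\textbf{Proof strategy.} The statement has three parts: (1) $D$ is a simple algebra; (2) it has a unique irreducible complex representation up to equivalence, realized by the explicit gamma matrices $\gamma_a$; (3) the intertwiner $L$ between any two irreducible representations is unique up to a nonzero scalar. The plan is to first settle the algebraic structure of $D$, then deduce the representation-theoretic consequences from the Wedderburn/Artin classification of simple algebras, and finally apply Schur's lemma for the uniqueness of $L$.

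\textbf{Step 1: structure of the complexification.} First I would pass to the complexification $D_{\mathbb{C}} = D \otimes_{\mathbb{R}} \mathbb{C}$, since we are after complex representations anyway. Using the Clifford relations $l_a l_b + l_b l_a = 2\eta_{ab}\mathbb{1}$ and the dimension count $\dim_{\mathbb{C}} D_{\mathbb{C}} = 2^4 = 16$ (from the remark on $\dim Cl_{r,s} = 2^{r+s}$), I would exhibit an explicit isomorphism $D_{\mathbb{C}} \cong M(4,\mathbb{C})$. The cleanest way: check directly that the explicit $4\times 4$ matrices $\gamma_0, \gamma_i$ given in the statement satisfy the Clifford relations for $\eta = \mathrm{diag}(1,-1,-1,-1)$ (a short Pauli-matrix computation using $\sigma_i\sigma_j + \sigma_j\sigma_i = 2\delta_{ij}$), so by the universal property of the Clifford algebra there is an algebra homomorphism $\pi_0 : D_{\mathbb{C}} \to M(4,\mathbb{C})$. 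Then I would show the sixteen monomials $\{\mathbb{1}, \gamma_a, \gamma_a\gamma_b\ (a<b), \gamma_a\gamma_b\gamma_c\ (a<b<c), \gamma_5\}$ are linearly independent in $M(4,\mathbb{C})$ (e.g. by a trace-orthogonality argument: $\mathrm{Tr}(\Gamma_I \Gamma_J^{-1}) \propto \delta_{IJ}$ over this basis), so $\pi_0$ is injective; comparing dimensions $16 = 16$ gives that $\pi_0$ is an isomorphism $D_{\mathbb{C}} \cong M(4,\mathbb{C})$.

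\textbf{Step 2: simplicity and uniqueness of the irreducible representation.} Since $M(4,\mathbb{C})$ is a simple algebra (it has no nontrivial two-sided ideals), $D_{\mathbb{C}}$ is simple, and this descends to show $D$ itself is simple in the sense used in the footnote — any algebra homomorphism out of $D$ has a kernel that is a two-sided ideal, hence trivial, so it is injective. For the representation theory: every finite-dimensional complex representation of $D$ extends to $D_{\mathbb{C}} \cong M(4,\mathbb{C})$, and it is a standard fact (Wedderburn theory / the classification of modules over a matrix algebra) that $M(4,\mathbb{C})$ has, up to equivalence, a unique irreducible module, namely $\mathbb{C}^4$ with the defining action. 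Hence $\pi_0$ is, up to equivalence, the unique irreducible complex representation of $D$, with $\dim = 4$.

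\textbf{Step 3: uniqueness of $L$ up to scalar.} Let $\pi$ be another complex irreducible representation of $D$. By Step 2 it is equivalent to $\pi_0$, so there exists $L \in GL(4,\mathbb{C})$ with $\pi(S) = L\,\pi_0(S)\,L^{-1}$ for all $S \in D$. For uniqueness, suppose $L_1, L_2$ both intertwine. Then $L_2^{-1}L_1$ commutes with $\pi_0(S)$ for every $S\in D$, i.e. with all of $M(4,\mathbb{C})$; by Schur's lemma (the commutant of an irreducible representation over $\mathbb{C}$ consists of scalars) $L_2^{-1}L_1 = \lambda\,\mathbb{1}$ for some $\lambda \in \mathbb{C}\setminus\{0\}$, so $L$ is unique up to a nonzero complex factor. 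I expect the main obstacle to be nothing deep but rather the bookkeeping in Step 1 — verifying linear independence of the sixteen monomials cleanly (the trace-orthogonality computation is the efficient route, avoiding a brute-force $16\times 16$ argument) and making the reduction from $D_{\mathbb{C}}$-modules back to $D$-modules fully rigorous; everything after that is a formal consequence of Wedderburn theory and Schur's lemma.
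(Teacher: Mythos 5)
The paper does not actually prove this theorem: it is quoted verbatim from Pauli's 1936 work \cite{Pauli:1936gd}, and the only related argument in the text is the later appeal (in the appendix, with a reference to \cite{Lawson}) to the isomorphism $M(4,\mathbb{C}) \simeq \mathbb{C}\otimes_{\mathbb{R}} D$, which is precisely what your Step 1 establishes. So there is no in-paper proof to compare against, and your Wedderburn-theoretic route is the standard, correct way to supply one: the trace-orthogonality argument for linear independence of the sixteen monomials works (each nontrivial monomial is traceless, and $\Gamma_I\Gamma_J^{-1}$ is proportional to another monomial, nontrivial unless $I=J$), the passage from complex representations of $D$ to $D_{\mathbb{C}}$-modules is legitimate because a complex-linear invariant subspace for $\pi(D)$ is automatically invariant for the complex-linear extension, and simplicity of $D$ does follow from simplicity of $D_{\mathbb{C}}$ since the complexification of a two-sided ideal of $D$ is a two-sided ideal of $D_{\mathbb{C}}$ (this is the correct direction of implication). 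Two small points worth tightening if you write this out: in Step 3, $\pi_0(D)$ is only a real form of $M(4,\mathbb{C})$, so you should say that $L_2^{-1}L_1$ commutes with the \emph{complex linear span} of $\pi_0(D)$, which is all of $M(4,\mathbb{C})$, before invoking that the centre of a matrix algebra is $\mathbb{C}\mathbb{1}$; and in Step 2 you should note that every irreducible module over the finite-dimensional unital algebra $D$ is cyclic, hence finite-dimensional, so restricting to finite-dimensional representations loses nothing.
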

This theorem, defines and gives an explicit form to the famous Dirac $\gamma$ matrices, that were used in defining the Dirac derivative in Section \ref{sec: DiMaj}. Furthermore, in that Section, the notions of adjoint and charge conjugation of Dirac spinors were used starting from matrices $A,C \in GL(4,\mathbb{C})$. Now, we define them and list their properties:
\begin{defn}\label{def: hacc}
Let $\pi$ be an irreducible complex representation of the Dirac algebra. We define the matrices $A,C \in GL(4,\mathbb{C})$ via the conditions:
\begin{align*}
    A &= A^* ,\hspace{15pt} \pi(l_a)^* = A \pi(l_a) A^{-1}, \hspace{15pt} A \pi(n) = A n^a \gamma_a > 0\\
    \overline{C} C &= \mathbb{1}, \hspace{15pt} -\overline{\pi(l_a)} = C \pi(l_a) C^{-1}
\end{align*}
for all future pointing timelike vectors $n$ and the Hermitean conjugation and adjoint are the standard ones on $\mathbb{C}^4$.
\end{defn}
\begin{rem}\label{rem: dhacc}
As both sets of gamma matrices $\pi(l_a)^*$ and $-\overline{\pi(l_a)}$ satisfy the Clifford relations, from the above theorem, the matrices $A,C$ are uniquely determined up to a multiplicative constant.
\end{rem}
In fact, one can prove that such matrices always exist given a complex irreducible representation:
\begin{thm}\label{thm: ex}
For any complex irreducible representation $\pi$ of $D$, there are $A,C \in GL(4,\mathbb{C})$ which satisfy properties of Definition \ref{def: hacc} with respect to $\pi$. We also have $A = - C^* \overline{A} C$.\\
Moreover, $A$ is uniquely determined up to a positive factor while $C$ up to a phase factor.\\
Let $A_i,C_i \in GL(4, \mathbb{C})$ for $i = 1,2$ satisfying Definition \ref{def: hacc} with respect to irreducible complex representations $\pi_i$ of $D$. Then there exist $L \in GL(4,\mathbb{C})$, unique up to a sign, such that $L^*A_1 L = A_2$, $\overline{L}^{-1}C_1 L = C_2$ and $\pi_2 = L^{-1} \pi_1 L$ on $D$
\end{thm}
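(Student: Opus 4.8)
The statement is the standard ``existence and naturality'' theorem for the Dirac hermitizing and charge-conjugation matrices, so the plan is to reduce everything to the Fundamental Theorem (Theorem \ref{thm: fund}) via the uniqueness of irreducible complex representations of the simple algebra $D$, and then to pin down the normalizations using the listed positivity and reality conditions. First I would establish existence of $A$ for a fixed irreducible $\pi$: the matrices $\pi(l_a)^*$ satisfy the same Clifford relations as the $\pi(l_a)$ (since $\eta_{ab}$ is real), so $l_a \mapsto \pi(l_a)^*$ is another irreducible complex representation of $D$; by Theorem \ref{thm: fund} there is $A \in GL(4,\mathbb{C})$, unique up to a nonzero complex scalar, with $\pi(l_a)^* = A \pi(l_a) A^{-1}$. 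Taking adjoints of this relation shows $A^*$ implements the same intertwiner, hence $A^* = \mu A$ for some $\mu \in \mathbb{C}^\times$; conjugating once more forces $|\mu| = 1$, and a further argument (using that $\pi(l_a)$ is a complete set so that $A$ is determined by its action, together with $(A^*)^* = A$) gives $\mu$ real, hence $\mu = \pm 1$, and rescaling $A$ by a phase we may assume $A^* = A$. The remaining freedom is a real scalar; the positivity requirement $A\,n^a\gamma_a > 0$ for one future-pointing timelike $n$ then fixes the sign (and the scale up to a positive factor), which is exactly the claimed uniqueness of $A$. Analogously, $-\overline{\pi(l_a)}$ satisfies the Clifford relations, yielding $C \in GL(4,\mathbb{C})$ with $-\overline{\pi(l_a)} = C\pi(l_a)C^{-1}$; the relation $\overline{C}C = \mathbb{1}$ is imposed by rescaling $C$ by a suitable phase after checking that $\overline{C}C$ is a central element (hence a scalar) of the irreducible representation, and this leaves exactly a residual phase ambiguity in $C$. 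The identity $A = -C^*\overline{A}C$ I would get by observing that both sides implement the intertwiner $\pi(l_a) \mapsto \pi(l_a)^*$ (compute: $C^*\overline{A}C\,\pi(l_a)\,(C^*\overline{A}C)^{-1}$, push $C$ through using the $C$-relation, then $\overline{A}$ through using the conjugate of the $A$-relation, then $C^*$), so $A$ and $-C^*\overline{A}C$ differ by a scalar, and matching the hermiticity/positivity normalization fixes that scalar to $1$.

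For the naturality part, given $A_i, C_i$ with respect to irreducible $\pi_i$ ($i=1,2$), the Fundamental Theorem gives $L \in GL(4,\mathbb{C})$, unique up to a nonzero complex scalar, with $\pi_2 = L^{-1}\pi_1 L$ on $D$. I would then check that $L^* A_1 L$ satisfies Definition \ref{def: hacc} with respect to $\pi_2$: it is Hermitian, and it intertwines $\pi_2(l_a)$ with $\pi_2(l_a)^*$ by a direct substitution using the $A_1$-relation and $\pi_2 = L^{-1}\pi_1 L$; positivity of $L^*A_1 L\,n^a\gamma_a^{(2)}$ follows because $L^*(\cdot)L$ preserves positivity. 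By the already-established uniqueness of $A_2$ up to a positive factor, $L^*A_1 L = c\, A_2$ for some $c > 0$. Similarly $\overline{L}^{-1} C_1 L$ is checked to satisfy the $C$-conditions with respect to $\pi_2$ (here one uses the conjugate of $\pi_2 = L^{-1}\pi_1 L$, i.e. $\overline{\pi_2} = \overline{L}^{-1}\overline{\pi_1}\,\overline{L}$), so $\overline{L}^{-1}C_1 L = e^{i\theta} C_2$. Now I would use the scaling freedom in $L$: replacing $L$ by $\lambda L$ scales $L^*A_1 L$ by $|\lambda|^2$ and $\overline{L}^{-1}C_1 L$ by $\lambda/\overline{\lambda} = e^{2i\arg\lambda}$, so one can simultaneously choose $|\lambda|$ to kill the factor $c$ and $\arg\lambda$ to kill the phase $e^{i\theta}$ — the key point being that the constraint $A = -C^*\overline{A}C$ (valid for both indices) ties $c$ and $\theta$ together consistently, so a single $\lambda$ works for both normalizations. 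This leaves $L$ determined up to $\lambda$ with $|\lambda| = 1$ and $\arg\lambda \in \{0,\pi\}$, i.e.\ up to a sign, as claimed, and with this normalization $L^*A_1 L = A_2$, $\overline{L}^{-1}C_1 L = C_2$, $\pi_2 = L^{-1}\pi_1 L$.

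The main obstacle I anticipate is not any single computation but the bookkeeping of the three overlapping normalization freedoms — the positive scalar in $A$, the phase in $C$, and the nonzero complex scalar in $L$ — and making sure the compatibility relation $A = -C^*\overline{A}C$ is genuinely what allows one $\lambda$ to fix both the $A$- and $C$-normalizations at once (rather than needing two independent rescalings, which would break uniqueness up to a sign). I would handle this by first proving $A = -C^*\overline{A}C$ cleanly for \emph{any} valid choice of $A, C$ (so it is an identity, not a normalization), and only then running the $\lambda$-adjustment, at which point the phase and modulus equations decouple and the sign ambiguity emerges transparently. A secondary, routine point is verifying that the various ``$X$ commutes with all $\pi(l_a)$, hence $X$ is scalar'' steps are legitimate — this is just Schur's lemma for the irreducible $\pi$, guaranteed by simplicity of $D$ in Theorem \ref{thm: fund}.
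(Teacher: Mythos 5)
Your overall strategy is sound, and for the uniqueness and naturality parts it runs essentially parallel to the paper's proof, which likewise obtains $L = zK$ from the Fundamental Theorem and then solves one equation for $|z|$ and one for the phase of $z$ to get uniqueness up to a sign. The genuine difference is in the existence step: the paper first exhibits $A_0 = \gamma_0$ and $C_0 = \gamma_2$ explicitly for the reference representation $\pi_0$, verifies all of Definition \ref{def: hacc} and the identity $A_0 = -C_0^*\overline{A_0}C_0$ by direct computation, and only then transports these to a general $\pi = K^{-1}\pi_0 K$ via $A = K^*A_0K$, $C = \overline{K}^{-1}C_0K$. Your purely structural route (the adjoint and negative-conjugate families satisfy the Clifford relations, so Theorem \ref{thm: fund} gives intertwiners, which you then normalize) leaves two points unproved. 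First, positivity: Schur's lemma plus Hermiticity give a Hermitian intertwiner $A$, unique up to a real scalar, but they do not tell you that $A\pi(n)$ is \emph{definite} --- a priori its signature could be $(2,2)$, in which case no choice of sign or scale would achieve $A\pi(n) > 0$ and existence would fail. Definiteness, and its persistence over all future-pointing timelike $n$, has to be checked somewhere; the cheapest place is the reference representation, which is exactly what the paper's computation of $\gamma_0\, n^a\gamma_a$ accomplishes. Second, the scalar in $A = -C^*\overline{A}C$: Hermiticity and positivity of both sides only force that scalar to be a positive real, not $1$; you need an additional step, e.g.\ apply the map $X \mapsto -C^*\overline{X}C$ twice and use $\overline{C}C = \mathbb{1}$ to see the scalar squares to $1$, or again verify the identity explicitly in the reference representation and transport it.

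A smaller remark on the last part: the relation $A = -C^*\overline{A}C$ is not what makes a single rescaling $\lambda$ of $L$ suffice. The condition $L^*A_1L = A_2$ constrains only $|\lambda|$, while $\overline{L}^{-1}C_1L = C_2$ constrains only $\arg\lambda$ modulo $\pi$, so the two normalizations decouple automatically and no compatibility between the positive factor and the phase is needed; the residual choice $\arg\lambda \in \{0,\pi\}$ is precisely the claimed sign ambiguity. You note the decoupling yourself at the end, but the earlier sentence presenting the compatibility relation as the key point is a red herring.
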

\begin{proof}
We start proving the existence for the representation $\pi_0$. For that, take $A_0 := \gamma_0$ and $C_0 := \gamma_2$. Then, we need to check the properties in Definition \ref{def: hacc} to be fulfilled. First of all, as $\pi_0(l_a) = \gamma_a$, for what concerns $A_0$, we have:
\begin{align*}
    \gamma_0 &= (\gamma_0)^*\\
    \gamma_a^* &= - \gamma_a = \gamma_0 \gamma_a \gamma_0\\
    \gamma_0 n^a \gamma_a &= \begin{pmatrix}
    n^0 \mathbb{1} + n^i \sigma_i & 0\\
    0 & n^0 \mathbb{1} - n^i \sigma_i
    \end{pmatrix} > 0
\end{align*}
Where the last inequality follows from $\det(n^0 \mathbb{1} \pm n^i \sigma_i) = n_0^2 - |\mathbf{n}| = 1$ and also $\Tr(n^0 \mathbb{1} \pm n^i \sigma_i) = 2n^0 > 0$. For what concerns $C_0$:
\begin{equation*}
    \overline{\gamma_2} \gamma_2 = \begin{pmatrix}
    0 & \sigma_2\\
    -\sigma_2 & 0
    \end{pmatrix} \begin{pmatrix}
    0 & -\sigma_2\\
    \sigma_2 & 0
    \end{pmatrix} = \mathbb{1}
\end{equation*}
and also:
\begin{align*}
    \gamma_2 \gamma_a \gamma_2^{-1} &= (-\gamma_a \gamma_2 + 2 \eta_{a2}) \gamma_2^{-1}\\
    &= -\gamma_a + 2 \eta_{a 2} \gamma_2^{-1}
\end{align*}
that, for $a \neq 2$, is precisely $-\overline{\gamma_a}$ from the reality of gamma matrices and as the second term vanishes. For $a = 2$, we have that $-\overline{\gamma_2} = \gamma_2$ and since $\gamma_2^{-1} = - \gamma_2$:
\begin{equation*}
    -\gamma_2 - 2 \gamma_2^{-1} = -\gamma_a + 2 \gamma_2 = -\overline{\gamma_2}
\end{equation*}
In this same representation, we can also compute the relation between the $A$ and $C$ matrices, claimed in the statement of the theorem, to hold:
\begin{equation*}
    -C_0^* \overline{A_0} C_0 = \gamma_2 \gamma_0 \gamma_2 = - \gamma_0 \gamma_2^2 = \gamma_0 = A_0
\end{equation*}
In the case of a general complex representation $\pi$, we use the fundamental theorem to write $\pi(l_a) = K^{-1} \gamma_a K$ for $K \in GL(4, \mathbb{C})$. But then, relating $A = K^* A_0 K$ and $C = \overline{K}^{-1} C_0 K$ and using the proved properties for $A_0, C_0$, we have that the thus defined $A$ and $C$ still satisfy Definition \ref{def: hacc}:
\begin{align*}
    (K^* A_0 K)^* &= K^* A_0 K\\
    A \pi(l_a) A^{-1} &= K^* A_0 K K^{-1} \gamma_a K K^{-1} A_0^{-1} (K^{-1})^*\\
    &= K^* \gamma_a^* (K^{-1})^* = \pi(l_a)^*\\
    A \pi(n) &= K^* A_0 \pi_0(n^a l_a) K > 0\\
    \overline{(\overline{K}^{-1} C_0 K)} \overline{K}^{-1} C_0 K &= K^{-1} \overline{C}_0 C_0 K\\
    &= K^{-1}  K = \mathbb{1}\\
    C \pi(l_a) C^{-1} &= \overline{K}^{-1} C_0 K K^{-1} \gamma_a K K^{-1} C_0^{-1} \overline{K}\\
    &= \overline{K}^{-1} C_0 \gamma_a C_0^{-1} \overline{K}\\
    &= -\overline{K}^{-1} \overline{\gamma_a} \overline{K} = - \overline{\pi(l_a)}
\end{align*}
and they also satisfy the general relation:
\begin{align*}
    -C^* \overline{A} C &= - K^* C_0^* (\overline{K}^{-1})^* \overline{K}^* \overline{A}_0 \overline{K} \overline{K}^{-1} C_0 K\\
    &= - K^* C_0^* \overline{A}_0 C_0 K\\
    &=  K^* A_0 K = A
\end{align*}
For the statement regarding the uniqueness, we start noticing that the matrices $A$ and $C$ are determined uniquely up to non-zero complex factors as noticed in the remark \ref{rem: dhacc} that we may call $a,c$. Because $A = A^*$ we must have $a \in \mathbb{R}$ and because of $\overline{C} C = 1$ we have $|c| = 1$ proving already that $C$ is determined up to a phase factor. Moreover, as we also have $A \pi(n) > 0$ we must have $a >0$.\\
For the last part, fix a $K \in GL(4,\mathbb{C})$ such that $\pi_1 = K\pi_2 K^{-1}$ by the fundamental theorem. Set $A'_2 := K^* A_1 K$ and $C'_2 := \overline{K}^{-1} C_1 K$. Then, with respect to $\pi_2$:
\begin{align*}
    A'_2 \pi_2(l_a) A'_2{}^{-1} &= K^* A_1 \pi_1(l_a) A_1^{-1} (K^*)^{-1}\\
    &= K^* \pi_1(l_a)^* (K^*)^{-1}\\
    &= (K^{-1} \pi_1(l_a) K)^* = \pi_2(l_a)^*\\
    A'_2 \pi_2(n) &= K^* A_1 \pi_1(n) K > 0\\
    C'_2 \pi_2(l_a) C'_2{}^{-1} &= \overline{K}^{-1} C_1 \pi_1(l_a) C_1^{-1} \overline{K}\\
    &= \overline{K}^{-1} (- \overline{\pi_1(l_a)}) \overline{K} = - \overline{\pi_2(l_a)}
\end{align*}
So, as also $A'_2$ and $C'_2$ satisfy \ref{def: hacc} with respect to $\pi_2$ as well as $A_2$ and $C_2$, by the above uniqueness: 
\begin{equation*}
    A'_2 = a A_2 \hspace{20pt} C'_2 = c C_2
\end{equation*}
for $a > 0$ and $|c| = 1$. Then, the desired matrix $L \in GL(4,\mathbb{C})$ must be $L = z K$ for some $z \neq 0$ still by the fundamental theorem. In particular to have the right intertwining relations:
\begin{align*}
    A_2 &= L^* A_1 L = |z|^2 K^* A_1 K = |z|^2 A'_2\\
    C_2 &=\overline{L}^{-1} C_1 L = \overline{z}^{-1}z \overline{K}^{-1} C_1 K = \overline{z}^{-1}z C'_2
\end{align*}
we must have $|z|^2 = a$ and $z = c \overline{z}$ that fixes $z$ up to a sign.
\end{proof}

We are now in the position of introducing the universal covering group of the Lorentz group, using the elegant formalism just introduced:
\begin{defn}
The Pin and Spin groups of $Cl_{r,s}$ are defined as:
\begin{align*}
    Pin_{r,s} &:= \{ S\in Cl_{r,s} | S=u_1 \cdots u_k, \hspace{15pt} k \in \mathbb{N}, \hspace{15pt} u_i \in \mathbb{R}^{r,s}, \hspace{15pt} u_i^2 = \pm \mathbb{1} \}\\
    Spin_{r,s} &:= Pin_{r,s} \cap Cl^0_{r,s}
\end{align*}
\end{defn}
In fact, Pin is a group: it certainly contains the inverse,as $u_i^2 = \pm \mathbb{1}$, is closed under composition, the product operation is by definition associative and the neutral element is given by $S = \mathbb{1}$. This statement, can be seen also from the following equivalent characterization\footnote{The notion of determinant and trace for elements in $D$, are defined via the trace and determinant of the corresponding images under the unique, up to equivalence, complex representation of the Dirac algebra}:
\begin{prop}
\begin{equation*}
    Pin_{1,3} = \{ S\in D| \det X = 1; \,\, \forall v \in M_D \,\,\,\, SvS^{-1} \in M_D\}
\end{equation*}
\end{prop}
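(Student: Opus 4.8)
The plan is to prove the two inclusions separately. Write $G := \{ S\in D \mid \det S = 1,\ SvS^{-1}\in M_D\ \forall v\in M_D\}$ and recall $Pin_{1,3}$ is generated by products $S = u_1\cdots u_k$ with $u_i\in M_D$ and $u_i^2 = \pm\mathbb{1}$.

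\emph{First, the inclusion $Pin_{1,3}\subseteq G$.} The key observation is the reflection formula: for a unit vector $u\in M_D$ with $u^2 = \pm\mathbb{1}$ and any $v\in M_D$, the Clifford relation $uv + vu = 2\eta(u,v)\mathbb{1}$ gives $uvu^{-1} = \pm(2\eta(u,v)\mathbb{1} - vu)u^{-1} = \pm(2\eta(u,v)u^{-1} - v)$, which up to sign is again a vector in $M_D$ (indeed it is $\mp$ the reflection of $v$ across the hyperplane orthogonal to $u$). Hence each generator $u_i$ sends $M_D$ to $M_D$ under conjugation, and since conjugation by a product is the composition of conjugations, any $S\in Pin_{1,3}$ satisfies $SvS^{-1}\in M_D$. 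For the determinant condition, one passes to the complex representation $\pi_0$: each $\gamma_a = \pi_0(l_a)$ has $\det\gamma_a = \pm 1$ (a direct $4\times 4$ computation: $\det\gamma_0 = 1$, $\det\gamma_i = 1$ using the block form and $\det\sigma_i = -1$), and more generally any unit vector $u = u^a l_a$ with $u^2 = \pm\mathbb{1}$ has $\det\pi_0(u) = \pm 1$; then $\det\pi_0(S) = \prod_i \det\pi_0(u_i) \in\{\pm 1\}$. To upgrade this to $\det = +1$ one uses the normalization $u_i^2 = \pm\mathbb{1}$ more carefully, or notes that the elements of $Pin_{1,3}$ with $\det = -1$ can be excluded by the standard parity argument (products of an even number of timelike-plus-an-appropriate-count of spacelike generators); alternatively one observes $\det\pi_0(u)^2 = \det\pi_0(u^2) = \det(\pm\mathbb{1}) = 1$ together with continuity/connectedness of the relevant components to fix the sign. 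I would phrase this so that the sign bookkeeping is absorbed into the definition of $Pin$ rather than belabored.

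\emph{Second, the reverse inclusion $G\subseteq Pin_{1,3}$.} This is the substantive direction. Given $S\in G$, conjugation $v\mapsto SvS^{-1}$ is a linear map $M_D\to M_D$; one checks it preserves the Minkowski form, because $(SvS^{-1})^2 = Sv^2S^{-1} = v^2$ inside $D$ (using $v^2 = \eta(v,v)\mathbb{1}$ central), so it is a Lorentz transformation $\Lambda(S)\in\mathcal{L}$. The classical Cartan--Dieudonné theorem says every element of $\mathcal{L}$ is a product of at most $4$ reflections, i.e. $\Lambda(S) = \Lambda(u_1)\cdots\Lambda(u_k)$ for suitable unit vectors $u_i$; hence $\Lambda(S(u_1\cdots u_k)^{-1}) = \mathrm{id}$, meaning $T := S(u_1\cdots u_k)^{-1}$ commutes with every $v\in M_D$ and therefore lies in the center of $D$. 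Since $D$ is simple (Fundamental Theorem) with center $\mathbb{C}\mathbb{1}$ under $\pi_0$ — more precisely the center of the real algebra is $\mathbb{R}\mathbb{1}\oplus\mathbb{R}l_5$, but $l_5$ anticommutes with each $l_a$, so a genuinely central element is a real multiple of $\mathbb{1}$ — we get $T = \lambda\mathbb{1}$. Then $\det S = \lambda^4 \det(u_1\cdots u_k) = \pm\lambda^4$, and imposing $\det S = 1$ forces $\lambda^4 = \pm 1$; rescaling the leading generator $u_1$ by the real scalar $\lambda$ (which preserves the condition $u_1^2 = \pm\mathbb{1}$ only after a further normalization) exhibits $S$ as a product of unit vectors, i.e. $S\in Pin_{1,3}$.

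\emph{Main obstacle.} The delicate point is the determinant/center bookkeeping in the second inclusion: the real Dirac algebra has a two-dimensional center $\mathbb{R}\mathbb{1}\oplus\mathbb{R}l_5$, so one must carefully argue that an element commuting with \emph{all} of $M_D$ (not just even monomials) is a scalar multiple of $\mathbb{1}$, and then reconcile the scalar $\lambda$ with the constraints $\det S = 1$ and $u_i^2 = \pm\mathbb{1}$ so that $S$ is \emph{exactly} a product of unit vectors rather than a scalar times such a product. I would handle this by working in $\pi_0$, using $\det\pi_0(S) = 1$ together with $\pi_0(T) = \lambda\mathbb{1}$ to pin down $\lambda^4 = 1$ and then $\lambda\in\{\pm 1,\pm i\}$; the real-algebra constraint $T\in D$ with $T$ central forces $\lambda$ real, so $\lambda = \pm 1$ and $S = \pm u_1\cdots u_k$, which is still in $Pin_{1,3}$ since $-\mathbb{1} = u\,u$ for any timelike unit $u$ with $u^2 = \mathbb{1}$ multiplied appropriately — or simply absorb the sign into one generator. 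Invoking Cartan--Dieudonné for $\mathcal{L}$ (including the non-proper, non-orthochronous components) is standard and I would cite it rather than reprove it.
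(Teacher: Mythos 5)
Your overall route coincides with the paper's: one inclusion via the reflection formula plus multiplicativity of the determinant, the other via showing that conjugation by $S$ preserves the Minkowski form, invoking Cartan--Dieudonn\'e to write the resulting Lorentz transformation as a product of reflections, building an element of $Pin_{1,3}$ inducing the same transformation, and then using simplicity of $D$ (Schur) together with the reality constraint $M(4,\mathbb{C})\simeq\mathbb{C}\otimes_{\mathbb{R}}D$ and $\det S=1$ to pin down the scalar to $\pm 1$. Your quadratic-form argument $(SvS^{-1})^2=Sv^2S^{-1}=v^2$ is a clean equivalent of the paper's polarized version, and your resolution of the ``center'' worry is correct: $l_5$ anticommutes with every $l_a$ in $Cl_{1,3}$ (even dimension), so the centralizer of $M_D$ really is $\mathbb{R}\mathbb{1}$, and your initial claim that the center is $\mathbb{R}\mathbb{1}\oplus\mathbb{R}l_5$ is false but harmless since you immediately discard $l_5$.

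There is, however, one genuine gap in the second inclusion: the sign coming from the fact that the reflection in the hyperplane orthogonal to $u$ is $v\mapsto -uvu^{-1}$, not $v\mapsto uvu^{-1}$ (you state this correctly in the first half and then drop it). If Cartan--Dieudonn\'e writes $\Lambda(S)$ as a product of $k$ reflections $R_{u_1}\cdots R_{u_k}$, then conjugation by $u_1\cdots u_k$ equals $(-1)^k\Lambda(S)$, so for odd $k$ the element $T:=S(u_1\cdots u_k)^{-1}$ \emph{anticommutes} with every $v\in M_D$ rather than being central, and your conclusion $T=\lambda\mathbb{1}$ fails. Odd $k$ genuinely occurs, since reflections have determinant $-1$ and $\mathcal{L}$ contains orientation-reversing elements. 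The paper handles this by defining $T:=u_k\cdots u_1$ for $k$ even but $T:=u_k\cdots u_1 l_5$ for $k$ odd, using $l_5 v l_5^{-1}=-v$ to absorb the sign while keeping $T\in Pin_{1,3}$ (as $l_5=l_0l_1l_2l_3$ is itself a product of unit vectors); with that insertion your argument goes through verbatim, since in the anticommuting case one equivalently gets $T=\lambda l_5$ and reaches the same conclusion.
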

In order to prove this proposition, we first need the following lemma:
\begin{lem}
We have $l_5^2 = -\mathbb{1}$ and:
\begin{equation} \label{eq: Cl1}
    l_5 v l_5^{-1} = -v l_5 l_5^{-1} = -v, \hspace{15pt} v \in M_D
\end{equation}
Moreover, if $u \in M_D$ has $u^2 = \| u \|^2 \mathbb{1} \neq 0$, with norm in $M_D$, then $u^{-1} = \frac{1}{\| u \|^2}u$ and $v \mapsto -uvu^{-1}$ defines a reflection of $M_D$ in the hyperplane perpendicular to $u$
\end{lem}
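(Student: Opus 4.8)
The plan is to prove the three assertions of the lemma in order, each by a direct computation in the Dirac algebra using only the Clifford relations $l_a l_b + l_b l_a = 2\eta_{ab}\mathbb{1}$ and the signature convention $\eta = \mathrm{diag}(1,-1,-1,-1)$.

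First I would compute $l_5^2$ where $l_5 = l_0 l_1 l_2 l_3$. Using anticommutativity of distinct basis vectors (which follows from the Clifford relations since $\eta_{ab}=0$ for $a\neq b$), I would reorder $l_0 l_1 l_2 l_3 \, l_0 l_1 l_2 l_3$ by dragging each factor of the second block past the first block, picking up a sign $(-1)$ for each transposition of distinct generators, and then collapsing the resulting squares $l_0^2 = \mathbb{1}$, $l_i^2 = -\mathbb{1}$ for $i=1,2,3$. Counting: moving the second $l_0$ to meet the first costs three transpositions, etc.; the net sign together with $l_0^2 (l_1^2)(l_2^2)(l_3^2) = (1)(-1)(-1)(-1) = -1$ yields $l_5^2 = -\mathbb{1}$. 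Next, for Eq.~\eqref{eq: Cl1}, it suffices by linearity to check $l_5 l_a l_5^{-1} = -l_a$ for each basis vector $l_a$; since $l_a$ anticommutes with the three generators in $l_5$ different from $l_a$ and commutes (trivially, it equals) the one equal to $l_a$, we get $l_a l_5 = (-1)^3 l_5 l_a = -l_5 l_a$, hence $l_5 l_a l_5^{-1} = -l_a$. The identity $l_5 v l_5^{-1} = -v l_5 l_5^{-1} = -v$ is then just the rewriting used in the statement, valid because $l_5$ anticommutes with every $v\in M_D$.

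For the last assertion, given $u\in M_D$ with $u^2 = \|u\|^2\mathbb{1} \neq 0$, the formula $u^{-1} = \frac{1}{\|u\|^2}u$ is immediate from $u \cdot \frac{1}{\|u\|^2}u = \frac{1}{\|u\|^2}u^2 = \mathbb{1}$. To identify $R_u: v\mapsto -uvu^{-1}$ as the reflection in the hyperplane $u^\perp$, I would split an arbitrary $v\in M_D$ as $v = v_\parallel + v_\perp$ with $v_\parallel$ proportional to $u$ and $\eta(u,v_\perp)=0$. The Clifford relation gives $uv_\perp + v_\perp u = 2\eta(u,v_\perp)\mathbb{1} = 0$, so $u v_\perp u^{-1} = -v_\perp$, whence $R_u(v_\perp) = v_\perp$; and $u v_\parallel u^{-1} = v_\parallel$ since $v_\parallel$ is a scalar multiple of $u$, whence $R_u(v_\parallel) = -v_\parallel$. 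Thus $R_u$ fixes $u^\perp$ pointwise and negates the $u$-direction, i.e.\ it is the reflection across the hyperplane orthogonal to $u$; in particular $R_u(v)\in M_D$, so conjugation by $u$ preserves $M_D$.

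I do not expect a serious obstacle here; the only point requiring care is the sign bookkeeping in the computation of $l_5^2$ and in the parity count showing $l_a$ anticommutes with $l_5$ — one must be consistent about how many distinct generators get transposed. Everything else is a one-line consequence of the defining relations of the Dirac algebra. This lemma is then exactly the input needed for the subsequent proposition characterising $Pin_{1,3}$, since it exhibits the generating reflections and the element $l_5$ with $\det\pi_0(l_5)=1$ (computable from the explicit gamma matrices if needed).
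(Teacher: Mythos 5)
Your proof is correct and follows essentially the same route as the paper: anticommutation of $l_5$ with each $l_a$ by counting the three sign-flipping transpositions, $l_5^2=-\mathbb{1}$ by direct sign bookkeeping, and the reflection property from the Clifford relation $uv+vu=2(u,v)\mathbb{1}$. The only cosmetic difference is that you verify the reflection by splitting $v=v_\parallel+v_\perp$, whereas the paper writes the single identity $-uvu^{-1}=v-\tfrac{2(u,v)}{\|u\|^2}u$; both are immediate from the same relation.
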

\begin{proof}[Proof of lemma]
From the Clifford relations, we have $l_5 l_a = - l_a l_5$ for each $a$, as one always needs to perform $3$ switches. This implies that Eq. \eqref{eq: Cl1} holds, and by direct computation one sees $l_5^2 = - \mathbb{1}$.\\
Finally, compute:
\begin{equation*}
    -uvu^{-1} = v - (uv + vu)u^{-1} =v - \frac{2(u,v)}{\| u\|^2}u 
\end{equation*}
Where we have used the Clifford relations and denoted with $(\cdot,\cdot)$ the inner product on $M_D$.
\end{proof}
\begin{proof}[Proof of proposition]
Whenever we take an $S \in Pin_{1,3}$ the map $v \mapsto SvS^{-1}$ on $M_D$ is just a product of reflections from the above lemma, so $SvS^{-1} \in M_D$ for all $M_D$. Now, from the identity $u^2 = \| u \|^2 $ for all $u \in M_D$:
\begin{align*}
    \det u^2 &= \det \big( \| u \|^2 \mathbb{1} \big)\\
    (\det u)^2 &= \| u \|^8 \det \big( \mathbb{1} \big)\\
    \det u &= \| u \|^4 
\end{align*}
Then as $\| u_i \| =1$ for the $u_i$ in the definition of the $Pin$ group, we have: $\det S = 1$.\\
For the converse suppose that $S \in D$ is such that $\det S = 1$ and $SvS^{-1} \in M_0$ for all $v \in M_0$. If we look at:
\begin{equation*}
    v u + u v = 2 v^a u_a \mathbb{1}
\end{equation*}
we notice that the right hand side of this remains invariant under the adjoint action of $S$ on the vectors. As a consequence $S$ preserves the Minkowski inner product, hence the adjoint action of $S$ determines a Lorentz transformation $\Lambda$. In general, a Lorentz transformation can always be written as a finite product of reflections in non-null hyperplanes (see Theorem $3.20$ of \cite{Artin}), so we can take them be $v \mapsto -u_ivu_i$ as this is a reflection from previous lemma. Define then:
\begin{equation*}
    T := \left\{ \begin{aligned}
    u_k \cdots u_1 \hspace{15pt} \mathrm{if} \,\, k \,\, \mathrm{is \,\, even}\\
    u_k \cdots u_1 l_5 \hspace{15pt} \mathrm{if} \,\, k \,\, \mathrm{is \,\, odd}\\
    \end{aligned}\right.
\end{equation*}
By definition we have $T \in Pin_{1,3}$ and from the above lemma:
\begin{equation*}
    T^{-1} := \left\{ \begin{aligned}
    u_1^{-1} \cdots u_k^{-1} \hspace{15pt} \mathrm{if} \,\, k \,\, \mathrm{is \,\, even}\\
    -l_5 u_1^{-1} \cdots u_k^{-1} \hspace{15pt} \mathrm{if} \,\, k \,\, \mathrm{is \,\, odd}\\
    \end{aligned}\right.
\end{equation*}
So in both cases, still from the above lemma:
\begin{align*}
    TvT^{-1} &= \left\{ \begin{aligned}
    &u_k \cdots u_1 v u_1^{-1} \cdots u_k\\
    &-u_k \cdots u_1 l_5 v l_5 u_1^{-1} \cdots u_k^{-1} = u_k \cdots u_1 v u_1^{-1} \cdots u_k^{-1}
    \end{aligned}\right.\\
    &= \Lambda(v) = S v S^{-1}
\end{align*}
Where we used the fact that $v \mapsto -u_i v u_i$ is a reflection and $u_i^{-1} = \pm u_i$. In particular, it follows that $T^{-1} S l_a S^{-1} T = l_a$, so, by Theorem \ref{thm: fund} we must have $T^{-1} S = c \mathbb{1}$ and $S = c T$ for some nonzero $c \in \mathbb{C}$. Furthermore, one can prove that $M(4,\mathbb{C}) \simeq \mathbb{C} \otimes_{\mathbb{R}} D$ (see \cite{Lawson} Section $4$ together with Theorem $3.7$). Therefore, as $S,T \in D$ we must have that they differ just by a real constant, i.e. $c \in \mathbb{R}$. Moreover, as $\det S = 1$ and one can compute from its definition that $\det T = 1$, it follows $T = \pm S$. Finally, as also $-T = (l_5)^2 T \in Pin_{1,3}$, we must have  also $S \in Pin_{1,3}$
\end{proof}
From this proposition, that characterizes the Pin group, one can prove that $Pin_{1,3}$ and $Spin_{1,3}$ are indeed Lie groups. This can be seen from the identification we mentioned in the above proof:
\begin{equation} \label{eq: decdir}
    M(4,\mathbb{C}) \simeq \mathbb{C} \otimes_{\mathbb{R}} D
\end{equation}
That implies that $D$ is the subset of square matrices with unit determinant. As a Lie group, let us denote by $Spin^0_{1,3}$ the connected component of $Spin_{1,3}$ containing the identity.\\
We are now ready to relate the Lie group $Pin_{1,3}$ with the Lorentz group, by defining a map:
\begin{align*}
    \Lambda : Pin_{1,3} &\to \mathcal{L}\\
    S &\mapsto \Lambda^a_{\,\,b}(S)
\end{align*}
such that: $Sl_bS^{-1} = l_a \Lambda^a_{\,\,b}(S)$. The matrix $\Lambda^a_{\,\,b}(S)$ exists and is a Lorentz transformation by the previous proposition. At this point, one can show (see Theorem $I.2.10$ in \cite{Lawson}):
\begin{prop}\label{prop: lieal}
The map $\Lambda$, is a surjective double covering homomorphism of Lie groups, which restricts to a double covering homomorphism $Spin_{1,3}^0 \to \mathcal{L}^{\uparrow}_+$. 
We have:
\begin{align*}
    \Lambda^a_{\,\, b}(S) &= \frac{1}{4}\eta^{ac}\Tr(l_c S l_b S^{-1})\\
    \Lambda^a_{\,\, b}(S^{-1}) &= \eta^{ac}\eta_{bd} \Lambda^d_{\,\,c}(S)\\
    (d\Lambda)^{-1}(\lambda^b_{\,\,a}) &= \frac{1}{4} \lambda^b_{\,\, a} \eta^{ac}l_b l_c
\end{align*}
Where $d\Lambda: \mathfrak{lie}(Spin_{1,3}^0) \to \mathfrak{lie}(\mathcal{L}^{\uparrow}_+)$.
\end{prop}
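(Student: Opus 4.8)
The plan is to establish first the purely algebraic facts — that $\Lambda$ is a Lie group homomorphism, the trace formula, the behaviour under inversion, and that the kernel is $\{\pm\mathbb{1}\}$ — and then bootstrap the covering–map statement and the Lie–algebra formula from these. First I would verify multiplicativity: for $S,T\in Pin_{1,3}$ one has $(ST)l_b(ST)^{-1}=S(l_a\Lambda^a_{\,\,b}(T))S^{-1}=l_c\Lambda^c_{\,\,a}(S)\Lambda^a_{\,\,b}(T)$, so $\Lambda^c_{\,\,b}(ST)=\Lambda^c_{\,\,a}(S)\Lambda^a_{\,\,b}(T)$, i.e. $\Lambda$ is a homomorphism into $\mathcal{L}$ (the target being $\mathcal{L}$ by the preceding proposition). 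To extract the trace formula I would note that the Clifford relations together with cyclicity of the trace give $\Tr(l_cl_d)=4\eta_{cd}$ in the four–dimensional irreducible representation; multiplying $Sl_bS^{-1}=l_d\Lambda^d_{\,\,b}(S)$ on the left by $l_c$ and taking traces yields $\Tr(l_cSl_bS^{-1})=4\eta_{cd}\Lambda^d_{\,\,b}(S)$, which is the first displayed identity, and contracting with $\tfrac14\eta^{ac}$ recovers $\Lambda^a_{\,\,b}(S)$. Via the identification $M(4,\mathbb{C})\simeq\mathbb{C}\otimes_{\mathbb{R}}D$ this formula also shows $\Lambda$ is smooth (polynomial in the matrix entries of $S$ and $S^{-1}$), hence a Lie group homomorphism; the inversion formula then follows by combining $\Lambda(S^{-1})=\Lambda(S)^{-1}$ with the defining relation $\Lambda^{T}\eta\,\Lambda=\eta$ of a Lorentz matrix, giving $(\Lambda^{-1})^a_{\,\,b}=\eta^{ac}\eta_{bd}\Lambda^d_{\,\,c}$.

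Next I would identify the kernel. If $\Lambda(S)=\mathbb{1}$ then $S$ commutes with every $l_a$, so by Theorem \ref{thm: fund} (Schur's lemma for the irreducible complex representation) $S=c\mathbb{1}$; using $M(4,\mathbb{C})\simeq\mathbb{C}\otimes_{\mathbb{R}}D$ the constant $c$ is real, and since every element of $Pin_{1,3}$ is a product of unit vectors it has determinant $+1$ (recall $\det u=\|u\|^4=1$ for $u^2=\pm\mathbb{1}$), forcing $c^4=1$, hence $c=\pm\mathbb{1}$. Thus $\ker\Lambda=\{\pm\mathbb{1}\}$. Surjectivity onto $\mathcal{L}$ I would obtain from the reflection description used in the proof of the previous proposition: every Lorentz transformation is a finite product of reflections $v\mapsto -u_ivu_i/\|u_i\|^2$ in non–null hyperplanes, and after normalising the $u_i$ the product $u_k\cdots u_1$ (composed, if the number of factors is odd, with the volume element $l_5$, which acts trivially on $M_D$) lies in $Pin_{1,3}$ and maps to the given transformation. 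Together with $|\ker\Lambda|=2$ this makes $\Lambda$ a two–fold covering homomorphism.

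For the restriction statement I would argue that $\Lambda$ carries the connected set $Spin^0_{1,3}$ continuously into the identity component $\mathcal{L}^{\uparrow}_+$ of $\mathcal{L}$; conversely, an element of $\mathcal{L}^{\uparrow}_+$ is a product of an \emph{even} number of reflections, whose Clifford lift lies in $Spin_{1,3}=Pin_{1,3}\cap Cl^0_{1,3}$, and exactly one of its two lifts lies in the identity component. To promote this to an honest covering I would check that $d\Lambda\colon\mathfrak{lie}(Spin^0_{1,3})\to\mathfrak{lie}(\mathcal{L}^{\uparrow}_+)$ is an isomorphism: both are six–dimensional, and differentiating $S(t)l_bS(t)^{-1}$ at $t=0$ for $S(t)=\exp(tX)$ gives $[X,l_b]=l_a\,(d\Lambda(X))^a_{\,\,b}$, so $d\Lambda(X)=0$ forces $X$ central, hence $X=0$; injectivity plus equal dimension gives the isomorphism, so $\Lambda$ is a local diffeomorphism and its image in the connected group $\mathcal{L}^{\uparrow}_+$ is open and closed, hence everything. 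The explicit inverse $(d\Lambda)^{-1}(\lambda^b_{\,\,a})=\tfrac14\lambda^b_{\,\,a}\eta^{ac}l_bl_c$ then follows from the Clifford computation $[\tfrac14\lambda_{de}l^dl^e,l_b]=\lambda_{db}l^d$, using $[l^dl^e,l_b]=2\delta^e_b l^d-2\delta^d_b l^e$ and the antisymmetry $\lambda_{de}=-\lambda_{ed}$ valid on $\mathfrak{lie}(\mathcal{L}^{\uparrow}_+)$.

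The main obstacle I expect is precisely this last, global part: one must relate the parity of the number of reflections to $\det$ and to time–orientation, and then use the local–diffeomorphism property together with connectedness of $\mathcal{L}^{\uparrow}_+$ to conclude that $Spin^0_{1,3}$ covers the whole group rather than a proper subgroup; the algebraic identities and the Lie–algebra formula are, by comparison, direct manipulations of the Clifford relations. For the remaining details I would follow Theorem I.2.10 of \cite{Lawson}.
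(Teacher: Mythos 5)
The paper offers no proof of this proposition: it is stated with a pointer to Theorem I.2.10 of \cite{Lawson}, which is exactly the source you defer to for the remaining details. Your sketch is a correct reconstruction of that standard argument, and it reuses precisely the ingredients the paper has already set up in the preceding proposition (the reflection decomposition of Lorentz transformations, $\det S=1$ on $Pin_{1,3}$, Schur's lemma via the fundamental theorem, and $M(4,\mathbb{C})\simeq\mathbb{C}\otimes_{\mathbb{R}}D$ to force the kernel scalar to be real), so there is nothing to compare it against and nothing to object to.
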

Respectively, one can see that $Pin_{1,3}$, $Spin_{1,3}$ and $Spin_{1,3}^0$ are the universal coverings of $\mathcal{L}$, $\mathcal{L}_+$ and $\mathcal{L}_+^{\uparrow}$.\\
Another result, that was needed in introducing the Dirac bundles is the following:
\begin{lem}\label{lem: tec1}
Let $\pi$ be a complex irreducible representation of $D$ and let $A,C \in GL(4,\mathbb{C})$ be as in Definiiton \ref{def: hacc}. Then for all $S \in Spin_{1,3}^0$ we have:
\begin{equation*}
    \pi(S)^*A\pi(S) = A, \hspace{15pt} \pi(S^{-1})C^{-1}\overline{\pi(S)} = C^{-1}
\end{equation*}
\end{lem}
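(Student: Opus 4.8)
The plan is to exploit that the asserted identities are multiplicative in $S$ and to reduce them to infinitesimal statements at the level of the Lie algebra of $Spin_{1,3}^0$, which can then be checked directly from Definition~\ref{def: hacc}.

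First I would record that the two subsets
\[
 G_A := \{ S \in Spin_{1,3}^0 : \pi(S)^* A \pi(S) = A \}, \qquad
 G_C := \{ S \in Spin_{1,3}^0 : \overline{\pi(S)} = C\pi(S)C^{-1} \}
\]
are subgroups: both contain $\mathbb{1}$; if $S_1,S_2\in G_A$ then $\pi(S_1S_2)^*A\pi(S_1S_2)=\pi(S_2)^*\big(\pi(S_1)^*A\pi(S_1)\big)\pi(S_2)=\pi(S_2)^*A\pi(S_2)=A$, and similarly $\overline{\pi(S_1S_2)}=\overline{\pi(S_1)}\,\overline{\pi(S_2)}=C\pi(S_1S_2)C^{-1}$ for $G_C$; closure under inverses follows by conjugating the defining relations on the appropriate sides. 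I would also observe that the \emph{second} identity of the Lemma is precisely the statement $S\in G_C$, since $\pi(S^{-1})C^{-1}\overline{\pi(S)}=C^{-1}$ is equivalent to $C^{-1}\overline{\pi(S)}=\pi(S)C^{-1}$, i.e. to $\overline{\pi(S)}=C\pi(S)C^{-1}$. Since $Spin_{1,3}^0$ is connected it is generated by $\exp(\mathfrak{lie}(Spin_{1,3}^0))$, so it suffices to show $e^{X}\in G_A\cap G_C$ for every $X$ in the Lie algebra.

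Next, by Proposition~\ref{prop: lieal} the Lie algebra $\mathfrak{lie}(Spin_{1,3}^0)$ is the real span of $\{ l_b l_c : 0\le b<c\le 3\}$ (the image of the antisymmetric Lorentz generators under $(d\Lambda)^{-1}$, together with $l_bl_c=-l_cl_b$ for $b\neq c$). Because $M\mapsto M^*$ and $M\mapsto\overline M$ are conjugate-linear while $M\mapsto A\,(\cdot)\,A^{-1}$ and $M\mapsto C\,(\cdot)\,C^{-1}$ are linear, it is enough to verify, for $X=l_bl_c$ with $b\neq c$, the pair of infinitesimal relations $\pi(X)^*=-A\pi(X)A^{-1}$ and $\overline{\pi(X)}=C\pi(X)C^{-1}$. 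Both are short computations from Definition~\ref{def: hacc}: from $\pi(l_a)^*=A\pi(l_a)A^{-1}$ and multiplicativity of $\pi$ one gets $\pi(l_bl_c)^*=\pi(l_c)^*\pi(l_b)^*=A\pi(l_cl_b)A^{-1}=-A\pi(l_bl_c)A^{-1}$, the last step using the Clifford relation $l_bl_c+l_cl_b=2\eta_{bc}\mathbb 1=0$; and from $\overline{\pi(l_a)}=-C\pi(l_a)C^{-1}$ the two minus signs cancel to give $\overline{\pi(l_bl_c)}=C\pi(l_bl_c)C^{-1}$.

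Finally I would exponentiate: since $\pi$ is an algebra homomorphism, $\pi(e^{X})=e^{\pi(X)}$, so $\pi(e^{X})^*=e^{\pi(X)^*}=e^{-A\pi(X)A^{-1}}=A\,e^{-\pi(X)}A^{-1}$, whence $\pi(e^{X})^*A\pi(e^{X})=A\,e^{-\pi(X)}e^{\pi(X)}=A$; likewise $\overline{\pi(e^{X})}=e^{\overline{\pi(X)}}=e^{C\pi(X)C^{-1}}=C\,e^{\pi(X)}C^{-1}=C\pi(e^{X})C^{-1}$. Writing a general $S\in Spin_{1,3}^0$ as a finite product $e^{X_1}\cdots e^{X_n}$ and using that $G_A$ and $G_C$ are groups then yields both identities. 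I do not anticipate a genuine obstacle here: the only points requiring care are the standard Lie-theoretic fact that a connected Lie group is generated by its one-parameter subgroups, the bookkeeping of antilinearity when passing to a real spanning set of the Lie algebra, and reading off $\mathfrak{lie}(Spin_{1,3}^0)$ correctly from Proposition~\ref{prop: lieal}.
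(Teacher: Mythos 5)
Your proof is correct, but it takes a genuinely different route from the paper's. The paper works directly with the presentation of $Pin_{1,3}$ as finite products of unit vectors $u_i$ with $u_i^2=\pm\mathbb{1}$: for a single unit vector it computes $\pi(u)^*A\pi(u)=A\pi(u^2)=\pm A$, concludes $\pi(S)^*A\pi(S)=\pm A$ for every $S\in Pin_{1,3}$, and then fixes the sign to $+$ on the identity component by continuity and connectedness; for the charge conjugation it computes $\pi(u^{-1})C^{-1}\overline{\pi(u)}=-C^{-1}$ and observes that the minus signs cancel pairwise because elements of $Spin_{1,3}$ are products of an \emph{even} number of such vectors. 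You instead pass to the Lie algebra, verify the sign-definite infinitesimal relations $\pi(l_bl_c)^*=-A\pi(l_bl_c)A^{-1}$ and $\overline{\pi(l_bl_c)}=C\pi(l_bl_c)C^{-1}$, and exponentiate. Each approach has its merits: yours avoids the sign ambiguity altogether (the infinitesimal relation is unambiguous, so no continuity argument is needed to discard the $-A$ branch), at the cost of invoking the standard fact that a connected Lie group is generated by the image of its exponential map and of correctly reading $\mathfrak{lie}(Spin^0_{1,3})=\mathrm{span}_{\mathbb{R}}\{l_bl_c: b<c\}$ off Proposition~\ref{prop: lieal}; the paper's argument is more elementary (it uses only the algebraic definition of the Pin group), and its treatment of the $C$-identity actually establishes the statement on all of $Spin_{1,3}$, not merely the identity component, though only the latter is claimed. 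Your bookkeeping of the antilinearity of $M\mapsto M^*$ and $M\mapsto\overline{M}$ against the \emph{real} spanning set of the Lie algebra, and your reduction of the second identity to $\overline{\pi(S)}=C\pi(S)C^{-1}$, are both handled correctly.
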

\begin{proof}
Consider a unit vector $u = u^al_a$, then we have $u^2 = \pm \mathbb{1}$. Hence:
\begin{align*}
    \pi(u)^*A \pi(u) &= u^a u^b \pi(l_a)^* A \pi(l_b) \\
    &= u^a u^b A \pi(l_a l_b)\\
    &= A \pi(u^2) = \pm A
\end{align*}
But then, from the definition of $S \in Pin_{1,3}$ in terms of vectors $u_i$ with these same properties we must have: $\pi(S)^* A \pi(S) = \pm A$. Of course, if we take $S=\mathbb{1}$ we will get a plus sign. But, as $Spin^0_{1,3}$ is the connected component of the Lie group containing the identity, by continuity we must have that also for all $S \in Spin^0_{1,3}$ we must keep having the plus sign.\\
For what concerns the charge conjugation, notice that for $u \in M_D$:
\begin{equation*}
    \pi(u^{-1}) C^{-1} \overline{\pi(u)} = - \pi(u)^{-1} \pi(u) C^{-1} = - C^{-1}
\end{equation*}
Now, as $S \in Spin_{1,3}$ is a product of an even number of such $u$'s we must have: $\pi(S^{-1}) C^{-1} \overline{\pi(S)} = C^{-1}$.
\end{proof}
\end{appendices}

\nocite{*}
\bibliographystyle{phd}
\bibliography{bibliografia}

\end{document}